\newtheorem{mytheorem}{Theorem}
\newtheorem{mylemma}{Lemma}[subsection]
\newtheorem{mydef}{Definition}
\begin{document}


\title[Article Title]{Profit Maximization for Viral Marketing in Online Social Networks using Two Phase Diffusion Approach}


\author{\fnm{Poonam} \sur{Sharma}}\email{poonam.sharma@iitjammu.ac.in}

\author*{\fnm{Suman} \sur{Banerjee}}\email{suman.banerjee@iitjammu.ac.in}

\equalcont{These authors contributed equally to this work.}





\affil{\orgdiv{Department of Computer Science and Engineering}, \orgname{Indian Institute of Technology Jammu}, \orgaddress{\postcode{181221}, \state{Jammu and Kashmir}, \country{India}}}


\abstract{Now-a-days, \emph{Online Social Networks} (OSNs) are extensively used by different commercial houses for viral marketing. The key problem that arises in this context is to choose a limited number of highly influential users as the initial adopters of a brand such that the influence regarding the brand in the network gets maximized. Deviating from this standard setting, in this paper, we study the problem where every user of the network is associated with a selection cost and a benefit value. This benefit value can be earned from the user if (s)he is influenced by the brand. A fixed amount of budget is allocated for selecting the seed users. The goal of initial adopters is to choose a set of seed users within the budget such that the profit is maximized. We propose a two phase diffusion model for this problem where the goal is to split the diffusion process into two phases, and hence, split the budget into two halves. First, we spend the first half budget to select seed users for the first phase and observe the diffusion for a few rounds and then deploy the seed users for the second phase and successively complete the diffusion process. We prove several properties of the two phase influence function. Three solution approaches have been proposed for our problem with detailed analysis and illustrative examples. We conduct a number of experiments with three real\mbox{-}world social network datasets. From the experiments, we observe that the two phase diffusion approach leads to more amount of profit compared to the single-phase diffusion. In particular, for most instances, this improvement is greater than $18 \%$ and reaching as high as $40\%$ by the proposed methodologies.}

\keywords{Social Network, Profit Maximization, Seed Set, Independent Cascade Model}



\maketitle

	\section{Introduction} \label{sec:IntroSec}
	Online Social Networks are interconnected structures among a group of users, where two users are directly connected if they are friends or follower-followee to each other \cite{carrington2005models}. Now-a-days, due to the advancement of wireless internet, the use of online social networks by people has increased significantly. As per recent statistics, the active social media users were $970$ million in $2010$ and has increased to $ 4.48$ billion in July $2021$.
    E-Commerce houses started using online social media platforms for viral marketing and promotional advertisement for their brands among the people, and this process goes as follows \cite{domingos2005mining}. Initially, E-Commerce houses choose a limited number of users from the social networks of their customers by providing some amount of incentive (e.g., a certain percent discount, free home delivery etc.) with the hope that they will share positive feedback about the brand among the neighbors. Some of the neighbors will adopt the brand and share the feedback further. This way, the diffusion process will continue, and the brand will go viral among customers. Thus, a brand can be made viral using online social media platforms.  
	\paragraph{\textbf{Background}} Our problem relies on a natural phenomenon of social networks called \emph{diffusion of information} \cite{guille2013information}, which means that users of a social network tend to share their information with their neighbors. By this mechanism, information propagates from one node of the network to the other. Significant efforts have been made to understand the diffusion phenomenon in a social network. Several models have been proposed to study the diffusion process in a social network, and among them, the \emph{Independent Cascade Model} (henceforth mentioned as \emph{IC Model}) has been predominantly used in the literature \cite{wang2012scalable}.
	\par One important problem studied in the domain of social network analysis is the \emph{Problem of Social Influence Maximization}. Given a social network and a positive integer $k$ this problem asks to choose a subset of $k$ nodes as initial adopters such that the influence in the network gets maximized. Domingos and Richardson \cite{domingos2001mining,richardson2002mining} were the first to propose the problem in the context of viral marketing. However, Kempe et al. \cite{kempe2003maximizing} were the first to study this problem computationally and showed that this problem is \textsf{NP-hard} under the IC Model of diffusion. They also showed that the influence function under the diffusion of IC Model is non-negative, monotone, and submodular in nature. Hence, the incremental greedy approach based on marginal influence gain leads to $(1-\frac{1}{e})$-factor approximation algorithm \cite{nemhauser1978analysis}. Later, there are several studies on this problem in different settings, considering the presence of negative influence \cite{chen2011influence}, in competitive situations \cite{hong2020efficient}, in unknown social network \cite{kamarthi2019influence} and many more. Also, this problem has been studied in different variants such as Target Set Selection Problem, Budgeted Influence Maximization Problem (BIM), and many more. Look at \cite{banerjee2020survey,li2018influence} for a recent survey on this topic.
	
	\par One of the variants of the influence maximization problem relevant to our problem is the Budgeted Influence Maximization Problem. Nguyen et al. \cite{nguyen2013budgeted} were the first to introduce this variant, though later, there were several studies on this problem \cite{bian2020efficient,banerjee2019combim}. Along with the social network, this problem considers that every user of the network is associated with a selection cost. A fixed amount of budget has been allocated. The goal here is to choose a subset of the users as seed nodes within the budget to maximize the influence in the network. However, it is important to observe that in commercial campaigns, the goal is to maximize the profit, not just the influence. There are a few studies in the literature that considers maximization of profit instead of influence \cite{tang2017profit,zhang2016profit}. However, compared to the influence maximization problem, available literature for the profit maximization problem is much less. Next, we describe the motivation behind our study.
	
	\paragraph{\textbf{Motivation}} As mentioned previously, now-a-days, online social networks have been heavily used for viral marketing purposes. It has been mentioned in a study that $55 \%$ buyers do research via social media before buying a product.
    Also, even a small business house spends $7-8 \%$ of their gross revenue for marketing and advertising.
    Hence, selecting the high-quality seed nodes is the root in the context of viral marketing. Solving this problem (or at least making progress in this direction) brings benefits for the E-Commerce houses, and using the proposed solution approaches may be adopted by them for effective advertisement policies.
	\par One important point to note here is that the existing studies on the profit maximization problem consider that the selection of seed nodes has to be done in one go. That means, based on the allocated budget, all the seed nodes are selected initially, and the diffusion process starts subsequently. However, there are a few recent studies in the literature that shows instead of deploying all the seed nodes in one go, the budget is divided into more than one part, and accordingly, the diffusion phases are also divided into that many rounds \cite{dhamal2016information,sun2018multi,cautis2019adaptive}. This setting is called \emph{Multi-Phase Diffusion Approach}. Experimental results in these studies show that the multi-phase diffusion approach leads to more influenced nodes compared to the traditional single-phase diffusion. This triggers a natural question of whether a similar kind of seed selection strategy increases profit for viral marketing. In this paper, we do a detailed study to investigate this question. Next, we describe  our contributions.
	
	\paragraph{\textbf{Our Contributions}}
	
	In this paper, we study the \emph{Profit Maximization Problem} under the two phase diffusion approach, and to the best of our knowledge, this is the first study in this direction. The key contributions of this paper are as follows:
	\begin{itemize}
		\item  We formulate a mathematical model for the profit maximization problem under the two phase diffusion approach, analyze the proposed model, and describe it with an illustrated example.
		\item We propose three solution approaches, namely the simple greedy, the double greedy, and the stochastic greedy approach with detailed analysis.
		\item We implement the proposed solution methodologies with three publicly available social network datasets, and the results are reported.
		\item From the performed experiments, we resolve a number of research questions with detailed analysis and explanation.
	\end{itemize}
	
	\paragraph{\textbf{Organization of the Paper}} The rest of the paper is organized as follows. Section \ref{Sec:RW} describes some relevant studies from the literature. In Section \ref{Sec:PPD}, we describe some preliminary concepts and define our problem formally. Section \ref{Sec:PM} contains the formulated mathematical model of our problem along with its analysis and illustration with an example. Section \ref{Sec:Solution} contains the proposed solution methodologies for our problem along with their analysis, theoretical results, and examples. Section \ref{Sec:Experiments} contains the experimental evaluation of the proposed methodologies. Finally, Section \ref{Sec:Conclusions} concludes our study and gives future research directions.

	\section{Related Work} \label{Sec:RW}
	In this section, we present some relevant studies from the literature. Our study comes under the broad theme of Influence Maximization in Social Networks, and in particular, profit or benefit or revenue maximization using Online Social Networks. In the following two subsections, we describe them.
	\subsection{Influence Maximization in Social Network}
	The problem of influence maximization aims to identifying a limited number of highly influential users in a given social network such that the number of influenced nodes is maximized. Damingoes and Rechardson \cite{domingos2001mining,richardson2002mining} were the first to introduce the problem in the context of viral marketing. However, Kempe et al. studied the problem in a computational setting and showed the problem is \textsf{NP-hard} under the IC Model of diffusion. They also showed that the influence function arises under the IC Model of diffusion, and hence the incremental greedy approach based on the marginal influence gain leads to $(1-\frac{1}{e})$-factor approximation algorithm. This problem arises in many real-life situations, such as viral marketing, computational advertisement, feed ranking, influential tweet selection, and many more. Hence, the study done by Kempe et al. triggers a significant amount of research, and a vast number of solution methodologies are available. We can classify the solution methodologies into different groups such as Approximation Algorithms (e.g., CELF, CELF++ \cite{goyal2011celf++} and many more), Sketch-based Approaches (e.g., IMM, TIM \cite{tang2014influence} and many more), heuristic solutions (e.g., IRIE \cite{jung2012irie}, SIMPATH \cite{goyal2011simpath} and many more), different metaheuristic algorithms (e.g., Genetic Algorithm, Particle Swarm Optimization Algorithm and many more). Readers are requested to look into the surveys \cite{banerjee2020survey} on this problem.
	\par The problem of influence maximization has also been studied in different variants and moved into different directions. The first one is the \emph{Budgeted Influence Maximization Problem}. In this problem, the users of the network are associated with a selection cost, and a fixed amount of budget is allocated. The goal is to choose a seed set within the allocated budget to maximize the influence. Another one is the Target Set Selection Problem, which is a deterministic variant of the influence maximization problem. In this problem, every node is mapped with a positive integer, including zero, and a node will change its state from `inactive' to `active' if the threshold many neighboring nodes are already in active state. The goal of this problem is to choose a minimum cardinality seed set such that at the end of the diffusion process, all the nodes will be in the active state. There are several studies on this problem \cite{chiang2013target,chiang2013some,chopin2014constant}. Also, there are several other variants studied in the recent past, such as influence maximization with feedback \cite{tang2020influence}, influence maximization in unknown social network \cite{kamarthi2019influence}, influence maximization under competitive situation \cite{xie2021competitive}, influence maximization with multi-boosting stages \cite{9804846}, comparative influence maximization with adaptive policy \cite{11214355}, and many more.
	
	\subsection{Profit/Benefit/Revenue Maximization using Online Social Networks}
	As mentioned previously, in the context of viral marketing, profit is the more sought parameter rather than just influence. Hence, in recent past decade, there have been several studies on profit maximization and related problems. To the best of our knowledge, Lu and Lakshmanan \cite{lu2012profit} were the first to study and introduce the profit maximization problem. They extended the classical linear threshold model, considering the price and valuation of the product for decision-making. They proposed three different algorithms for solving this problem. Tang et al. \cite{tang2018towards} proposed a two phase framework for this problem and showed that it is a challenging problem as the function is neither monotone nor submodular. Gao et al. \cite{gao2021adaptive} studied the profit maximization problem in an adaptive setting where the seed nodes are chosen one by one. This problem leads to the profit function, which is non-adaptive submodular. Subsequently, they developed a solution methodology that provides a data-dependent approximation guarantee. Shi et al. \cite{shi2021profit} studied the profit maximization problem under competitive social advertising. In their study, they consider that a social media host runs commercial campaigns for different competitive clients, and each client comes up with different budgets and different spread requirements. Now, the task of the social media host is to allocate the seed nodes such that the respective spread requirements can be met, and if so, then the whole budget of the corresponding client can be earned. They showed that the problem is \textsf{NP-hard}, even hard to approximate within any constant factor. They proposed an iterative approach that selects one seed node at a time and obtains a partial allocation. Du et al. \cite{du2021nonsubmodular}  studied the same problem and showed that the function is non-submodular, and it is a difference between two submodular functions. They proposed a marginal increment-based prune and search approach for solving this problem. Chen et al. \cite{chen2020random} studied this problem in multiple product setting and called it as Profit Maximization with Multiple Adoptions Problem. They proposed an approach called \emph{reverse influence sampling} which achieves $(1-\frac{1}{e}-\epsilon)$\mbox{-}factor approximation guarantee. Salavati et al. \cite{salavati2019identifying} studied the profit maximization problem and adopted the \emph{ant colony optimization technique} for solving this problem. Gao et al. \cite{gao2019robust} studied the profit maximization problem and proposed double sandwich algorithm, and also they proposed a sampling strategy to improve the proposed approach. Fang et al. \cite{9861104} studied \textit{budgeted multiple-product profit maximization} in social networks by explicitly modeling users’ limited purchasing ability across products with different prices. Li et al. \cite{9772716} developed an analytical model to compare ad-sponsored, subscription-based, and hybrid revenue models for a monopolistic social networking platform considering service quality differences, user–service fit, and network effects. Du et al. \cite{10559622} formulated a non-submodular profit maximization problem and introduced a Shapley value–based algorithm that effectively selects influencers to maximize overall marketing profit on real social networks. Teng et al. \cite{10817614} introduced a multi-grade revenue maximization framework for viral marketing that simultaneously models promotional diffusion and competitive influence among multiple advertisers in social networks. 
	\section{Preliminaries and Problem Definition} \label{Sec:PPD}
	
		In this section, we describe the background and define our problem formally. We start by describing Social Networks in Definition \ref{Def:Soc_Net}.
	
	\begin{mydef}[Social Network] \cite{wasserman1994social} \label{Def:Soc_Net}
		A Social Network is generally modeled using a simple and weighted graph $G(V, E, \mathcal{P})$. Here, the vertex set $V(G)=\{v_1, v_2, \ldots, v_n\}$ are the users of the social network, the edge set $E(G)=\{e_1, e_2, \ldots, e_m\}$ represents the social relation among the users, and the edge weight function $\mathcal{P}$ maps each edge to its corresponding influence probability; i.e.; $\mathcal{P}: E(G) \longrightarrow (0,1]$.
	\end{mydef}
	Depending on the nature of the relationship, the graph may be directed or undirected. As an example, in the case of \texttt{Twitter} social network, the relationship is follower-followee, hence it is directed. On the other hand, if we consider the \texttt{Facebook} social network, the relationship is a friend, and hence it is undirected. We denote the number of vertices and edges of $G$ by $n$ and $m$, respectively. For any edge $e \in E(G)$, we denote its influence probability as $\mathcal{P}(e)$. If, $e \notin E(G)$ then $\mathcal{P}(e)=0$. The implication of the influence probability is if $e \equiv (uv) \in E(G)$, then the user $u$ can influence $v$ with the probability $\mathcal{P}(e)$. As in this study, we consider a social network represented by a graph; in the subsequent parts of this paper, we use the terms social network and graph interchangeably. Among many important properties of social networks, one relevant property to this study is the \emph{diffusion of information} which tells that information flows from one node to the other node through the intermediate nodes. To study this diffusion process in a social network, several diffusion models have been introduced and studied in the literature \cite{guille2013information}. Among them, the \emph{Independent Cascade Model} (\emph{IC Model}) and the \emph{Linear Threshold Model} (\emph{LT Model}) have been predominantly used in the social network analysis literature. In this study, we consider that the diffusion of information is happening by the rule of IC Model. Next, we state this in Definition \ref{Def:IC_Model}. By $[T]$, we define the set $\{1, 2, \ldots,T\}$.

	\begin{mydef}[Independent Cascade Model] \label{Def:IC_Model}
		Under the Independent Cascade Model, information propagates in discrete time steps. The basic rules of this model are as follows:
		\begin{itemize}
			\item a node can be either in an `active' (also called `influenced') or `inactive' (also called `uninfluenced') state,
			\item a node can change its state from `inactive' to `active' but not vice versa,
			\item every newly active node at time $t$ (denoted by $\mathcal{S}_{t}$) will get a single chance to activate its inactive neighbor and will be successful with the probability as given by respective edge weight.
			\item diffusion process stops when no more node activation is possible.
		\end{itemize}
	\end{mydef}
	
	From Definition \ref{Def:IC_Model}, it can be observed that in IC Model, diffusion starts from the initial set of nodes. Individually, we call them \emph{seed node}, and together we call them \emph{seed set}. As mentioned in Definition \ref{Def:Soc_Net}, a social network is a probabilistic graph as its edges are marked with a probability value. However, it is important to observe that at the end of diffusion, one deterministic graph will be generated, and this is called a live graph which is stated in Definition \ref{Def:Live_Graph}.
	
	\begin{mydef}[Live Graph] \label{Def:Live_Graph}
		Given a social network $G(V, E, \mathcal{P})$, a seed set $\mathcal{S} \subseteq V(G)$ a live graph is one which can be obtained at the end of the diffusion process by independent cascade model. We denote the set of all possible live graphs by $L(G)$. Every edge $e \in E(G)$, can either be present or not in a live graph. So, there are $2^{m}$ many possible live graphs where $m$ denotes the number of edges of the social network. We denote this set by $L(G)$. The probability that any live graph $\mathcal{G} \in L(G)$ will be generated can be given by the Equation \ref{Eq:Probability}. 
		\begin{equation} \label{Eq:Probability}
			P(\mathcal{G})= \underset{e \in E(\mathcal{G})}{\prod} \ \mathcal{P}(e) \ \underset{e \in E(G) \setminus E(\mathcal{G})}{\prod} \ (1- \mathcal{P}(e))
		\end{equation}
	\end{mydef}
	
	The influence of a seed set is defined as the number of active nodes at the end of the diffusion process. For any seed set $\mathcal{S}$, let $I(\mathcal{S})$ denote the set of nodes activated at the end of the diffusion process, hence $I(\mathcal{S})= \underset{i \in [T] \cup \{0\}}{\bigcup} \ \mathcal{S}_{i}$. As the diffusion of the information under the IC Model is a probabilistic process hence the influence of a seed set is measured in terms of expectation and denoted by $\sigma(\mathcal{S})$; i.e.; $\sigma(\mathcal{S})= \mathbb{E}[|I(\mathcal{S})|]$. Here, $\mathbb{E}[.]$ denotes the expectation operator of a random variable, and $\sigma()$ is the social influence function that maps every subset of users to its expected influence; i.e. $\sigma: 2^{V(G)} \longrightarrow \mathbb{R}_{0}^{+}$ with $\sigma(\emptyset)=0$. $\mathbb{R}_{0}^{+}$ is the set of positive real numbers including $0$. One well-studied problem in this context is to choose a subset of $k \in \mathbb{Z}^{+}$ (given as input) nodes to maximize the influence. This problem is referred to as the Influence Maximization Problem in the literature \cite{li2018influence,banerjee2020survey}, which has been stated in Definition \ref{Def:Inf_Max}.
	
	\begin{mydef}[Influence Maximization Problem] \label{Def:Inf_Max}
		Given social network $G(V, E, \mathcal{P})$ and a positive integer $k$, the goal of the influence maximization problem is to choose a subset of $k$ nodes such that the influence in the network is maximized. Mathematically, this problem can be written as follows:
		
		\begin{equation}
			\mathcal{S}^{OPT}=\underset{\mathcal{S} \subseteq V(G); |\mathcal{S}| \leq k}{argmax} \ \sigma(\mathcal{S})
		\end{equation}
	Here, $\mathcal{S}^{OPT}$ denotes the optimal seed set of size $k$ in $G$.
	\end{mydef}
	It has been established in the literature that even given a seed set counting the number of influenced nodes is \textsf{$\#P$-Complete}, and hence,  the Influence Maximization Problem is \textsf{NP-hard} \cite{kempe2003maximizing}. Recently, this problem has been studied with a slight variation by Nguyen et al.\cite{nguyen2013budgeted}. In this problem, instead of the budget as the number of nodes, the budget has been given as an amount of money. Each user of the network is associated with a cost. This problem has been referred to as the Budgeted Influence Maximization Problem stated in Definition \ref{Def:Bud_Inf_Max}.
	
	\begin{mydef}[Budgeted Influence Maximization Problem] \label{Def:Bud_Inf_Max}
		Given a social network $G= (V, E, \mathcal{P})$, a cost function $C: V(G) \longrightarrow \mathbb{R}^{+}$ and a budget $B$, the budgeted influence maximization problem asks to choose a subset of the nodes as seed within the allocated budget such that the influence in the network is maximized. Mathematically, this problem can be written as follows:
		\begin{equation}
			\mathcal{S}^{OPT}=\underset{\mathcal{S} \subseteq V(G); C(\mathcal{S}) \leq B}{argmax} \ \sigma(\mathcal{S})
		\end{equation}
		For any user $u \in V(G)$, let $C(u)$ denote its selection cost. $C(\mathcal{S})$ denotes the total selection of the users in $\mathcal{S}$, and hence $C(\mathcal{S})=\underset{u \in \mathcal{S}}{\sum} \ C(u)$. 
	\end{mydef}
	It can be observed from Definition \ref{Def:Bud_Inf_Max} that the BIM Problem exactly resembles the Influence Maximization Problem when for every user $u \in V(G)$, $C(u)=1$. As the BIM Problem is a generalization of the Influence Maximization Problem, hence it will also be \textsf{NP-hard} to solve optimally. In Influence Maximization Problem or BIM Problem, we want to maximize the number of active nodes with a seed set which is good enough for situations when one simply wants to inform or make aware a maximum number of people about a piece of information. Now, consider a different situation. Suppose a commercial house wants to promote a product through online social networks, so naturally, there is an associated cost with the product, and once a user is influenced, then (s)he will purchase the product, and thus the commercial house can earn some benefit out of it. The function $b$ associates each user to the corresponding benefit value, i.e., $b: V(G) \longrightarrow \mathbb{R}^{+}$ and for any user $u \in V(G)$, we denote this value by $b(u)$. Now, we state the benefit by a seed set in Definition \ref{Def:Benefit_Seed_Set}.
	\begin{mydef}[Benefit by a Seed Set]\label{Def:Benefit_Seed_Set}
		Given a social network $G(V, E, \mathcal{P})$ and a seed set $\mathcal{S}$, we denote the benefit obtained by the seed set as $\beta(\mathcal{S})$ and defined in terms of expectation over the set of all possible live graphs. Mathematically, this can be defined using Equation \ref{Eq:Expected_Benefit}. 
		\scriptsize
		\begin{equation} \label{Eq:Expected_Benefit}
			\beta(\mathcal{S})= \mathbb{E}[\underset{v \ \in I(\mathcal{S}) \cap V(G)}{\sum} \ b(v)]  = \underset{\mathcal{G} \in L(G)}{\sum} \ P(\mathcal{G}) \cdot \underset{v \ \in I(\mathcal{S}) \cap V(\mathcal{G})}{\sum} \ b(v)
		\end{equation}
	\end{mydef}
	
	From the commercial house point of view, the goal is to maximize the profit obtained by a seed set which is stated in Definition \ref{Def:Profit_Seed_Set}.
	
	\begin{mydef}[Profit of a Seed Set] \label{Def:Profit_Seed_Set}
		Given a seed set $\mathcal{S} \subseteq V(G)$, the profit earned by it is defined as the difference between the expected benefit that can be earned by influencing the users in $\mathcal{S}$ and the cost of the seed set. We denote this quantity by $\phi(\mathcal{S})$ and defined in Equation \ref{Eq:Profit}.
		\begin{equation} \label{Eq:Profit}
			\phi(\mathcal{S})= \beta(\mathcal{S}) \ - \ C(\mathcal{S})
		\end{equation}
	\end{mydef}
	Based on the definition of profit of a seed set, now we state the Profit Maximization Problem in Definition \ref{Def:Prof_Max}.

	\begin{mydef}[Profit Maximization Problem] \label{Def:Prof_Max}
		Given a social network $G = (V, E, \mathcal{P})$ along with the cost and benefit function $C$ and $b$, respectively, and a fixed budget $B$, the goal of the Profit Maximization Problem is to choose a subset of the nodes as seed such that the profit function as defined in Equation \ref{Eq:Profit} is maximized. Mathematically, this problem can be written using Equation \ref{Eq:Problem}.
		\begin{equation} 
		\label{Eq:Problem}
		\mathcal{S}^{OPT} = \underset{\mathcal{S} \subseteq V(G), C(\mathcal{S}) \leq B}  {argmax} \ \phi(\mathcal{S}) 
		\end{equation}
	\end{mydef}
	In the literature, it has been reported that instead of deploying all the seed nodes initially, we can deploy a small number initially and observe the diffusion process. Subsequently, we can deploy the seed nodes for the second phase, and thus it completes the diffusion process. This strategy is called influence maximization in two phases \cite{dhamal2016information}. Also, this method has been extended even for multi-round case \cite{sun2018multi}. One point to highlight here is that both these studies consider the allocated budget in terms of the number of nodes. However, in practice, the commercial house allocates some budget in terms of money. Also, real-world social networks are formed by rational human beings, and hence they require to be incentives if they act as initial adopters. Hence, it is more realistic to consider the budget in terms of incentives, not in terms of the number of nodes. In this direction, we define the two phase profit maximization problem in Definition \ref{Def:Two_Phase_Prof_Max}. 
	   
	\begin{mydef}[Two Phase Profit Maximization Problem] \label{Def:Two_Phase_Prof_Max}
		Given a social network $G(V,E,\mathcal{P})$, a cost function $C:V(G) \longrightarrow \mathbb{R}^{+}$, benefit function $b: V(G) \longrightarrow \mathbb{R}_{0}^{+}$ and a budget $B$, the \textsc{Two Phase Profit Maximization Problem} asks to divide the budget into two parts $B_{1}$ and  $B_{2}$ such that $B_{1}+ B_{2} \leq B$. Also, the budget $B_{1}$ and $ B_{2}$ will be used in the first and second phase of the diffusion process such that at the end of the second phase of the diffusion process, the number of active nodes is maximized.
	\end{mydef}
	From the algorithmic point of view, the Two Phase Profit Maximization Problem can be expressed as follows:
	
	\begin{center}
		\begin{tcolorbox}[title=\textsc{\textbf{\begin{center}
		Two Phase Profit Maximization Problem
		\end{center}}}, width=13.2cm] 
			\textbf{Input:} A Social Network $G(V,E, \mathcal{P})$ with $n$ users and $m$ edges, The Cost Function of the Users $C$, The Benefit Function $b$, One Information Diffusion Model, Budget $B$.
			\vspace{0.2 cm}
			\\
			\textbf{Problem:} Come up with a Two Phase Diffusion Model as well as Solution Strategy for the Profit Maximization Problem on Social Networks.
			\vspace{0.1 cm}
			\\
			\textbf{Output:} \justifying{The optimal splitting of the budget $B$ into two parts $B_1$ and $B_2$ for the first and second phase, respectively such that}
			\\ $B_1+B_2 \leq B$. 
		\end{tcolorbox}
	\end{center}

	\par The goal of the paper is to investigate how the profit made by a commercial house can be maximized using two phase diffusion approach. Symbols and notations used in this paper have been listed in Table \ref{Table1:Notations}. Next, we propose a mathematical model for this problem.
	\begin{table*}
		\caption{Symbols and Notations with their Interpretations}
		\label{Table1:Notations}
		\vspace*{3mm}
		\centering
		\begin{tabular}{p{0.20\linewidth}p{0.80\linewidth}}
			\hline
			\hline
			\textbf{Notation} & \textbf{Description}\\
			\hline
			\hline
			$G(V, E, \mathcal{P})$ & A simple and weighted directed graph $G$ with a node set $V$, Edge set $E$ and $\mathcal{P}$ is the edge weight function
			\\
			\hline $n,m$ & $n$ is the number of users of the social network $G$, and $m$ is the number of social relations among the users in the social network $G$
			\\
			\hline $u, e$ & $u$ denotes a user connected to another user with the edge $e$
			\\
			\hline $d$ & timestep
			\\ 
			\hline $\mathcal{P}(e)$ & Influence probability of the edge $e$
			\\
			\hline $L(\mathcal{G})$ & The set of live graphs which are obtained by sampling the edges of $G$
			\\
			\hline $\mathcal{G}$ & A live graph of $G(V, E, \mathcal{P})$\\
			\hline $Y$ & The partial observation of the live graph $\mathcal{G}$
			\\ 
			\hline $A_Y$ & Already activated nodes at timestep $d$
			\\
			\hline  $R_Y$ & Recently activated nodes at timestep $d$
			\\ 
			\hline $P(\mathcal{G})$ & Probability of generation of live graph $\mathcal{G}$
			\\
			\hline $P(Y)$ & Probability of generation of partial observation $Y$
			\\ 
			\hline $\mathcal{S}$ & Seed set; a set of initially activated nodes
			\\
			\hline $\mathcal{S}^{OPT}$ & Optimal seed set
			\\
			\hline $b$ & Function which associates each user $u$ to its corresponding benefit
			\\
			\hline $b(u)$ & Benefit from a node $u$ of $G$
			\\
			\hline $\beta(\mathcal{S})$ & Expected Benefit by a seed set $\mathcal{S}$
			\\
			\hline $\phi(\mathcal{S})$ & Profit earned from a seed set $\mathcal{S}$
			\\
			\hline $C$ & Cost function which associates a user $u$ with its corresponding cost
			\\
			\hline $C(u)$ & Cost of a node $u$ of $G$
			\\
			\hline $\mathcal{S}_{1}, \ \mathcal{S}_{2}$ & Seed set for Phase I, Seed set for Phase II
			\\
			\hline $C(\mathcal{S}_{1}), \ C(\mathcal{S}_{2})$ & Selection cost of seed set of Phase I, Selection cost of seed set of Phase II
			\\
			\hline $B$ & Total Budget
			\\
			\hline $B_{1}, \ B_{2}$ & Budget for Phase I, Budget for Phase II
			\\
			\hline $P(\frac{\mathcal{G}}{Y})$ & Probability of generation of live graph $\mathcal{G}$ when partial observation $Y$ is given
			\\
			\hline $\mathbb{R}^{+}_{0}$ & The set of positive real numbers including $0$
			\\
			\hline
			\hline\\
		\end{tabular}%
	\end{table*}

	\section{The Proposed Model} \label{Sec:PM}
	In this section, we describe the formulation of our proposed model and subsequently prove certain properties along with illustrative examples. Accordingly, we divide this section into three parts. In the first one, we describe the formulation of the objective function of our model. The second one contains an illustrative example, and the third one contains properties of the objective function.   
	
	\subsection{Formulation of the Objective Function}
	\label{sec:objfun}
	Here, we formulate one objective function that measures the expected profit at the end of the second phase. Now, let a live graph $\mathcal{G} \in L(G)$ with its generation probability $P(\mathcal{G})$ is destined to occur, and $\mathcal{S}_{1}$ is the seed set for the Phase I. Certainly, $C(\mathcal{S}_{1}) \leq B_{1}$.  The diffusion process starts on a live graph $\mathcal{G}$ with seed set $\mathcal{S}_{1}$ under the IC Model. Now, we observe this diffusion process till timestep $d$. If we do that, we will have the information regarding which nodes are influenced and which are not. We call this as the \emph{partial observation} till timestep $d$ and denoted as $Y$. So, at the end of the timestep $d$, we have \emph{already activated nodes} denoted by  $A_{Y}$ and \emph{recently activated nodes} (at timestep $d$) $R_{Y}$. These two sets $A_{Y}$ and $R_{Y}$ are determined from the partial observation $Y$. The benefit from the nodes in $R_{Y}$ is as follows:
	\begin{equation}
		b(R_{Y}) = \displaystyle{\sum_{v \in R_{Y}}}b(v)
	\end{equation}
	
	Now, as we have the partial observation $Y$, for a subset of the edges of the live graph $\mathcal{G}$, we are sure whether they have appeared or not. So, the generation probability of the live graph $\mathcal{G}$ is updated as $P(\frac{\mathcal{G}}{Y})$. Now, the second phase needs to begin, and assume that $\mathcal{S}_{2}^{OPT (Y,B_{2})}$ denotes the optimal seed set for the given partial observation $Y$ and second phase budget $B_2$. Assume that we know which nodes are in $\mathcal{S}_{2}^{OPT (Y,B_{2})}$. Now, both the sets $\mathcal{S}_{2}^{OPT(Y,B_{2})}$ and $R_{Y}$ will be used for the diffusion process of the second phase. Thus, Phase II proceeds from the timestep $d$ onward.   
	\par Now, our goal is to calculate the expected profit that can be earned in Phase II. Now, it is important to observe that in Phase II, the nodes from which the profit can be earned are from $V(G) \setminus A_{Y}$. So, for the given partial observation $Y$ (hence, recently activated nodes $R_{Y}$) along with an optimal seed set for Phase II $\mathcal{S}_{2}^{OPT(Y,B_{2})}$, the expected profit for Phase II will be equals to $\displaystyle{\sum_{\mathcal{G}}P(\frac{\mathcal{G}}{Y}})[\phi^{V(\mathcal{G})\setminus A_{Y}}(R_{Y} \cup \mathcal{S}_{2}^{OPT(Y,B_{2})})]$. Here, $\phi^{V(\mathcal{G})\setminus A_{Y}}(\mathcal{S})$ denotes the profit earned by the seed set $\mathcal{S}$ from the graph $V(\mathcal{G})\setminus A_{Y}$.
	\par Now, we can observe that given a live graph $\mathcal{G}$, seed set $\mathcal{S}_{1}$ of Phase I, and timestep $d$, we can obtain the partial observation $Y$. Hence, $\mathcal{S}_{2}^{OPT(Y,B_{2})}$ can be written as $\mathcal{S}_{2}^{OPT(X, \mathcal{S}_{1},d,B_{2})}$. Now, to develop an objective function where the decision variable will be the seed set for Phase I, we assume that given the partial observation $Y$, we will select an optimal seed set for Phase II. It is important to observe that at the start of Phase I, the partial observation $Y$ is not known. So let our objective function be $\mathbb{F}(\mathcal{S}_{1}, d, B_{2})$ as the expected profit with respect to all possible occurrences $Y$. Assuming that $d$ and $B_{2}$ are already given, so we can write $\mathbb{F}(\mathcal{S}_{1}, d, B_{2})$ as $f(\mathcal{S}_{1})$. Now, our goal is to develop a mathematical expression, 
	
	{\scriptsize  
	 \begin{align*}
	       f(\mathcal{S}_{1}) &= 
	       \displaystyle{\sum_{Y}P({Y})} \Bigg\{\Big\{ \displaystyle{\sum_{\mathcal{G}}P(\frac{\mathcal{G}}{Y}})\phi(A_{Y}) 
	       +\displaystyle{\sum_{\mathcal{G}}P(\frac{\mathcal{G}}{Y}})[\phi^{V(\mathcal{G})\setminus A_{Y}}(R_{Y} \cup \mathcal{S}_{2}^{OPT(Y,B_{2})})]\Big\}\Bigg\} \\
	       &= \displaystyle{\sum_{Y}P({Y})}\displaystyle{\sum_{\mathcal{G}}P(\frac{\mathcal{G}}{Y})}\Bigg\{\Big\{\phi(A_{Y}) 
	   + [\phi^{V(\mathcal{G})\setminus A_{Y}}(R_{Y} \cup \mathcal{S}_{2}^{OPT(X,\mathcal{S}_{1},d,B_{2})})]\Big\}\Bigg\}\\
	   &= \displaystyle{\sum_{Y}P({Y})}\displaystyle{\sum_{\mathcal{G}}P(\frac{\mathcal{G}}{Y})}\Bigg\{\Big\{[\beta(A_{Y}) - C(\mathcal{S}_{1})] 
	 + [\phi^{V(\mathcal{G})\setminus A_{Y}}(R_{Y} \cup \mathcal{S}_{2}^{OPT(X,\mathcal{S}_{1},d,B_{2})})]\Big\}\Bigg\} \\
	 		&= \displaystyle{\sum_{Y}P({Y}})\displaystyle{\sum_{\mathcal{G}}P(\frac{\mathcal{G}}{Y}})\Bigg\{{\phi^{\mathcal{G}}(\mathcal{S}_{1} \cup \mathcal{S}_{2}^{OPT(X,\mathcal{S}_{1},d,B_{2})})}\Bigg\}
	    \end{align*}
	   \begin{align*}
		\Bigg\{ \because \displaystyle{\sum_{Y}P({Y}})\displaystyle{\sum_{\mathcal{G}}P(\frac{\mathcal{G}}{Y}}) &= \displaystyle{\sum_{Y}}\displaystyle{\sum_{\mathcal{G}}} {\frac{P(\mathcal{G},Y)}{P(Y)}} P(Y)
		= \displaystyle{\sum_{Y}}\displaystyle{\sum_{\mathcal{G}}} P(\mathcal{G},{Y}) 
		= \displaystyle{\sum_{\mathcal{G}}}\displaystyle{\sum_{Y}} P(\mathcal{G},{Y}) \\
		&= \displaystyle{\sum_{\mathcal{G}}} P(\mathcal{G})\Bigg\}
	\end{align*}
	}
	
		The objective function formulated for our Two Phase Profit Maximization Problem is as follows:
	\begin{equation}
		\label{Eq: Obj_eq}
		\therefore f(\mathcal{S}_{1})= \displaystyle{\sum_{\mathcal{G}}}P(\mathcal{G}){\phi^{\mathcal{G}}(\mathcal{S}_{1} \cup \mathcal{S}_{2}^{OPT(X,\mathcal{S}_{1},d,B_{2})})}
	\end{equation}

	\subsection{Example}
	Following Figure \ref{Fig1: TPPMillustration} is the Graph $G$ whose live graph occurrences $\mathcal{G}$ are tabulated in Table \ref{Table2:TPPMExampleillustration}. We have seed set $\mathcal{S}_1 = \{u_{1}\}$ at timestep $d = 1$, with budget $B_1 = 2$ and a total budget $B = 5$. The calculated value of the objective function is $2.846$.
	\begin{figure}[h!]
		\begin{center}
			\includegraphics[height=1.50in,width=3.50in]{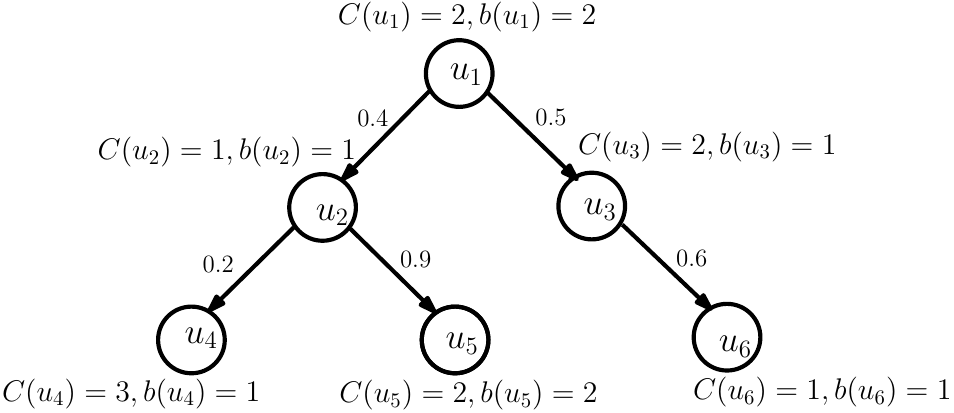}
			\caption{An example graph to illustrate Two phase Profit Maximization}
			\label{Fig1: TPPMillustration}
		\end{center}
	\end{figure}
	
	\begin{table*}[htbp]
	\centering
	\caption{Two Phase Profit Maximization Approach Illustration}
	\label{Table2:TPPMExampleillustration}
	\begin{tabular}{|lllll|l|}
	\hline
	\multicolumn{1}{|l|}{$\mathcal{G}$}              & \multicolumn{1}{l|}{$P(\mathcal{G})$}   & \multicolumn{1}{l|}{$\mathcal{S}_{1}/A_{Y}$} & \multicolumn{1}{l|}{$R_{Y}$}   & $\mathcal{S}_{2}^{OPT(Y, B_{2})}$   & $f(\mathcal{S}_{1})$  \\ \hline
	\multicolumn{1}{|l|}{\{\}} & \multicolumn{1}{l|}{0.0096} & \multicolumn{1}{l|}{$u_1$}     & \multicolumn{1}{l|}{-}    & $u_2$    & 0      \\ \hline
	\multicolumn{1}{|l|}{$u_3u_6$}             & \multicolumn{1}{l|}{0.0144} & \multicolumn{1}{l|}{$u_1$}     & \multicolumn{1}{l|}{-}    & $u_3$    & 0      \\ \hline
	\multicolumn{1}{|l|}{$u_2u_5$}             & \multicolumn{1}{l|}{0.0864} & \multicolumn{1}{l|}{$u_1$}     & \multicolumn{1}{l|}{-}    & $u_2$    & 0.1728 \\ \hline
	\multicolumn{1}{|l|}{$u_2u_5$, $u_3u_6$}         & \multicolumn{1}{l|}{0.1296} & \multicolumn{1}{l|}{$u_1$}     & \multicolumn{1}{l|}{-}    & $u_2$, $u_3$ & 0.2592 \\ \hline
	\multicolumn{1}{|l|}{$u_2u_4$}             & \multicolumn{1}{l|}{0.0024} & \multicolumn{1}{l|}{$u_1$}     & \multicolumn{1}{l|}{-}    & $u_2$    & 0.0024 \\ \hline
	\multicolumn{1}{|l|}{$u_2u_4$, $u_3u_6$}         & \multicolumn{1}{l|}{0.0036} & \multicolumn{1}{l|}{$u_1$}     & \multicolumn{1}{l|}{-}    & $u_2$, $u_3$ & 0.0036 \\ \hline
	\multicolumn{1}{|l|}{$u_2u_4$, $u_2u_5$}         & \multicolumn{1}{l|}{0.0216} & \multicolumn{1}{l|}{$u_1$}     & \multicolumn{1}{l|}{-}    & $u_2$    & 0.0648 \\ \hline
	\multicolumn{1}{|l|}{$u_2u_4$, $u_2u_5$, $u_3u_6$}     & \multicolumn{1}{l|}{0.0324} & \multicolumn{1}{l|}{$u_1$}     & \multicolumn{1}{l|}{-}    & $u_2, u_3$ & 0.0972 \\ \hline
	\multicolumn{1}{|l|}{$u_1u_3$}             & \multicolumn{1}{l|}{0.0096} & \multicolumn{1}{l|}{$u_1$}     & \multicolumn{1}{l|}{$u_3$}    & $u_2$    & 0.0096 \\ \hline
	\multicolumn{1}{|l|}{$u_1u_3, u_3u_6$}         & \multicolumn{1}{l|}{0.0144} & \multicolumn{1}{l|}{$u_1$}     & \multicolumn{1}{l|}{$u_3$}    & $u_2$    & 0.0288 \\ \hline
	\multicolumn{1}{|l|}{$u_1u_3, u_2u_5$}         & \multicolumn{1}{l|}{0.0864} & \multicolumn{1}{l|}{$u_1$}     & \multicolumn{1}{l|}{$u_3$}    & $u_2$    & 0.2592 \\ \hline
	\multicolumn{1}{|l|}{$u_1u_3, u_2u_5, u_3u_6$}     & \multicolumn{1}{l|}{0.1296} & \multicolumn{1}{l|}{$u_1$}     & \multicolumn{1}{l|}{$u_3$}    & $u_2$    & 0.5184 \\ \hline
	\multicolumn{1}{|l|}{$u_1u_3, u_2u_4$}         & \multicolumn{1}{l|}{0.0024} & \multicolumn{1}{l|}{$u_1$}     & \multicolumn{1}{l|}{$u_3$}    & $u_2$    & 0.0048 \\ \hline
	\multicolumn{1}{|l|}{$u_1u_3, u_2u_4, u_3u_6$}     & \multicolumn{1}{l|}{0.0036} & \multicolumn{1}{l|}{$u_1$}     & \multicolumn{1}{l|}{$u_3$}    & $u_2$    & 0.0108 \\ \hline
	\multicolumn{1}{|l|}{$u_1u_3, u_2u_4, u_2u_5$}     & \multicolumn{1}{l|}{0.0216} & \multicolumn{1}{l|}{$u_1$}     & \multicolumn{1}{l|}{$u_3$}    & $u_2$    & 0.0864 \\ \hline
	\multicolumn{1}{|l|}{$u_1u_3, u_2u_4, u_2u_5, u_3u_6$} & \multicolumn{1}{l|}{0.0324} & \multicolumn{1}{l|}{$u_1$}     & \multicolumn{1}{l|}{$u_3$}    & $u_2$    & 0.162  \\ \hline
	\multicolumn{1}{|l|}{$u_1u_2$}             & \multicolumn{1}{l|}{0.0064} & \multicolumn{1}{l|}{$u_1$}     & \multicolumn{1}{l|}{$u_2$}    & $u_3$    & 0      \\ \hline
	\multicolumn{1}{|l|}{$u_1u_2, u_3u_6$}         & \multicolumn{1}{l|}{0.0096} & \multicolumn{1}{l|}{$u_1$}     & \multicolumn{1}{l|}{$u_2$}    & $u_3$    & 0.0096 \\ \hline
	\multicolumn{1}{|l|}{$u_1u_2, u_2u_5$}         & \multicolumn{1}{l|}{0.0576} & \multicolumn{1}{l|}{$u_1$}     & \multicolumn{1}{l|}{$u_2$}    & $u_3$    & 0.1152 \\ \hline
	\multicolumn{1}{|l|}{$u_1u_2, u_2u_5, u_3u_6$}     & \multicolumn{1}{l|}{0.0864} & \multicolumn{1}{l|}{$u_1$}     & \multicolumn{1}{l|}{$u_2$}    & $u_3$    & 0.2592 \\ \hline
	\multicolumn{1}{|l|}{$u_1u_2, u_2u_4$}         & \multicolumn{1}{l|}{0.0016} & \multicolumn{1}{l|}{$u_1$}     & \multicolumn{1}{l|}{$u_2$}    & $u_3$    & 0.0016 \\ \hline
	\multicolumn{1}{|l|}{$u_1u_2, u_2u_4, u_3u_6$}     & \multicolumn{1}{l|}{0.0024} & \multicolumn{1}{l|}{$u_1$}     & \multicolumn{1}{l|}{$u_2$}    & $u_3$    & 0.0048 \\ \hline
	\multicolumn{1}{|l|}{$u_1u_2, u_2u_4, u_2u_5$}     & \multicolumn{1}{l|}{0.0144} & \multicolumn{1}{l|}{$u_1$}     & \multicolumn{1}{l|}{$u_2$}    & $u_3$    & 0.0432 \\ \hline
	\multicolumn{1}{|l|}{$u_1u_2, u_2u_4, u_2u_5, u_3u_6$} & \multicolumn{1}{l|}{0.0216} & \multicolumn{1}{l|}{$u_1$}     & \multicolumn{1}{l|}{$u_2$}    & $u_3$    & 0.0864 \\ \hline
	\multicolumn{1}{|l|}{$u_1u_2, u_1u_3$}         & \multicolumn{1}{l|}{0.0064} & \multicolumn{1}{l|}{$u_1$}     & \multicolumn{1}{l|}{$u_2$, $u_3$} & $u_4$    & 0      \\ \hline
	\multicolumn{1}{|l|}{$u_1u_2, u_1u_3, u_3u_6$}     & \multicolumn{1}{l|}{0.0096} & \multicolumn{1}{l|}{$u_1$}     & \multicolumn{1}{l|}{$u_2$, $u_3$} & $u_4$   & 0.0192 \\ \hline
	\multicolumn{1}{|l|}{$u_1u_2, u_1u_3, u_2u_5$}     & \multicolumn{1}{l|}{0.0576} & \multicolumn{1}{l|}{$u_1$}     & \multicolumn{1}{l|}{$u_2$, $u_3$} & $u_4$    & 0.1728 \\ \hline
	\multicolumn{1}{|l|}{$u_1u_2, u_1u_3, u_2u_5, u_3u_6$} & \multicolumn{1}{l|}{0.0864} & \multicolumn{1}{l|}{$u_1$}     & \multicolumn{1}{l|}{$u_2$, $u_3$} & $u_5$    & 0.2592 \\ \hline
	\multicolumn{1}{|l|}{$u_1u_2, u_1u_3, u_2u_4$}     & \multicolumn{1}{l|}{0.0016} & \multicolumn{1}{l|}{$u_1$}     & \multicolumn{1}{l|}{$u_2$, $u_3$} & $u_5$    & 0.0048 \\ \hline
	\multicolumn{1}{|l|}{$u_1u_2, u_1u_3, u_2u_4, u_3u_6$} & \multicolumn{1}{l|}{0.0024} & \multicolumn{1}{l|}{$u_1$}     & \multicolumn{1}{l|}{$u_2$, $u_3$} & $u_5$    & 0.0096 \\ \hline
	\multicolumn{1}{|l|}{$u_1u_2, u_1u_3, u_2u_4, u_2u_5$} & \multicolumn{1}{l|}{0.0144} & \multicolumn{1}{l|}{$u_1$}     & \multicolumn{1}{l|}{$u_2$, $u_3$} & $u_6$    & 0.072  \\ \hline
	\multicolumn{1}{|l|}{$u_1u_2, u_1u_3, u_2u_4, u_2u_5, u_3u_6$} & \multicolumn{1}{l|}{0.0216} & \multicolumn{1}{l|}{$u_1$} & \multicolumn{1}{l|}{$u_2$, $u_3$} & $u_6$ & 0.108 \\ \hline
	\multicolumn{5}{|c|}{\textbf{$\displaystyle{\sum_{\mathcal{G}}}P(\mathcal{G}){\phi^{\mathcal{G}}(\mathcal{S}_{1} \cup \mathcal{S}_{2}^{OPT(\mathcal{G},\mathcal{S}_{1},d,B_{2})})}$}}                                                                                                             & 2.8456 \\ \hline
	\end{tabular}%
	\end{table*}

	\subsection{Properties of the Objective Function}
	\begin{mylemma}
		$f(.)$ may be positive or negative.
		\label{Lem:First}
	\end{mylemma}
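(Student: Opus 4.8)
The plan is to verify the claim by exhibiting two instances, one in which $f(\mathcal{S}_1)$ is strictly positive and one in which it is strictly negative. The starting point is the observation that, by Equation \ref{Eq: Obj_eq}, $f(\mathcal{S}_1)$ is a convex combination (with weights $P(\mathcal{G})$ that sum to one over all live graphs $\mathcal{G} \in L(G)$) of the profit values $\phi^{\mathcal{G}}(\mathcal{S}_1 \cup \mathcal{S}_2^{OPT(X,\mathcal{S}_1,d,B_2)})$. Since by Definition \ref{Def:Profit_Seed_Set} each profit term has the form $\beta(\cdot) - C(\cdot)$, the sign of $f$ is governed entirely by the balance between the expected benefit harvested from the influenced nodes and the selection cost paid for the seeds. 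Hence it suffices to design the cost and benefit functions so that one quantity dominates the other.

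For the positive case I would simply appeal to the worked example of Section \ref{Sec:PM} (Figure \ref{Fig1: TPPMillustration} and Table \ref{Table2:TPPMExampleillustration}), where the same objective evaluates to $f(\mathcal{S}_1)=2.846>0$. Alternatively, a minimal instance also works: take a single seed node $v$ whose benefit value satisfies $b(v)>C(v)$ and whose spread is non-trivial; then in every live graph the harvested benefit is at least $b(v)$ while the cost is $C(v)$, so each profit term, and therefore their weighted average $f$, is strictly positive.

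For the negative case I would force $\mathcal{S}_1$ to contain an expensive node. Concretely, choose the cost function so that $C(\mathcal{S}_1)>\sum_{v \in V(G)} b(v)$. Because the benefit of any activated set is bounded above by the total benefit of the network, we have $\beta(\cdot)\le \sum_{v \in V(G)} b(v)$ for every live graph, while the cost of the deployed seeds is at least $C(\mathcal{S}_1)$ regardless of the second-phase choice $\mathcal{S}_2^{OPT(X,\mathcal{S}_1,d,B_2)}$. Consequently each term $\phi^{\mathcal{G}}(\mathcal{S}_1 \cup \mathcal{S}_2^{OPT(X,\mathcal{S}_1,d,B_2)}) \le \sum_{v \in V(G)} b(v) - C(\mathcal{S}_1) < 0$, and the convex combination $f(\mathcal{S}_1)$ is therefore strictly negative.

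There is no genuine obstacle here; the only point requiring a moment of care is the negative construction, where one must confirm that the optimal second-phase seed set cannot rescue the profit. This is immediate: adding $\mathcal{S}_2$ can only raise the total selection cost above $C(\mathcal{S}_1)$, while the harvested benefit stays capped at $\sum_{v \in V(G)} b(v)$, so the optimizer cannot push any term back to non-negativity. Combining the two instances establishes that $f(\cdot)$ may be positive or negative, as claimed.
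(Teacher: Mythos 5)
Your proposal is correct and, at the top level, follows the same strategy as the paper: Lemma \ref{Lem:First} is an existence claim, so both you and the authors prove it by exhibiting one instance with $f(\mathcal{S}_1)>0$ and one with $f(\mathcal{S}_1)<0$. The difference is in how the witnesses are produced. The paper constructs two concrete four-node graphs (Figure \ref{Fig2:Lemma4.1_positive_negative}), enumerates all live graphs with their generation probabilities in Table \ref{Table3:PositiveNegativeObj}, and computes $f(\mathcal{S}_1)=12.404$ and $f(\mathcal{S}_1)=-0.058$ numerically. You instead argue parametrically: for the positive case you either reuse the worked example of Table \ref{Table2:TPPMExampleillustration} or note that a seed with $b(v)>C(v)$ makes every term of the convex combination positive; for the negative case you choose costs with $C(\mathcal{S}_1)>\sum_{v\in V(G)}b(v)$ and observe that since $\beta(\cdot)$ is capped by the total network benefit while the deployed cost can only grow when $\mathcal{S}_2^{OPT}$ is added, every term $\phi^{\mathcal{G}}(\cdot)$ is strictly negative regardless of the second-phase optimum. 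Your bounding argument is arguably more robust --- it does not depend on enumerating $2^m$ live graphs and it makes explicit why the optimal second-phase seed set cannot rescue a negative instance, a point the paper's tables verify only implicitly --- whereas the paper's concrete computation has the virtue of doubling as a worked illustration of how the objective in Equation \ref{Eq: Obj_eq} is evaluated. The one small point to keep in mind in your positive construction is that the cost in each term is $C(\mathcal{S}_1\cup\mathcal{S}_2^{OPT})$, not just $C(v)$; this is harmless because the empty set is a feasible second-phase choice and the optimizer can only do at least as well, but it deserves the same one-line justification you already supply for the negative case.
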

	\begin{figure}[h!]
	\centering
	\resizebox{\columnwidth}{!}{\begin{tabular}{cc}
	\includegraphics[scale=0.40]{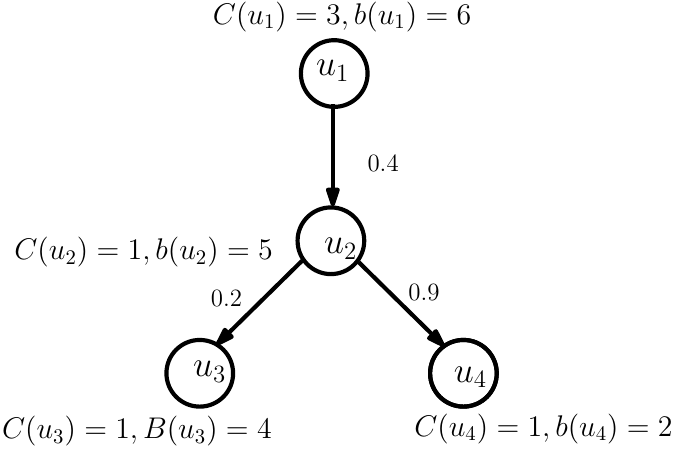} & \includegraphics[scale=0.40]{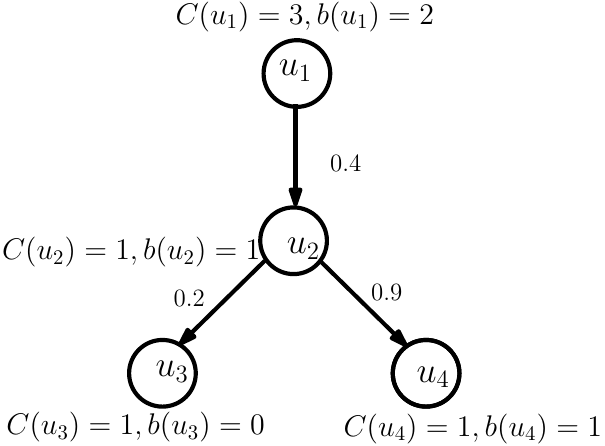} \\

	{\footnotesize (a) $f(\mathcal{S}_{1})$ is positive}  & {\footnotesize (b) $f(\mathcal{S}_{1})$ is negative} \\
	\end{tabular}}
	\caption{Example Graphs to illustrate $f(\cdot)$ }
	\label{Fig2:Lemma4.1_positive_negative}
	\end{figure}
	
	\begin{table*}
	\centering
	\caption{Positive and Negative value of Objective Function}
	\label{Table3:PositiveNegativeObj}
	\resizebox{\columnwidth}{!}{%
	\begin{tabular}{|lllllllll|lll|}
	\hline
	\multicolumn{6}{|l|}{} &
	  \multicolumn{3}{c|}{$f(\mathcal{S}_{1}) \geq 0$} &
	  \multicolumn{3}{c|}{$f(\mathcal{S}_{1}) < 0$} \\ \hline
	\multicolumn{1}{|l|}{$\mathcal{G}$} &
	  \multicolumn{1}{l|}{$P(\mathcal{G})$} &
	  \multicolumn{1}{l|}{$\mathcal{S}_{1}$} &
	  \multicolumn{1}{l|}{$R_{Y}$} &
	  \multicolumn{1}{l|}{$\mathcal{S}_{2}$} &
	  \multicolumn{1}{l|}{$C(\mathcal{S}_{1} \cup \mathcal{S}_{2})$} &
	  \multicolumn{1}{l|}{$\beta(\mathcal{S}_{1} \cup \mathcal{S}_{2})$} &
	  \multicolumn{1}{l|}{$\phi(\mathcal{S}_{1} \cup \mathcal{S}_{2})$} &
	  $f(\mathcal{S}_{1})$ &
	  \multicolumn{1}{l|}{$\beta(\mathcal{S}_{1} \cup \mathcal{S}_{2})$} &
	  \multicolumn{1}{l|}{$\phi(\mathcal{S}_{1} \cup \mathcal{S}_{2})$} &
	 $f(\mathcal{S}_{1})$ \\ \hline
	\multicolumn{1}{|l|}{\{\}} &
	  \multicolumn{1}{l|}{0.012} &
	  \multicolumn{1}{l|}{$u_1$} &
	  \multicolumn{1}{l|}{} &
	  \multicolumn{1}{l|}{$u_2$} &
	  \multicolumn{1}{l|}{4} &
	  \multicolumn{1}{l|}{11} &
	  \multicolumn{1}{l|}{7} &
	  0.084 &
	  \multicolumn{1}{l|}{3} &
	  \multicolumn{1}{l|}{-1} &
	  -0.012 \\ \hline
	\multicolumn{1}{|l|}{$u_2u_4$} &
	  \multicolumn{1}{l|}{0.108} &
	  \multicolumn{1}{l|}{$u_1$} &
	  \multicolumn{1}{l|}{} &
	  \multicolumn{1}{l|}{$u_2$} &
	  \multicolumn{1}{l|}{4} &
	  \multicolumn{1}{l|}{13} &
	  \multicolumn{1}{l|}{9} &
	  0.972 &
	  \multicolumn{1}{l|}{4} &
	  \multicolumn{1}{l|}{0} &
	  0 \\ \hline
	\multicolumn{1}{|l|}{$u_2u_3$} &
	  \multicolumn{1}{l|}{0.028} &
	  \multicolumn{1}{l|}{$u_1$} &
	  \multicolumn{1}{l|}{} &
	  \multicolumn{1}{l|}{$u_2$} &
	  \multicolumn{1}{l|}{4} &
	  \multicolumn{1}{l|}{15} &
	  \multicolumn{1}{l|}{11} &
	  0.308 &
	  \multicolumn{1}{l|}{3} &
	  \multicolumn{1}{l|}{-1} &
	  -0.028 \\ \hline
	\multicolumn{1}{|l|}{$u_2u_3, u_2u_4$} &
	  \multicolumn{1}{l|}{0.252} &
	  \multicolumn{1}{l|}{$u_1$} &
	  \multicolumn{1}{l|}{} &
	  \multicolumn{1}{l|}{$u_2$} &
	  \multicolumn{1}{l|}{4} &
	  \multicolumn{1}{l|}{17} &
	  \multicolumn{1}{l|}{13} &
	  3.276 &
	  \multicolumn{1}{l|}{4} &
	  \multicolumn{1}{l|}{0} &
	  0 \\ \hline
	\multicolumn{1}{|l|}{$u_1u_2$} &
	  \multicolumn{1}{l|}{0.018} &
	  \multicolumn{1}{l|}{$u_1$} &
	  \multicolumn{1}{l|}{$u_2$} &
	  \multicolumn{1}{l|}{$u_3$} &
	  \multicolumn{1}{l|}{4} &
	  \multicolumn{1}{l|}{15} &
	  \multicolumn{1}{l|}{11} &
	  0.198 &
	  \multicolumn{1}{l|}{3} &
	  \multicolumn{1}{l|}{-1} &
	  -0.018 \\ \hline
	\multicolumn{1}{|l|}{$u_1u_2, u_2u_4$} &
	  \multicolumn{1}{l|}{0.162} &
	  \multicolumn{1}{l|}{$u_1$} &
	  \multicolumn{1}{l|}{$u_2$} &
	  \multicolumn{1}{l|}{$u_3$} &
	  \multicolumn{1}{l|}{4} &
	  \multicolumn{1}{l|}{17} &
	  \multicolumn{1}{l|}{13} &
	  2.106 &
	  \multicolumn{1}{l|}{4} &
	  \multicolumn{1}{l|}{0} &
	  0 \\ \hline
	\multicolumn{1}{|l|}{$u_1u_2, u_2u_3$} &
	  \multicolumn{1}{l|}{0.042} &
	  \multicolumn{1}{l|}{$u_1$} &
	  \multicolumn{1}{l|}{$u_2$} &
	  \multicolumn{1}{l|}{$u_4$} &
	  \multicolumn{1}{l|}{4} &
	  \multicolumn{1}{l|}{17} &
	  \multicolumn{1}{l|}{13} &
	  0.546 &
	  \multicolumn{1}{l|}{4} &
	  \multicolumn{1}{l|}{0} &
	  0 \\ \hline
	\multicolumn{1}{|l|}{$u_1u_2, u_2u_3, u_2u_4$} &
	  \multicolumn{1}{l|}{0.378} &
	  \multicolumn{1}{l|}{$u_1$} &
	  \multicolumn{1}{l|}{$u_2$} &
	  \multicolumn{1}{l|}{$u_3$} &
	  \multicolumn{1}{l|}{4} &
	  \multicolumn{1}{l|}{17} &
	  \multicolumn{1}{l|}{13} &
	  4.914 &
	  \multicolumn{1}{l|}{4} &
	  \multicolumn{1}{l|}{0} &
	  0 \\ \hline
	\multicolumn{8}{|c|}{\textbf{$\displaystyle{\sum_{\mathcal{G}}}P(\mathcal{G}){\phi^{\mathcal{G}}(\mathcal{S}_{1} \cup \mathcal{S}_{2}^{OPT(\mathcal{G},\mathcal{S}_{1},d,B_{2})})}$}} &
	  12.404 &
	 \multicolumn{2}{l|}{\textbf{$\displaystyle{\sum_{\mathcal{G}}}P(\mathcal{G}){\phi^{\mathcal{G}}(\mathcal{S}_{1} \cup \mathcal{S}_{2}^{OPT(\mathcal{G},\mathcal{S}_{1},d,B_{2})})}$}} &
	  -0.058 \\ \hline
	\end{tabular}%
	}
	\end{table*}

	\begin{proof}
		In Figure \ref{Fig2:Lemma4.1_positive_negative}, we have two example graphs in Figure \ref{Fig2:Lemma4.1_positive_negative}(a) and Figure \ref{Fig2:Lemma4.1_positive_negative}(b) for illustrating the positive and negative value of the objective function, respectively. The parameters, timestep $d = 1$, budget $B_{1} = 3$ and a total budget $B = 4$ are the same for both the graphs.
		
		All the occurrences of generation of both graphs of Figure \ref{Fig2:Lemma4.1_positive_negative} are listed in Table \ref{Table3:PositiveNegativeObj} with their probabilities of generation, $P(\mathcal{G})$. The objective function $f(\mathcal{S}_{1}) = 12.404$ calculated from Figure \ref{Fig2:Lemma4.1_positive_negative} (a) is a positive value i.e. $f(\mathcal{S}_{1}) >$ 0. The objective function $f(\mathcal{S}_{1}) = -0.058$ calculated for Figure \ref{Fig2:Lemma4.1_positive_negative} (b) is a negative value i.e. $f(\mathcal{S}_{1}) <$ 0.  
		Therefore, Lemma \ref{Lem:First} is proved. For a given $G$, $f(\mathcal{S}_{1})$ may be positive or negative.
	\end{proof}
	
	\begin{mylemma}
		$f(\cdot)$ is neither monotonically increasing nor monotonically decreasing.
	\label{Lemma2:Monotonicity}
	\end{mylemma}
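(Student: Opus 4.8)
The plan is to follow the constructive, counterexample-driven style already used for Lemma~\ref{Lem:First}: rather than reasoning about $f$ abstractly, I would exhibit concrete nested seed sets on which $f$ moves up in one instance and down in another. The conceptual reason this must be possible is that $f(\mathcal{S}_{1})$ aggregates a profit $\phi^{\mathcal{G}}=\beta-C$ over live graphs, so enlarging $\mathcal{S}_{1}$ simultaneously raises the cost term $C(\mathcal{S}_{1})$ (a penalty) and can raise the benefit term $\beta$ (a reward). Whichever effect dominates decides the direction of change, and by choosing the marginal node appropriately I can force either outcome.

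To refute monotone increasing behaviour I would take nested seed sets $\mathcal{S}_{1}\subseteq \mathcal{S}_{1}'$ (for instance $\mathcal{S}_{1}=\emptyset$ or a single good seed, and $\mathcal{S}_{1}'=\mathcal{S}_{1}\cup\{u\}$) where the extra node $u$ carries a large selection cost $C(u)$ but contributes almost no additional expected reach --- e.g.\ a low-benefit leaf, or a node whose out-neighbourhood is already activated with high probability by $\mathcal{S}_{1}$. Then in every relevant live graph the incremental cost outweighs the incremental benefit, giving $f(\mathcal{S}_{1}')<f(\mathcal{S}_{1})$.

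To refute monotone decreasing behaviour I would instead make the marginal node a cheap, highly influential hub whose expected benefit across the live graphs comfortably exceeds its cost, so that $f(\mathcal{S}_{1}\cup\{v\})>f(\mathcal{S}_{1})$. Both inequalities can in fact be displayed on a single small graph by supplying one \emph{good} and one \emph{bad} candidate node. For each comparison I would enumerate all $2^{m}$ live graphs $\mathcal{G}$ with their generation probabilities $P(\mathcal{G})$, exactly as in Tables~\ref{Table2:TPPMExampleillustration} and \ref{Table3:PositiveNegativeObj}, evaluate $\phi^{\mathcal{G}}(\mathcal{S}_{1}\cup \mathcal{S}_{2}^{OPT})$ in each, and sum to obtain the two values of $f$.

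The step I expect to be the main obstacle is that $f$ is not simply the profit of $\mathcal{S}_{1}$ plus the marginal profit of the added node: it is the expected profit of $\mathcal{S}_{1}\cup \mathcal{S}_{2}^{OPT(Y,B_{2})}$, and the optimal Phase~II seed set $\mathcal{S}_{2}^{OPT}$ itself shifts when $\mathcal{S}_{1}$, and hence the partial observation $Y$, changes. Consequently the counterexample must be set up and tabulated carefully, re-solving the Phase~II selection under each partial observation for both $\mathcal{S}_{1}$ and $\mathcal{S}_{1}'$, rather than comparing the added node's contribution in isolation. Once the conditional probabilities $P(\frac{\mathcal{G}}{Y})$ and the re-optimised $\mathcal{S}_{2}^{OPT}$ are handled correctly, the two numerical comparisons directly establish that $f(\cdot)$ is neither monotonically increasing nor monotonically decreasing.
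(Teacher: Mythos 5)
Your proposal matches the paper's own argument: the paper likewise proves the lemma by exhibiting a single small example graph (Figure~\ref{Fig3:Non-MonotoneProperty}) with one ``good'' marginal node $u_3$ and one ``bad'' marginal node $u_4$, enumerating all live graphs with their probabilities in Table~\ref{Table4:Non-monotonicity}, and computing $f(\{u_1\})=9.742$, $f(\{u_1,u_3\})=9.96$, and $f(\{u_1,u_4\})=8.8$ to refute both directions of monotonicity. Your attention to re-solving the Phase~II optimal seed set under each partial observation is exactly the bookkeeping the paper's tables carry out, so the approach is essentially identical.
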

	\begin{proof}
		 \begin{figure}[h!]
		 	\centering
		 	\includegraphics[scale=0.55]{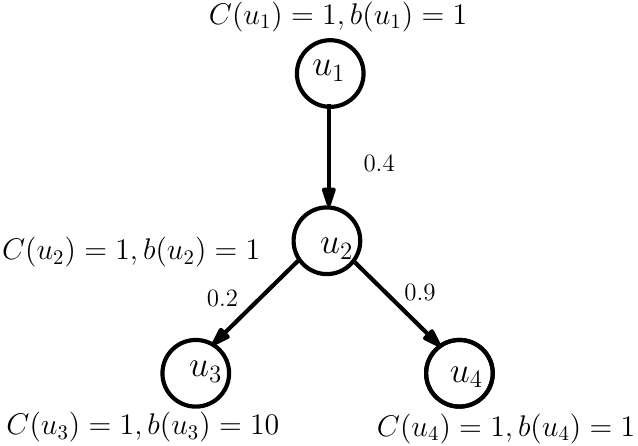}
		 	\caption{Example Graph to illustrate Non-monotone behaviour of $f(\cdot)$}
		 	\label{Fig3:Non-MonotoneProperty}
		 \end{figure}
		Let there be a $\mathcal{S}_{1}$, such that $\mathcal{S}_{1} \subset V(G)$.\\
		$f(\mathcal{S}_{1})$ is monotonically increasing, if
		\begin{equation}
			f(\mathcal{S}_{1}) \le f(\mathcal{S}_{1} \cup \{u\}), u \in V(G)\setminus S_{1}
			\label{eq:mon_increasing}
		\end{equation}
		
		$f(\mathcal{S}_{1})$ is monotonically decreasing, if
		\begin{equation}
			f(\mathcal{S}_{1}) \geq f(\mathcal{S}_{1} \cup \{u\}), u \in V(G)\setminus \mathcal{S}_{1}
			\label{eq:mon_decreasing}
		\end{equation}
	
	\begin{table*}
	\centering
	\caption{Calculating $f(\mathcal{S}_{1})$ to show its non-monotonicity}
	\label{Table4:Non-monotonicity}
	\resizebox{\columnwidth}{!}{%
	\begin{tabular}{|lllll|lll|lll|}
	\hline
	\multicolumn{2}{|l|}{} &
	\multicolumn{3}{|c|}{$f(\mathcal{S}_{1})$} &
	  \multicolumn{3}{c|}{$f(\mathcal{S}_{1}) < f(\mathcal{S}_{1} \cup \{u_3\})$} &
	  \multicolumn{3}{c|}{$f(\mathcal{S}_{1}) > f(\mathcal{S}_{1} \cup \{u_4\})$} \\ \hline
	\multicolumn{1}{|l|}{$\mathcal{G}$} &
	  \multicolumn{1}{l|}{$P(\mathcal{G})$} &
	  \multicolumn{1}{l|}{$\mathcal{S}_{1}$} &
	  \multicolumn{1}{l|}{$C(\mathcal{S}_{1})$} &
	  $f(\mathcal{S}_{1})$ &
	  \multicolumn{1}{l|}{$\mathcal{S}_{1} \cup \{u_3\}$} &
	  \multicolumn{1}{l|}{$C(\mathcal{S}_{1} \cup \mathcal{S}_{2})$} &
	  $f(\mathcal{S}_{1} \cup \{u_3\})$ &
	  \multicolumn{1}{l|}{$\mathcal{S}_{1} \cup \{u_4\}$} &
	  \multicolumn{1}{l|}{$C(\mathcal{S}_{1} \cup \mathcal{S}_{2})$} &
	  $f(\mathcal{S}_{1} \cup \{u_4\})$ \\ \hline
	\multicolumn{1}{|l|}{\{\}} &
	  \multicolumn{1}{l|}{0.012} &
	  \multicolumn{1}{l|}{$u_1$} &
	  \multicolumn{1}{l|}{2} &
	  0 &
	  \multicolumn{1}{l|}{$u_1$, $u_3$} &
	  \multicolumn{1}{l|}{3} &
	  0.108 &
	  \multicolumn{1}{l|}{$u_1$, $u_4$} &
	  \multicolumn{1}{l|}{3} &
	  0 \\ \hline
	\multicolumn{1}{|l|}{$u_2u_4$} &
	  \multicolumn{1}{l|}{0.108} &
	  \multicolumn{1}{l|}{$u_1$} &
	  \multicolumn{1}{l|}{2} &
	  0.108 &
	  \multicolumn{1}{l|}{$u_1$, $u_3$} &
	  \multicolumn{1}{l|}{3} &
	  1.08 &
	  \multicolumn{1}{l|}{$u_1$, $u_4$} &
	  \multicolumn{1}{l|}{3} &
	  0 \\ \hline
	\multicolumn{1}{|l|}{$u_2u_3$} &
	  \multicolumn{1}{l|}{0.028} &
	  \multicolumn{1}{l|}{$u_1$} &
	  \multicolumn{1}{l|}{2} &
	  0.28 &
	  \multicolumn{1}{l|}{$u_1$, $u_3$} &
	  \multicolumn{1}{l|}{3} &
	  0.252 &
	  \multicolumn{1}{l|}{$u_1$, $u_4$} &
	  \multicolumn{1}{l|}{3} &
	  0.28 \\ \hline
	\multicolumn{1}{|l|}{$u_2u_3, u_2u_4$} &
	  \multicolumn{1}{l|}{0.252} &
	  \multicolumn{1}{l|}{$u_1$} &
	  \multicolumn{1}{l|}{2} &
	  2.772 &
	  \multicolumn{1}{l|}{$u_1$, $u_3$} &
	  \multicolumn{1}{l|}{3} &
	  2.52 &
	  \multicolumn{1}{l|}{$u_1$, $u_4$} &
	  \multicolumn{1}{l|}{3} &
	  2.52 \\ \hline
	\multicolumn{1}{|l|}{$u_1u_2$} &
	  \multicolumn{1}{l|}{0.018} &
	  \multicolumn{1}{l|}{$u_1$} &
	  \multicolumn{1}{l|}{2} &
	  0.18 &
	  \multicolumn{1}{l|}{$u_1$, $u_3$} &
	  \multicolumn{1}{l|}{3} &
	  0.18 &
	  \multicolumn{1}{l|}{$u_1$, $u_4$} &
	  \multicolumn{1}{l|}{3} &
	  0.18 \\ \hline
	\multicolumn{1}{|l|}{$u_1u_2, u_2u_4$} &
	  \multicolumn{1}{l|}{0.162} &
	  \multicolumn{1}{l|}{$u_1$} &
	  \multicolumn{1}{l|}{2} &
	  1.782 &
	  \multicolumn{1}{l|}{$u_1$, $u_3$} &
	  \multicolumn{1}{l|}{3} &
	  1.62 &
	  \multicolumn{1}{l|}{$u_1$, $u_4$} &
	  \multicolumn{1}{l|}{3} &
	  1.62 \\ \hline
	\multicolumn{1}{|l|}{$u_1u_2, u_2u_3$} &
	  \multicolumn{1}{l|}{0.042} &
	  \multicolumn{1}{l|}{$u_1$} &
	  \multicolumn{1}{l|}{2} &
	  0.462 &
	  \multicolumn{1}{l|}{$u_1$, $u_3$} &
	  \multicolumn{1}{l|}{3} &
	  0.42 &
	  \multicolumn{1}{l|}{$u_1$, $u_4$} &
	  \multicolumn{1}{l|}{3} &
	  0.42 \\ \hline
	\multicolumn{1}{|l|}{$u_1u_2, u_2u_3, u_2u_4$} &
	  \multicolumn{1}{l|}{0.378} &
	  \multicolumn{1}{l|}{$u_1$} &
	  \multicolumn{1}{l|}{2} &
	  4.158 &
	  \multicolumn{1}{l|}{$u_1$, $u_3$} &
	  \multicolumn{1}{l|}{3} &
	  3.78 &  \multicolumn{1}{l|}{$u_1$, $u_4$} &
	  \multicolumn{1}{l|}{3} &
	  3.78 \\ \hline
	\multicolumn{4}{|l|}{\textbf{$\displaystyle{\sum_{\mathcal{G}}}P(\mathcal{G}){\phi^{\mathcal{G}}(\mathcal{S}_{1} \cup \mathcal{S}_{2}^{OPT(\mathcal{G},\mathcal{S}_{1},d,B_{2})})}$}} &
	  9.742 &
	  \multicolumn{2}{l|}{\textbf{$\displaystyle{\sum_{\mathcal{G}}}P(\mathcal{G}){\phi^{\mathcal{G}}(\mathcal{S}_{1} \cup \mathcal{S}_{2}^{OPT(\mathcal{G},\mathcal{S}_{1},d,B_{2})})}$}} &
	  9.96 &
	  \multicolumn{2}{l|}{\textbf{$\displaystyle{\sum_{\mathcal{G}}}P(\mathcal{G}){\phi^{\mathcal{G}}(\mathcal{S}_{1} \cup \mathcal{S}_{2}^{OPT(\mathcal{G},\mathcal{S}_{1},d,B_{2})})}$}} &
	  8.8 \\ \hline
	\end{tabular}%
	}
	\end{table*}
	
	Let us consider another example graph $G$ shown in Figure \ref{Fig3:Non-MonotoneProperty} to calculate $f(\mathcal{S}_{1})$ where $\mathcal{S}_{1} = \{u_1\}$ at timestep $d = 1$ unit, budget $B_{1} = 2$ units and total budget $B = 3$ units. Now, we calculate the value of our objective function as $9.742$ in Table \ref{Table4:Non-monotonicity}.
	Similarly, the value of objective function calculated for $f(\mathcal{S}_{1}) \cup \{u_4\})$ while keeping remaining parameter settings same, is $8.8$ in Table \ref{Table4:Non-monotonicity}. 
	
	Since, $f(\mathcal{S}_{1}) > f(\mathcal{S}_{1} \cup \{u_4\})$ and not following inequality \ref{eq:mon_increasing}. Therefore, it proves that our objective function $f(\mathcal{S}_{1})$ is not a monotonically increasing function.
	
	Now, we have calculated $f(\mathcal{S}_{1} \cup \{u_3\}) = 9.96$ in Table \ref{Table4:Non-monotonicity} with same values of $d$, $B_{1}$ and $B$. Since, $f(\mathcal{S}_{1}) < f(\mathcal{S}_{1} \cup \{u_3\})$ and following inequality \ref{eq:mon_decreasing}. Hence, it proves that our objective function $f(\mathcal{S}_{1})$ is not a monotonically decreasing function.
	
	Therefore, Lemma \ref{Lemma2:Monotonicity} is proved. For a given $G$, $f(\mathcal{S}_{1})$ is neither monotonically increasing nor monotonically decreasing.
	        
	\end{proof}
	
	\begin{mylemma}
		$f(\cdot)$ is neither submodular nor supermodular.
	            \label{Lemma:Submodular}
	\end{mylemma}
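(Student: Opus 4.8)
The plan is to prove both failures by explicit counterexamples, exactly in the spirit of the proofs of Lemmas~\ref{Lem:First} and~\ref{Lemma2:Monotonicity}. Recall that a set function $f$ is \emph{submodular} if for every $\mathcal{A} \subseteq \mathcal{B} \subseteq V(G)$ and every $u \in V(G) \setminus \mathcal{B}$ we have $f(\mathcal{A} \cup \{u\}) - f(\mathcal{A}) \geq f(\mathcal{B} \cup \{u\}) - f(\mathcal{B})$ (the diminishing-returns inequality), and \emph{supermodular} if the reverse inequality holds. Hence, to establish that $f(\cdot)$ is neither, it suffices to exhibit one configuration of nested sets and an added element for which the marginal gain strictly \emph{increases} when we enlarge the base set (ruling out submodularity), together with a second configuration for which the marginal gain strictly \emph{decreases} (ruling out supermodularity).

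First I would fix an example graph together with a timestep $d$ and a budget split, and pick a chain $\mathcal{A} \subset \mathcal{B}$ in $V(G)$ and a candidate node $u \notin \mathcal{B}$. As in the earlier proofs, each value $f(\cdot)$ is obtained by enumerating all $2^{m}$ live graphs $\mathcal{G}$, computing their generation probabilities $P(\mathcal{G})$ via Equation~\ref{Eq:Probability}, and for each $\mathcal{G}$ determining the partial observation $Y$ induced by the Phase~I seed set through timestep $d$, the resulting optimal Phase~II seed set $\mathcal{S}_{2}^{OPT(Y,B_{2})}$, and finally the realized profit $\phi^{\mathcal{G}}(\cdot)$; summing $P(\mathcal{G})\,\phi^{\mathcal{G}}(\cdot)$ over all live graphs yields $f(\cdot)$ as in Equation~\ref{Eq: Obj_eq}. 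I would then tabulate the four quantities $f(\mathcal{A})$, $f(\mathcal{A}\cup\{u\})$, $f(\mathcal{B})$, $f(\mathcal{B}\cup\{u\})$ and read off the two marginal gains $f(\mathcal{A}\cup\{u\})-f(\mathcal{A})$ and $f(\mathcal{B}\cup\{u\})-f(\mathcal{B})$.

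To realize the violation of submodularity I would engineer the graph so that $u$ is largely redundant when added to the small base set $\mathcal{A}$ (e.g.\ its only profitable reach is already covered, or it is preempted by the Phase~II optimizer), but becomes valuable when added to $\mathcal{B}$ because the extra Phase~I seeds in $\mathcal{B}\setminus\mathcal{A}$ open up a region that only the combination activates; this makes the larger-set marginal gain strictly bigger, i.e.\ a \emph{complementary} interaction. For the violation of supermodularity I would use a second configuration (or different nodes on the same graph) exhibiting the more familiar \emph{substitutive} behaviour, where $u$ and the seeds of $\mathcal{B}\setminus\mathcal{A}$ compete for the same downstream nodes or benefit, so that the marginal gain of $u$ shrinks on the larger base set. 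Both configurations would then be collected in a table analogous to Tables~\ref{Table3:PositiveNegativeObj} and~\ref{Table4:Non-monotonicity}.

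The main obstacle is that, unlike the ordinary IC influence function, each evaluation of $f$ hides an inner optimization: for every partial observation $Y$ one must recompute $\mathcal{S}_{2}^{OPT(Y,B_{2})}$, and changing the Phase~I set alters which partial observations arise and hence which second-phase optima are selected. This coupling makes the marginal gains non-local and sensitive, so the delicate part is choosing cost/benefit values and edge probabilities for which the two required inequalities provably point in opposite directions rather than cancelling. Once a witnessing instance is fixed, verifying it reduces to a finite (if tedious) enumeration over live graphs, exactly as in the preceding lemmas, which completes the proof that $f(\cdot)$ is neither submodular nor supermodular.
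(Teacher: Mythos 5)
Your plan — establish both failures by explicit counterexamples, evaluating $f(\cdot)$ through exhaustive enumeration of live graphs with the inner Phase~II optimization recomputed for each partial observation — is exactly the route the paper takes: it reuses the graph of Figure~\ref{Fig3:Non-MonotoneProperty} with $d=1$, $B_1=3$, $B=4$, taking $\mathcal{S}=\{u_1\}$, $\mathcal{T}=\{u_1,u_2\}$, $i=u_4$ (marginal gains $8.8-9.742$ versus $9.7-10.438$) to refute submodularity, and $\mathcal{S}=\emptyset$, $\mathcal{T}=\{u_1\}$, $i=u_3$ (gains $9.9$ versus $9.96-9.742$) to refute supermodularity, tabulated in Tables~\ref{Table4:Non-monotonicity} and~\ref{Table5:Submodularity}. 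The only substantive gap is that for a counterexample proof the witnessing instance \emph{is} the proof, and your write-up stops at describing how one would be engineered rather than exhibiting one; once a concrete graph with its costs, benefits, and probabilities is fixed and the four values of $f$ are computed, your argument coincides with the paper's.
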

	\begin{proof}
		Let two sets $\mathcal{S}$ and $\mathcal{T}$, such that $\forall \mathcal{S} \subset \mathcal{T} \subset V(G)$. \\
		$f(\cdot)$ is submodular if 
		\begin{equation}
			f(\mathcal{S} \cup \{i\}) - f(\mathcal{S}) \geq f(\mathcal{T} \cup \{i\}) - f(\mathcal{T})
			\label{eq:submodular}
		\end{equation}
		$f(\cdot)$ is supermodular if 
		\begin{equation}
			f(\mathcal{S} \cup \{i\}) - f(\mathcal{S}) \leq f(\mathcal{T} \cup \{i\}) - f(\mathcal{T})
			\label{eq:nonsubmodular}
		\end{equation}

	\begin{table*}
	\centering
	\caption{Calculating $f(\mathcal{S}_{1})$ to show it's neither submodular nor supermodular}
	\label{Table5:Submodularity}
	\resizebox{\columnwidth}{!}{
	\begin{tabular}{|lllll|lll|lll|}
	\hline
	\multicolumn{2}{|l|}{} &
	  \multicolumn{3}{c|}{$\mathcal{S}_{1} = \{u_3\}$} &
	  \multicolumn{3}{c|}{$\mathcal{S}_{1} = \{u_1, u_2\}$} &
	  \multicolumn{3}{c|}{$\mathcal{S}_{1} = \{u_1, u_2, u_4\}$} \\ \hline
	\multicolumn{1}{|l|}{$\mathcal{G}$} &
	  \multicolumn{1}{l|}{$P(\mathcal{G})$} &
	  \multicolumn{1}{l|}{$\mathcal{S}_{1}$} &
	  \multicolumn{1}{l|}{$C(\mathcal{S}_{1} \cup \mathcal{S}_{2})$} &
	  $f(\mathcal{S}_{1})$ &
	  \multicolumn{1}{l|}{$\mathcal{S}_{1}$} &
	  \multicolumn{1}{l|}{$C(\mathcal{S}_{1} \cup \mathcal{S}_{2})$} &
	  $f(\mathcal{S}_{1})$ &
	  \multicolumn{1}{l|}{$\mathcal{S}_{1}$} &
	  \multicolumn{1}{l|}{$C(\mathcal{S}_{1} \cup \mathcal{S}_{2})$} &
	  $f(\mathcal{S}_{1})$ \\ \hline
	\multicolumn{1}{|l|}{\{\}} &
	  \multicolumn{1}{l|}{0.012} &
	  \multicolumn{1}{l|}{$u_3$} &
	  \multicolumn{1}{l|}{2} &
	  0.108 &
	  \multicolumn{1}{l|}{$u_1$, $u_2$} &
	  \multicolumn{1}{l|}{3} &
	  0.108 &
	  \multicolumn{1}{l|}{$u_1$, $u_2$, $u_4$} &
	  \multicolumn{1}{l|}{4} &
	  0.12 \\ \hline
	\multicolumn{1}{|l|}{$u_2u_4$} &
	  \multicolumn{1}{l|}{0.108} &
	  \multicolumn{1}{l|}{$u_3$} &
	  \multicolumn{1}{l|}{2} &
	  1.08 &
	  \multicolumn{1}{l|}{$u_1$, $u_2$} &
	  \multicolumn{1}{l|}{3} &
	  1.08 &
	  \multicolumn{1}{l|}{$u_1$, $u_2$, $u_4$} &
	  \multicolumn{1}{l|}{4} &
	  1.08 \\ \hline
	\multicolumn{1}{|l|}{$u_2u_3$} &
	  \multicolumn{1}{l|}{0.028} &
	  \multicolumn{1}{l|}{$u_3$} &
	  \multicolumn{1}{l|}{2} &
	  0.252 &
	  \multicolumn{1}{l|}{$u_1$, $u_2$} &
	  \multicolumn{1}{l|}{3} &
	  0.28 &
	  \multicolumn{1}{l|}{$u_1$, $u_2$, $u_4$} &
	  \multicolumn{1}{l|}{3} &
	  0.28 \\ \hline
	\multicolumn{1}{|l|}{$u_2u_3$, $u_2u_4$} &
	  \multicolumn{1}{l|}{0.252} &
	  \multicolumn{1}{l|}{$u_3$} &
	  \multicolumn{1}{l|}{2} &
	  2.52 &
	  \multicolumn{1}{l|}{$u_1$, $u_2$} &
	  \multicolumn{1}{l|}{2} &
	  2.772 &
	  \multicolumn{1}{l|}{$u_1$, $u_2$, $u_4$} &
	  \multicolumn{1}{l|}{3} &
	  2.52 \\ \hline
	\multicolumn{1}{|l|}{$u_1u_2$} &
	  \multicolumn{1}{l|}{0.018} &
	  \multicolumn{1}{l|}{$u_3$} &
	  \multicolumn{1}{l|}{2} &
	  0.162 &
	  \multicolumn{1}{l|}{$u_1$, $u_2$} &
	  \multicolumn{1}{l|}{3} &
	  0 & 
	  \multicolumn{1}{l|}{$u_1$, $u_2$, $u_4$} &
	  \multicolumn{1}{l|}{4} &
	  0.162 \\ \hline
	\multicolumn{1}{|l|}{$u_1u_2$, $u_2u_4$} &
	  \multicolumn{1}{l|}{0.162} &
	  \multicolumn{1}{l|}{$u_3$} &
	  \multicolumn{1}{l|}{2} &
	  1.62 &
	  \multicolumn{1}{l|}{$u_1$, $u_2$} &
	  \multicolumn{1}{l|}{3} &
	  1.62 &
	  \multicolumn{1}{l|}{$u_1$, $u_2$, $u_4$} &
	  \multicolumn{1}{l|}{4} &
	  1.458 \\ \hline
	\multicolumn{1}{|l|}{$u_1u_2$, $u_2u_3$} &
	  \multicolumn{1}{l|}{0.042} &
	  \multicolumn{1}{l|}{$u_3$} &
	  \multicolumn{1}{l|}{2} &
	  0.378 &
	  \multicolumn{1}{l|}{$u_1$, $u_2$} &
	  \multicolumn{1}{l|}{3} &
	  0.42 &
	  \multicolumn{1}{l|}{$u_1$, $u_2$, $u_4$} &
	  \multicolumn{1}{l|}{3} &
	  0.42 \\ \hline
	\multicolumn{1}{|l|}{$u_1u_2$, $u_2u_3$, $u_2u_4$} &
	  \multicolumn{1}{l|}{0.378} &
	  \multicolumn{1}{l|}{$u_3$} &
	  \multicolumn{1}{l|}{2} &
	  3.78 &
	  \multicolumn{1}{l|}{$u_1$, $u_2$} &
	  \multicolumn{1}{l|}{2} &
	  4.158 &
	  \multicolumn{1}{l|}{$u_1$, $u_2$, $u_4$} &
	  \multicolumn{1}{l|}{3} &
	  3.78 \\ \hline
	\multicolumn{4}{|l|}{\textbf{$\displaystyle{\sum_{\mathcal{G}}}P(\mathcal{G}){\phi^{\mathcal{G}}(\mathcal{S}_{1} \cup \mathcal{S}_{2}^{OPT(\mathcal{G},\mathcal{S}_{1},d,B_{2})})}$}} &
	  9.9 &
	  \multicolumn{2}{l|}{\textbf{$\displaystyle{\sum_{\mathcal{G}}}P(\mathcal{G}){\phi^{\mathcal{G}}(\mathcal{S}_{1} \cup \mathcal{S}_{2}^{OPT(\mathcal{G},\mathcal{S}_{1},d,B_{2})})}$}} &
	  10.438 &
	  \multicolumn{2}{l|}{\textbf{$\displaystyle{\sum_{\mathcal{G}}}P(\mathcal{G}){\phi^{\mathcal{G}}(\mathcal{S}_{1} \cup \mathcal{S}_{2}^{OPT(\mathcal{G},\mathcal{S}_{1},d,B_{2})})}$}} &
	  9.7 \\ \hline
	\end{tabular}%
	}
	\end{table*}
	
	Considering the example graph in Figure \ref{Fig3:Non-MonotoneProperty}, with its cost and benefit values, however, parameters are set as timestep $d = 1$, budget for phase one $B_{1} = 3$, and total budget $B = 4$.

	Let $\mathcal{S} = \{u_1\}$, $i = u_4$, and $f(\mathcal{S} \cup \{i\}) = f(\{u_1, u_4\})$.
	So, $f(\mathcal{S}) = 9.742$ and $f(\mathcal{S} \cup \{i\}) = 8.8$, both are calculated in Table \ref{Table4:Non-monotonicity}.
	
	Now, let $\mathcal{T} = \{u_1, u_2\}$, $i = u_4$, and $f(\mathcal{T} \cup \{i\}) = f(\{u_1, u_2\} \cup \{u_4\})$.
	So, $f(\mathcal{T}) = 10.438$ and $f(\mathcal{T} \cup \{i\}) = 9.7$, both are calculated in Table \ref{Table5:Submodularity}.	These values are not satisfying the inequality \ref{eq:submodular}. Hence, it proves that $f(\cdot)$ is not submodular.	
		
	Likewise, let $\mathcal{S} = \{\}$, $i = u_3$, and $f(\mathcal{S} \cup \{i\}) = f(\{u_3\})$. 
	So, $f(\mathcal{S}) = 0$ and $f(\mathcal{S} \cup \{i\}) = 9.9$ calculated in Table \ref{Table5:Submodularity}.
	
	Now, let $\mathcal{T} = \{u_1\}$, $i = u_3$, and $f(\mathcal{T} \cup \{i\}) = f(\{u_1 \cup u_3\})$. 
	So, $f(\mathcal{T}) = 9.742$ and $f(\mathcal{T} \cup \{i\}) = 9.96$ calculated in Table \ref{Table4:Non-monotonicity} and Table \ref{Table5:Submodularity}, respectively. These values are not satisfying the inequality \ref{eq:nonsubmodular}. Hence, it proves that $f(\cdot)$ is not supermodular.
		
	Therefore, Lemma \ref{Lemma:Submodular} is proved. For a given $G$, $f(\cdot)$ is neither submodular nor supermodular.
	\end{proof}
	
	\begin{mylemma}
		$f(\cdot)$ is subadditive.
		\label{lem: subadditive}
	\end{mylemma}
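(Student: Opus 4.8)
The plan is to prove the inequality $f(\mathcal{S} \cup \mathcal{T}) \le f(\mathcal{S}) + f(\mathcal{T})$ for all $\mathcal{S}, \mathcal{T} \subseteq V(G)$ by pushing the argument inside the expectation over live graphs. Since, by Equation \ref{Eq: Obj_eq}, $f(\mathcal{S}_{1}) = \sum_{\mathcal{G}} P(\mathcal{G}) \, \phi^{\mathcal{G}}(\mathcal{S}_{1} \cup \mathcal{S}_{2}^{OPT(\mathcal{G},\mathcal{S}_{1},d,B_{2})})$ is a nonnegative combination of the per-live-graph profits with weights $P(\mathcal{G}) \ge 0$, it suffices to establish a corresponding deterministic inequality on each fixed live graph $\mathcal{G}$ and then sum against $P(\mathcal{G})$. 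So the first step is to reduce everything to a single deterministic live graph.

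On a fixed $\mathcal{G}$, I would isolate the benefit part. The benefit of a seed set is $\sum_{v \in I(\mathcal{S}) \cap V(\mathcal{G})} b(v)$, where $I(\mathcal{S})$ is simply the set of vertices reachable from the seeds in the deterministic graph $\mathcal{G}$. Reachability satisfies $I(\mathcal{S} \cup \mathcal{T}) = I(\mathcal{S}) \cup I(\mathcal{T})$, and because every benefit value $b(v)$ is nonnegative, the benefit collected over a union of reachable sets is at most the sum of the two individual benefits. This yields subadditivity of the benefit $\beta^{\mathcal{G}}$ on each live graph, and hence of $\beta$ after taking expectation. Since the cost function $C$ is modular with $C(\emptyset)=0$, we have $C(\mathcal{S} \cup \mathcal{T}) \le C(\mathcal{S}) + C(\mathcal{T})$, and combining the two pieces gives subadditivity of the single-phase profit $\phi^{\mathcal{G}} = \beta^{\mathcal{G}} - C$ (cleanly so when $\mathcal{S}$ and $\mathcal{T}$ are disjoint, so that the overlap cost $C(\mathcal{S} \cap \mathcal{T})$ vanishes).

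The main obstacle is the second-phase optimizer $\mathcal{S}_{2}^{OPT(\mathcal{G},\mathcal{S}_{1},d,B_{2})}$, which makes $f$ more than a plain coverage-minus-cost function: the Phase II seed set depends on the Phase I seeds and on the induced partial observation $Y$, and that observation differs between the runs seeded by $\mathcal{S}$, by $\mathcal{T}$, and by $\mathcal{S}\cup\mathcal{T}$. To control $f(\mathcal{S} \cup \mathcal{T})$ I would fix the optimal Phase II response $W$ to the merged first-phase seeds and bound its reachable benefit using $I(\mathcal{S} \cup \mathcal{T} \cup W) = I(\mathcal{S}) \cup I(\mathcal{T}) \cup I(W) \subseteq I(\mathcal{S} \cup W) \cup I(\mathcal{T} \cup W)$, so that the extra benefit unlocked by the union of seeds cannot exceed the sum of the two individual contributions; the optimality of $\mathcal{S}_{2}^{OPT}$ for the separate problems, together with the fact that $W$ is budget-feasible ($C(W) \le B_{2}$) for each of them, then lets me pass from these fixed-response bounds to the separate optima appearing in $f(\mathcal{S})$ and $f(\mathcal{T})$. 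This reachability decomposition, combined with nonnegativity of $b$ and modularity of $C$, is the crux; once it is in place, summing the deterministic inequalities over $\mathcal{G}$ with weights $P(\mathcal{G})$ recovers subadditivity of $f$.
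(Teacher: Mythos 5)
Your proposal follows essentially the same route as the paper's proof: take the optimal Phase~II response $W$ to the merged first-phase seeds as a common budget-feasible candidate for the two separate problems, invoke optimality of the separate Phase~II optima, and apply per-live-graph subadditivity of the profit before summing against $P(\mathcal{G})$. The only substantive difference is that you spell out why $\phi^{\mathcal{G}}$ is subadditive (reachability as a coverage function, nonnegative benefits, modular cost) and you flag the overlap-cost caveat, both of which the paper leaves implicit in passing from inequality \ref{eq:Subadditive2} to \ref{eq:Subadditive3}.
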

	\begin{proof}
		Let two sets $M$ and $N$, such that $\forall M, N \subseteq V(G)$.\\ $f(\cdot)$ is subadditive if 
		\begin{equation}
			f(M \cup N) \leq f(M) + f(N)
			\label{eq: subadditive}
		\end{equation}
		Let $V_{1} = M_{1} \cup N_{1}$ and an optimal set $V_{2}^{O(X, V_{1}, d, k_{2})}$ of $k_{2}$ nodes. We can start Phase I of the diffusion process by seed set $V_{1}$.\\
		{\scriptsize  
	 \begin{align}
	  f(M_{1}) + f(N_{1}) &=  \displaystyle{\sum_{\mathcal{G}}}P(\mathcal{G})\big\{{\phi^{\mathcal{G}}(M_{1} \cup M_{2}^{OPT(X,M_{1},d,B_{2})})} + {\phi^{\mathcal{G}}(N_{1} \cup N_{2}^{OPT(X,N_{1},d,B_{2})})}\big\} \label{eq:Subadditive1}\\
		& \geq \displaystyle{\sum_{\mathcal{G}}}P(\mathcal{G})\big\{{\phi^{\mathcal{G}}(M_{1} \cup V_{2}^{OPT(X,V_{1},d,B_{2})})} + {\phi^{\mathcal{G}}(N_{1} \cup V_{2}^{OPT(X,V_{1},d,B_{2})})}\big\} \label{eq:Subadditive2}\\
		& \geq \displaystyle{\sum_{\mathcal{G}}}P(\mathcal{G}){\phi^{\mathcal{G}}(M_{1} \cup N_{1} \cup V_{2}^{OPT(X,V_{1},d,B_{2})})} \label{eq:Subadditive3}\\
		& = \displaystyle{\sum_{\mathcal{G}}}P(\mathcal{G}){\phi^{\mathcal{G}}(V_{1} \cup V_{2}^{OPT(X,V_{1},d,B_{2})})} \label{eq:Subadditive4}\\
		& = f(V_{1}) \nonumber \\
	 & = f(M_{1} \cup N_{1}) \label{eq:Subadditive5}
	 \end{align}
	}	
	The inequality \ref{eq:Subadditive2} is from the optimality of the set of Phase II. 
	The inequality \ref{eq:Subadditive3} is from the subadditivity property of the set function. 
	\end{proof}

	\section{Proposed Solution Approaches} \label{Sec:Solution}
	In this section, we describe our proposed solution approaches for the problem. First, we introduce the notion of Marginal Profit Gain in Definition \ref{Def:Marginal_Profit_Gain} which has been used subsequently.  
	
	\begin{mydef}[Marginal Profit Gain]\label{Def:Marginal_Profit_Gain}
		Given a social network $G(V, E, \mathcal{P})$, a seed set $\mathcal{S}$, and a node $u \in V(G) \setminus \mathcal{S}$ we denote the Marginal Profit Gain for the node $u$ with respect to the seed set $\mathcal{S}$ as $\phi_u(\mathcal{S})$ and it is defined as the difference of profit earned when $u$ is added to $\mathcal{S}$ and when $u$ is not in $\mathcal{S}$. As shown in Lemma \ref{Lemma2:Monotonicity} that the profit function $\phi()$ may be non-monotone as well. As we are considering the `gain', for any seed set $\mathcal{S}$ and node $u \in V(G) \setminus \mathcal{S}$, $\phi_u(\mathcal{S})$ is only defined only when $\phi_u(\mathcal{S}) > \phi(\mathcal{S})$. Mathematically, this can be defined using Equation \ref{Eq:Marginal_Profit_Gain}.
		
		\begin{equation} \label{Eq:Marginal_Profit_Gain}
			\phi_u(\mathcal{S}) = \phi(\mathcal{S} \cup \{u\}) - \phi(\mathcal{S}) \quad such \; that \; \phi_u(\mathcal{S}) > \phi(\mathcal{S})
		\end{equation}
	\end{mydef}
	
	Next, we proceed to describe our first solution methodology, which is  the `Simple Greedy Approach'. 
	
	\subsection{Simple Greedy Approach}
	The key intuition of the simple greedy approach is to build the seed set for both phases in an iterative greedy manner. In every iteration, we pick the node that causes maximum marginal profit gain to its cost ratio and include it in the seed set. This process stops when no more node selection is possible. Algorithm \ref{Algo: Algorithm1_Simple Greedy} shows the pseudocode of this solution methodology. Next, we describe the working principle of this algorithm in detail.
	
	\paragraph{Description of Algorithm \ref{Algo: Algorithm1_Simple Greedy}}	This algorithm takes inputs as the social network $G$, the budget for both the phases $B_1$ and $B_2$, and the duration of first phase $d$. This algorithm outputs the seed set for both the phases and the seed set as a whole. Accordingly, we initialize three sets $\mathcal{S}$, $\mathcal{S}_{1}$, and $\mathcal{S}_{2}$ in Line Number $1$, $3$ and $19$, respectively. Now, we begin the first phase. Until the budget $B_{1}$ is exhausted, for every non-seed node $u$, we calculate the marginal profit gain to its cost ratio, and the node results in the maximum value of the quantity chosen. If its marginal profit gain is strictly positive, then it is included in the seed set of the first phase else; we just execute \texttt{break} statement. It may so happen that the allocated budget for the first phase has not been totally exhausted. If so, we calculate the remaining budget after the first phase from Line $13$ to $15$ to utilize it further in the Second Phase. The seed set $\mathcal{S}_{1}$ obtained from line $16$ completes the First Phase. Now, we conduct a diffusion process based on the IC Model starting from the seed set $\mathcal{S}_{1}$ till $d$-th timestep. Thus, at the end of the first phase, we have $A_{Y}$ and $R_{Y}$ as the influence of $\mathcal{S}_{1}$ at timestep $d$. Now, we begin our second phase with the updated budget as shown in Line $20$. The process of seed set selection is quite similar to the first phase, with just one difference. During the first phase, we deal with the entire social network, and hence, while computing the marginal profit gain, we consider the whole network. However, for the second phase, we deal with the network obtained by deleting the already activated nodes from the original network. Accordingly, during the second phase, while computing the marginal profit gain, we consider the remaining network.
	
	\paragraph{Complexity Analysis of Algorithm \ref{Algo: Algorithm1_Simple Greedy}}	Now, we analyze Algorithm \ref{Algo: Algorithm1_Simple Greedy} to understand its time and space requirement. As the statements mentioned in Line $1$ and $3$ are initialization statements, hence time requirement to execute them is $\mathcal{O}(1)$. Now, it is an important question how many times the \texttt{while} loop from Line $4$ to $12$ will be executed? One point to observe is that in each iteration, we select at most one node. Assume that the minimum cost of a node in the network is $C_{min}$, i.e., $C_{min}=\underset{u \in V(G)}{min} \ C(u)$. Now, it is easy to observe that with budget $B_{1}$, we can select at most $\mathcal{O}(\frac{B_{1}}{C_{min}})$ many nodes. So the \texttt{while} loop from Line $4$ to $12$ will execute at most $\mathcal{O}(\frac{B_{1}}{C_{min}})$ times. Now, the first step inside the loop is the marginal profit gain computation. It is easy to observe that in every iteration, the number of nodes for which the marginal profit gain needs to be computed is of $\mathcal{O}(n)$. Now, for a given seed node computing, the marginal gain is equivalent to traversing the graph, and this takes $\mathcal{O}(m+n)$ time. In the worst case, the size of $\mathcal{S}_{1}$ can be of $\mathcal{O}(n)$. Hence, for one marginal gain computation time requirement is $\mathcal{O}(n(m+n))$. As there are $\mathcal{O}(n)$ many marginal profit gain computations, hence the time requirement for this purpose is $\mathcal{O}(n^{2}\cdot(m+n))$. Now, choosing the node that causes the maximum marginal profit gain takes $\mathcal{O}(n)$ time. So, the time requirement to execute the Line $5$ is of $\mathcal{O}(n^{2} \cdot (m+n) + n)$. Now, it is easy to observe that the computations performed from Line $6$ to $12$ will take $\mathcal{O}(1)$ time. So, the requirement for the first phase is of $\mathcal{O}(1+ \frac{B_{1}}{C_{min}} \cdot (n^{2} \cdot (m+n)+n+1)) \approx \mathcal{O}(\frac{B_{1}}{C_{min}} \cdot n^{2} \cdot (m+n))$. Between the first and second phase, there are two steps. The first one is to conduct the diffusion process till timestep $d$ with the seed set of the first phase. In the worst case, this step may take $\mathcal{O}(m+n)$ time. The second step is to delete the already activated nodes from the network. In the worst case, the number of already activated nodes can be of $\mathcal{O}(n)$, and hence, the number of edges incident with these vertices can be of $\mathcal{O}(n^{2})$. Assuming the graph is given in the form of \emph{adjacency matrix}, deleting these nodes from the graph requires $\mathcal{O}(n^{2})$ time.  Now, for the second phase, the analysis will be the same as the first phase, with two exceptions. The first one is the minimum cost of the nodes. As in the second phase, we are dealing with the graph obtained after deleting the already activated nodes from the first phase, and it may so happen that the cost of all the nodes in the reduced graph is more than $C_{min}$. Assume that the minimum cost among all the nodes in the reduced graph is $C_{min}^{'}$, i.e., $C_{min}^{'}=\underset{u \in V(G) \setminus A^{Y}}{min} \ C(u)$. The second one is the number of nodes and edges in the reduced graph. However, in the worst case, these will be $n$ and $m$, respectively. Hence, the time requirement for the second phase is of $\mathcal{O}(\frac{B_{2}}{C_{min}^{'}} \cdot n^{2} \cdot (m+n))$. Finally, the total time requirement of Algorithm \ref{Algo: Algorithm1_Simple Greedy} is of $\mathcal{O}(\frac{B_{1}}{C_{min}} \cdot n^{2} \cdot (m+n) + n + n^{2}+ \frac{B_{2}}{C_{min}^{'}} \cdot n^{2} \cdot (m+n)) \approx \mathcal{O}( (\frac{B_{1}}{C_{min}} + \frac{B_{2}}{C_{min}^{'}}) \cdot n^{2} \cdot (m+n))$. Now, the extra space consumed by Algorithm \ref{Algo: Algorithm1_Simple Greedy} is to store the seed sets $\mathcal{S}$, $\mathcal{S}_{1}$, $\mathcal{S}_{2}$, already and recently activated nodes, and in the worst case all of them may contain $\mathcal{O}(n)$ many nodes. Hence, Theorem \ref{Th:Simple_Greedy_Approach} holds.

	\begin{mytheorem} \label{Th:Simple_Greedy_Approach}
	Running time and space requirement of the proposed simple greedy approach (Algorithm \ref{Algo: Algorithm1_Simple Greedy}) is of $\mathcal{O}( (\frac{B_{1}}{C_{min}} + \frac{B_{2}}{C_{min}^{'}}) \cdot n^{2} \cdot (m+n))$ and $\mathcal{O}(n)$, respectively.
	
	\end{mytheorem}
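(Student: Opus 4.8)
The plan is to establish both bounds by an amortized line-by-line accounting of Algorithm~\ref{Algo: Algorithm1_Simple Greedy}, treating the two phases separately and then adding the inter-phase bookkeeping cost. Since each iteration of the greedy loop commits at most one node to the seed set and every node costs at least $C_{min}$, the first-phase \texttt{while} loop can run at most $\mathcal{O}(B_{1}/C_{min})$ times; this is the governing multiplier for the phase-I cost. The whole argument then reduces to bounding the work done inside a single iteration and multiplying by this iteration count.

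The key step is to cost one iteration correctly. In each iteration the algorithm evaluates the marginal profit gain for every candidate node, i.e.\ $\mathcal{O}(n)$ evaluations. A single evaluation requires computing $\phi$ on the current tentative seed set, which is equivalent to a graph traversal costing $\mathcal{O}(m+n)$; because that tentative set can grow to size $\mathcal{O}(n)$, one marginal-gain evaluation costs $\mathcal{O}(n(m+n))$, and hence all $\mathcal{O}(n)$ candidate evaluations in an iteration cost $\mathcal{O}(n^{2}(m+n))$. Selecting the maximiser adds only $\mathcal{O}(n)$ and the subsequent budget/update steps (Lines $6$--$12$) are $\mathcal{O}(1)$, so the evaluation term dominates. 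Multiplying by the iteration bound gives a phase-I cost of $\mathcal{O}(\frac{B_{1}}{C_{min}} \cdot n^{2}(m+n))$.

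Next I would account for the transition between phases: running the IC diffusion from $\mathcal{S}_{1}$ up to timestep $d$ costs $\mathcal{O}(m+n)$, and removing the already activated set $A_{Y}$ from an adjacency-matrix representation costs $\mathcal{O}(n^{2})$. Phase II is then analysed identically to phase I, the only differences being that the relevant minimum cost is now $C_{min}^{'}=\underset{u \in V(G) \setminus A_{Y}}{\min} C(u)$ over the reduced graph, and that in the worst case the reduced graph still has $\mathcal{O}(n)$ nodes and $\mathcal{O}(m)$ edges; this yields $\mathcal{O}(\frac{B_{2}}{C_{min}^{'}} \cdot n^{2}(m+n))$. Summing the phase-I term, the $\mathcal{O}(m+n)+\mathcal{O}(n^{2})$ inter-phase terms, and the phase-II term, and absorbing the lower-order contributions, gives the claimed running time $\mathcal{O}((\frac{B_{1}}{C_{min}}+\frac{B_{2}}{C_{min}^{'}}) \cdot n^{2}(m+n))$. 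The space bound is immediate: the only non-input storage is the sets $\mathcal{S},\mathcal{S}_{1},\mathcal{S}_{2},A_{Y},R_{Y}$, each of size $\mathcal{O}(n)$, so $\mathcal{O}(n)$ space suffices.

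I expect the only genuinely delicate point to be the per-evaluation cost of the marginal profit gain: one must notice that evaluating $\phi$ is not a single $\mathcal{O}(m+n)$ traversal but scales with the size of the seed set being evaluated, which is precisely what produces the extra factor of $n$ and hence the $n^{2}(m+n)$ per-iteration bound. A secondary caveat is that the exact expected profit is $\#P$-hard to compute, so the $\mathcal{O}(m+n)$ traversal should be read as the cost of one live-graph (or Monte-Carlo) evaluation taken as the unit operation; the stated bound is with respect to that evaluation oracle rather than exact expected-profit computation.
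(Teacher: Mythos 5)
Your proposal is correct and follows essentially the same line of reasoning as the paper's own analysis: the $\mathcal{O}(B_{1}/C_{min})$ bound on greedy iterations, the $\mathcal{O}(n^{2}(m+n))$ per-iteration cost from $\mathcal{O}(n)$ marginal-gain evaluations each costing $\mathcal{O}(n(m+n))$, the $\mathcal{O}(m+n)+\mathcal{O}(n^{2})$ inter-phase diffusion and node-deletion steps, the symmetric Phase~II argument with $C_{min}^{'}$, and the $\mathcal{O}(n)$ space for the stored sets. Your added caveat that the $\mathcal{O}(m+n)$ traversal should be understood as a single live-graph evaluation oracle (since exact expected profit is \textsf{$\#P$-hard}) is a reasonable clarification the paper leaves implicit, but it does not change the argument.
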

	{\scriptsize
	\begin{algorithm}[h]
	\SetAlgoLined
	\KwData{$G, B_{1}, B_{2}, d$}
	\KwResult{  $\mathcal{S}$}
	 Initialize $\mathcal{S} \leftarrow \emptyset$ \;
	 \textbf{Seed set selection for First Phase}\;
	 Initialize $\mathcal{S}_{1} \leftarrow \emptyset$\;
	 \While{$\texttt{TRUE}$}{
	 Find $u' \leftarrow {argmax}_{u \in V(G) \setminus \mathcal{S}_{1}} {\frac{\phi(\mathcal{S}_{1} \cup \{u\}) - \phi(\mathcal{S}_{1})}{C(u)}}$\;
	 \If{$(\phi_{u'}(\mathcal{S}_{1})) \leq 0$}{
	 Break\;
	 }
	 \If{$C(u^{'}) \leq B_{1}$}{
	 $\mathcal{S}_{1} \longleftarrow \mathcal{S}_{1} \cup \{u'\}$\;
	 $B_{1} \longleftarrow B_{1} - C(u^{'})$\;
	 }
	 }
	Return $\mathcal{S}_{1}$\;
	From the partial observation in $G$ using seed set $\mathcal{S}_{1}$ at timestep $d$, we have recently activated nodes, $R_{Y}$ and already activated nodes  $A_{Y}$\;
	
	\textbf{Seed set selection for Second Phase}\;
	
	Initialize $\mathcal{S}_{2} \leftarrow \emptyset$\;
	Update $B_{2} \leftarrow B_{2} + B_{1}$\;
	\While{$\texttt{TRUE}$}{
	
		Find $v' \leftarrow {argmax}_{v \in V(G) \setminus A_{Y}}{\frac{\phi_{V(G) \setminus A_{Y}}(\mathcal{S}_{2} \cup \{v\}) - \phi_{V(G) \setminus A_{Y}}(\mathcal{S}_{2})}{C(v)}}$\;
		\If{$(\phi_{v'}(\mathcal{S}_{2})) \leq 0$}{
			Break\;
		}
		\If{$C(v^{'}) \leq B_{2}$}{
		$S_{2} \longleftarrow S_{2} \cup \{v'\}$\;
		$B_{2} \longleftarrow B_{2} - C(v^{'})$\;
		}	
	}
	Return $\mathcal{S}_{2}$\;
	$\mathcal{S} \longleftarrow \mathcal{S}_{1} \cup  \mathcal{S}_{2}$\;
	Return $\mathcal{S}$\;
	 \caption{ Simple Greedy Algorithm for Two Phase Profit Maximization Problem}
	 \label{Algo: Algorithm1_Simple Greedy}
	\end{algorithm}	
	}	
	
	\subsection{Double Greedy Approach}	
	
	\paragraph{Description of Algorithm \ref{Algo: Algorithm2_Double Greedy}}
	First phase of this method goes like this. We initialize two sets $\mathcal{S}_{1}$ and $\mathcal{T}_{1}$. The first one is with $\emptyset$, and the second one is with $V(G)$. Now, for every node $u \in V(G)$, we compute two measures $r^{+}_{u}$ and $r^{-}_{u}$ which are mentioned in Equation \ref{Eq:7} and \ref{Eq:8}, respectively.
	 
	\begin{minipage}[t]{0.45\textwidth}
	\begin{equation} \label{Eq:7}
	r^{+}_{u} \leftarrow \frac{\phi(\mathcal{S}_{1} \cup \{u\}) - \phi(\mathcal{S}_{1})}{C(u)}
	\end{equation}
	\end{minipage}
	\begin{minipage}[t]{0.45\textwidth}
	\begin{equation} \label{Eq:8}
	r^{-}_{u} \leftarrow - \frac{\phi(\mathcal{T}_{1} \setminus \{u\}) - \phi(\mathcal{T}_{1})}{C(u)}
	\end{equation}
	\end{minipage} 
	
	 Now, if $r^{+}_{u} \geq r^{-}_{u}$ and the cost of the current nodes is less than the available budget, then the set $\mathcal{S}_{1}$ is updated as $\mathcal{S}_{1} \cup \{u\}$ and the budget $B_{1}$ is updated as $B_{1}- \mathcal{C}(u)$, though $\mathcal{T}_{1}$ remains the same. If the budget is not sufficient or if $r^{+}_{u} < r^{-}_{u}$, $\mathcal{T}_{1}$ is reduced by deleting the current node and in that case $\mathcal{S}_{1}$ remains the same. After repeating these steps, we obtain the seed set for Phase I, i.e.; $\mathcal{S}_{1}$. Now, we conduct the diffusion process and observe up to the timestep $d$ and thus obtain the already activated nodes and recently activated nodes. If there is any unutilized budget of Phase I, that has been added to the budget of Phase II. For the seed set selection of Phase II, we repeat the same process however on the reduced graph, i.e.; the graph obtained by deleting the already activated nodes in the first phase. Now, we proceed to describe the analysis of the double greedy approach. \\

	{\scriptsize
	\begin{algorithm}
	\SetAlgoLined
	\KwData{$G, B_{1}, B_{2}, d$}
	\KwResult{$\mathcal{S}$}
	 Initialize $\mathcal{S} \leftarrow \emptyset$ \;
	 \textbf{Seed set selection for First Phase}\;
	 Initialize $\mathcal{S}_{1} \leftarrow \emptyset$, $\mathcal{T}_{1} \leftarrow V(G)$\;
	 
	 \For{$\text{All } u \in V(G)$}{
	$ r^{+}_{u} \leftarrow \frac{\phi(\mathcal{S}_{1} \cup \{u\}) - \phi(\mathcal{S}_{1})}{C(u)}$\;
	$ r^{-}_{u} \leftarrow \: -{\frac{\phi(\mathcal{T}_{1} \setminus \{u\}) - \phi(\mathcal{T}_{1})}{C(u)}}$\;
	\eIf{$r^{+}_{u} \geq r^{-}_{u}$}{
	\eIf{$C(u) \leq B_{1}$}{
	$\mathcal{S}_{1} \leftarrow \mathcal{S}_{1} \cup \{u\}$; $\mathcal{T}_{1} \: remains \: same$\;
	$B_{1} \longleftarrow B_{1} - C(u)$\;
	}{
	
	$\mathcal{T}_{1} \leftarrow \mathcal{T}_{1} \setminus \{u\}$; $\mathcal{S}_{1} \: remains \: same$\;
	}
	
	}
	{
	$\mathcal{T}_{1} \leftarrow \mathcal{T}_{1} \setminus \{u\}$; $\mathcal{S}_{1} \: remains \: same$\;
	
	}

	}
	Return $\mathcal{S}_{1} (=\mathcal{T}_{1})$\;
	From the partial observation in $G$ using seed set $\mathcal{S}_{1}$ at timestep $d$, we have recently activated nodes, $R_{Y}$ and already activated nodes  $A_{Y}$\;
	
	\textbf{Seed set selection for Second Phase}\;
	
	Initialize $\mathcal{S}_{2} \leftarrow \emptyset$,  $\mathcal{T}_{2} \leftarrow V(G) \setminus A_{Y}$\;
	Update $B_{2} \leftarrow B_{2} + B_{1}$\;
	
	 \For{$\text{All } v \in V(G) \setminus A_{Y}$}{
	 
	$ r^{+}_{v} \leftarrow \frac{\phi_{V(G) \setminus A_{Y}}(\mathcal{S}_{1} \cup \{v\}) - \phi_{V(G) \setminus A_{Y}}(\mathcal{S}_{1})}{C(v)}$\;
	$ r^{-}_{v} \leftarrow -{\frac{\phi_{V(G) \setminus A_{Y}}(\mathcal{T}_{1} \setminus \{v\}) - \phi_{V(G) \setminus A_{Y}}(\mathcal{T}_{1})}{C(v)}}$\;
	
	\eIf{$r^{+}_{v} \geq r^{-}_{v}$}{
	\eIf{$C(v) \leq B_{2}$}{
	$\mathcal{S}_{2} \leftarrow \mathcal{S}_{2} \cup \{v\}$; $\mathcal{T}_{2} \: remains \: same$\;
	$B_{2} \longleftarrow B_{2} - C(v)$\;
	}{
	
	$\mathcal{T}_{2} \leftarrow \mathcal{T}_{2} \setminus \{v\}$; $\mathcal{S}_{2} \: remains \: same$\;
	
	}
	
	}
	{
	$\mathcal{T}_{2} \leftarrow \mathcal{T}_{2} \setminus \{v\}$; $\mathcal{S}_{2} \: remains \: same$\;
	
	}
	
	}
	Return $\mathcal{S}_{2} (=\mathcal{T}_{2})$\;
	$\mathcal{S} \longleftarrow \mathcal{S}_{1} \cup  \mathcal{S}_{2}$\;
	Return $\mathcal{S}$\;
	 \caption{ Double Greedy Algorithm for Two Phase Profit Maximization Problem}
	 \label{Algo: Algorithm2_Double Greedy}
	\end{algorithm}	
	}
	\paragraph{Complexity Analysis of Algorithm \ref{Algo: Algorithm2_Double Greedy}} The analysis is quite similar to the simple greedy approach in two\mbox{-}phase setting. As stated previously, computing one marginal profit gain computation requires $\mathcal{O}(m+n)$ time. It is important to observe that in this method, we are performing two marginal profit gain computations per node. Hence, the time requirement for the seed set selection of the first phase is $\mathcal{O}(n \cdot (m+n))$. Other than the marginal gain computations, all the remaining statements from Line $7$ to $17$ will take $\mathcal{O}(1)$ time. Now, in the worst case the size of $\mathcal{S}_{1}$, $R_{Y}$, and $A_{Y}$ can be of $\mathcal{O}(n)$. So, there can be $\mathcal{O}(n^{2})$ many edges associated with the vertices of $A_{Y}$. Hence, deleting the set $A_{Y}$ leads to the modification of the $\mathcal{O}(n^{2})$ many adjacency matrix entries of the input social network. Thus, performing the deletion step after Phase I requires $\mathcal{O}(n^{2})$ time. Like Phase I, it is easy to observe that the time requirement for seed set selection in Phase II will be of $\mathcal{O}(n \cdot (m+n))$. Hence, the total time requirement for the double greedy approach in two\mbox{-}phase setting will be of $\mathcal{O}(n \cdot (m+n) + n^{2})= \mathcal{O}(n(m+n))$. The extra space consumed by this algorithm is to store the sets $\mathcal{S}_{1}$, $\mathcal{T}_{1}$, $\mathcal{S}_{2}$, $\mathcal{T}_{2}$, and $\mathcal{S}$. In the worst case, all of them will consume $\mathcal{O}(n)$ space. Hence, Theorem \ref{Th:2} holds.
	
	\begin{mytheorem} \label{Th:2}
	 Running time and the space requirement of the double greedy approach in the two\mbox{-}phase setting is of $ \mathcal{O}(n(m+n))$ and $ \mathcal{O}(n)$, respectively.
	 \end{mytheorem}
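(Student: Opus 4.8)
The plan is to carry out a line-by-line cost accounting of Algorithm~\ref{Algo: Algorithm2_Double Greedy}, identifying the dominant operations and summing their contributions, in the same spirit as the analysis underlying Theorem~\ref{Th:Simple_Greedy_Approach}. First I would fix the unit of work: the elementary operation is a single marginal profit gain evaluation, i.e. computing $\phi(\mathcal{S}_{1} \cup \{u\}) - \phi(\mathcal{S}_{1})$ or its $r^{-}$ counterpart, which amounts to a traversal of the graph and hence costs $\mathcal{O}(m+n)$ time. The crucial structural observation — and the point that separates this bound from the budget-dependent bound of the simple greedy method — is that the seed-selection loop here is a \textbf{for} loop that visits each node of $V(G)$ exactly once, rather than a \textbf{while} loop that repeatedly rescans all remaining nodes. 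Consequently no factor of the form $B/C_{min}$ can appear.

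Next I would bound each phase separately. In the first phase, the \textbf{for} loop executes $\mathcal{O}(n)$ times, and in each iteration exactly two marginal gain computations ($r^{+}_{u}$ and $r^{-}_{u}$) are performed while all remaining updates are $\mathcal{O}(1)$; this yields $\mathcal{O}(n \cdot (m+n))$ for Phase~I. Between the two phases I would account for the deletion of the already-activated set $A_{Y}$: since $|A_{Y}| = \mathcal{O}(n)$ and the edges incident to these vertices number $\mathcal{O}(n^{2})$, removing them from the adjacency-matrix representation costs $\mathcal{O}(n^{2})$. The second-phase \textbf{for} loop is structurally identical and again costs $\mathcal{O}(n \cdot (m+n))$, noting that in the worst case the reduced graph still has $\mathcal{O}(n)$ vertices and $\mathcal{O}(m)$ edges.

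Finally I would sum the contributions to obtain $\mathcal{O}(n(m+n) + n^{2} + n(m+n))$ and simplify. The only arithmetic subtlety is absorbing the deletion term: since $n^{2} = n \cdot n \leq n(m+n)$, the $\mathcal{O}(n^{2})$ step is dominated, leaving $\mathcal{O}(n(m+n))$ overall. For the space bound, I would observe that the only non-input storage consists of the sets $\mathcal{S}_{1}, \mathcal{T}_{1}, \mathcal{S}_{2}, \mathcal{T}_{2}, \mathcal{S}$ together with $A_{Y}$ and $R_{Y}$, each of cardinality $\mathcal{O}(n)$, so the auxiliary space is $\mathcal{O}(n)$.

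I do not anticipate a genuine mathematical obstacle, since the statement is a complexity bound rather than a correctness or existence claim. The one point deserving care — and the place where the argument is most easily mis-stated — is the per-iteration cost: one must justify that a single marginal gain evaluation really is $\mathcal{O}(m+n)$ (a traversal on one live-graph realization) and that the \textbf{for} loop contributes only a linear $\mathcal{O}(n)$ factor, in contrast to the iterated rescans of the simple greedy scheme. Getting this distinction right is precisely what separates the $\mathcal{O}(n(m+n))$ bound here from the weaker $\mathcal{O}\!\left(\left(\frac{B_{1}}{C_{min}} + \frac{B_{2}}{C_{min}^{'}}\right) n^{2} (m+n)\right)$ bound of Theorem~\ref{Th:Simple_Greedy_Approach}.
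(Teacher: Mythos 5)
Your proposal is correct and follows essentially the same route as the paper's own analysis: two marginal-gain evaluations of cost $\mathcal{O}(m+n)$ per iteration of a \textbf{for} loop over $\mathcal{O}(n)$ nodes in each phase, an $\mathcal{O}(n^{2})$ adjacency-matrix deletion of $A_{Y}$ between phases that is absorbed since $n^{2}\leq n(m+n)$, and $\mathcal{O}(n)$ auxiliary space for the maintained sets. The only (cosmetic) difference is that you make the absorption of the $n^{2}$ term and the contrast with the budget-dependent bound of the simple greedy method explicit, which the paper leaves implicit.
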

\subsection{Improving Scalability of Our Algorithms: Stochastic Greedy Approach}
\label{SubSec:Stochastic}
Now, we describe an approach using which we can make the proposed algorithms more scalable. It can be observed that the key computational overheads are due to the computation of marginal profit gain for all non-seed nodes. To take care of this problem, we have adopted the following approach called \textit{Stochastic Greedy}. We fix a constant $\epsilon$ which takes value in between $0$ and $1$. Now, instead of computing the marginal profit gain for all the non-seed nodes, we randomly sample $\mathcal{O}(\frac{n}{k} \log \frac{1}{\epsilon})$ many nodes from the set of non-seed nodes and we compute the marginal profit gain and subsequent operations for these nodes only. As the number of nodes for which the marginal profit gain is computed is less than the original algorithm, hence the computational time requirement is much less in this case. This approach has been proposed by Mirzasoleiman et al. \cite{mirzasoleiman2015lazier} and has been used by many studies. In this paper, we have adopted this approach to make our algorithms more scalable. Here, the parameter $\epsilon$ is called the accuracy-efficiency trade off parameter. If we increase the value of $\epsilon$, then the sample size will decrease, and hence, the error will increase, though the computational time will decrease. On the other hand, if we increase the value of $\epsilon$, the sample size will increase, hence computational time increases, and the error in computation will decrease. Next, we proceed to describe the experimental evaluation of the  proposed solution approaches.
	
	\section{Experimental Evaluation}  \label{Sec:Experiments}
	In this section, we describe the experimental evaluation of the proposed solution approach. Initially, we start by describing the datasets that we have used in this study.
	\subsection{Datasets Used}

	\begin{itemize}
    \item \textbf{Les Miserables (LM)} 	\cite{dhamal2016information} This undirected network contains co-occurrences of characters in Victor Hugo's novel `Les Misérables'. A node represents a character, and an edge between two nodes shows that these two characters appeared in the same chapter of the book. The weight of each link indicates how often such a co-appearance occurred. The dataset is manipulated to make it directed. To do that, if there is an edge $(u, v)$ then an edge $(v, u)$ is added to the dataset.

	\item \textbf{Email-Eu-Core:} \cite{yin2017local,leskovec2007graph} The network was generated using email data from a large European Research Institution. This dataset has anonymized information about all incoming and outgoing emails between members of the research institution. There is an edge $(u, v)$ in the network if person $u$ sent at least one email to person $v$. 
	
	\item \textbf{Slashdot:} \cite{leskovec2009community} Slashdot is a technology-related news website know for its specific user community. The website features user-submitted and editor-evaluated current primarily technology-oriented news. In 2002 Slashdot introduced the Slashdot Zoo feature which allows users to tag each other as friends or foes. The network contains friend/foe links between the users of Slashdot. The network was obtained in February 2009.
    \end{itemize}
	Table \ref{tab:Datasets} gives the basic statistics of the datasets.

	\begin{table}[h]
	\centering
	\caption{Basic Statistics of the Datasets}
	\begin{tabular}{|c|c|c|c|c|c|}
	\hline
	\textbf{Dataset} & \textbf{Type of} & \textbf{Number of} & \textbf{Number of} & \textbf{Maximum} & \textbf{Average} \\ 
	\textbf{Name}           & \textbf{Graph}      & \textbf{Nodes} & \textbf{Edges}  & \textbf{Degree} & \textbf{Degree}       \\ \hline
	Les Miserables & Directed   & 77    & 508    &    72  & 13.19  \\ \hline	
	Email-Eu-Core & Directed   & 1005    & 25571    &  347    & 33.25 \\ \hline
	Slashdot  & Directed & 82168 & 948464 &   5064  &   23.09      \\ \hline
	\end{tabular}%
	\label{tab:Datasets}
	\end{table}

	\subsection{Experimental Setup}
	Our study has several parameters whose value needs to be set up. The first one of them is the \emph{influence probability}.
	\paragraph{\textbf{Influence Probability}:} 
    In this study, we consider trivalency influence probability setting. In this setting, the influence probability of every edge is uniformly at random from this set $\{0.1, 0.01, 0.001\}$. This probability setting has been extensively used in influence maximization in particular and social network analysis in general.

	\paragraph{\textbf{Cost and Benefit}} The cost and benefit of a node are assigned uniformly at random from the intervals $[50,100]$ and $[800, 1000]$, respectively. 
	\subsection{Algorithms in our Experimentation}
	In our experimentation, the following methods are involved. To make the comparison fair, for the baseline methods, also we use the two phase setting.
	\begin{itemize}
	\item \textbf{Simple Greedy Approach (SG):} This is exactly Algorithm \ref{Algo: Algorithm1_Simple Greedy} of this manuscript where the allocated budget is split into two parts and using the budget of each part based on the marginal gain of the nodes in an incremental way seed nodes are chosen in both the phases.
	
	\item \textbf{Double Greedy Approach (DG):} This is exactly Algorithm \ref{Algo: Algorithm2_Double Greedy} of this manuscript.

    \item \textbf{Stochastic Greedy Approach (StG):} This is explained in subsection \ref{SubSec:Stochastic} of this manuscript.
	
	\item \textbf{Single Discount (SD):} In this method, firstly, we divide the budget into two parts, and they are used for the respective phases. We take the maximum degree node from the graph and decrease the degree of its neighbors by one. Hence, it also means that the maximum degree node is deleted from the graph, which decreases the degree of its neighbors by a value of one. This is repeated til we have exhausted the budget or the remaining budget is not enough to include that node in the seed set.
	
	\item \textbf{Degree Discount (DD):} This method is the same as Single Discount except that the decrease in the degree of neighbors of a node in the seed set is by $dd_{v} = d_{v} - 2t_{v} - (d_{v} - t_{v})t_{v}p$ explained in \cite{chen2009efficient}.
	
	\item \textbf{High Degree (HD):} In this method, after dividing the budget into two phases, we take the maximum degree node from the graph and continue to take the next highest degree node into the seed set till our budget is exhausted, or we don't have enough budget to add one more node to the seed set. 
	
	\item \textbf{High Clustering Coefficient (HighCC):} In this method, divide the budget into two parts for the respective phases. Now, in these phases, as the name suggests, the clustering coefficient of each node is computed and sorted in non-increasing order based on the value. As long as the budget has not been exhausted, the nodes from the sorted list are chosen as seed nodes. 
	
	\item \textbf{Random:} In this method, the budget is divided into two parts, and they are used for the respective phases. As long as the budgets are not exhausted in both phases, the seed nodes are selected uniformly at random. This method has been used by many studies as a baseline method for comparison.
	\end{itemize}

\subsection{Research Questions}
In our experiments, we have addressed the following research questions:
\begin{itemize}
\item \textbf{RQ1. (Single Phase Vs. Two Phase)}: For any particular dataset, a fixed diffusion probability setting and a fixed budget, how much extra percentage of profit can be earned in two phase. 
\item \textbf{RQ2. (Budget Splitting Ratio)}: To make the two phase approach more effective, in what proportion the allocated budget should be splited across first and second phase.  
\item  \textbf{RQ3. (Commencement of Second Phase)}: To make the two phase approach more effective, from what time stamp on wards the second phase should be started. In other words, at what time stamp, the seed set for the second phase should by deployed.
\item \textbf{RQ4. (Size of Seed Set in Single Phase Vs. Two Phase)}:For any particular dataset, a fixed diffusion probability setting and a fixed budget, the size of the seed set selected by different methodologies for single phase Vs. two phase.

\item \textbf{RQ5. (Time Requirement for Diffusion in Single Phase Vs. Two Phase):} For any particular dataset, a fixed diffusion probability setting and a fixed budget, how much extra rounds of diffusion happens in two phases.
\item \textbf{RQ6. (Computational Time Requirement):}  For any particular dataset, a fixed diffusion probability setting and a fixed budget, comparison of computational time requirement for different methodologies. 
\item \textbf{RQ7. (The Impact of $\epsilon$ on Profit in Single Phase Vs. Two Phase)}: For any particular dataset, any diffusion probability setting, and a fixed budget value, what is the impact of $\epsilon$ value.
\end{itemize}
Next, we proceed to describe the experimental results to address these research questions.
\subsection{Experimental Results and Description}
In this section, to evaluate the effectiveness of the proposed two phase profit maximization framework, we analyze the experimental results with respect to seven key research questions (RQ1–RQ7), each addressing a distinct aspect of performance, strategy, or computational trade-off.

\subsubsection{Single Phase Vs. Two Phase in Terms of Profit}
The results for the \textit{LM} dataset using the trivalency probability setting highlight how different algorithms respond to the two phase setting. The baseline algorithms show varied performance, with some consistently improving and others showing mixed outcomes. Among the baselines, \textbf{SD} and \textbf{DD} stand out as strong performers. \textbf{SD} achieves profits ranging from $8292.84$ to a maximum of $37960.78$, with percentage gains reaching $73.10\%$ (Figure~\ref{Fig:RQ1LM_T3}(e)). \textbf{DD} also shows stable behavior, with a top profit of $34613$ and a $22.29\%$ gain at budget $1500$, split ratio $0.1$, and timestep $4$ (Figure~\ref{Fig:RQ1LM_T1}(d)). The \textbf{Random} algorithm, despite its simplicity, achieves the highest relative gain of $111.74\%$ at budget $500$, split ratio $0.7$, and timestep $10$ (Figure~\ref{Fig:RQ1LM_T4}(a)). It shows consistent improvement in all configurations, confirming the broad benefit of the two phase setting. The \textbf{HD} algorithm also performs reliably well, reaching up to $87.54\%$ gain (Figure~\ref{Fig:RQ1LM_T5}(b)). In contrast, the \textbf{HighCC} method is more sensitive to parameter settings. While it achieves a maximum gain of $30.23\%$, it also records a loss of $-26.54\%$ in some settings, as seen in Figures~\ref{Fig:RQ1LM_T1}–\ref{Fig:RQ1LM_T5}. This indicates that its performance depends heavily on the chosen budget, split ratio, and timestep. Among the proposed methods, the \textbf{DG} algorithm stands out. It achieves a maximum profit of $35491.86$ (Figure~\ref{Fig:RQ1LM_T2}(g)) and a profit gain of $74.41\%$. The \textbf{SG} algorithm shows a high-risk, high-reward pattern. It records the lowest profit of $2627.77$ in one case, but also the highest profit of $38312.00$ at budget $2500$, split ratio $0.9$, and timestep $2$, resulting in a gain of $43.26\%$ (Figure~\ref{Fig:RQ1LM_T5}(f)). The third proposed method, \textbf{StG0.1}, also performs well, reaching a maximum profit of $30128.44$ at budget $2500$, split ratio $0.9$, and timestep $6$ (Figure~\ref{Fig:RQ1LM_T5}(h)). In summary, while many baseline methods benefit from the two phase setting, the proposed algorithms—especially \textbf{SG} and \textbf{DG} show greater potential to achieve the highest profits under the right configurations. 

Figures~\ref{Fig:RQ1_T1} to \ref{Fig:RQ1_T5} compare the profit achieved in single-phase and two phase setting for the \textit{Email-Eu-Core} dataset using the trivalency probability setting. These figures show that, in most cases, the two phase approach leads to higher profits across all algorithms evaluated. For example, the baseline \textbf{Random} algorithm achieved its highest profit gain of $44.24\%$ at a budget of $500$, with a split ratio of $0.5$ and a timestep of $10$ (Figure~\ref{Fig:RQ1_T3}(a)). The \textbf{HD} algorithm showed a maximum profit increase of $20.89\%$ under the same split ratio and timestep, but at a budget of $1500$ (Figure~\ref{Fig:RQ1_T3}(b)). The \textbf{HighCC} algorithm showed even larger improvements, with its peak profit gain occurring at budget $500$, split ratio $0.9$, and timestep $10$ (Figure~\ref{Fig:RQ1_T5}(c)). Among other baselines, the \textbf{DD} method reached a maximum gain of $25.37\%$ at a budget of $2500$, split ratio $0.1$, and timestep $10$ (Figure~\ref{Fig:RQ1_T1}(d)). Similarly, \textbf{SD} achieved its best performance with a gain of $26.04\%$ at a budget of $2000$, split ratio $0.1$, and timestep $10$ (Figure~\ref{Fig:RQ1_T1}(e)). For the proposed \textbf{SG} method, one instance with budget $500$, split ratio $0.9$, and timestep $8$ (Figure~\ref{Fig:RQ1_T5}(f)) showed an increase in profit from $325320.1$ (single-phase) to $354941$ (two phase), giving a gain of $9.1\%$. The \textbf{DG} algorithm achieved its highest gain of $32.25\%$ at budget $500$, split ratio $0.9$, and timestep $10$ (Figure~\ref{Fig:RQ1_T5}(g)). The \textbf{StG0.1} approach recorded its best result at budget $2500$, split ratio $0.1$, and timestep $10$, achieving a profit gain of $14.45\%$ (Figure~\ref{Fig:RQ1_T1}(h)). Overall, these results confirm that the two phase approach consistently improves profit across algorithms, and higher budgets generally lead to higher profits in both single and two phase settings.


\begin{figure}[htbp]
\centering
\captionsetup[sub]{font=footnotesize}
\begin{tabular}{cccc}
    \begin{subfigure}[t]{0.22\textwidth}
        \includegraphics[width=\linewidth]{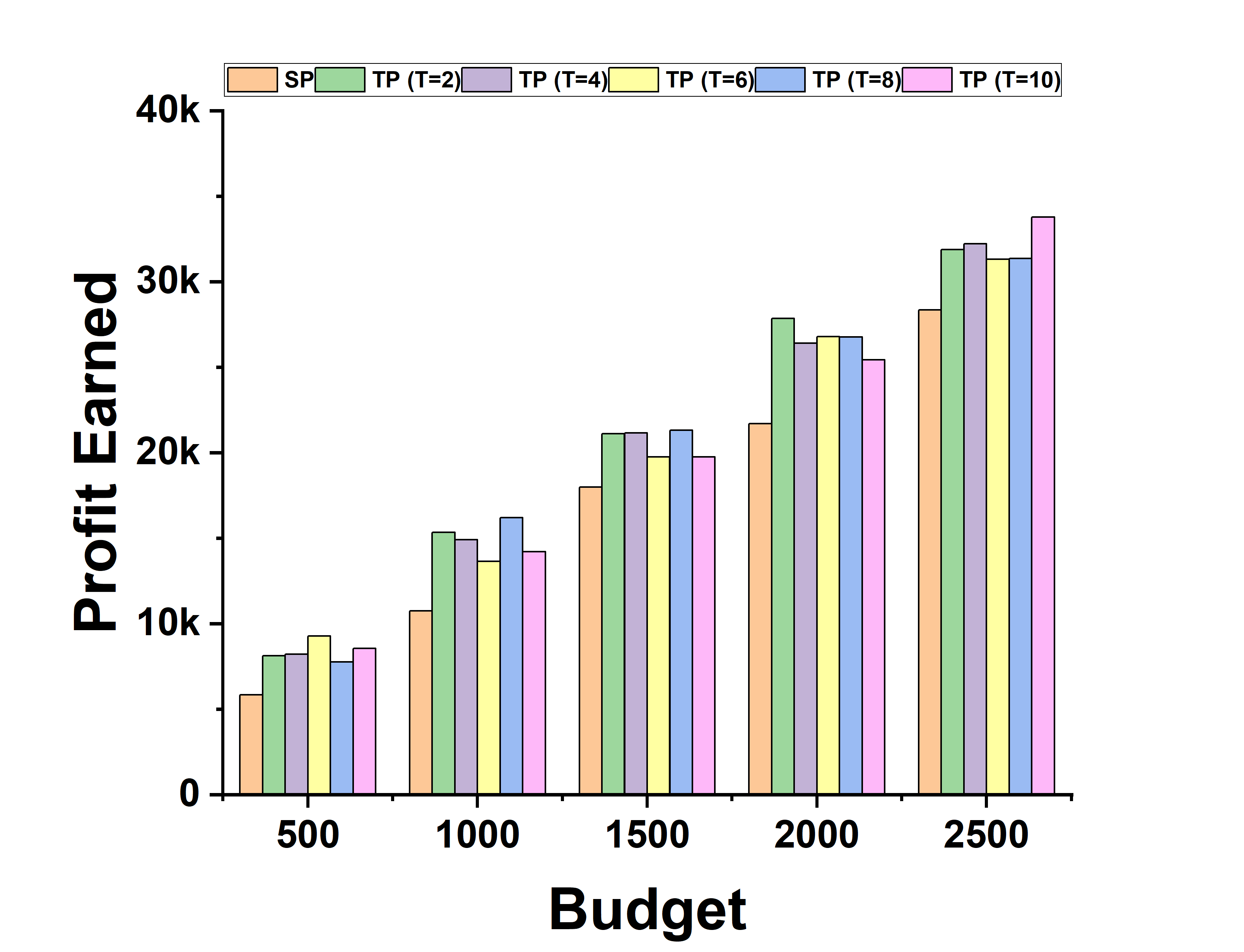}
        \caption{Random}
    \end{subfigure} &
    \begin{subfigure}[t]{0.22\textwidth}
        \includegraphics[width=\linewidth]{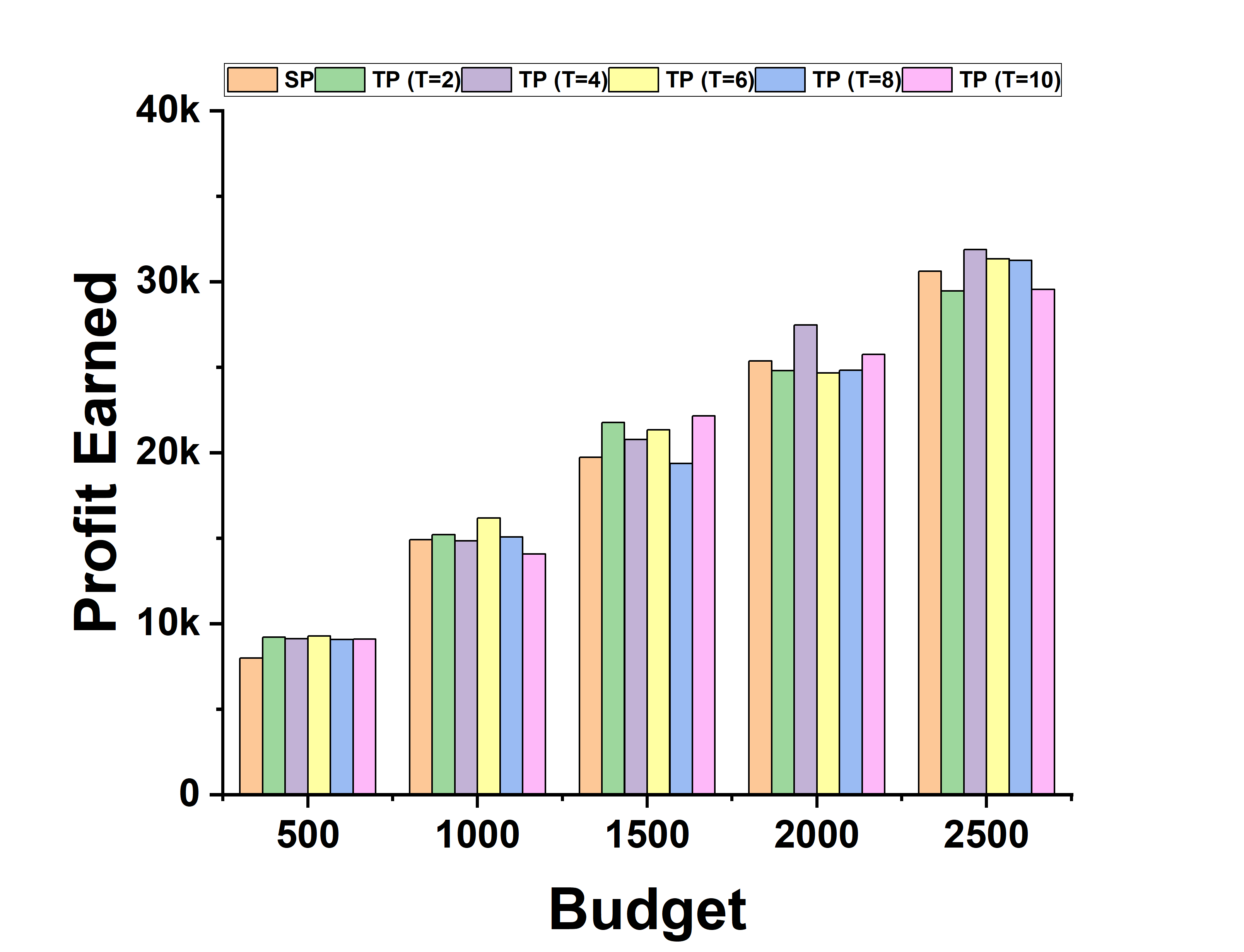}
        \caption{High Degree}
    \end{subfigure} &
    \begin{subfigure}[t]{0.22\textwidth}
        \includegraphics[width=\linewidth]{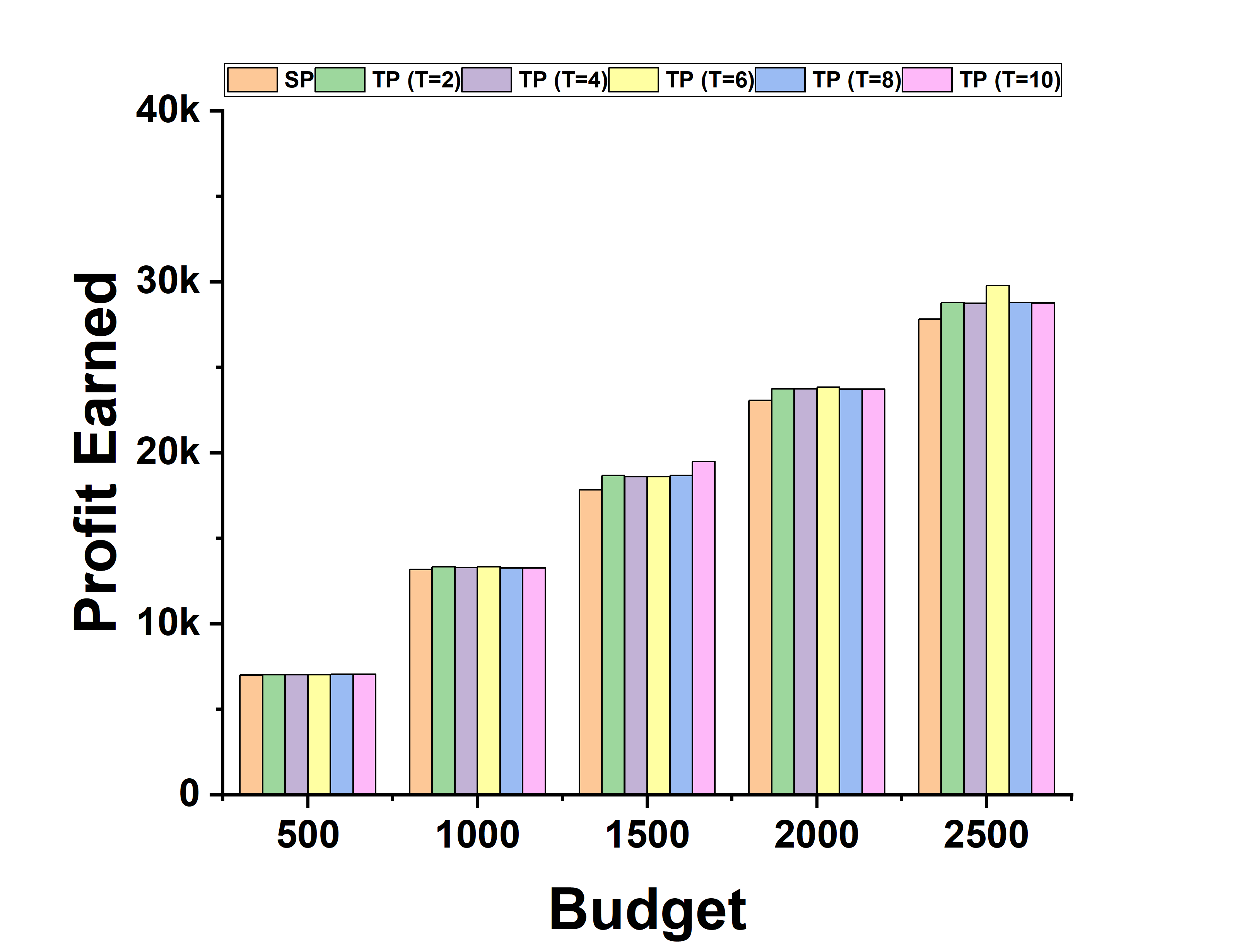}
        \caption{Clustering\\Coefficient}
    \end{subfigure} &
    \begin{subfigure}[t]{0.22\textwidth}
        \includegraphics[width=\linewidth]{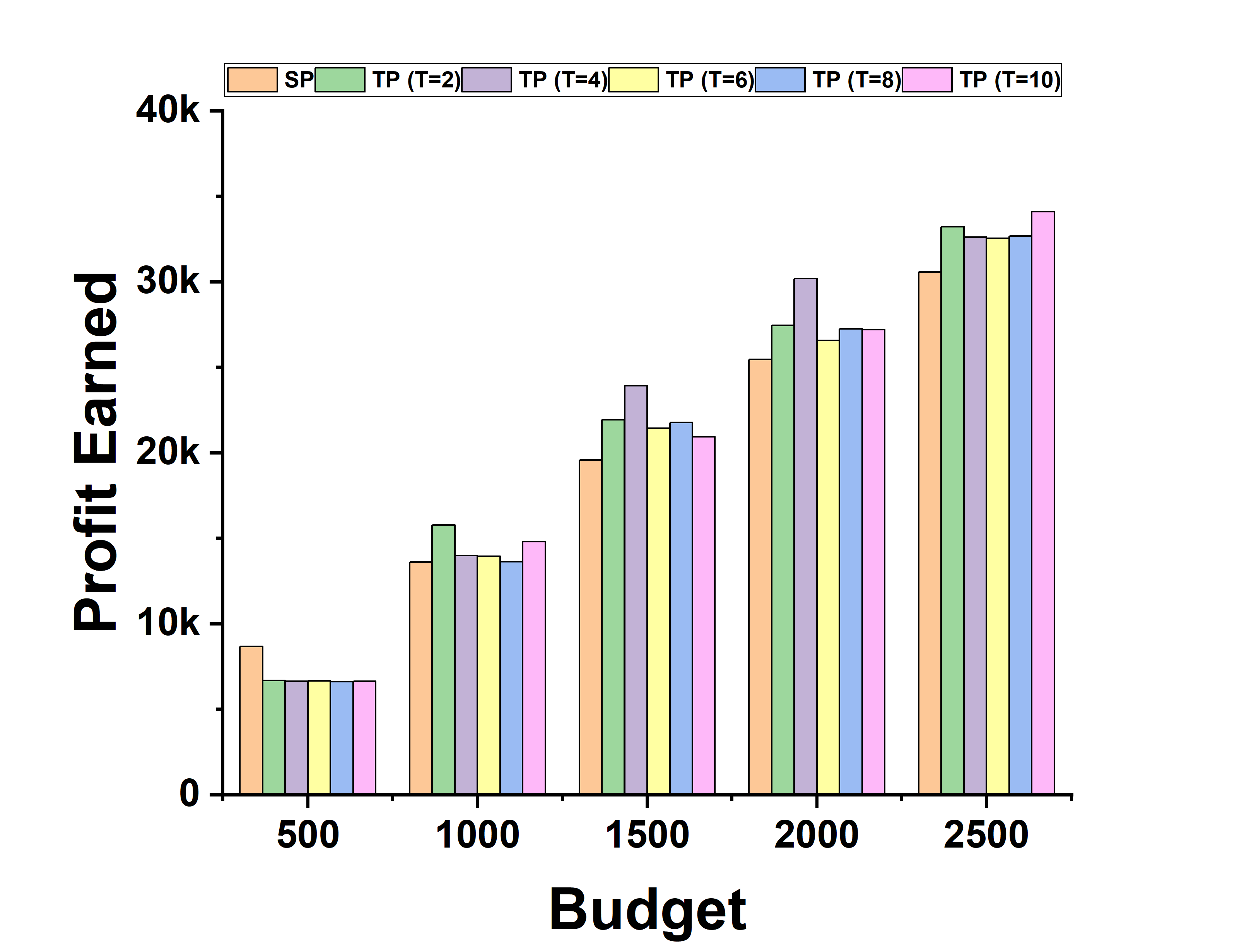}
        \caption{Degree Discount}
    \end{subfigure} \\[6pt]

    \begin{subfigure}[t]{0.22\textwidth}
        \includegraphics[width=\linewidth]{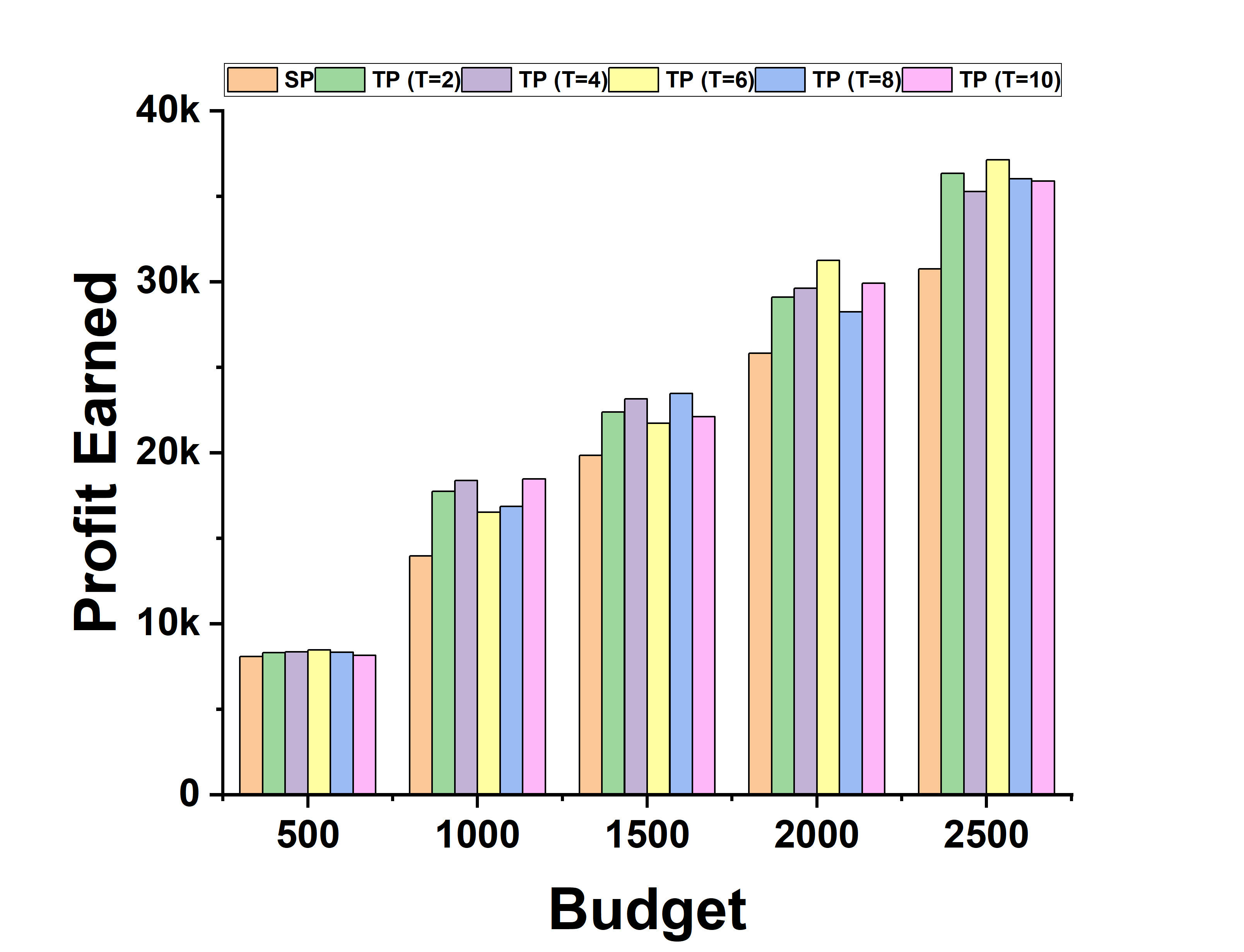}
        \caption{Single Discount}
    \end{subfigure} &
    \begin{subfigure}[t]{0.22\textwidth}
        \includegraphics[width=\linewidth]{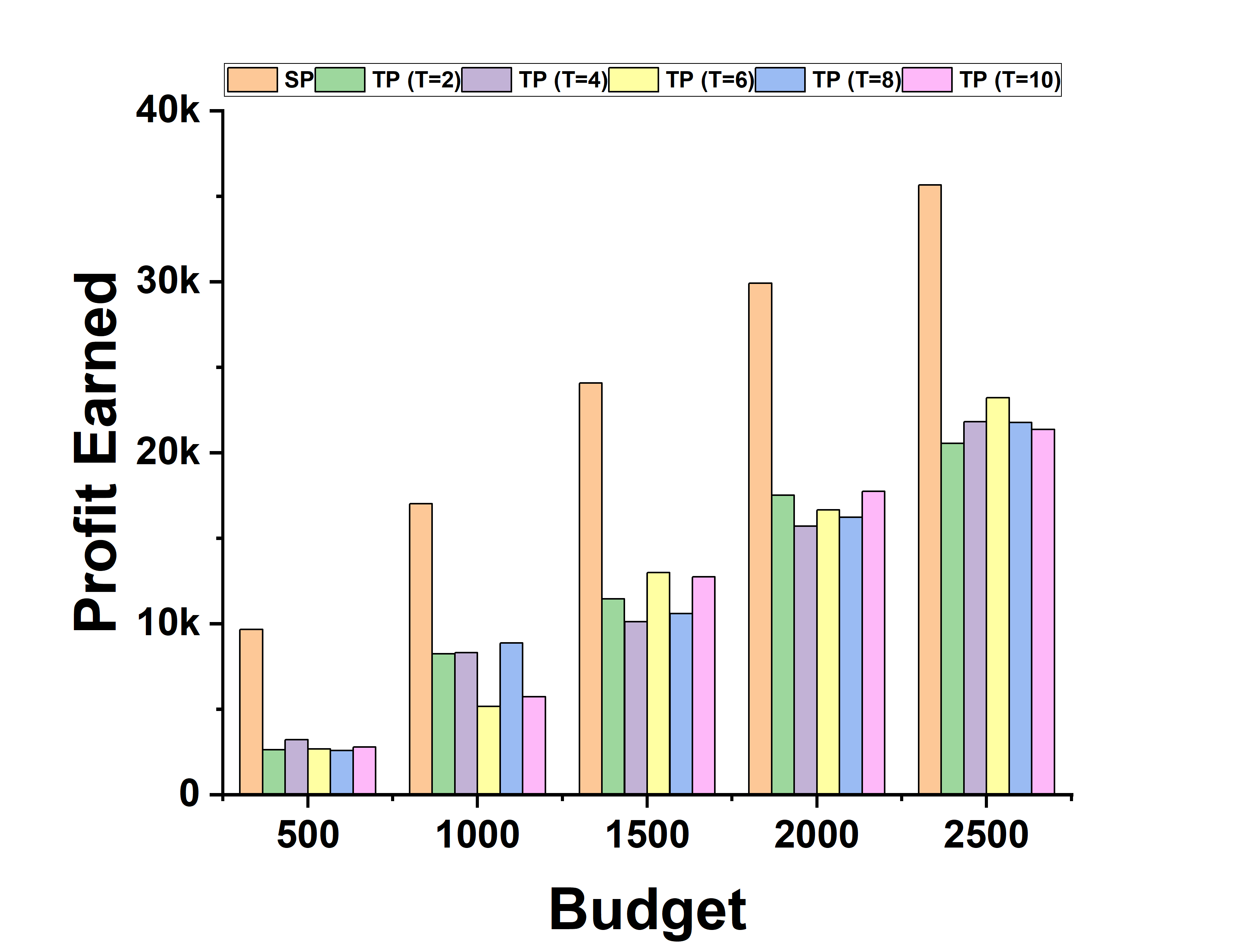}
        \caption{Simple Greedy}
    \end{subfigure} &
    \begin{subfigure}[t]{0.22\textwidth}
        \includegraphics[width=\linewidth]{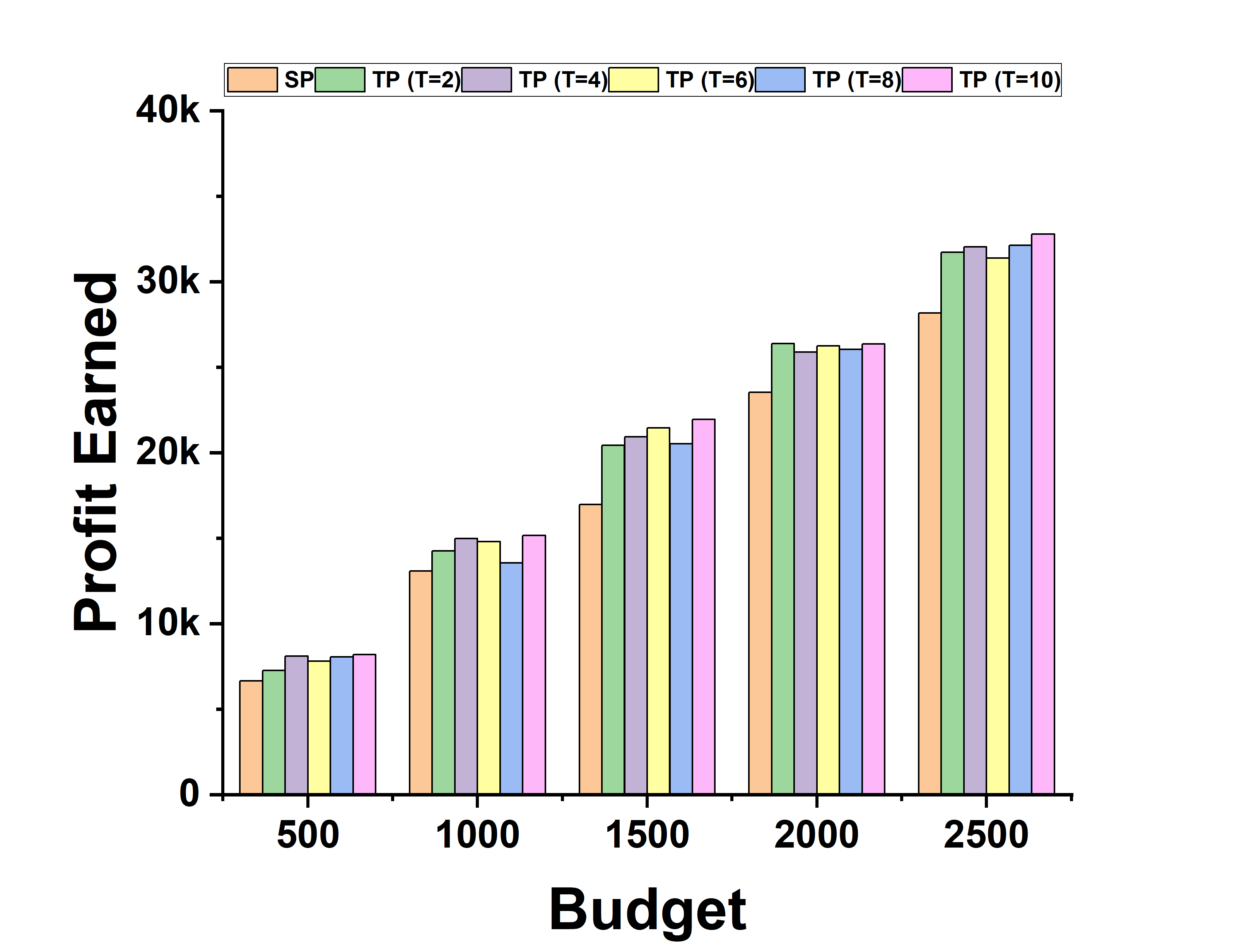}
        \caption{Double Greedy}
    \end{subfigure} &
    \begin{subfigure}[t]{0.22\textwidth}
        \includegraphics[width=\linewidth]{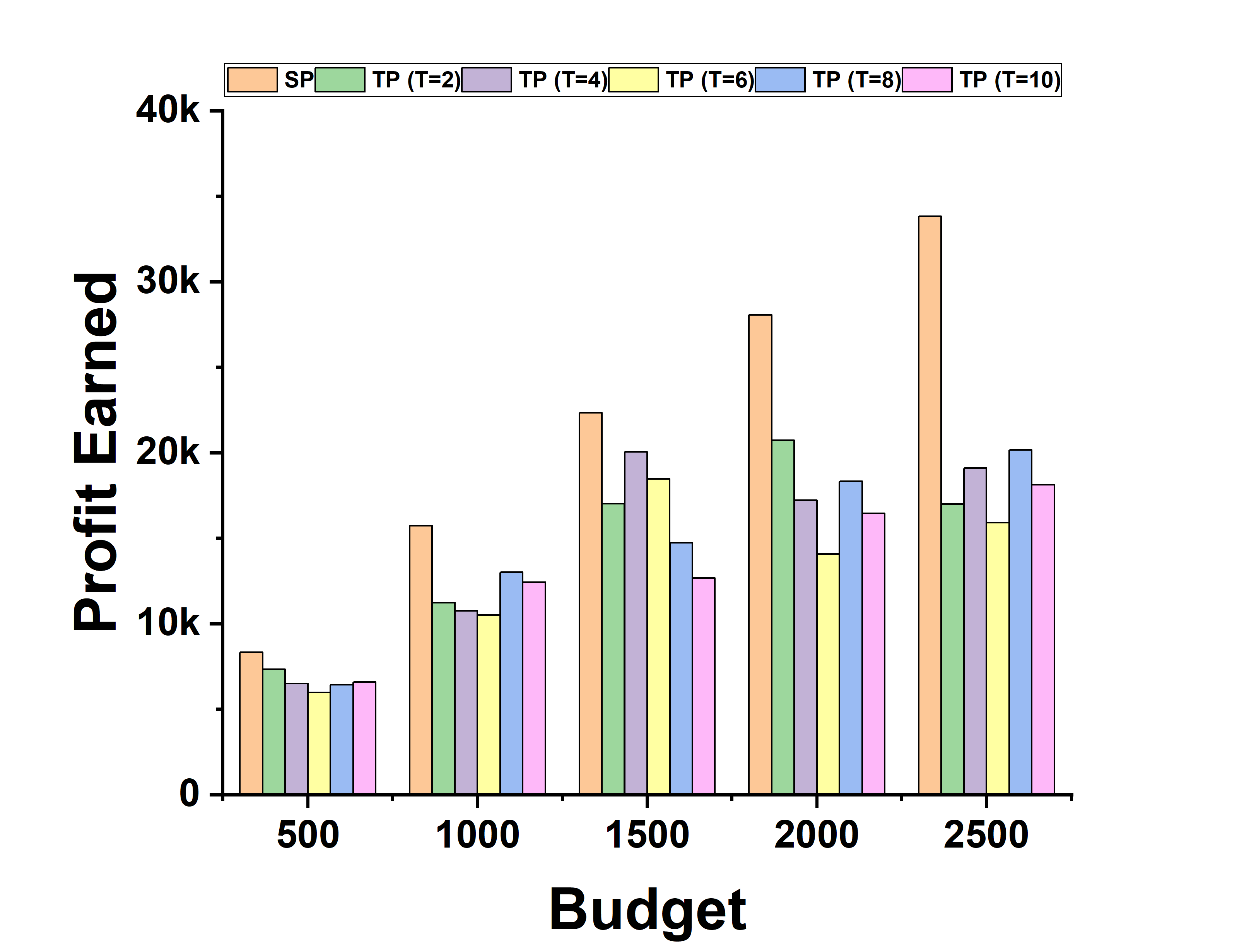}
        \caption{Stochastic Greedy}
    \end{subfigure}
\end{tabular}
\caption{Profit Earned in Single Phase Vs. Two Phase setting (split ratio 10\%, Probability Setting - Trivalency, \textit{LM} Dataset)}
\label{Fig:RQ1LM_T1}
\end{figure}

\begin{figure}[htbp]
\centering
\captionsetup[sub]{font=footnotesize}
\begin{tabular}{cccc}
    \begin{subfigure}[t]{0.22\textwidth}
        \includegraphics[width=\linewidth]{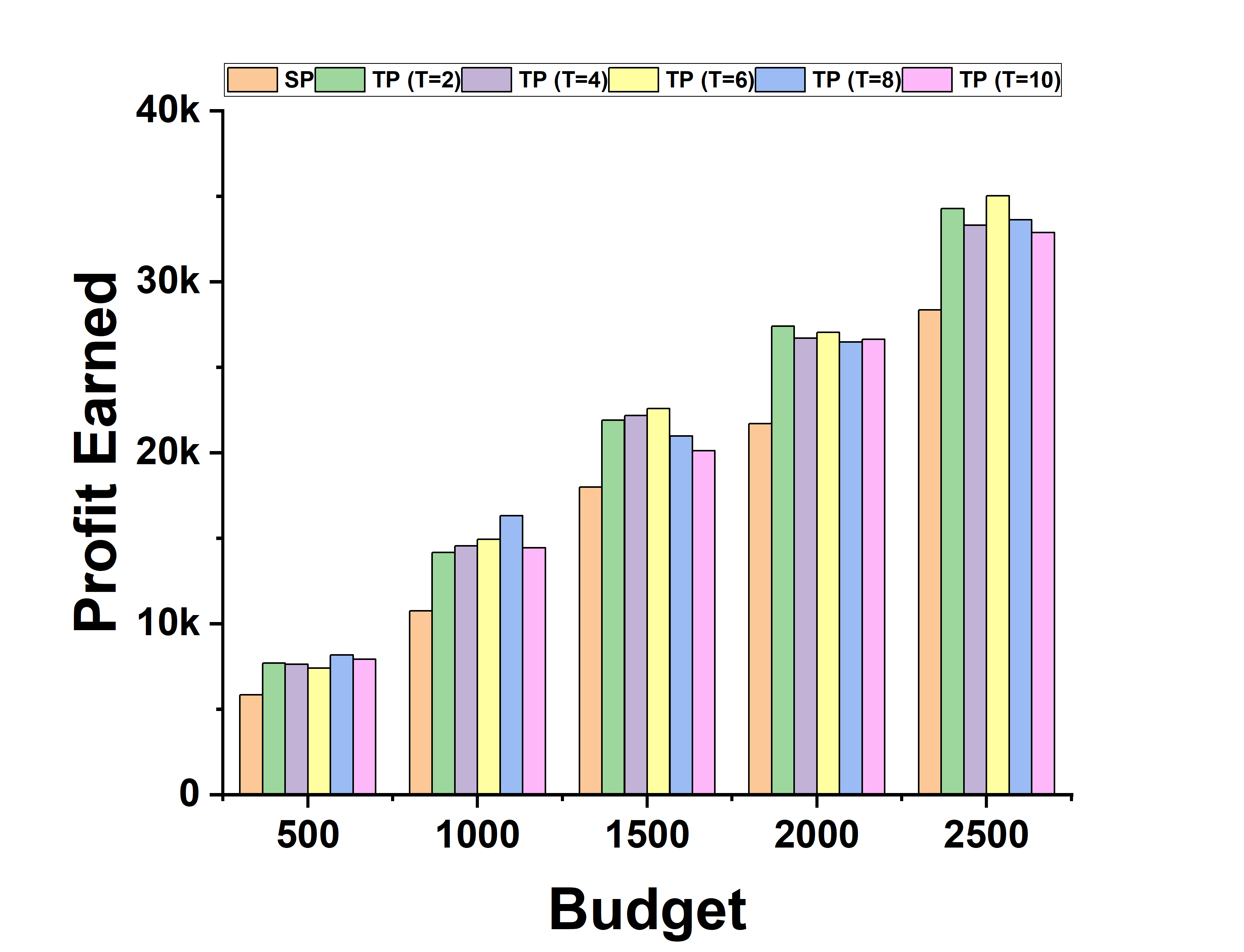}
        \caption{Random}
    \end{subfigure} &
    \begin{subfigure}[t]{0.22\textwidth}
        \includegraphics[width=\linewidth]{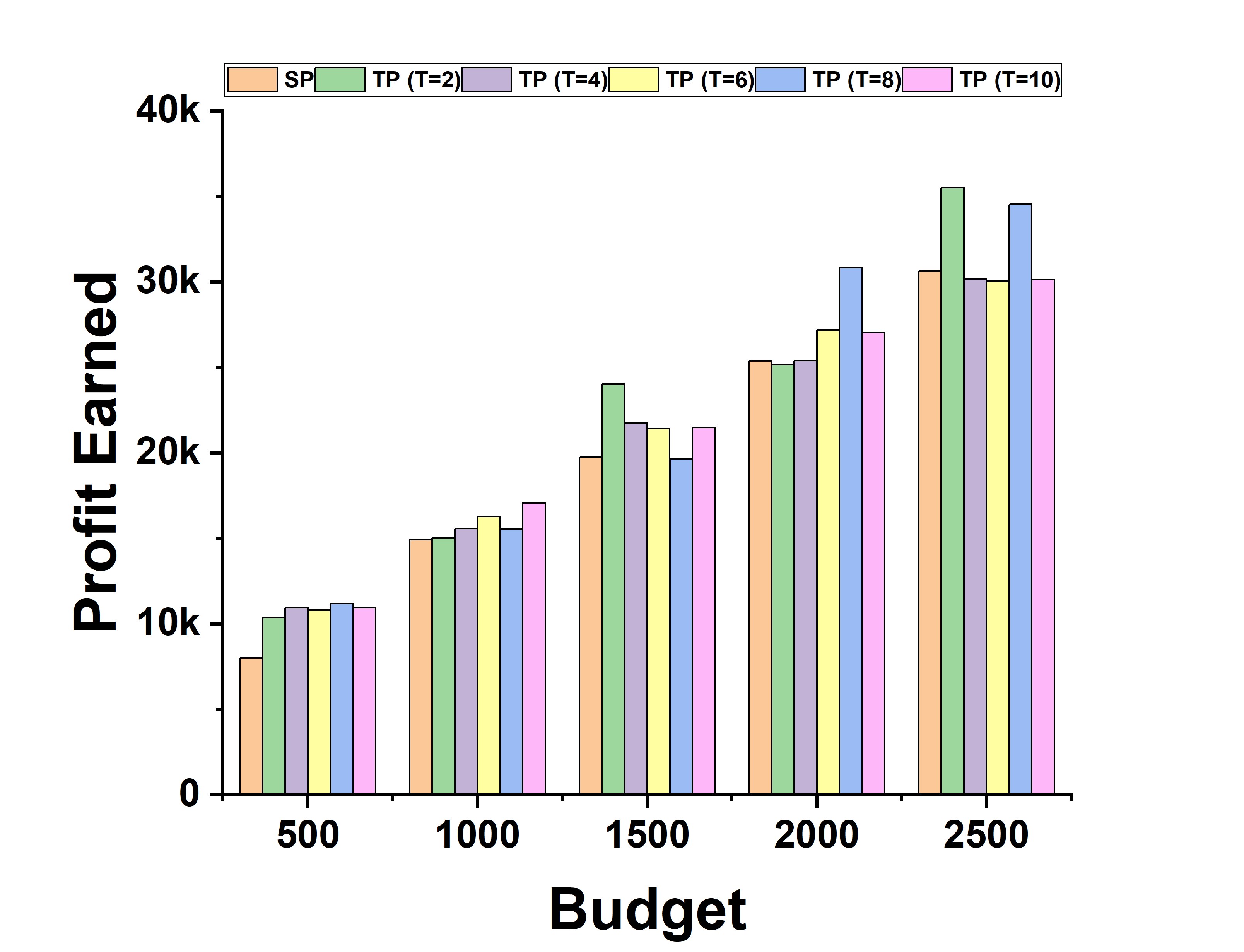}
        \caption{High Degree}
    \end{subfigure} &
    \begin{subfigure}[t]{0.22\textwidth}
        \includegraphics[width=\linewidth]{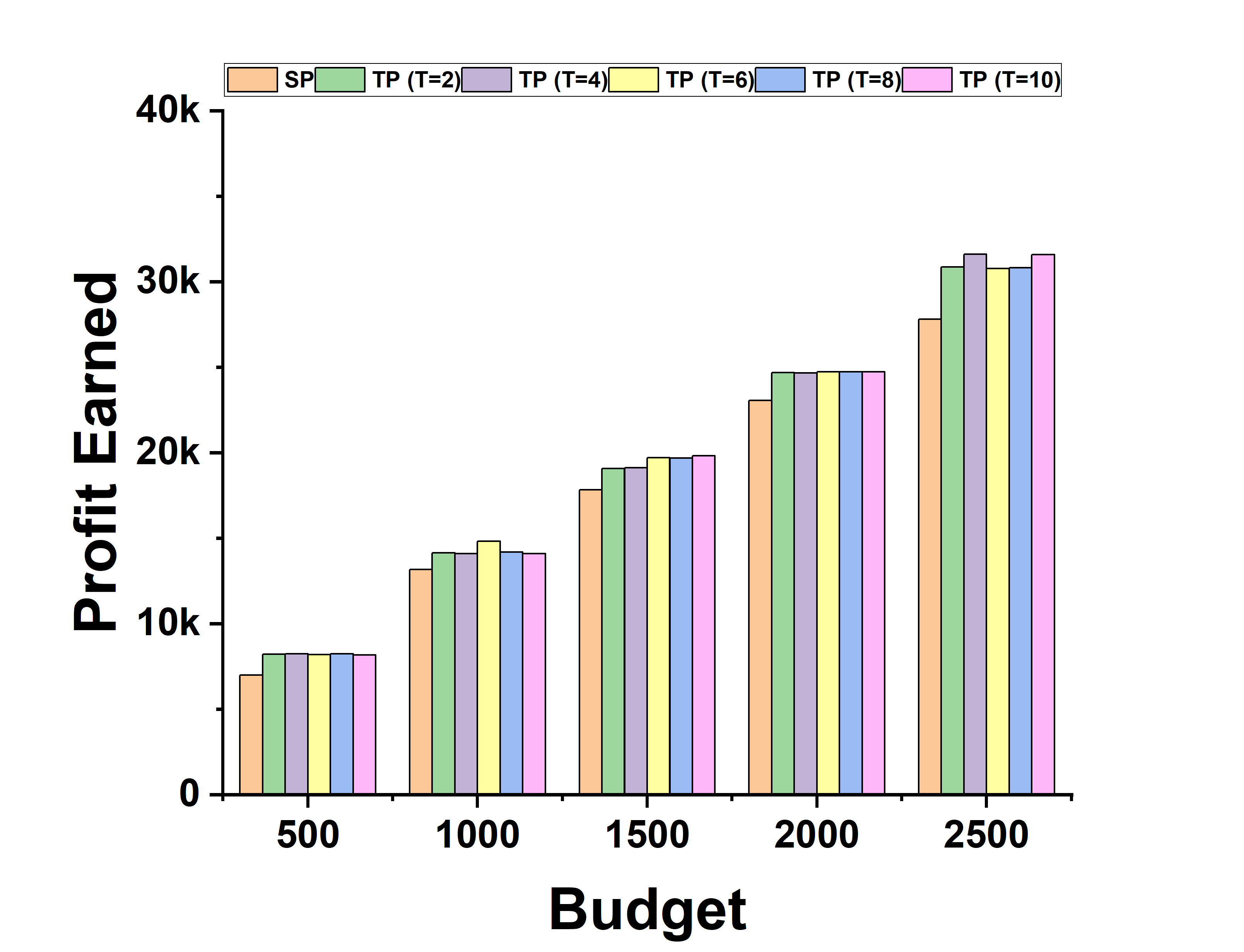}
        \caption{Clustering\\Coefficient}
    \end{subfigure} &
    \begin{subfigure}[t]{0.22\textwidth}
        \includegraphics[width=\linewidth]{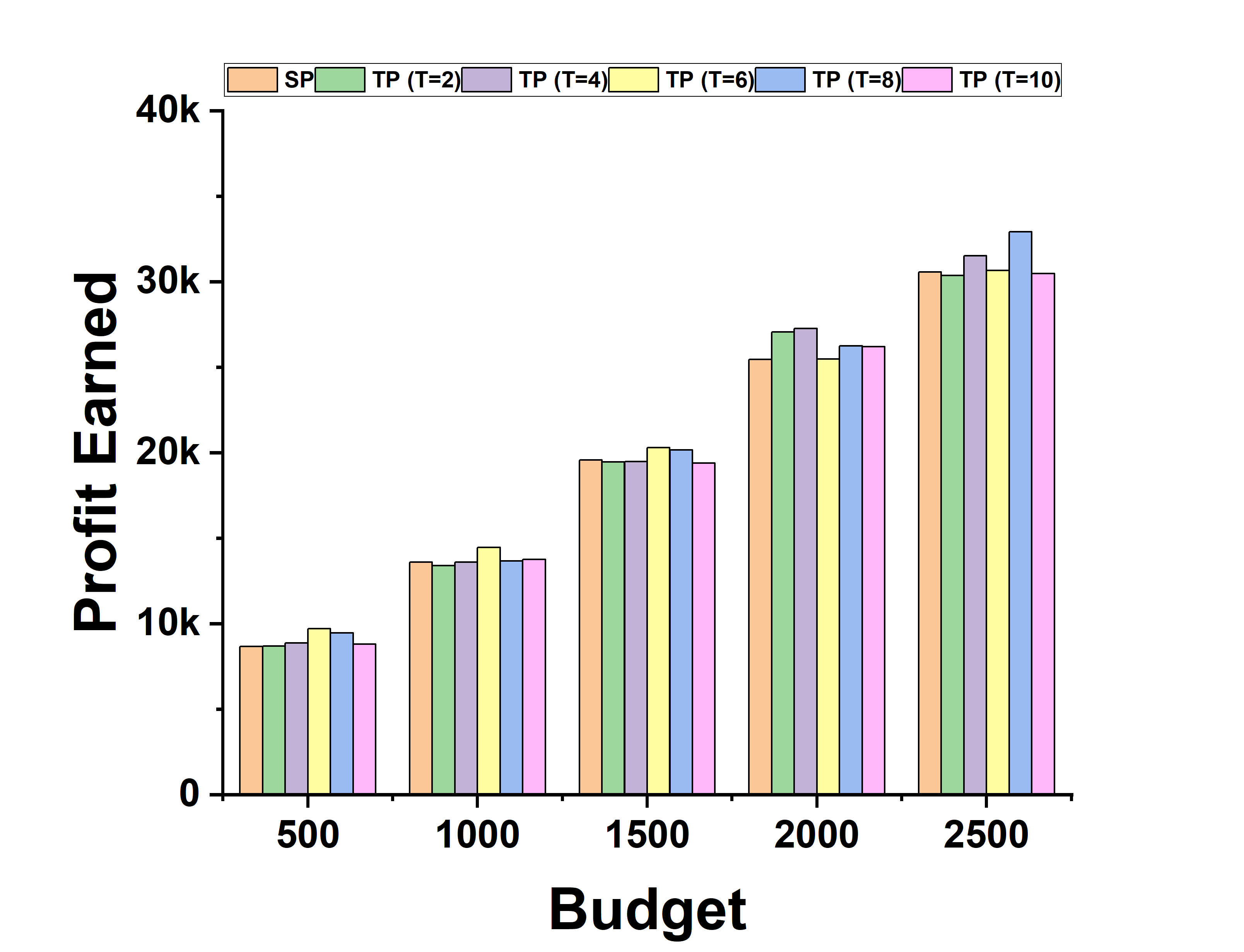}
        \caption{Degree Discount}
    \end{subfigure} \\[6pt]

    \begin{subfigure}[t]{0.22\textwidth}
        \includegraphics[width=\linewidth]{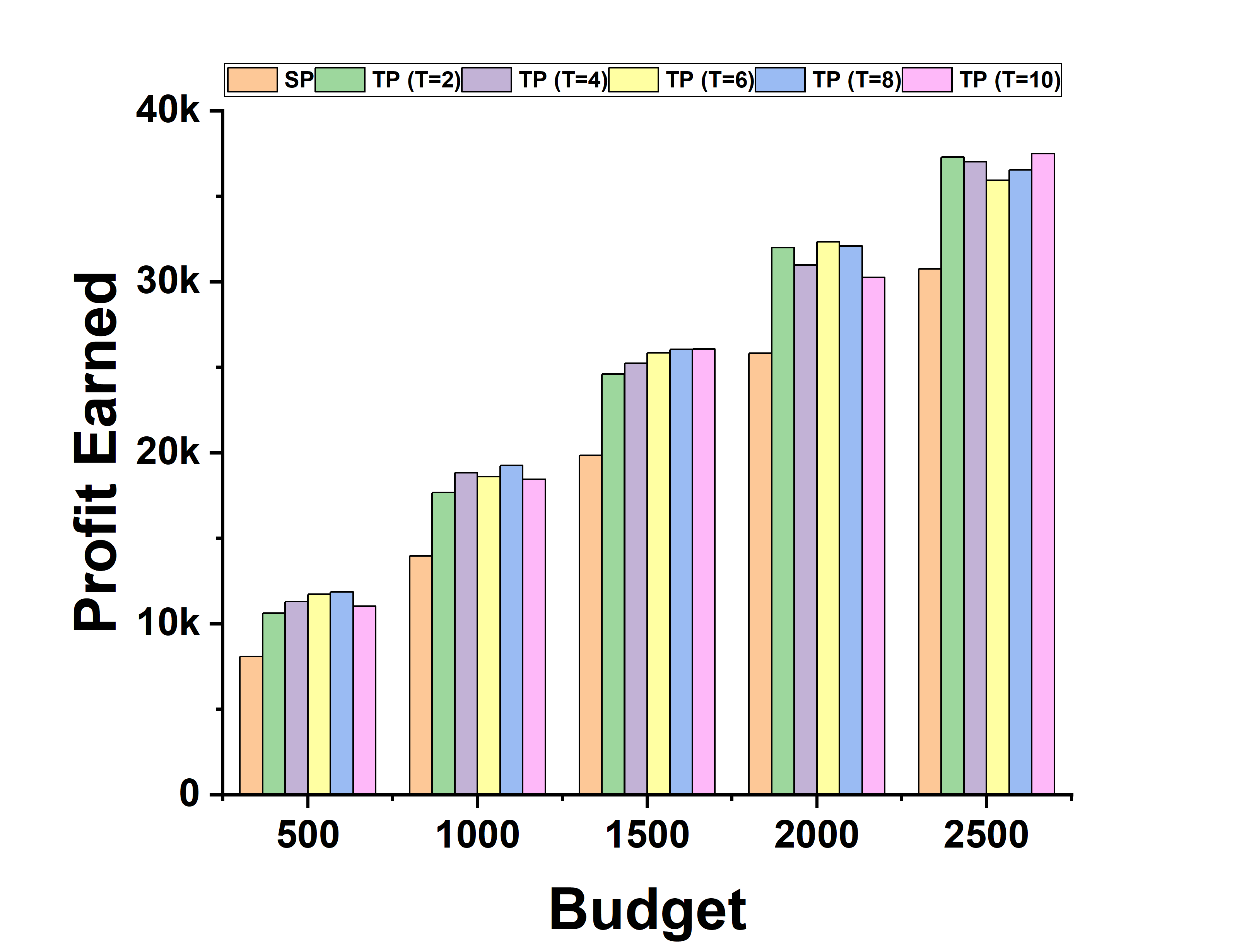}
        \caption{Single Discount}
    \end{subfigure} &
    \begin{subfigure}[t]{0.22\textwidth}
        \includegraphics[width=\linewidth]{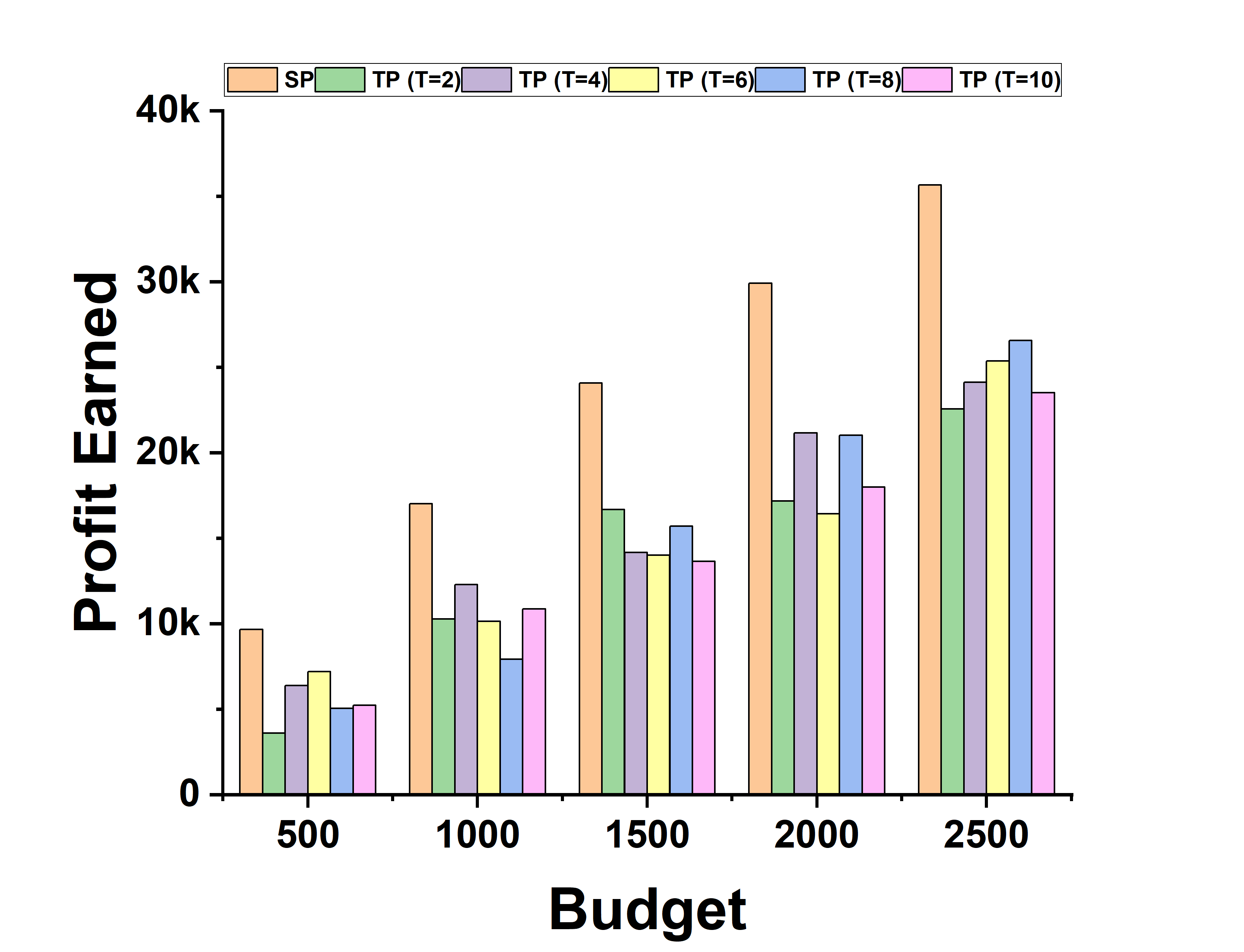}
        \caption{Simple Greedy}
    \end{subfigure} &
    \begin{subfigure}[t]{0.22\textwidth}
        \includegraphics[width=\linewidth]{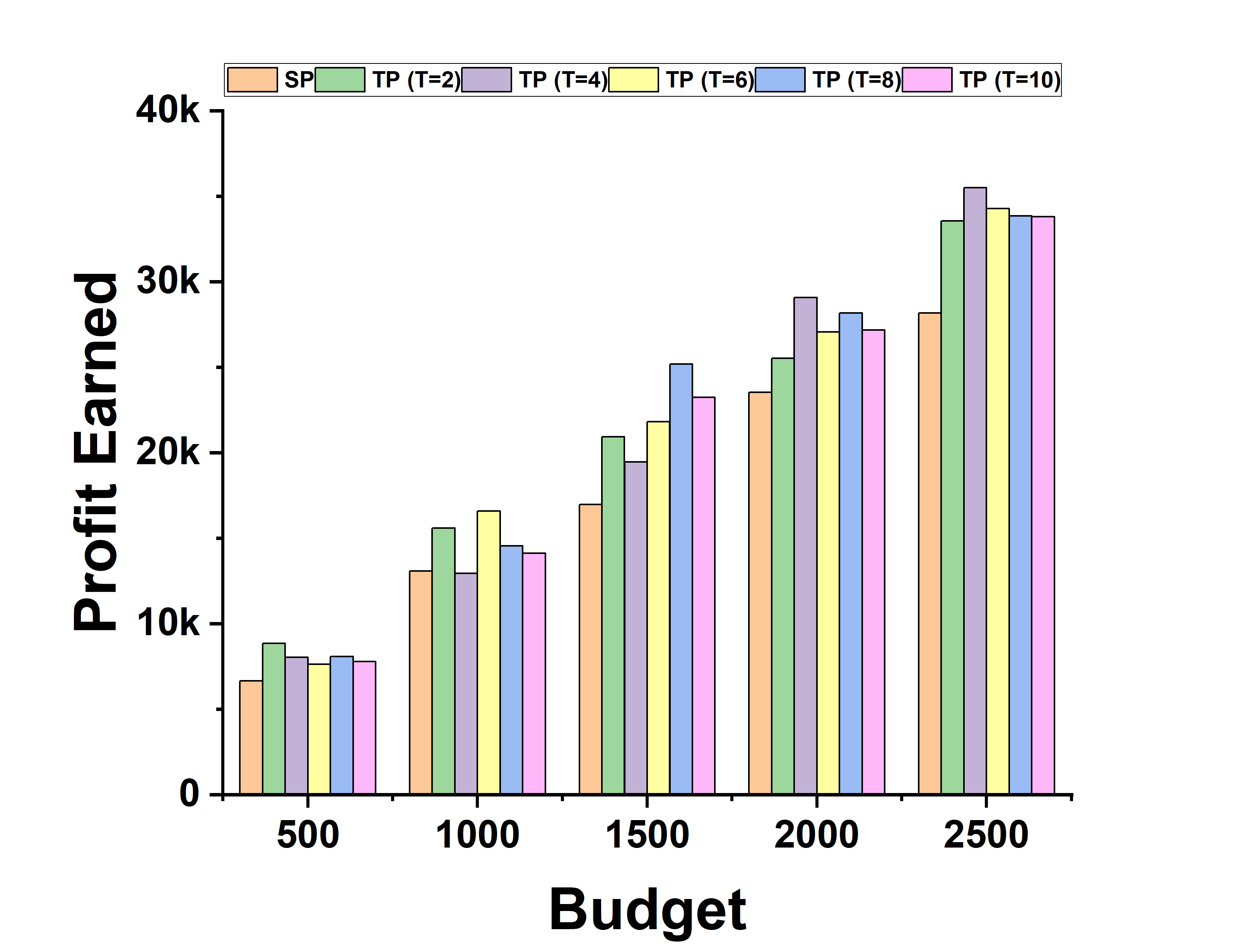}
        \caption{Double Greedy}
    \end{subfigure} &
    \begin{subfigure}[t]{0.22\textwidth}
        \includegraphics[width=\linewidth]{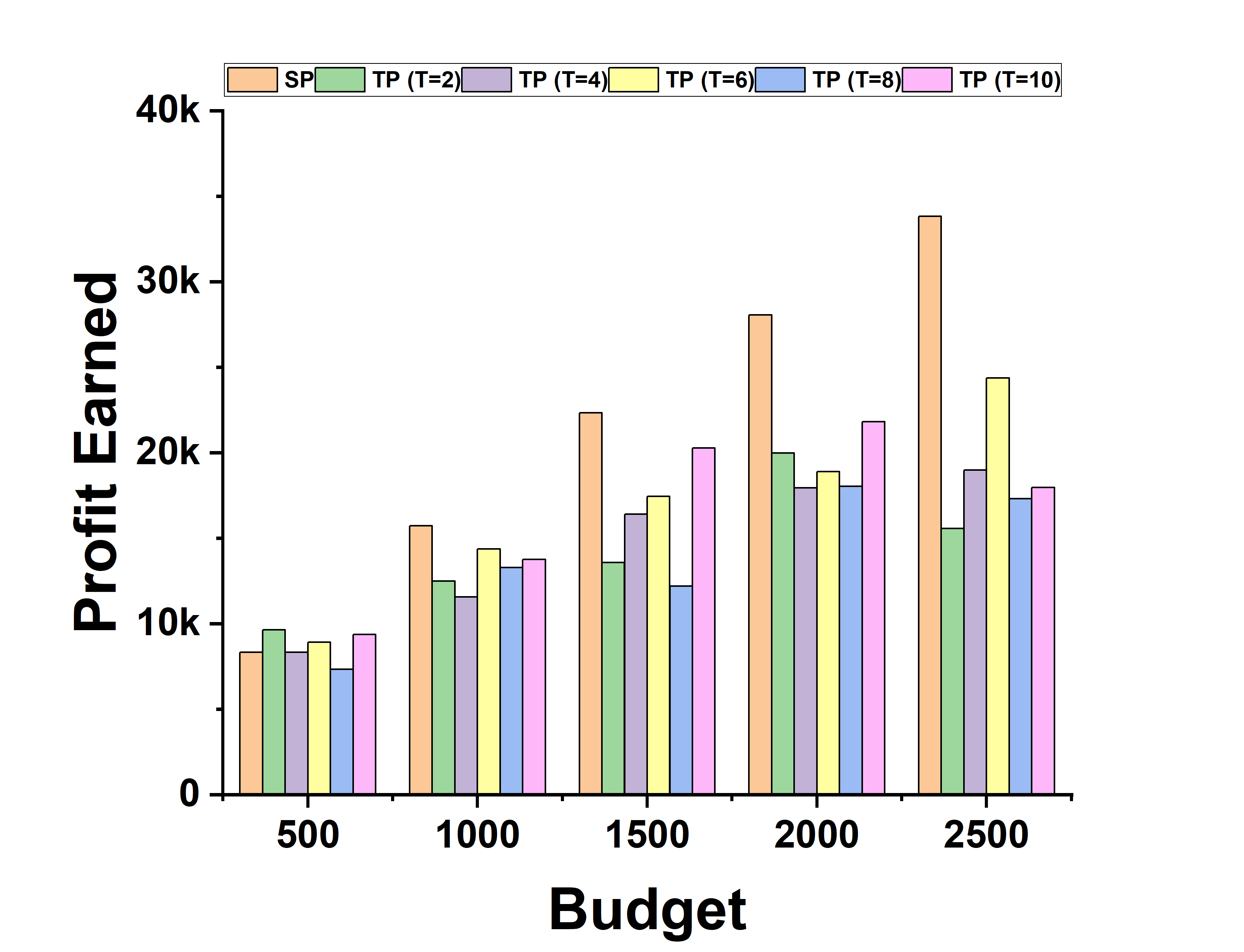}
        \caption{Stochastic Greedy}
    \end{subfigure}
\end{tabular}
\caption{Profit Earned in Single Phase Vs. Two Phase setting (split ratio 30\%, Probability Setting - Trivalency, \textit{LM} Dataset)}
\label{Fig:RQ1LM_T2}
\end{figure}

\begin{figure}[htbp]
\centering
\captionsetup[sub]{font=footnotesize}
\begin{tabular}{cccc}
    \begin{subfigure}[t]{0.22\textwidth}
        \includegraphics[width=\linewidth]{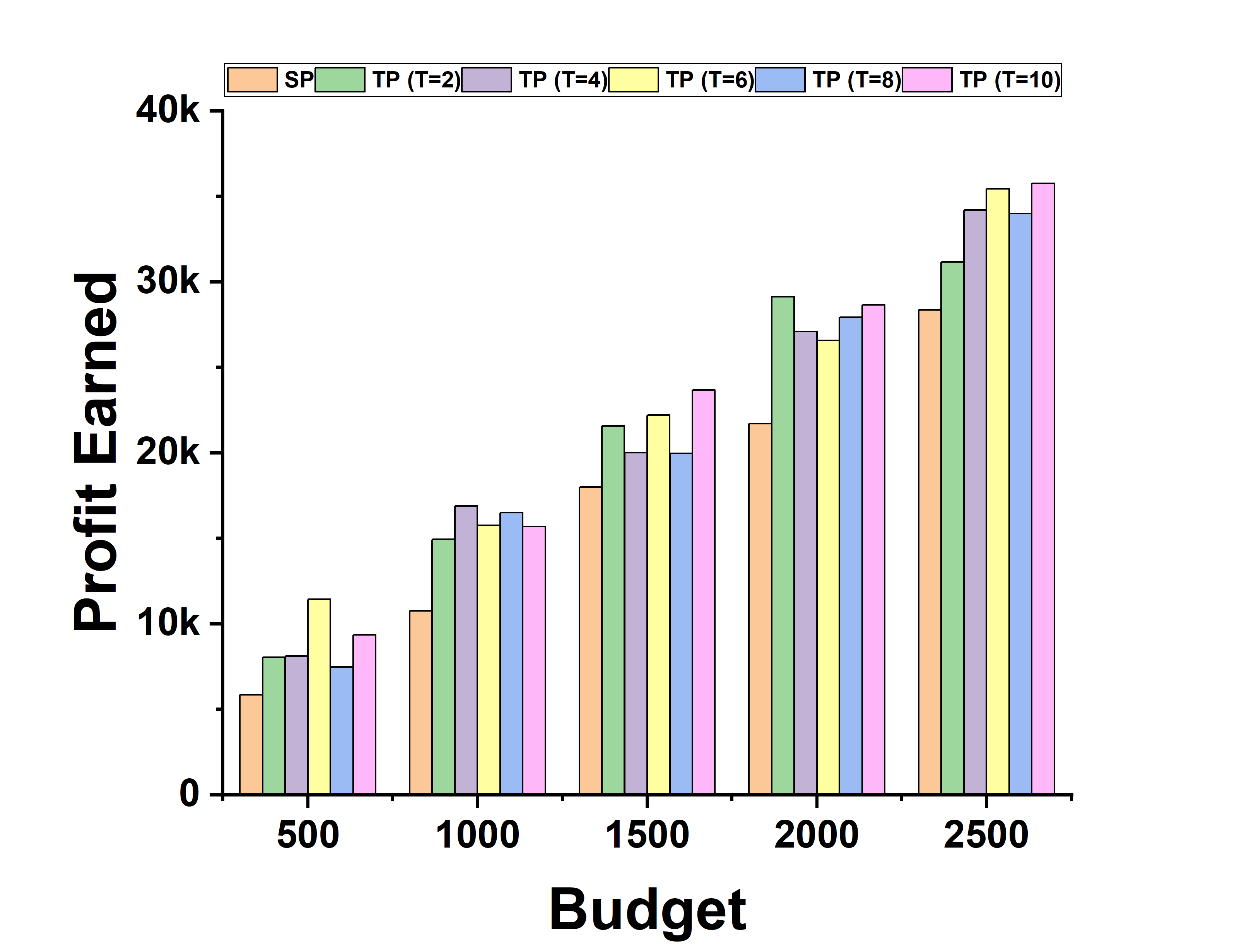}
        \caption{Random}
    \end{subfigure} &
    \begin{subfigure}[t]{0.22\textwidth}
        \includegraphics[width=\linewidth]{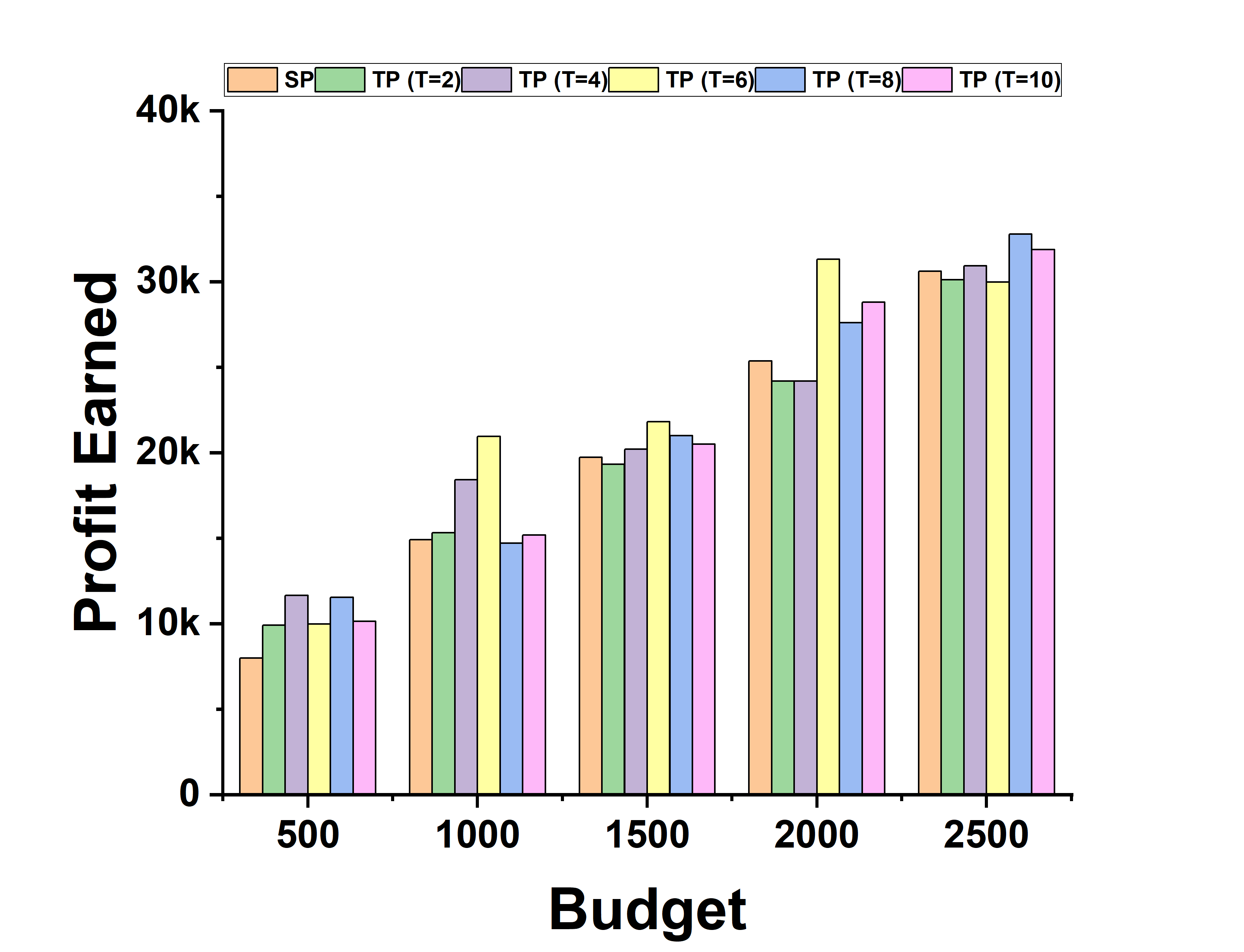}
        \caption{High Degree}
    \end{subfigure} &
    \begin{subfigure}[t]{0.22\textwidth}
        \includegraphics[width=\linewidth]{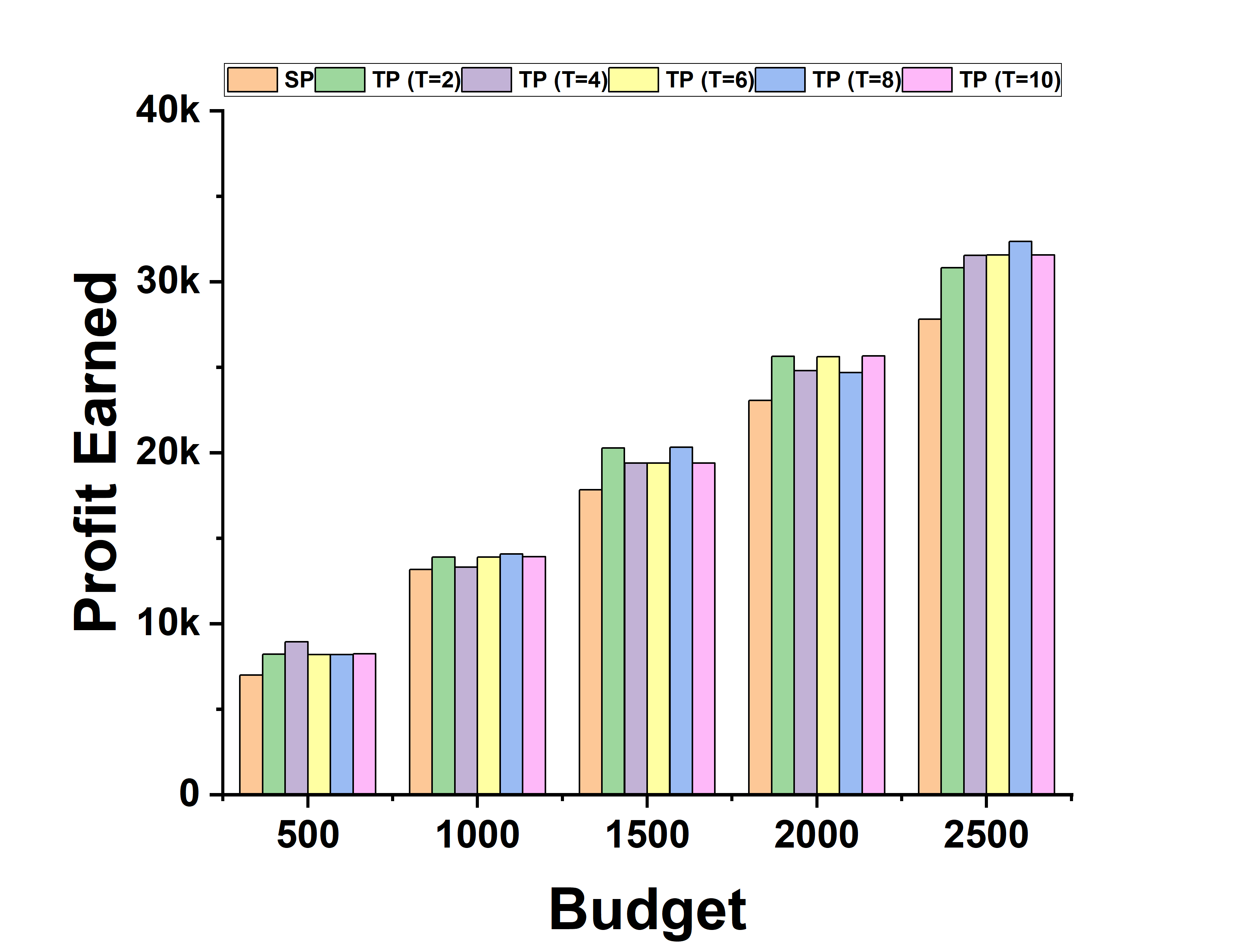}
        \caption{Clustering\\Coefficient}
    \end{subfigure} &
    \begin{subfigure}[t]{0.22\textwidth}
        \includegraphics[width=\linewidth]{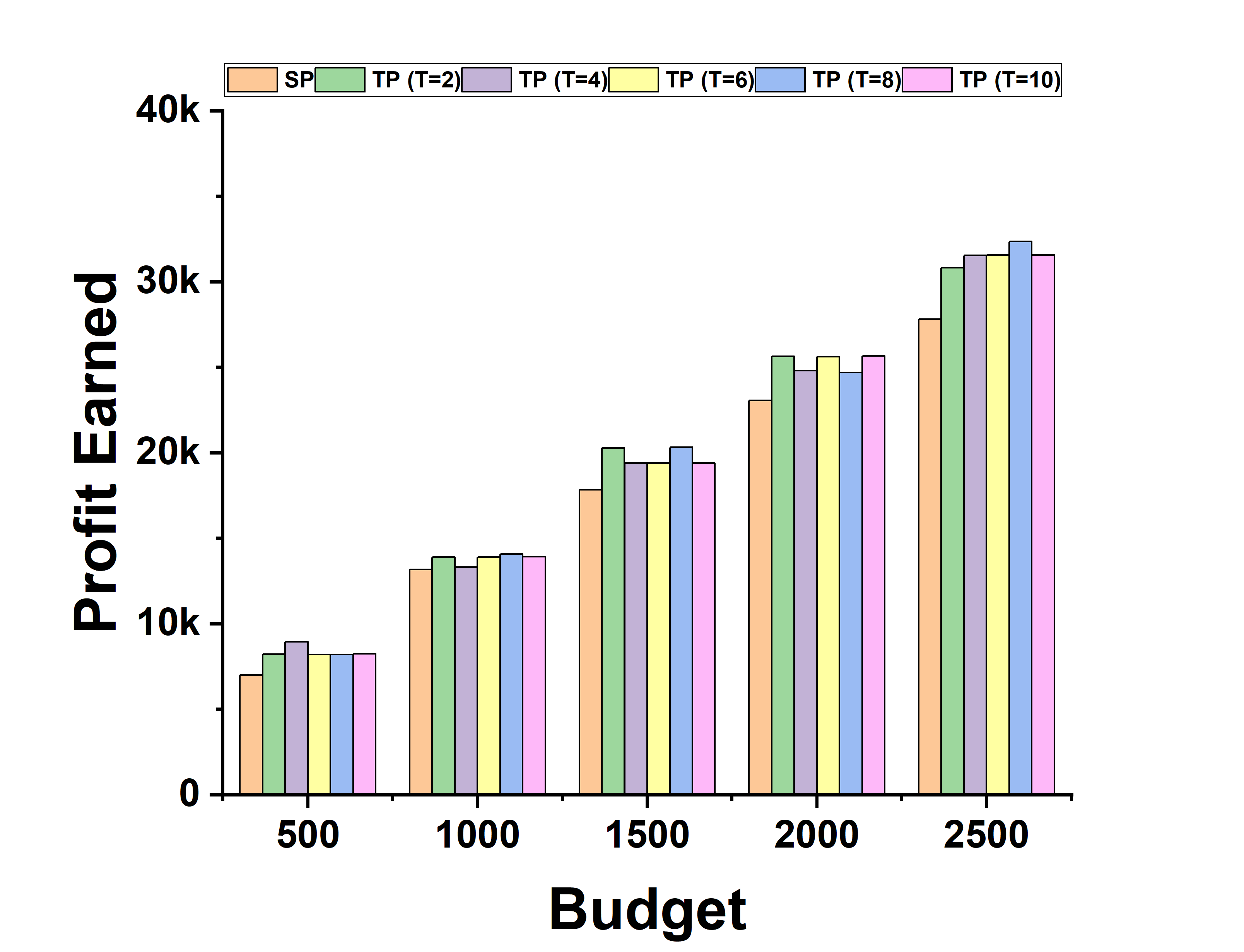}
        \caption{Degree Discount}
    \end{subfigure} \\[6pt]

    \begin{subfigure}[t]{0.22\textwidth}
        \includegraphics[width=\linewidth]{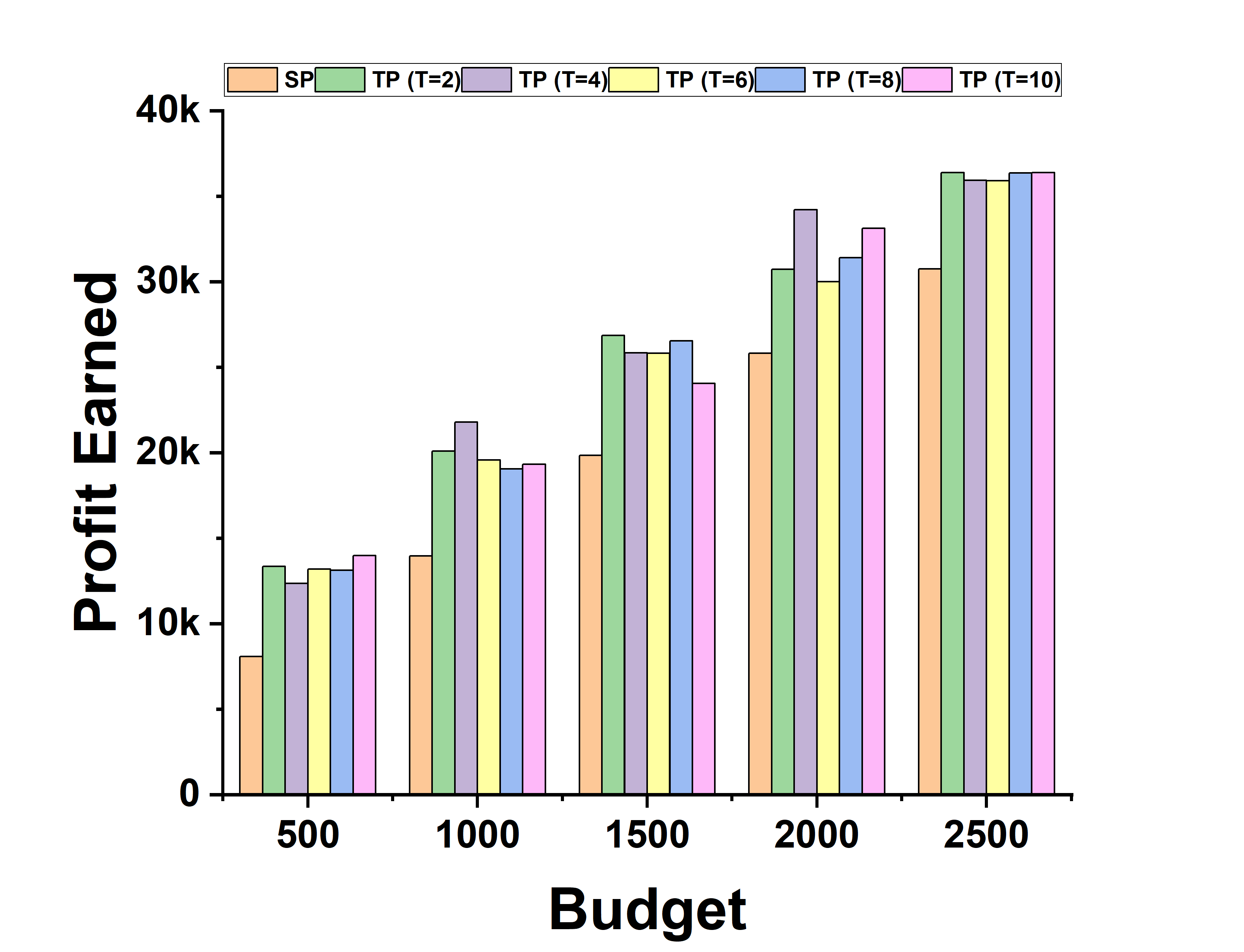}
        \caption{Single Discount}
    \end{subfigure} &
    \begin{subfigure}[t]{0.22\textwidth}
        \includegraphics[width=\linewidth]{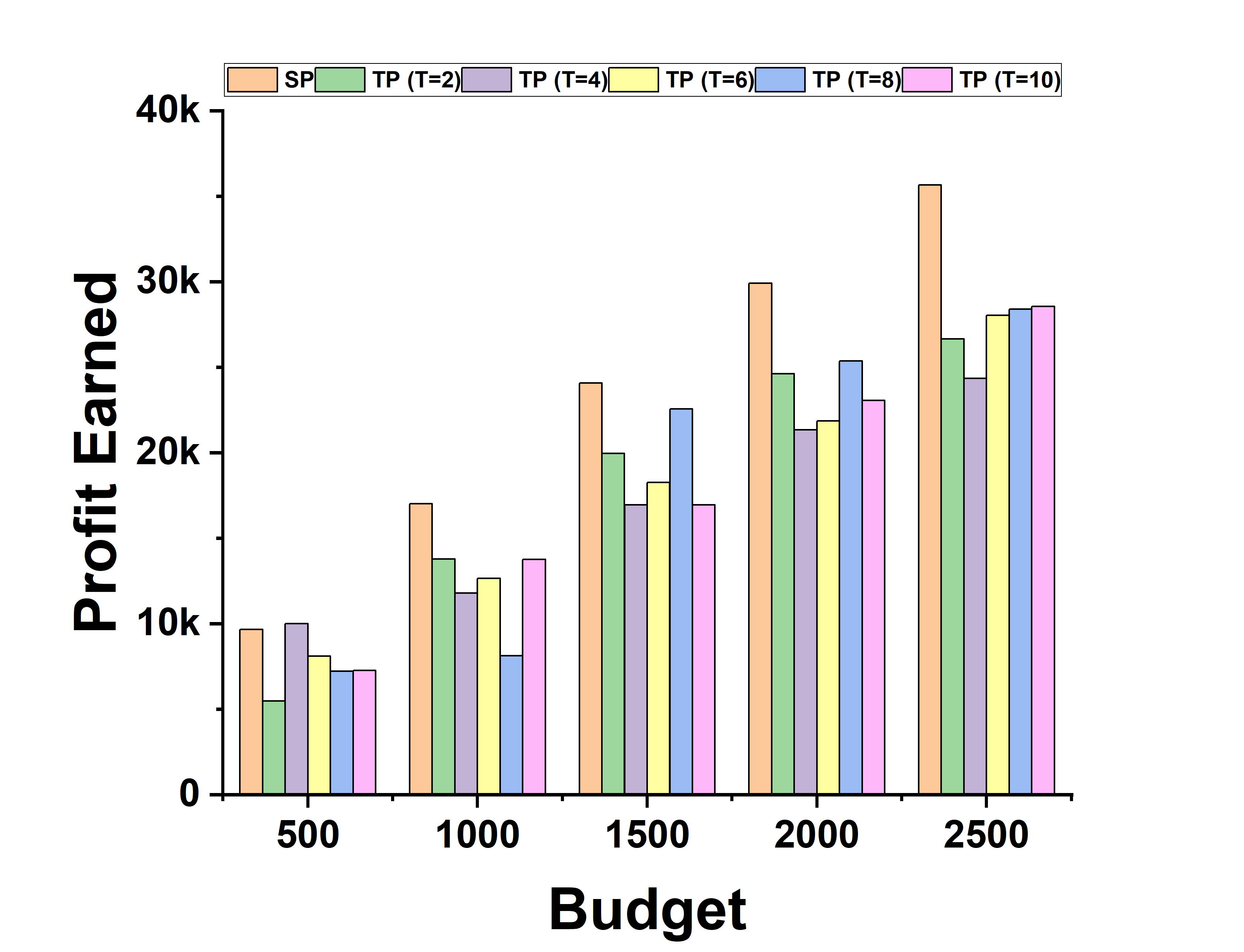}
        \caption{Simple Greedy}
    \end{subfigure} &
    \begin{subfigure}[t]{0.22\textwidth}
        \includegraphics[width=\linewidth]{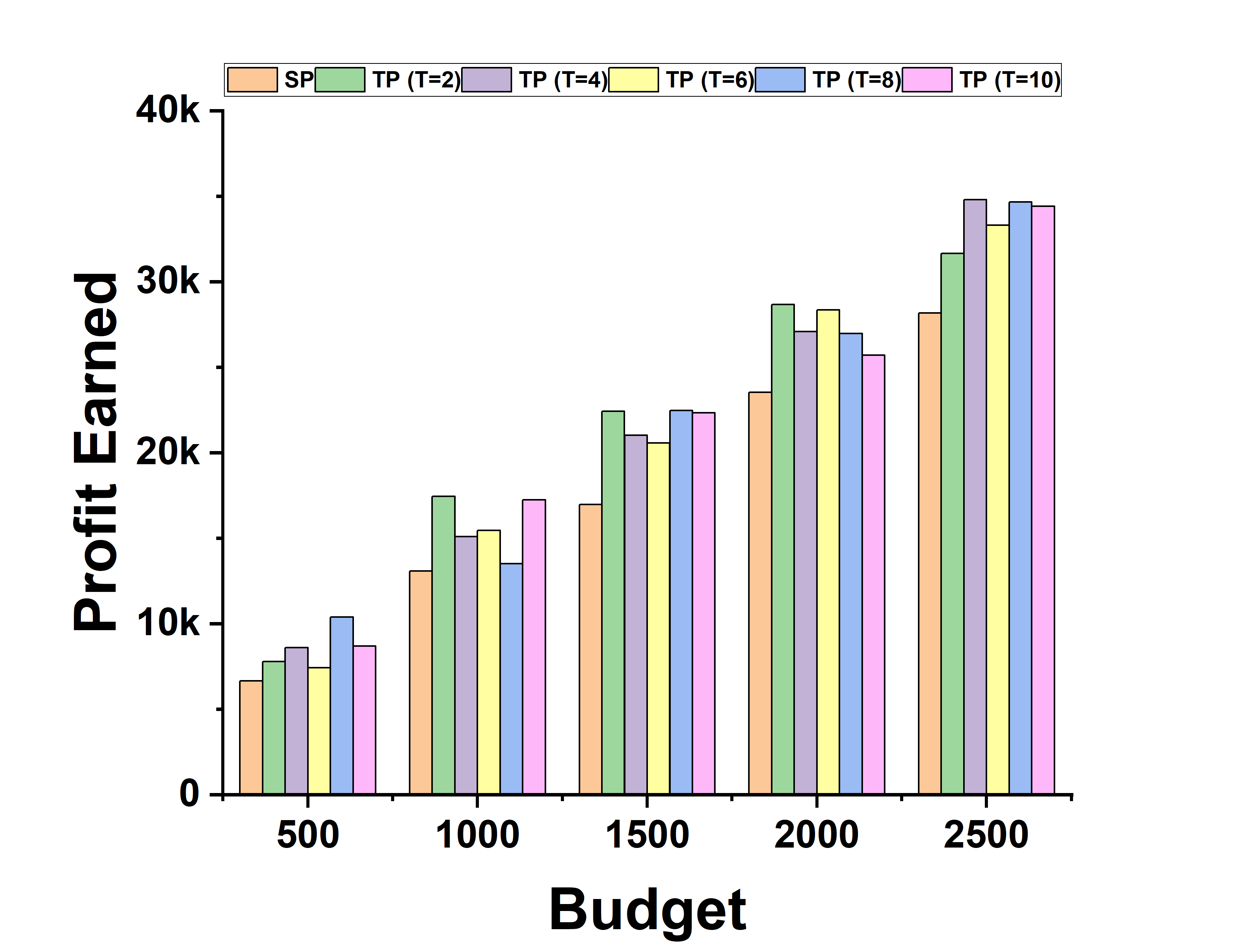}
        \caption{Double Greedy}
    \end{subfigure} &
    \begin{subfigure}[t]{0.22\textwidth}
        \includegraphics[width=\linewidth]{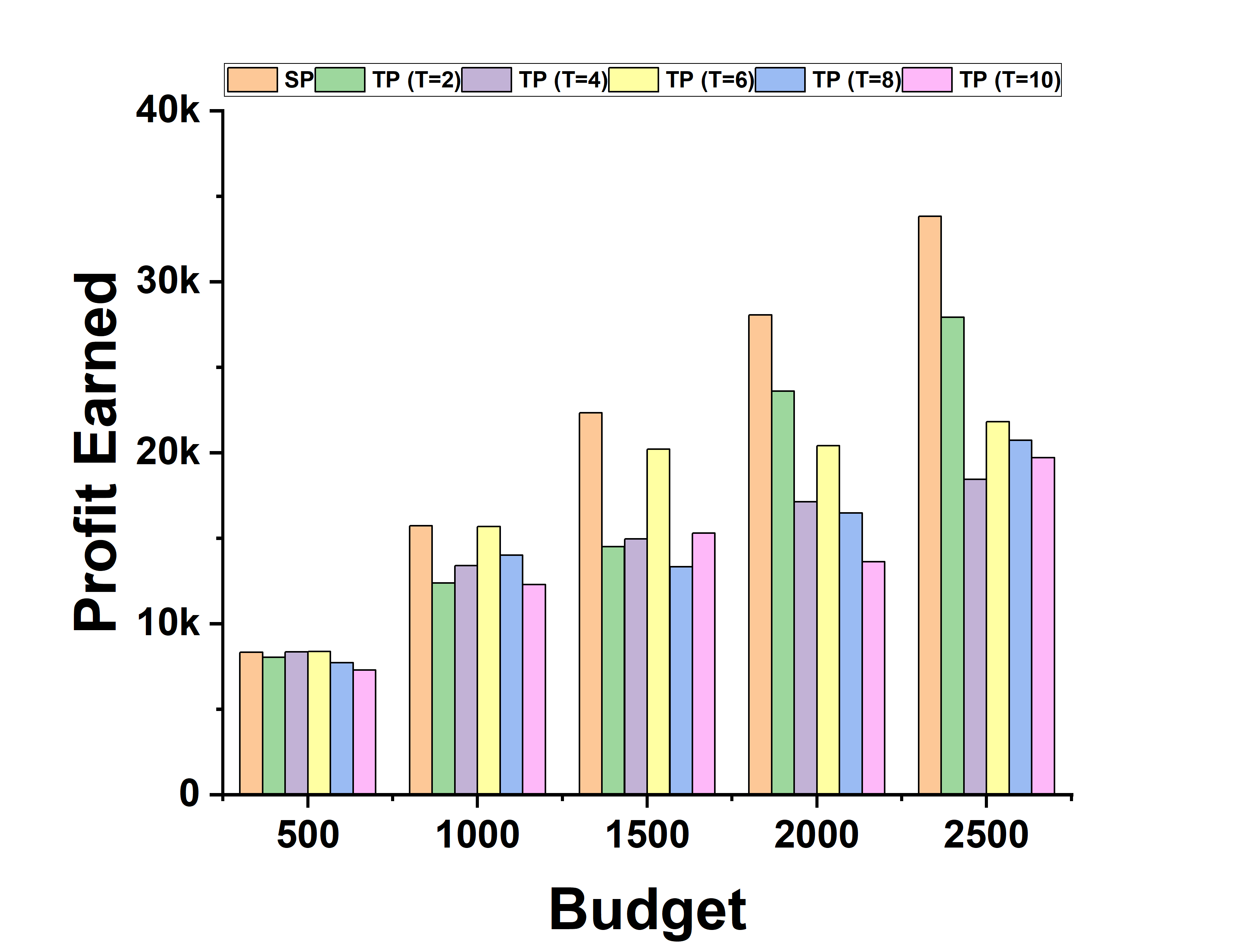}
        \caption{Stochastic Greedy}
    \end{subfigure}
\end{tabular}
\caption{Profit Earned in Single Phase Vs. Two Phase setting (split ratio 50\%, Probability Setting - Trivalency, \textit{LM} Dataset)}
\label{Fig:RQ1LM_T3}
\end{figure}

\begin{figure}[htbp]
\centering
\captionsetup[sub]{font=footnotesize}
\begin{tabular}{cccc}
    \begin{subfigure}[t]{0.22\textwidth}
        \includegraphics[width=\linewidth]{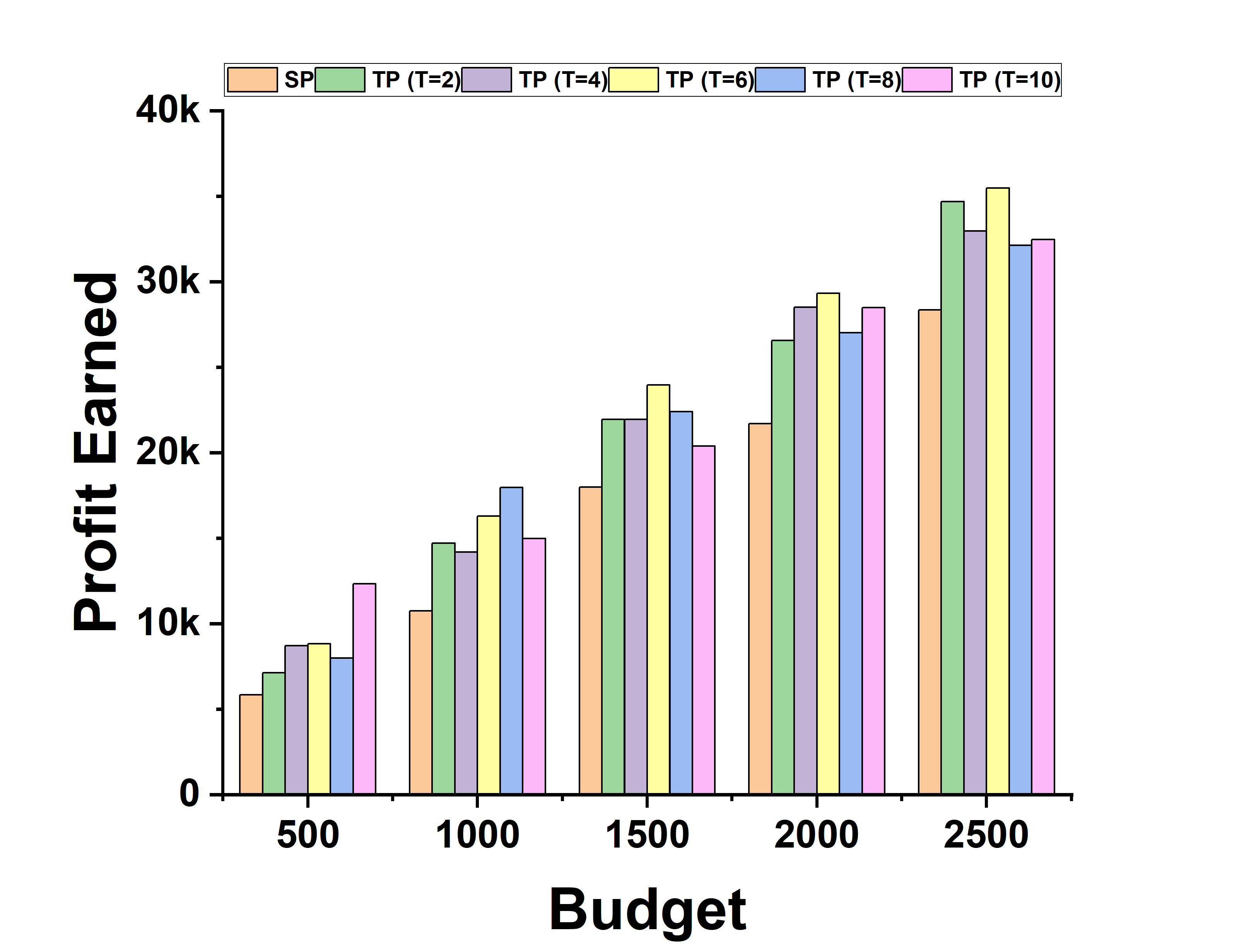}
        \caption{Random}
    \end{subfigure} &
    \begin{subfigure}[t]{0.22\textwidth}
        \includegraphics[width=\linewidth]{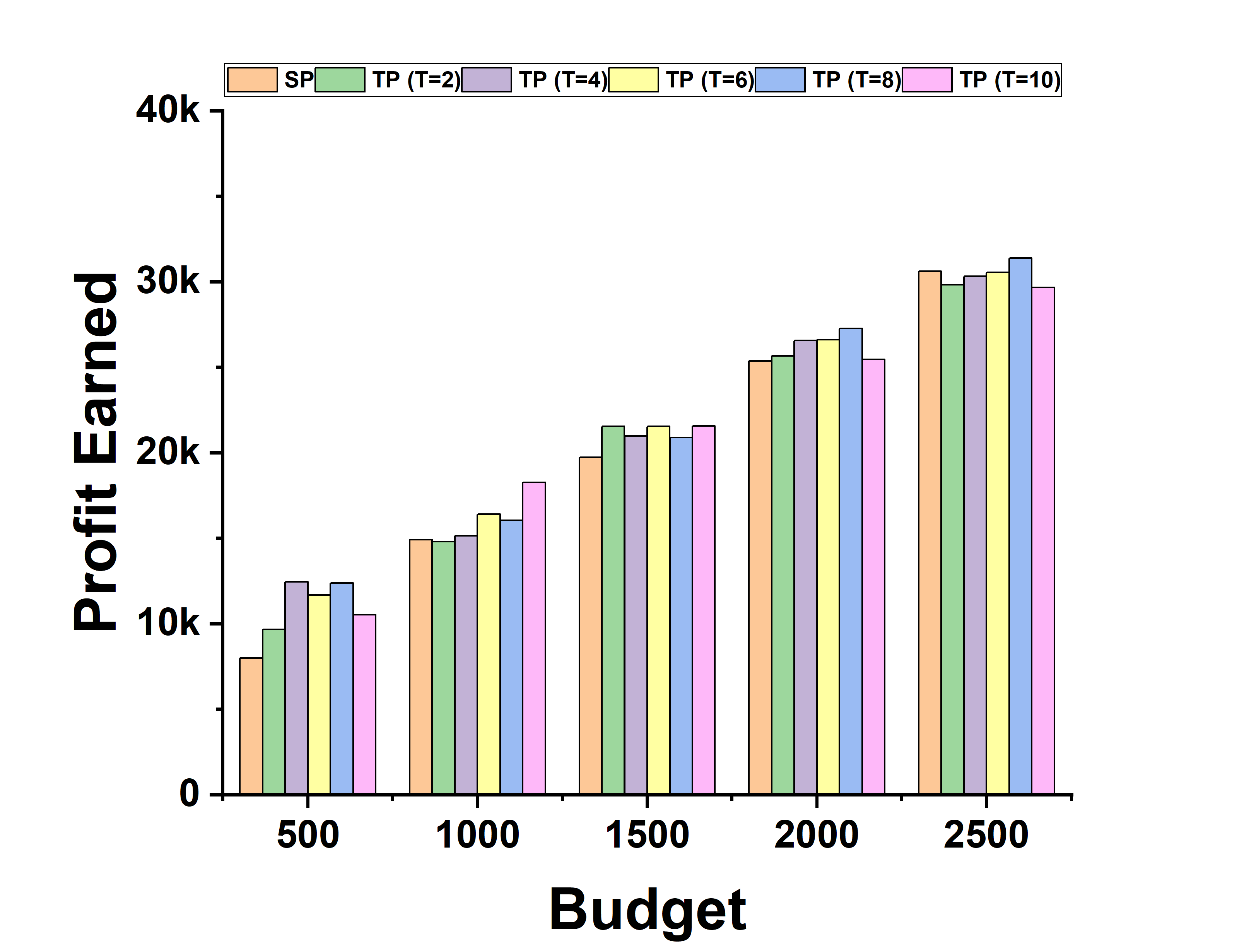}
        \caption{High Degree}
    \end{subfigure} &
    \begin{subfigure}[t]{0.22\textwidth}
        \includegraphics[width=\linewidth]{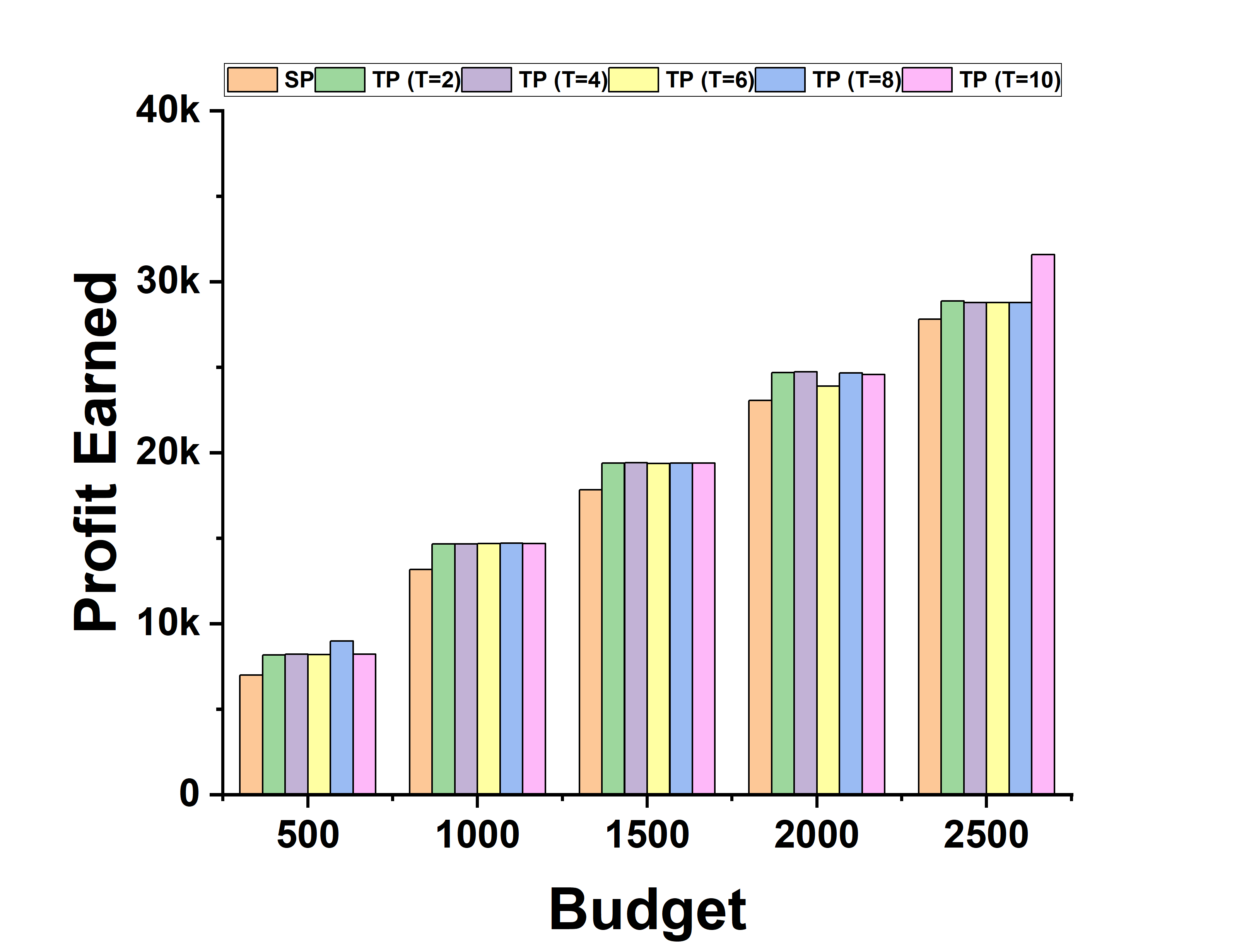}
        \caption{Clustering\\Coefficient}
    \end{subfigure} &
    \begin{subfigure}[t]{0.22\textwidth}
        \includegraphics[width=\linewidth]{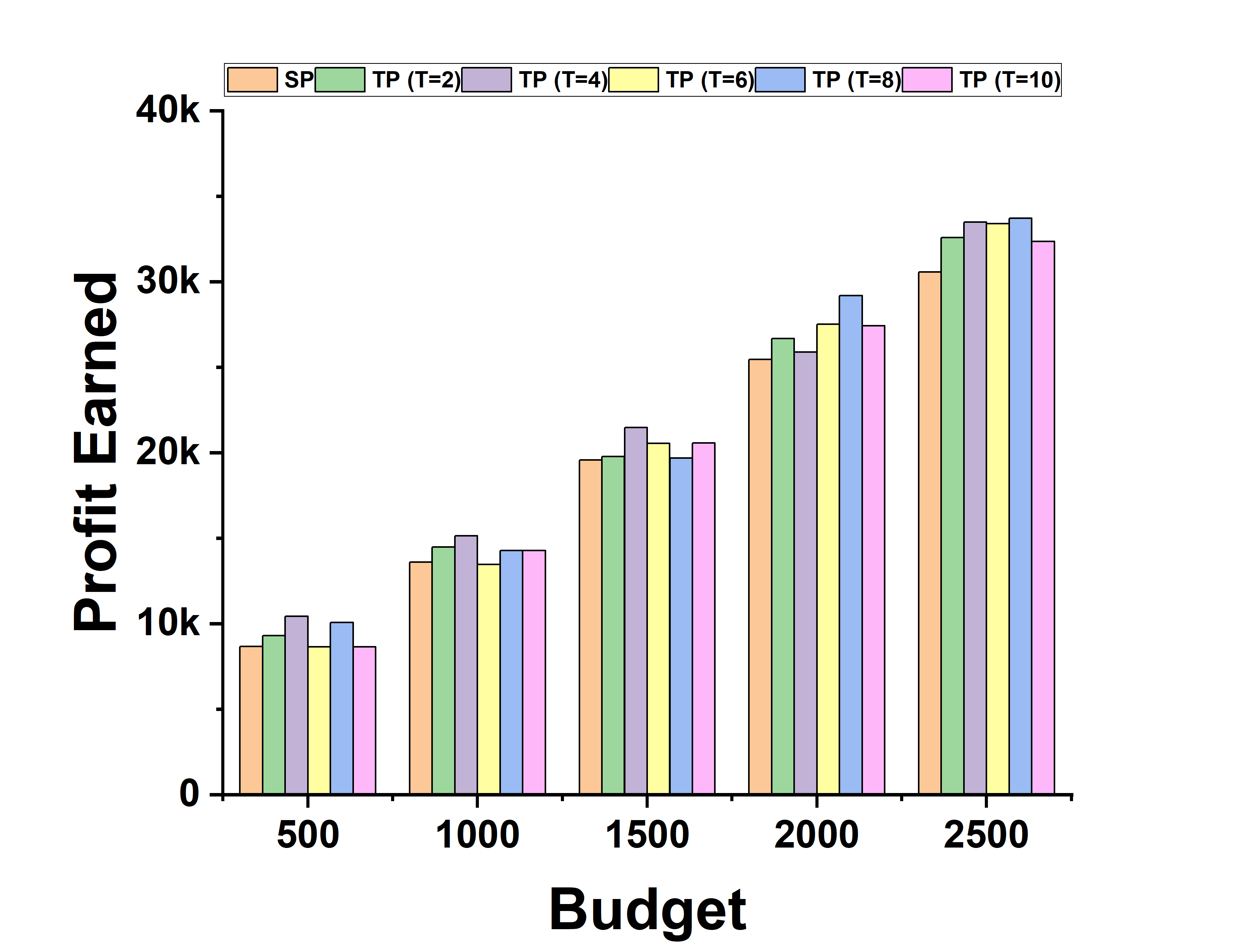}
        \caption{Degree Discount}
    \end{subfigure} \\[6pt]

    \begin{subfigure}[t]{0.22\textwidth}
        \includegraphics[width=\linewidth]{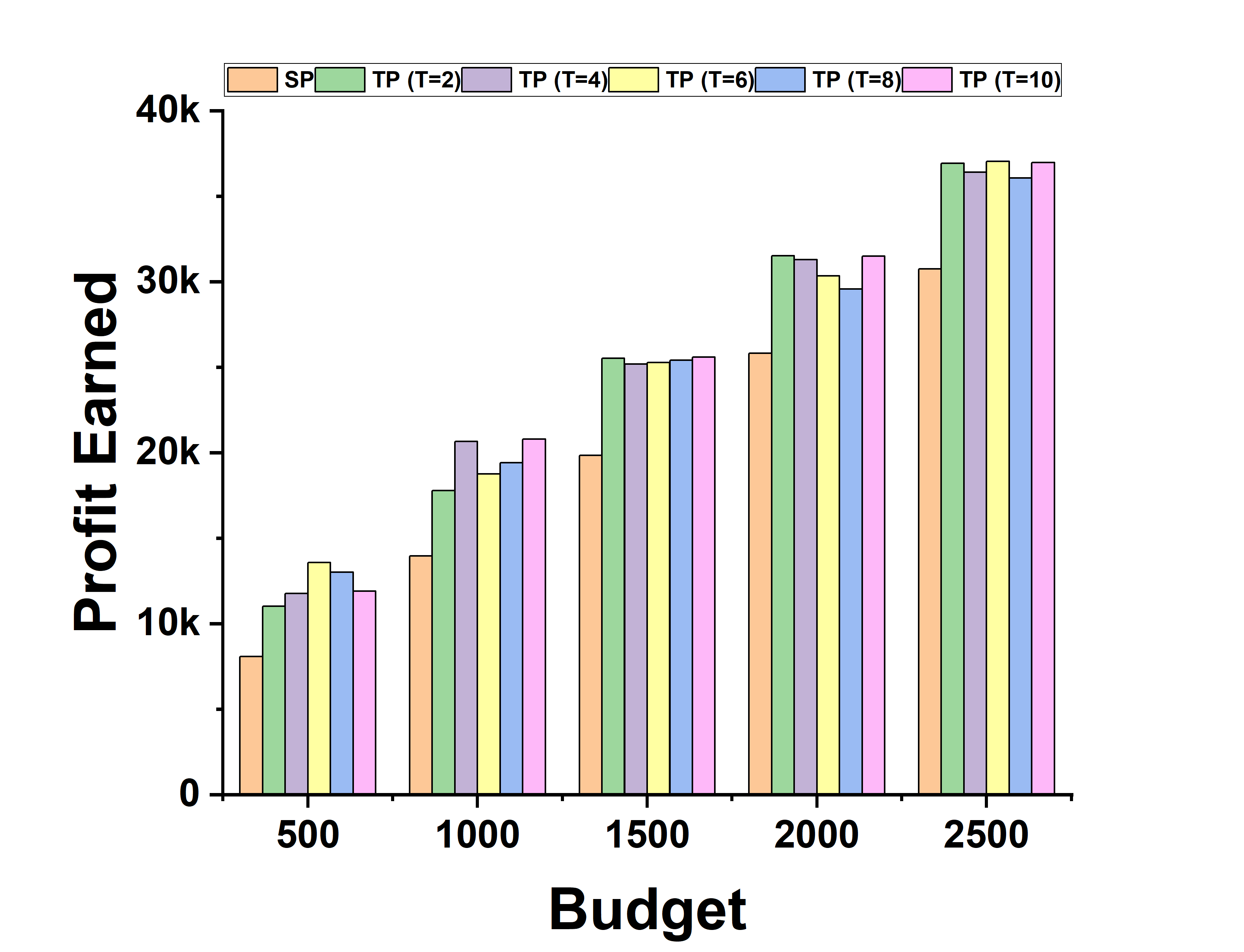}
        \caption{Single Discount}
    \end{subfigure} &
    \begin{subfigure}[t]{0.22\textwidth}
        \includegraphics[width=\linewidth]{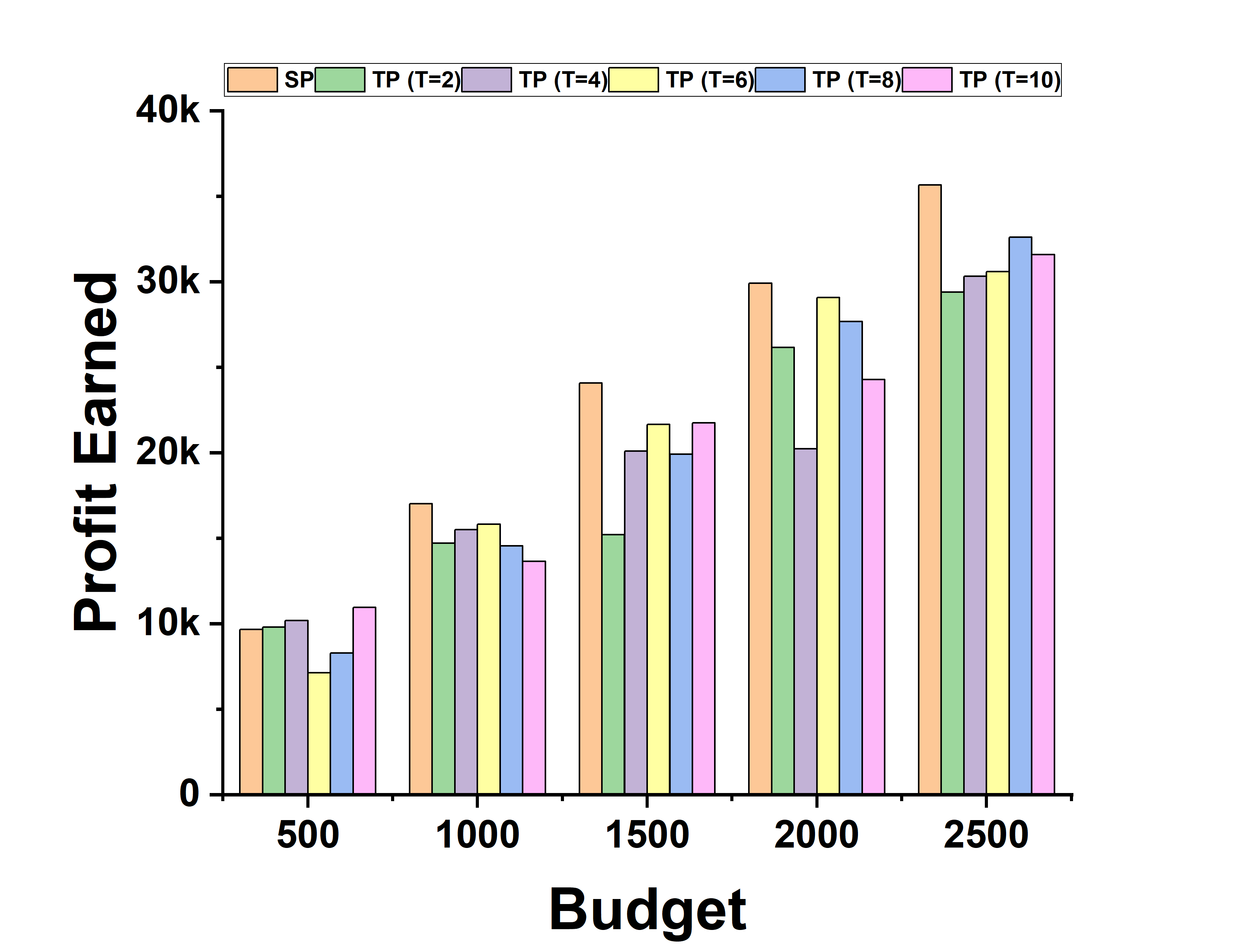}
        \caption{Simple Greedy}
    \end{subfigure} &
    \begin{subfigure}[t]{0.22\textwidth}
        \includegraphics[width=\linewidth]{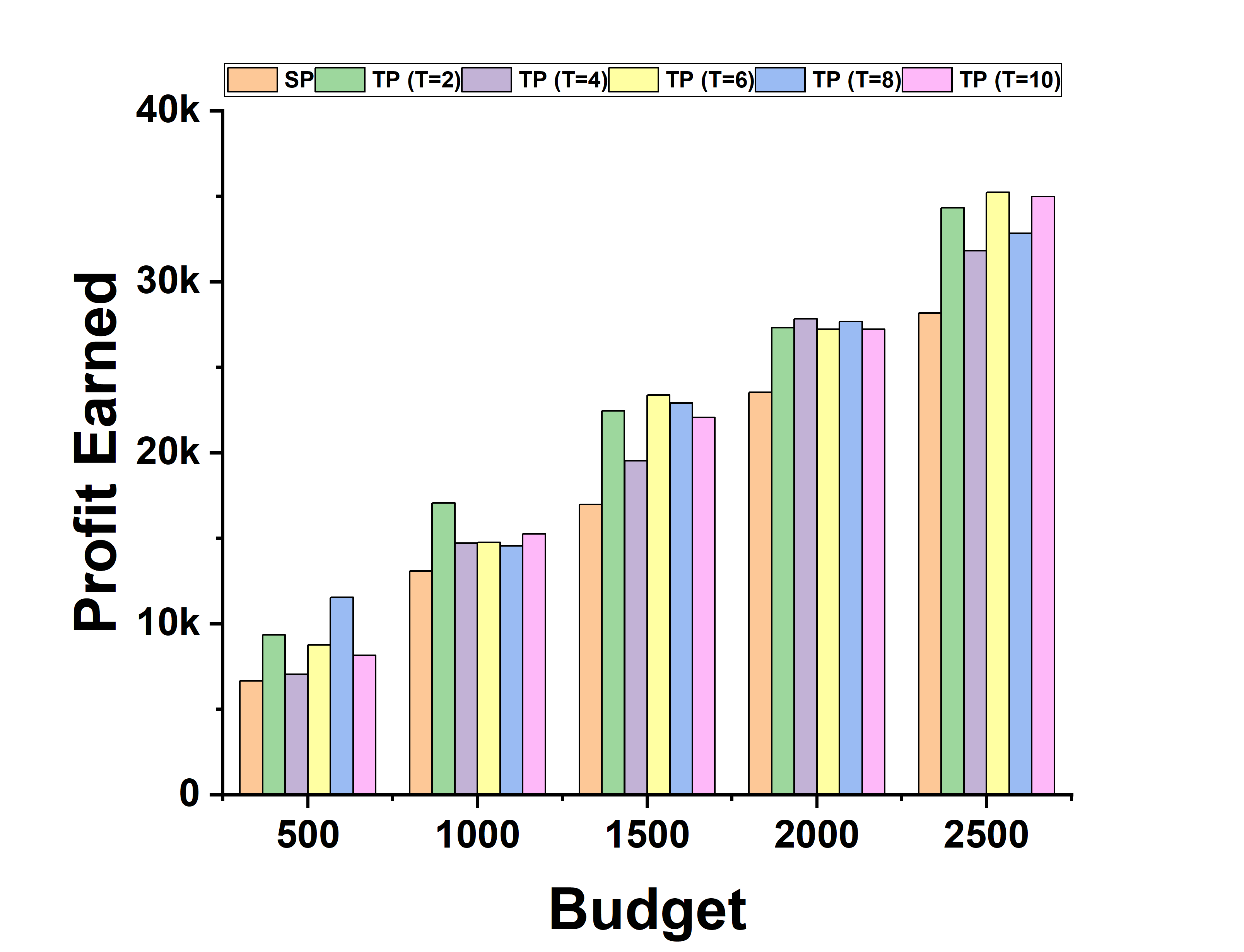}
        \caption{Double Greedy}
    \end{subfigure} &
    \begin{subfigure}[t]{0.22\textwidth}
        \includegraphics[width=\linewidth]{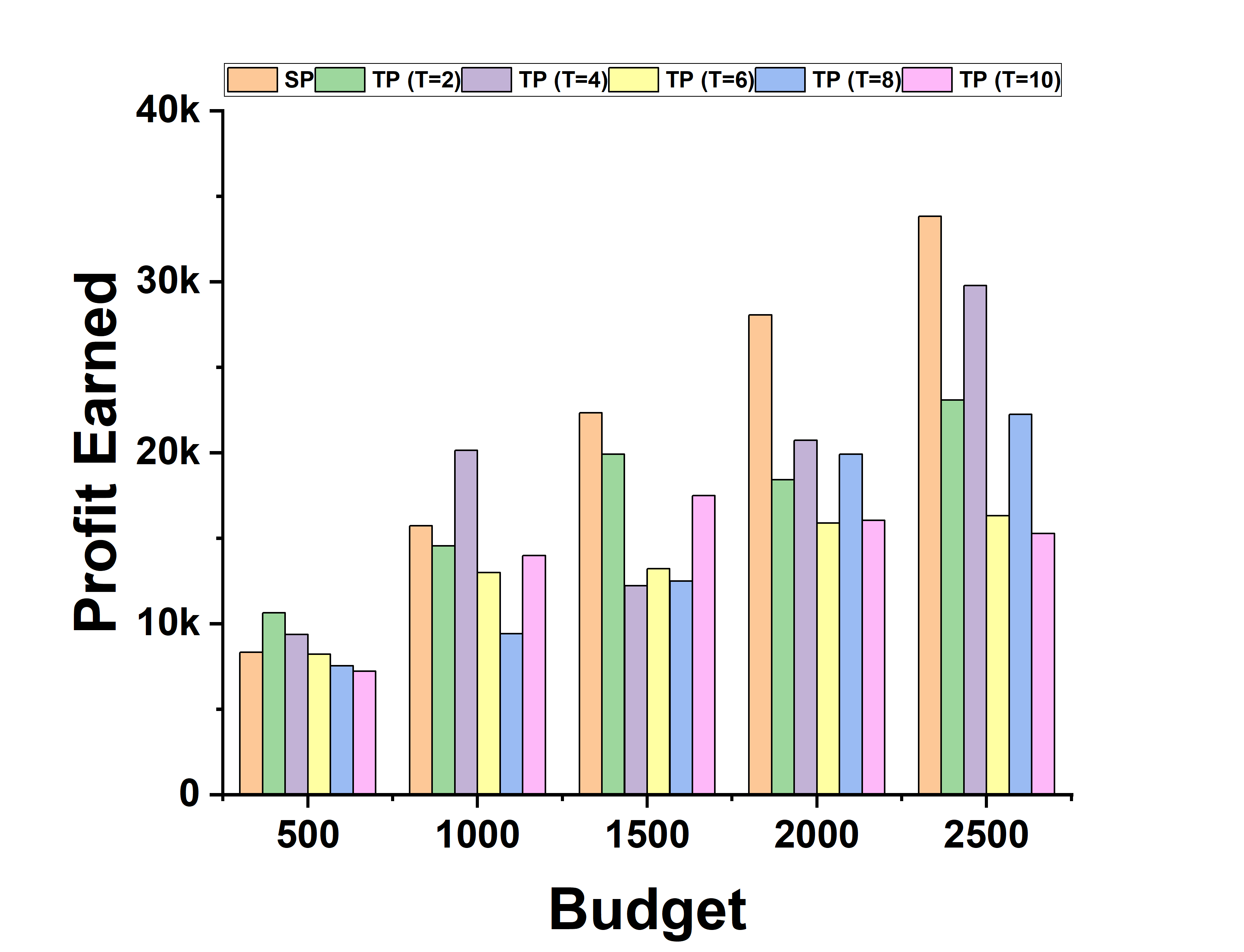}
        \caption{Stochastic Greedy}
    \end{subfigure}
\end{tabular}
\caption{Profit Earned in Single Phase Vs. Two Phase setting (split ratio 70\%, Probability Setting - Trivalency, \textit{LM} Dataset)}
\label{Fig:RQ1LM_T4}
\end{figure}

\begin{figure}[htbp]
\centering
\captionsetup[sub]{font=footnotesize}
\begin{tabular}{cccc}
    \begin{subfigure}[t]{0.22\textwidth}
        \includegraphics[width=\linewidth]{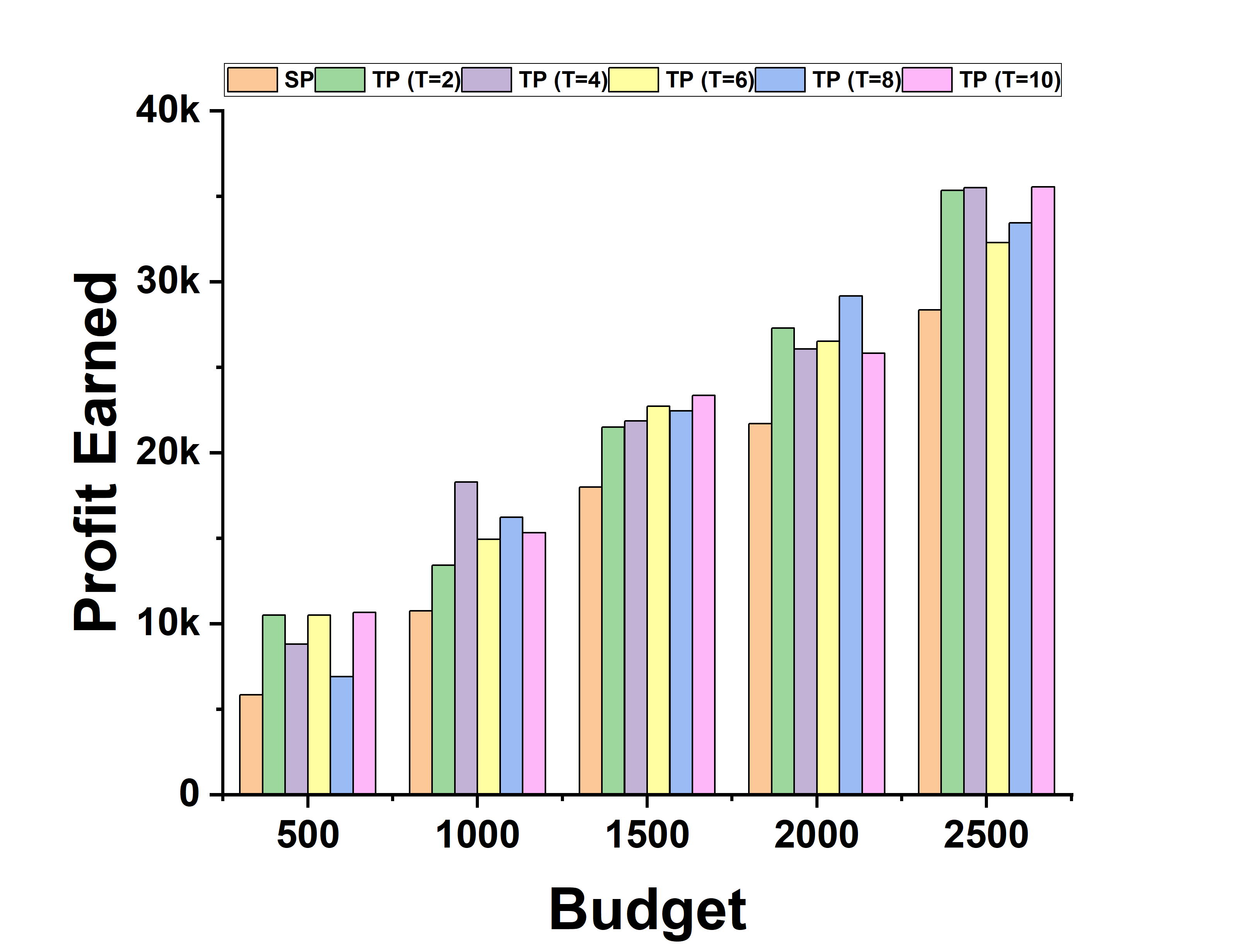}
        \caption{Random}
    \end{subfigure} &
    \begin{subfigure}[t]{0.22\textwidth}
        \includegraphics[width=\linewidth]{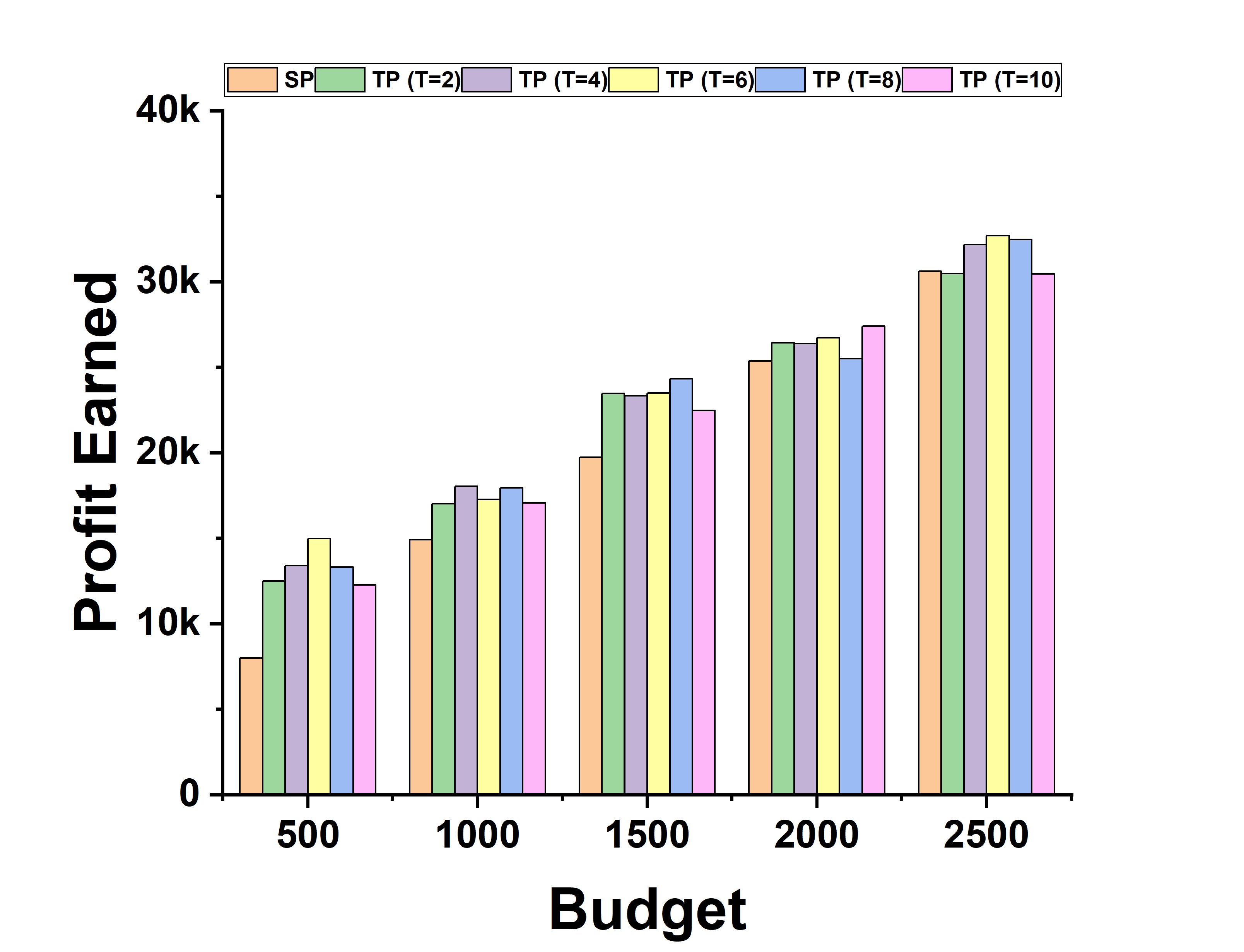}
        \caption{High Degree}
    \end{subfigure} &
    \begin{subfigure}[t]{0.22\textwidth}
        \includegraphics[width=\linewidth]{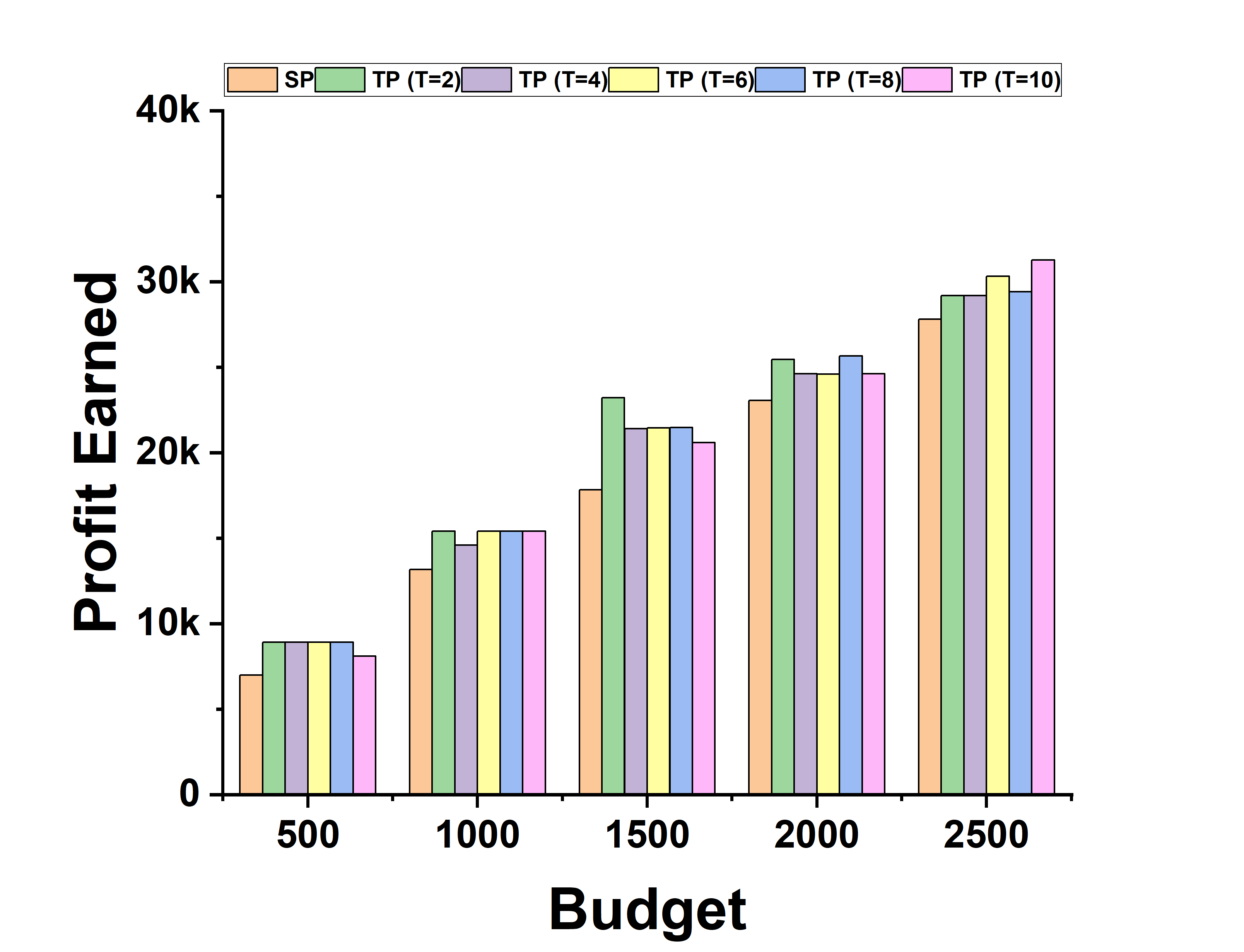}
        \caption{Clustering\\Coefficient}
    \end{subfigure} &
    \begin{subfigure}[t]{0.22\textwidth}
        \includegraphics[width=\linewidth]{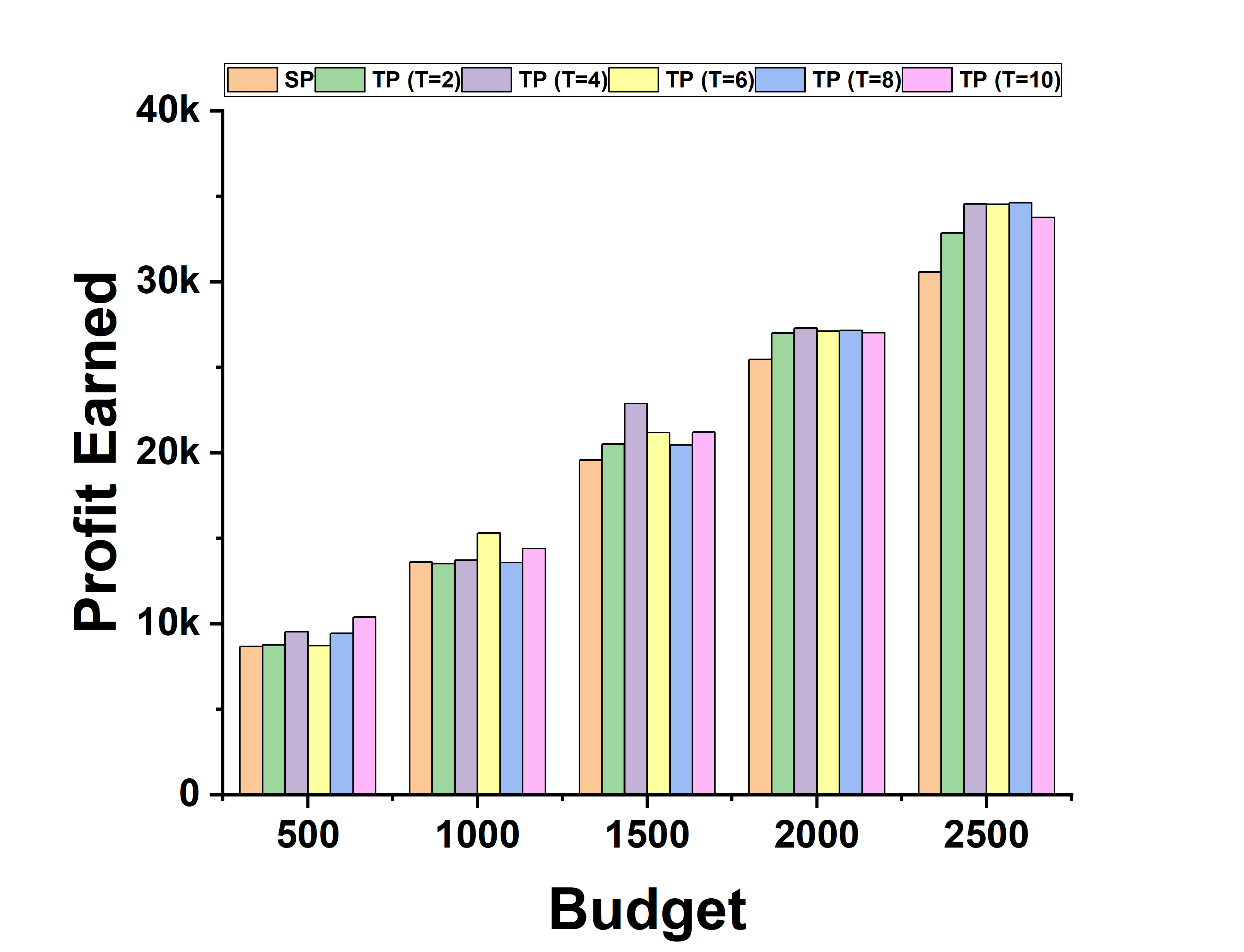}
        \caption{Degree Discount}
    \end{subfigure} \\[6pt]

    \begin{subfigure}[t]{0.22\textwidth}
        \includegraphics[width=\linewidth]{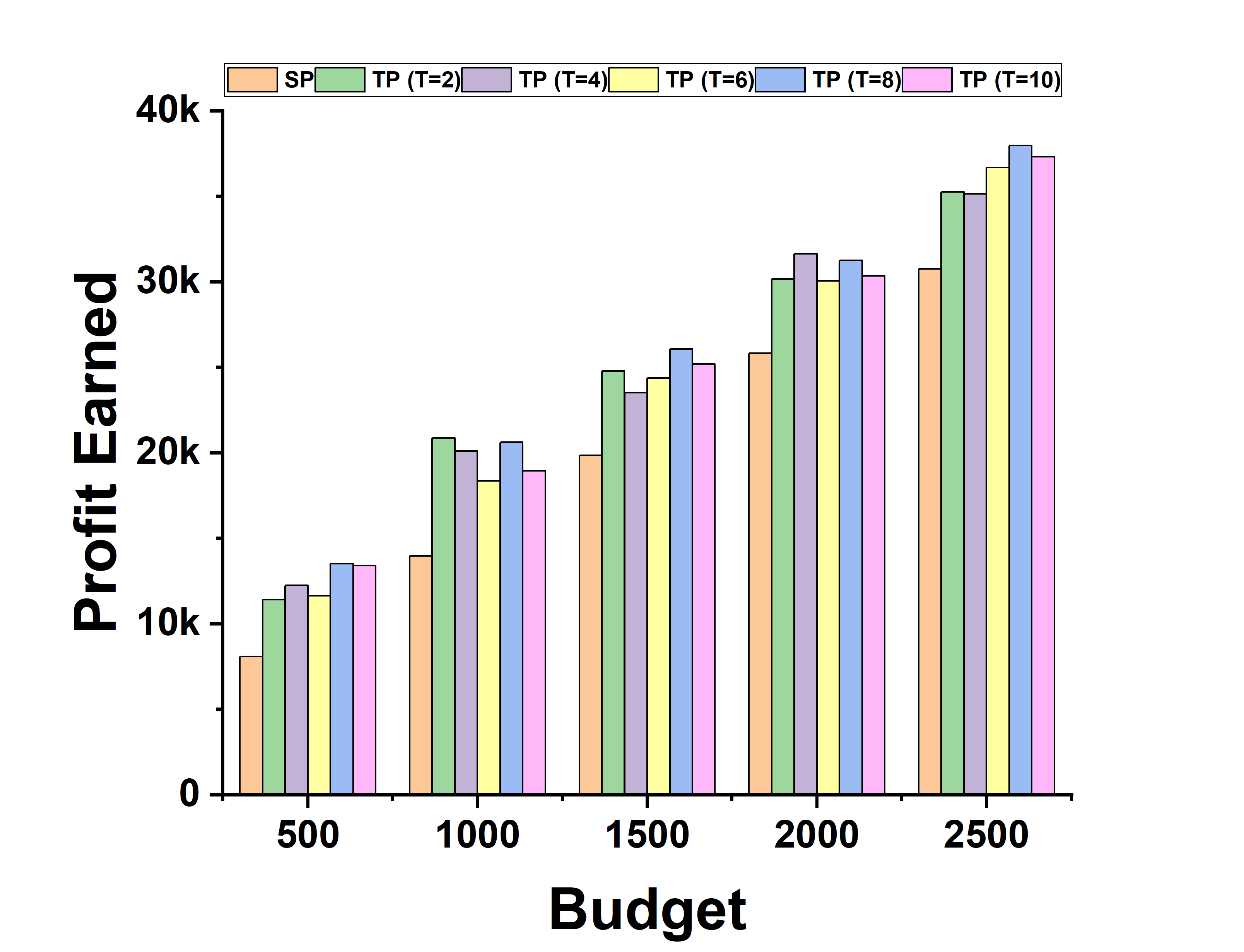}
        \caption{Single Discount}
    \end{subfigure} &
    \begin{subfigure}[t]{0.22\textwidth}
        \includegraphics[width=\linewidth]{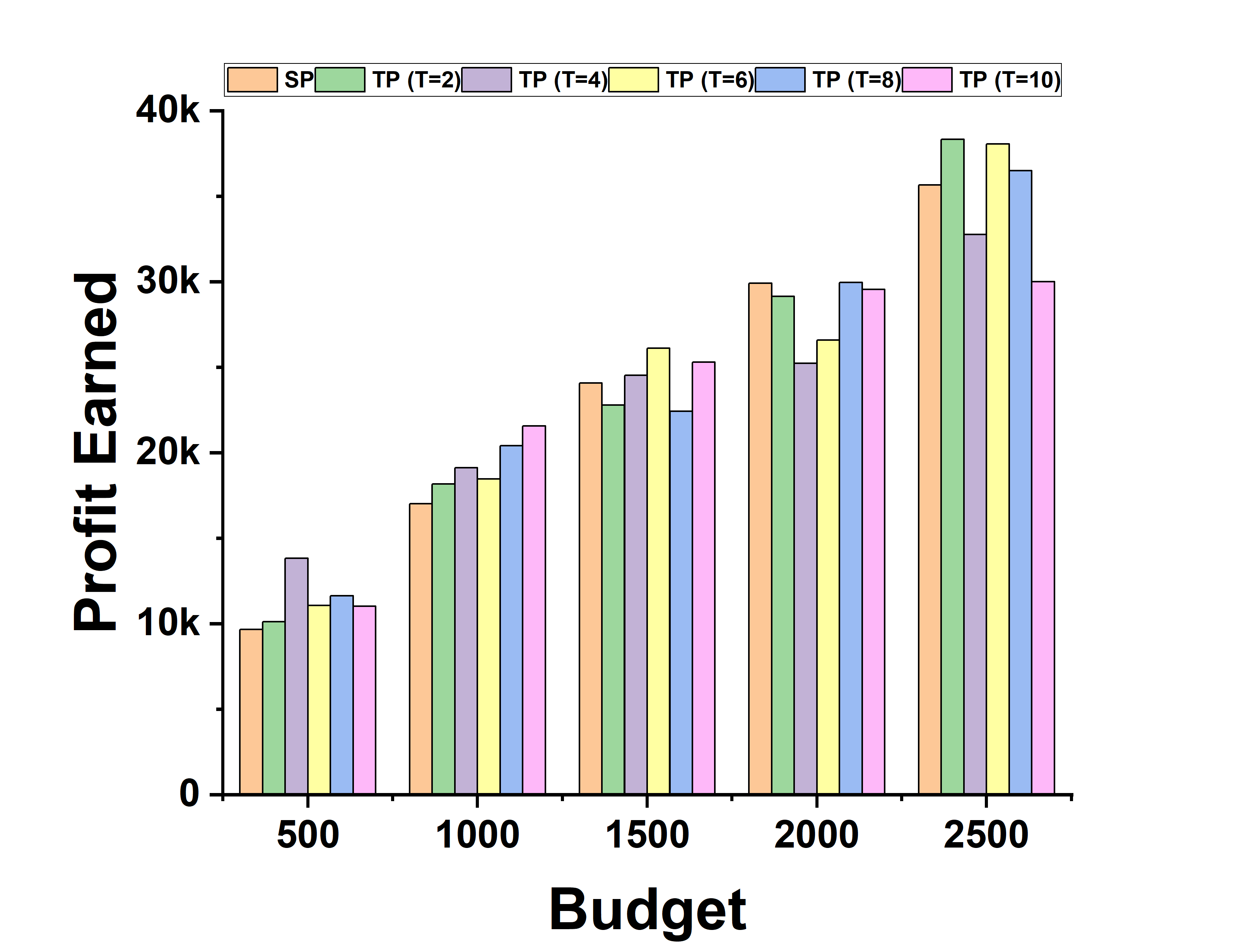}
        \caption{Simple Greedy}
    \end{subfigure} &
    \begin{subfigure}[t]{0.22\textwidth}
        \includegraphics[width=\linewidth]{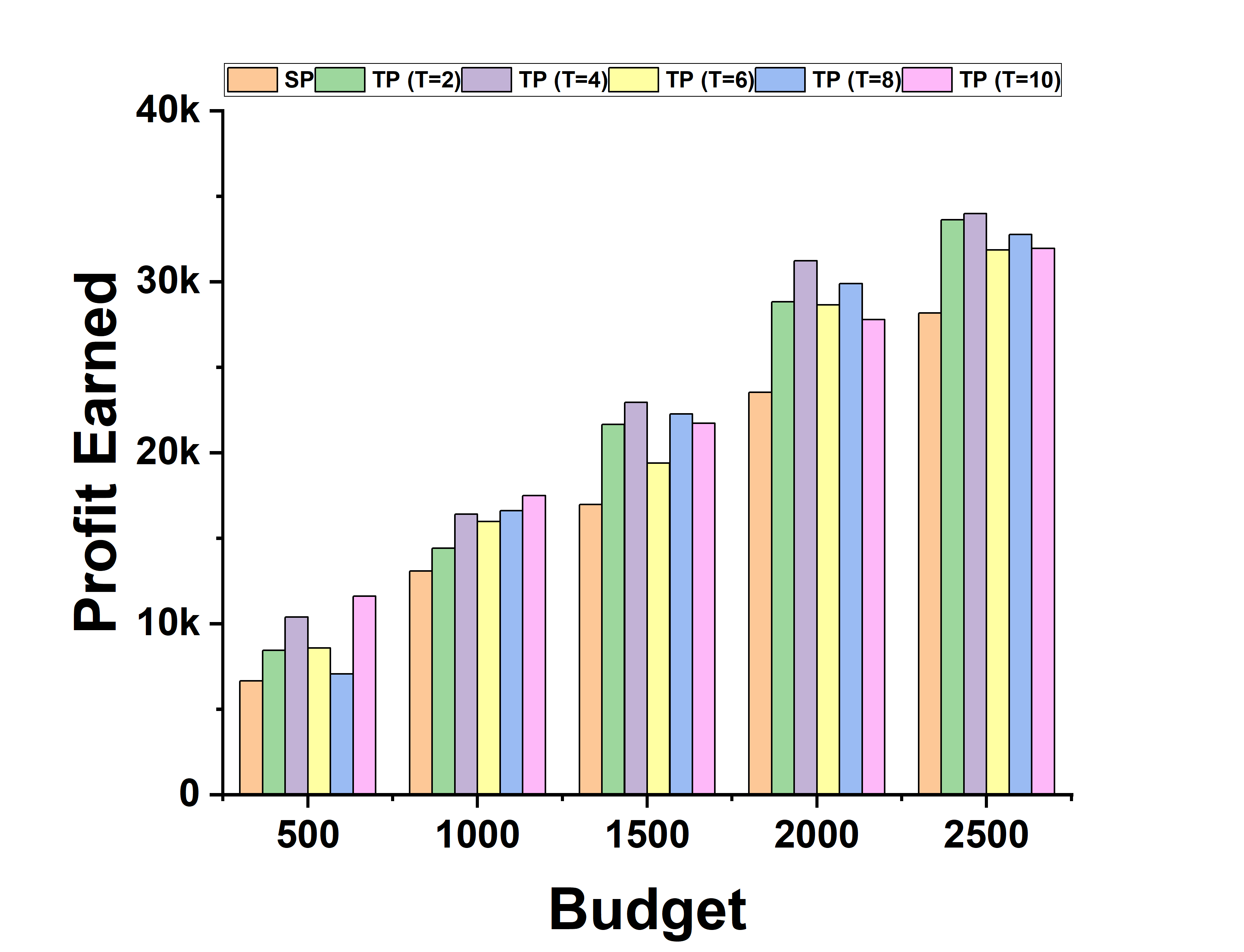}
        \caption{Double Greedy}
    \end{subfigure} &
    \begin{subfigure}[t]{0.22\textwidth}
        \includegraphics[width=\linewidth]{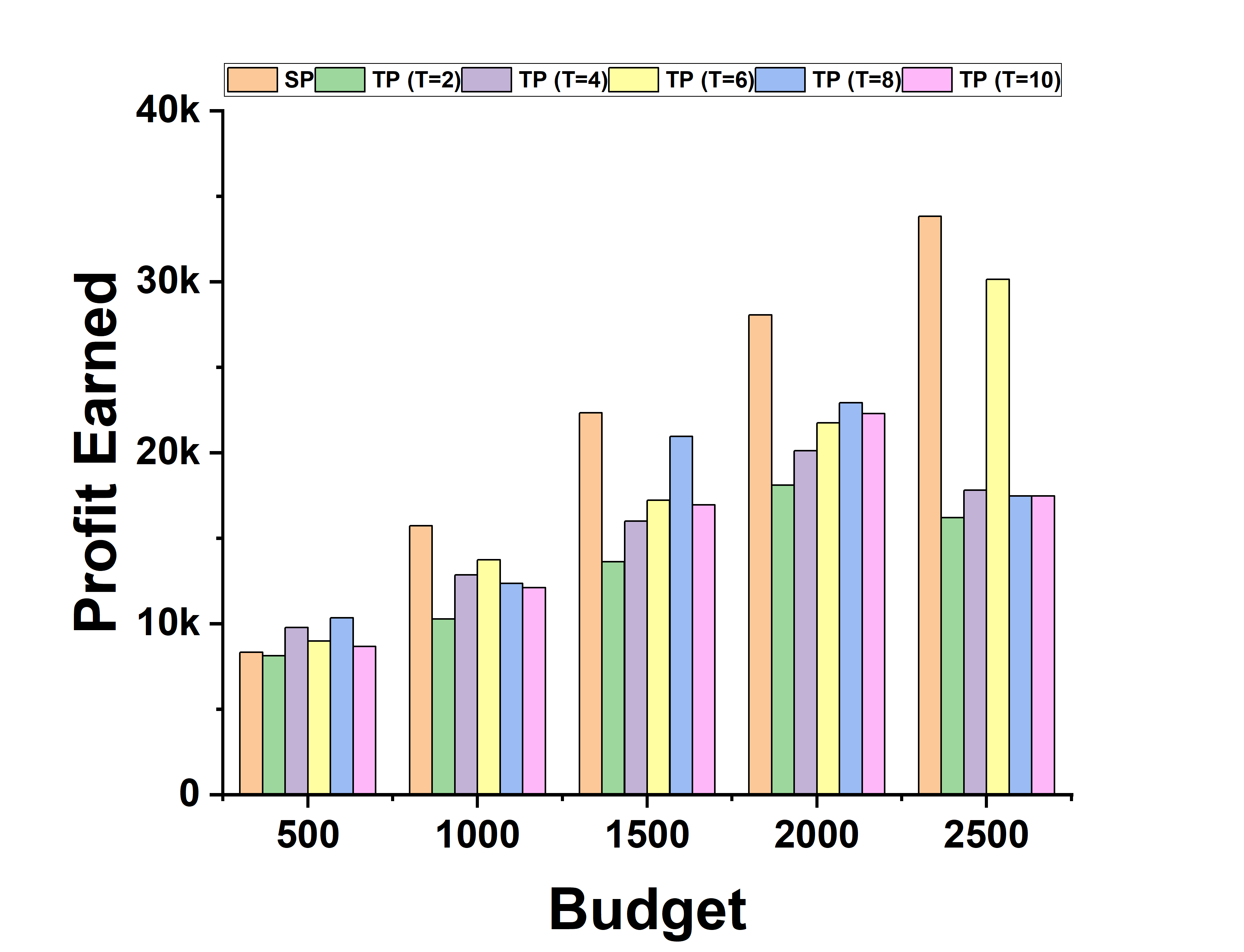}
        \caption{Stochastic Greedy}
    \end{subfigure}
\end{tabular}
\caption{Profit Earned in Single Phase Vs. Two Phase setting (split ratio 90\%, Probability Setting - Trivalency, \textit{LM} Dataset)}
\label{Fig:RQ1LM_T5}
\end{figure}


\begin{figure}[htbp]
\centering
\captionsetup[sub]{font=footnotesize}  
\begin{tabular}{cccc}
    \begin{subfigure}[t]{0.22\textwidth}
        \includegraphics[width=\linewidth]{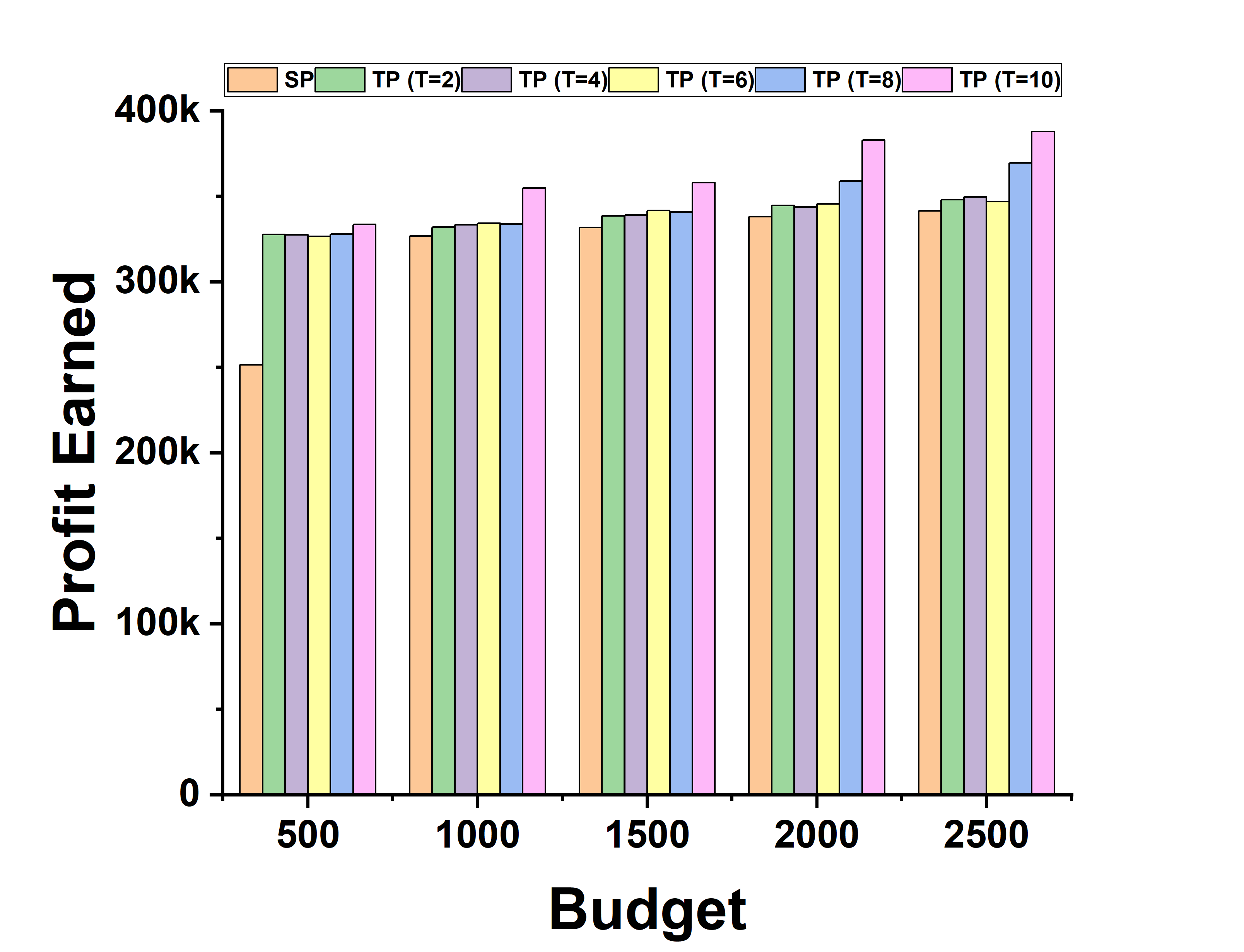}
        \caption{Random}
    \end{subfigure} &
    \begin{subfigure}[t]{0.22\textwidth}
        \includegraphics[width=\linewidth]{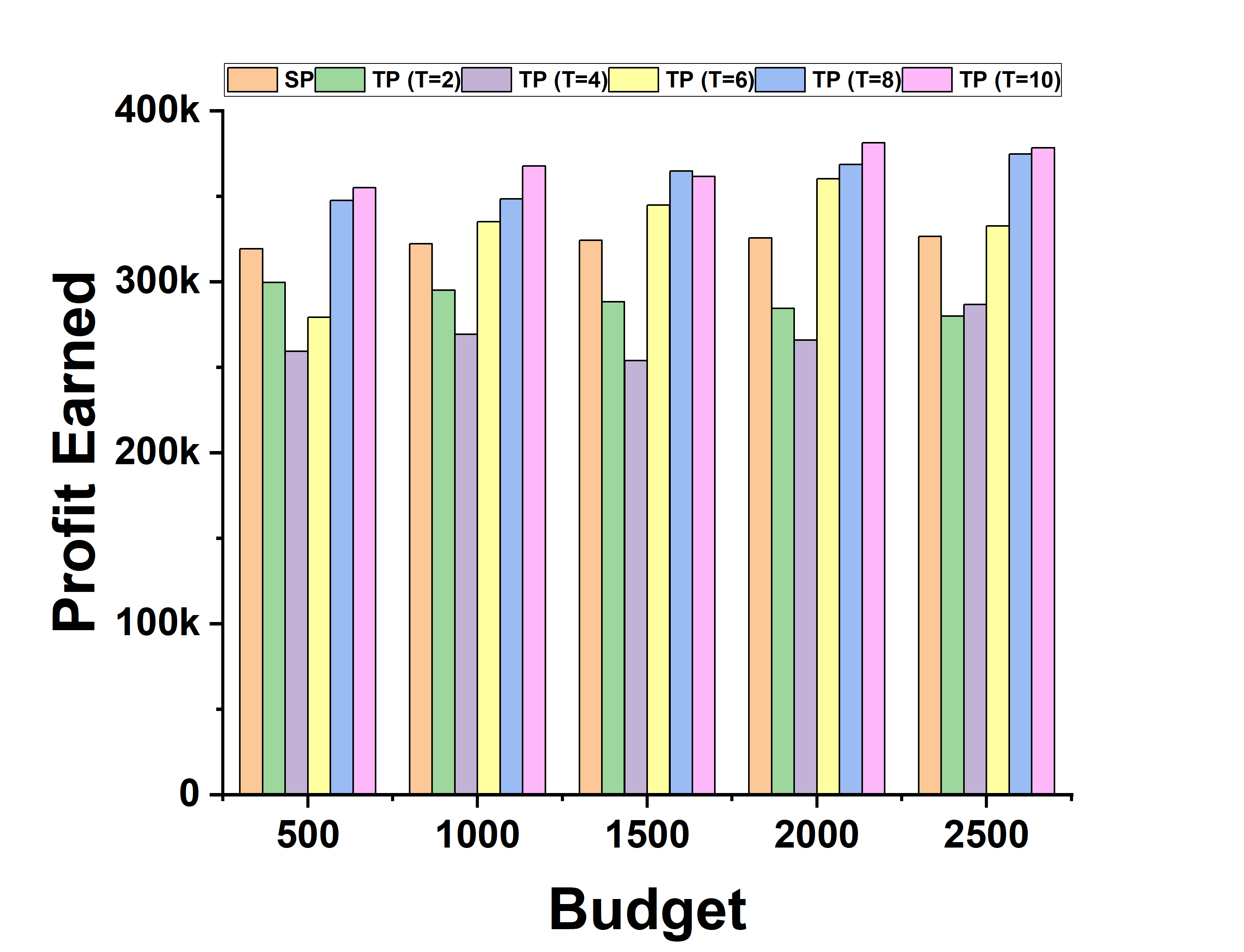}
        \caption{High Degree}
    \end{subfigure} &
    \begin{subfigure}[t]{0.22\textwidth}
        \includegraphics[width=\linewidth]{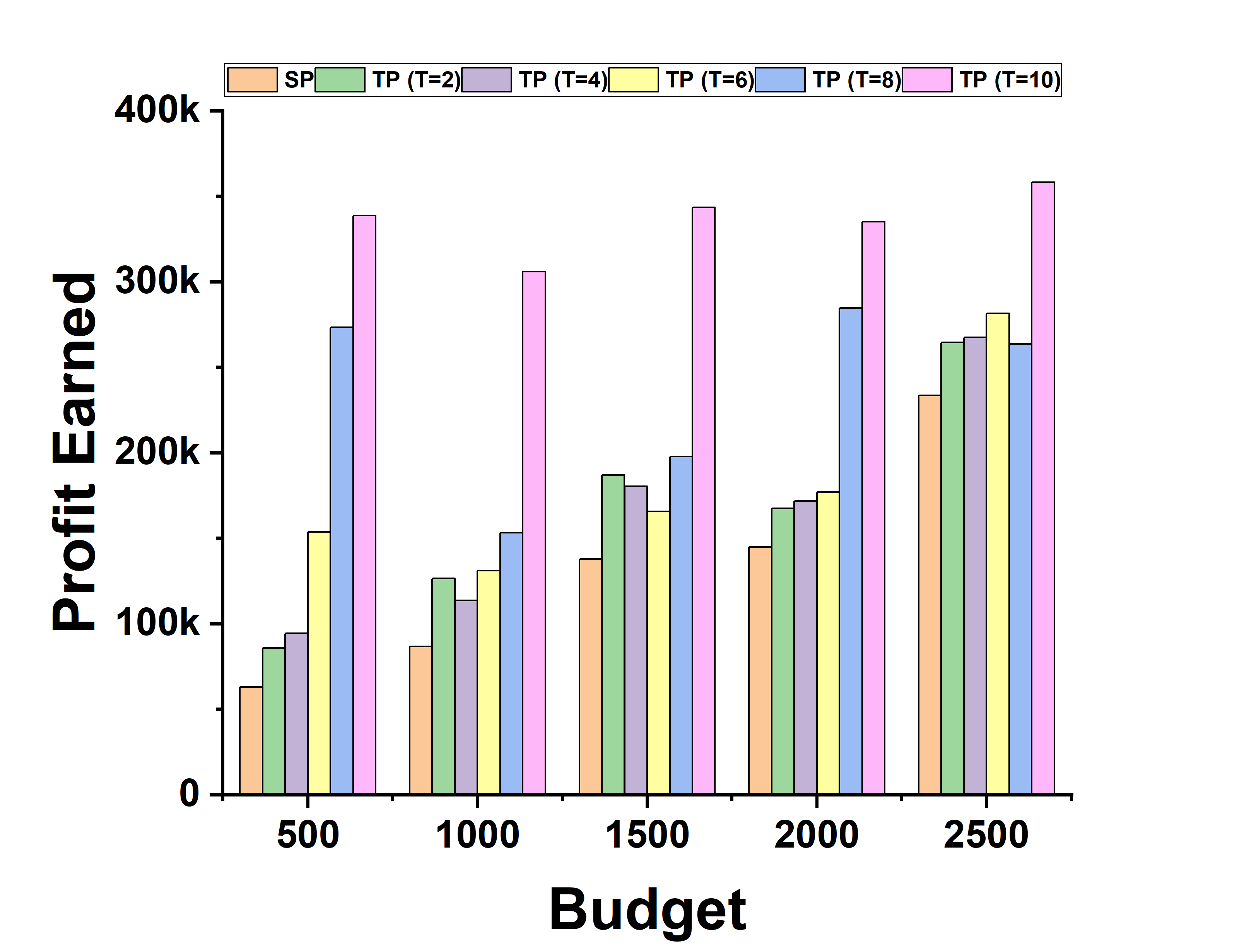}
        \caption{Clustering\\Coefficient}
    \end{subfigure} &
    \begin{subfigure}[t]{0.22\textwidth}
        \includegraphics[width=\linewidth]{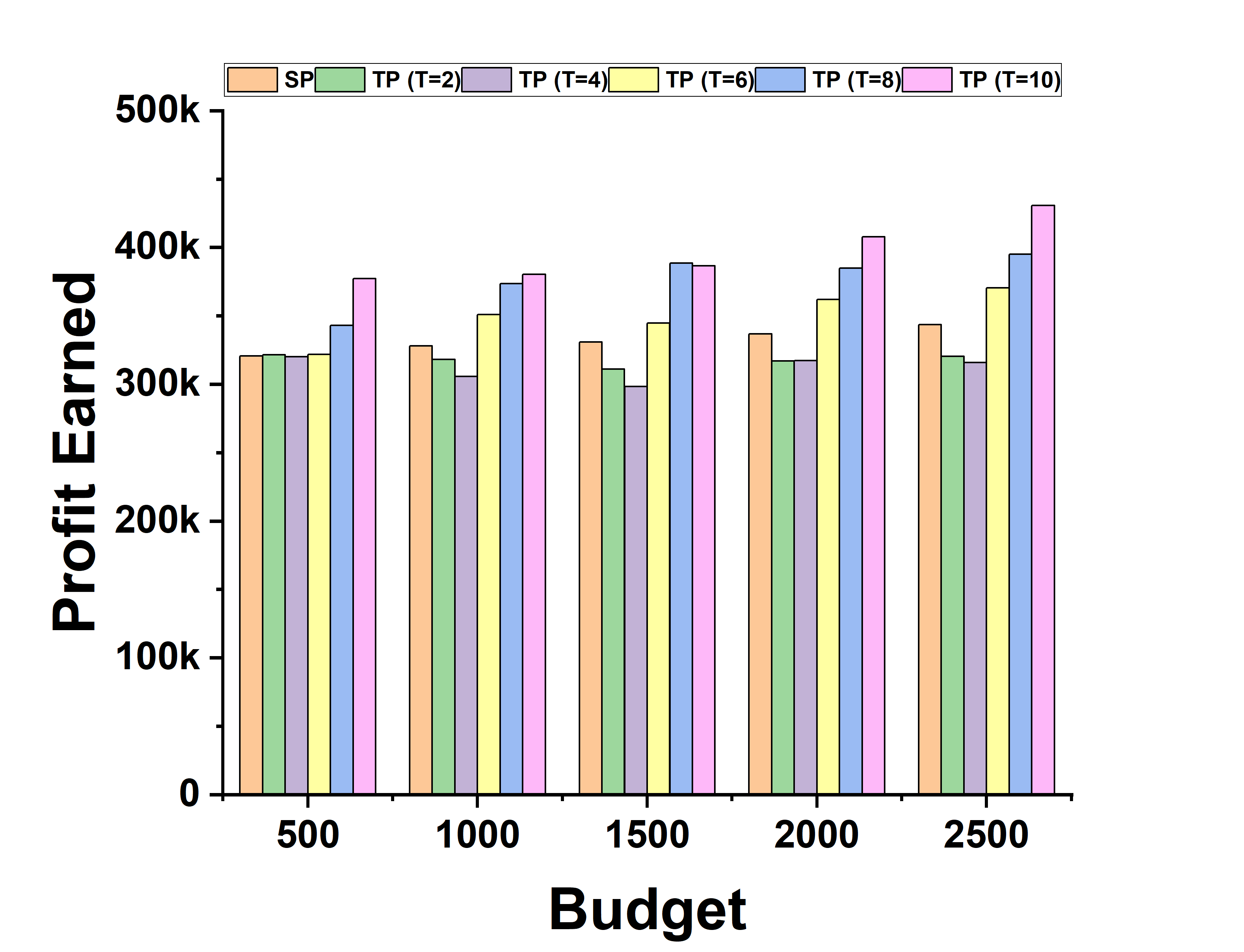}
        \caption{Degree Discount}
    \end{subfigure} \\[6pt]

    \begin{subfigure}[t]{0.22\textwidth}
        \includegraphics[width=\linewidth]{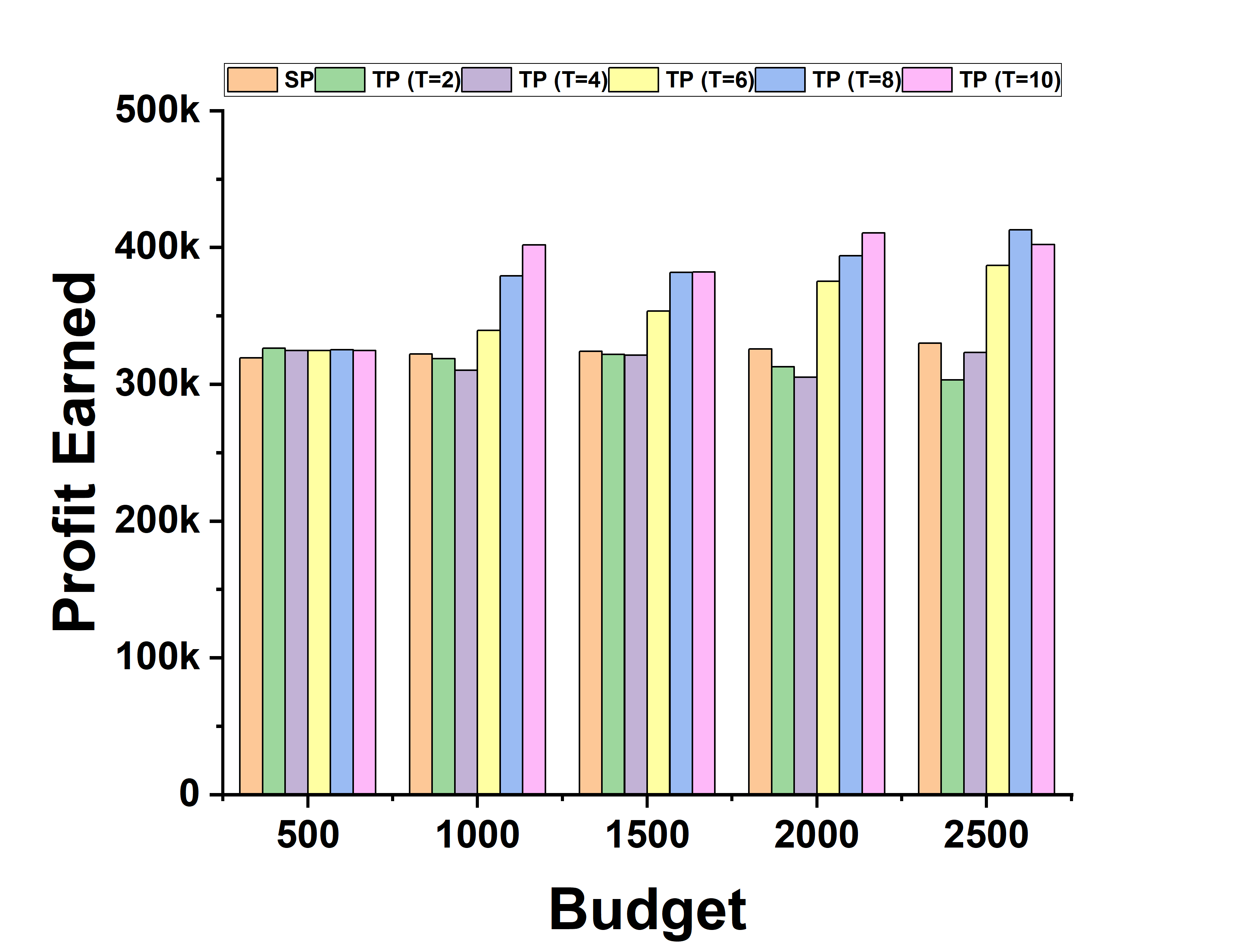}
        \caption{Single Discount}
    \end{subfigure} &
    \begin{subfigure}[t]{0.22\textwidth}
        \includegraphics[width=\linewidth]{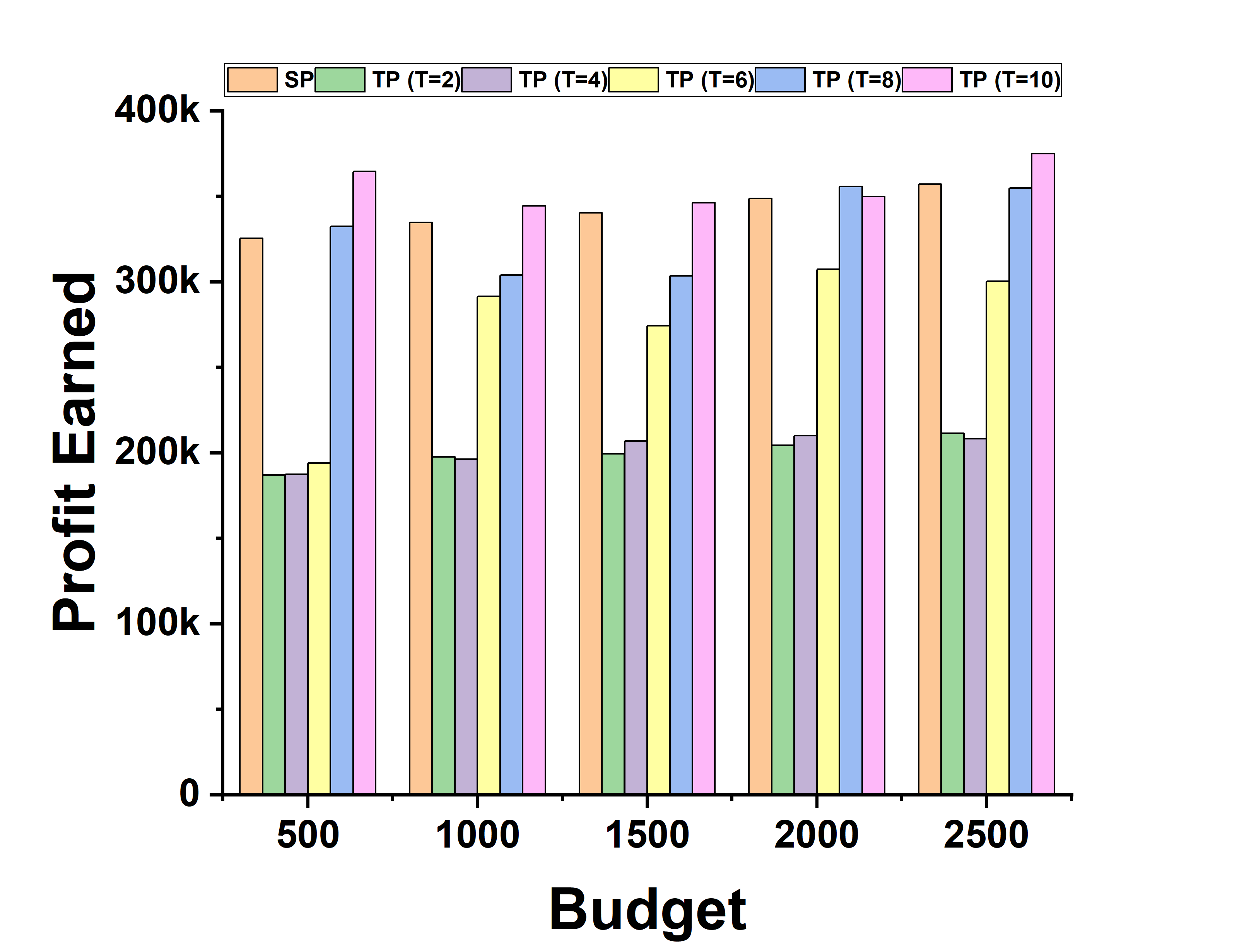}
        \caption{Simple Greedy}
    \end{subfigure} &
    \begin{subfigure}[t]{0.22\textwidth}
        \includegraphics[width=\linewidth]{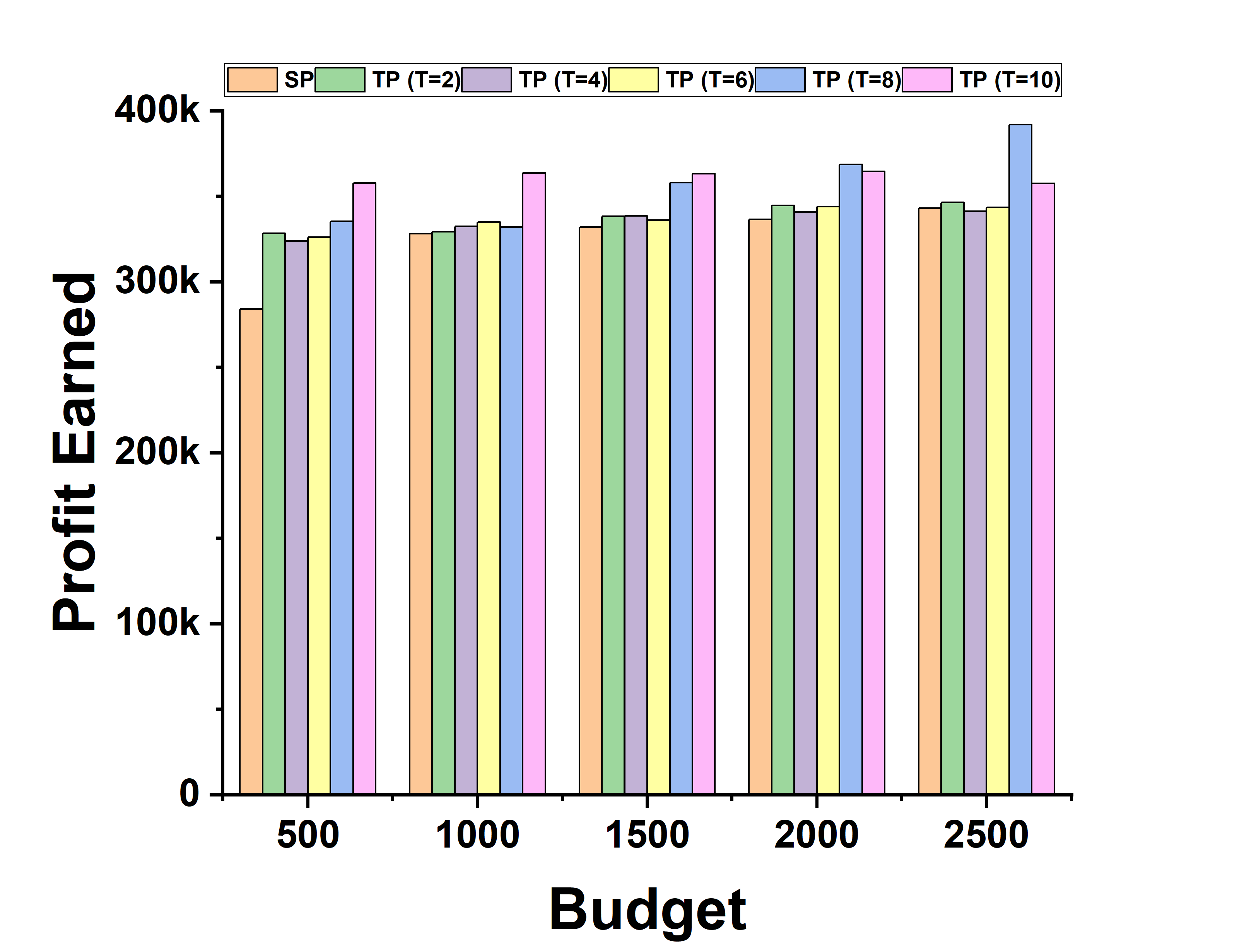}
        \caption{Double Greedy}
    \end{subfigure} &
    \begin{subfigure}[t]{0.22\textwidth}
        \includegraphics[width=\linewidth]{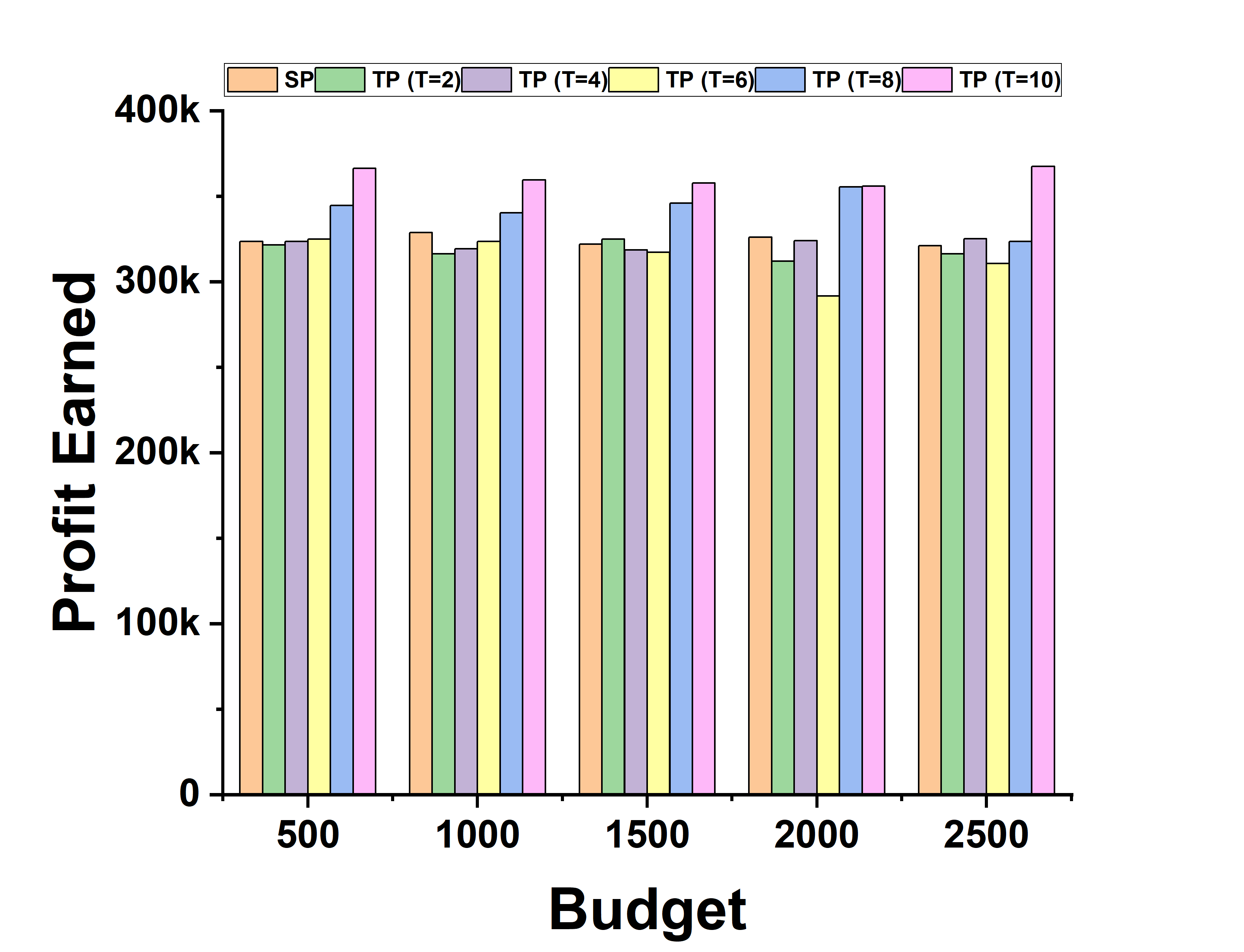}
        \caption{Stochastic Greedy}
    \end{subfigure}
\end{tabular}
\caption{Profit Earned in Single Phase Vs. Two Phase setting (split ratio 10\%, Probability Setting - Trivalency, \textit{Email-Eu-Core} Dataset)}
\label{Fig:RQ1_T1}
\end{figure}

\begin{figure}[htbp]
\centering
\captionsetup[sub]{font=footnotesize}  
\begin{tabular}{cccc}
    \begin{subfigure}[t]{0.22\textwidth}
        \includegraphics[width=\linewidth]{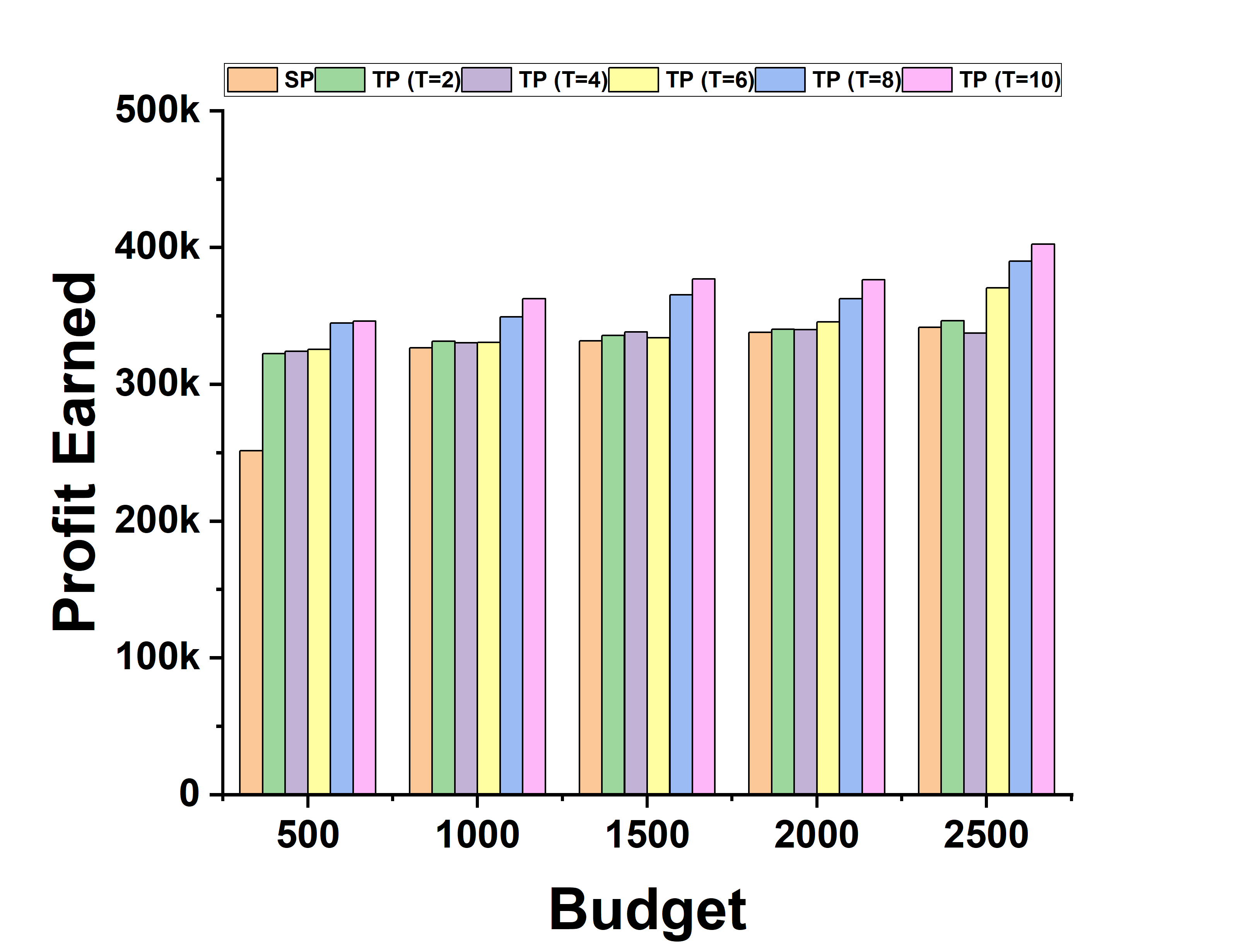}
        \caption{Random}
    \end{subfigure} &
    \begin{subfigure}[t]{0.22\textwidth}
        \includegraphics[width=\linewidth]{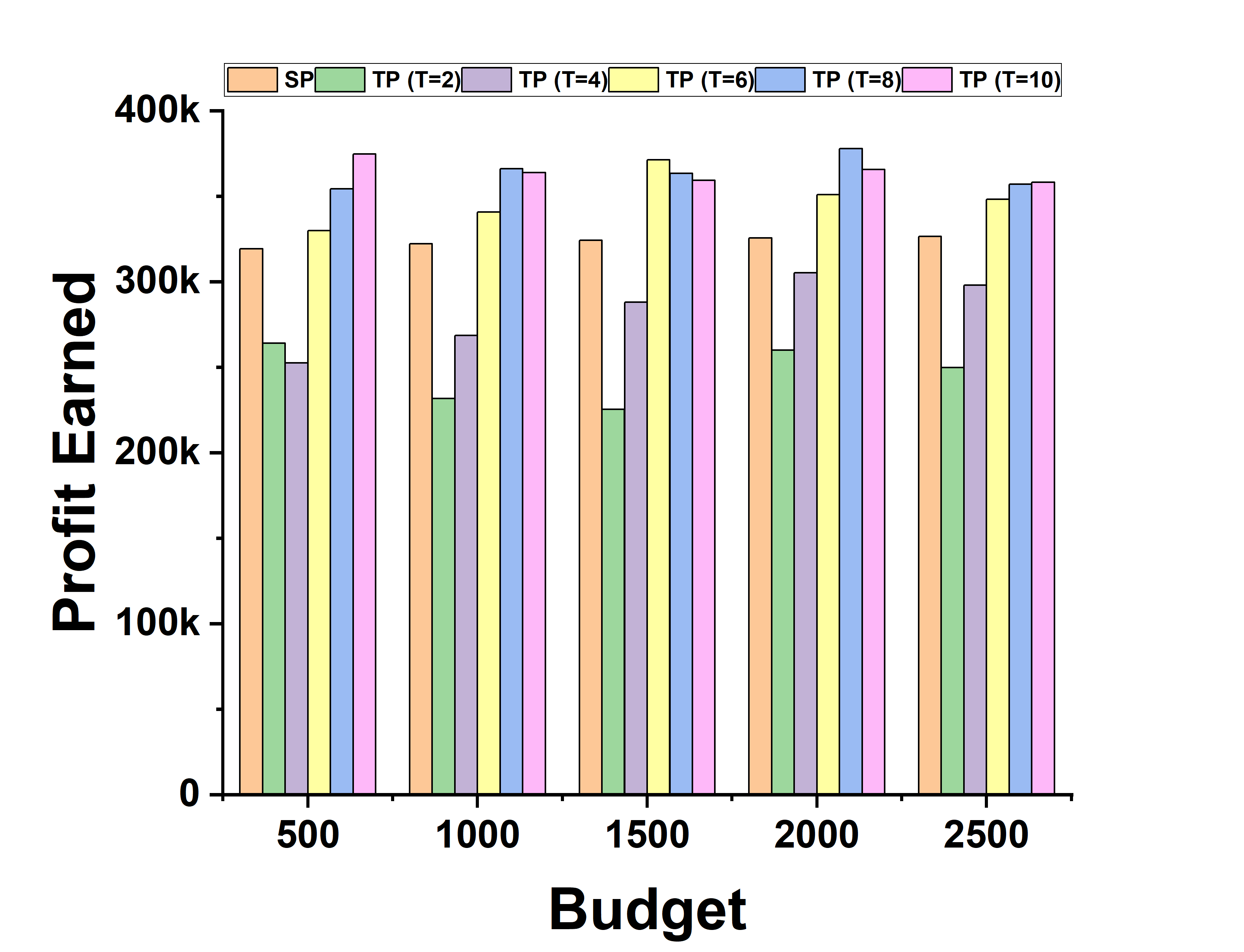}
        \caption{High Degree}
    \end{subfigure} &
    \begin{subfigure}[t]{0.22\textwidth}
        \includegraphics[width=\linewidth]{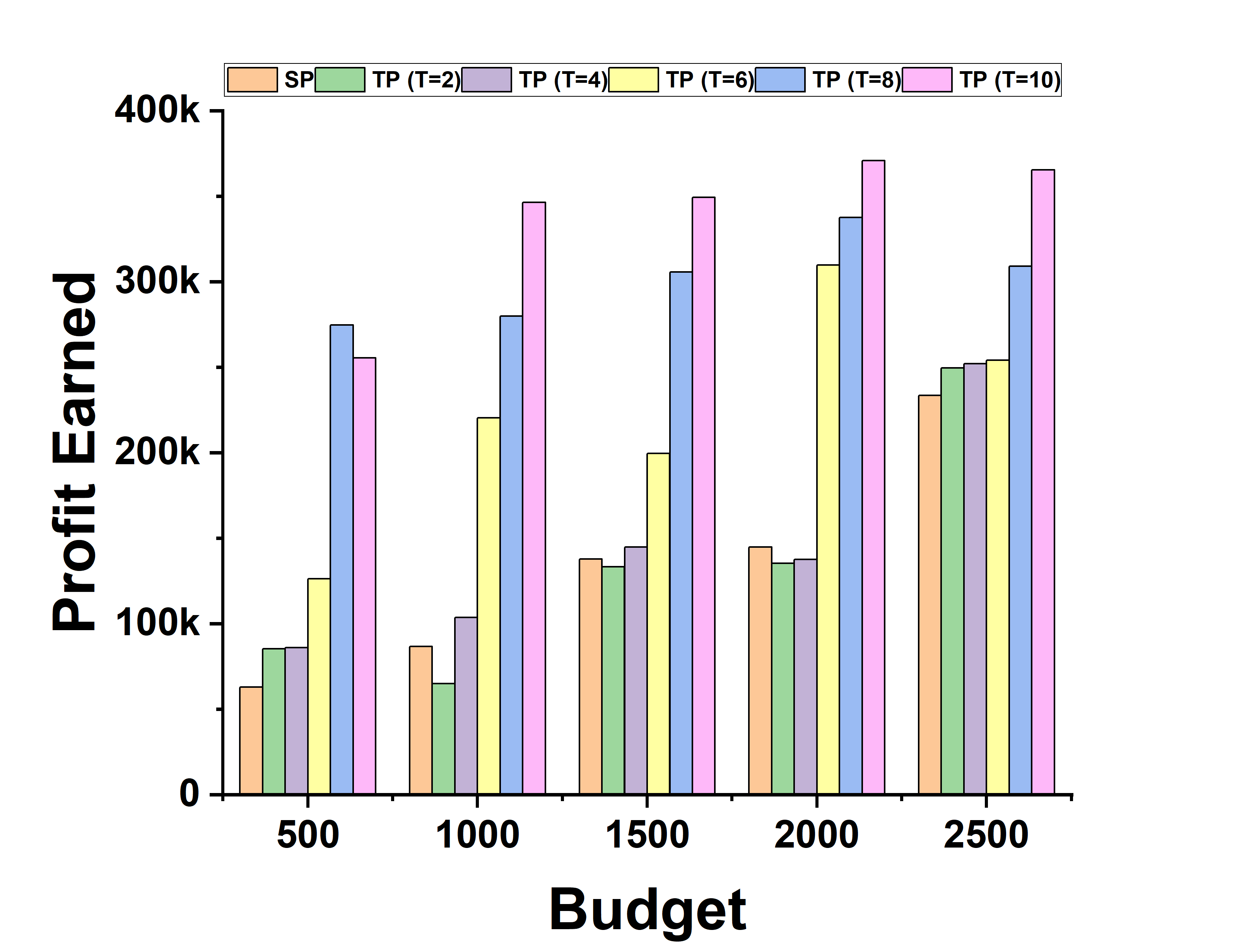}
        \caption{Clustering\\Coefficient}
    \end{subfigure} &
    \begin{subfigure}[t]{0.22\textwidth}
        \includegraphics[width=\linewidth]{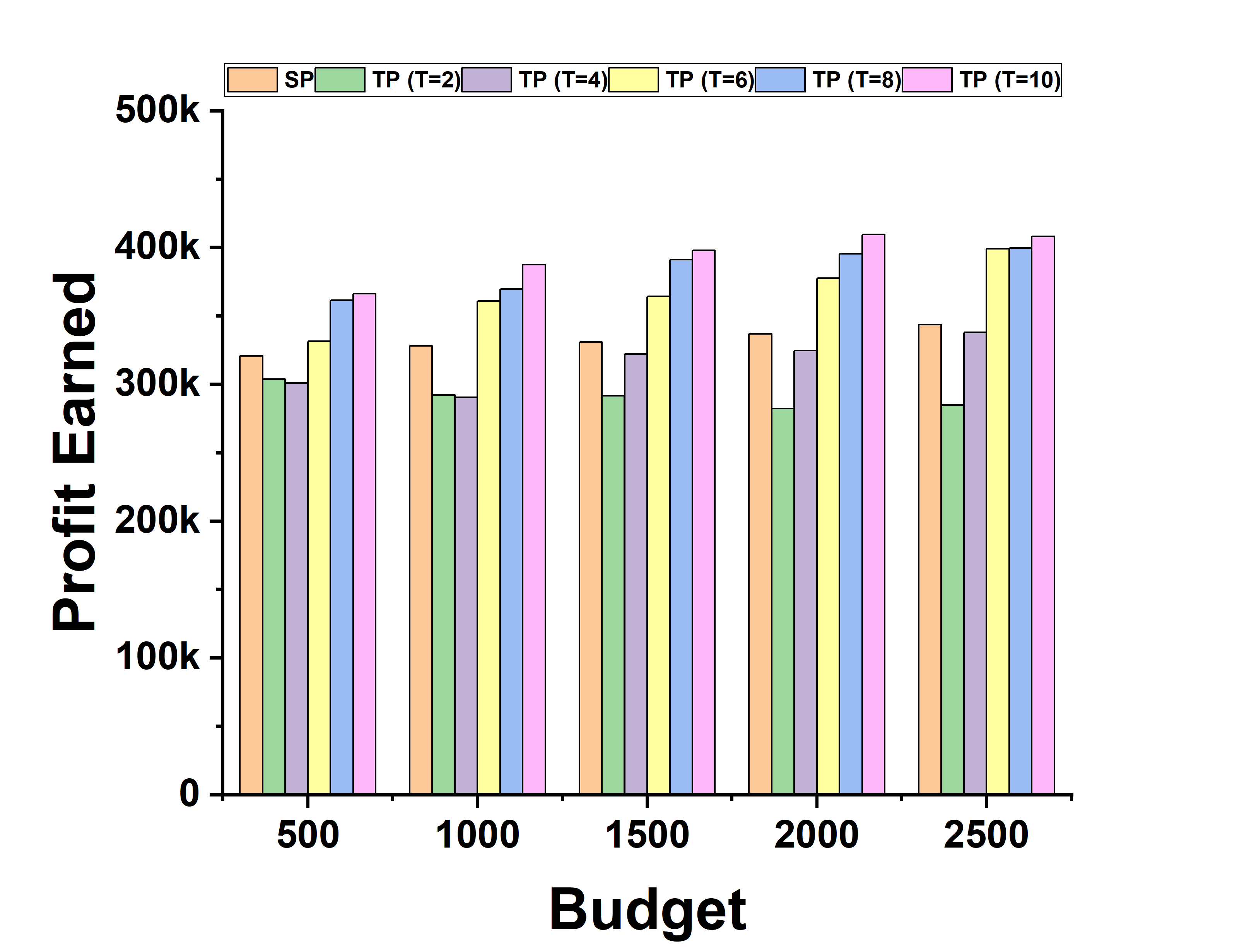}
        \caption{Degree Discount}
    \end{subfigure} \\[6pt]

    \begin{subfigure}[t]{0.22\textwidth}
        \includegraphics[width=\linewidth]{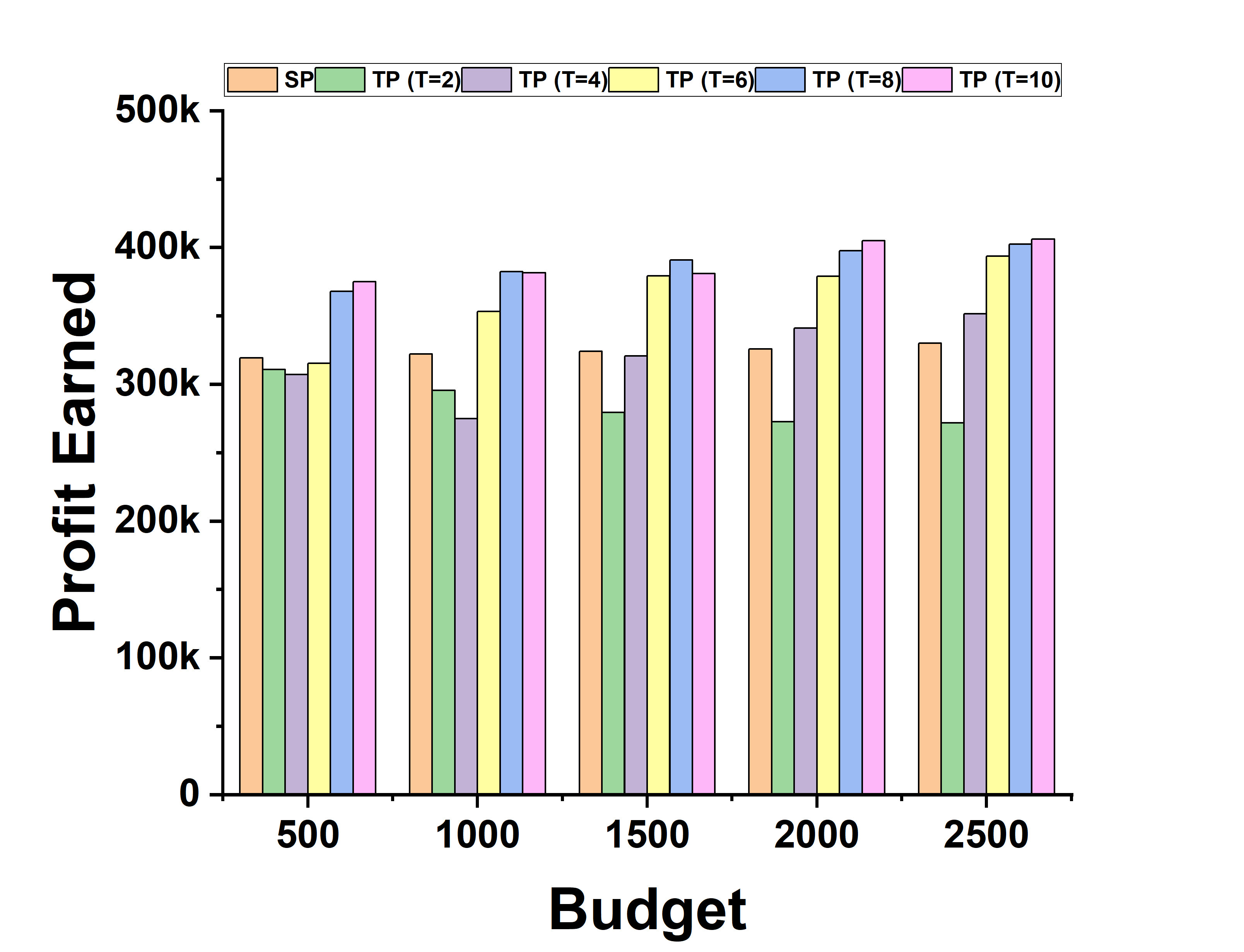}
        \caption{Single Discount}
    \end{subfigure} &
    \begin{subfigure}[t]{0.22\textwidth}
        \includegraphics[width=\linewidth]{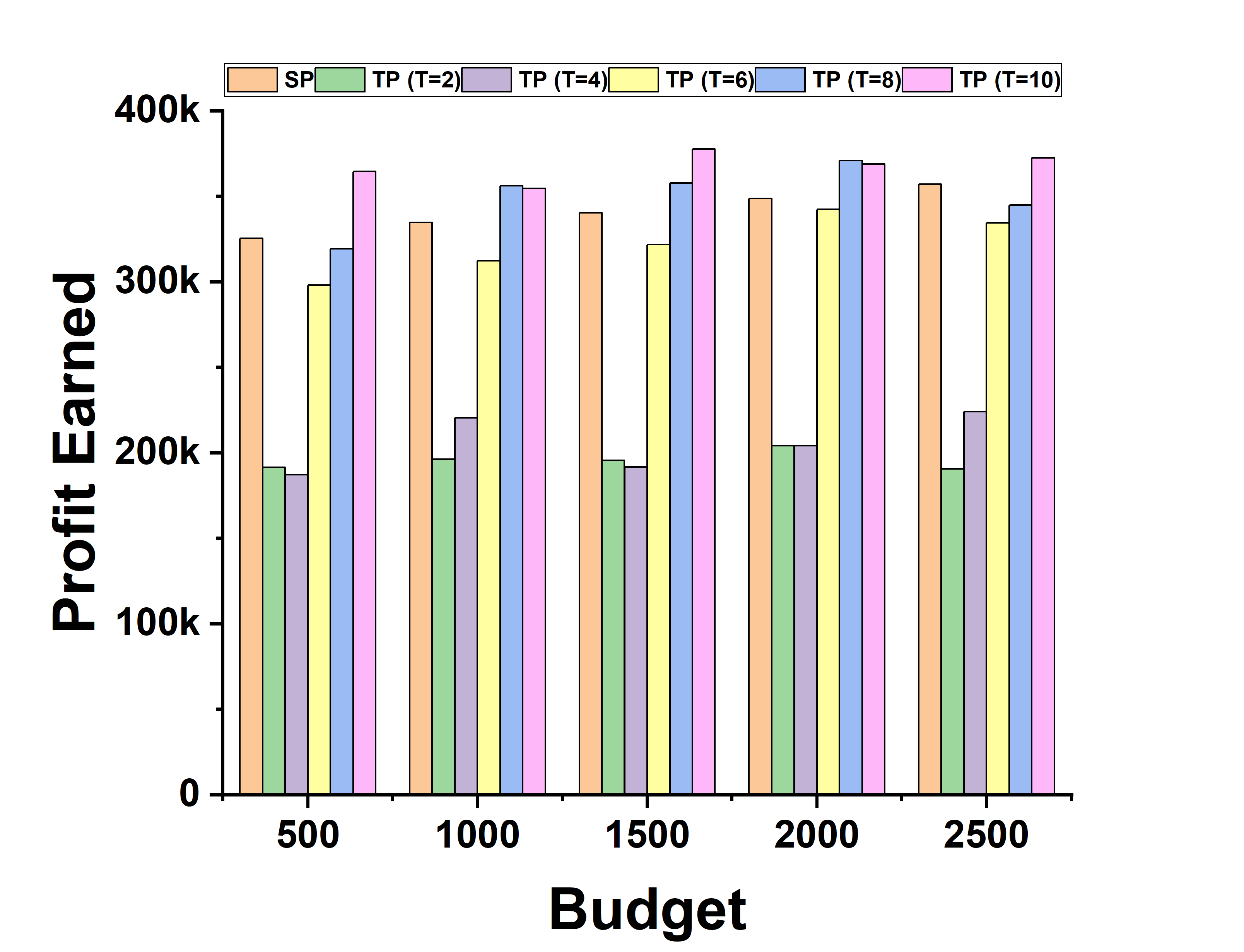}
        \caption{Simple Greedy}
    \end{subfigure} &
    \begin{subfigure}[t]{0.22\textwidth}
        \includegraphics[width=\linewidth]{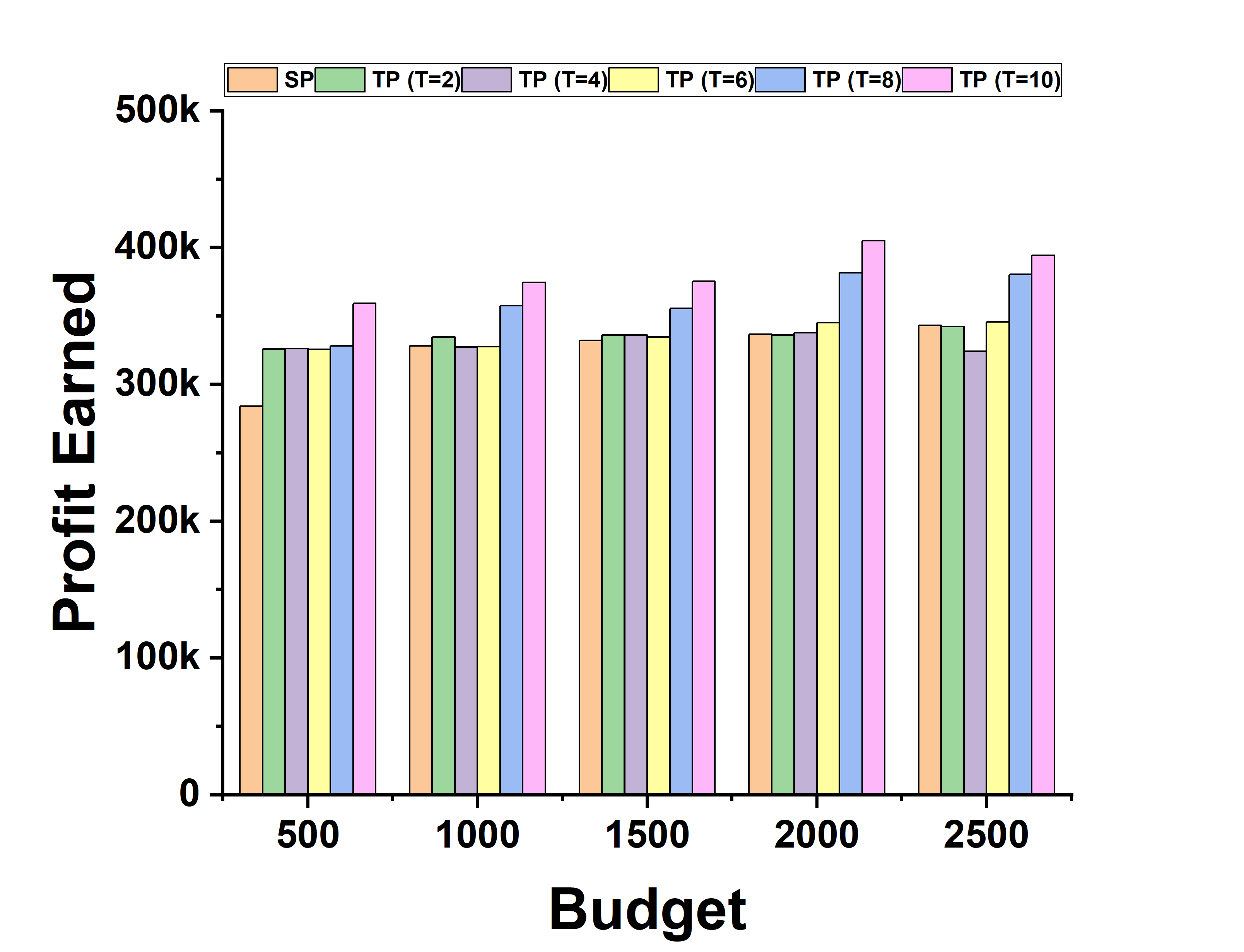}
        \caption{Double Greedy}
    \end{subfigure} &
    \begin{subfigure}[t]{0.22\textwidth}
        \includegraphics[width=\linewidth]{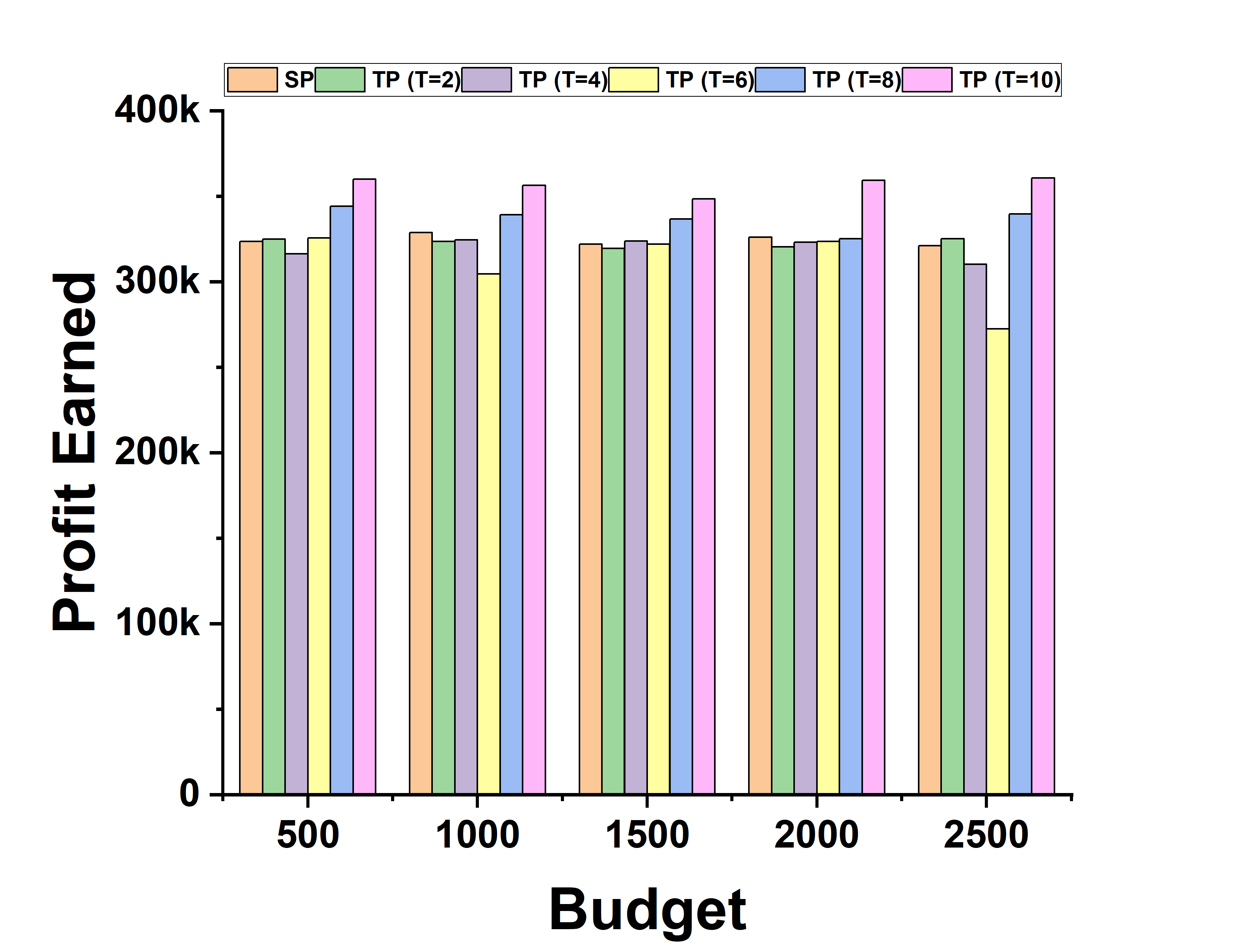}
        \caption{Stochastic Greedy}
    \end{subfigure}
\end{tabular}
\caption{Profit Earned in Single Phase Vs. Two Phase setting (split ratio 30\%, Probability Setting - Trivalency, \textit{Email-Eu-Core} Dataset)}
\label{Fig:RQ1_T2}
\end{figure}

\begin{figure}[htbp]
\centering
\captionsetup[sub]{font=footnotesize}  
\begin{tabular}{cccc}
    \begin{subfigure}[t]{0.22\textwidth}
        \includegraphics[width=\linewidth]{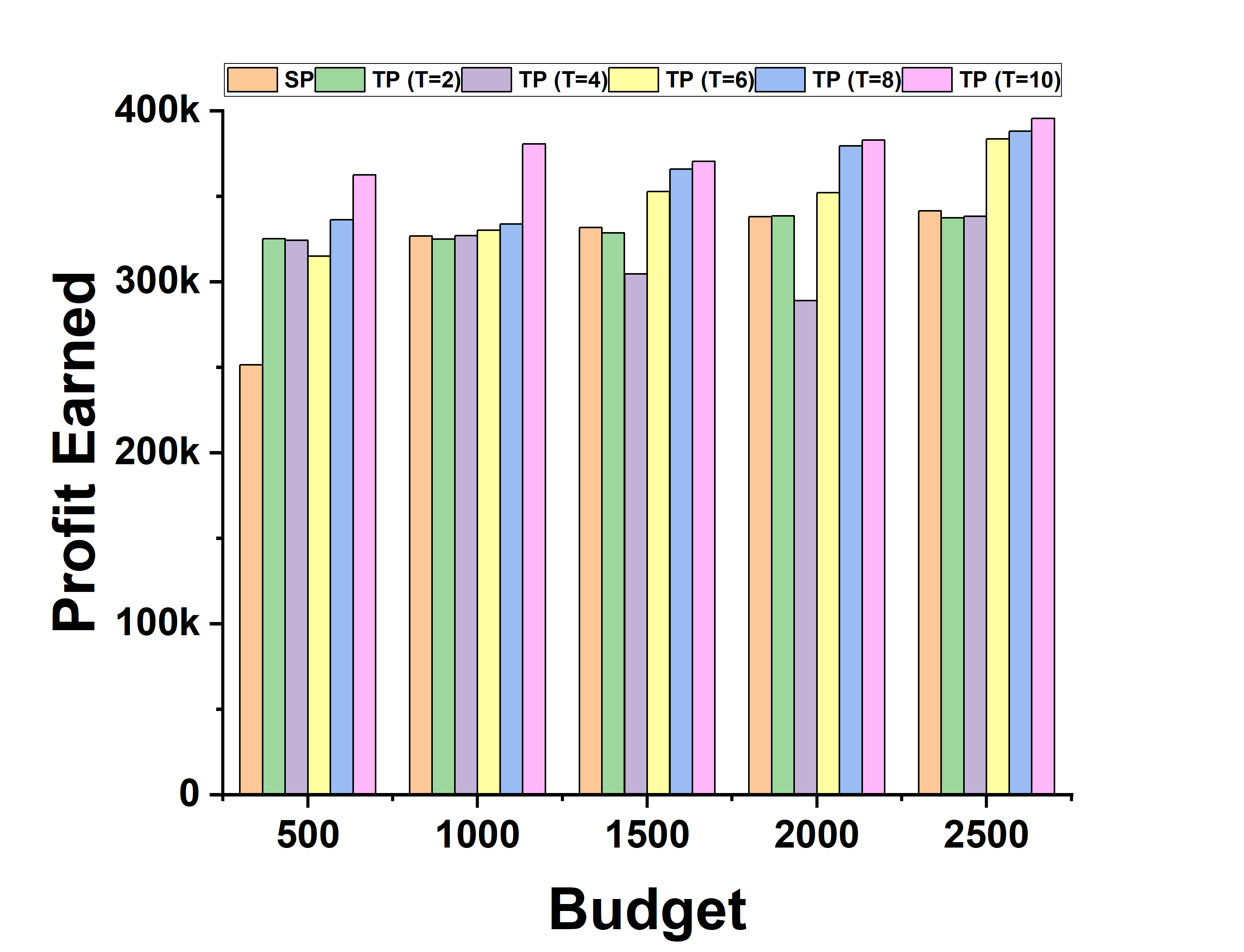}
        \caption{Random}
    \end{subfigure} &
    \begin{subfigure}[t]{0.22\textwidth}
        \includegraphics[width=\linewidth]{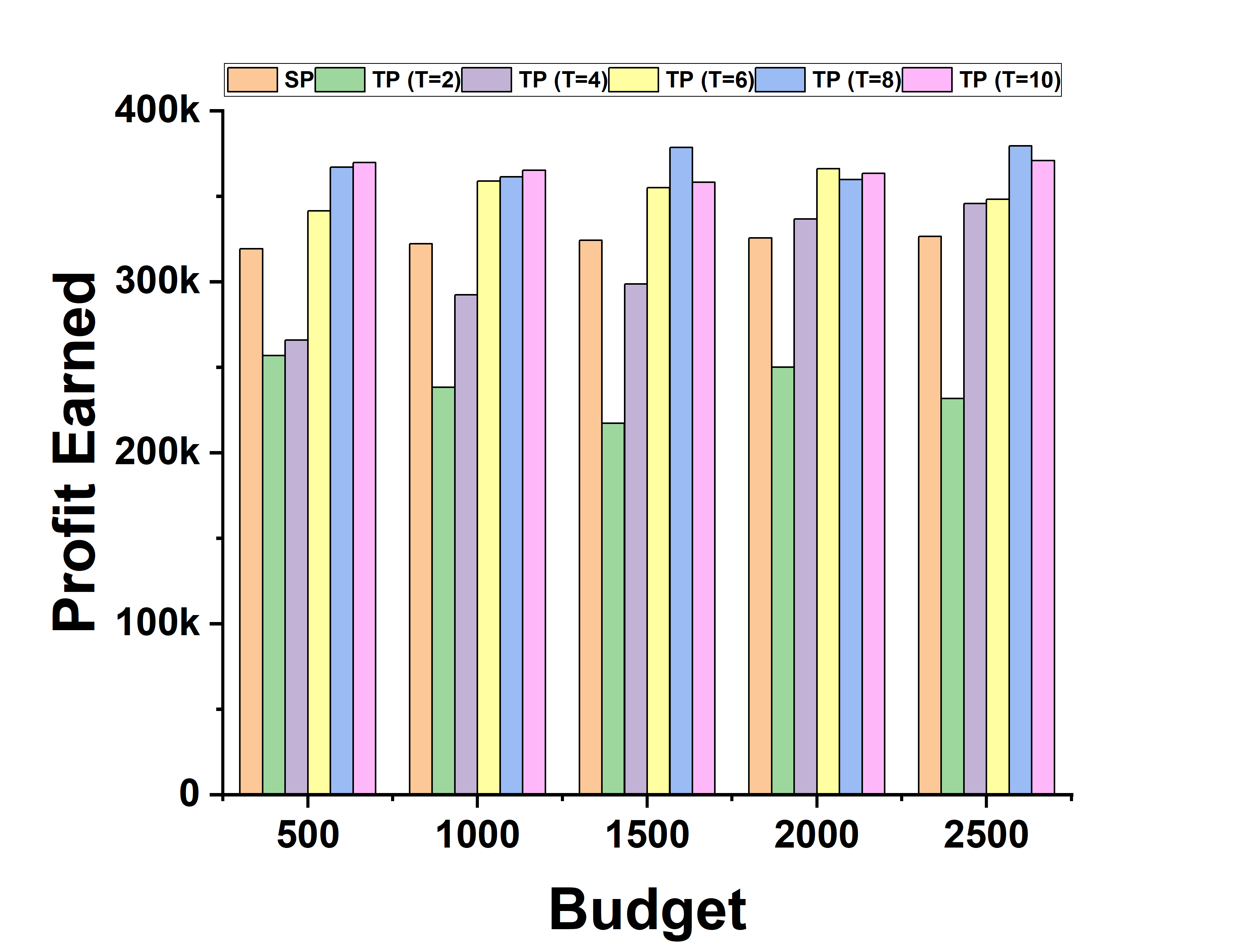}
        \caption{High Degree}
    \end{subfigure} &
    \begin{subfigure}[t]{0.22\textwidth}
        \includegraphics[width=\linewidth]{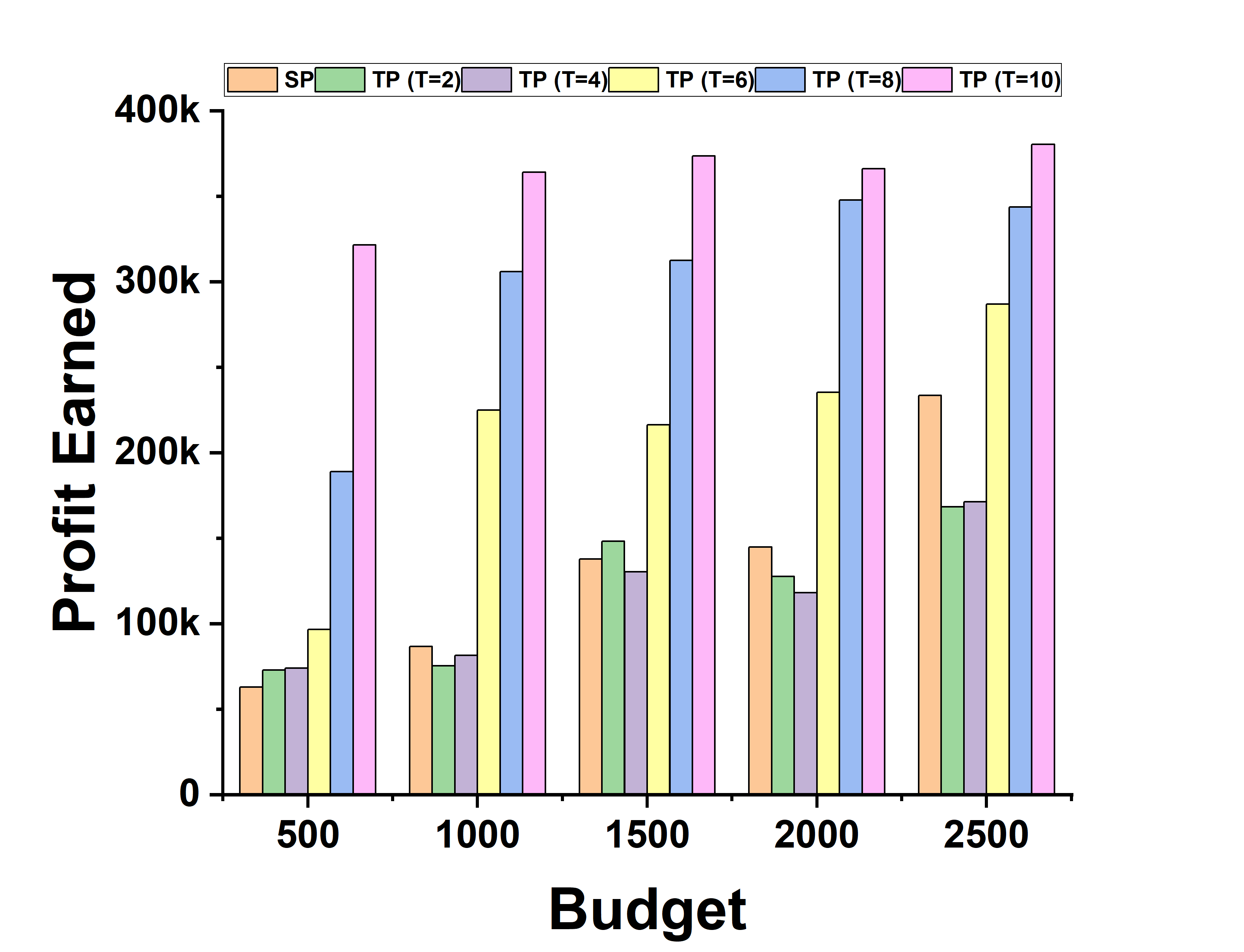}
        \caption{Clustering\\Coefficient}
    \end{subfigure} &
    \begin{subfigure}[t]{0.22\textwidth}
        \includegraphics[width=\linewidth]{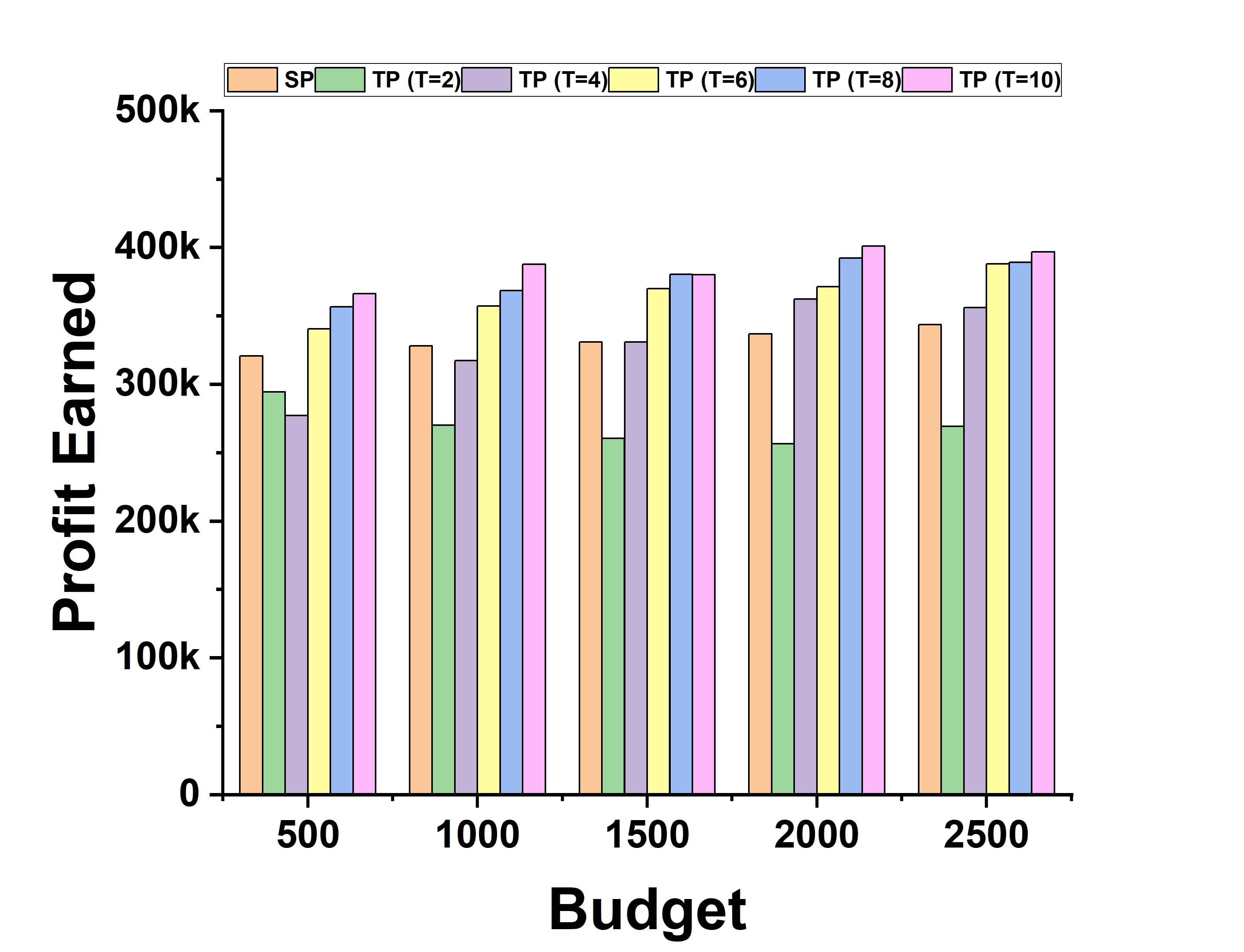}
        \caption{Degree Discount}
    \end{subfigure} \\[6pt]

    \begin{subfigure}[t]{0.22\textwidth}
        \includegraphics[width=\linewidth]{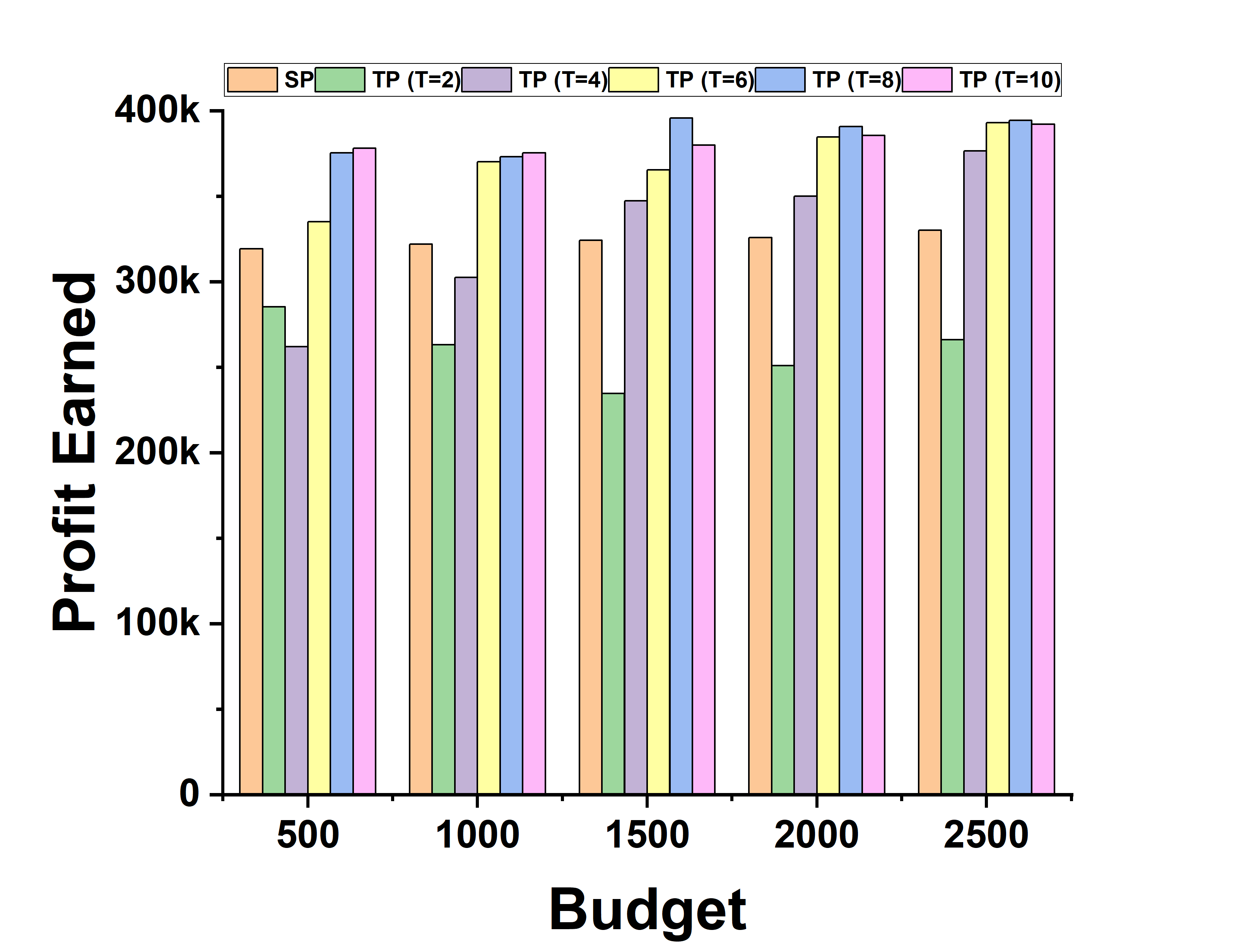}
        \caption{Single Discount}
    \end{subfigure} &
    \begin{subfigure}[t]{0.22\textwidth}
        \includegraphics[width=\linewidth]{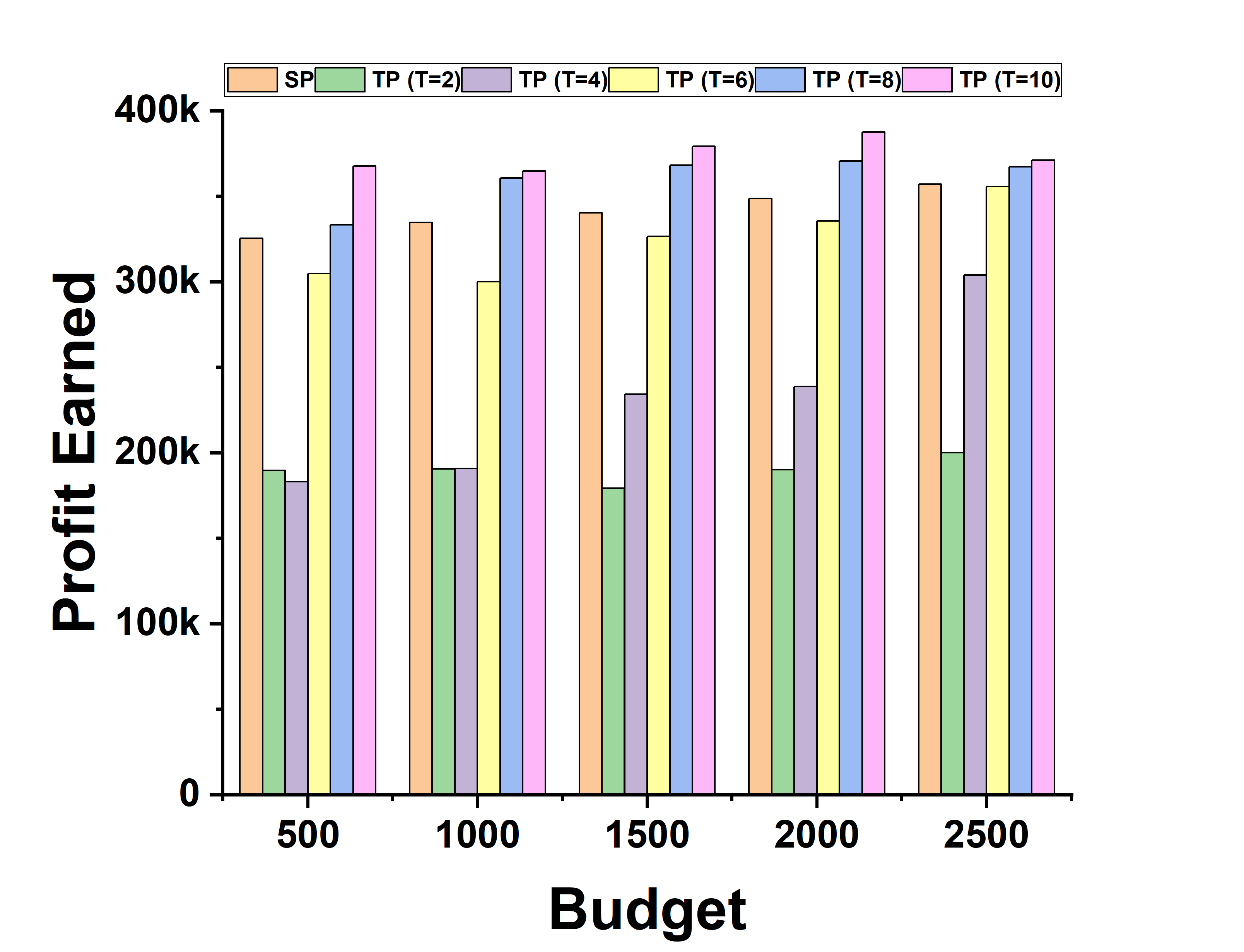}
        \caption{Simple Greedy}
    \end{subfigure} &
    \begin{subfigure}[t]{0.22\textwidth}
        \includegraphics[width=\linewidth]{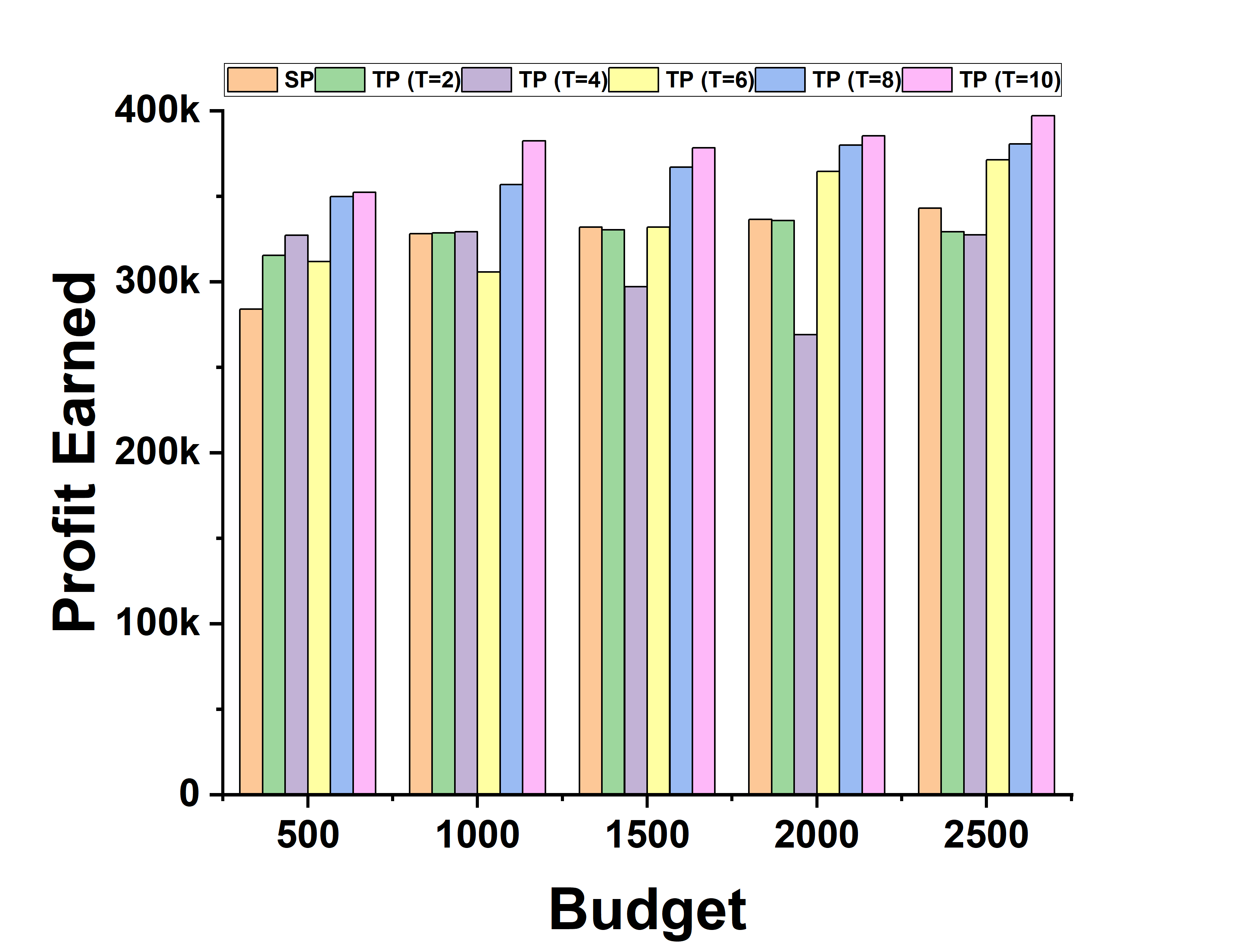}
        \caption{Double Greedy}
    \end{subfigure} &
    \begin{subfigure}[t]{0.22\textwidth}
        \includegraphics[width=\linewidth]{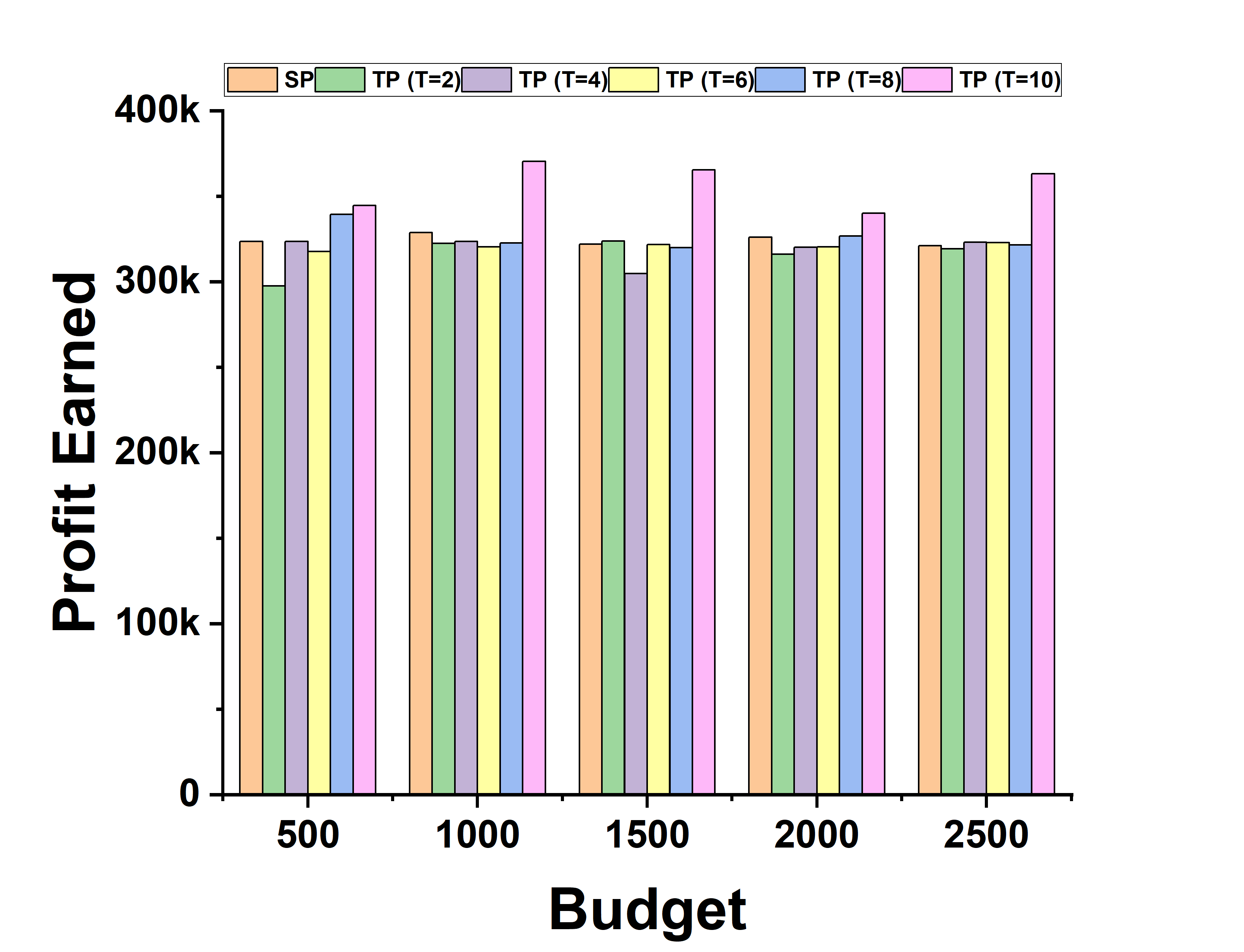}
        \caption{Stochastic Greedy}
    \end{subfigure}
\end{tabular}
\caption{Profit Earned in Single Phase Vs. Two Phase setting (split ratio 50\%, Probability Setting - Trivalency, \textit{Email-Eu-Core} Dataset)}
\label{Fig:RQ1_T3}
\end{figure}

\begin{figure}[htbp]
\centering
\captionsetup[sub]{font=footnotesize}  
\begin{tabular}{cccc}
    \begin{subfigure}[t]{0.22\textwidth}
        \includegraphics[width=\linewidth]{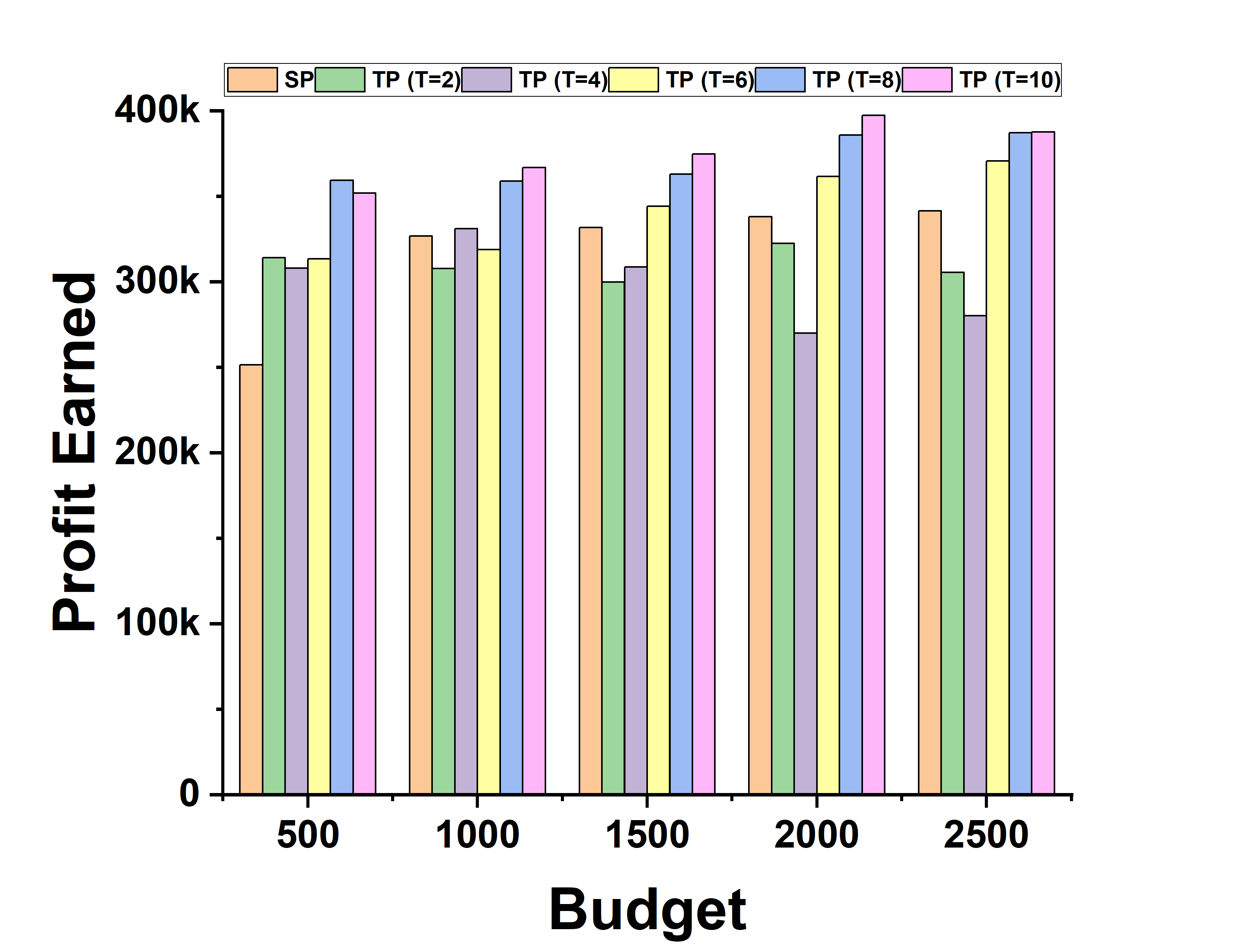}
        \caption{Random}
    \end{subfigure} &
    \begin{subfigure}[t]{0.22\textwidth}
        \includegraphics[width=\linewidth]{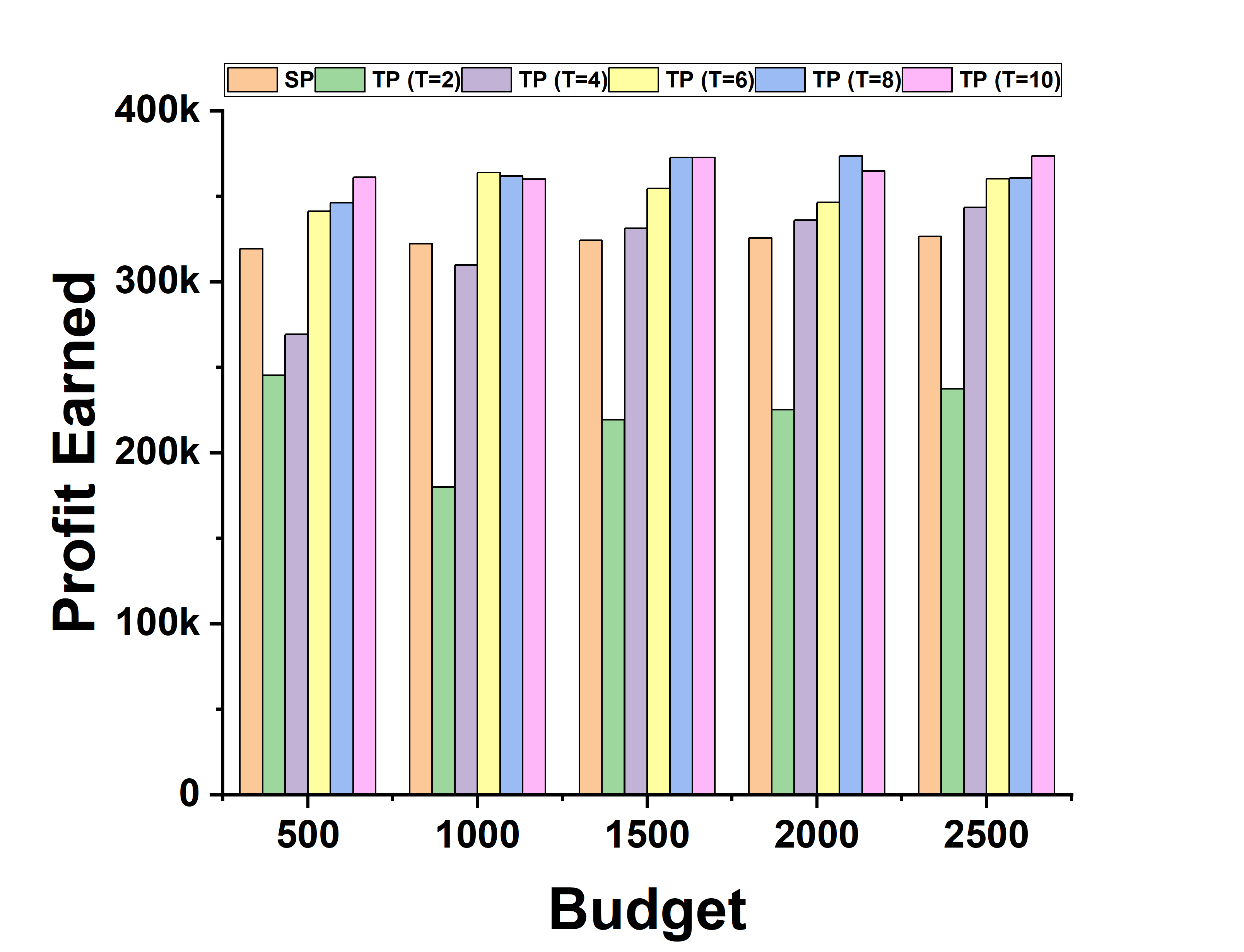}
        \caption{High Degree}
    \end{subfigure} &
    \begin{subfigure}[t]{0.22\textwidth}
        \includegraphics[width=\linewidth]{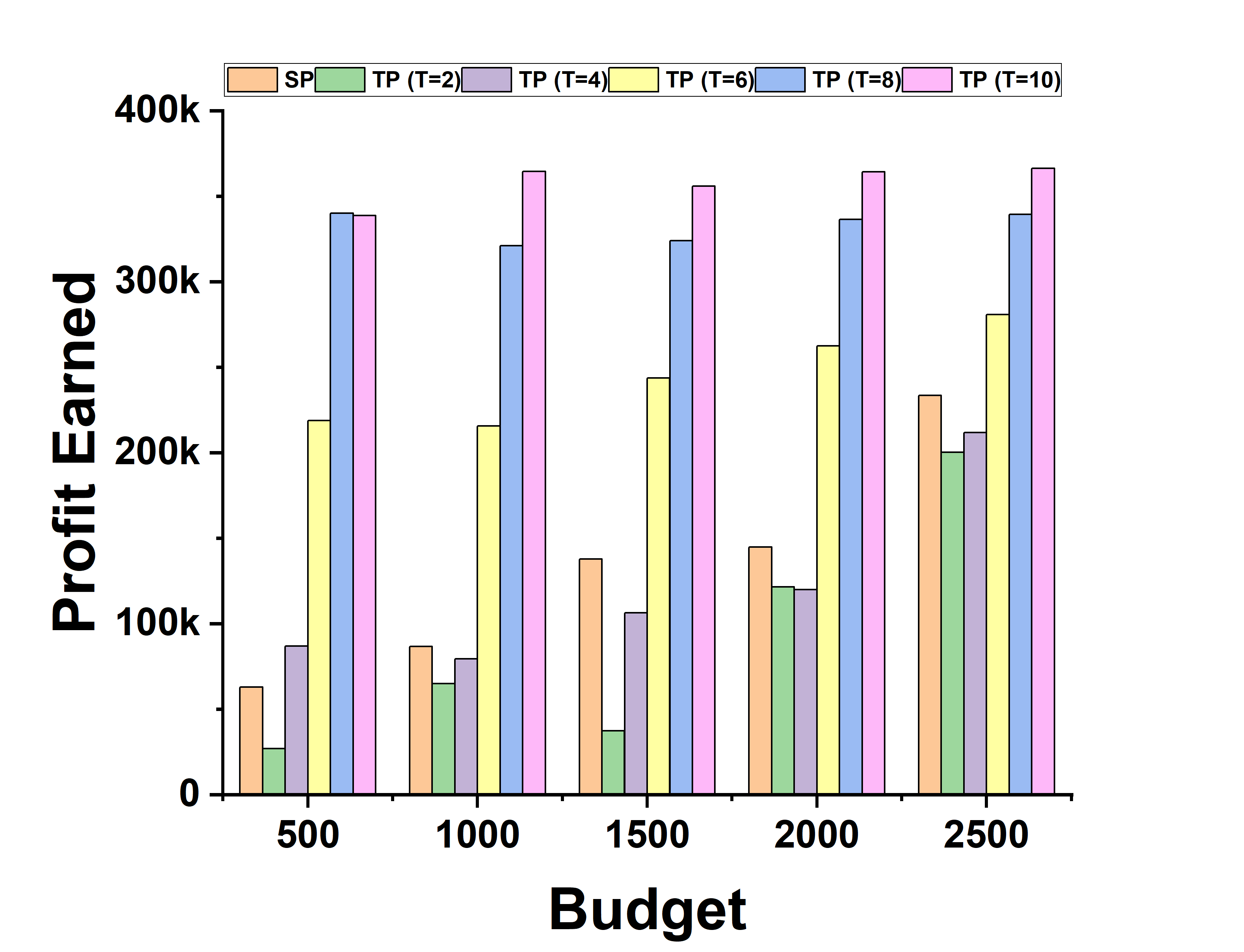}
        \caption{Clustering\\Coefficient}
    \end{subfigure} &
    \begin{subfigure}[t]{0.22\textwidth}
        \includegraphics[width=\linewidth]{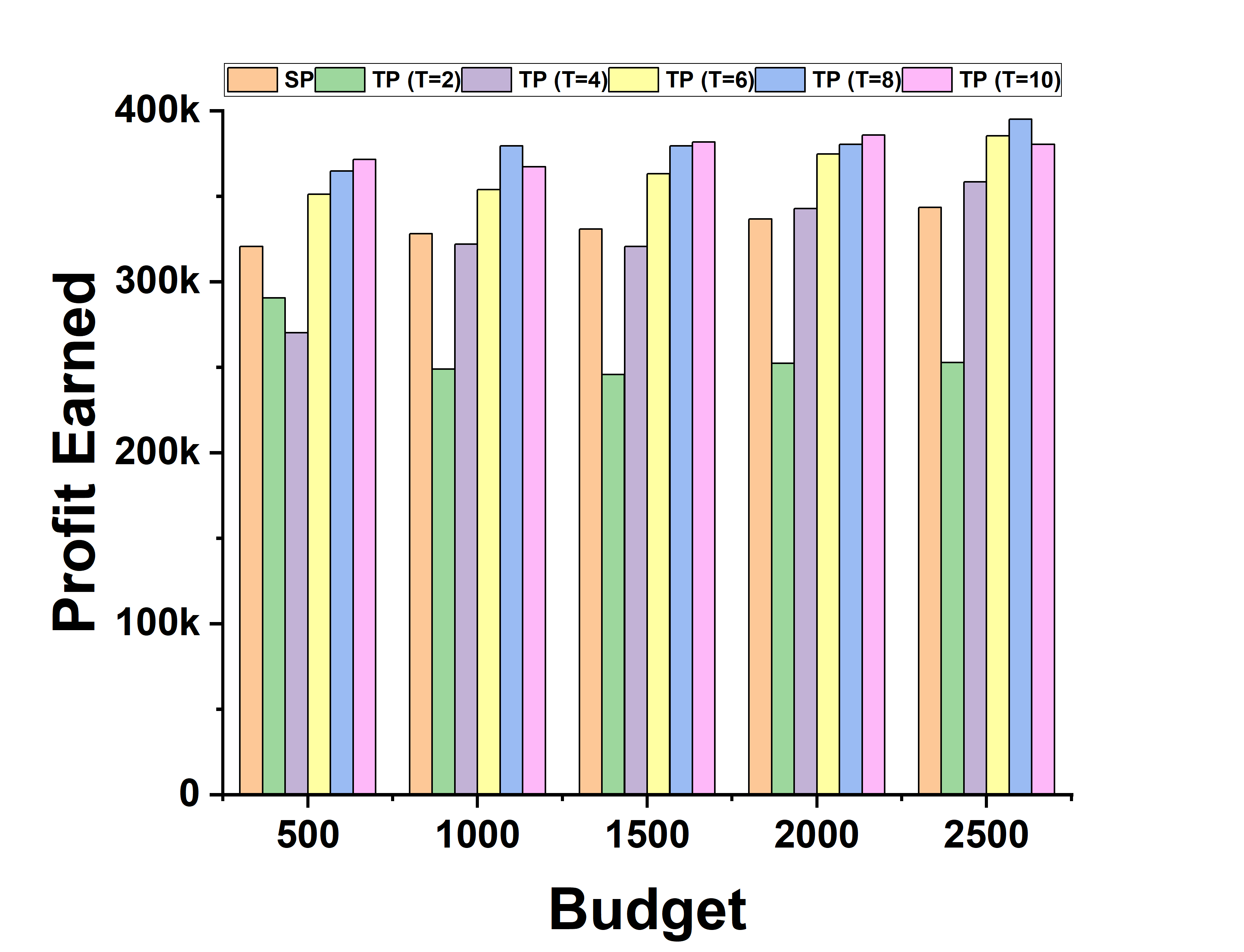}
        \caption{Degree Discount}
    \end{subfigure} \\[6pt]

    \begin{subfigure}[t]{0.22\textwidth}
        \includegraphics[width=\linewidth]{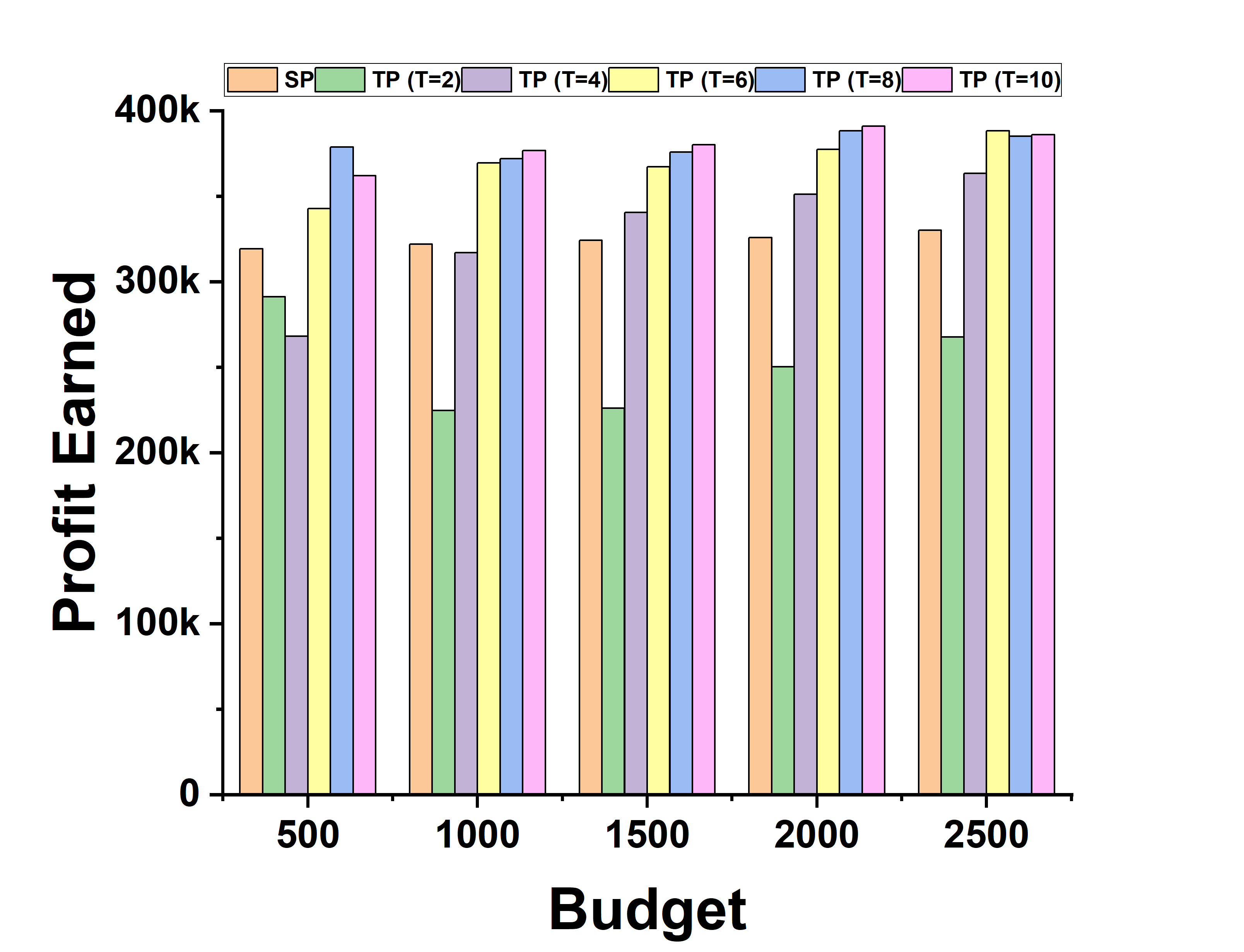}
        \caption{Single Discount}
    \end{subfigure} &
    \begin{subfigure}[t]{0.22\textwidth}
        \includegraphics[width=\linewidth]{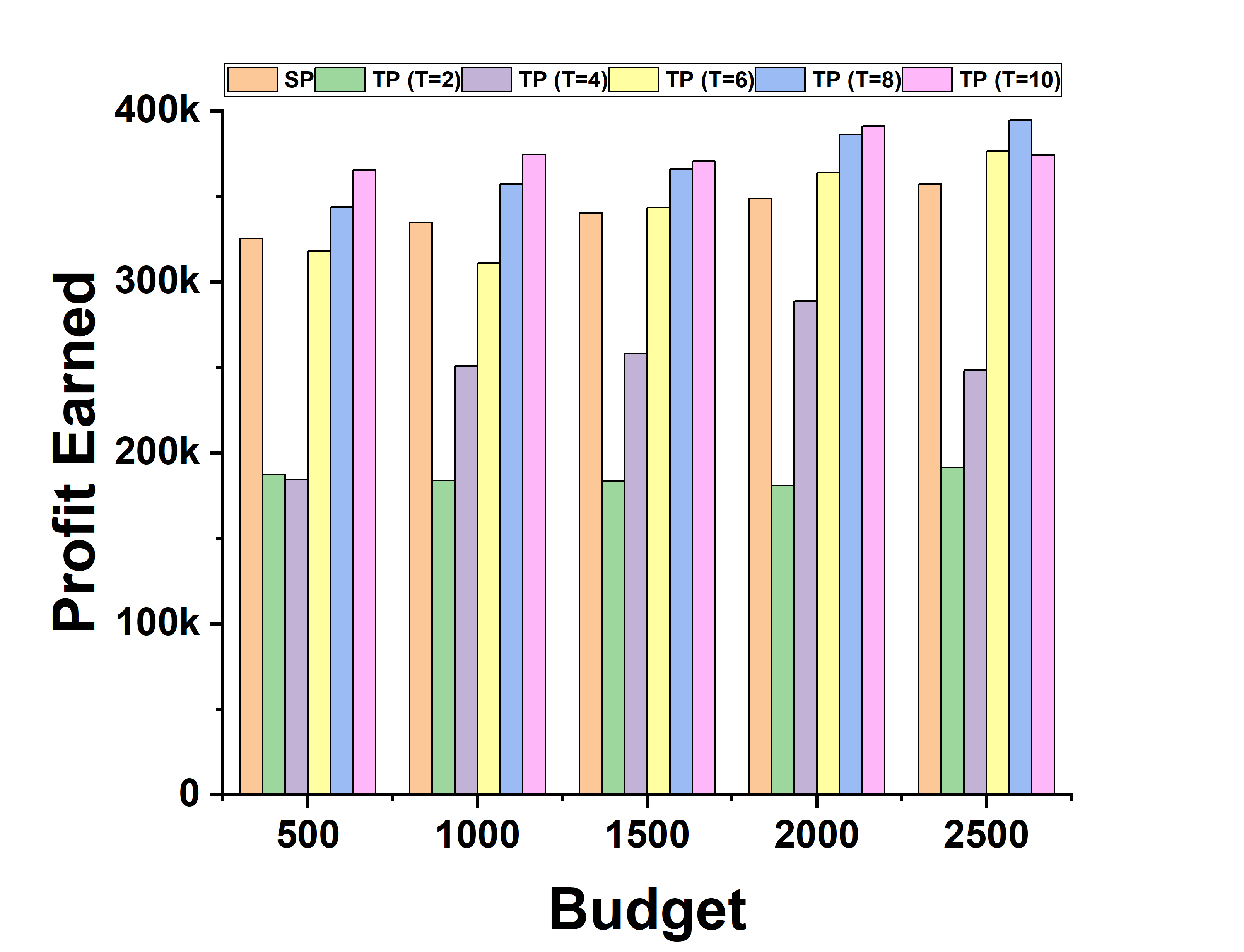}
        \caption{Simple Greedy}
    \end{subfigure} &
    \begin{subfigure}[t]{0.22\textwidth}
        \includegraphics[width=\linewidth]{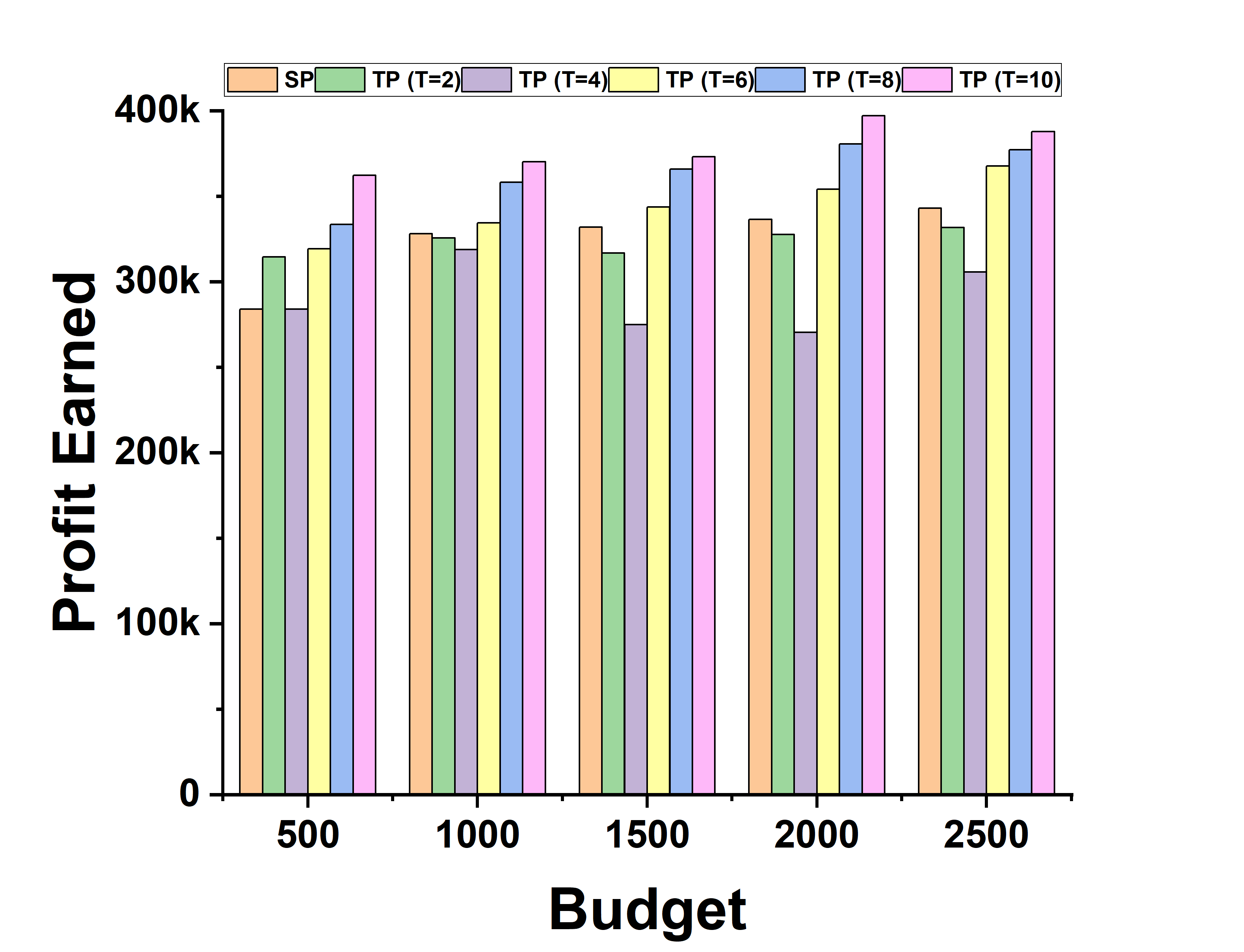}
        \caption{Double Greedy}
    \end{subfigure} &
    \begin{subfigure}[t]{0.22\textwidth}
        \includegraphics[width=\linewidth]{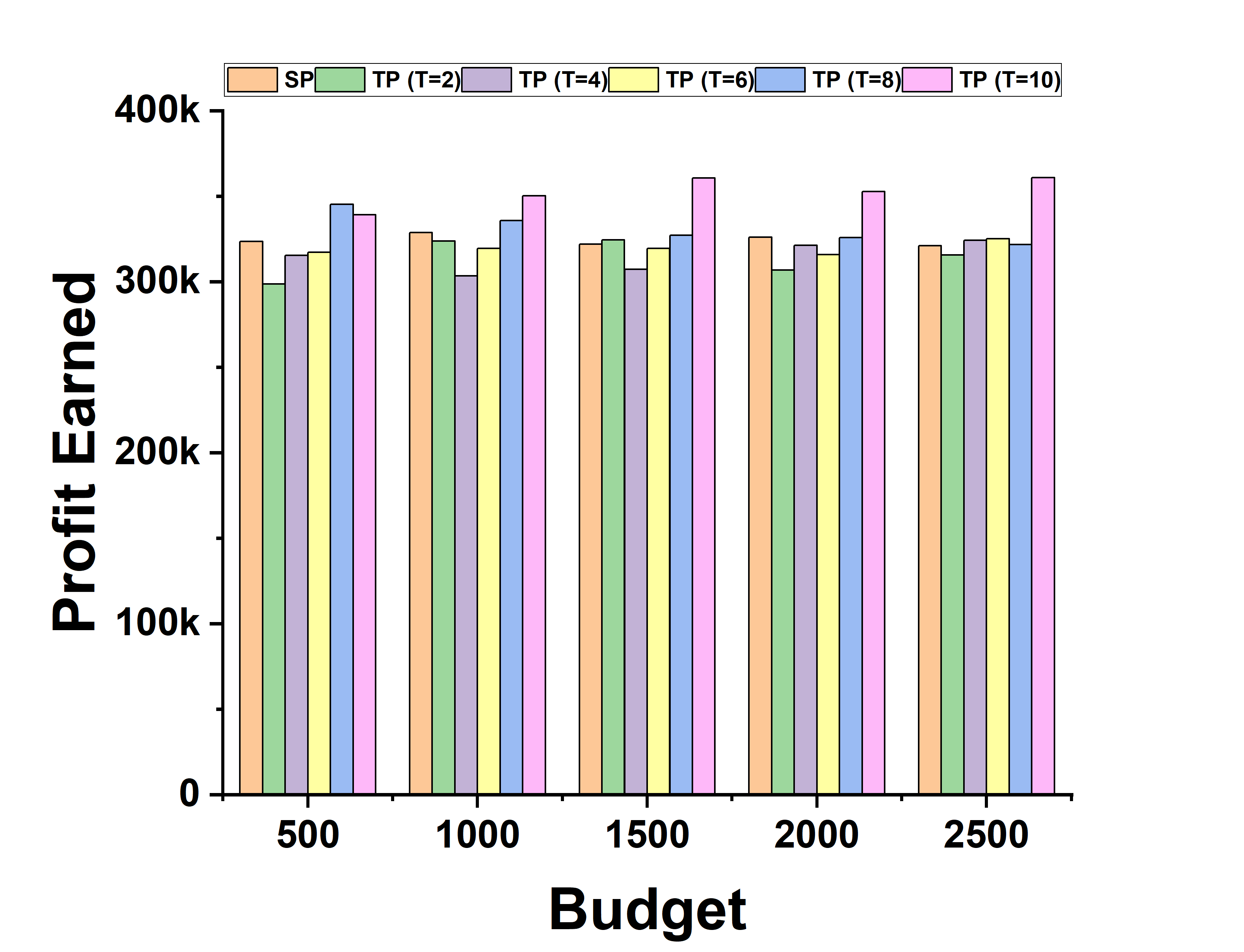}
        \caption{Stochastic Greedy}
    \end{subfigure}
\end{tabular}
\caption{Profit Earned in Single Phase Vs. Two Phase setting (split ratio 70\%, Probability Setting - Trivalency, \textit{Email-Eu-Core} Dataset)}
\label{Fig:RQ1_T4}
\end{figure}

\begin{figure}[htbp]
\centering
\captionsetup[sub]{font=footnotesize}  
\begin{tabular}{cccc}
    \begin{subfigure}[t]{0.22\textwidth}
        \includegraphics[width=\linewidth]{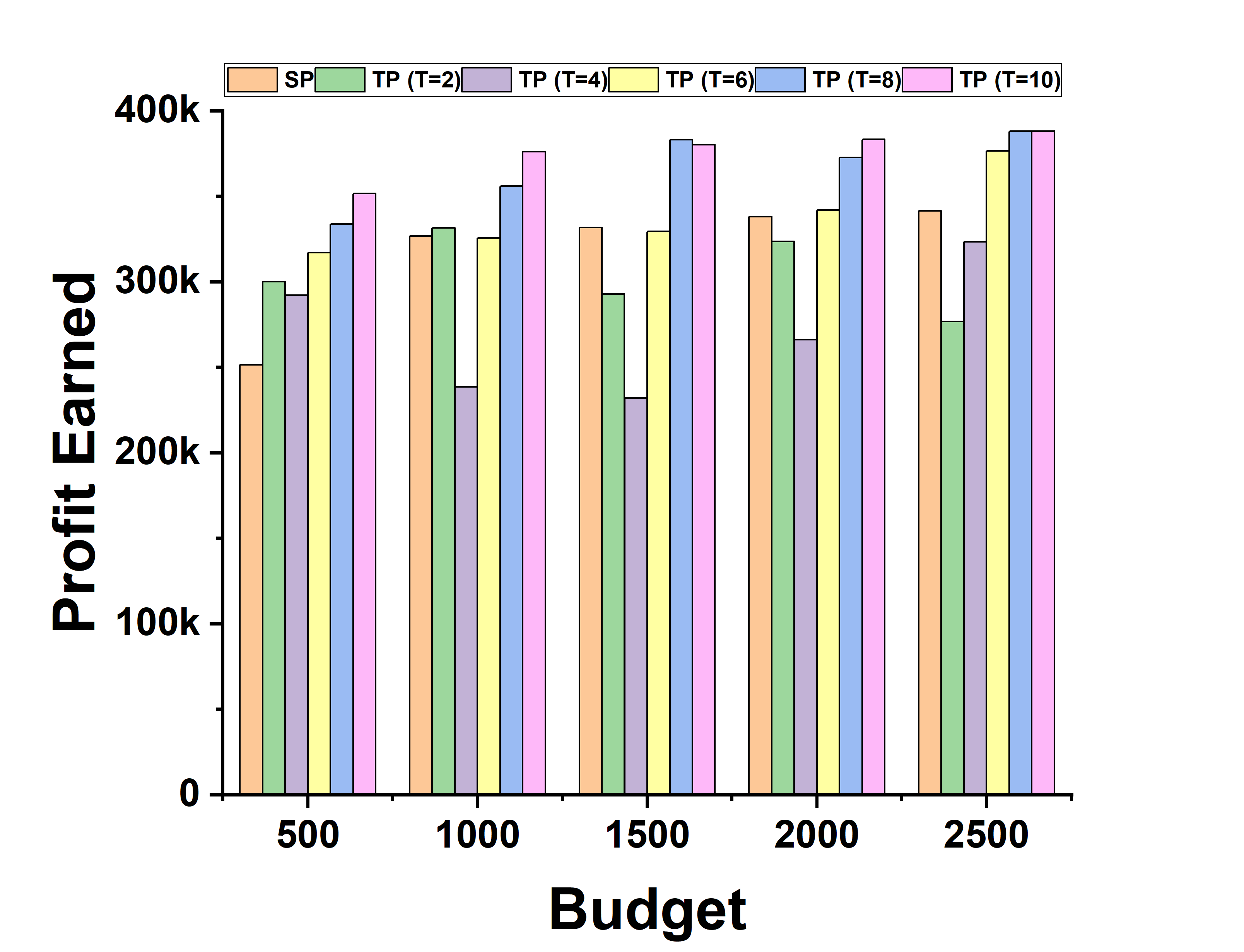}
        \caption{Random}
    \end{subfigure} &
    \begin{subfigure}[t]{0.22\textwidth}
        \includegraphics[width=\linewidth]{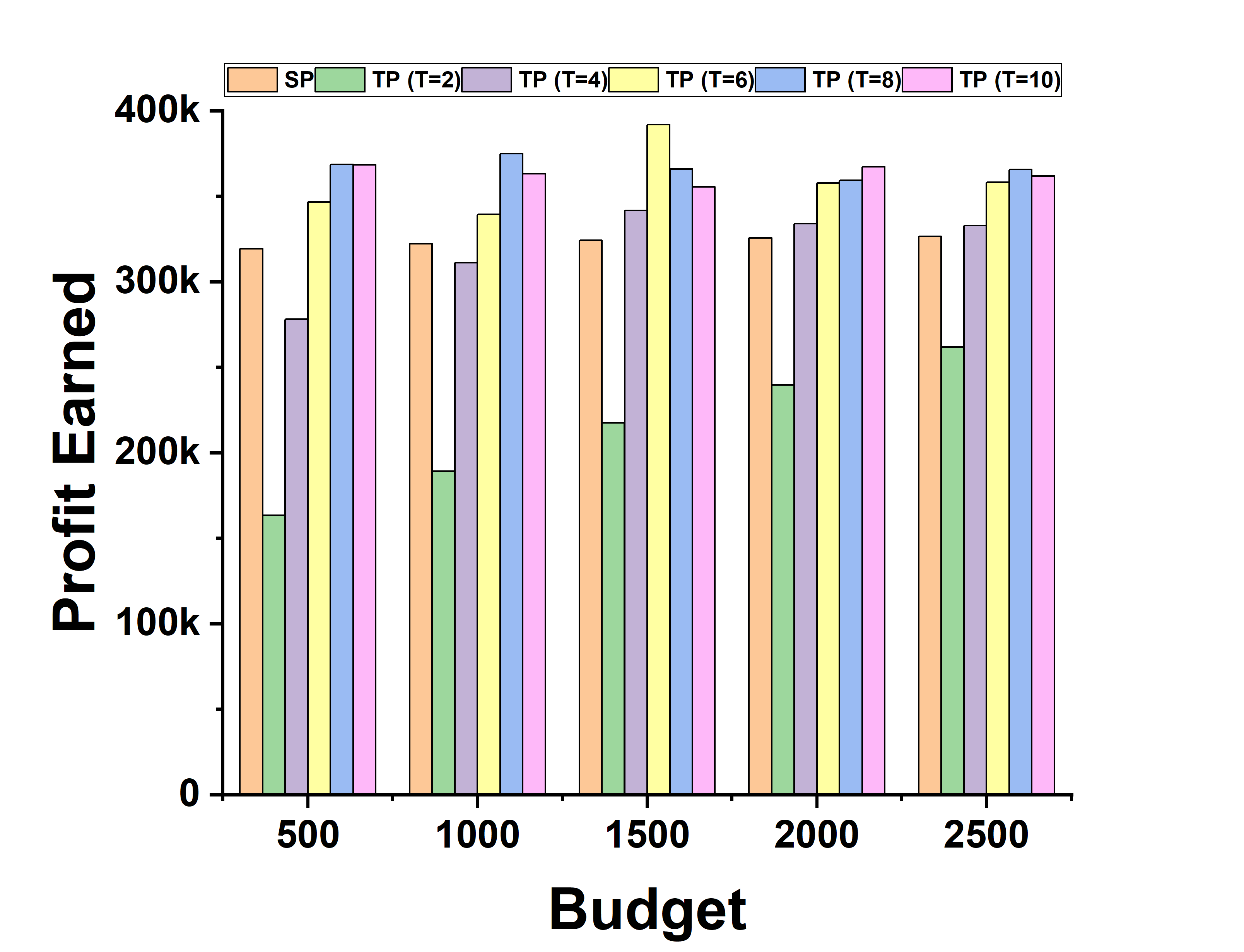}
        \caption{High Degree}
    \end{subfigure} &
    \begin{subfigure}[t]{0.22\textwidth}
        \includegraphics[width=\linewidth]{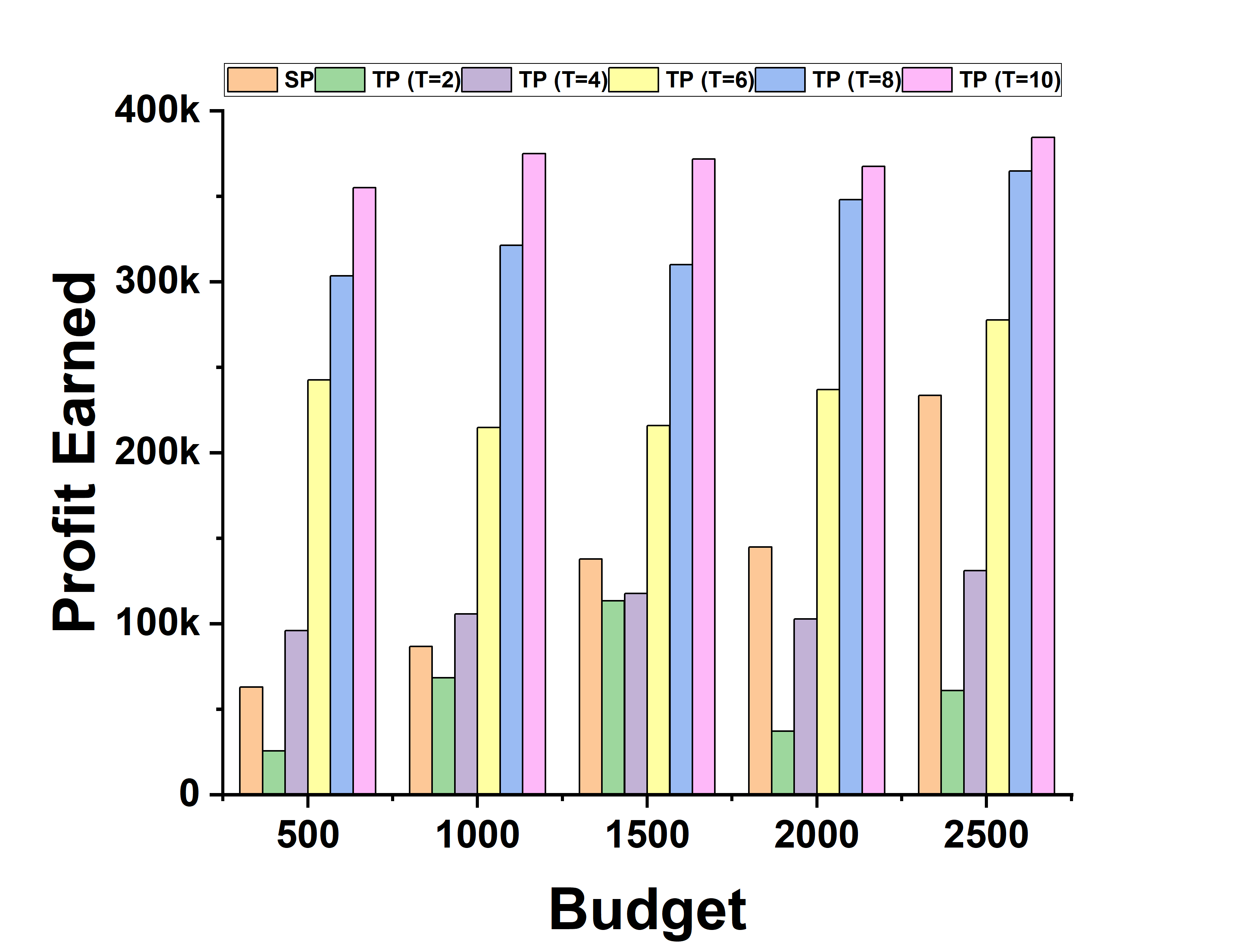}
        \caption{Clustering\\Coefficient}
    \end{subfigure} &
    \begin{subfigure}[t]{0.22\textwidth}
        \includegraphics[width=\linewidth]{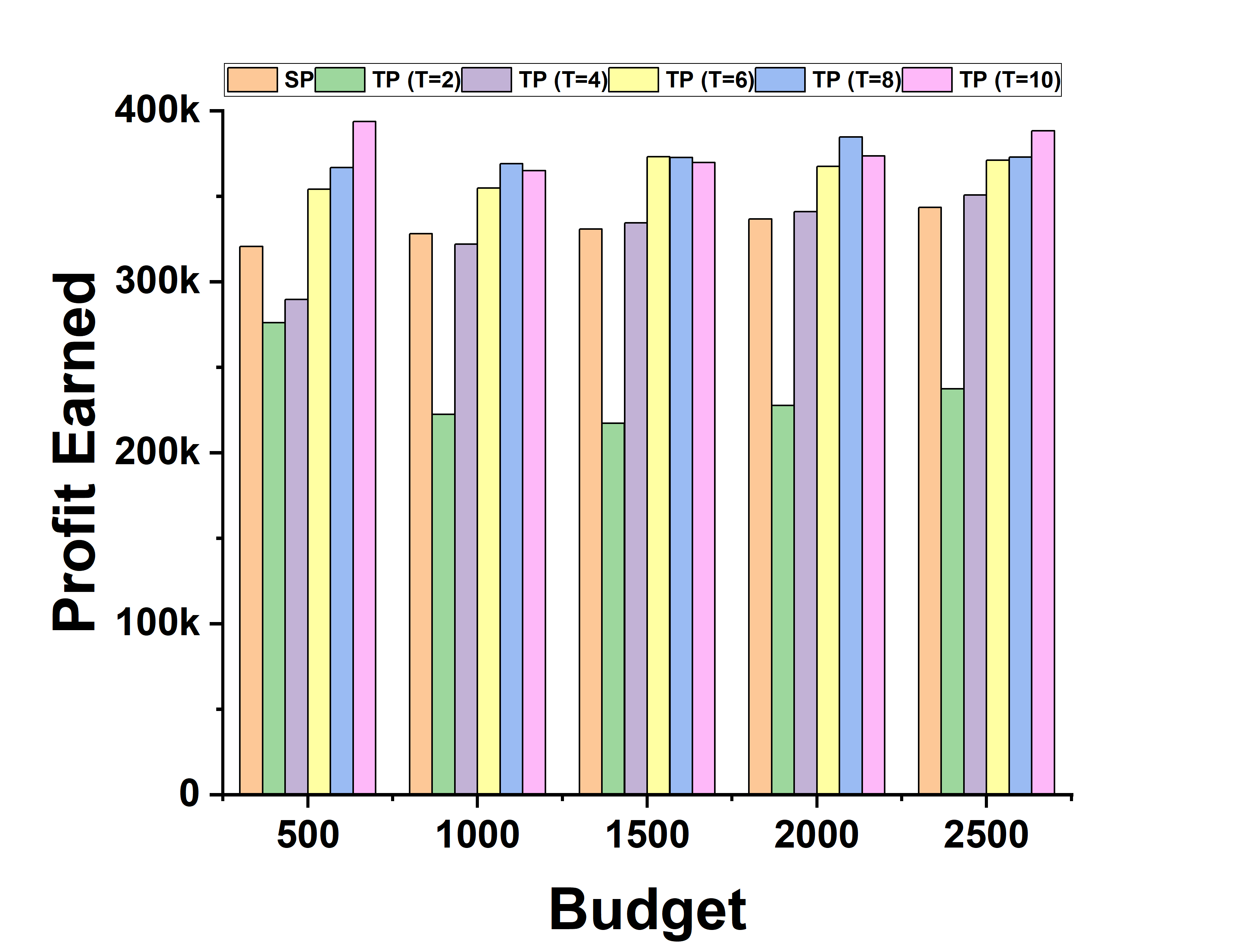}
        \caption{Degree Discount}
    \end{subfigure} \\[6pt]

    \begin{subfigure}[t]{0.22\textwidth}
        \includegraphics[width=\linewidth]{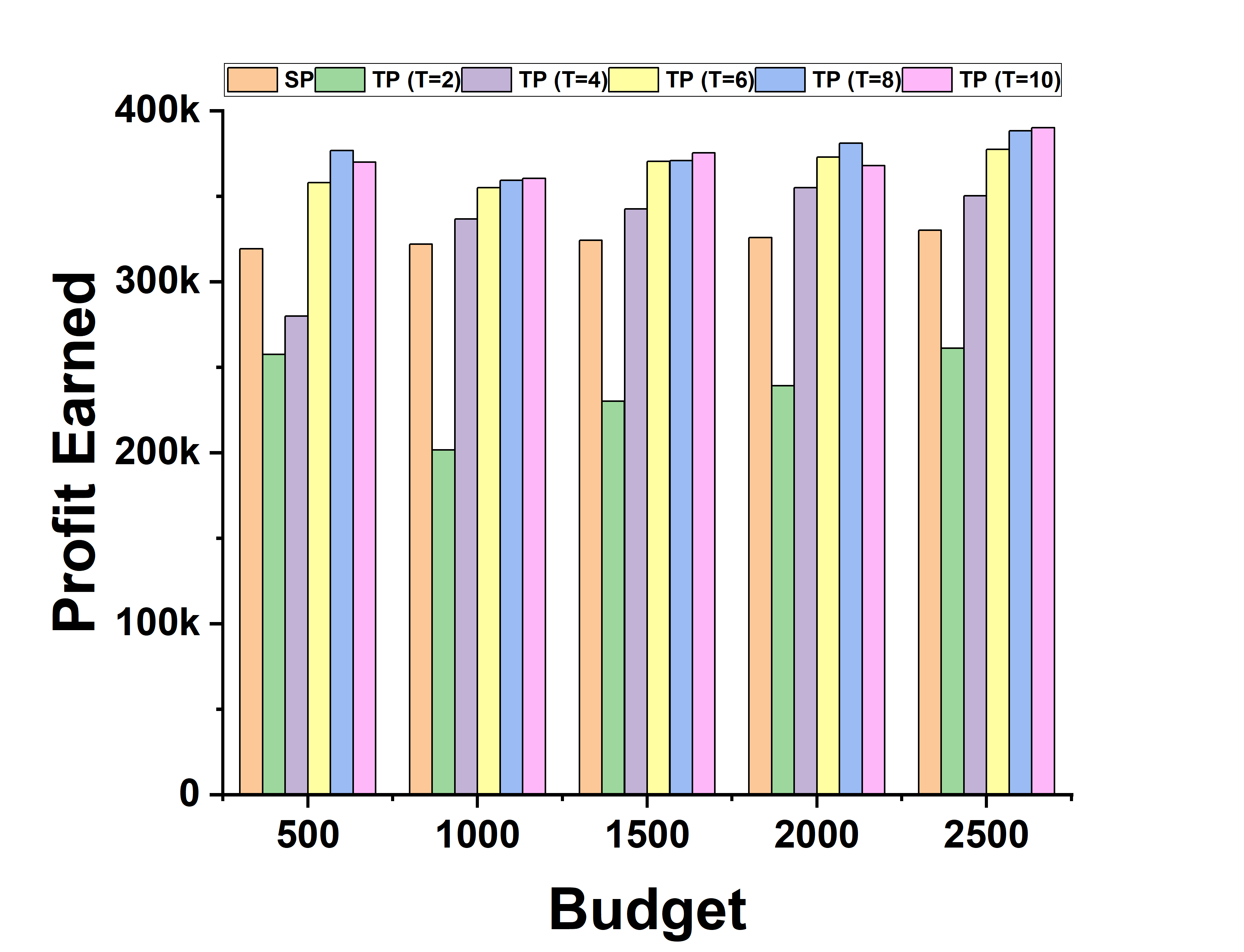}
        \caption{Single Discount}
    \end{subfigure} &
    \begin{subfigure}[t]{0.22\textwidth}
        \includegraphics[width=\linewidth]{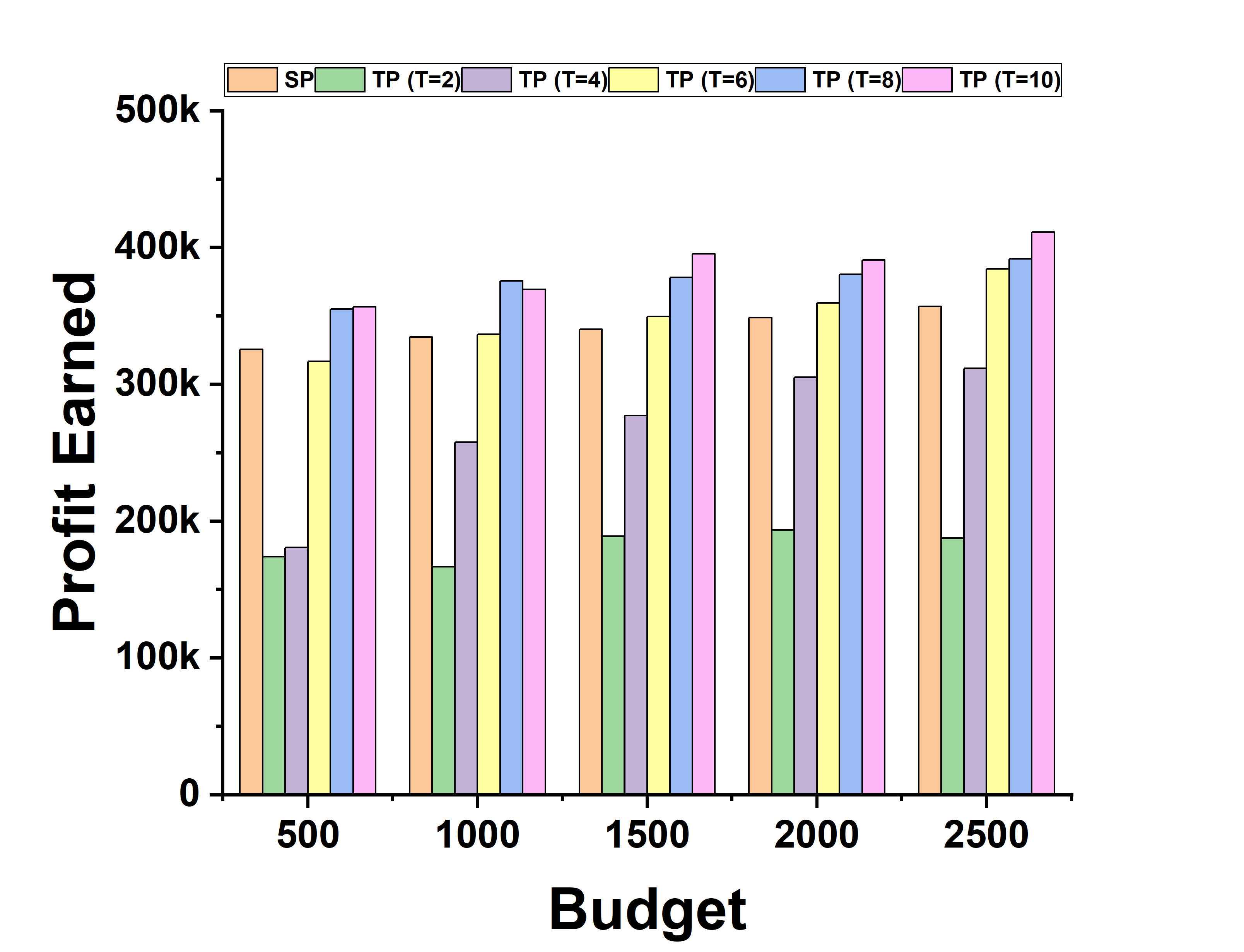}
        \caption{Simple Greedy}
    \end{subfigure} &
    \begin{subfigure}[t]{0.22\textwidth}
        \includegraphics[width=\linewidth]{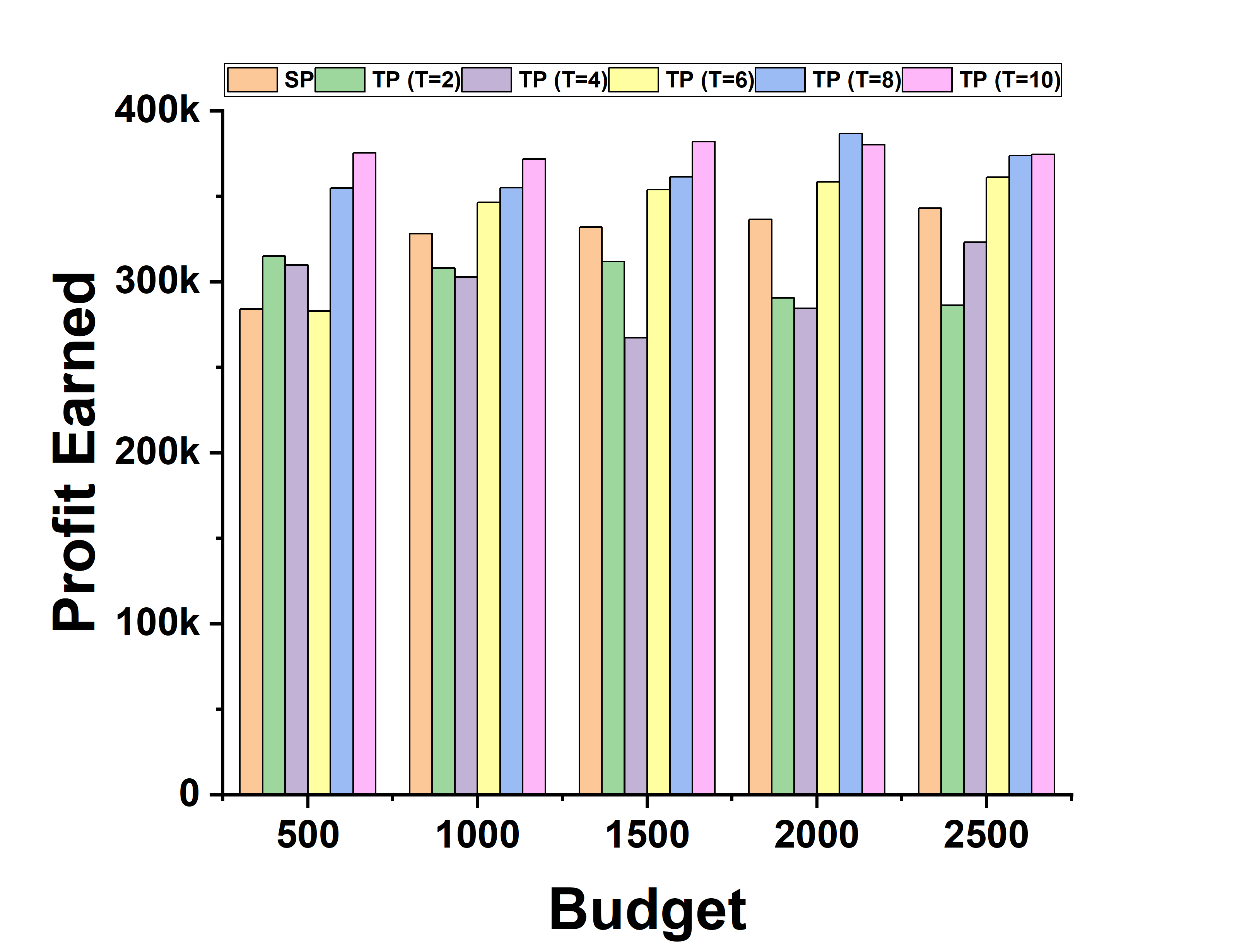}
        \caption{Double Greedy}
    \end{subfigure} &
    \begin{subfigure}[t]{0.22\textwidth}
        \includegraphics[width=\linewidth]{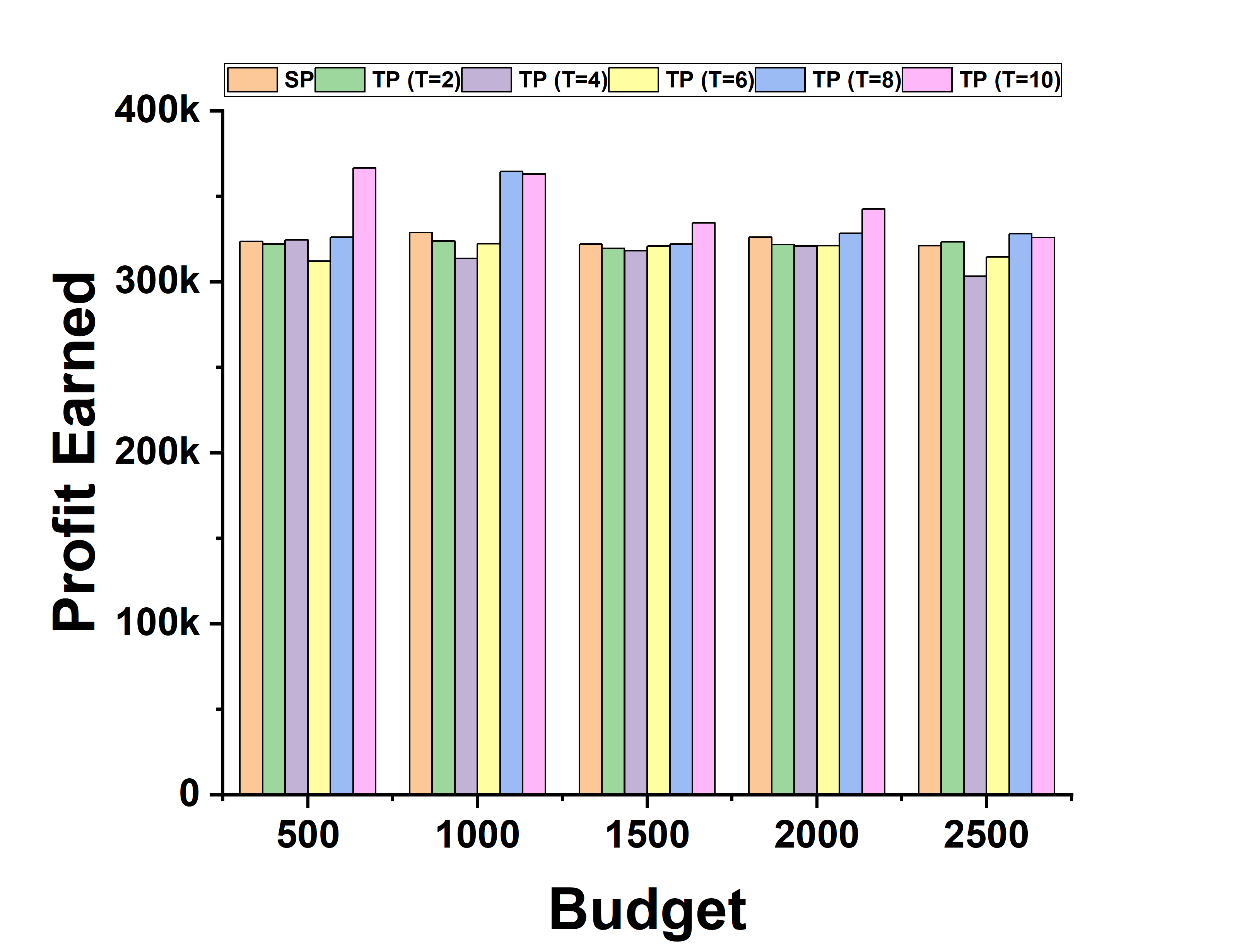}
        \caption{Stochastic Greedy}
    \end{subfigure}
\end{tabular}
\caption{Profit Earned in Single Phase Vs. Two Phase setting (split ratio 90\%, Probability Setting - Trivalency, \textit{Email-Eu-Core} Dataset)}
\label{Fig:RQ1_T5}
\end{figure}

\subsubsection{Effect of Budget Splitting Ratio on Profit Earned in Two Phase Settings}
For the \textit{LM} dataset under the trivalency probability setting, the split ratio plays an important role in shaping the performance of algorithms in the two phase framework. The results vary across algorithms, showing that the best split ratio depends on the method used. Several baseline algorithms such as \textbf{Random}, \textbf{HD}, \textbf{HighCC}, and \textbf{DD} achieved their highest profits at a low split ratio of $0.1$. For example, \textbf{Random} reached its peak profit of $35747.20$ at budget $2500$, split ratio $0.1$, and timestep $10$ (Figure~\ref{Fig:RQ2LM_T5}(a)).
While \textbf{HD} and \textbf{DD} showed strong results at all split ratios, they performed better at higher split ratios like $0.7$ and $0.9$ in terms of minimum profits, suggesting they offer more consistent results and lower risk at those settings. In contrast, the \textbf{SD} algorithm performed best when more budget was used in the first phase. It achieved its top profit of $37960.78$ at budget $2500$, split ratio $0.9$, and timestep $8$ (Figure~\ref{Fig:RQ2LM_T5}(e)). This represents a $23.5\%$ gain compared to the single-phase profit, showing the importance of choosing the right split ratio for \textbf{SD}. Among the proposed methods, \textbf{DG} showed stable and high performance at multiple split ratios. For example, at budget $500$, split ratio $0.1$, and timestep $2$, \textbf{DG} earned $7255.72$ in two phase, compared to $6645.55$ in single-phase (Figure~\ref{Fig:RQ2LM_T1}(g)). The \textbf{SG} algorithm needed a higher split ratio to perform well. It showed significant improvements when the split ratio was between $0.5$ and $0.9$. In one case (budget $500$, split ratio $0.9$, timestep $8$), \textbf{SG} earned $11620$ compared to $9653.32$ in the single-phase version (Figure~\ref{Fig:RQ2LM_T4}(f)). Our \textbf{StG0.1} algorithm also performed better with the right configuration. At budget $500$, split ratio $0.7$, and timestep $2$, it earned $10631.02$, which is higher than its single-phase profit of $9526.36$ (Figure~\ref{Fig:RQ2LM_T1}(h)). To sum up, there is no single best split ratio for all algorithms. A lower split ratio like $0.1$ works well for \textbf{Random, HD, and DD,} while higher values ($0.5$–$0.9$) are necessary for the best results from \textbf{SD} and \textbf{SG}. The \textbf{DG} algorithm is more flexible, offering strong performance across all split ratios.

Figures~\ref{Fig:RQ2_T1} to \ref{Fig:RQ2_T5} show how the profit earned in the two phase setting changes with different budget split ratios for \textit{Email-Eu-Core} dataset. The split ratio refers to how much of the total budget is used in the first phase, with the rest used in the second phase. The \textbf{Random} algorithm tends to perform better at lower to mid-range split ratios. For example, at a split ratio of $0.3$, it earned an average profit of $349061.65$, which is higher than the $331087.34$ average at a split ratio of $0.9$. The \textbf{HD} algorithm shows some changes in profit across split ratios, but there is no consistent increasing or decreasing trend. For instance, at a split ratio of $0.1$, budget $500$, and timestep $10$, \textbf{HD} earned $354999$ (Figure~\ref{Fig:RQ2_T5}(b)). However, at split ratio $0.9$, budget $500$, and timestep $2$, its profit dropped to $163208$ (Figure~\ref{Fig:RQ2_T1}(b)). For the \textbf{HighCC} algorithm, the average profit increases up to a split ratio of $0.7$ ($229018.79$), but slightly drops at $0.9$ ($221796.87$), suggesting that a balanced budget allocation may work best. The \textbf{DD} algorithm generally earns less profit as the split ratio increases. At a split ratio of $0.1$, budget $500$, and timestep $10$, \textbf{DD} earned $377199.75$. But at $0.9$, budget $500$, and timestep $4$, the profit dropped to $289481.35$, indicating better performance when more budget is saved for the second phase. Similarly, the \textbf{SD} algorithm also tends to perform better with lower split ratios. For example, at split ratio $0.1$, budget $500$, and timestep $10$, \textbf{SD} earned $324635.26$. Still, in one instance with a higher split ratio of $0.9$, budget $500$, and timestep $6$, \textbf{SD} reached a profit of $357756.21$, showing that some exceptions exist. The \textbf{SG} algorithm shows the opposite trend: it performs better when more budget is allocated in the first phase. At split ratio $0.1$, budget $500$, and timestep $2$, \textbf{SG} earned $186836.42$, while at split ratio $0.9$, budget $500$, and timestep $10$, the profit rose to $356498$. The \textbf{DG} algorithm has a more complex pattern. Its best performance is seen at split ratio $0.3$, where the average profit is $348508.80$, while at $0.9$ it decreases slightly to $336590.67$. Finally, the \textbf{StG0.1} algorithm remains very stable across all split ratios. For example, at a split ratio of $0.1$, budget $500$, and timestep $10$, it earned $366271.11$, and at $0.9$ with the same configuration, it earned $366407.54$. In summary, algorithms that rely on broader early diffusion (like \textbf{Random, DD, SD, DG}) often benefit from lower split ratios where more budget is kept for the second phase. On the other hand, algorithms such as \textbf{SG} work better when more budget is spent upfront, with profits increasing as the split ratio rises.


\begin{figure}[htbp]
\centering
\captionsetup[sub]{font=footnotesize}
\begin{tabular}{cccc}
    \begin{subfigure}[t]{0.22\textwidth}
        \includegraphics[width=\linewidth]{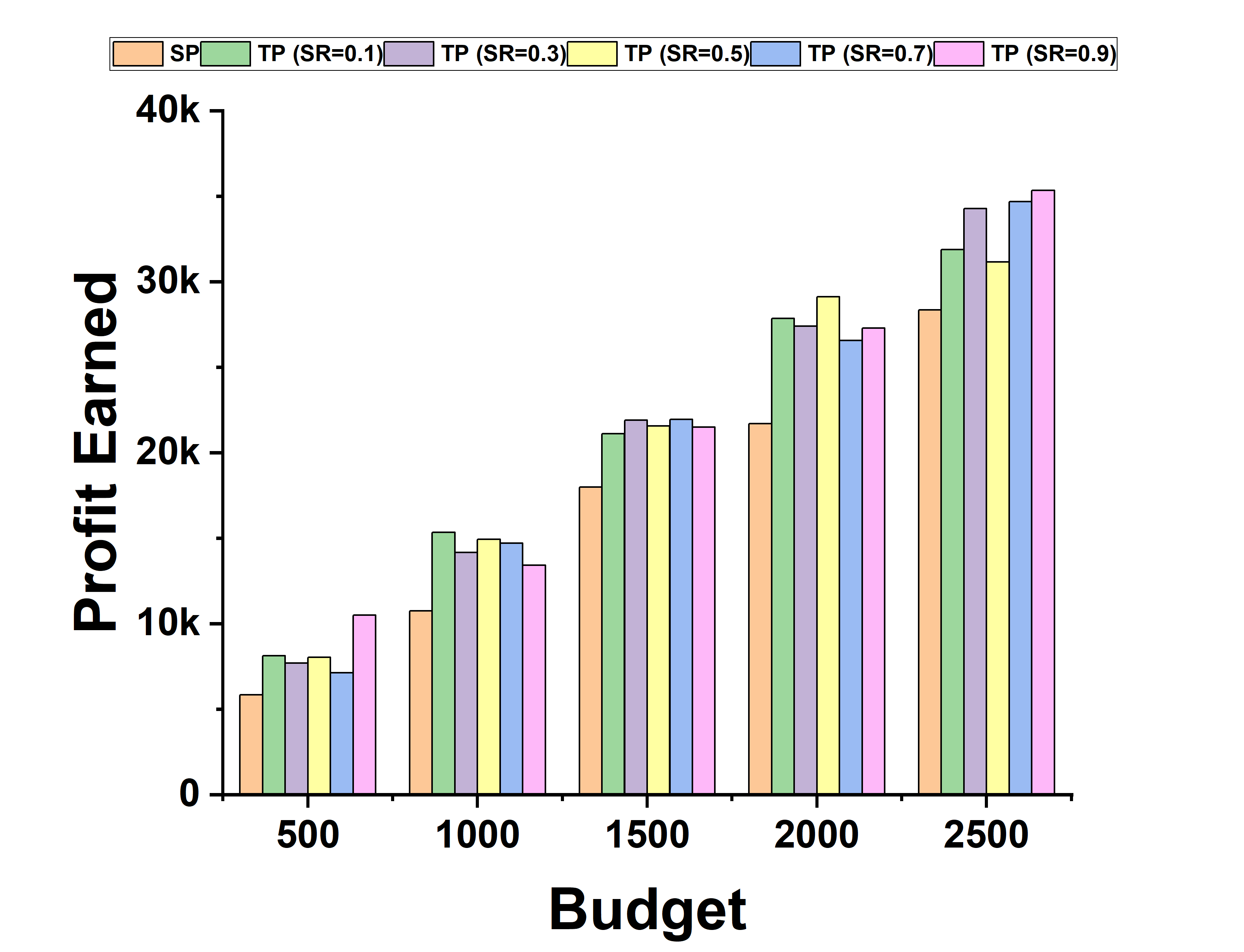}
        \caption{Random}
    \end{subfigure} &
    \begin{subfigure}[t]{0.22\textwidth}
        \includegraphics[width=\linewidth]{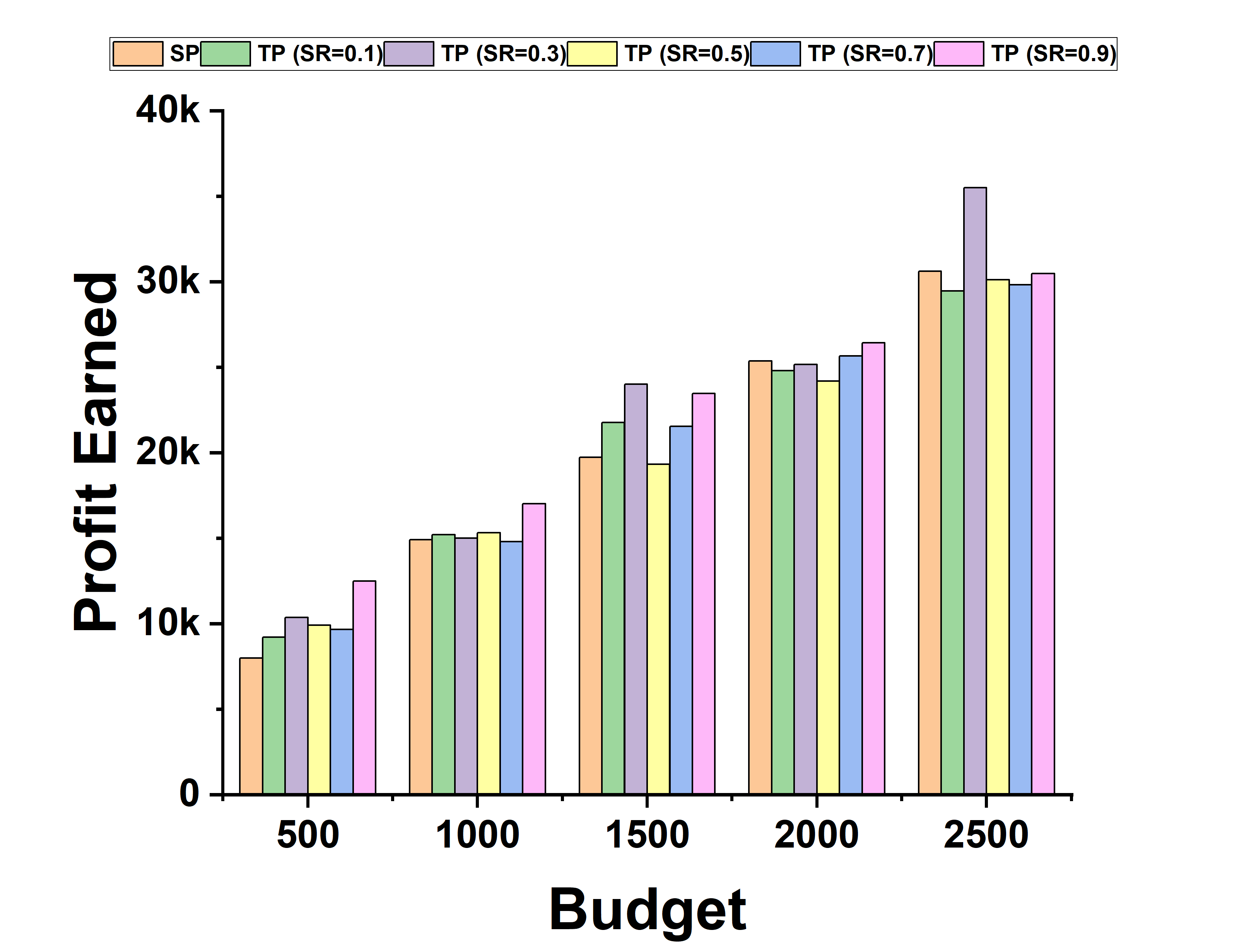}
        \caption{High Degree}
    \end{subfigure} &
    \begin{subfigure}[t]{0.22\textwidth}
        \includegraphics[width=\linewidth]{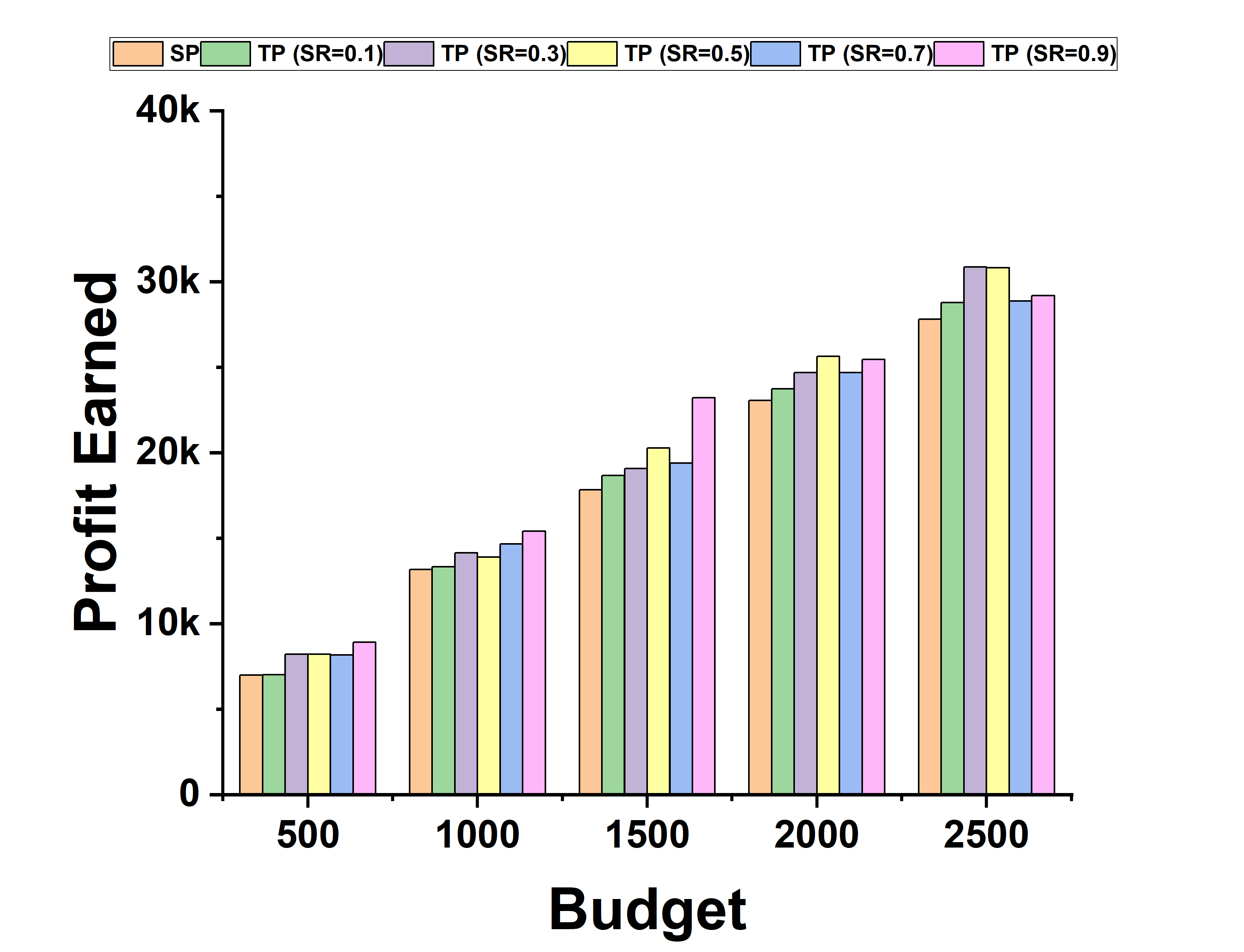}
        \caption{Clustering\\Coefficient}
    \end{subfigure} &
    \begin{subfigure}[t]{0.22\textwidth}
        \includegraphics[width=\linewidth]{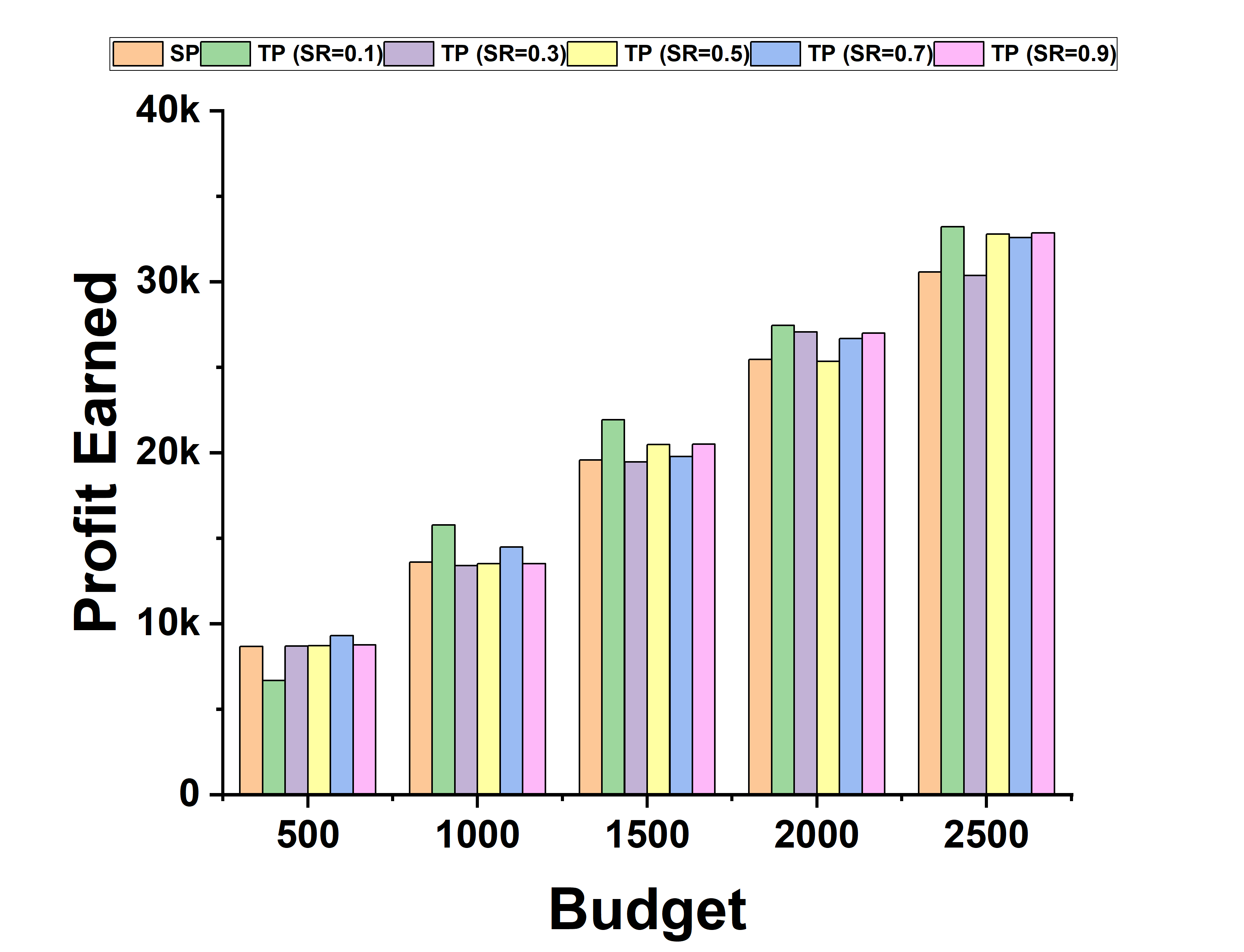}
        \caption{Degree Discount}
    \end{subfigure} \\[6pt]

    \begin{subfigure}[t]{0.22\textwidth}
        \includegraphics[width=\linewidth]{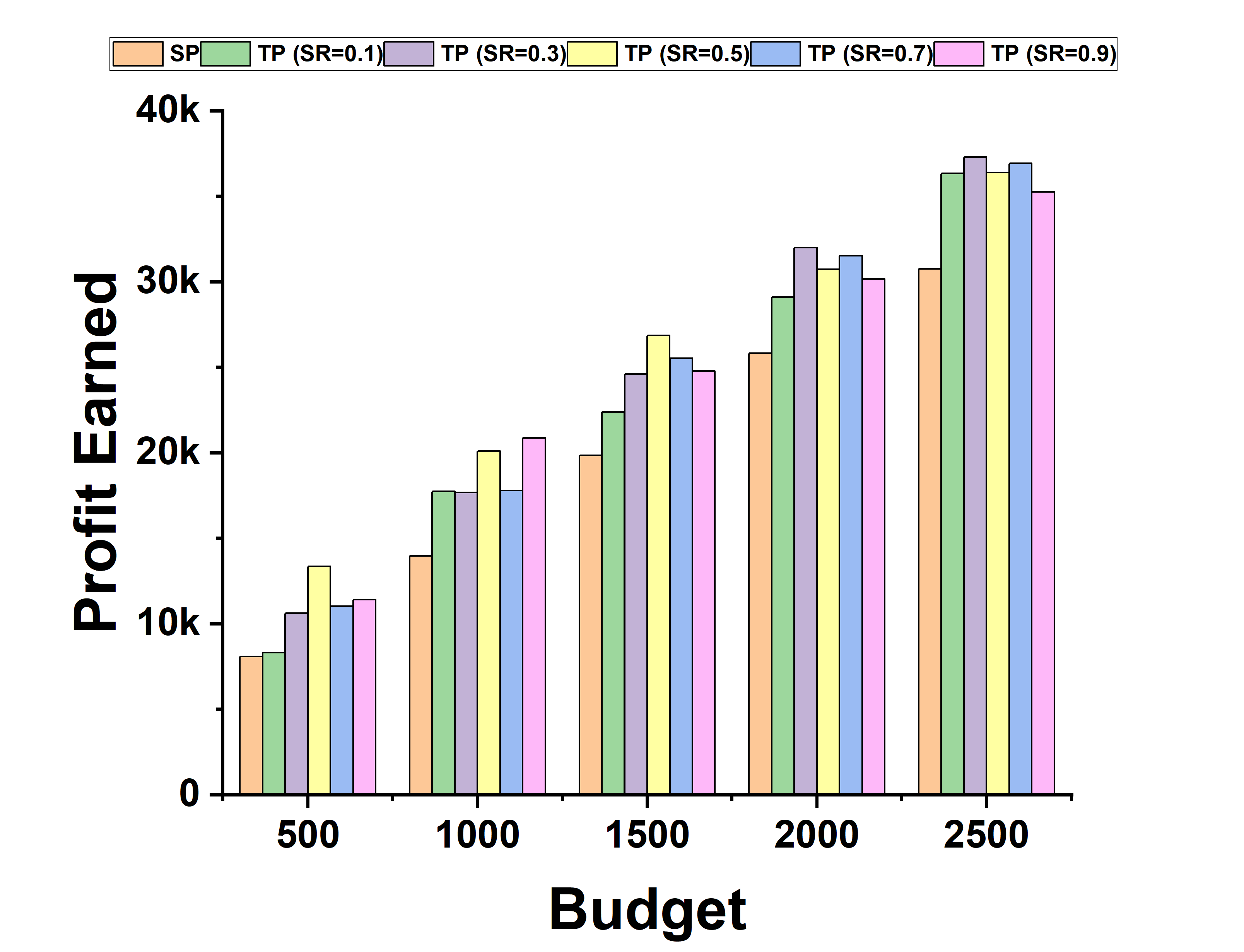}
        \caption{Single Discount}
    \end{subfigure} &
    \begin{subfigure}[t]{0.22\textwidth}
        \includegraphics[width=\linewidth]{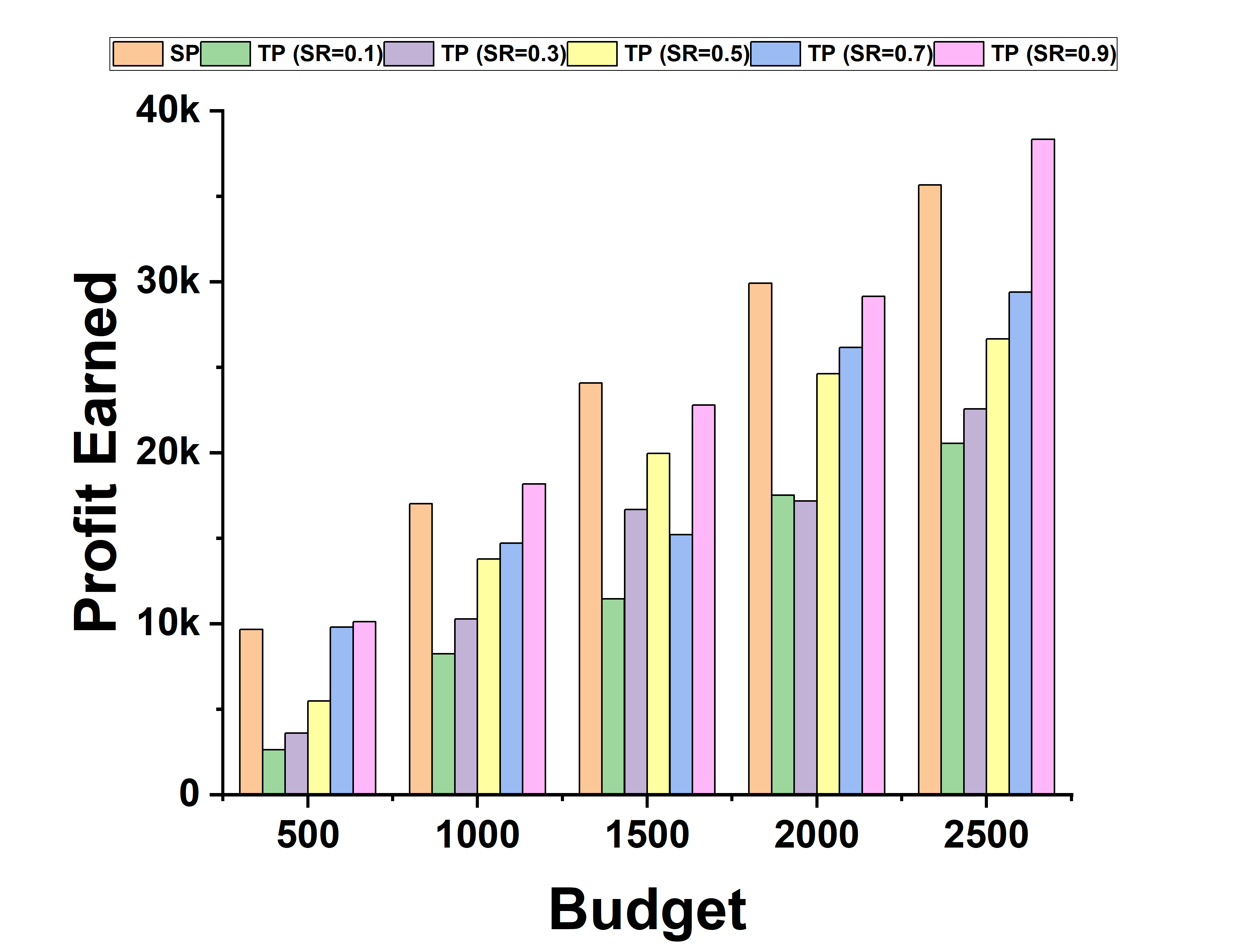}
        \caption{Simple Greedy}
    \end{subfigure} &
    \begin{subfigure}[t]{0.22\textwidth}
        \includegraphics[width=\linewidth]{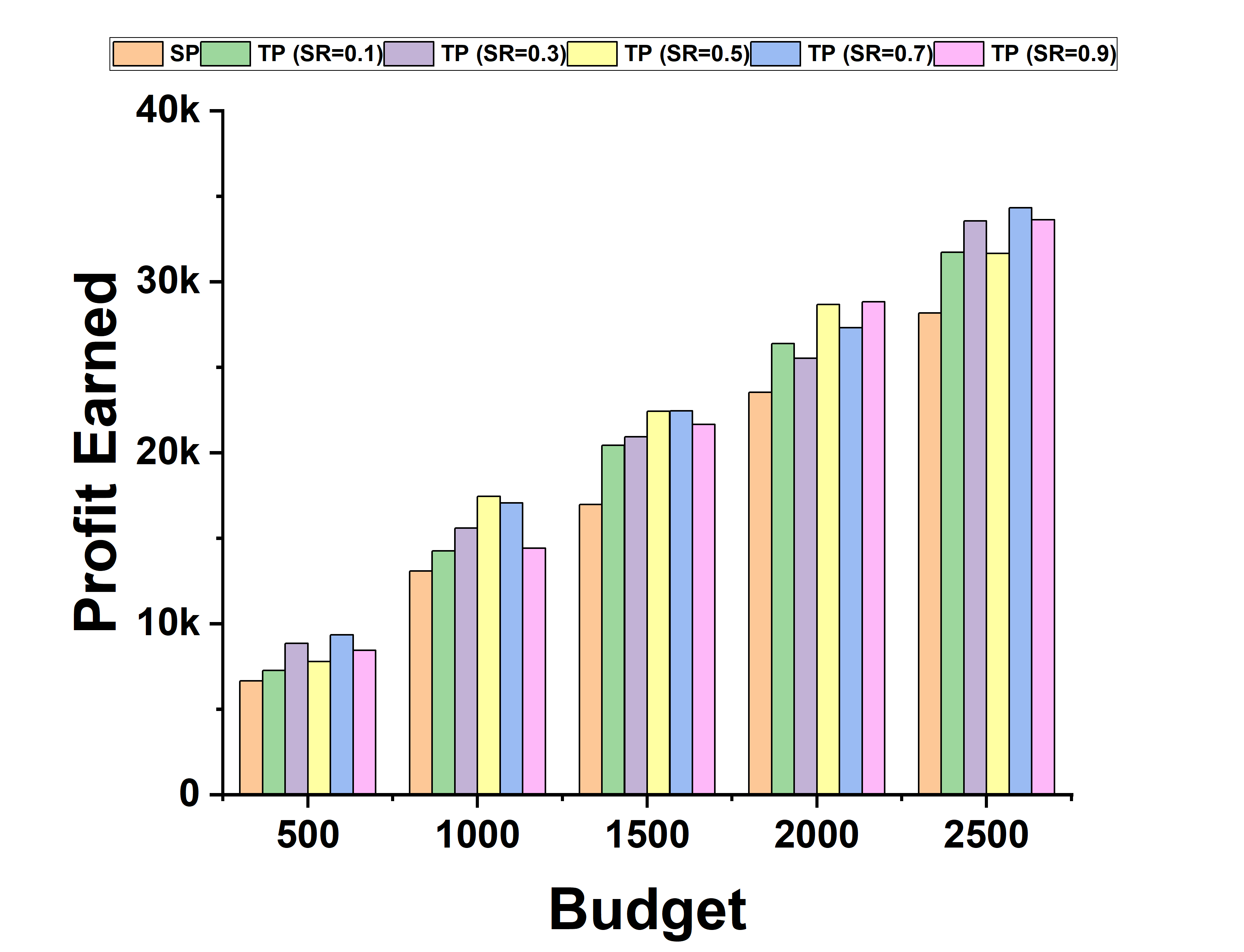}
        \caption{Double Greedy}
    \end{subfigure} &
    \begin{subfigure}[t]{0.22\textwidth}
        \includegraphics[width=\linewidth]{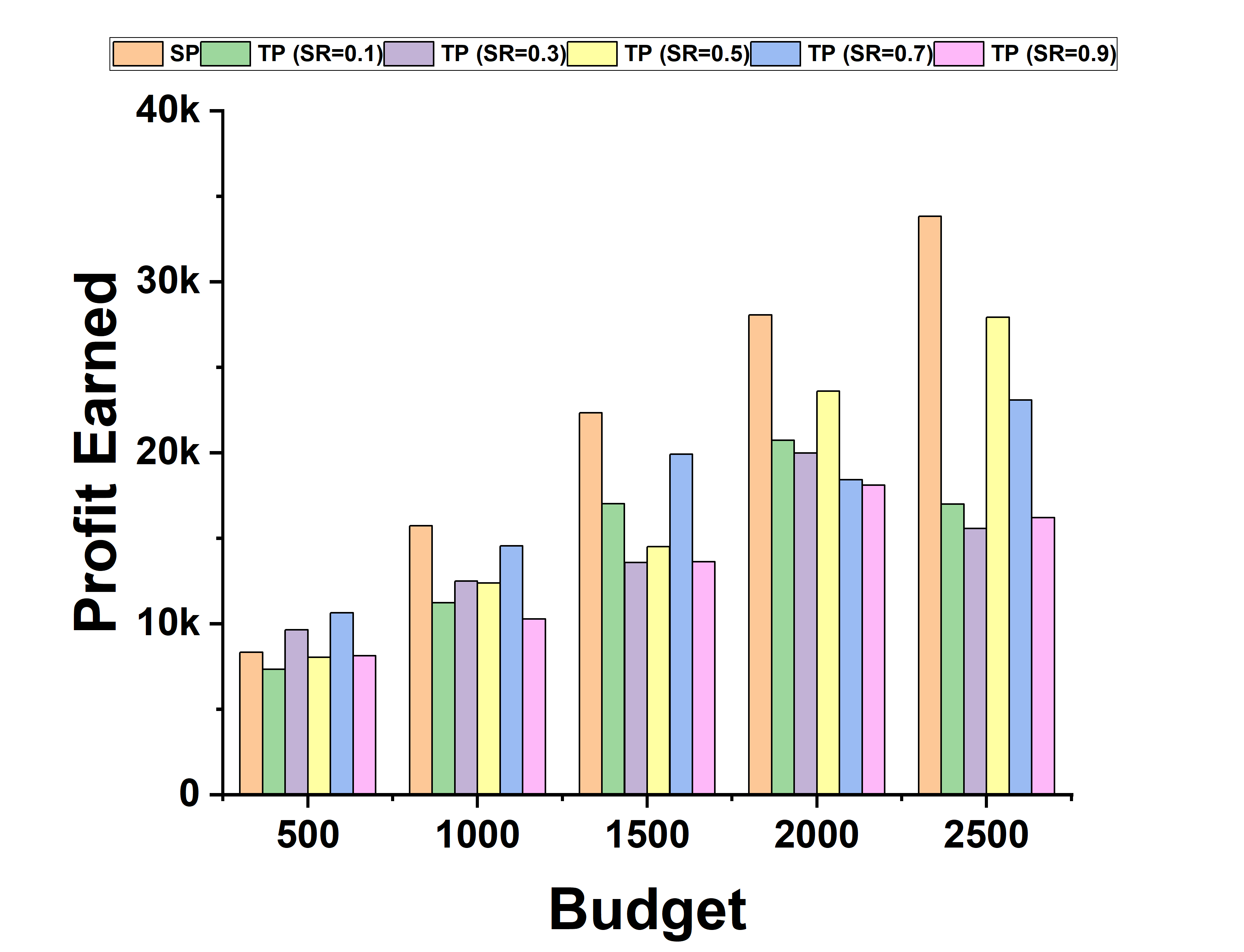}
        \caption{Stochastic Greedy}
    \end{subfigure}
\end{tabular}
\caption{Profit Earned in Single Phase Vs. Two Phase setting  (Timestep 2, Probability Setting - Trivalency, \textit{LM} Dataset)}
\label{Fig:RQ2LM_T1}
\end{figure}

\begin{figure}[htbp]
\centering
\captionsetup[sub]{font=footnotesize}
\begin{tabular}{cccc}
    \begin{subfigure}[t]{0.22\textwidth}
        \includegraphics[width=\linewidth]{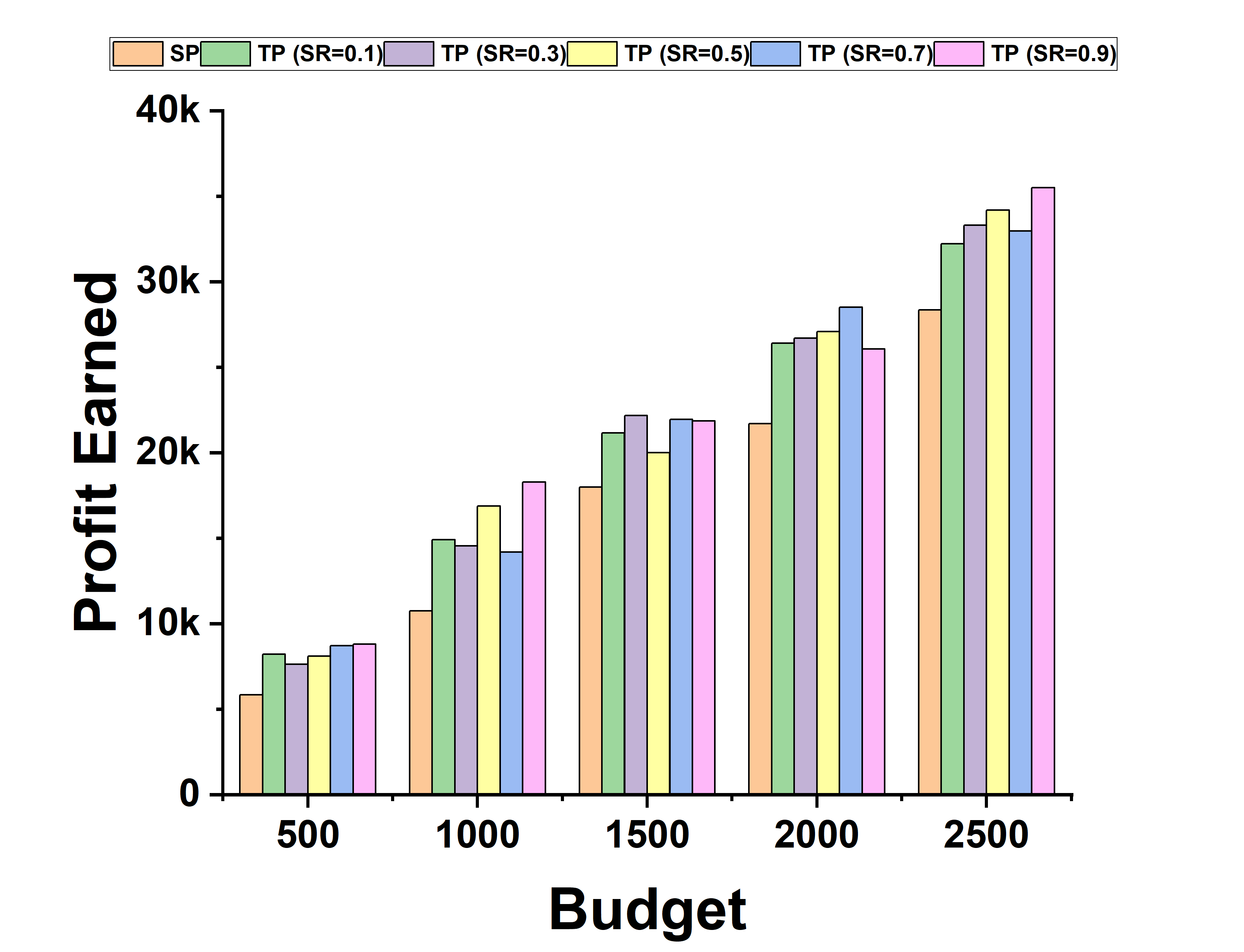}
        \caption{Random}
    \end{subfigure} &
    \begin{subfigure}[t]{0.22\textwidth}
        \includegraphics[width=\linewidth]{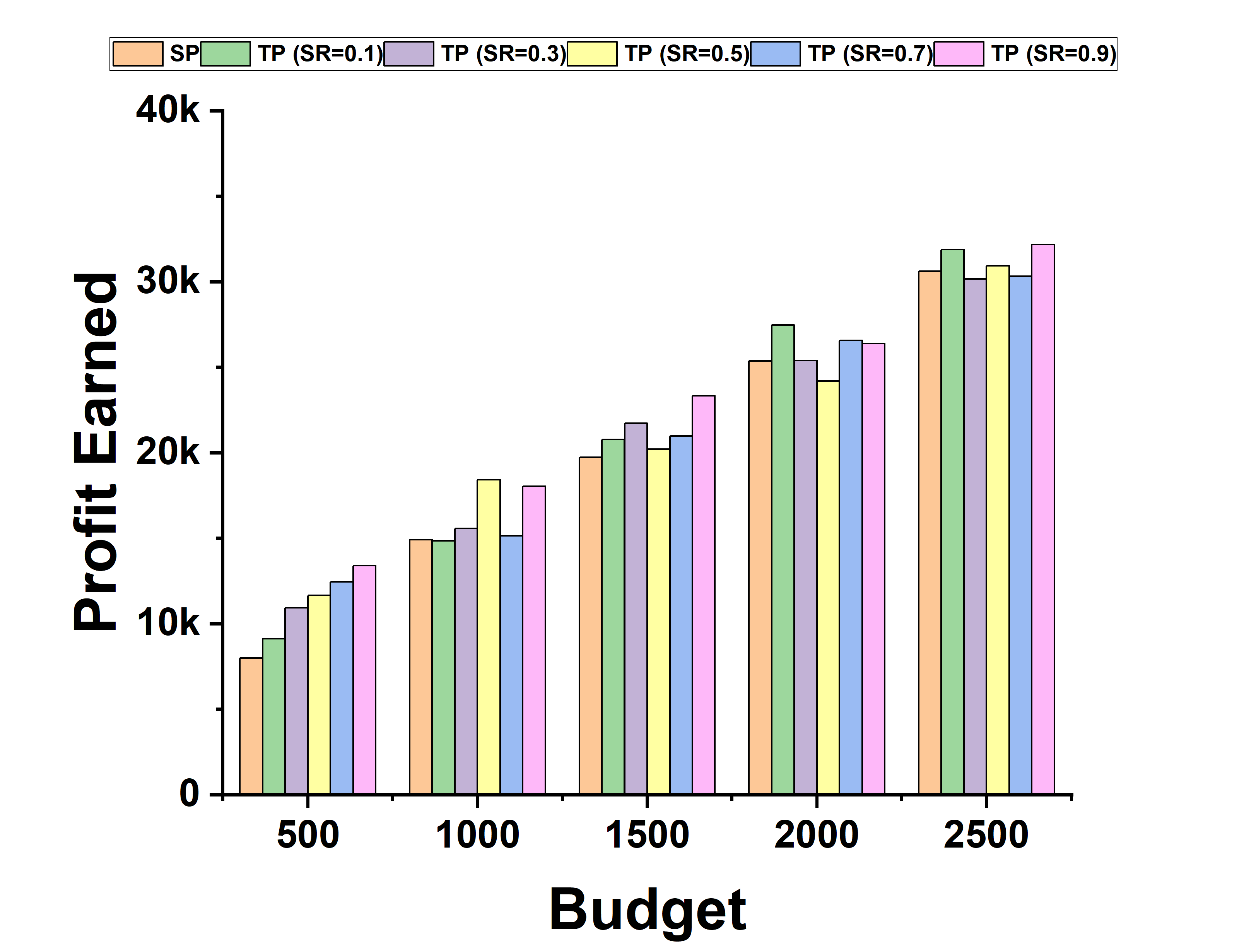}
        \caption{High Degree}
    \end{subfigure} &
    \begin{subfigure}[t]{0.22\textwidth}
        \includegraphics[width=\linewidth]{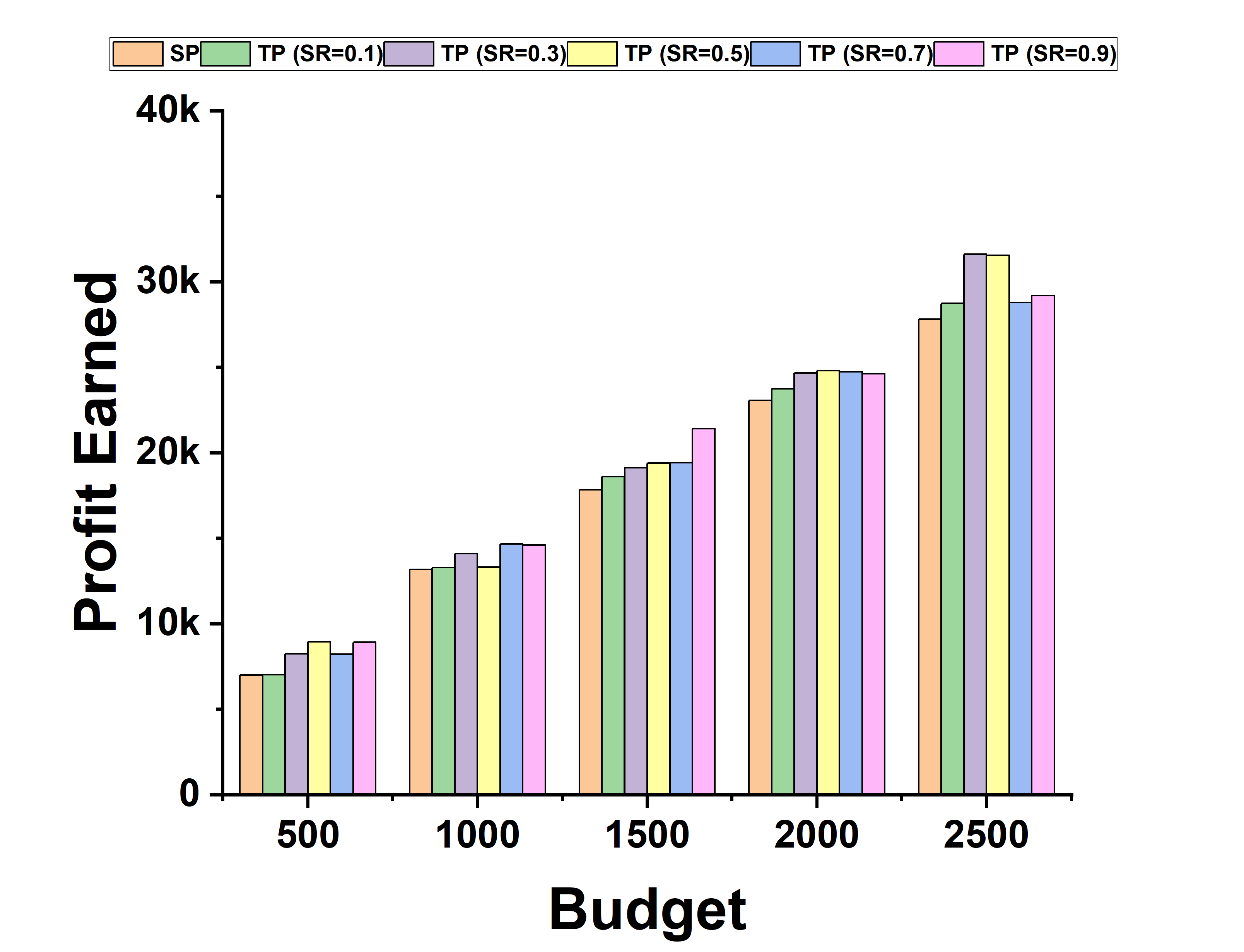}
        \caption{Clustering\\Coefficient}
    \end{subfigure} &
    \begin{subfigure}[t]{0.22\textwidth}
        \includegraphics[width=\linewidth]{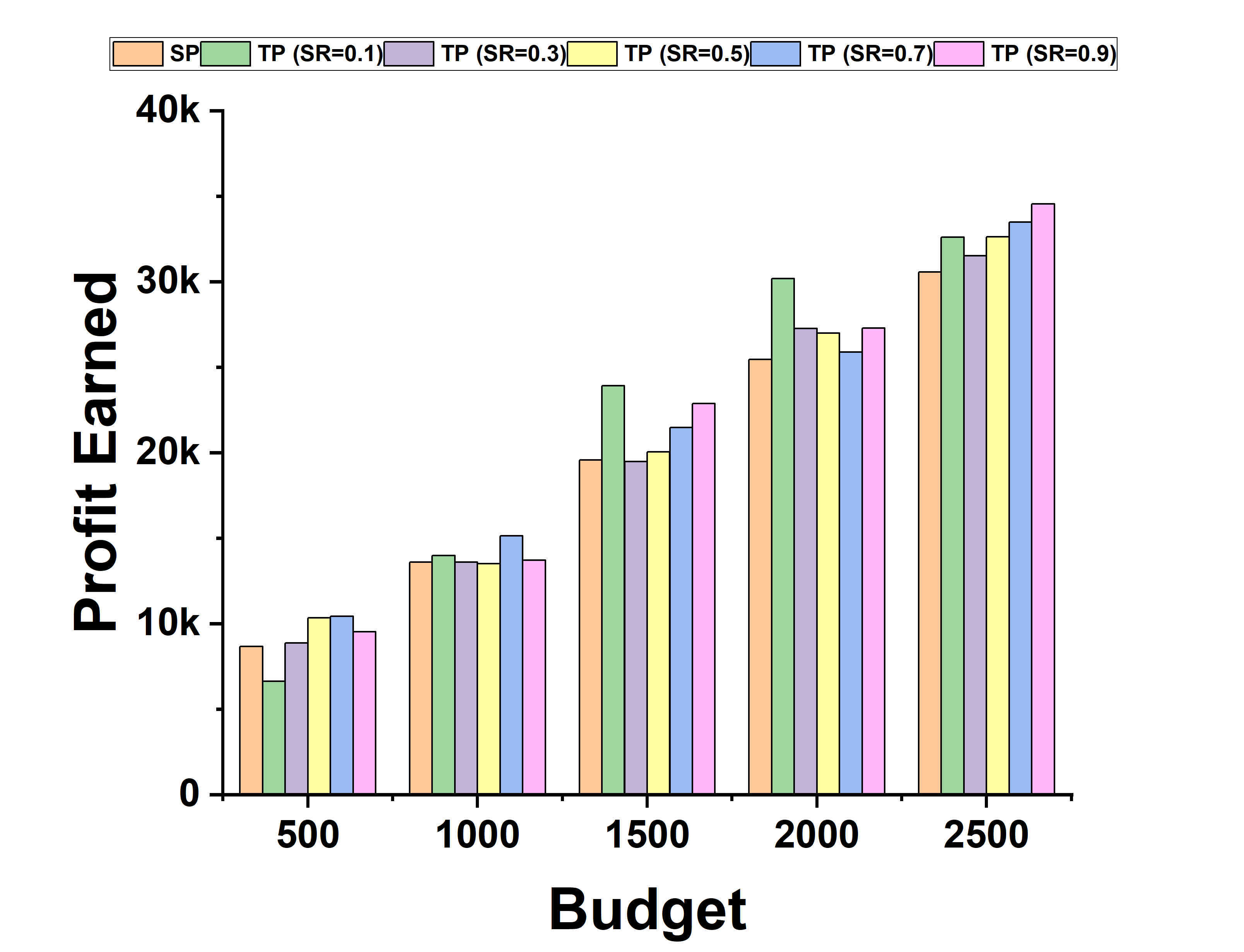}
        \caption{Degree Discount}
    \end{subfigure} \\[6pt]

    \begin{subfigure}[t]{0.22\textwidth}
        \includegraphics[width=\linewidth]{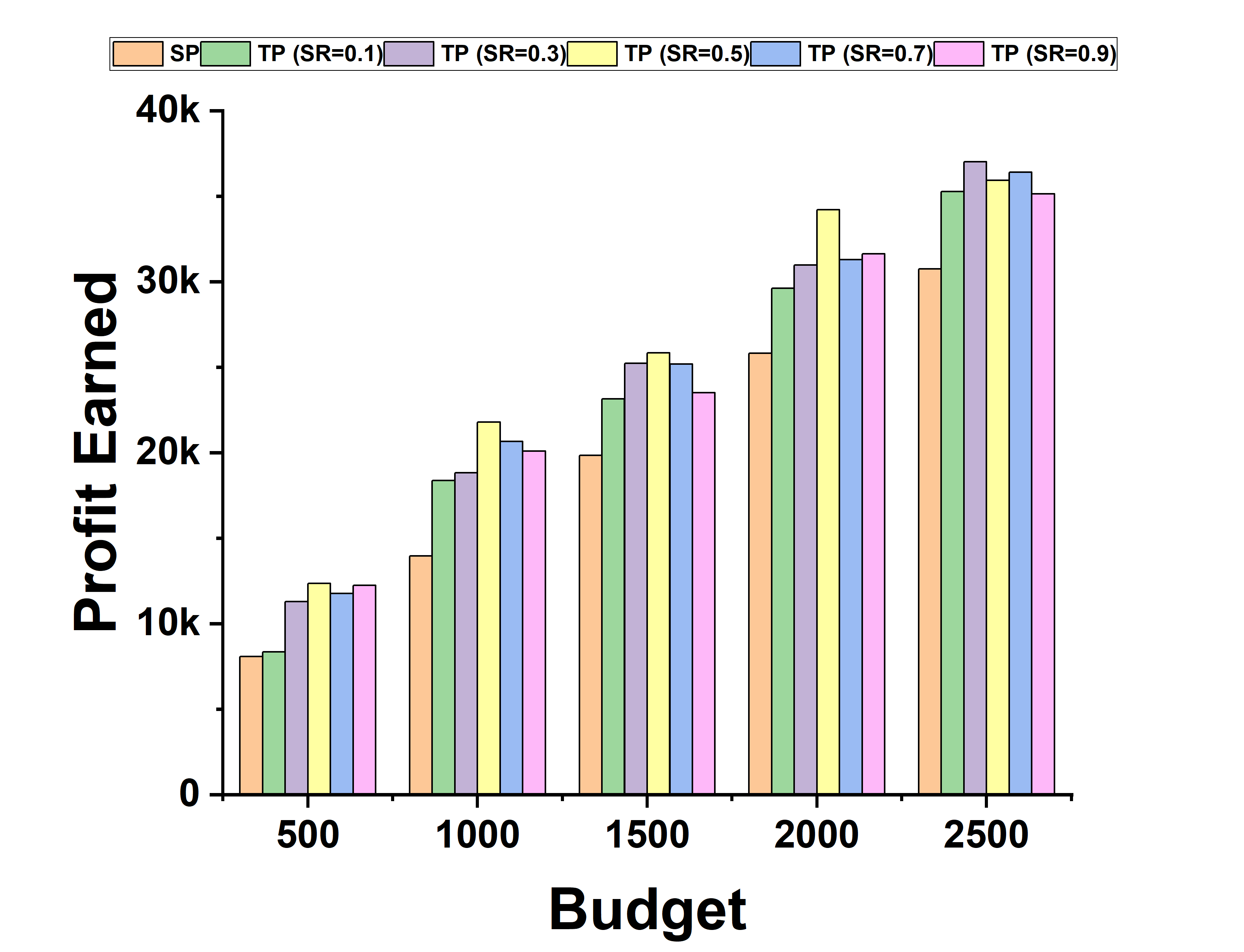}
        \caption{Single Discount}
    \end{subfigure} &
    \begin{subfigure}[t]{0.22\textwidth}
        \includegraphics[width=\linewidth]{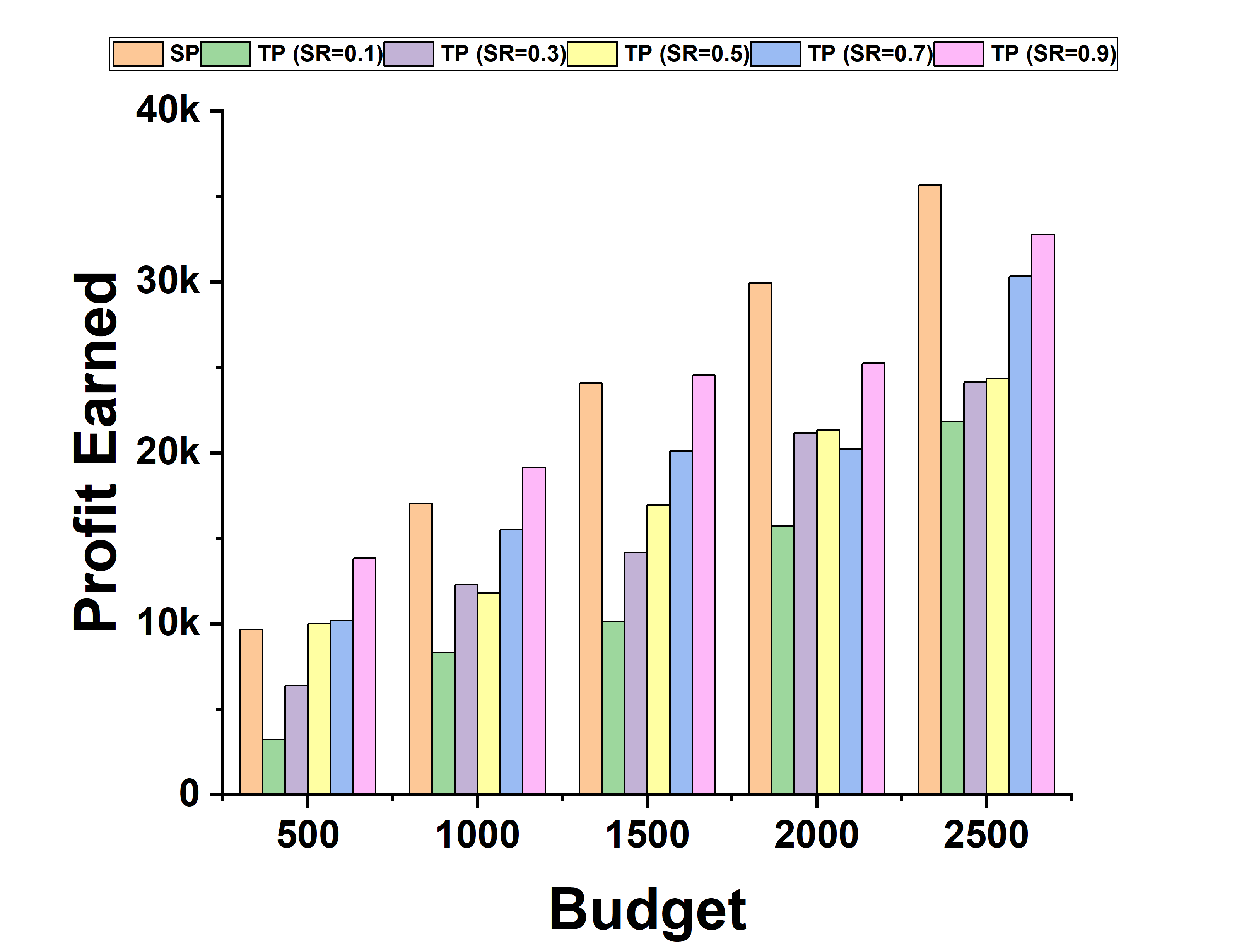}
        \caption{Simple Greedy}
    \end{subfigure} &
    \begin{subfigure}[t]{0.22\textwidth}
        \includegraphics[width=\linewidth]{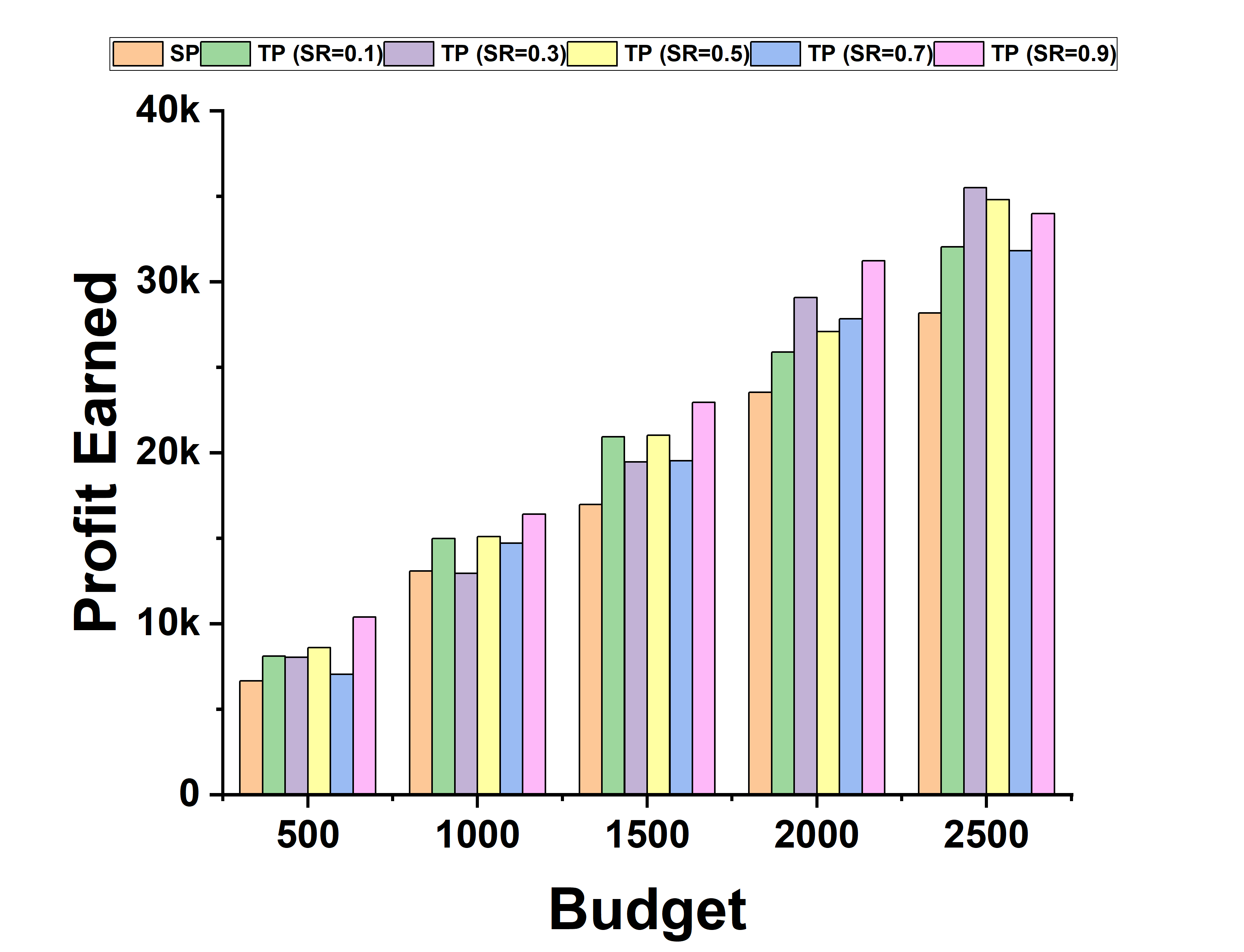}
        \caption{Double Greedy}
    \end{subfigure} &
    \begin{subfigure}[t]{0.22\textwidth}
        \includegraphics[width=\linewidth]{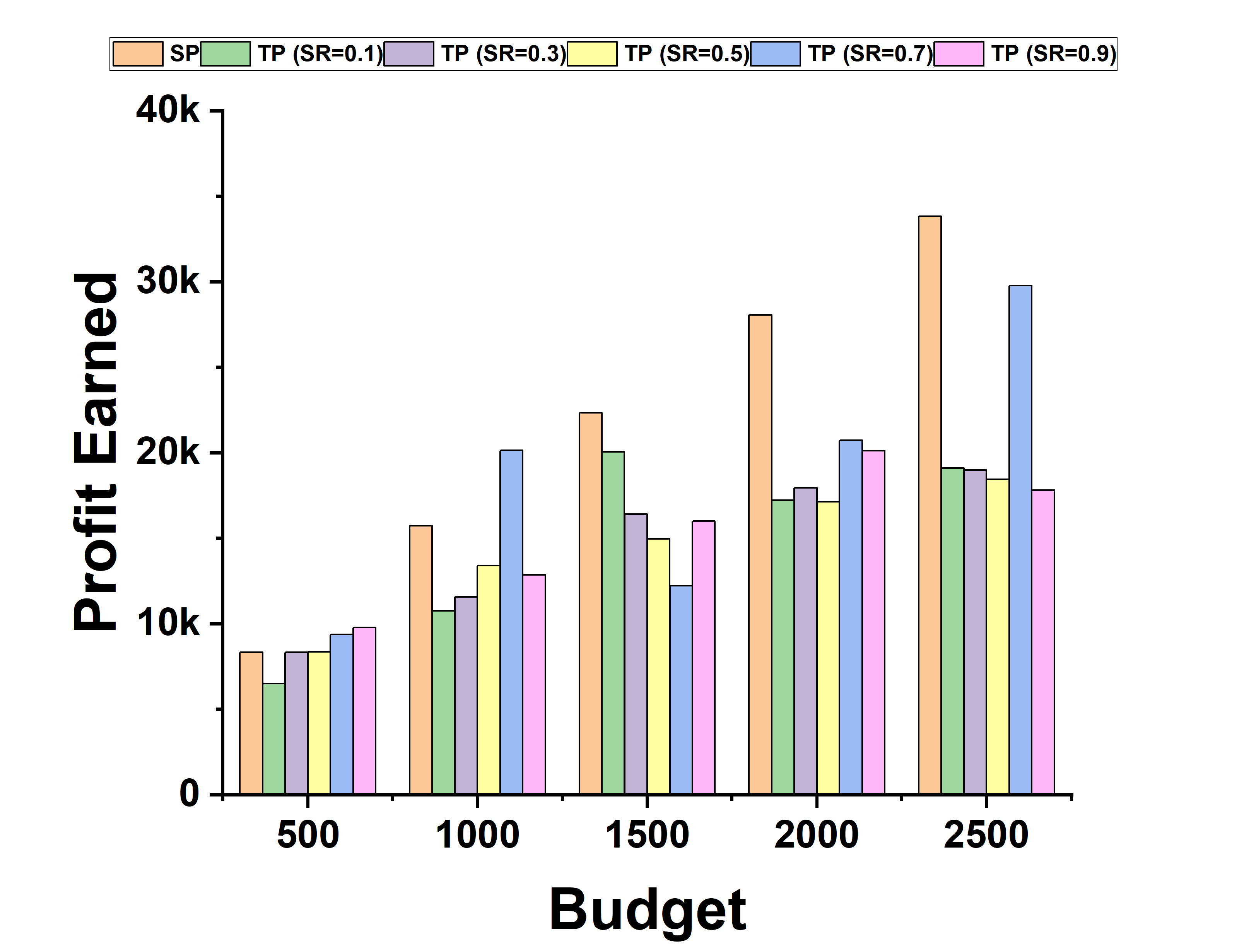}
        \caption{Stochastic Greedy}
    \end{subfigure}
\end{tabular}
\caption{Profit Earned in Single Phase Vs. Two Phase setting (Timestep 4, Probability Setting - Trivalency, \textit{LM} Dataset)}
\label{Fig:RQ2LM_T2}
\end{figure}

\begin{figure}[htbp]
\centering
\captionsetup[sub]{font=footnotesize}
\begin{tabular}{cccc}
    \begin{subfigure}[t]{0.22\textwidth}
        \includegraphics[width=\linewidth]{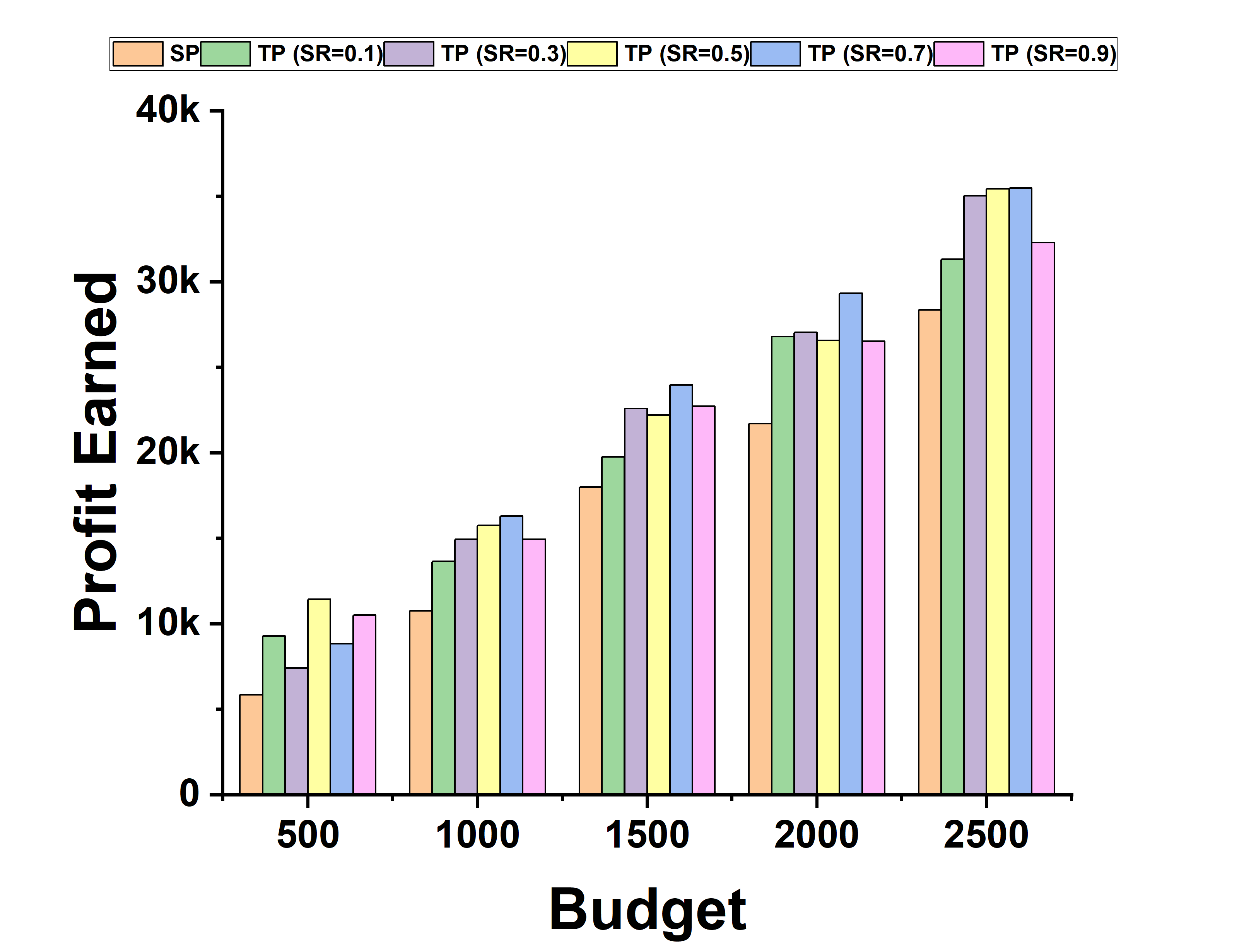}
        \caption{Random}
    \end{subfigure} &
    \begin{subfigure}[t]{0.22\textwidth}
        \includegraphics[width=\linewidth]{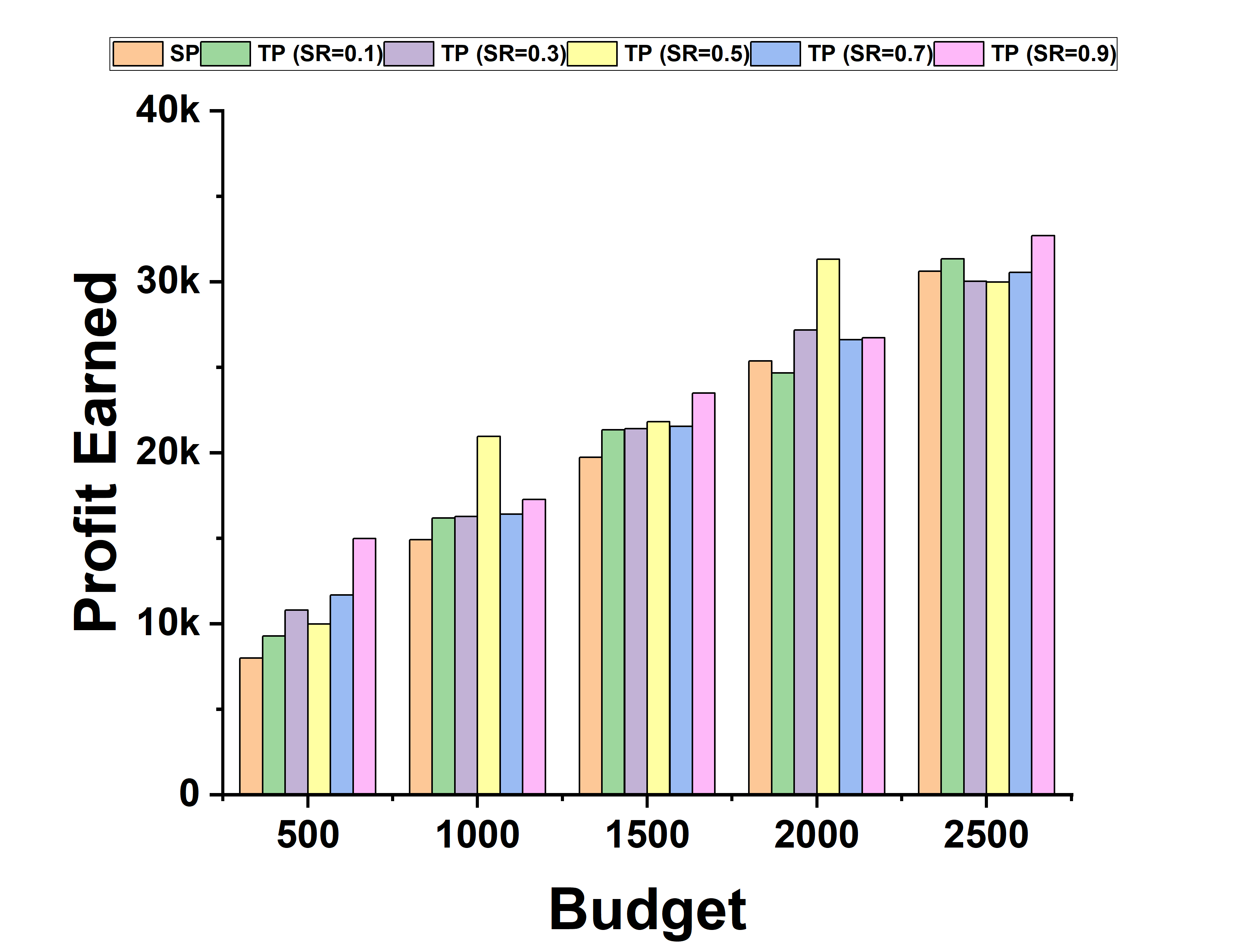}
        \caption{High Degree}
    \end{subfigure} &
    \begin{subfigure}[t]{0.22\textwidth}
        \includegraphics[width=\linewidth]{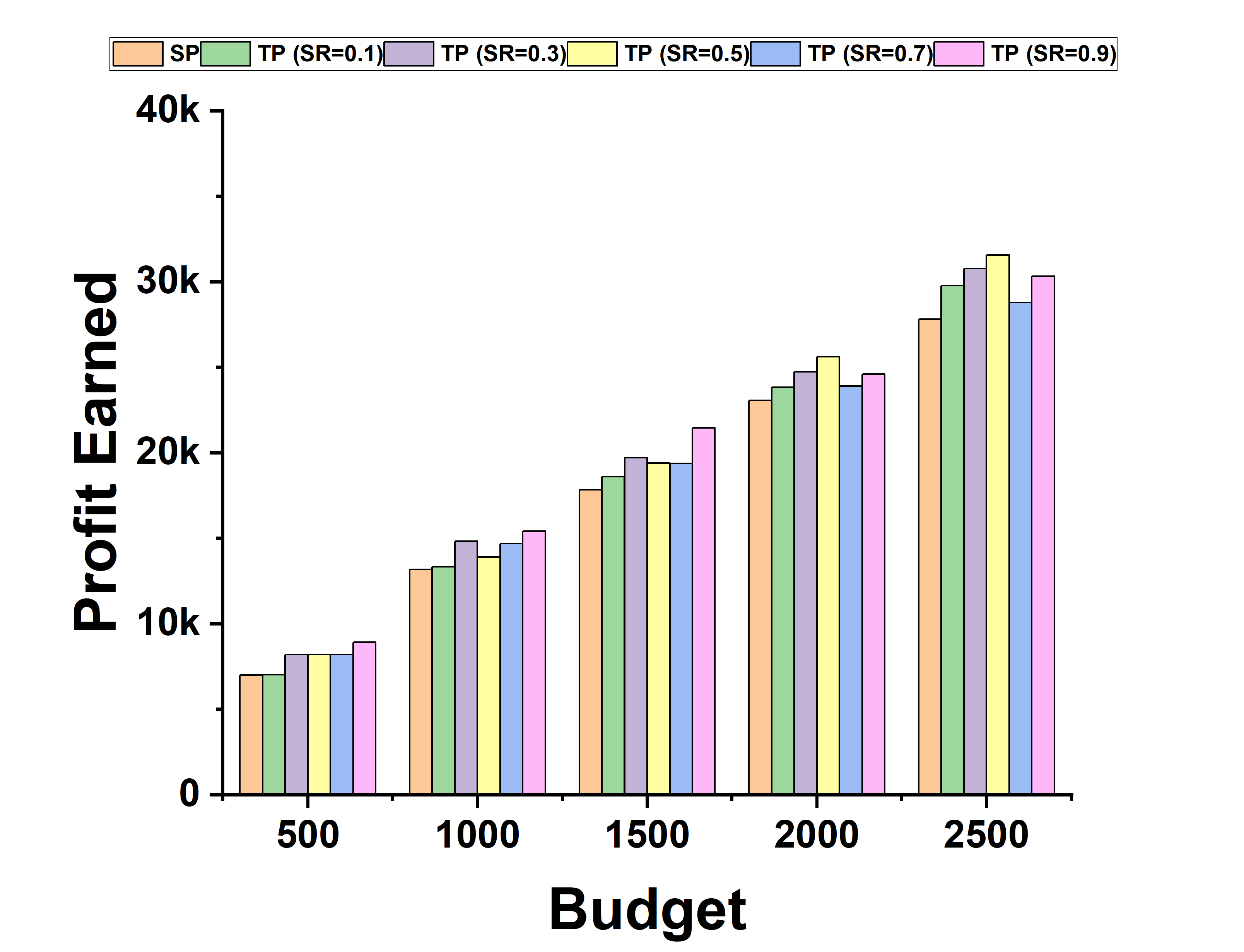}
        \caption{Clustering\\Coefficient}
    \end{subfigure} &
    \begin{subfigure}[t]{0.22\textwidth}
        \includegraphics[width=\linewidth]{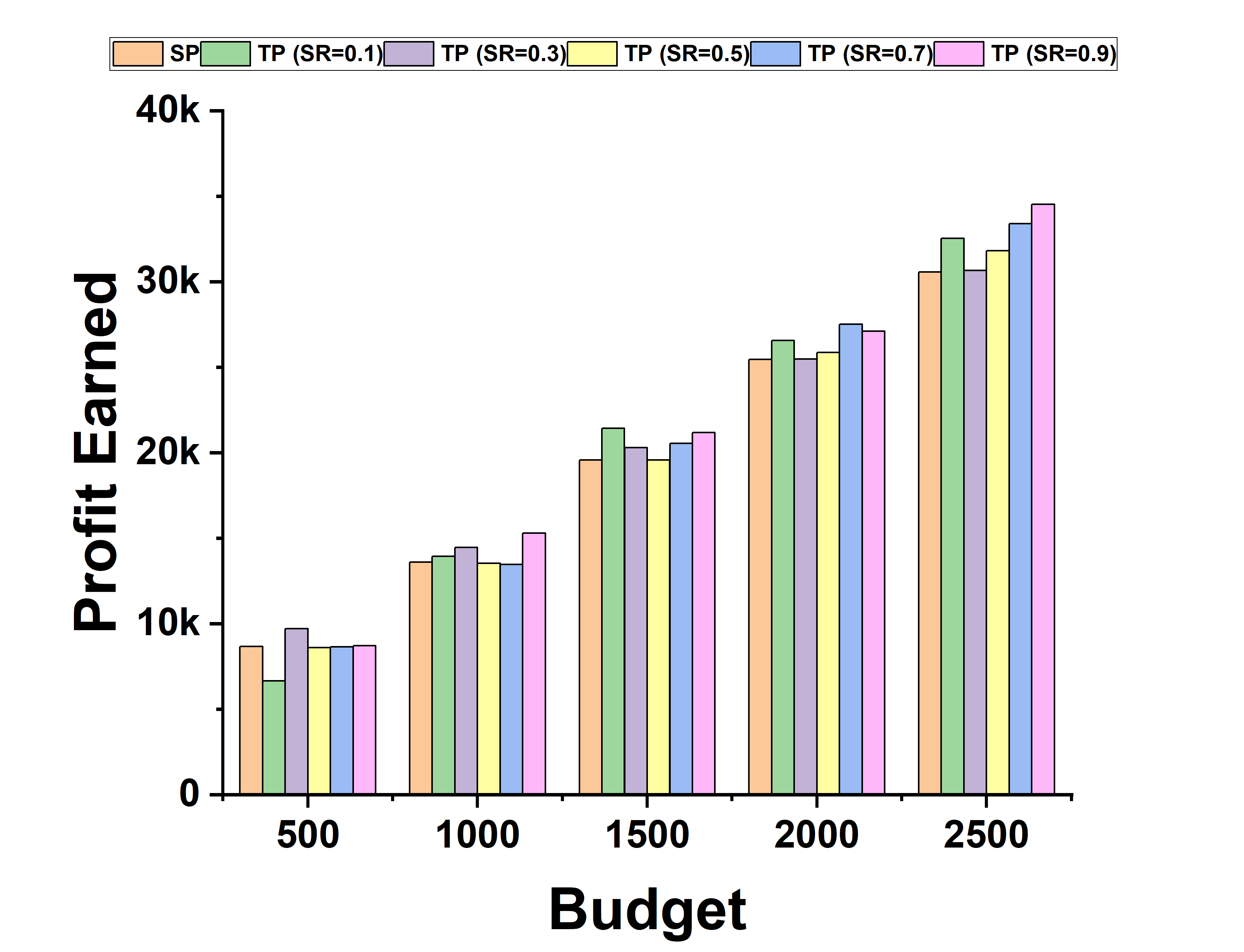}
        \caption{Degree Discount}
    \end{subfigure} \\[6pt]

    \begin{subfigure}[t]{0.22\textwidth}
        \includegraphics[width=\linewidth]{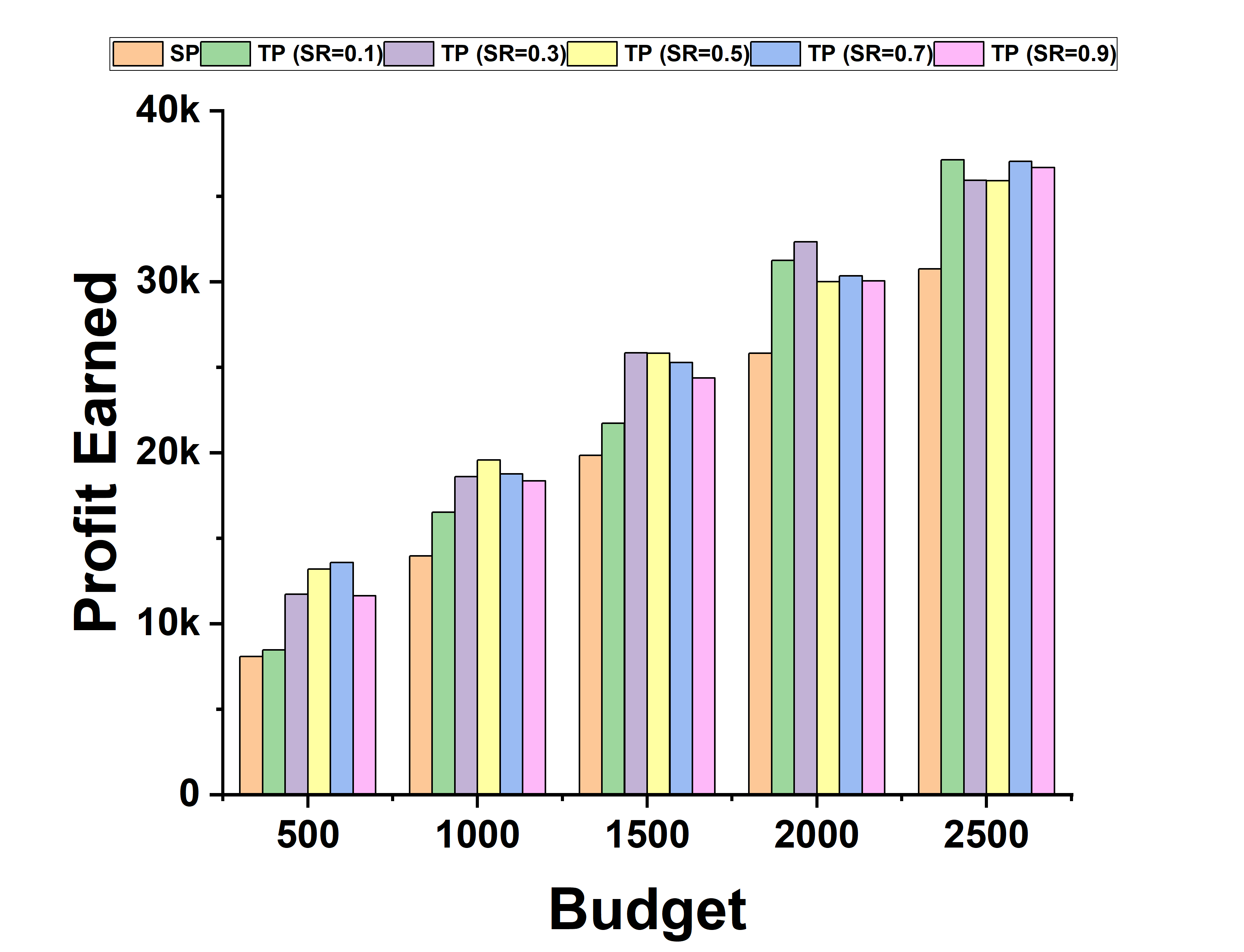}
        \caption{Single Discount}
    \end{subfigure} &
    \begin{subfigure}[t]{0.22\textwidth}
        \includegraphics[width=\linewidth]{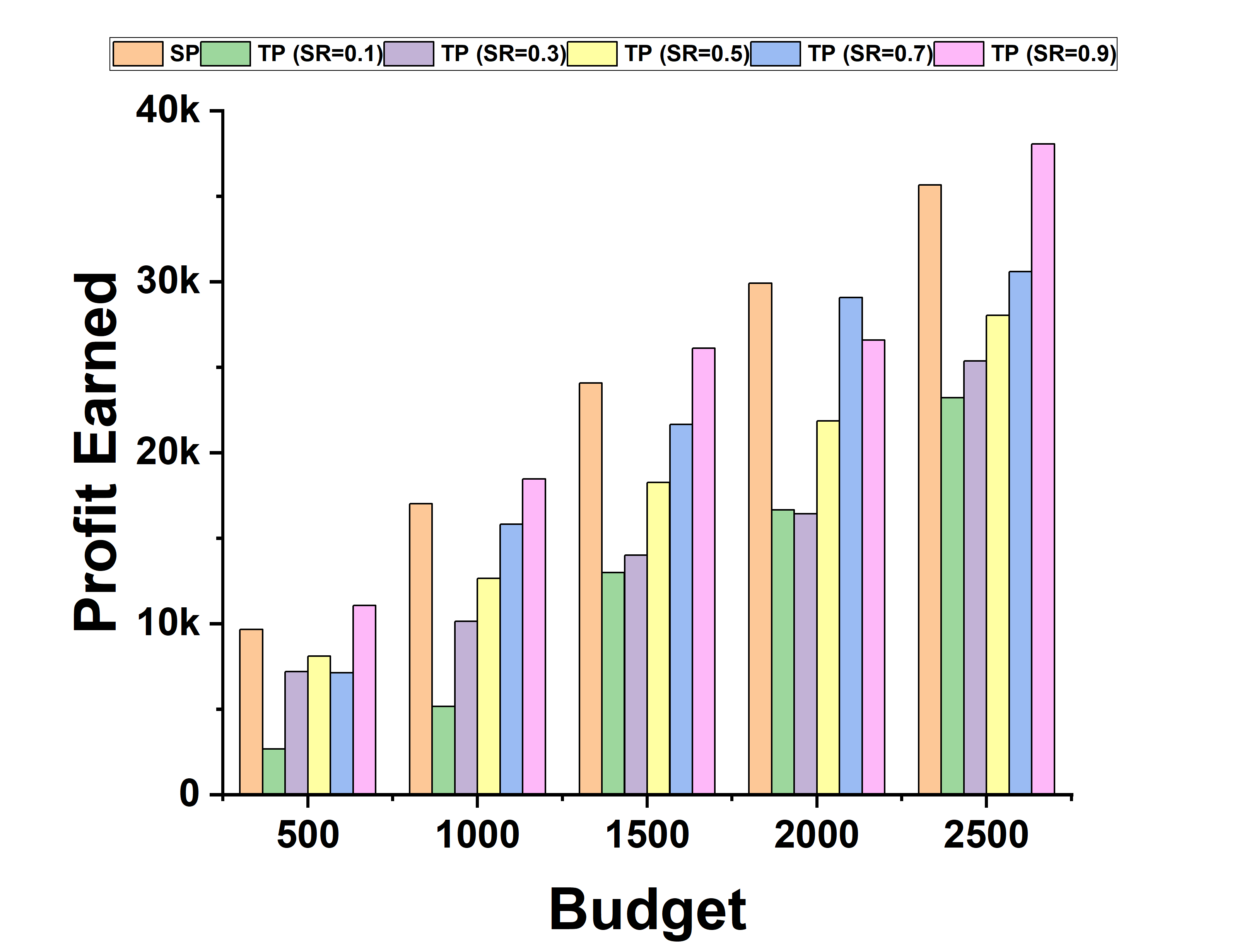}
        \caption{Simple Greedy}
    \end{subfigure} &
    \begin{subfigure}[t]{0.22\textwidth}
        \includegraphics[width=\linewidth]{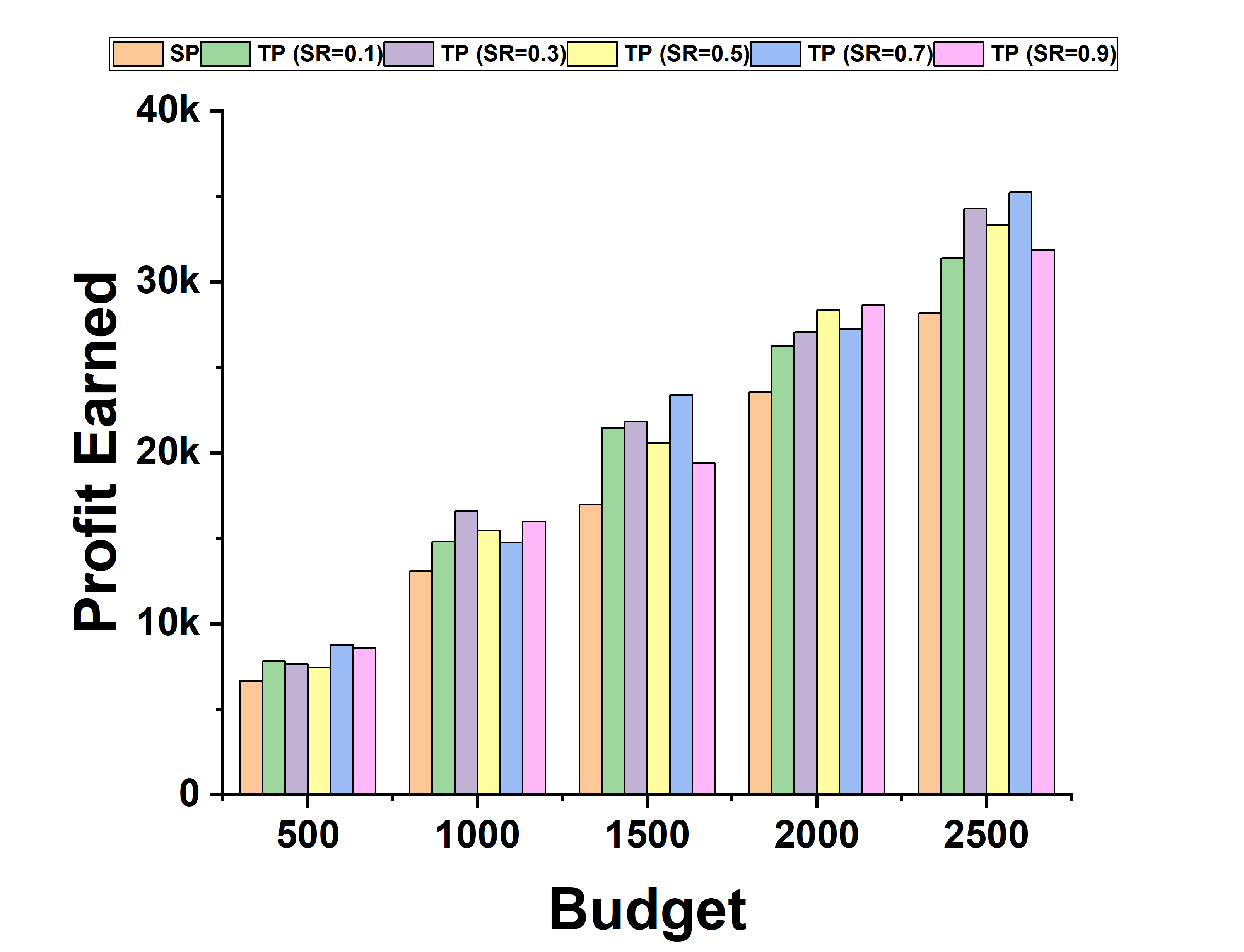}
        \caption{Double Greedy}
    \end{subfigure} &
    \begin{subfigure}[t]{0.22\textwidth}
        \includegraphics[width=\linewidth]{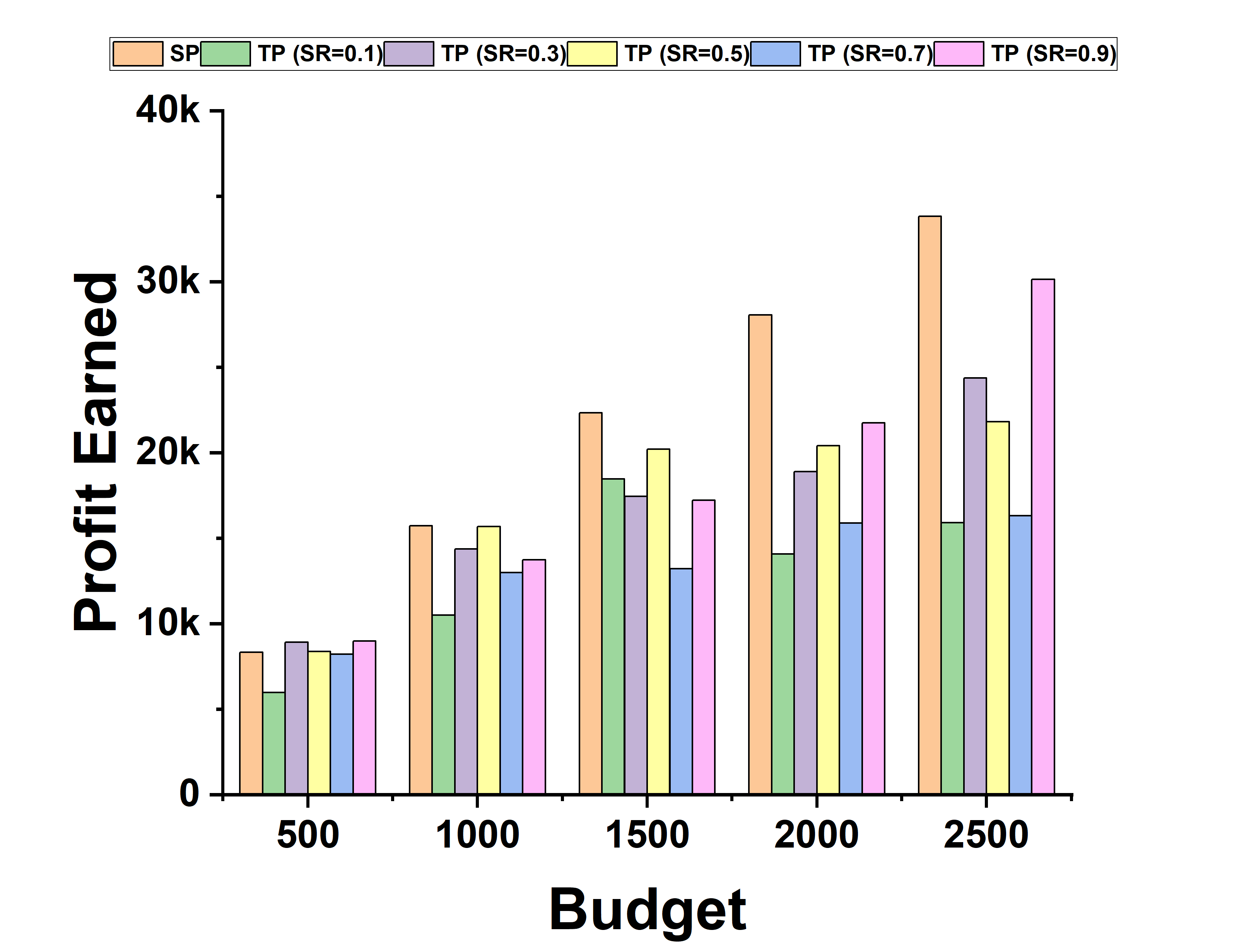}
        \caption{Stochastic Greedy}
    \end{subfigure}
\end{tabular}
\caption{Profit Earned in Single Phase Vs. Two Phase setting (Timestep 6, Probability Setting - Trivalency, \textit{LM} Dataset)}
\label{Fig:RQ2LM_T3}
\end{figure}

\begin{figure}[htbp]
\centering
\captionsetup[sub]{font=footnotesize}
\begin{tabular}{cccc}
    \begin{subfigure}[t]{0.22\textwidth}
        \includegraphics[width=\linewidth]{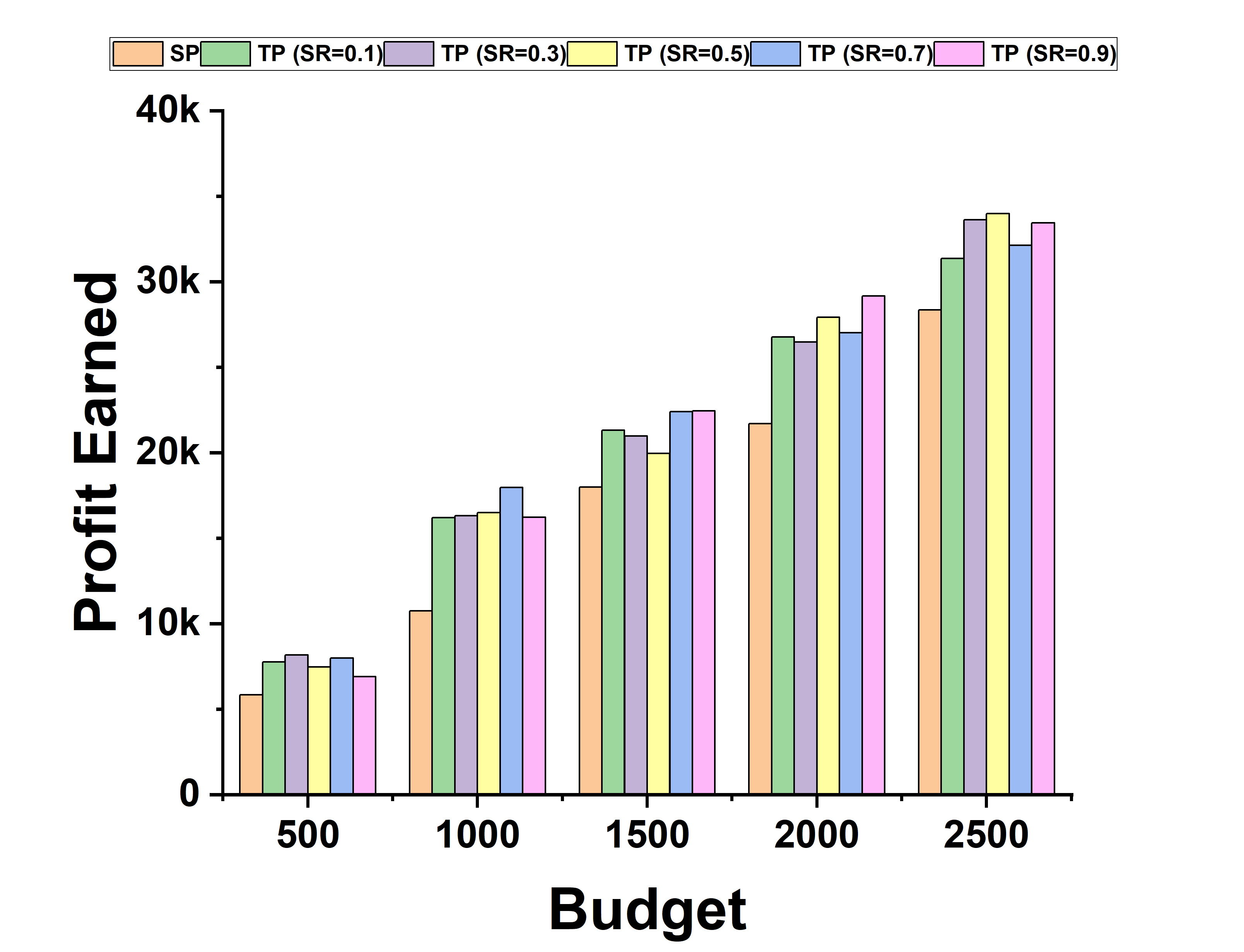}
        \caption{Random}
    \end{subfigure} &
    \begin{subfigure}[t]{0.22\textwidth}
        \includegraphics[width=\linewidth]{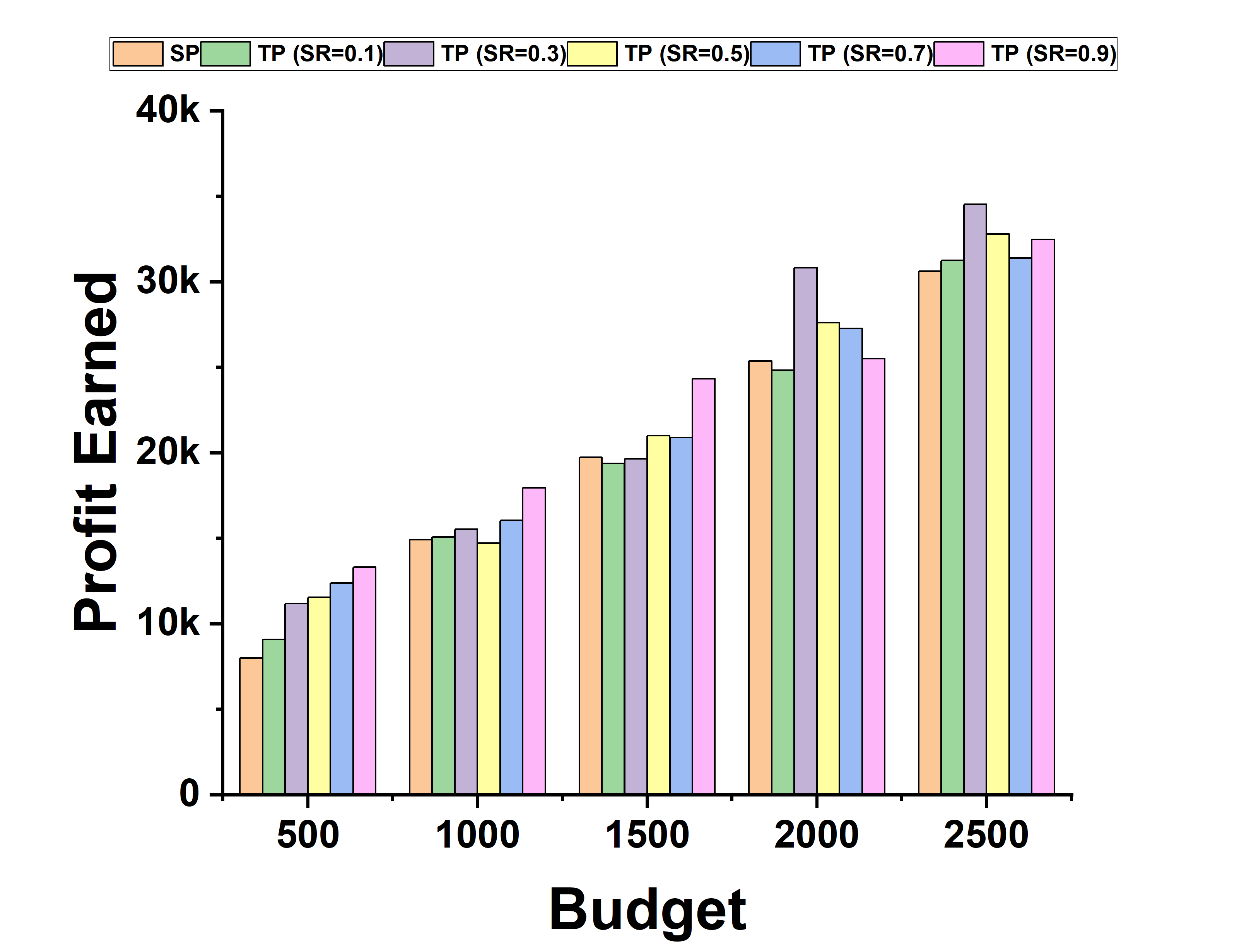}
        \caption{High Degree}
    \end{subfigure} &
    \begin{subfigure}[t]{0.22\textwidth}
        \includegraphics[width=\linewidth]{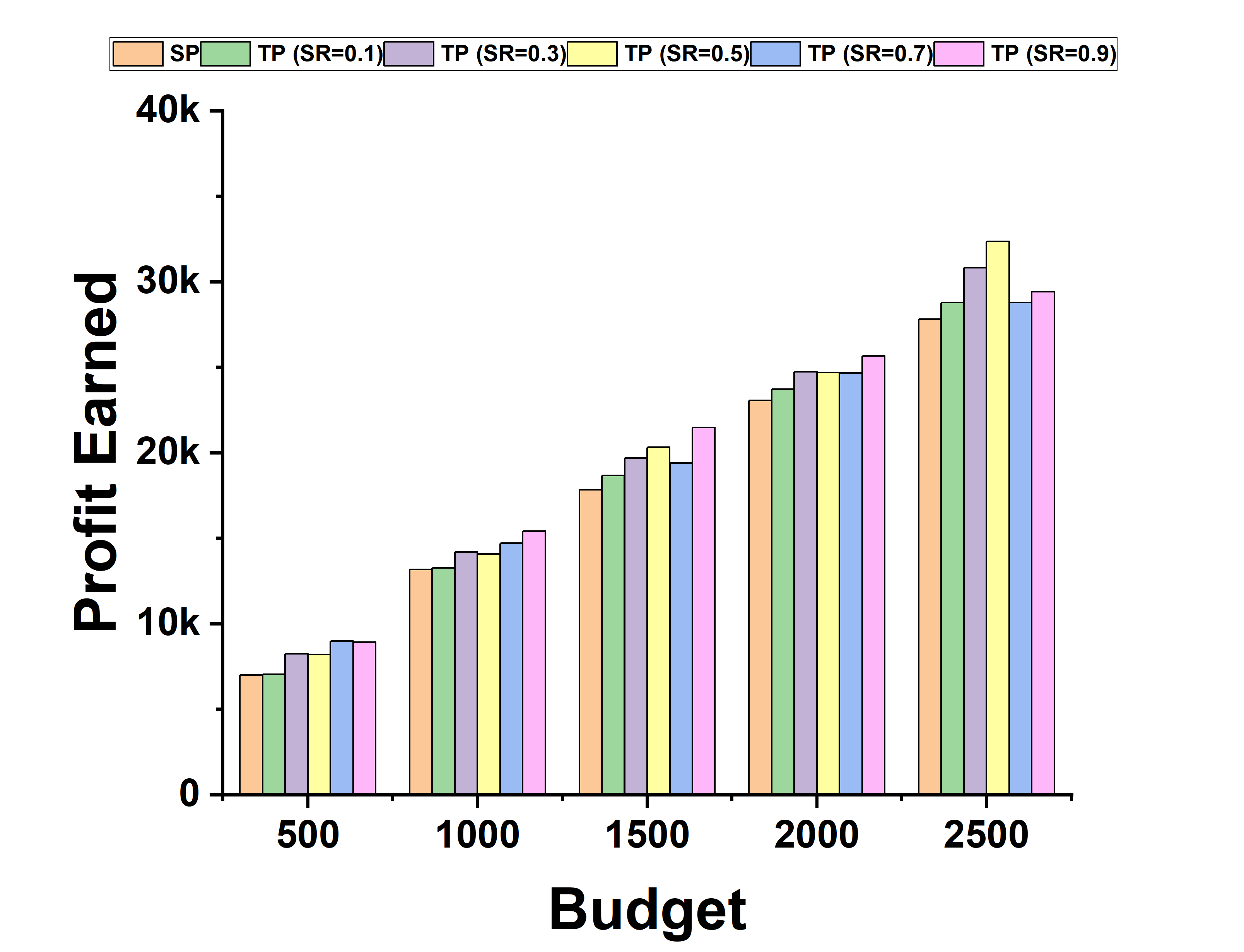}
        \caption{Clustering\\Coefficient}
    \end{subfigure} &
    \begin{subfigure}[t]{0.22\textwidth}
        \includegraphics[width=\linewidth]{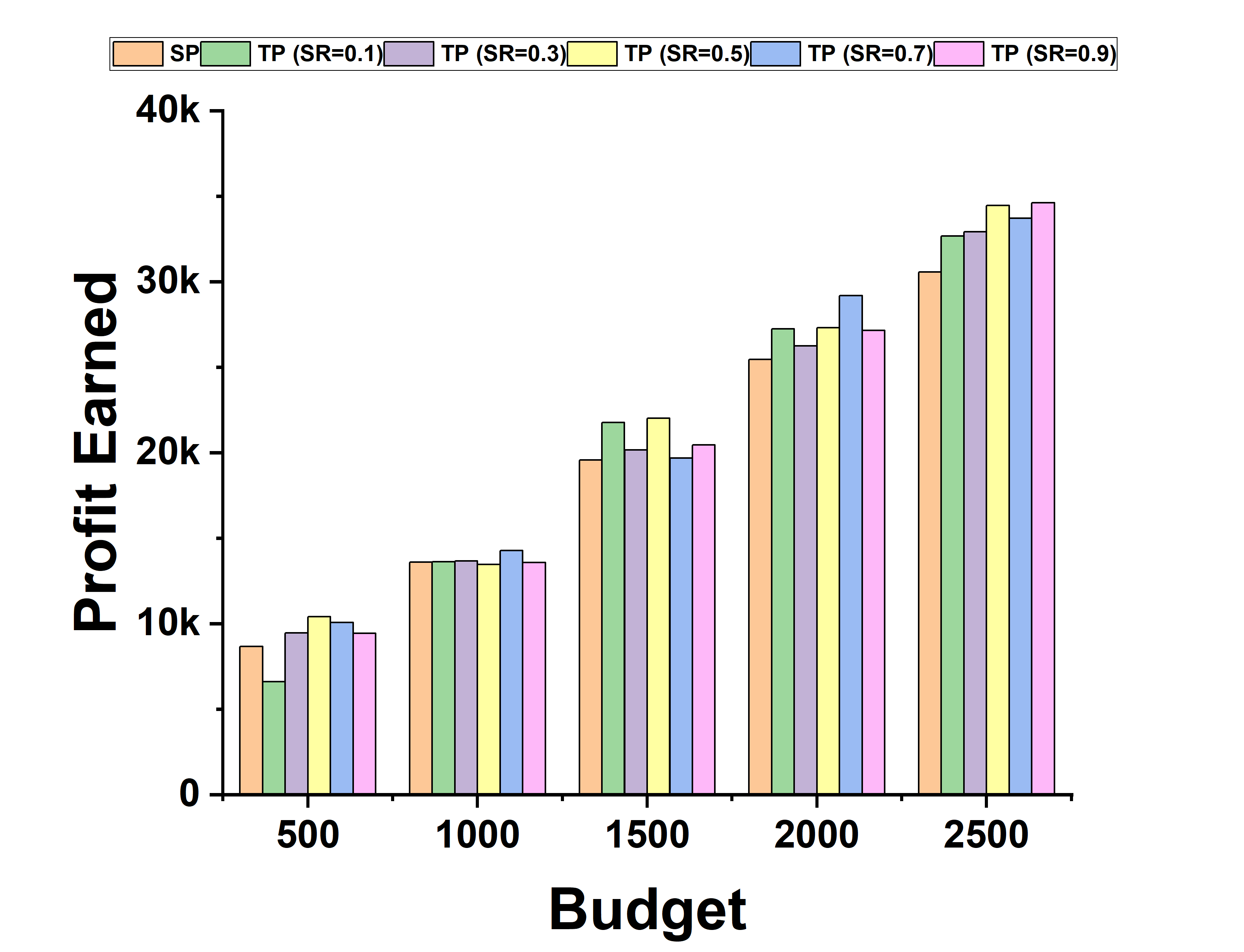}
        \caption{Degree Discount}
    \end{subfigure} \\[6pt]

    \begin{subfigure}[t]{0.22\textwidth}
        \includegraphics[width=\linewidth]{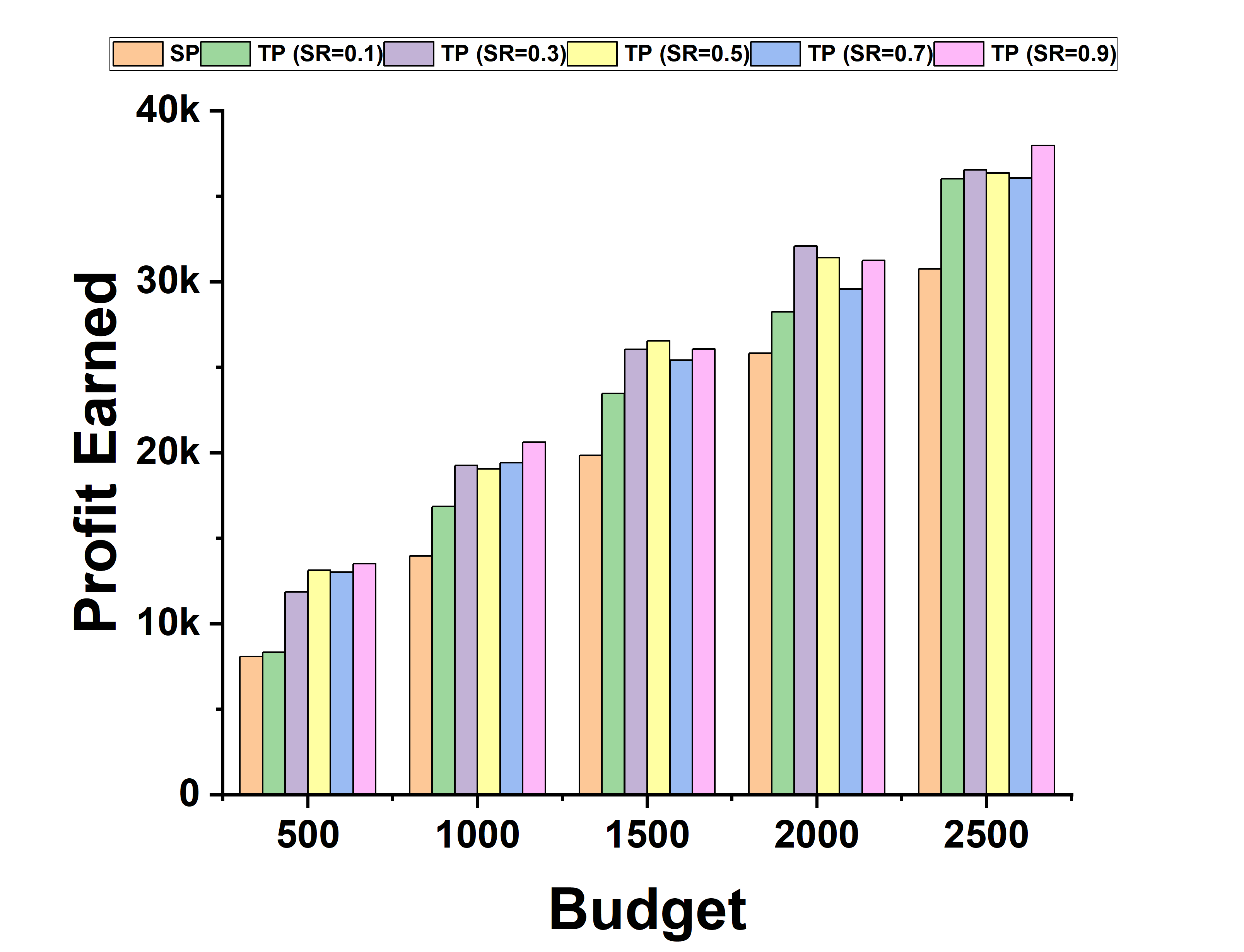}
        \caption{Single Discount}
    \end{subfigure} &
    \begin{subfigure}[t]{0.22\textwidth}
        \includegraphics[width=\linewidth]{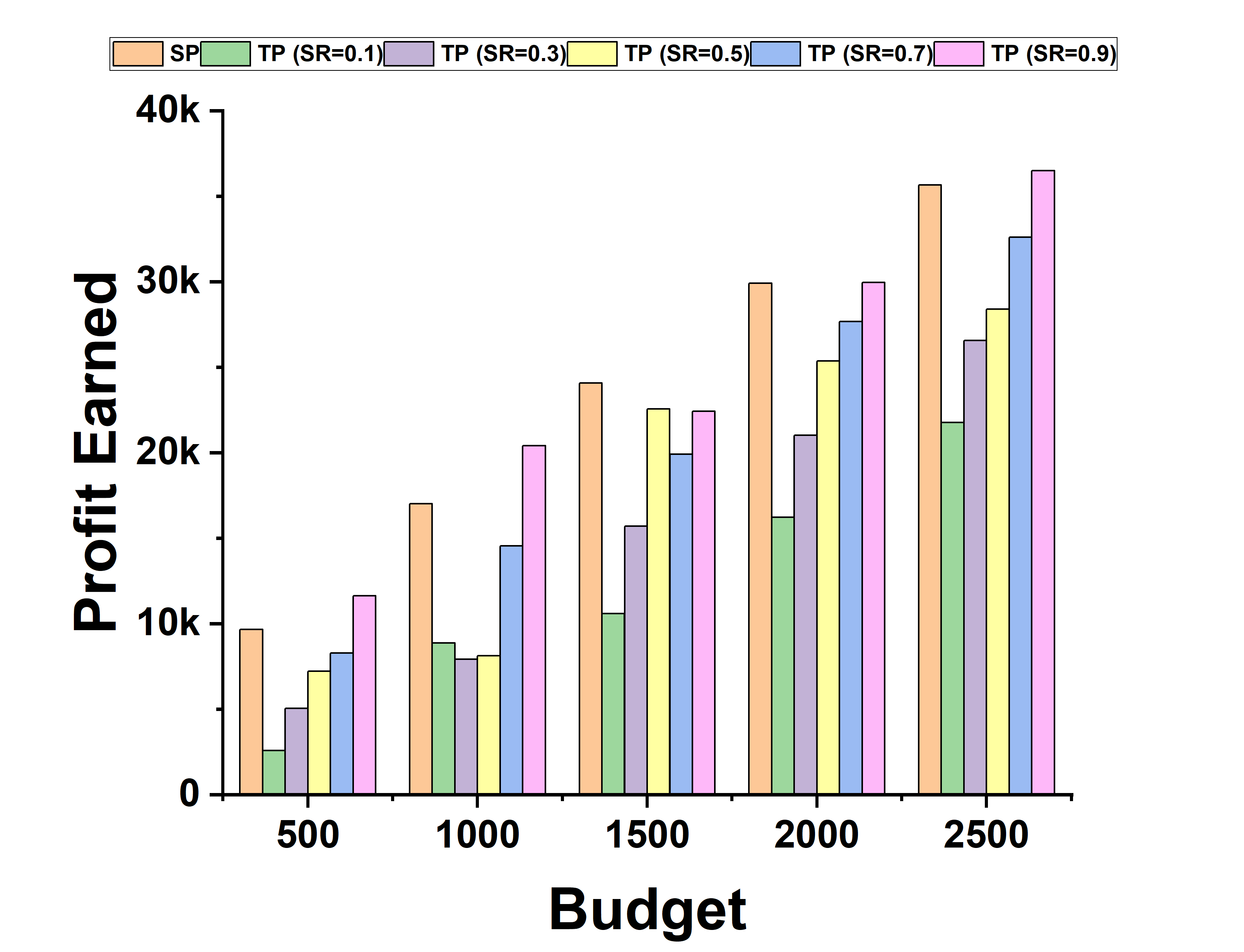}
        \caption{Simple Greedy}
    \end{subfigure} &
    \begin{subfigure}[t]{0.22\textwidth}
        \includegraphics[width=\linewidth]{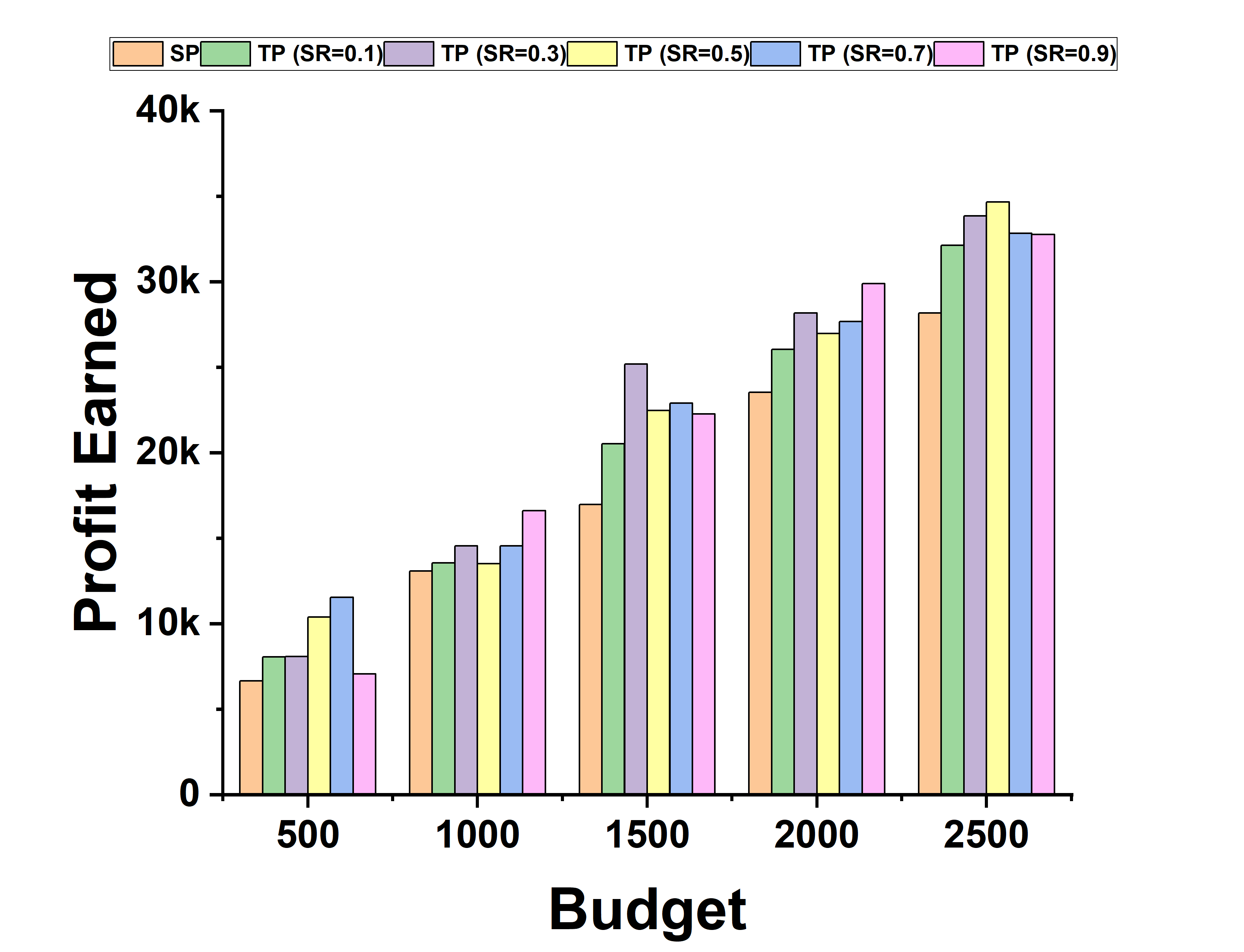}
        \caption{Double Greedy}
    \end{subfigure} &
    \begin{subfigure}[t]{0.22\textwidth}
        \includegraphics[width=\linewidth]{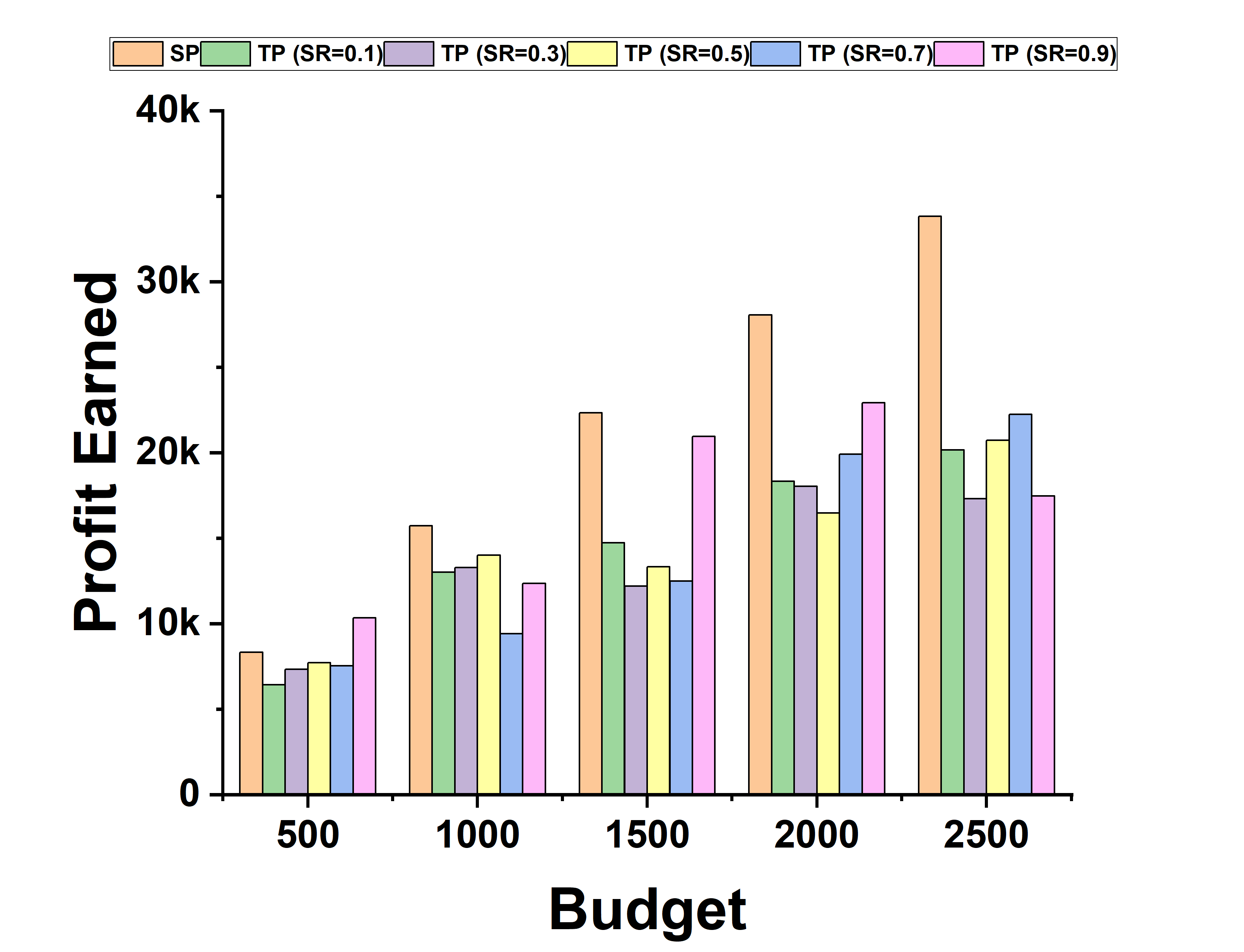}
        \caption{Stochastic Greedy}
    \end{subfigure}
\end{tabular}
\caption{Profit Earned in Single Phase Vs. Two Phase setting (Timestep 8, Probability Setting - Trivalency, \textit{LM} Dataset)}
\label{Fig:RQ2LM_T4}
\end{figure}

\begin{figure}[htbp]
\centering
\captionsetup[sub]{font=footnotesize}
\begin{tabular}{cccc}
    \begin{subfigure}[t]{0.22\textwidth}
        \includegraphics[width=\linewidth]{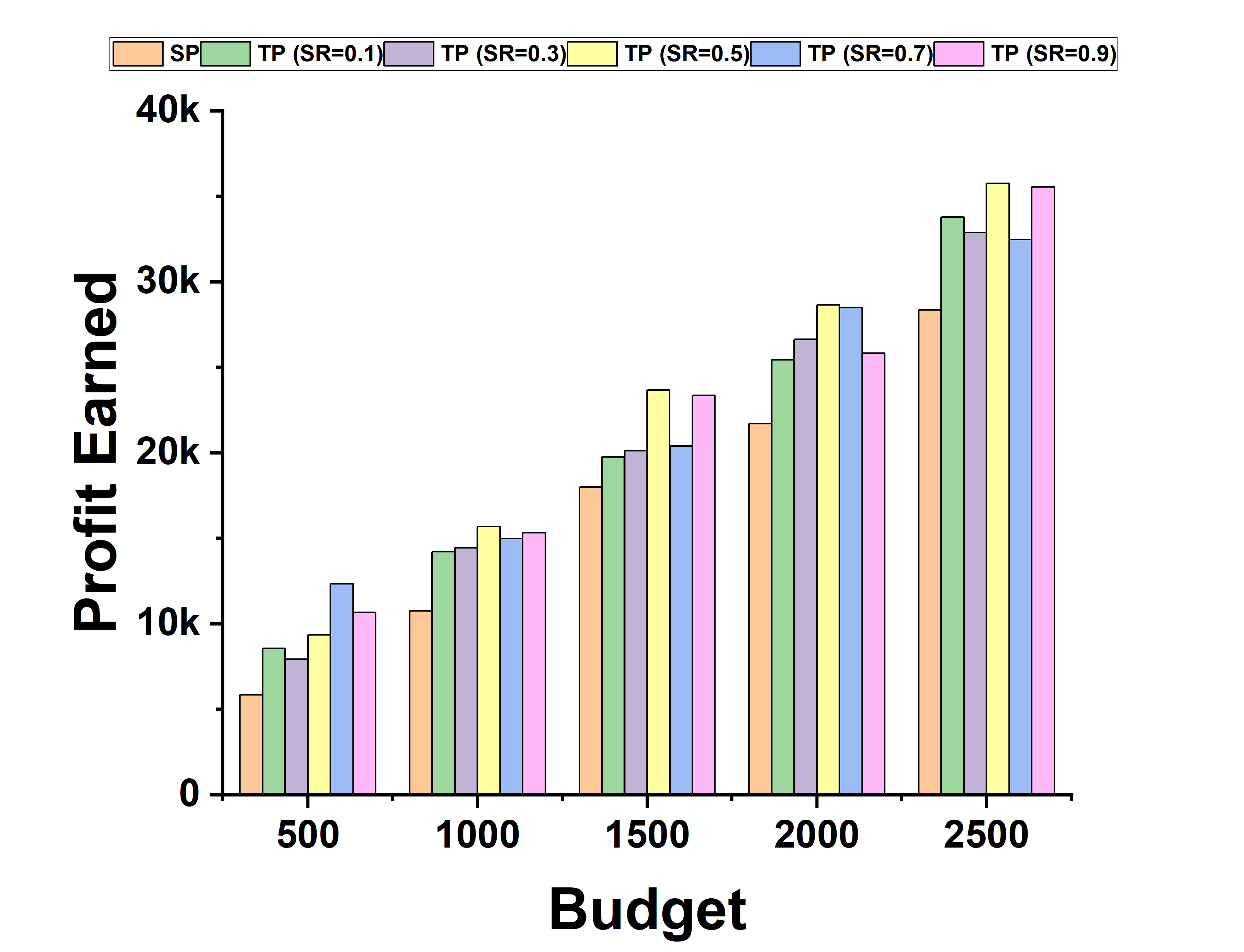}
        \caption{Random}
    \end{subfigure} &
    \begin{subfigure}[t]{0.22\textwidth}
        \includegraphics[width=\linewidth]{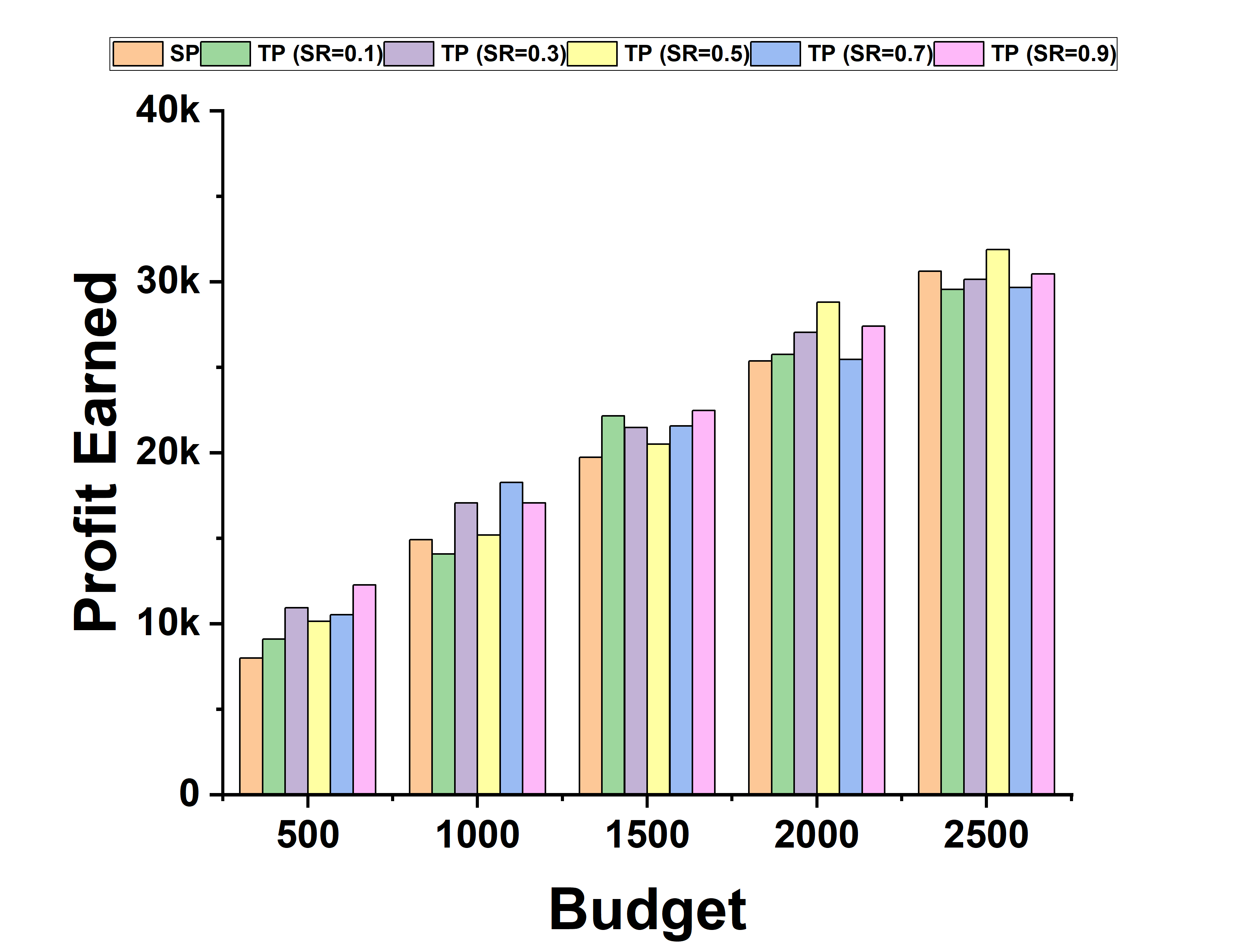}
        \caption{High Degree}
    \end{subfigure} &
    \begin{subfigure}[t]{0.22\textwidth}
        \includegraphics[width=\linewidth]{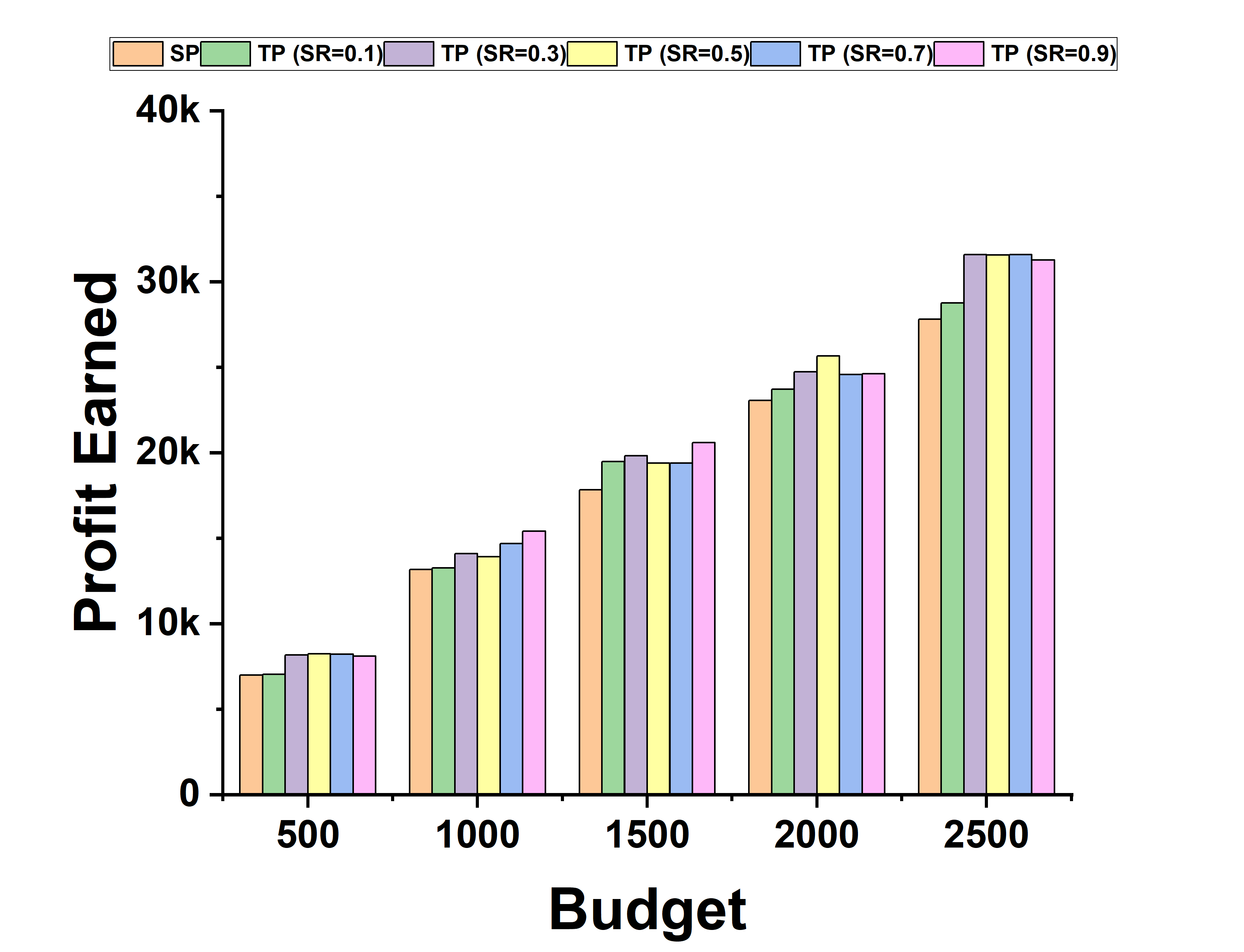}
        \caption{Clustering\\Coefficient}
    \end{subfigure} &
    \begin{subfigure}[t]{0.22\textwidth}
        \includegraphics[width=\linewidth]{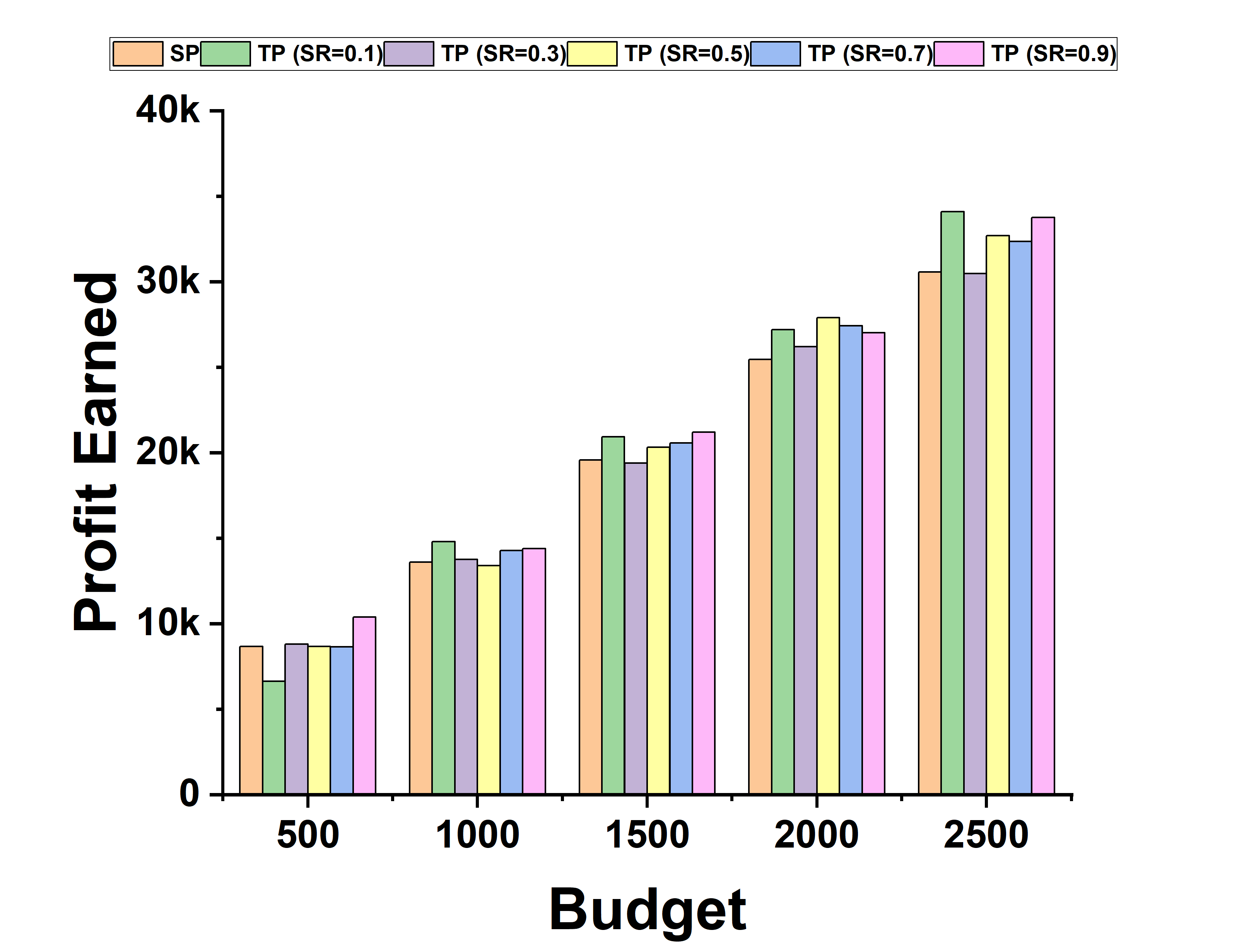}
        \caption{Degree Discount}
    \end{subfigure} \\[6pt]

    \begin{subfigure}[t]{0.22\textwidth}
        \includegraphics[width=\linewidth]{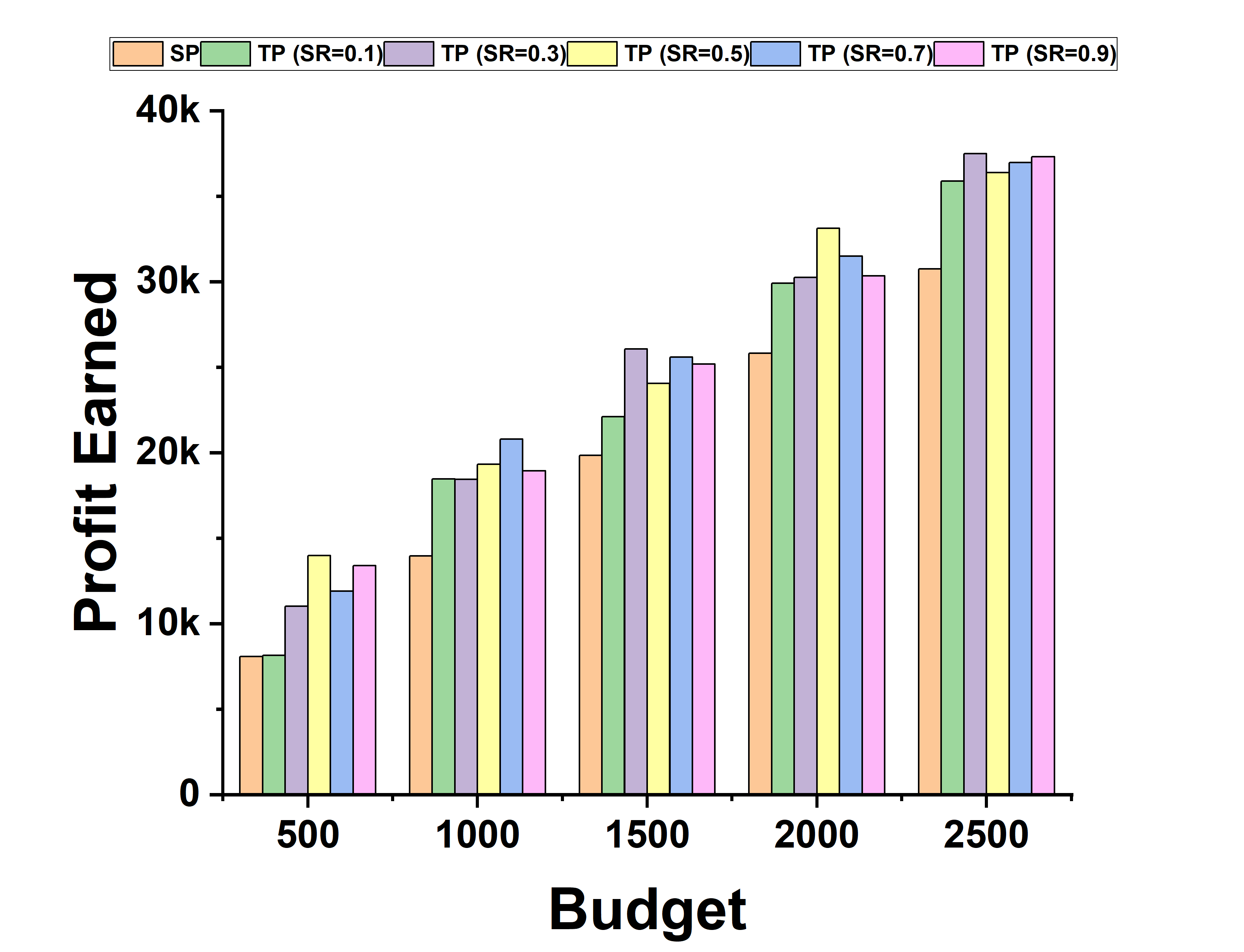}
        \caption{Single Discount}
    \end{subfigure} &
    \begin{subfigure}[t]{0.22\textwidth}
        \includegraphics[width=\linewidth]{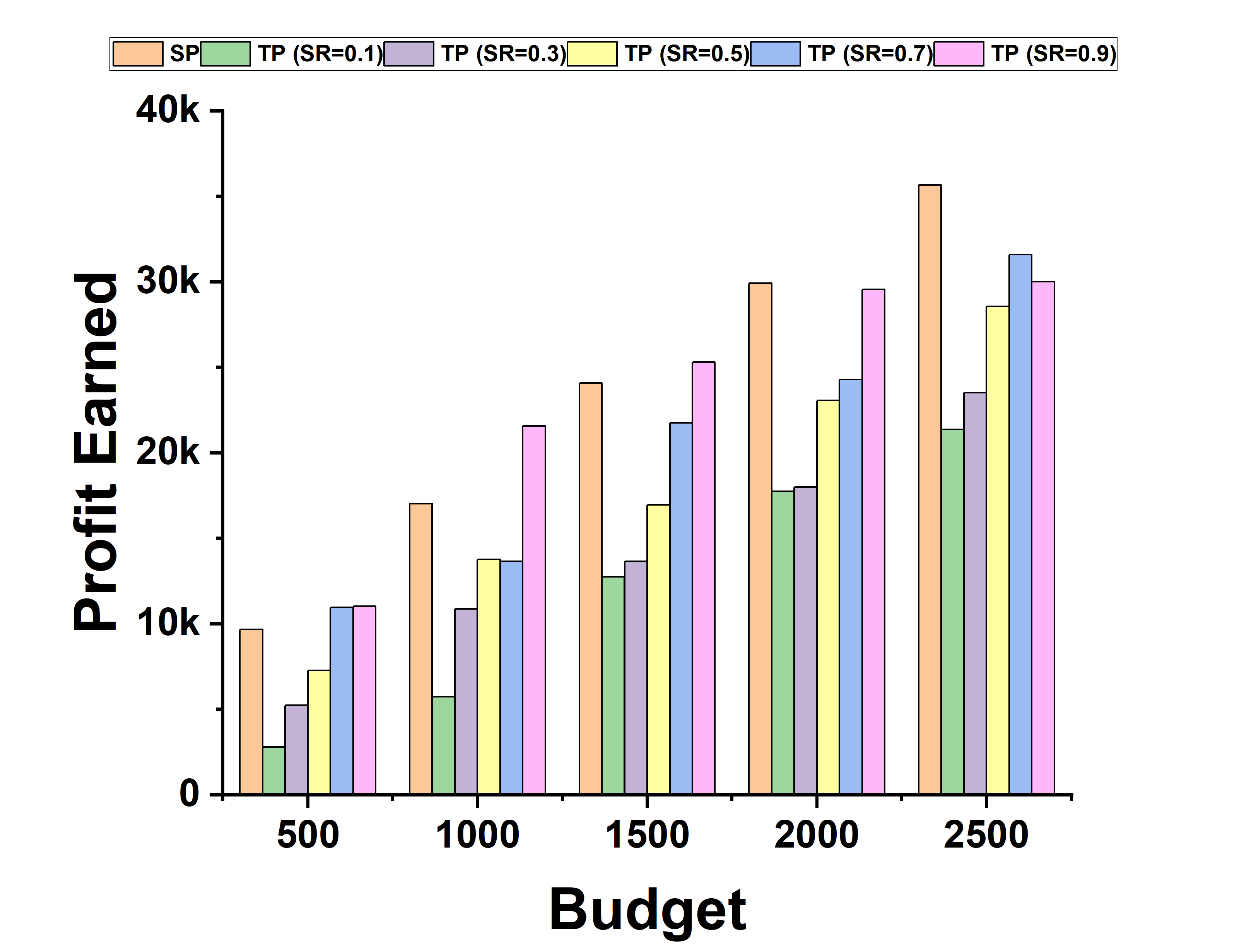}
        \caption{Simple Greedy}
    \end{subfigure} &
    \begin{subfigure}[t]{0.22\textwidth}
        \includegraphics[width=\linewidth]{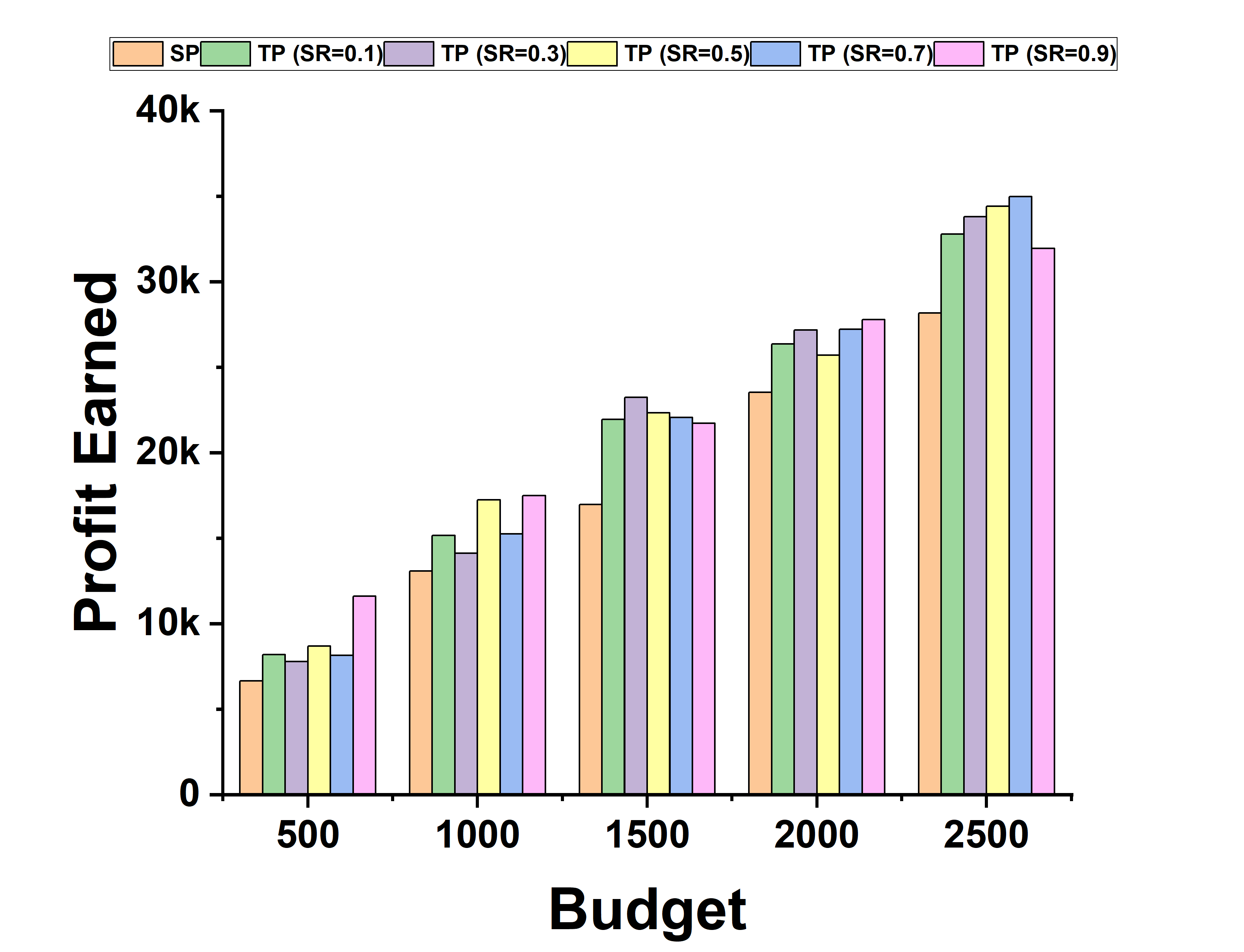}
        \caption{Double Greedy}
    \end{subfigure} &
    \begin{subfigure}[t]{0.22\textwidth}
        \includegraphics[width=\linewidth]{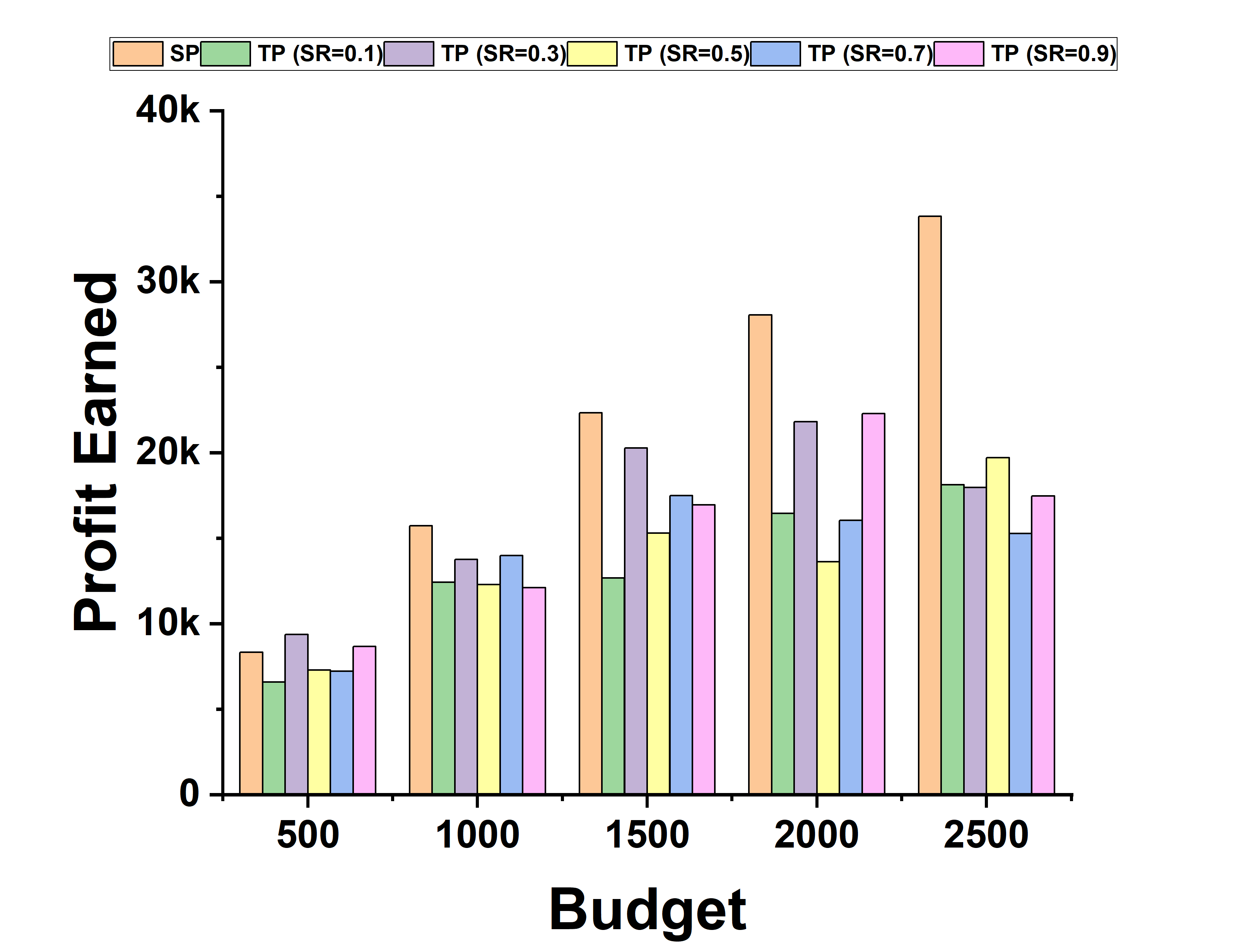}
        \caption{Stochastic Greedy}
    \end{subfigure}
\end{tabular}
\caption{Profit Earned in Single Phase Vs. Two Phase setting (Timestep 10, Probability Setting - Trivalency, \textit{LM} Dataset)}
\label{Fig:RQ2LM_T5}
\end{figure}


\begin{figure}[htbp]
\centering
\captionsetup[sub]{font=footnotesize}  
\begin{tabular}{cccc}
    \begin{subfigure}[t]{0.22\textwidth}
        \includegraphics[width=\linewidth]{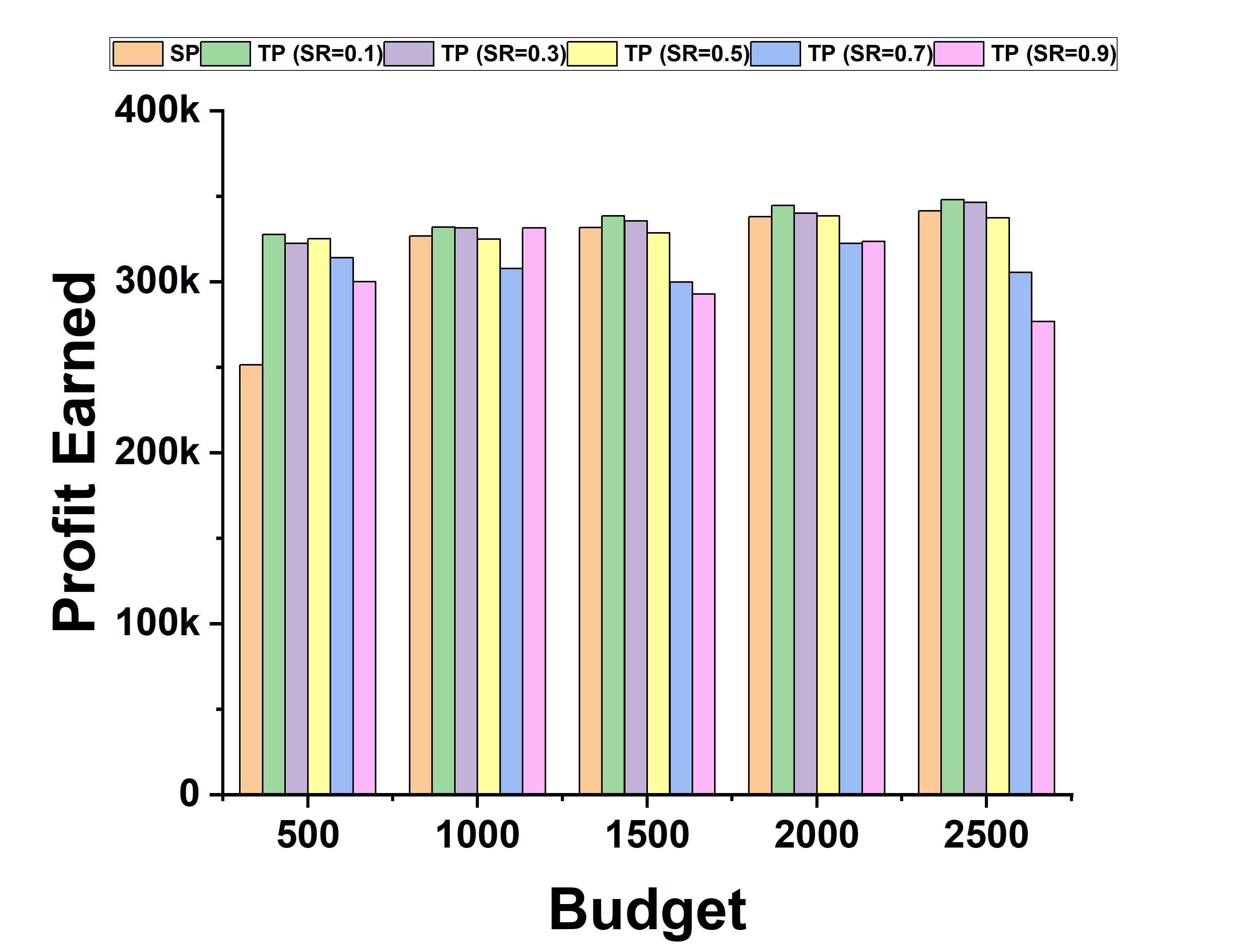}
        \caption{Random}
    \end{subfigure} &
    \begin{subfigure}[t]{0.22\textwidth}
        \includegraphics[width=\linewidth]{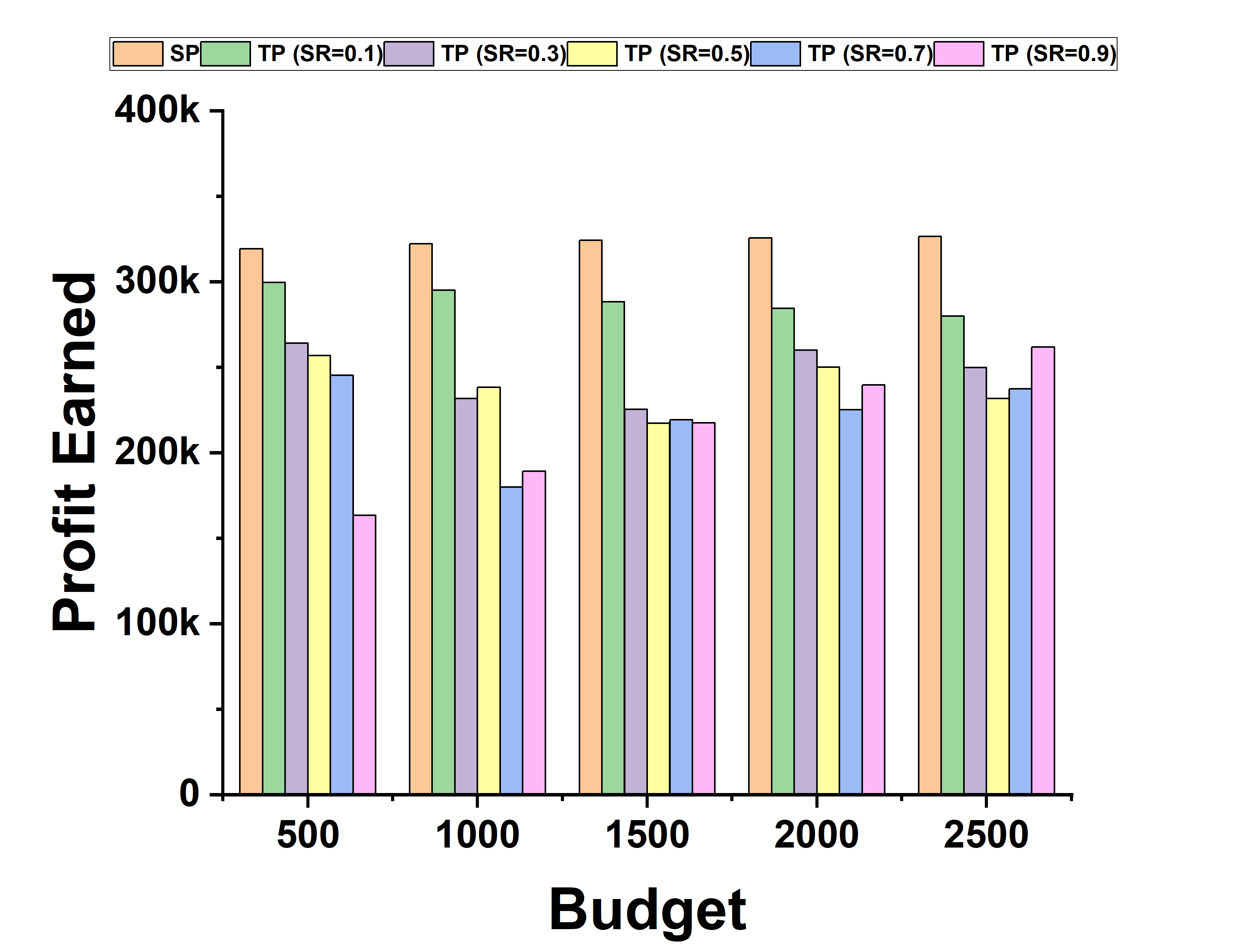}
        \caption{High Degree}
    \end{subfigure} &
    \begin{subfigure}[t]{0.22\textwidth}
        \includegraphics[width=\linewidth]{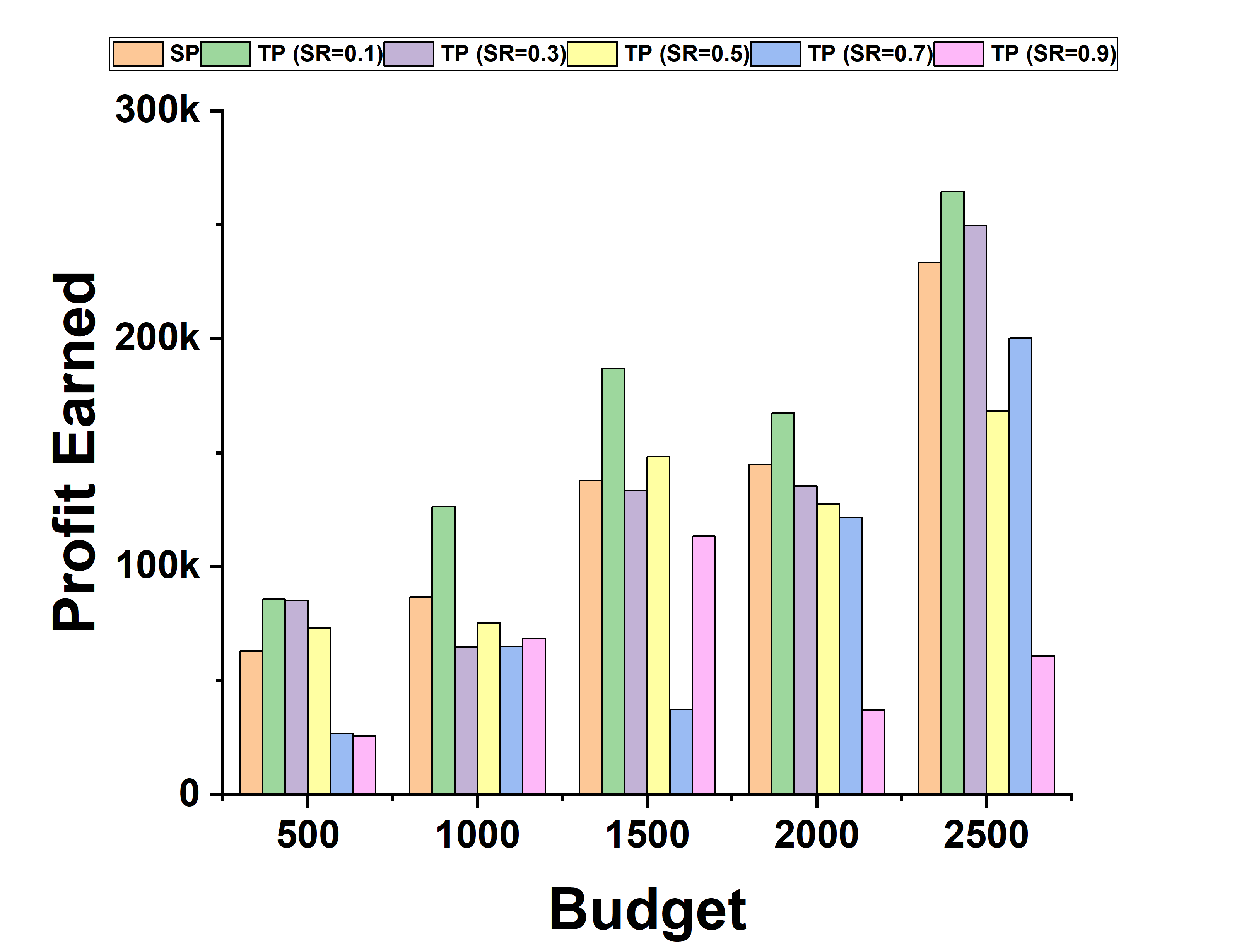}
        \caption{Clustering\\Coefficient}
    \end{subfigure} &
    \begin{subfigure}[t]{0.22\textwidth}
        \includegraphics[width=\linewidth]{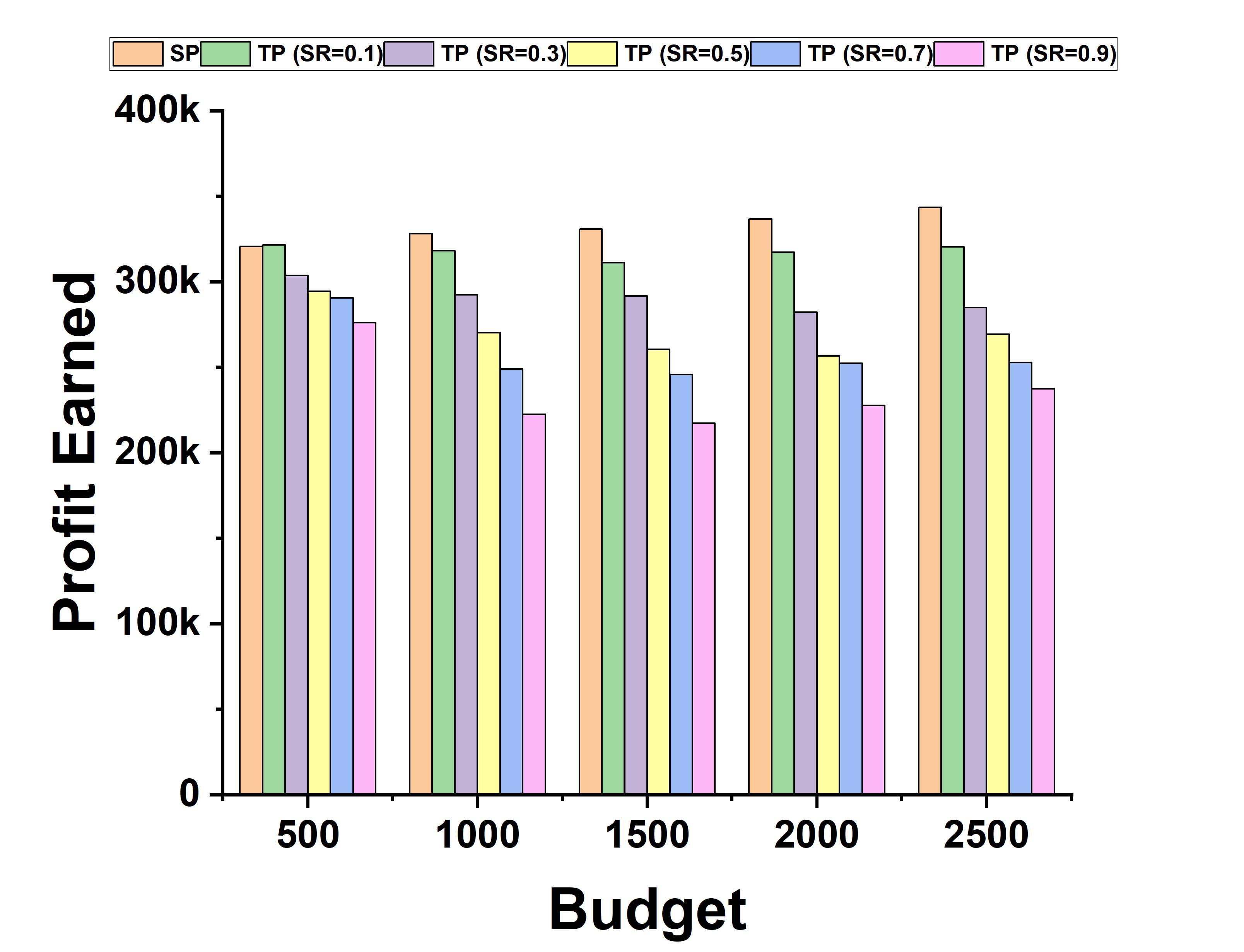}
        \caption{Degree Discount}
    \end{subfigure} \\[6pt]

    \begin{subfigure}[t]{0.22\textwidth}
        \includegraphics[width=\linewidth]{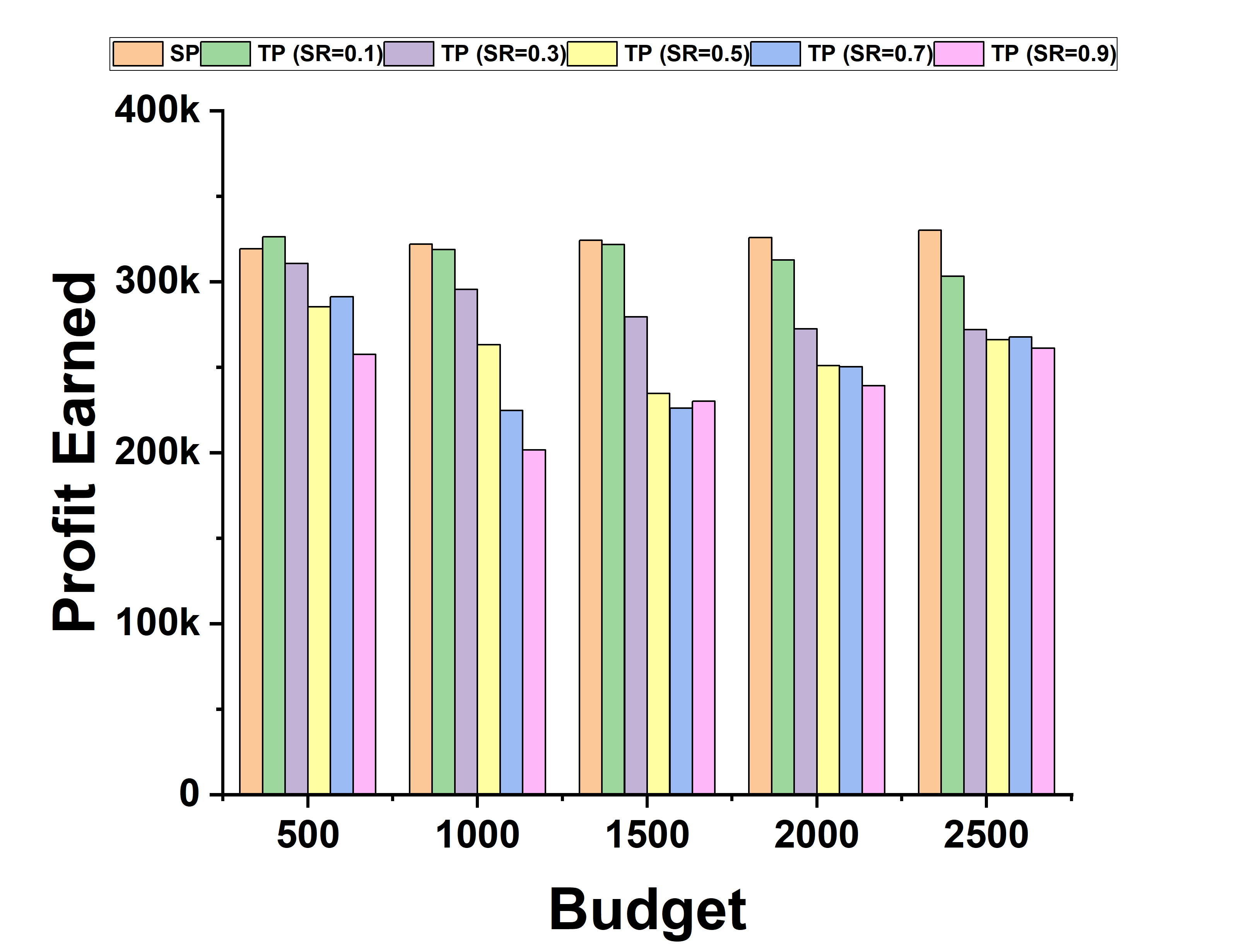}
        \caption{Single Discount}
    \end{subfigure} &
    \begin{subfigure}[t]{0.22\textwidth}
        \includegraphics[width=\linewidth]{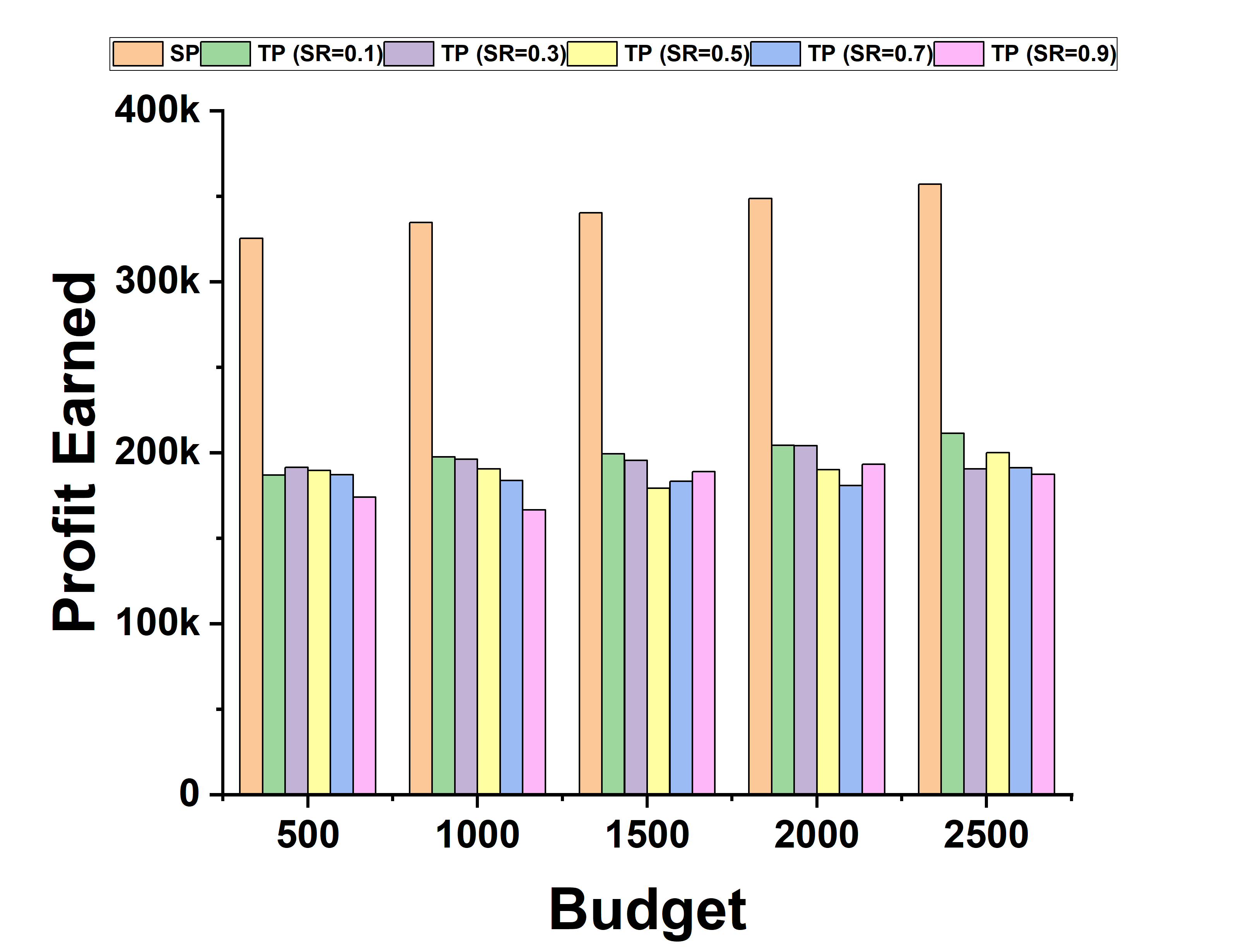}
        \caption{Simple Greedy}
    \end{subfigure} &
    \begin{subfigure}[t]{0.22\textwidth}
        \includegraphics[width=\linewidth]{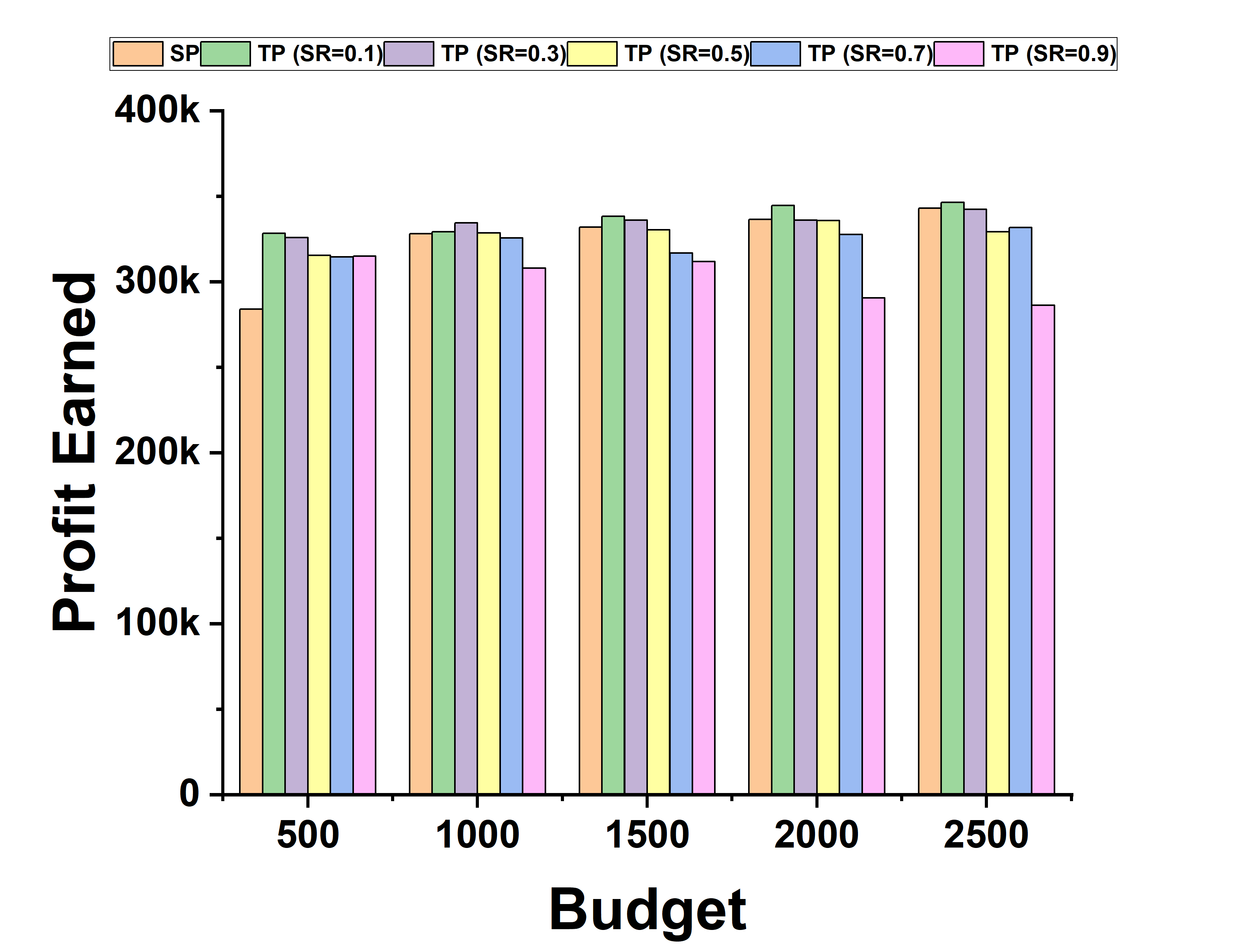}
        \caption{Double Greedy}
    \end{subfigure} &
    \begin{subfigure}[t]{0.22\textwidth}
        \includegraphics[width=\linewidth]{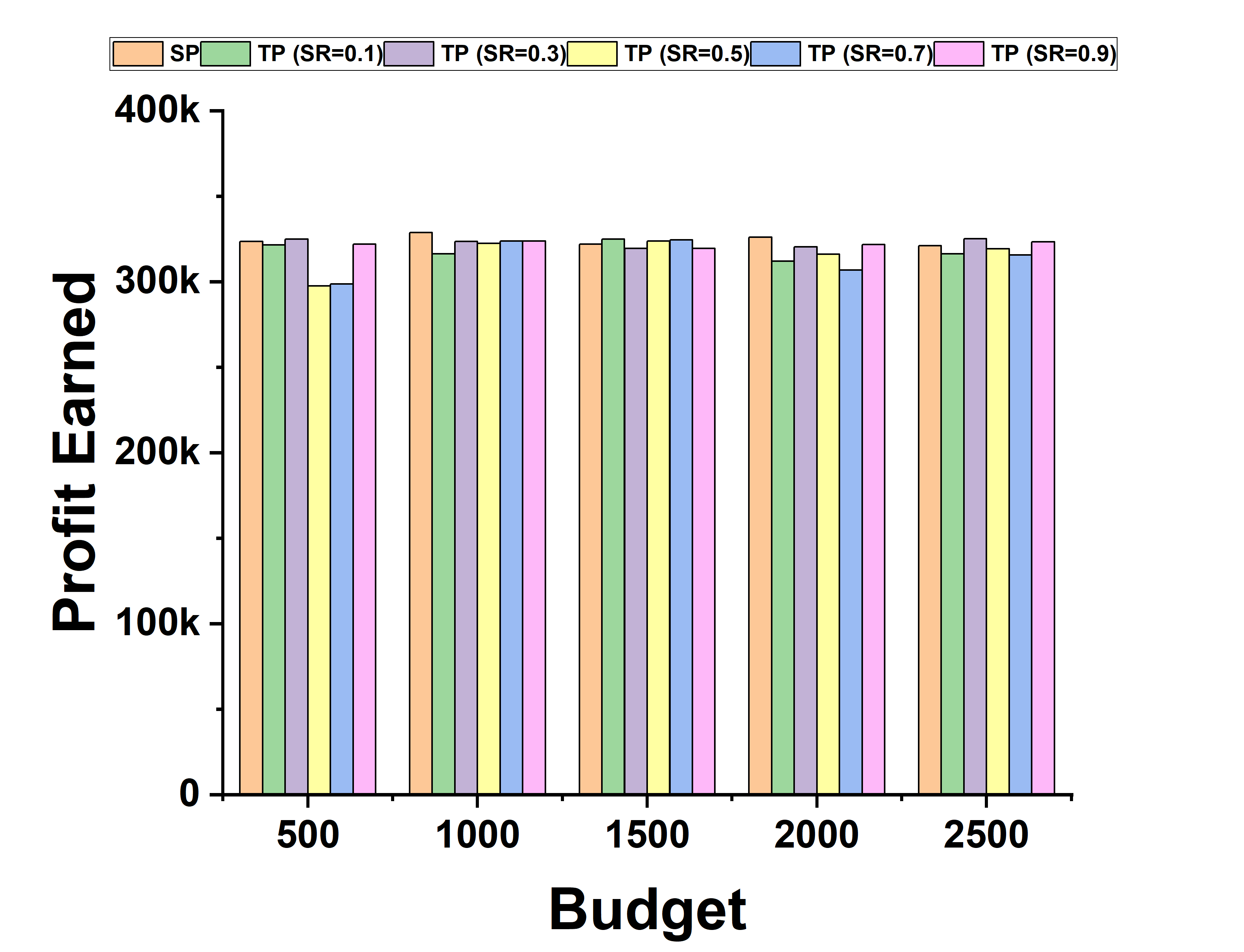}
        \caption{Stochastic Greedy}
    \end{subfigure}
\end{tabular}
\caption{Profit Earned in Single Phase Vs. Two Phase setting (Timestep 2, Probability Setting - Trivalency, \textit{Email-Eu-Core} Dataset)}
\label{Fig:RQ2_T1}
\end{figure}

\begin{figure}[htbp]
\centering
\captionsetup[sub]{font=footnotesize}  
\begin{tabular}{cccc}
    \begin{subfigure}[t]{0.22\textwidth}
        \includegraphics[width=\linewidth]{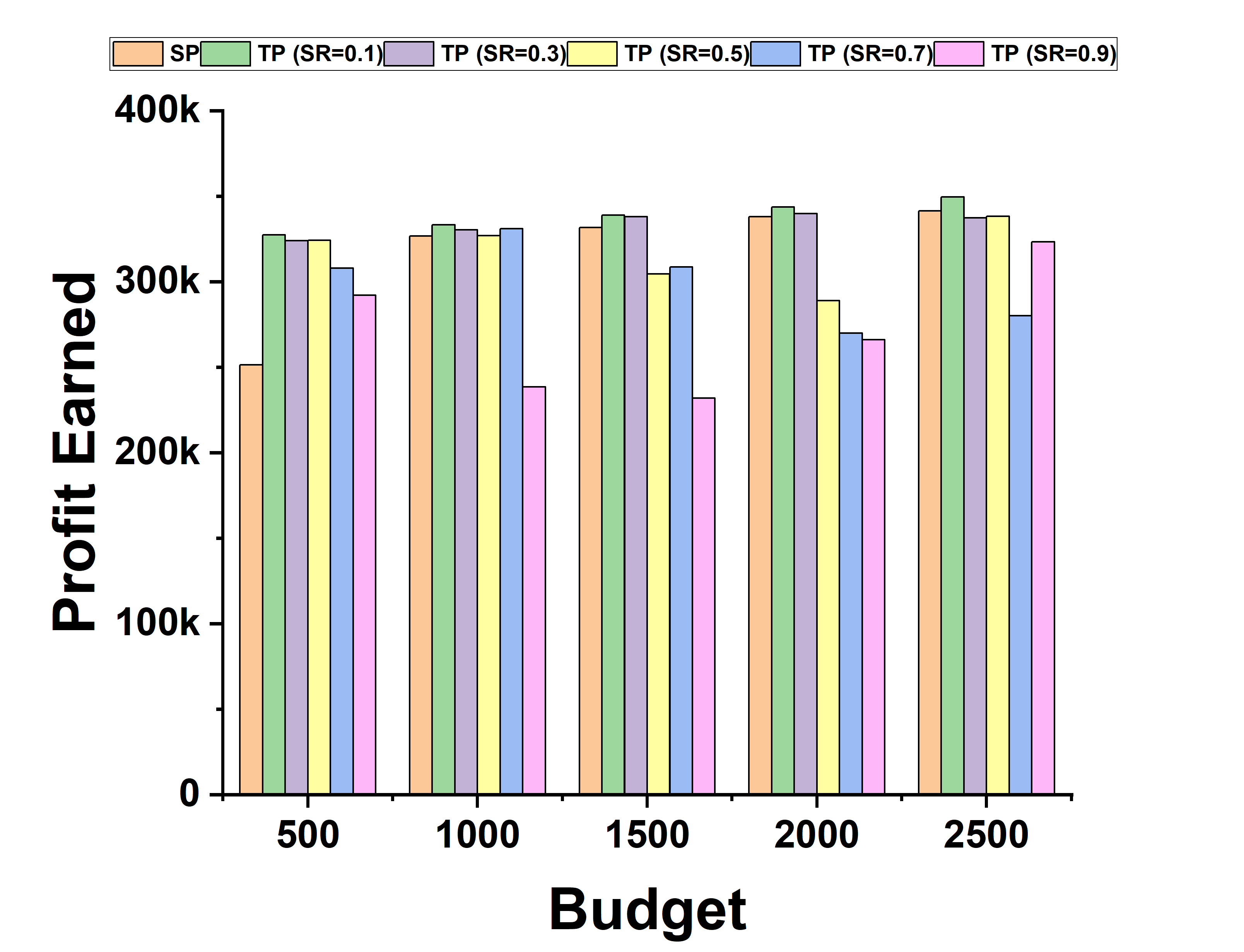}
        \caption{Random}
    \end{subfigure} &
    \begin{subfigure}[t]{0.22\textwidth}
        \includegraphics[width=\linewidth]{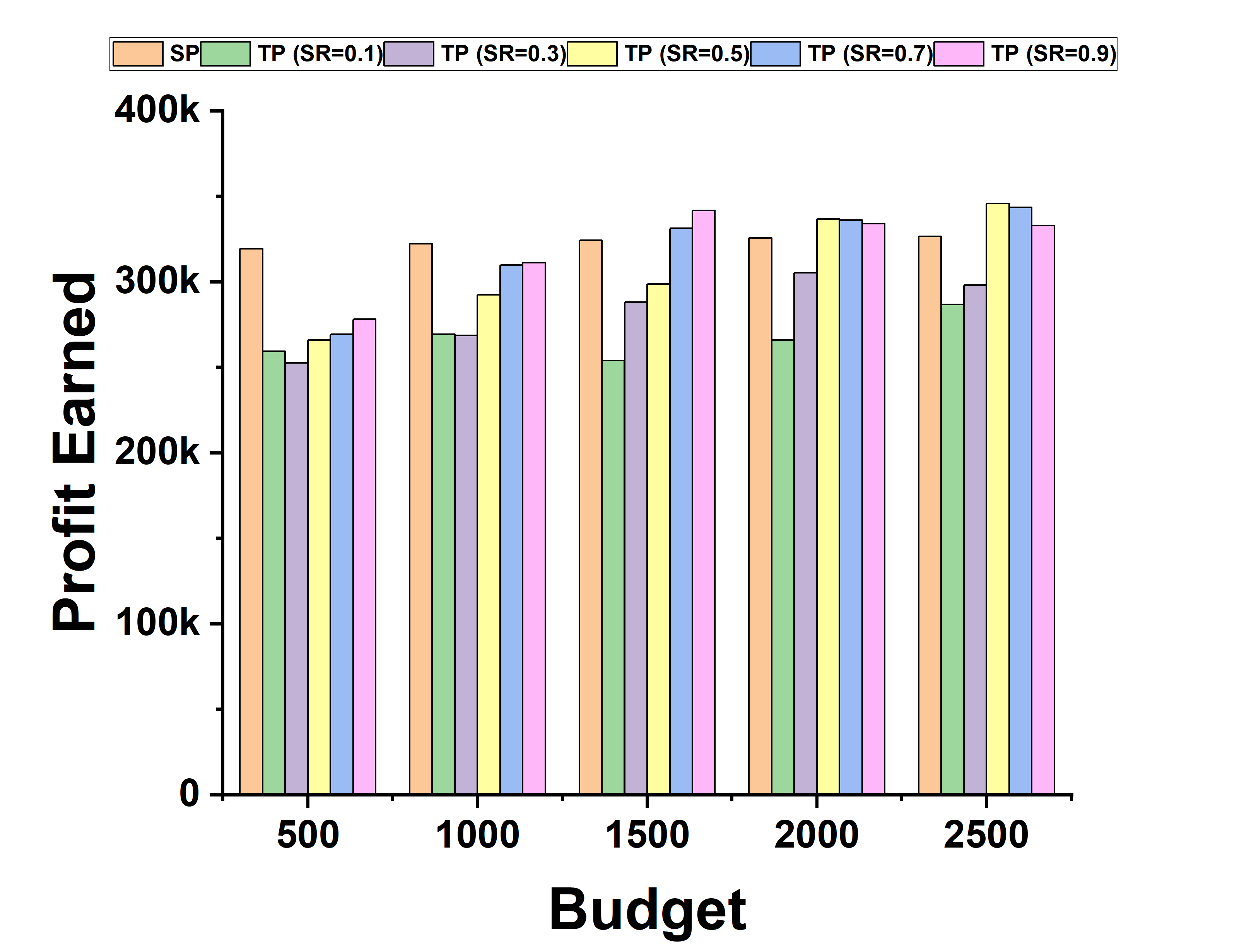}
        \caption{High Degree}
    \end{subfigure} &
    \begin{subfigure}[t]{0.22\textwidth}
        \includegraphics[width=\linewidth]{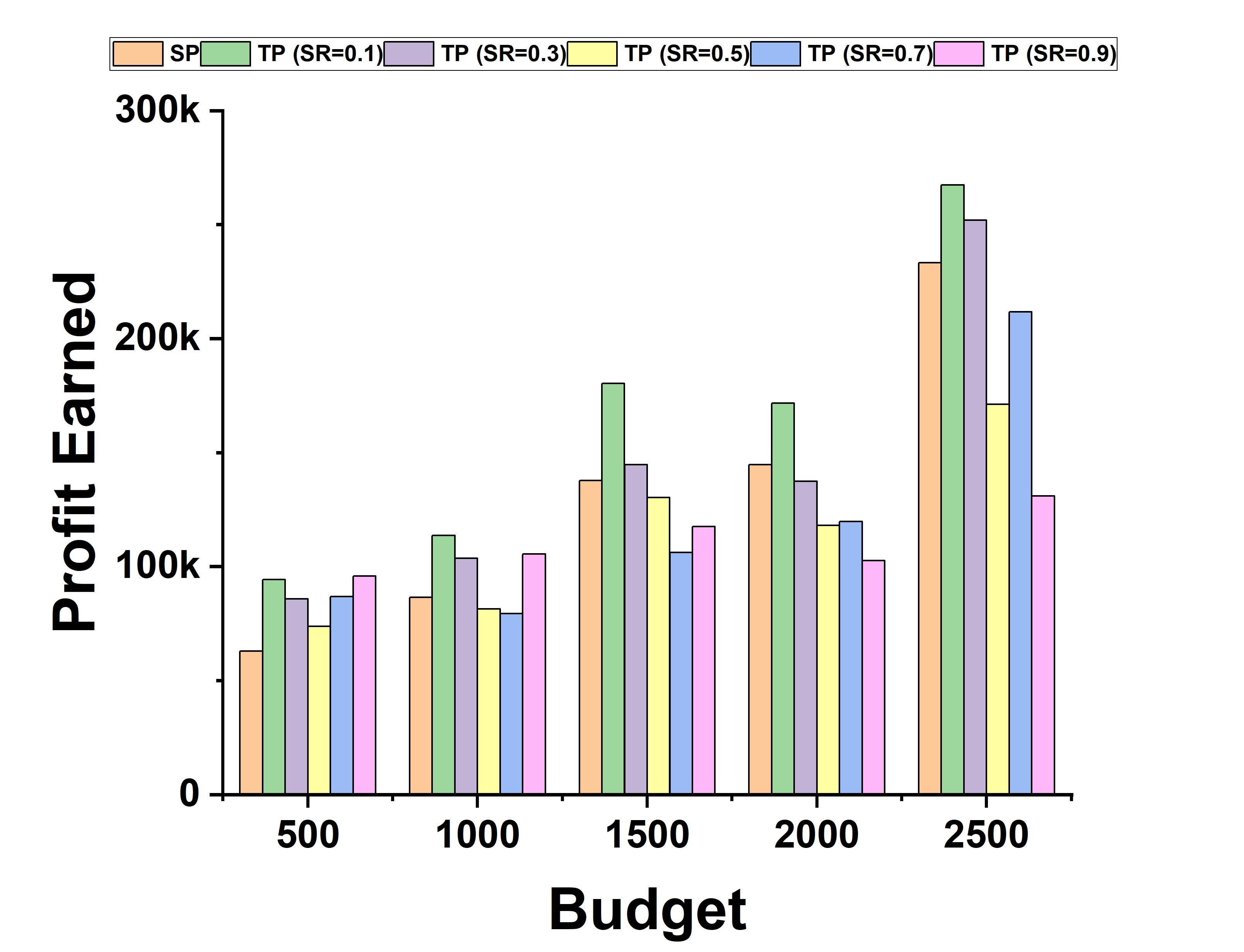}
        \caption{Clustering\\Coefficient}
    \end{subfigure} &
    \begin{subfigure}[t]{0.22\textwidth}
        \includegraphics[width=\linewidth]{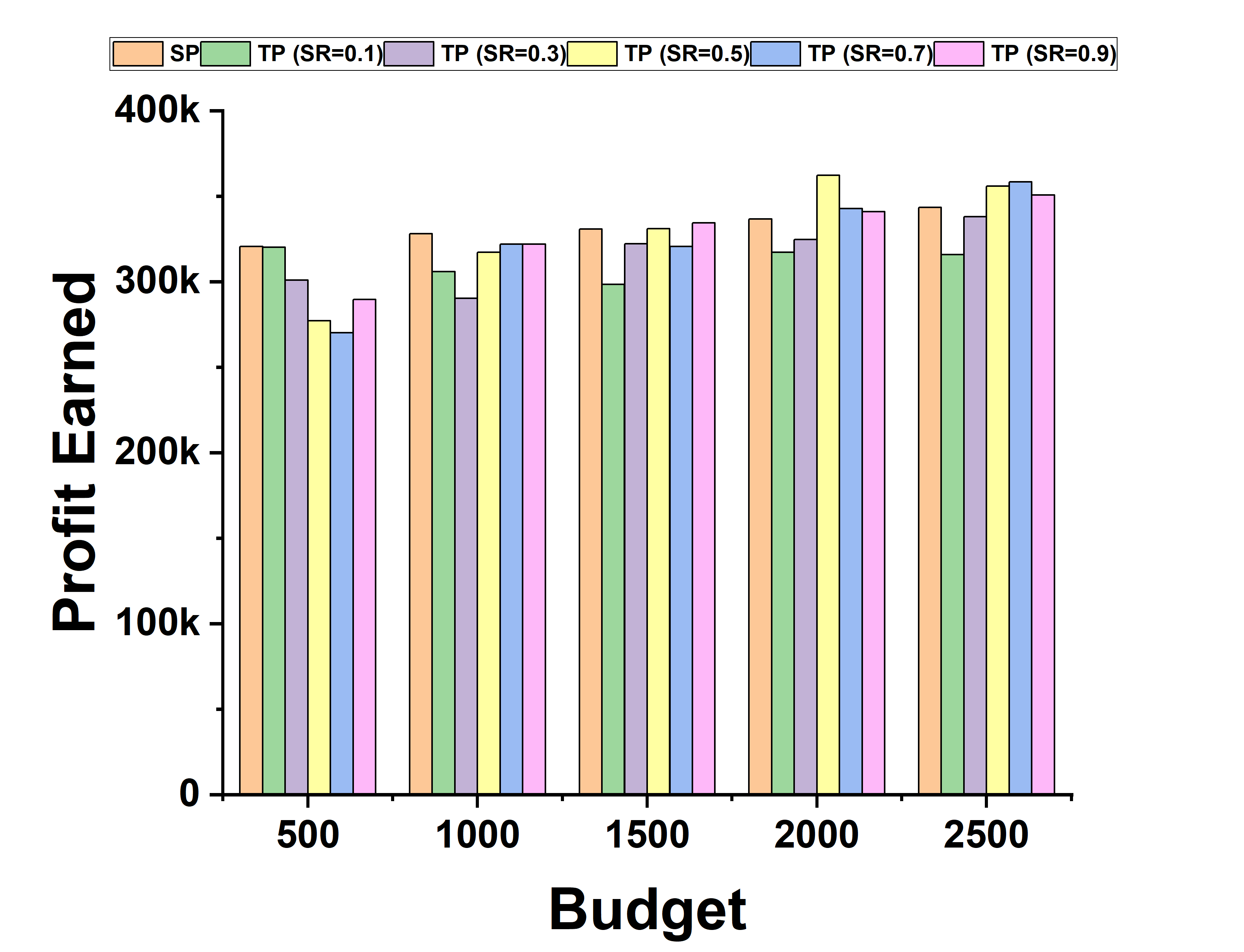}
        \caption{Degree Discount}
    \end{subfigure} \\[6pt]

    \begin{subfigure}[t]{0.22\textwidth}
        \includegraphics[width=\linewidth]{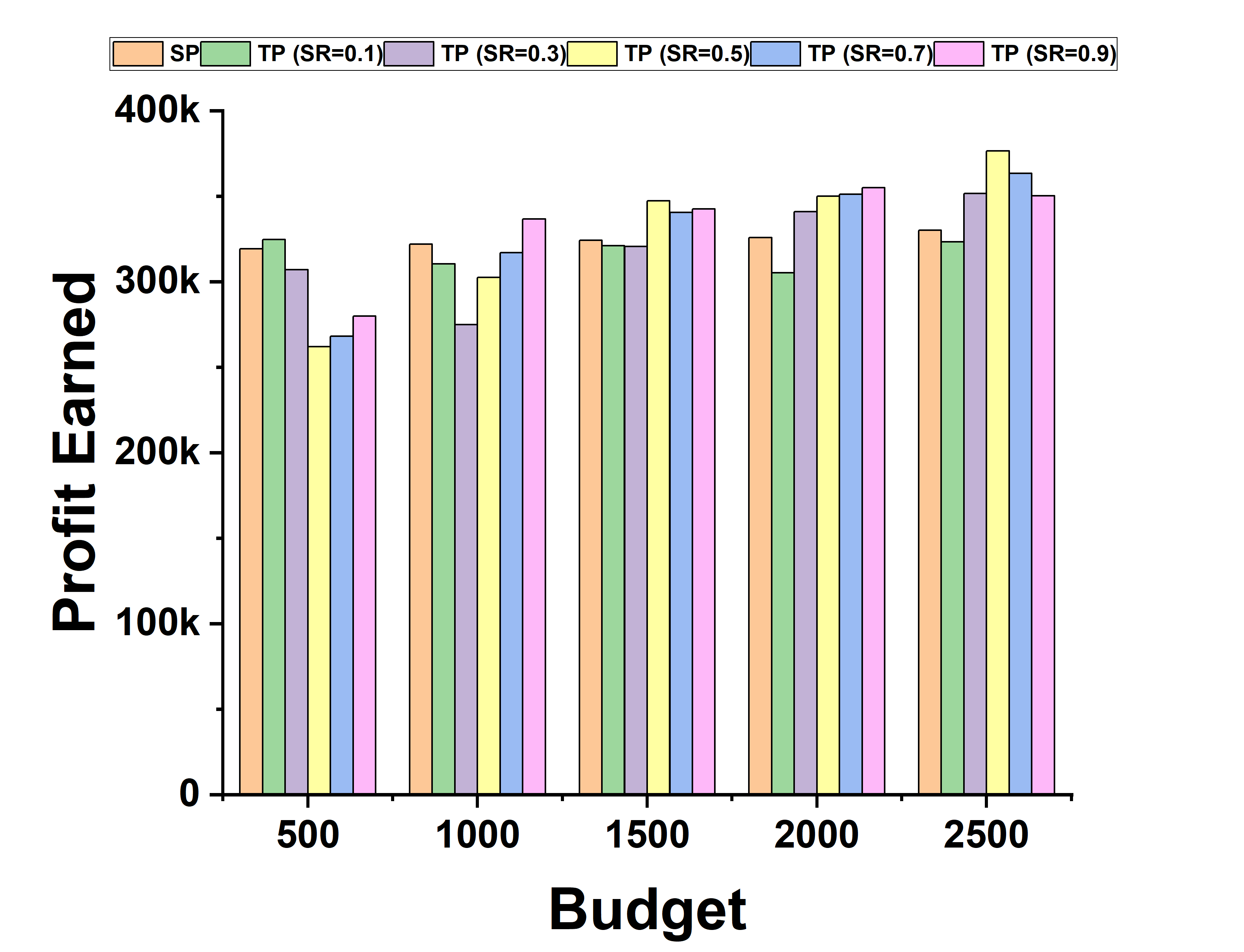}
        \caption{Single Discount}
    \end{subfigure} &
    \begin{subfigure}[t]{0.22\textwidth}
        \includegraphics[width=\linewidth]{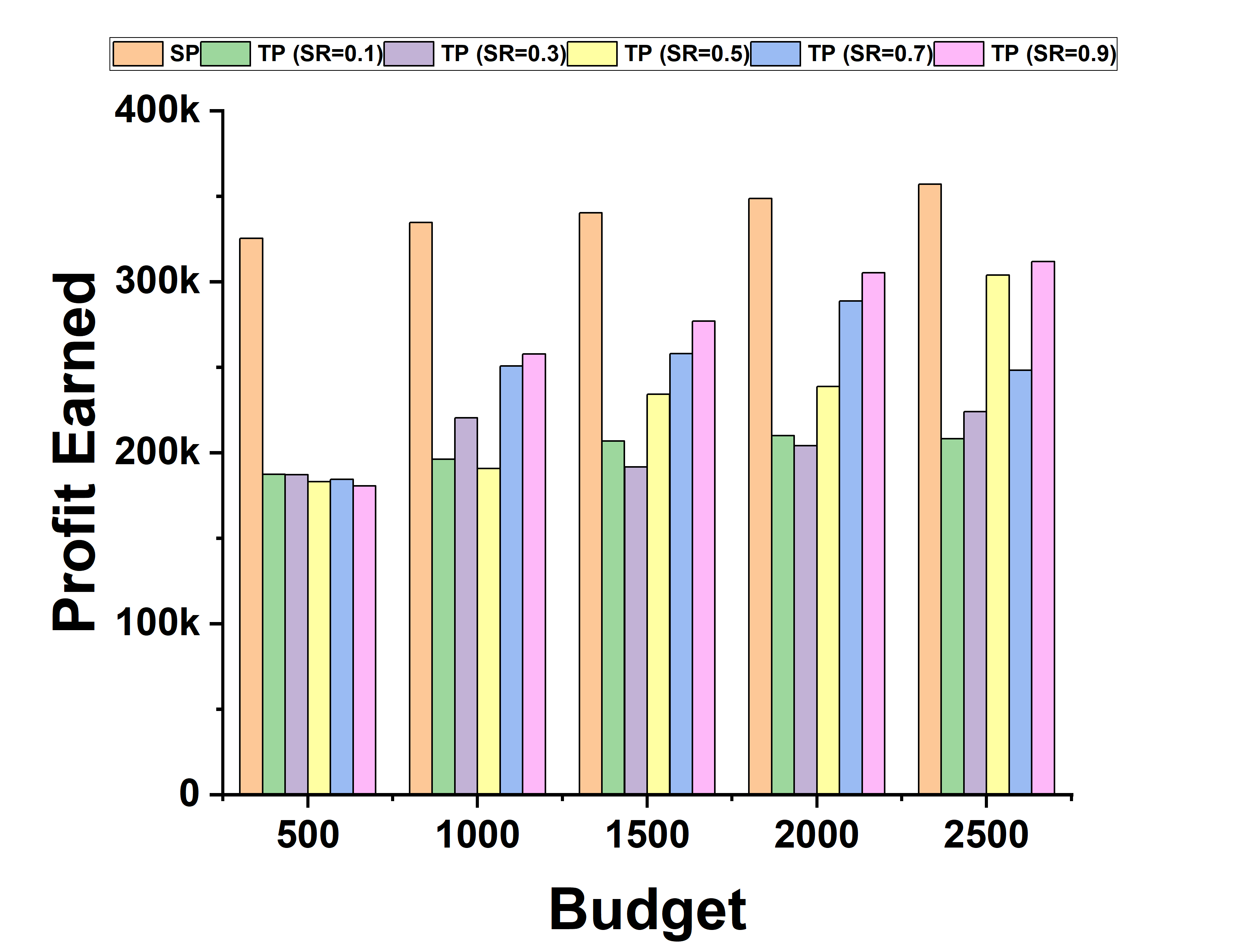}
        \caption{Simple Greedy}
    \end{subfigure} &
    \begin{subfigure}[t]{0.22\textwidth}
        \includegraphics[width=\linewidth]{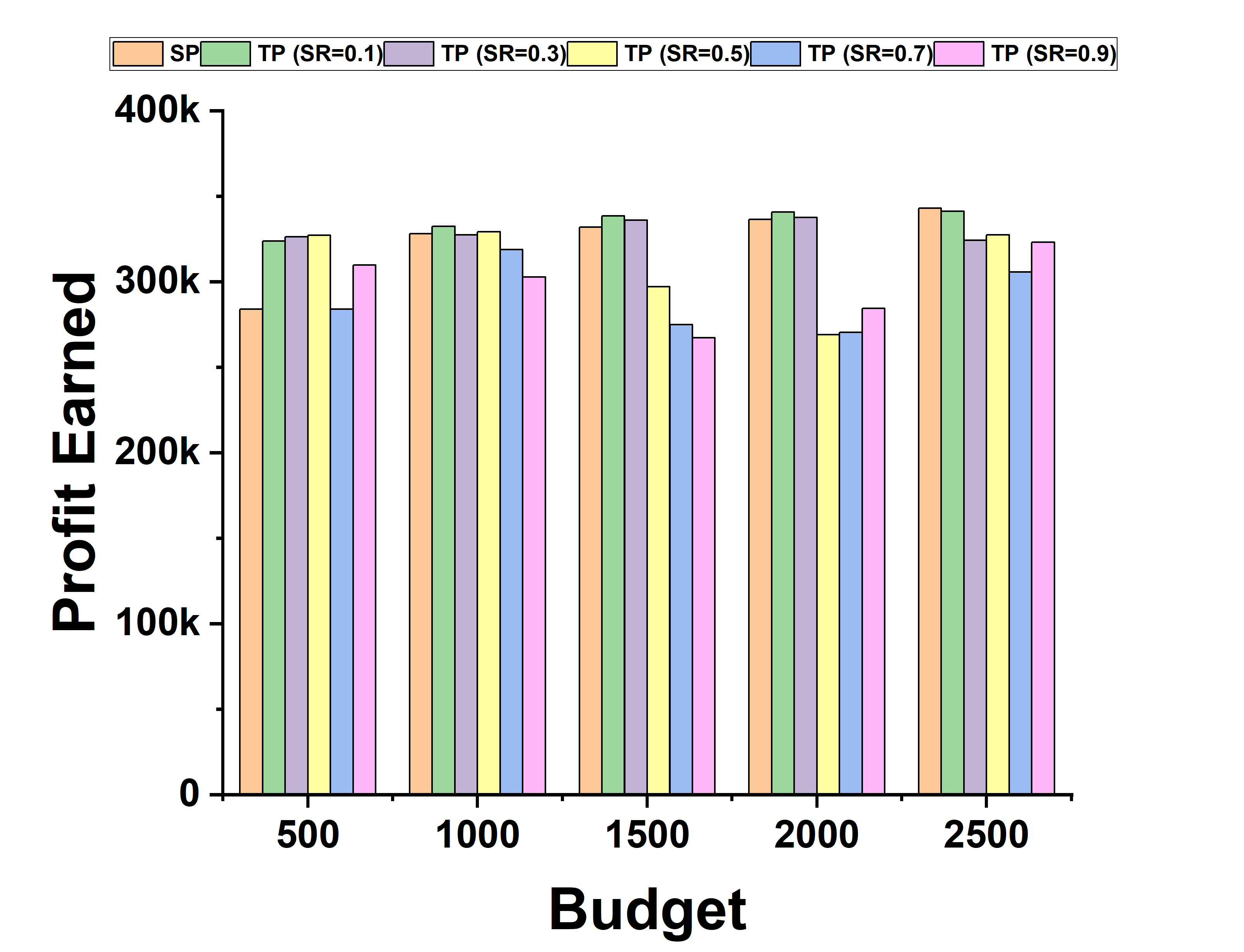}
        \caption{Double Greedy}
    \end{subfigure} &
    \begin{subfigure}[t]{0.22\textwidth}
        \includegraphics[width=\linewidth]{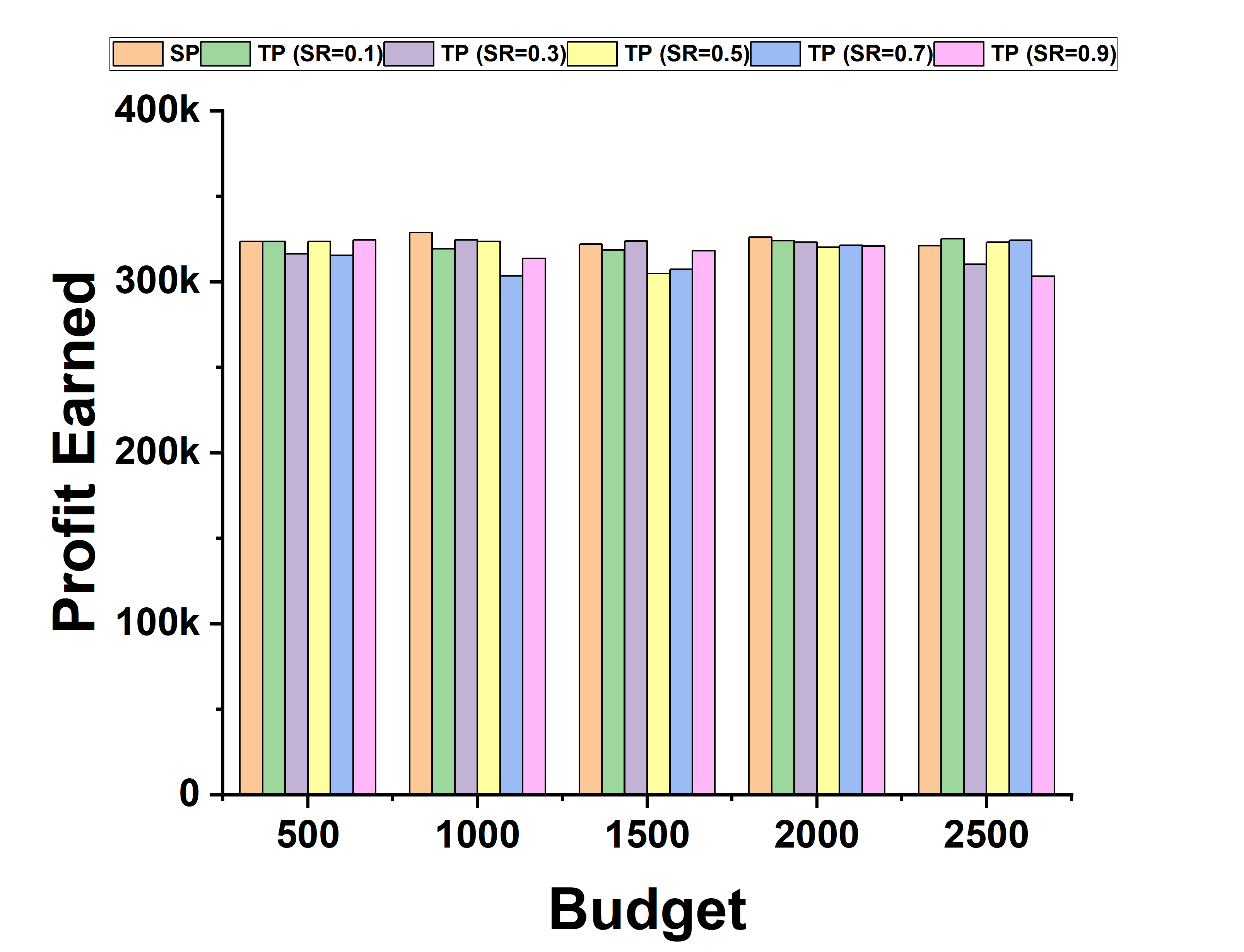}
        \caption{Stochastic Greedy}
    \end{subfigure}
\end{tabular}
\caption{Profit Earned in Single Phase Vs. Two Phase setting (Timestep 4, Probability Setting - Trivalency, \textit{Email-Eu-Core} Dataset)}
\label{Fig:RQ2_T2}
\end{figure}

\begin{figure}[htbp]
\centering
\captionsetup[sub]{font=footnotesize}  
\begin{tabular}{cccc}
    \begin{subfigure}[t]{0.22\textwidth}
        \includegraphics[width=\linewidth]{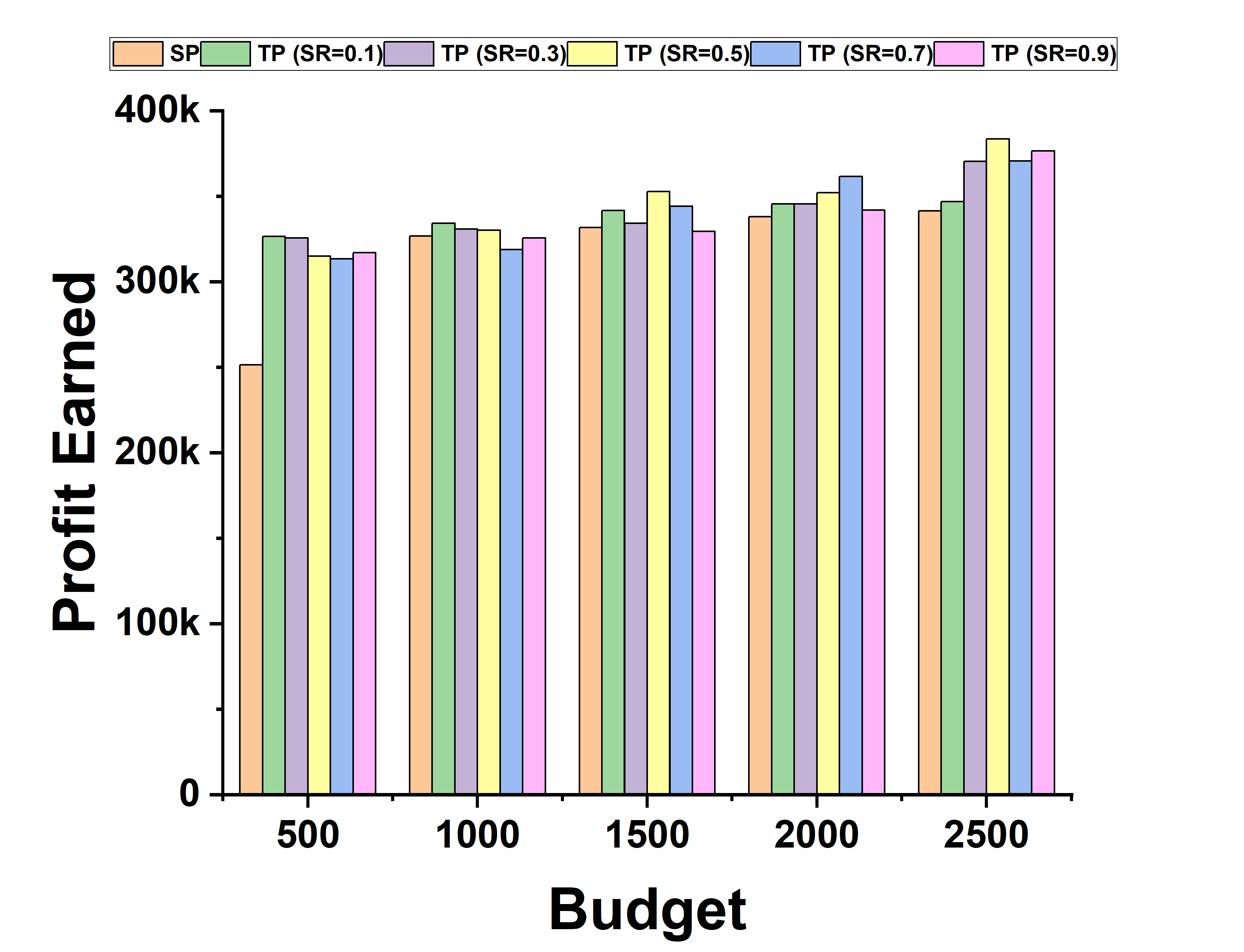}
        \caption{Random}
    \end{subfigure} &
    \begin{subfigure}[t]{0.22\textwidth}
        \includegraphics[width=\linewidth]{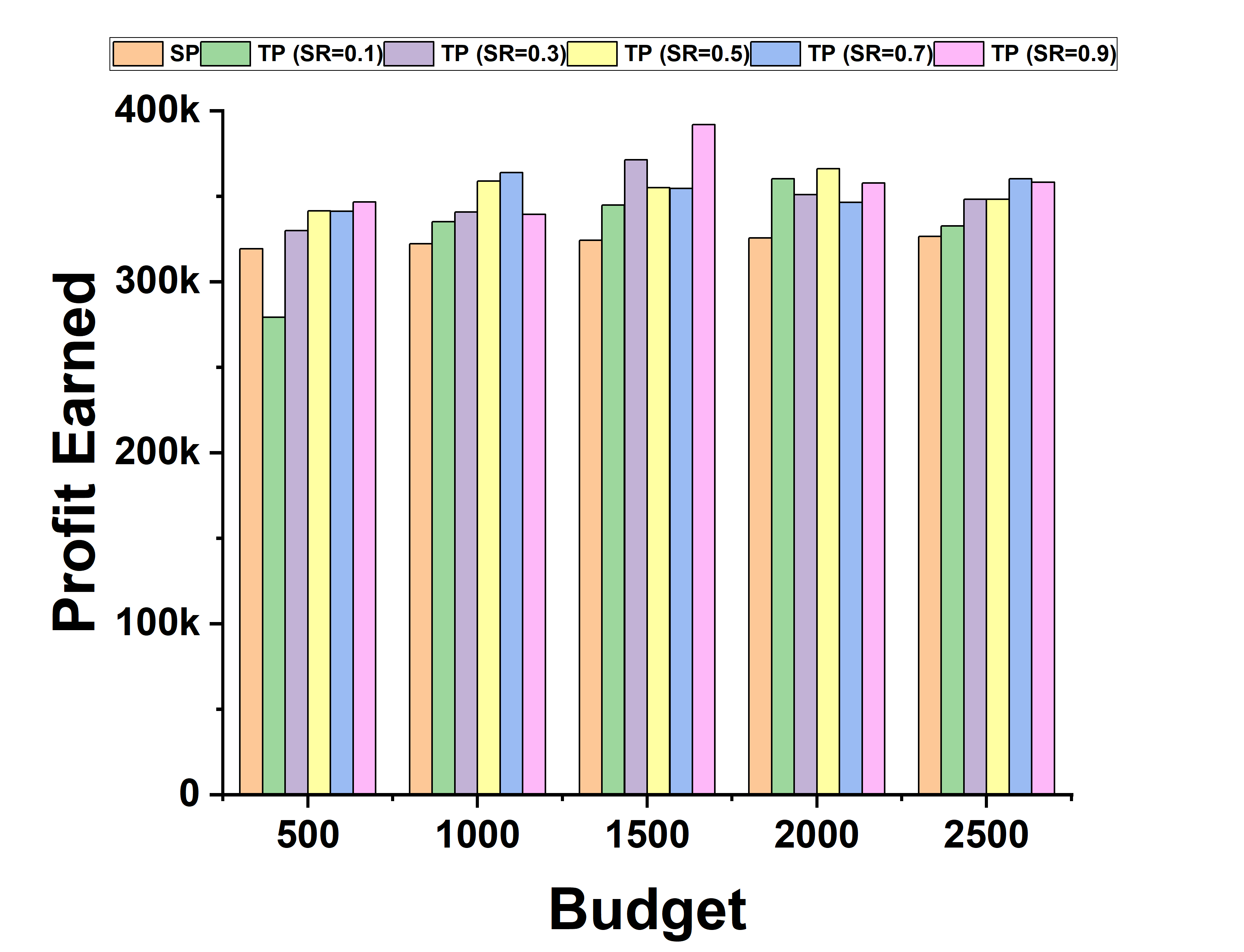}
        \caption{High Degree}
    \end{subfigure} &
    \begin{subfigure}[t]{0.22\textwidth}
        \includegraphics[width=\linewidth]{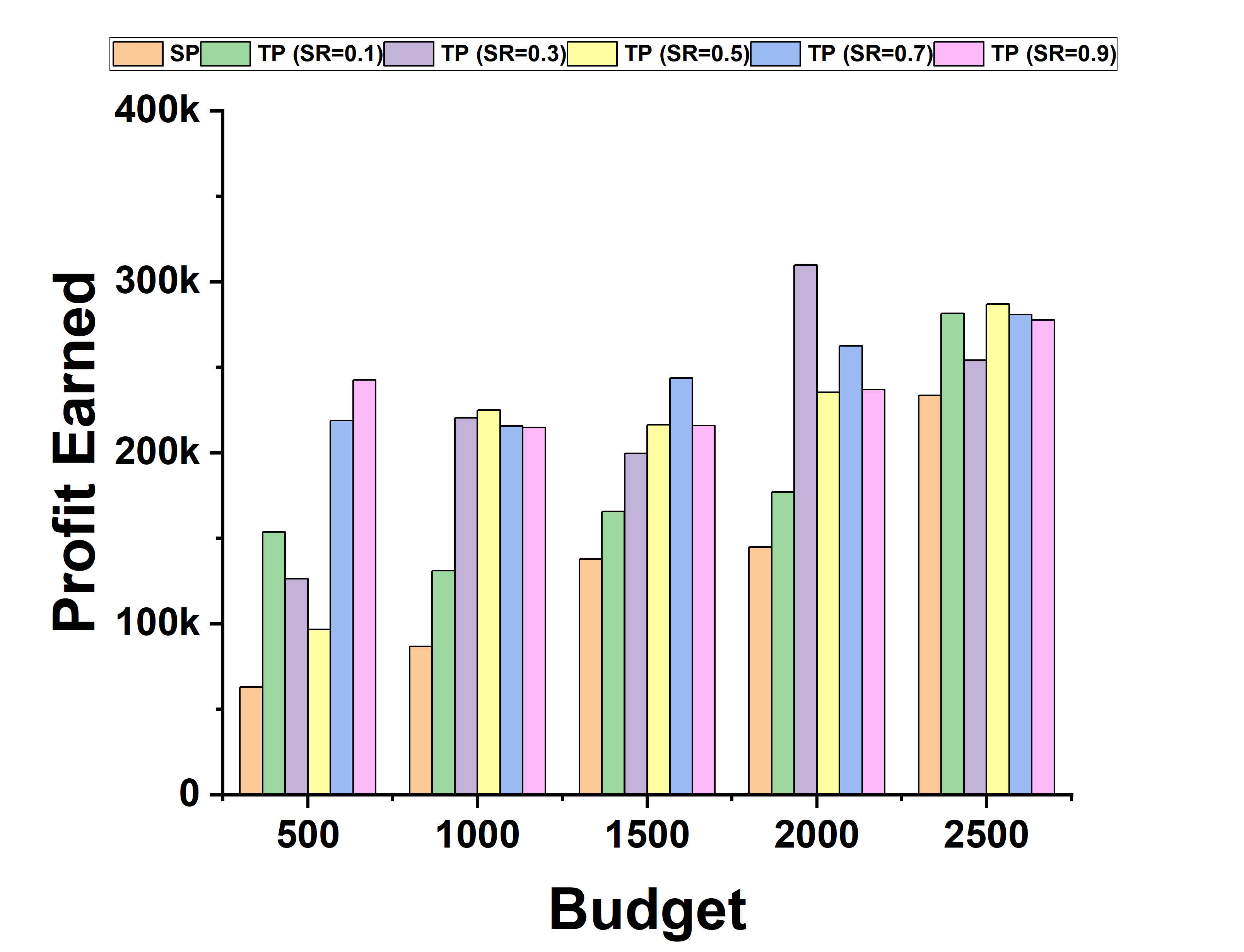}
        \caption{Clustering\\Coefficient}
    \end{subfigure} &
    \begin{subfigure}[t]{0.22\textwidth}
        \includegraphics[width=\linewidth]{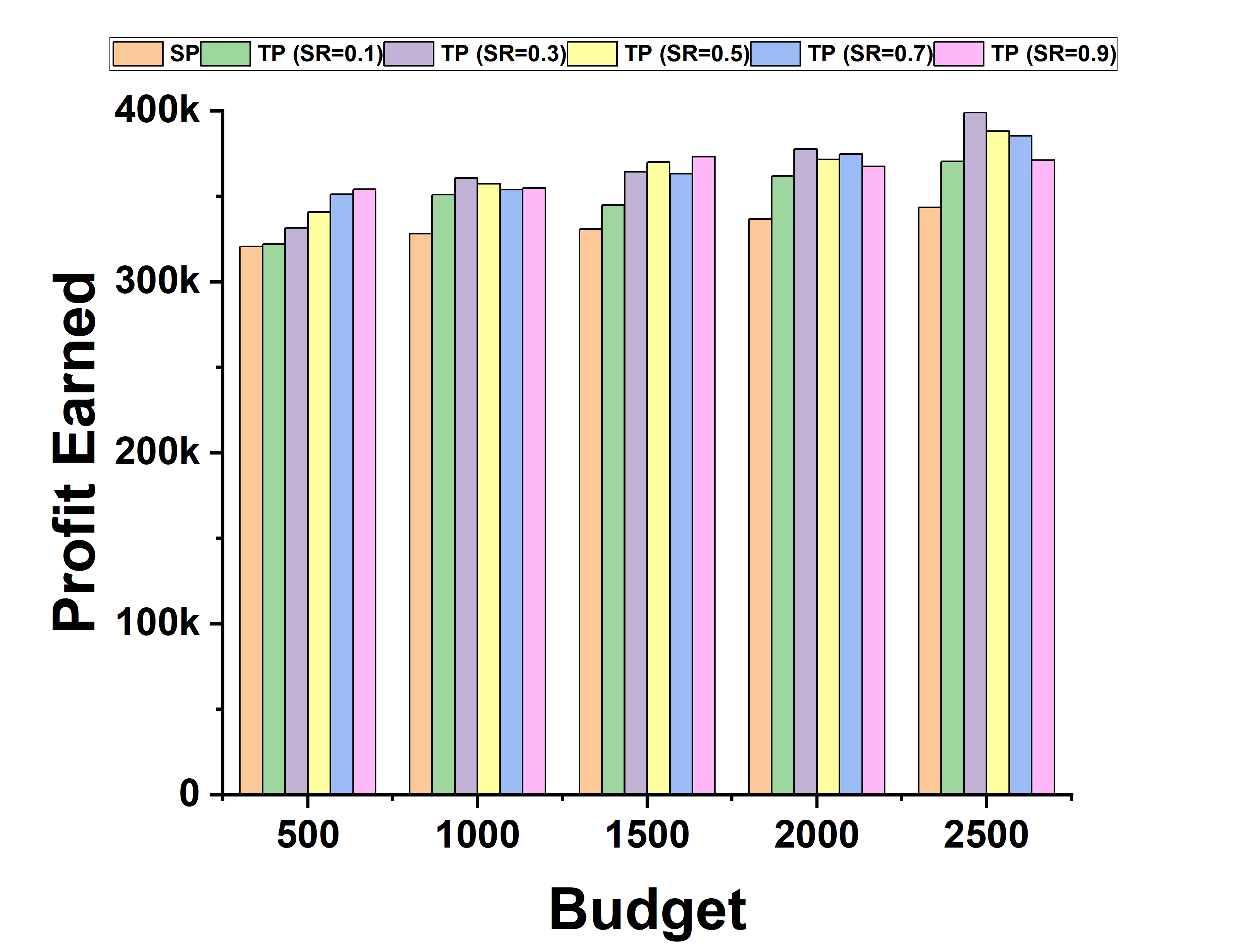}
        \caption{Degree Discount}
    \end{subfigure} \\[6pt]

    \begin{subfigure}[t]{0.22\textwidth}
        \includegraphics[width=\linewidth]{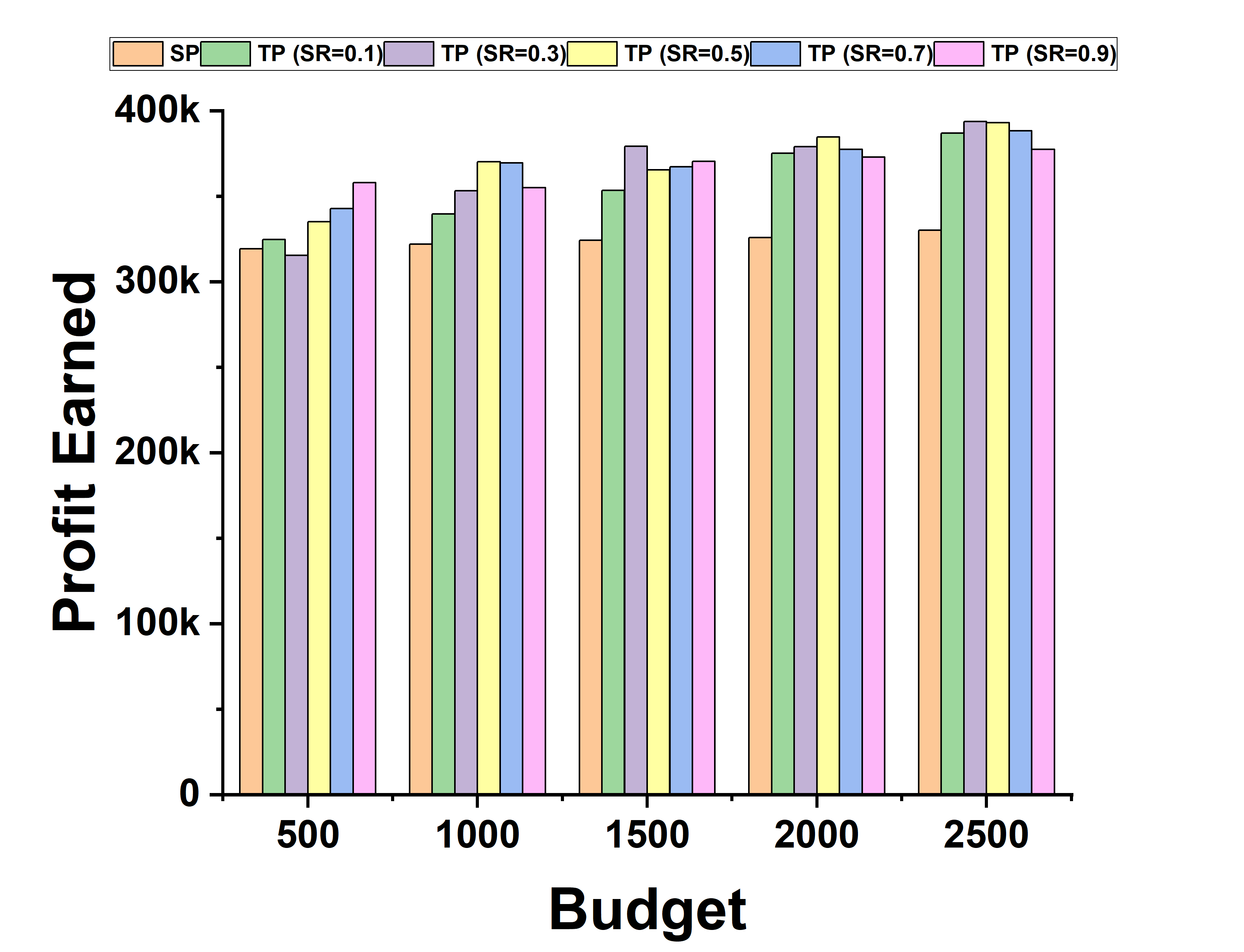}
        \caption{Single Discount}
    \end{subfigure} &
    \begin{subfigure}[t]{0.22\textwidth}
        \includegraphics[width=\linewidth]{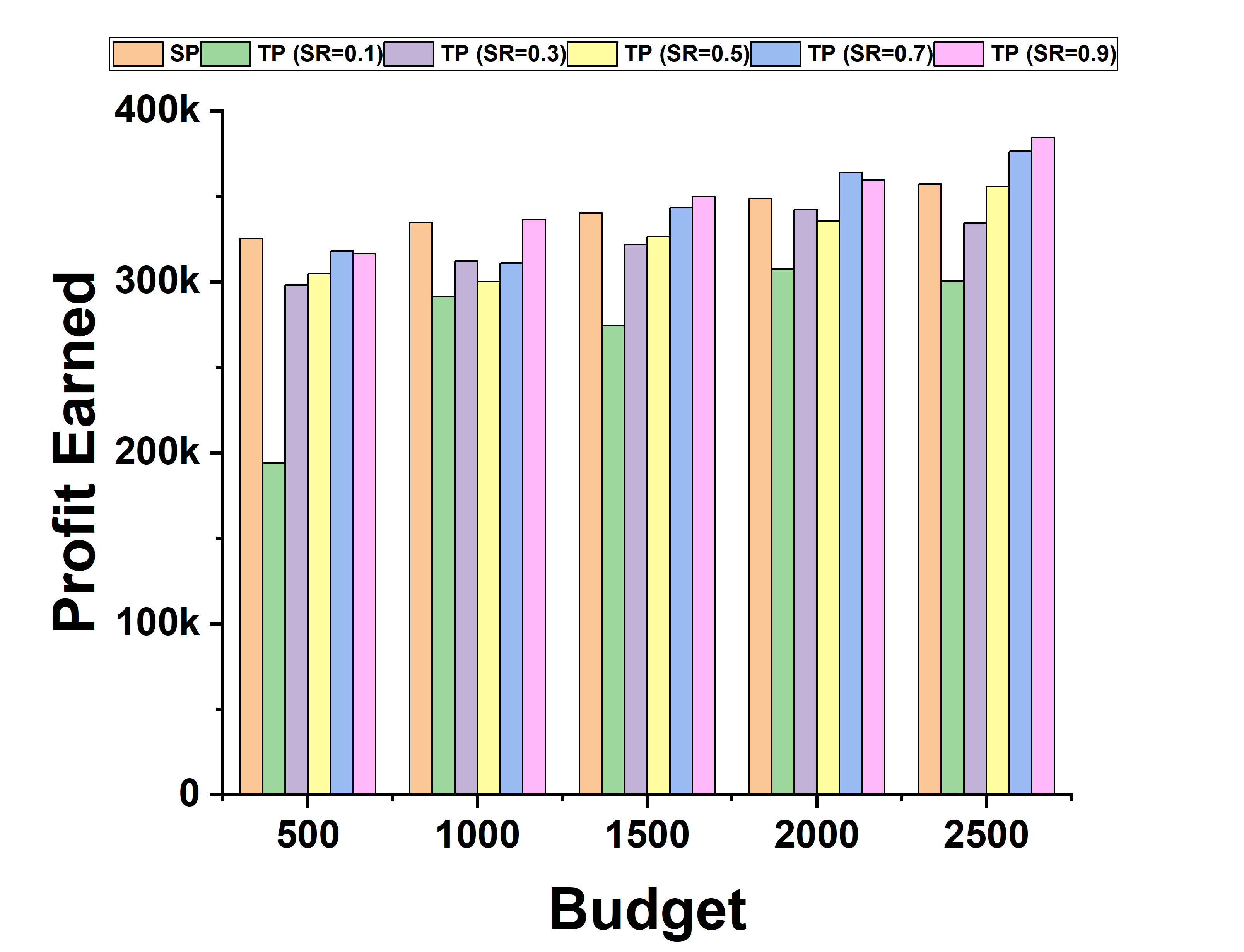}
        \caption{Simple Greedy}
    \end{subfigure} &
    \begin{subfigure}[t]{0.22\textwidth}
        \includegraphics[width=\linewidth]{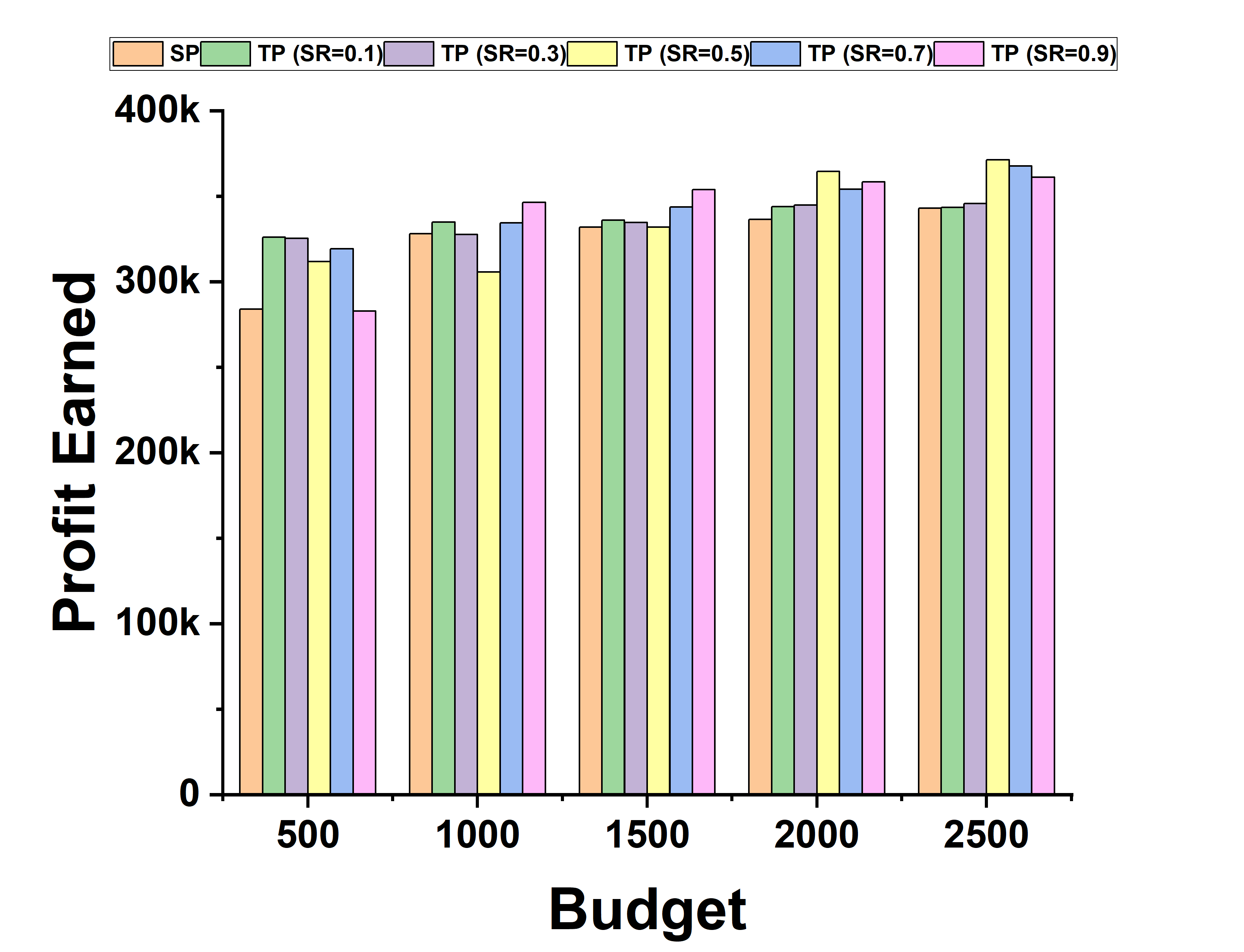}
        \caption{Double Greedy}
    \end{subfigure} &
    \begin{subfigure}[t]{0.22\textwidth}
        \includegraphics[width=\linewidth]{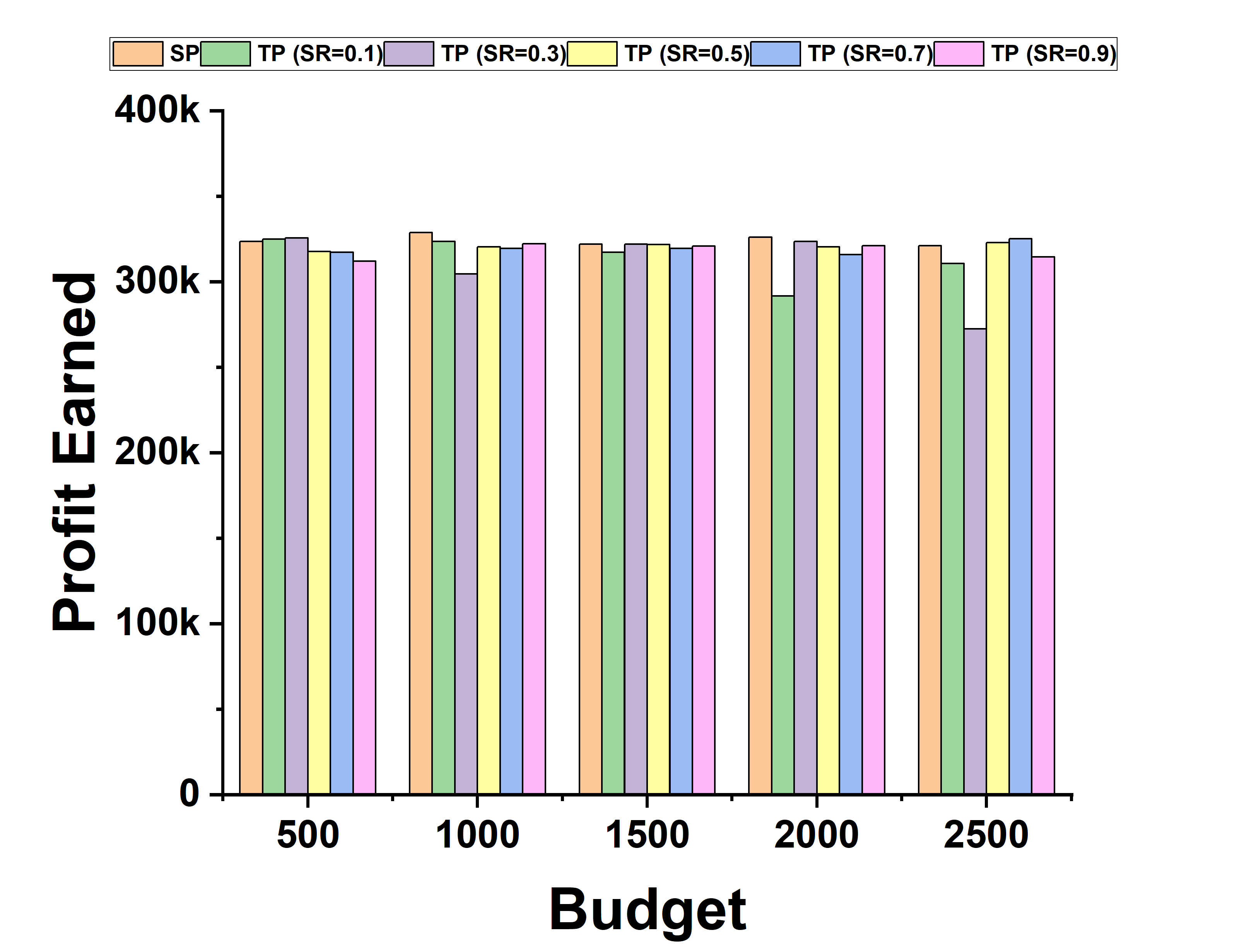}
        \caption{Stochastic Greedy}
    \end{subfigure}
\end{tabular}
\caption{Profit Earned in Single Phase Vs. Two Phase setting (Timestep 6, Probability Setting - Trivalency, \textit{Email-Eu-Core} Dataset)}
\label{Fig:RQ2_T3}
\end{figure}

\begin{figure}[htbp]
\centering
\captionsetup[sub]{font=footnotesize}  
\begin{tabular}{cccc}
    \begin{subfigure}[t]{0.22\textwidth}
        \includegraphics[width=\linewidth]{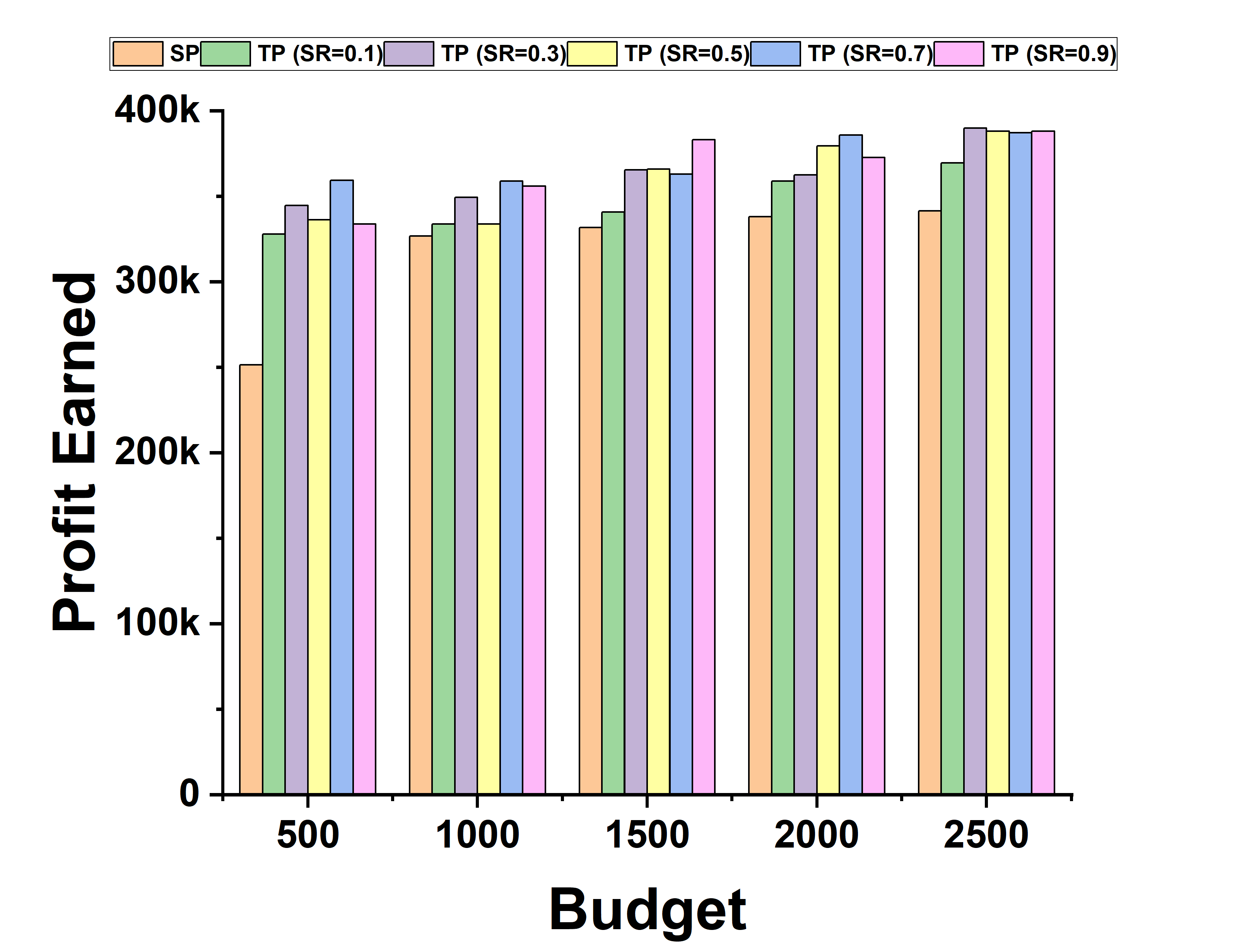}
        \caption{Random}
    \end{subfigure} &
    \begin{subfigure}[t]{0.22\textwidth}
        \includegraphics[width=\linewidth]{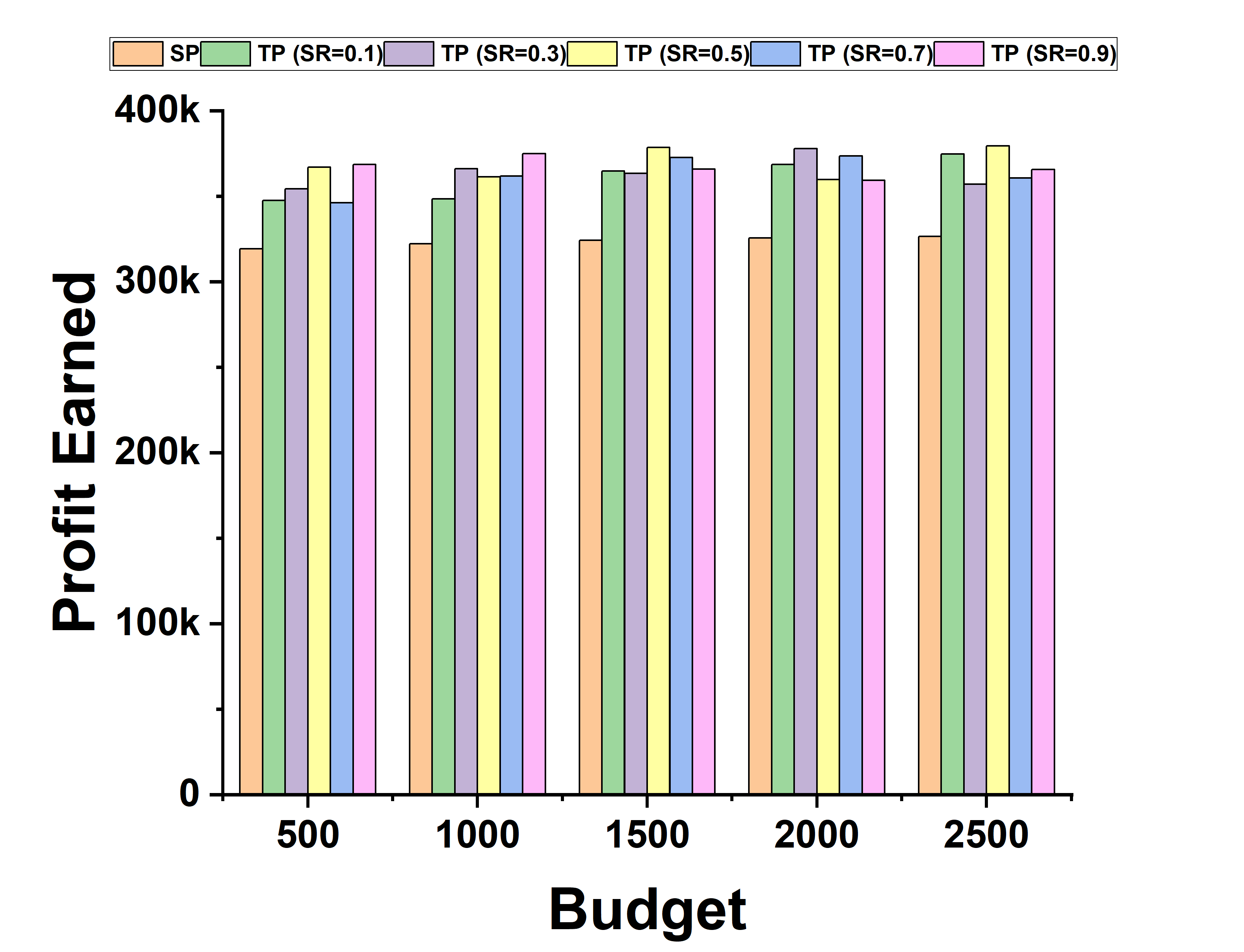}
        \caption{High Degree}
    \end{subfigure} &
    \begin{subfigure}[t]{0.22\textwidth}
        \includegraphics[width=\linewidth]{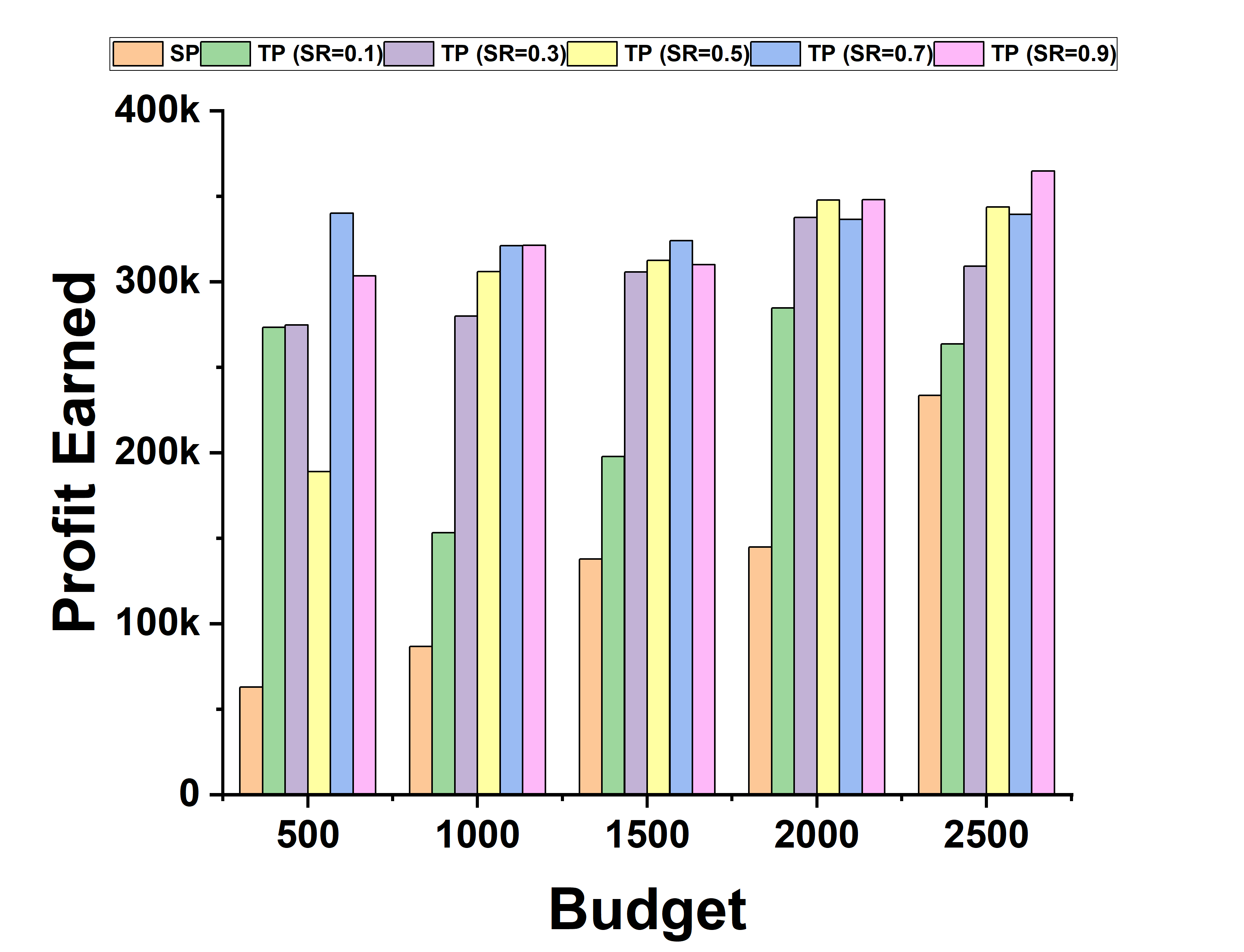}
        \caption{Clustering\\Coefficient}
    \end{subfigure} &
    \begin{subfigure}[t]{0.22\textwidth}
        \includegraphics[width=\linewidth]{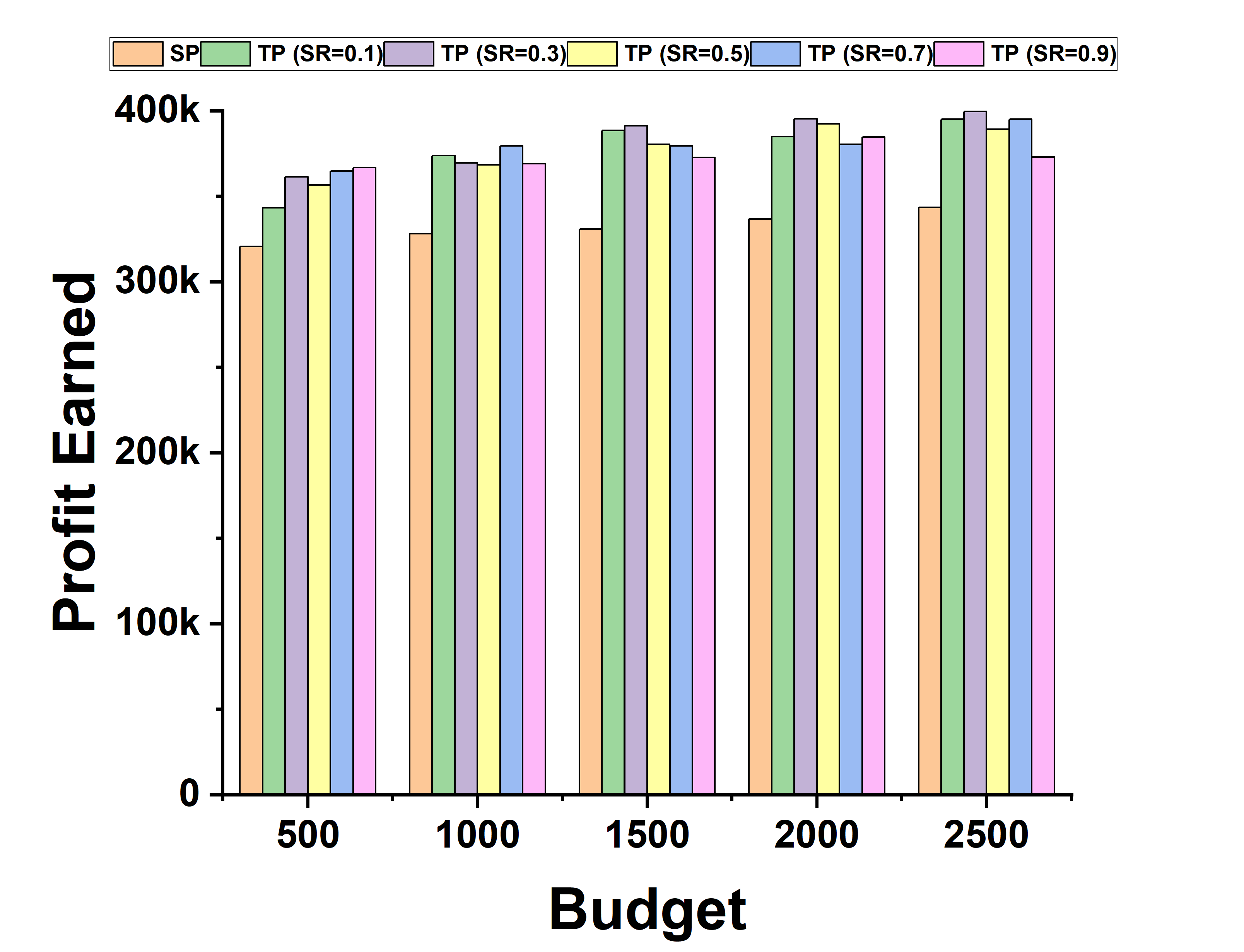}
        \caption{Degree Discount}
    \end{subfigure} \\[6pt]

    \begin{subfigure}[t]{0.22\textwidth}
        \includegraphics[width=\linewidth]{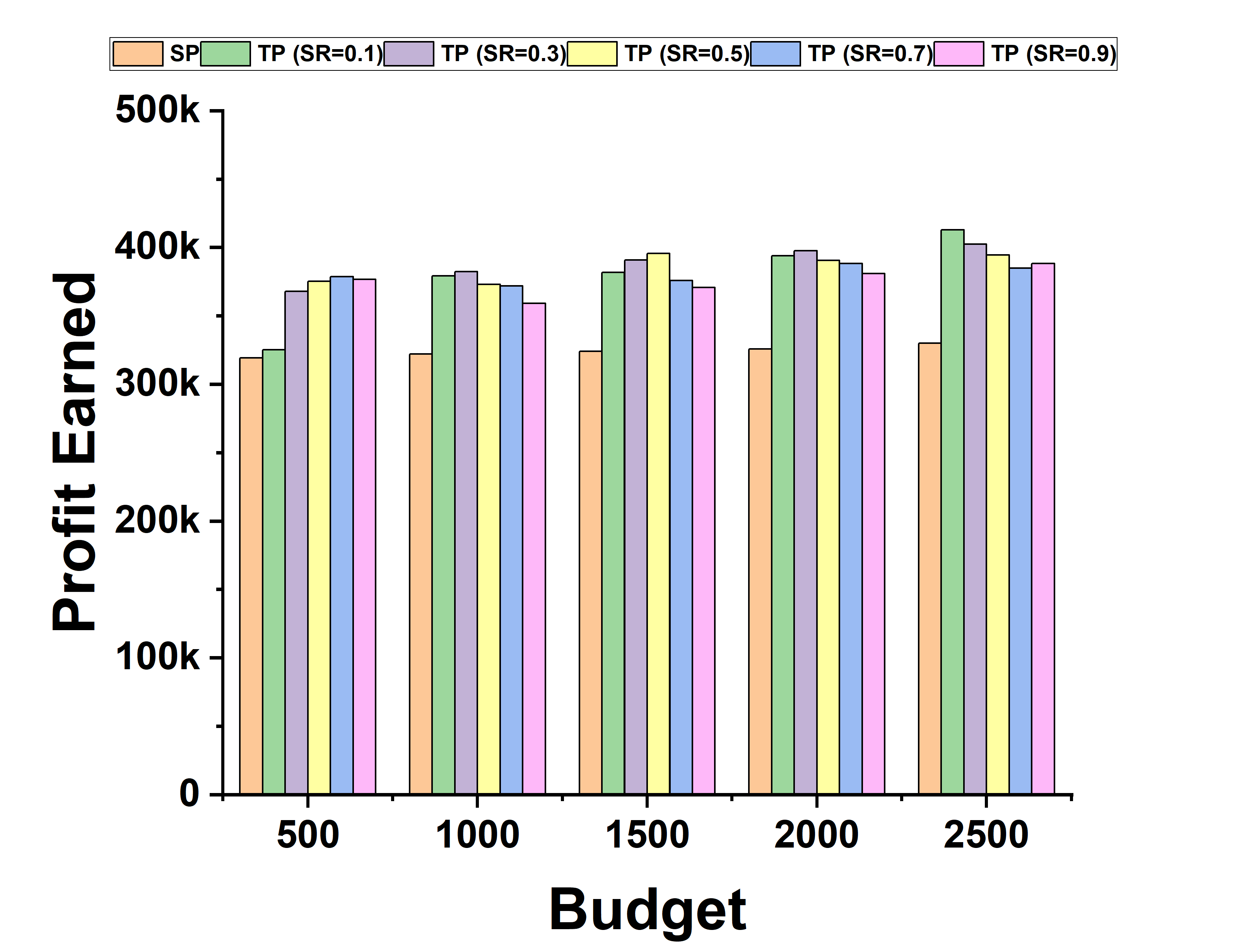}
        \caption{Single Discount}
    \end{subfigure} &
    \begin{subfigure}[t]{0.22\textwidth}
        \includegraphics[width=\linewidth]{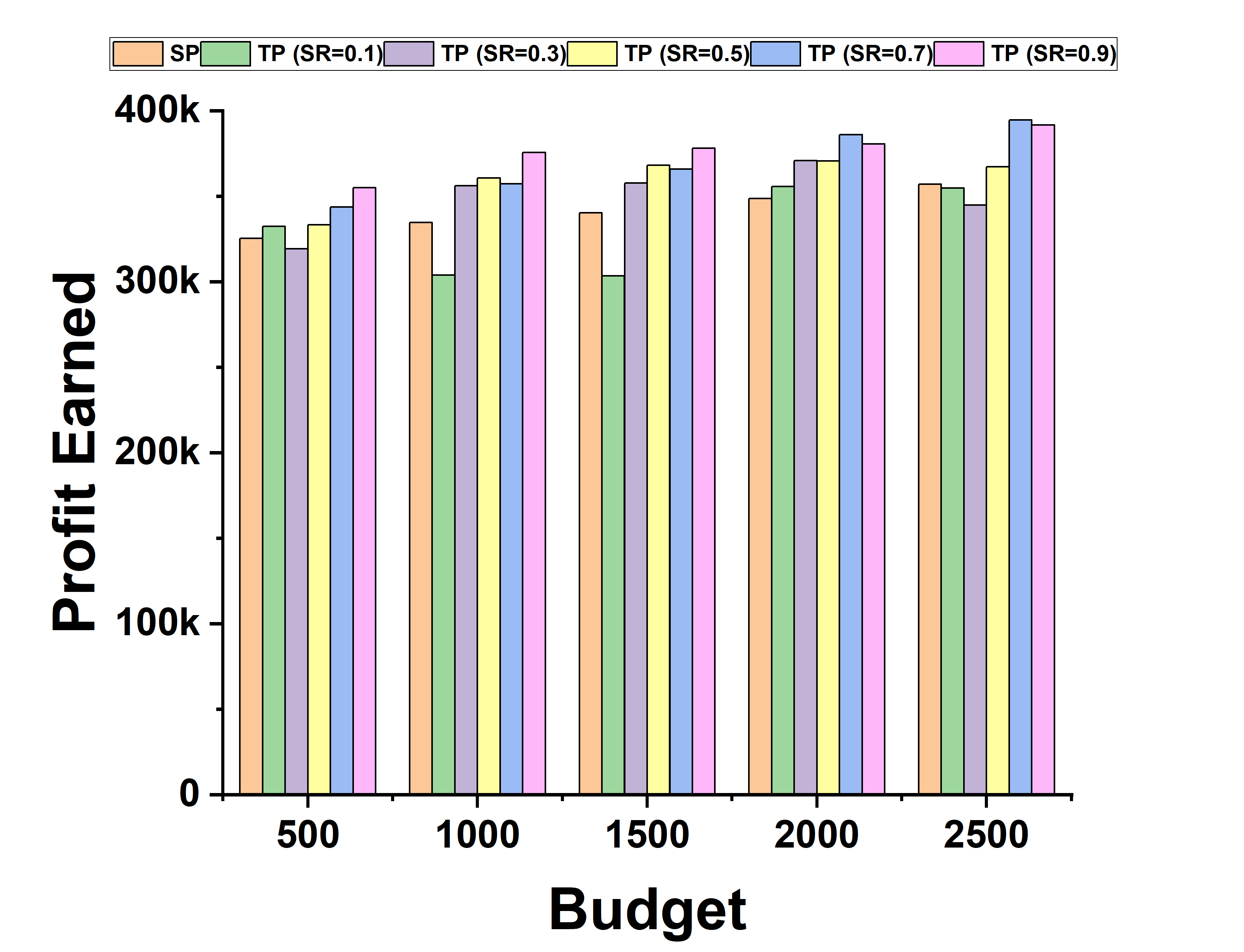}
        \caption{Simple Greedy}
    \end{subfigure} &
    \begin{subfigure}[t]{0.22\textwidth}
        \includegraphics[width=\linewidth]{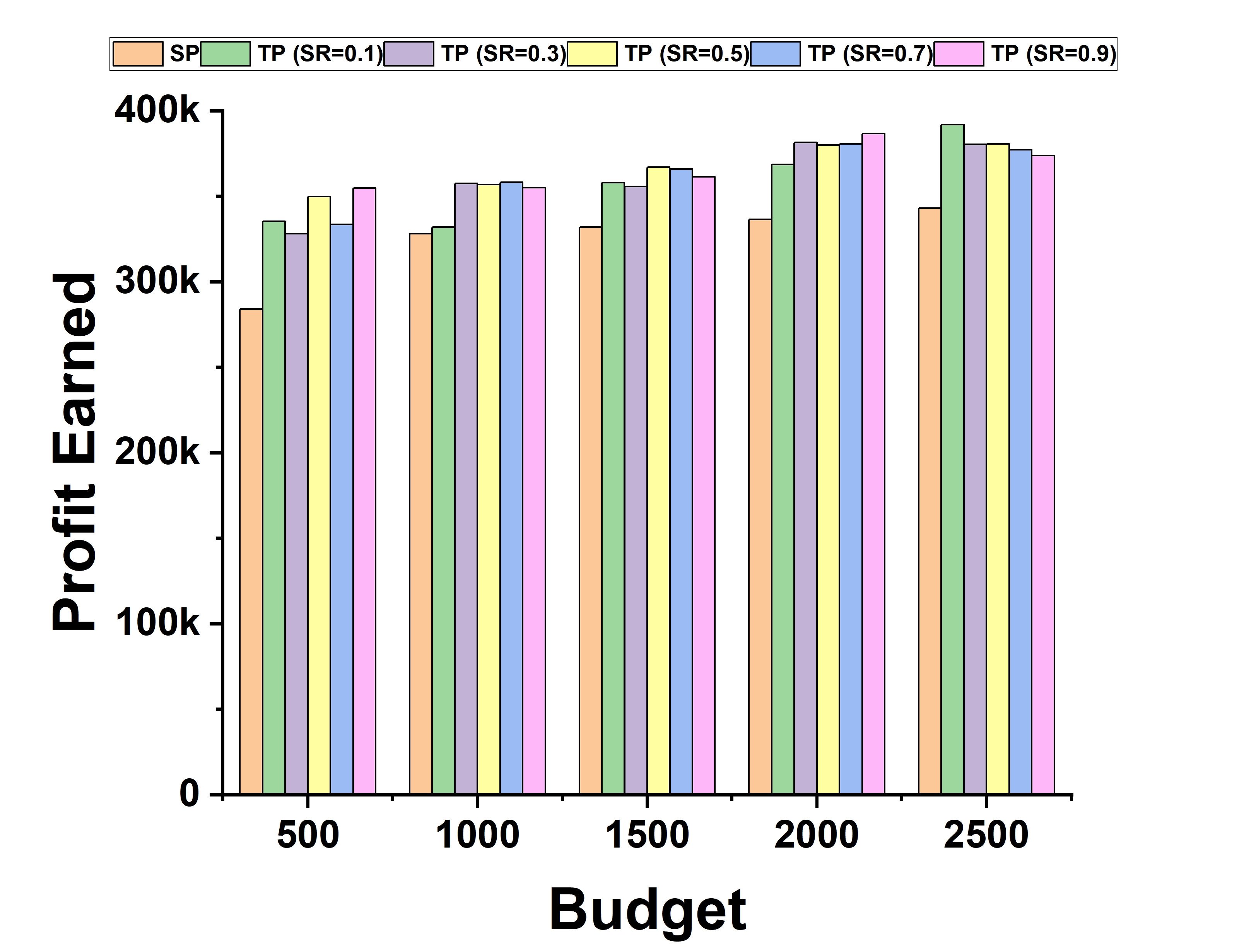}
        \caption{Double Greedy}
    \end{subfigure} &
    \begin{subfigure}[t]{0.22\textwidth}
        \includegraphics[width=\linewidth]{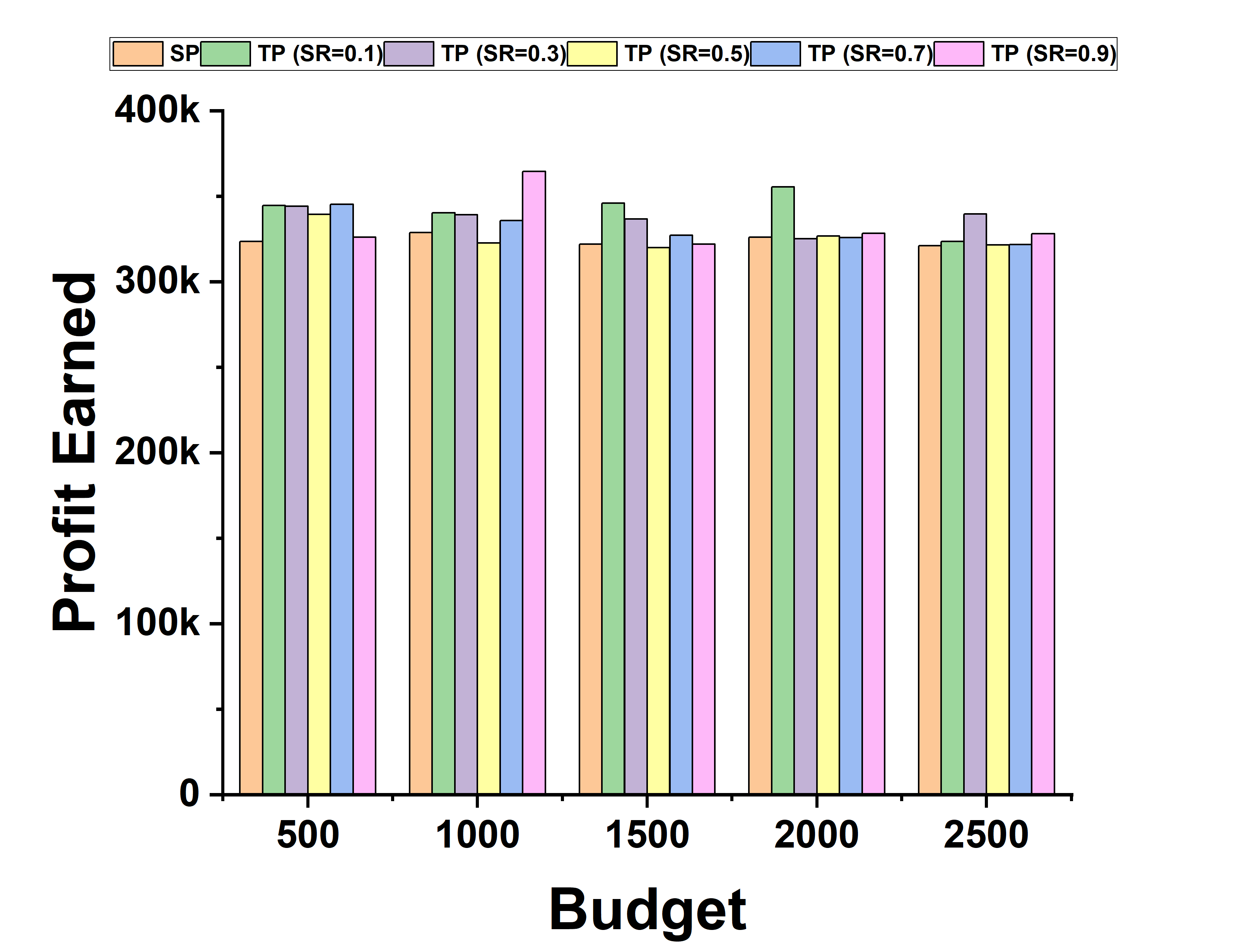}
        \caption{Stochastic Greedy}
    \end{subfigure}
\end{tabular}
\caption{Profit Earned in Single Phase Vs. Two Phase setting (Timestep 8, Probability Setting - Trivalency, \textit{Email-Eu-Core} Dataset)}
\label{Fig:RQ2_T4}
\end{figure}

\begin{figure}[htbp]
\centering
\captionsetup[sub]{font=footnotesize}  
\begin{tabular}{cccc}
    \begin{subfigure}[t]{0.22\textwidth}
        \includegraphics[width=\linewidth]{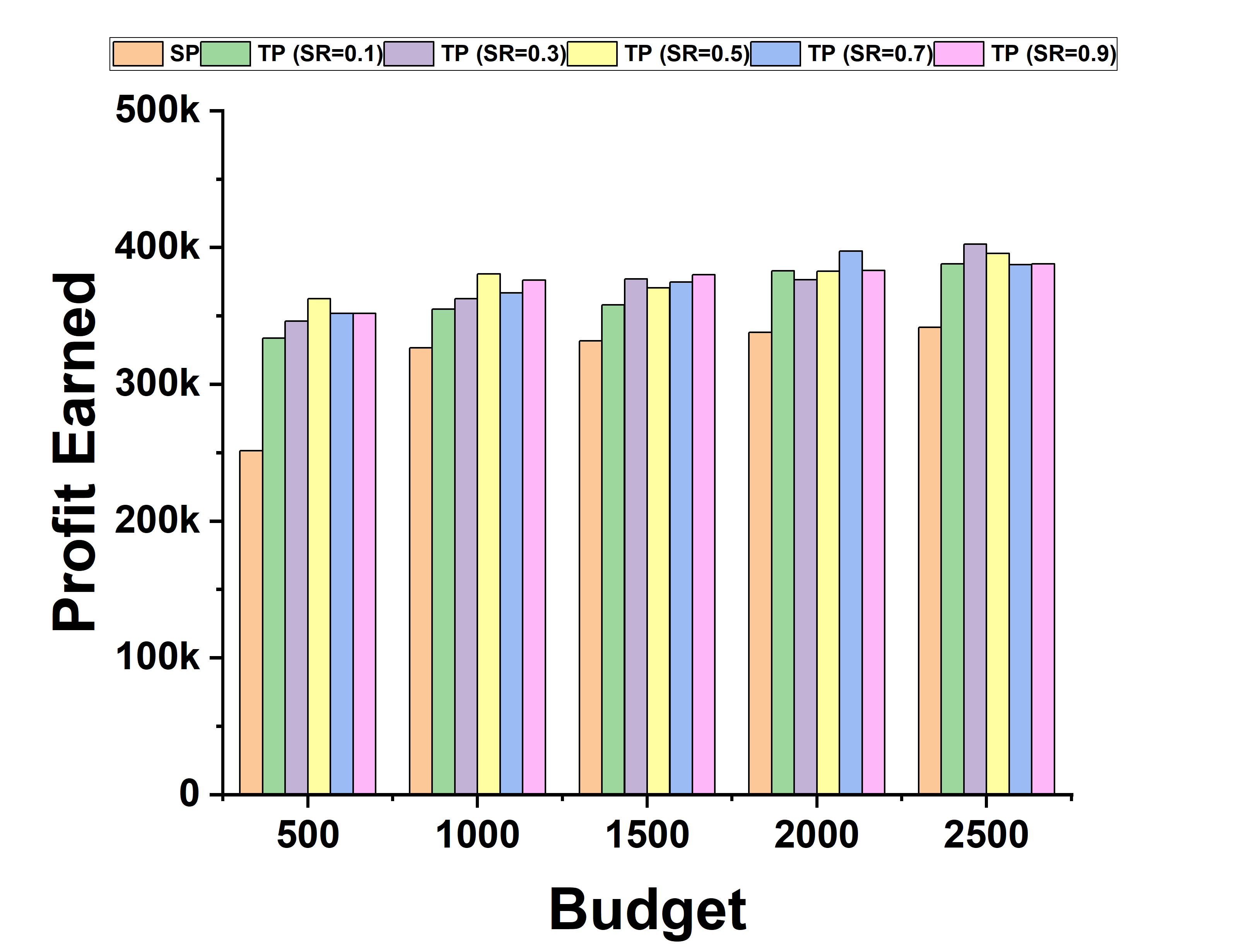}
        \caption{Random}
    \end{subfigure} &
    \begin{subfigure}[t]{0.22\textwidth}
        \includegraphics[width=\linewidth]{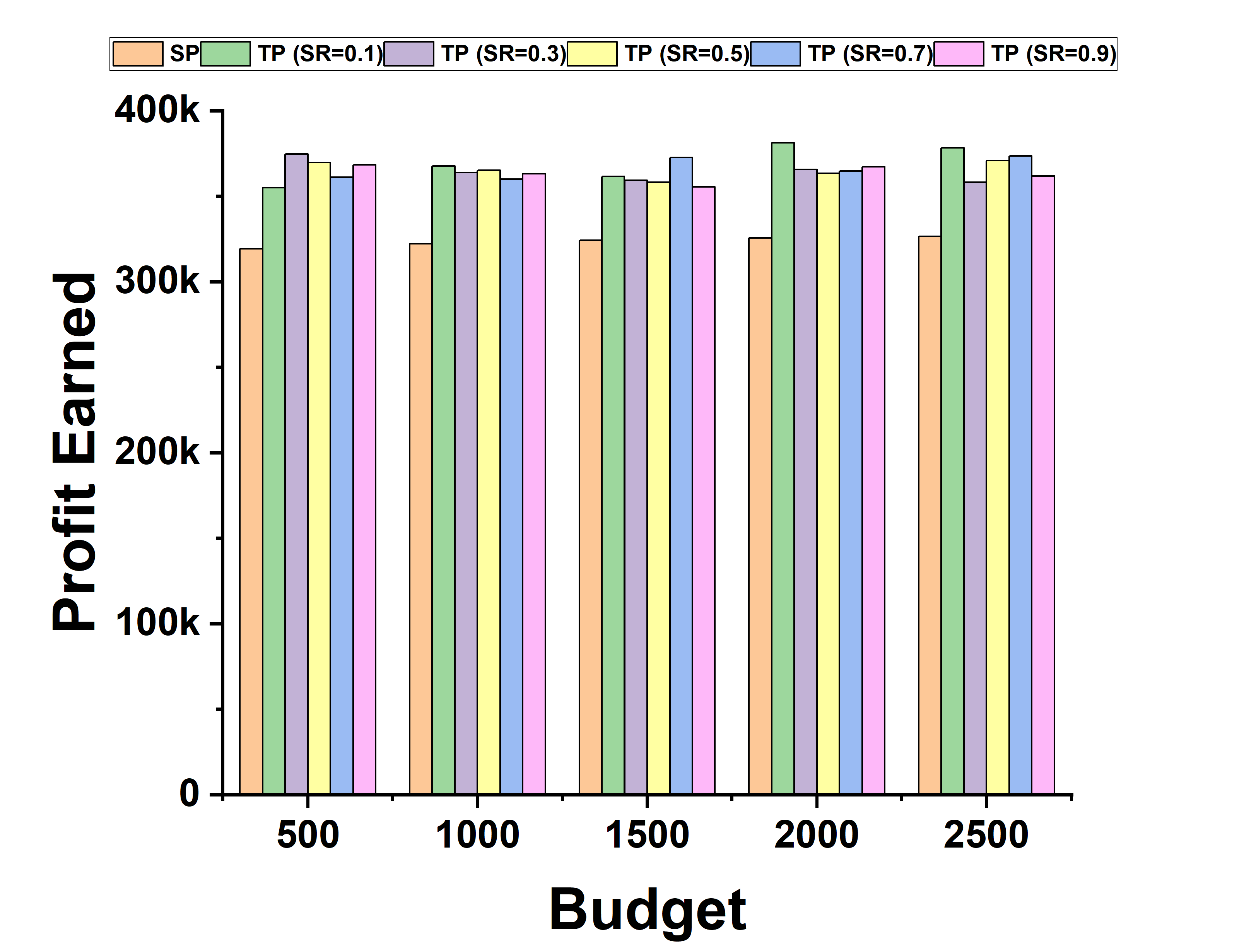}
        \caption{High Degree}
    \end{subfigure} &
    \begin{subfigure}[t]{0.22\textwidth}
        \includegraphics[width=\linewidth]{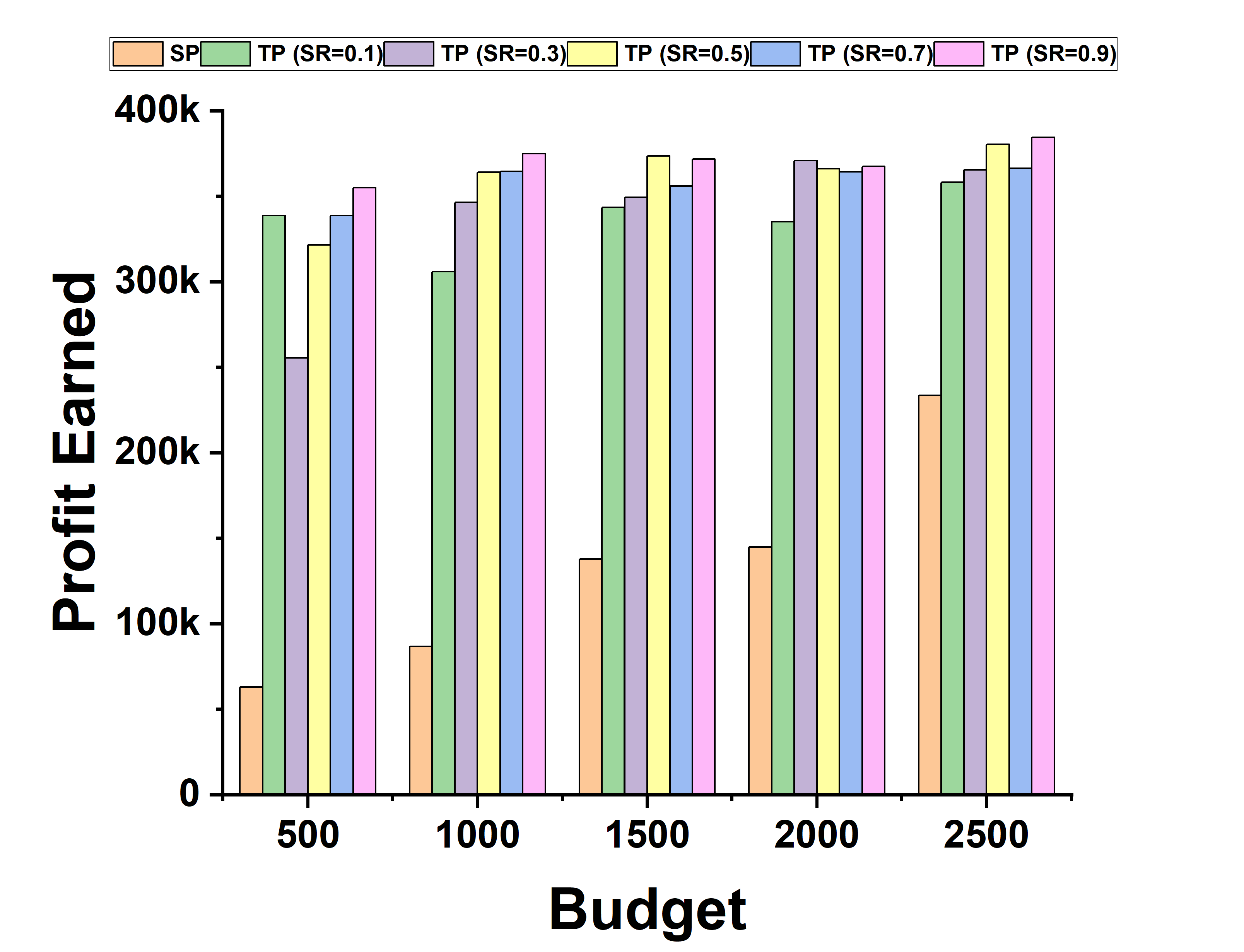}
        \caption{Clustering\\Coefficient}
    \end{subfigure} &
    \begin{subfigure}[t]{0.22\textwidth}
        \includegraphics[width=\linewidth]{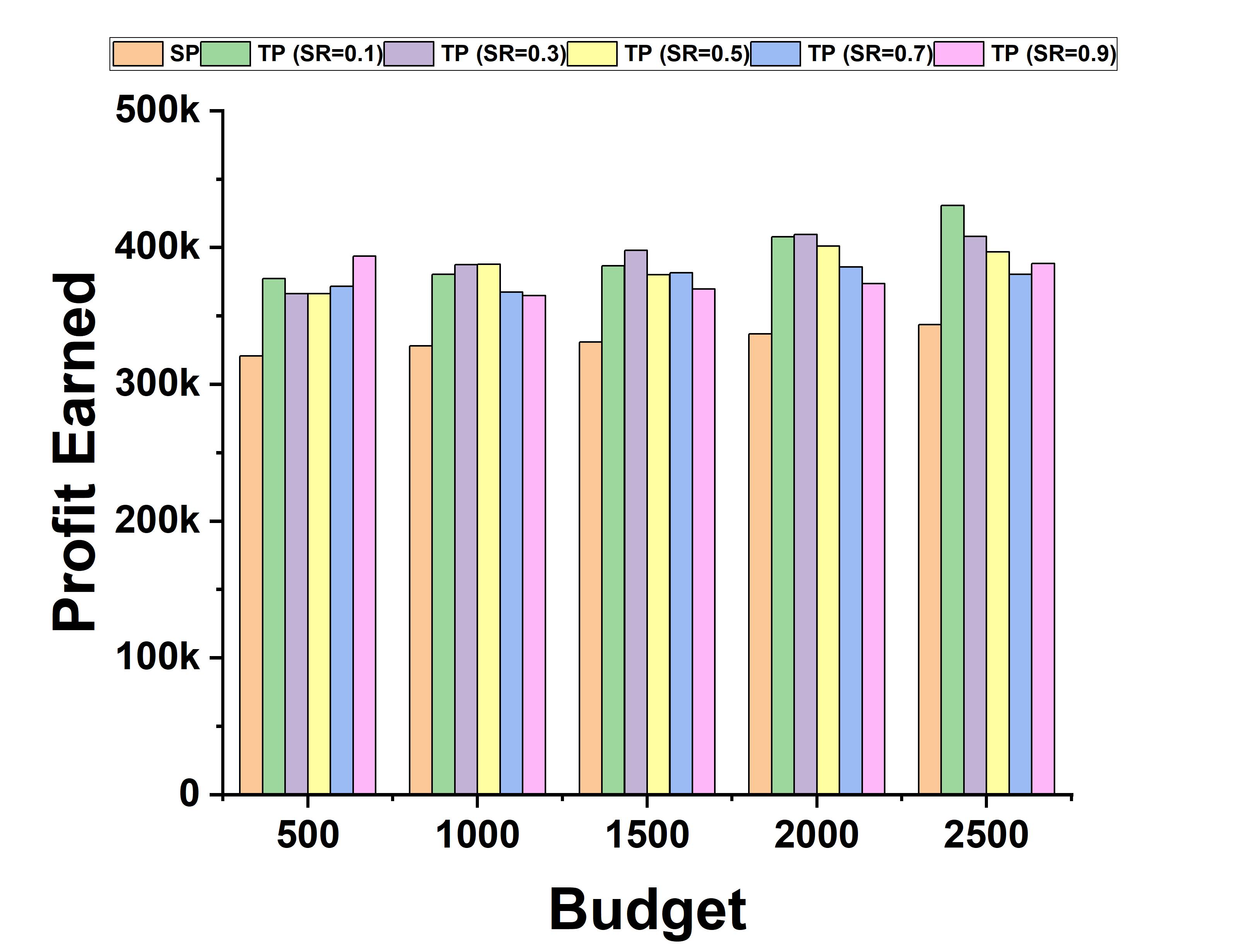}
        \caption{Degree Discount}
    \end{subfigure} \\[6pt]

    \begin{subfigure}[t]{0.22\textwidth}
        \includegraphics[width=\linewidth]{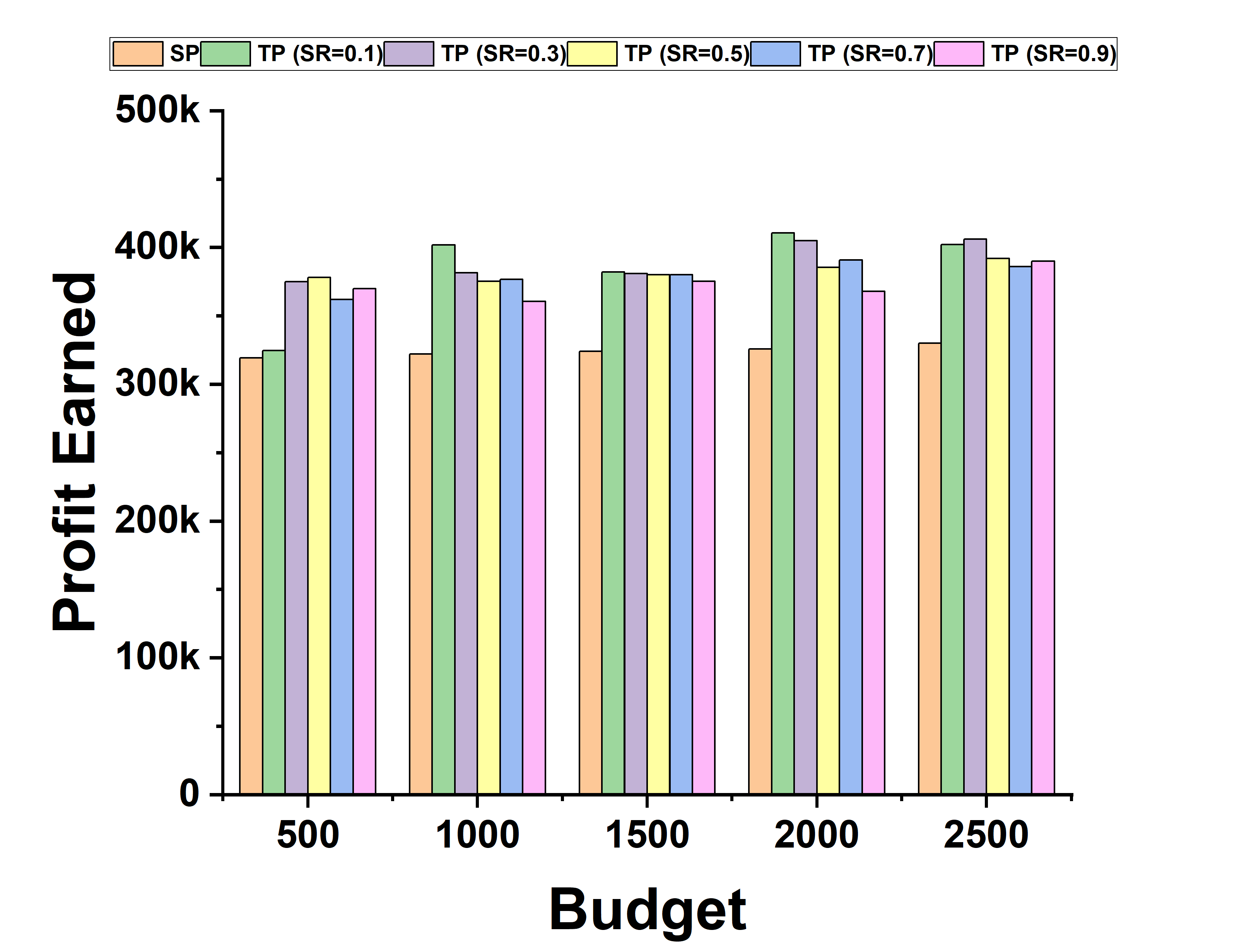}
        \caption{Single Discount}
    \end{subfigure} &
    \begin{subfigure}[t]{0.22\textwidth}
        \includegraphics[width=\linewidth]{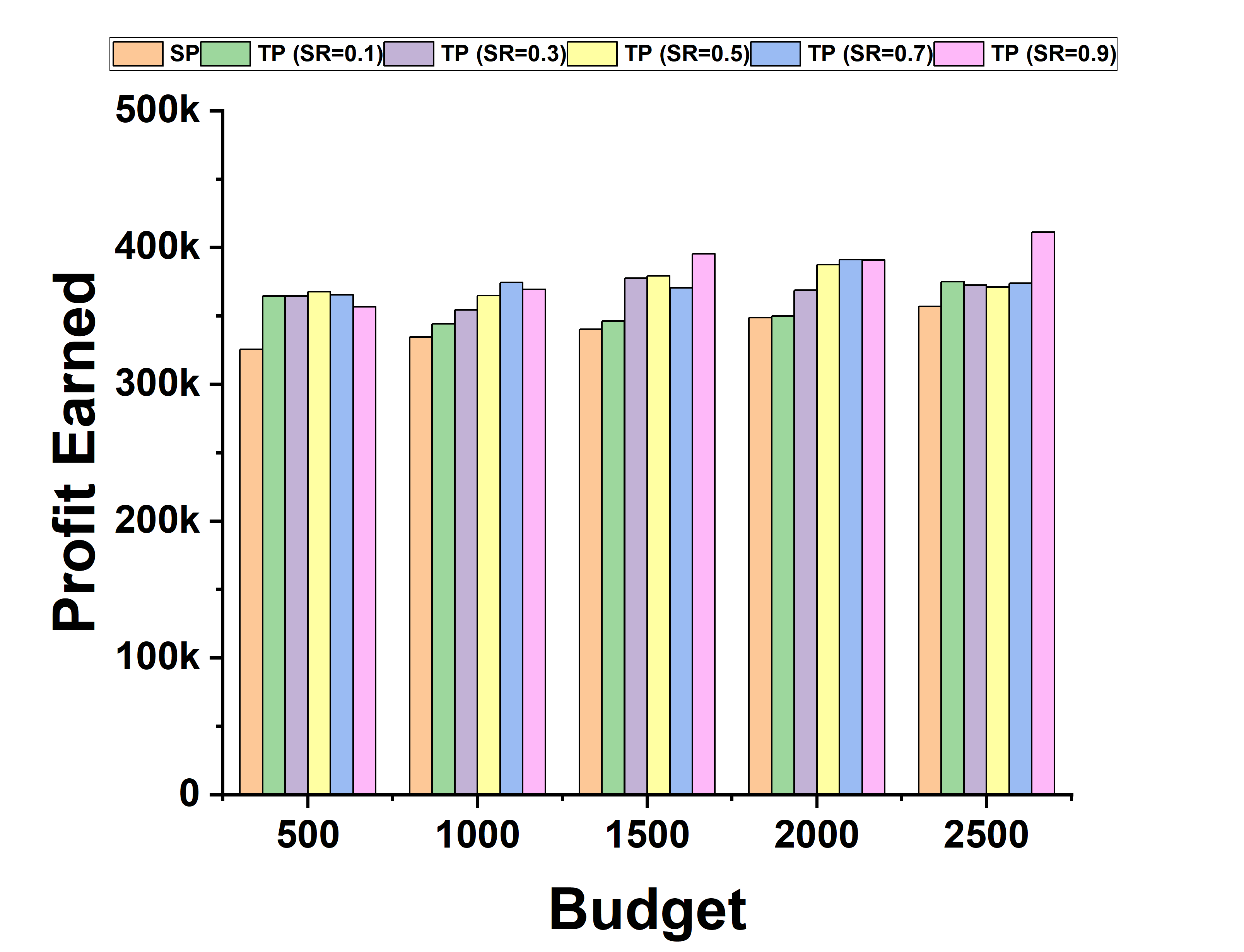}
        \caption{Simple Greedy}
    \end{subfigure} &
    \begin{subfigure}[t]{0.22\textwidth}
        \includegraphics[width=\linewidth]{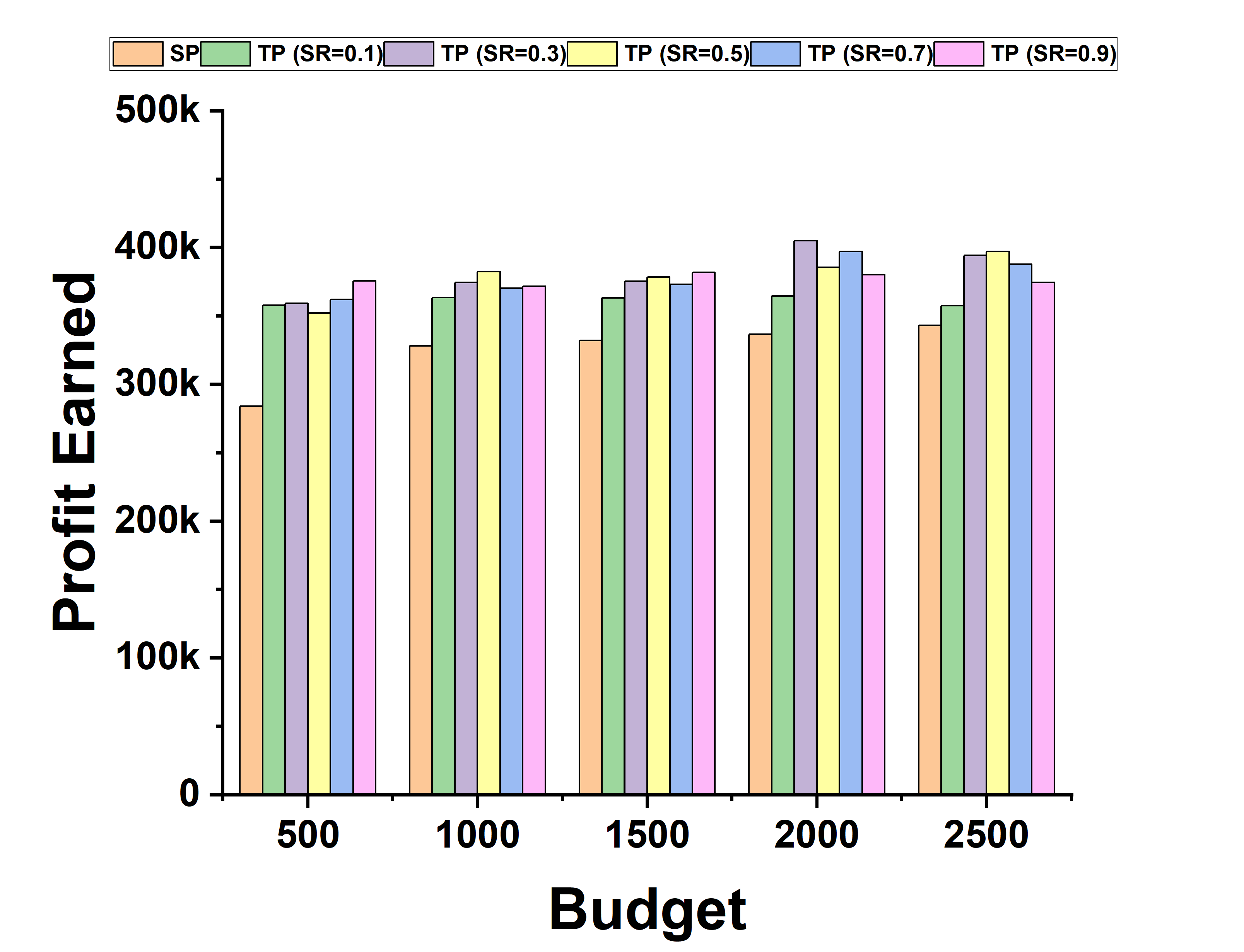}
        \caption{Double Greedy}
    \end{subfigure} &
    \begin{subfigure}[t]{0.22\textwidth}
        \includegraphics[width=\linewidth]{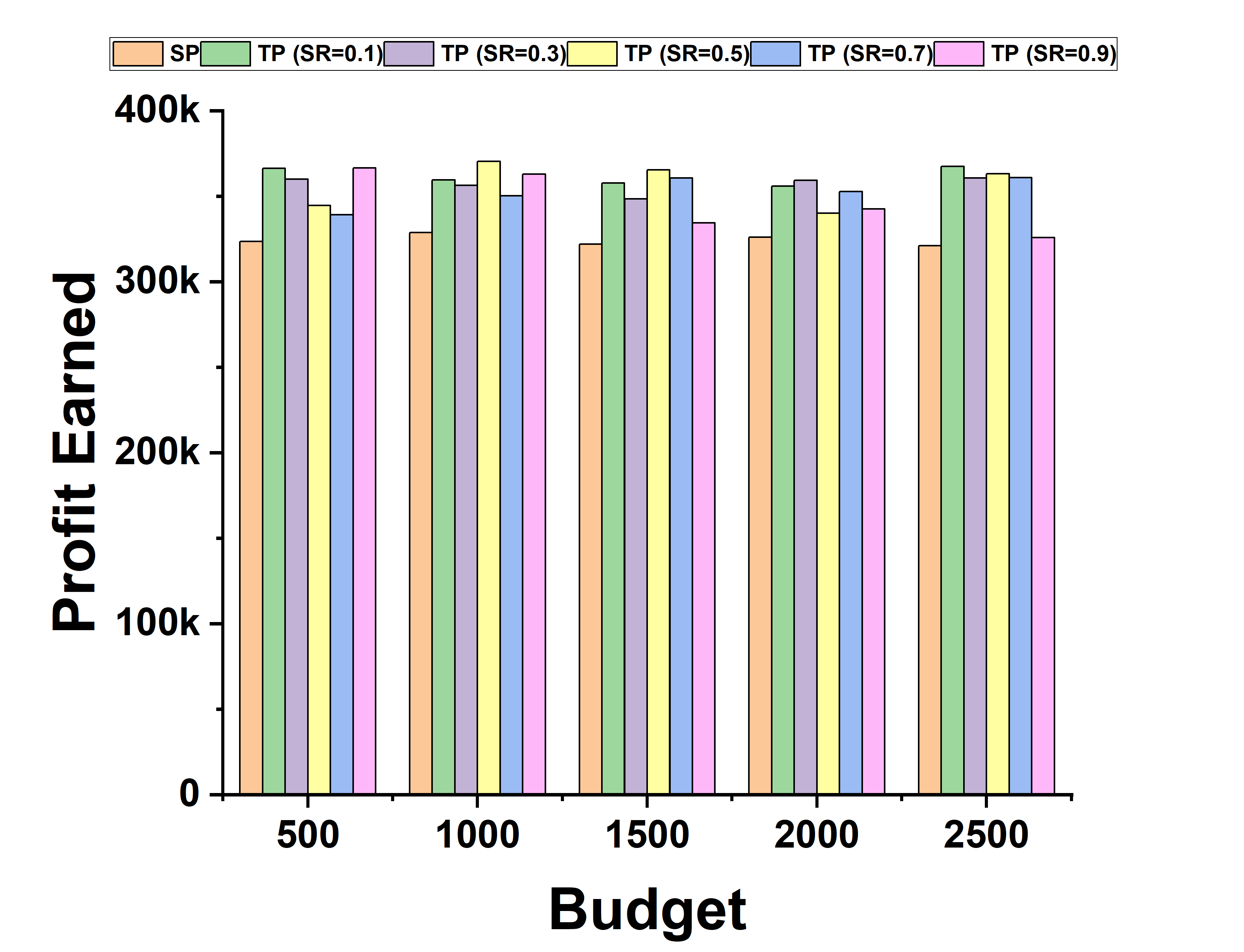}
        \caption{Stochastic Greedy}
    \end{subfigure}
\end{tabular}
\caption{Profit Earned in Single Phase Vs. Two Phase setting (Timestep 10, Probability Setting - Trivalency, \textit{Email-Eu-Core} Dataset)}
\label{Fig:RQ2_T5}
\end{figure}

\subsubsection{Impact of Timestep on Profit Earned in Two Phase Settings}
The results from the \textit{LM} dataset using the trivalency probability setting also show how the timestep—the delay before starting the second phase—impacts the success of the two phase approach. All algorithms follow a consistent pattern: starting the second phase later leads to better profit. Among baseline methods, \textbf{Random} clearly benefits from longer delays. For instance, with budget $500$, split ratio $0.7$, and timestep $10$, Random earned $12328.1$ (Figure~\ref{Fig:RQ1LM_T4}(a)). In comparison, at budget $2500$, split ratio $0.5$, and the same timestep, it reached $35747.20$ (Figure~\ref{Fig:RQ1LM_T3}(a)). \textbf{HD} also improved as timestep increased. It earned $9055.71$ at timestep $8$, budget $500$, and split ratio $0.1$ (Figure~\ref{Fig:RQ1LM_T1}(b)), and reached $35499.96$ at timestep $10$, budget $2500$, and split ratio $0.1$ (Figure~\ref{Fig:RQ1LM_T1}(b)). \textbf{HighCC} showed even bigger improvements. It earned $7016.9$ at timestep $2$, budget $500$, and split ratio $0.1$ (Figure~\ref{Fig:RQ1_T1}(c)). At timestep $8$, budget $2500$, and split ratio $0.5$, CC earned $32353.18$ (Figure~\ref{Fig:RQ1LM_T3}(c)). \textbf{DD} also gained more with longer delays. At timestep $10$, budget $500$, and split ratio $0.5$, \textbf{DD} earned $8861.85$ (Figure~\ref{Fig:RQ1LM_T3}(d)). At timestep $8$, budget $2500$, and split ratio $0.9$, \textbf{DD}’s profit increased to $34613$ (Figure~\ref{Fig:RQ1LM_T5}(d)). \textbf{SD} reached its best profit of $37960.78$ at timestep $8$ (Figure~\ref{Fig:RQ1LM_T5}(e)), confirming that allowing more time for the first phase leads to stronger overall results. Our proposed algorithms follow the same trend. \textbf{DG} showed consistent growth with timestep. For example, it reached its highest profit of $35491.86$ at timestep $4$, budget $2500$, and split ratio $0.3$ (Figure~\ref{Fig:RQ1LM_T2}(g)), which is higher than its single-phase value of $28168.89$. The \textbf{SG} algorithm, which can be more variable, earned $29140$ at timestep $2$. But with a longer delay—at timestep $8$, budget $2500$, and split ratio $0.9$ it improved to $29949$ (Figure~\ref{Fig:RQ1LM_T5}(f)). \textbf{StG0.1} followed the same pattern. At timestep $2$, with budget $500$ and split ratio $0.9$, it earned $8118$. At timestep $8$, using the same budget and split ratio, it reached $10343.97$ (Figure~\ref{Fig:RQ1LM_T5}(h)). To summarize, all results support the idea that delaying the second phase by increasing the timestep leads to better profit. A longer first phase allows influence to spread further before adding new seeds, which improves the overall outcome.

Figures~\ref{Fig:RQ1_T1} to \ref{Fig:RQ1_T5} show how profit changes at different timesteps for the \textit{Email-Eu-Core} dataset under trivalency probability setting. The results clearly show that most algorithms earn more profit when the timestep is increased. For the \textbf{Random} algorithm, profit generally increases as the timestep becomes larger. At timestep $2$, the profit ranges from $276552.58$ to $347820.62$. For example, with budget $500$ and split ratio $0.9$, \textbf{Random} earned $299968.36$ at timestep $2$ (Figure~\ref{Fig:RQ1_T5}(a)). When the same budget and split ratio were used at timestep $10$, the profit increased to $351635$. The \textbf{HD} algorithm also shows a strong increase in profit with higher timesteps. At timestep $2$, profit values are between $163208.12$ and $299453.12$, but this rises to between $354999$ and $381091.27$ at timestep $10$ (Figure~\ref{Fig:RQ1_T1}(b)). The \textbf{HighCC} algorithm also benefits from longer timesteps. For example, at timestep $2$, budget $500$, and split ratio $0.9$, \textbf{HighCC} earned only $25554.23$ (Figure~\ref{Fig:RQ1_T5}(c)). But at timestep $10$ with the same settings, profit increased to $354839$. \textbf{DD} also performs better with more time. At timestep $2$, budget $500$, and split ratio $0.9$, \textbf{DD} earned $275871.65$ (Figure~\ref{Fig:RQ1_T5}(d)). At timestep $10$, profit increased to $393677$. The \textbf{DG} algorithm shows consistent profit growth with timestep. For example, at timestep $2$, with budget $500$ and split ratio $0.1$, \textbf{DG} earned $328196.79$ (Figure~\ref{Fig:RQ1_T1}(g)). At timestep $10$, using budget $500$ and split ratio $0.9$, it earned $375395.27$. \textbf{StG0.1} also shows a clear increase in profit with higher timesteps. At timestep $2$, with budget $500$ and split ratio $0.9$, the profit was $321930.27$ (Figure~\ref{Fig:RQ1_T5}(f)), and it rose to $366407.54$ at timestep $10$ with the same settings. Overall, increasing the timestep in the two phase setting has a strong and consistent positive effect on profit for nearly all algorithms. More time for diffusion in the first phase allows greater spread, leading to better final results. \textbf{StG0.1} is especially notable for showing both steady gains and reliable outcomes as timestep increases.

\subsubsection{Comparison of Seed Set Sizes Between Single and Two phase Strategies}
For the \textit{LM} dataset under the trivalency probability setting, the impact of two phase settings on seed set size varied across algorithms. Baseline methods such as \textbf{Random} and \textbf{HD} showed small increases in seed count. For example, with budget $500$, split ratio $0.1$, and timestep $2$, \textbf{Random} increased its seed size from $6$ to $8$ (Figure~\ref{RQ4LM_T1}(a)). Similarly, \textbf{HD}’s size increased from $6$ to $7$ under all tested timesteps for the same budget, as seen in Figure~\ref{RQ4LM_T2}. \textbf{HighCC}, \textbf{DD}, and \textbf{SD} remained stable. \textbf{HighCC} showed almost no change except for one case where the seed size increased by $1$ node at budget $1500$, split ratio $0.3$, and timestep $2$ (Figure~\ref{RQ4LM_T3}(a)). \textbf{DD} and \textbf{SD} followed the same trend (Figures~\ref{RQ4LM_T4} and \ref{RQ4LM_T5}), with changes limited to $1$ or $2$ additional nodes. In contrast, our proposed algorithms—especially \textbf{SG} and \textbf{StG0.1}—frequently selected smaller but more effective seed sets. For instance, \textbf{SG} reduced the seed set by $3$ nodes at budget $2500$, split ratio $0.9$, and timestep $2$, while still earning higher profit (Figure~\ref{RQ4LM_T6}(a)). Similarly, \textbf{StG0.1} selected $1$ fewer node at budget $500$, split ratio $0.9$, and timestep $8$ (Figure~\ref{RQ4LM_T8}(d)) and still performed better. \textbf{DG} showed mixed behavior. At lower budgets, it sometimes selected fewer or equal nodes. For example, at budget $500$, split ratio $0.5$, and timestep $2$, \textbf{DG} reduced the seed set by $1$ (Figure~\ref{RQ4LM_T7}(a)). However, at budget $1500$, split ratio $0.1$, and timestep $2$, it increased the seed count by $3$ nodes (Figure~\ref{RQ4LM_T7}(a)). In summary, the effect of the two phase setting on seed set size varies by algorithm. Baseline methods tend to keep seed size the same or slightly increase it, while proposed methods like \textbf{SG} and \textbf{StG0.1} often identify smaller, more efficient seed sets that still result in higher profit.

For the \textit{Email-Eu-Core} dataset, Figures~\ref{RQ4_T1} to \ref{RQ4_T8} compare the number of seed nodes selected in the single-phase and two phase settings under trivalency probability setting. For the \textbf{Random} algorithm, the seed set size in the two phase setting ranged from $6$ to $9$ at a budget of $500$. For example, it selected $7$ nodes with a split ratio of $0.1$ and timestep $2$ (Figure~\ref{RQ4_T1}(a)), compared to $6$ in the single-phase. At a higher budget of $2500$, the two phase seed set ranged from $32$ to $36$ nodes. One instance shows a size of $36$ at split ratio $0.5$ and timestep $4$ (Figure~\ref{RQ4_T1}(b)), slightly higher than the single-phase size of $33$. For \textbf{HD}, the two phase sizes were also slightly higher. At budget $500$, seed set sizes ranged from $6$ to $9$. One configuration with split ratio $0.1$ and timestep $2$ gave a size of $7$, compared to $6$ in the single-phase (Figure~\ref{RQ4_T2}(a)). At budget $2500$, sizes reached up to $36$, with $36$ nodes observed at split ratio $0.7$ and timestep $10$ (Figure~\ref{RQ4_T2}(e)), compared to a single-phase size of $33$. \textbf{HighCC} usually showed similar or slightly larger two phase seed sizes. At budget $500$, the size ranged between $9$ and $11$. For example, Figure~\ref{RQ4_T3}(a) shows a two phase size of $9$ at split ratio $0.1$ and timestep $2$, matching the single-phase value. Another configuration had a size of $10$ at split ratio $0.5$ and timestep $2$. At budget $2500$, the sizes ranged from $37$ to $43$. A size of $42$ was seen with split ratio $0.9$ and timestep $10$ (Figure~\ref{RQ4_T3}(e)). For \textbf{DD}, the two phase seed sizes were typically close to or slightly above single-phase values. At budget $500$, sizes ranged from $6$ to $9$. One case had $6$ nodes at split ratio $0.1$ and timestep $2$ (Figure~\ref{RQ4_T4}(a)), matching the single-phase size. In another case, $9$ nodes were selected at split ratio $0.9$ and timestep $10$ (Figure~\ref{RQ4_T4}(e)). \textbf{SD} followed a similar pattern. At budget $500$, seed set sizes in two phase ranged from $6$ to $9$. One instance (Figure~\ref{RQ4_T5}(a)) had a size of $6$ with split ratio $0.1$ and timestep $2$, same as the single-phase size. Another example showed $9$ nodes selected with split ratio $0.5$ and timestep $10$ (Figure~\ref{RQ4_T5}(e)). \textbf{SG} generally chose slightly more seed nodes in two phase settings. At a budget of $2500$, seed sets ranged from $43$ to $54$ nodes. For example, a size of $44$ was recorded at split ratio $0.1$ and timestep $2$, which matched the single-phase size, while $54$ nodes were chosen at split ratio $0.7$ and timestep $8$ (Figure~\ref{RQ4_T6}(d)). \textbf{DG} usually showed similar or slightly higher sizes in two phase. At budget $500$, sizes varied between $7$ and $10$. For instance, $7$ nodes were selected with split ratio $0.1$ and timestep $2$, same as single-phase (Figure~\ref{RQ4_T7}(a)), while another setting recorded $10$ nodes at split ratio $0.5$ and timestep $8$ (Figure~\ref{RQ4_T7}(d)). In contrast, \textbf{StG0.1} often had smaller seed sets in the two phase setup, especially at higher budgets. At budget $2500$, the size was between $3$ and $4$. In one instance with split ratio $0.1$ and timestep $2$, it selected $3$ nodes (Figure~\ref{RQ4_T8}(a)), fewer than the single-phase size of $4$. Overall, most algorithms showed similar or slightly larger seed sets in the two phase setting compared to the single-phase, with some exceptions like \textbf{StG0.1} showing reductions.


\begin{figure}[htbp]
    \centering
    \begin{subfigure}[t]{0.3\linewidth}
        \centering
        \includegraphics[width=\linewidth]{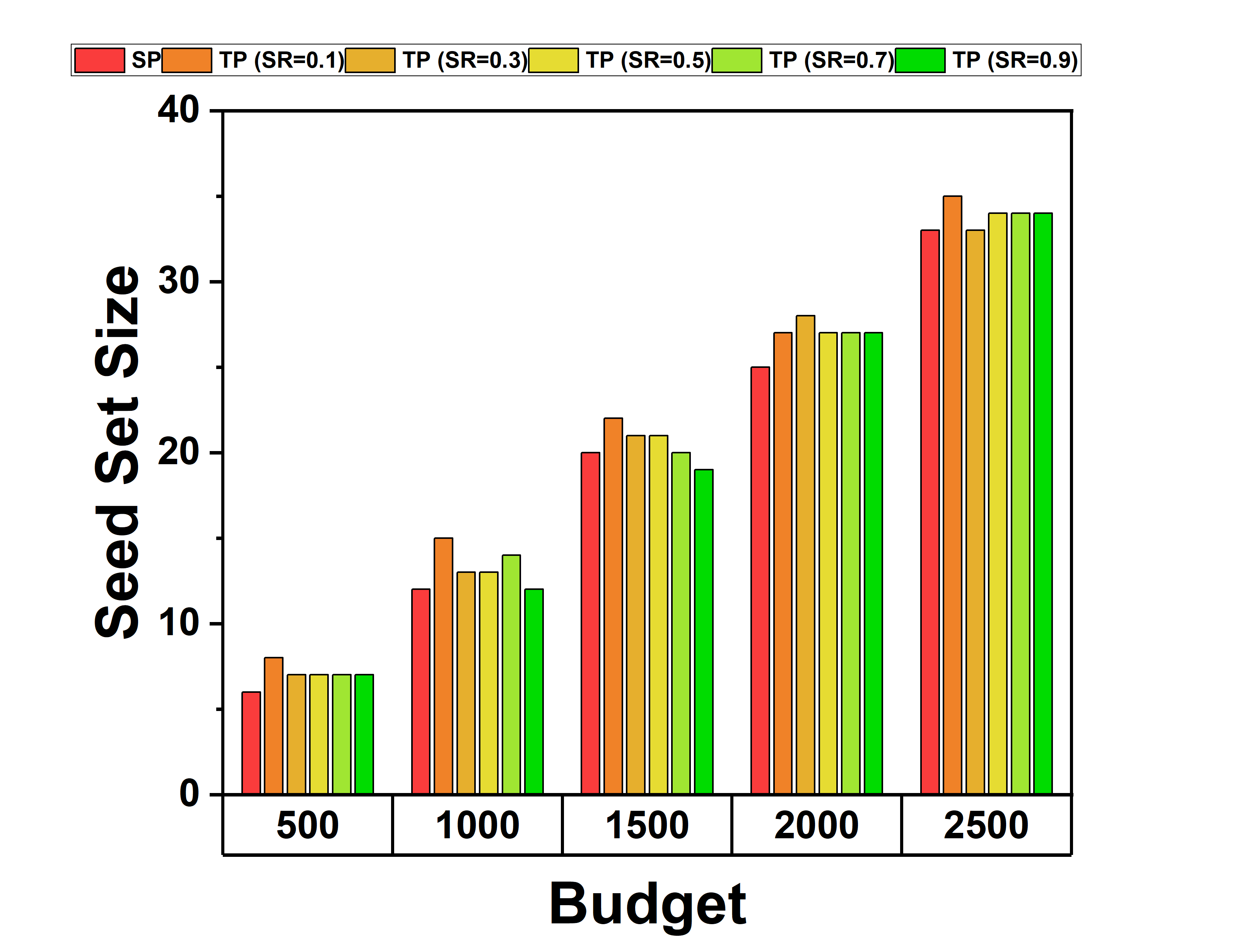}
        \caption{Timestep 2}
    \end{subfigure}
    \hspace{0.05\linewidth}
    \begin{subfigure}[t]{0.3\linewidth}
        \centering
        \includegraphics[width=\linewidth]{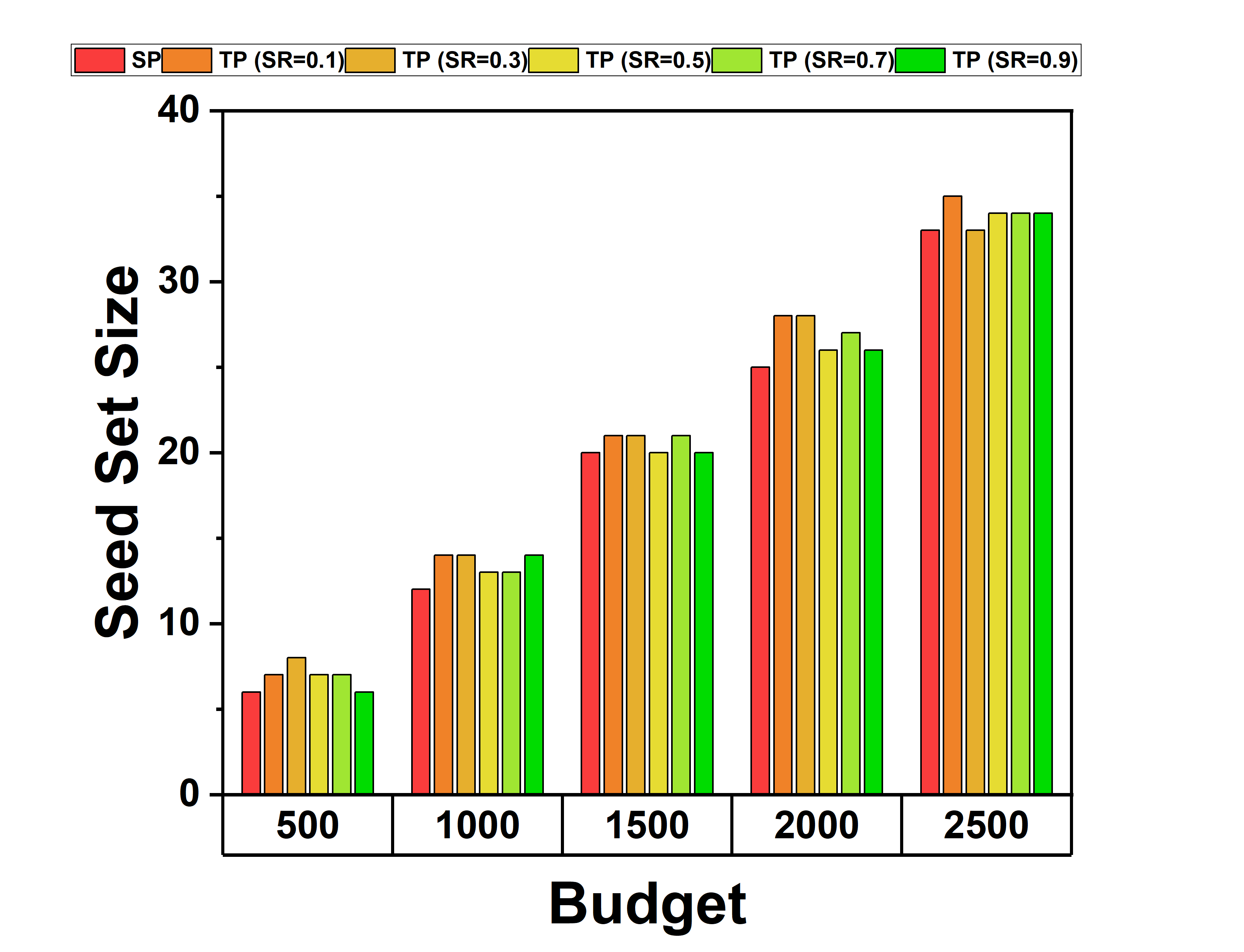}
        \caption{Timestep 4}
    \end{subfigure}

    \vspace{0.5cm}

    \begin{subfigure}[t]{0.3\linewidth}
        \centering
        \includegraphics[width=\linewidth]{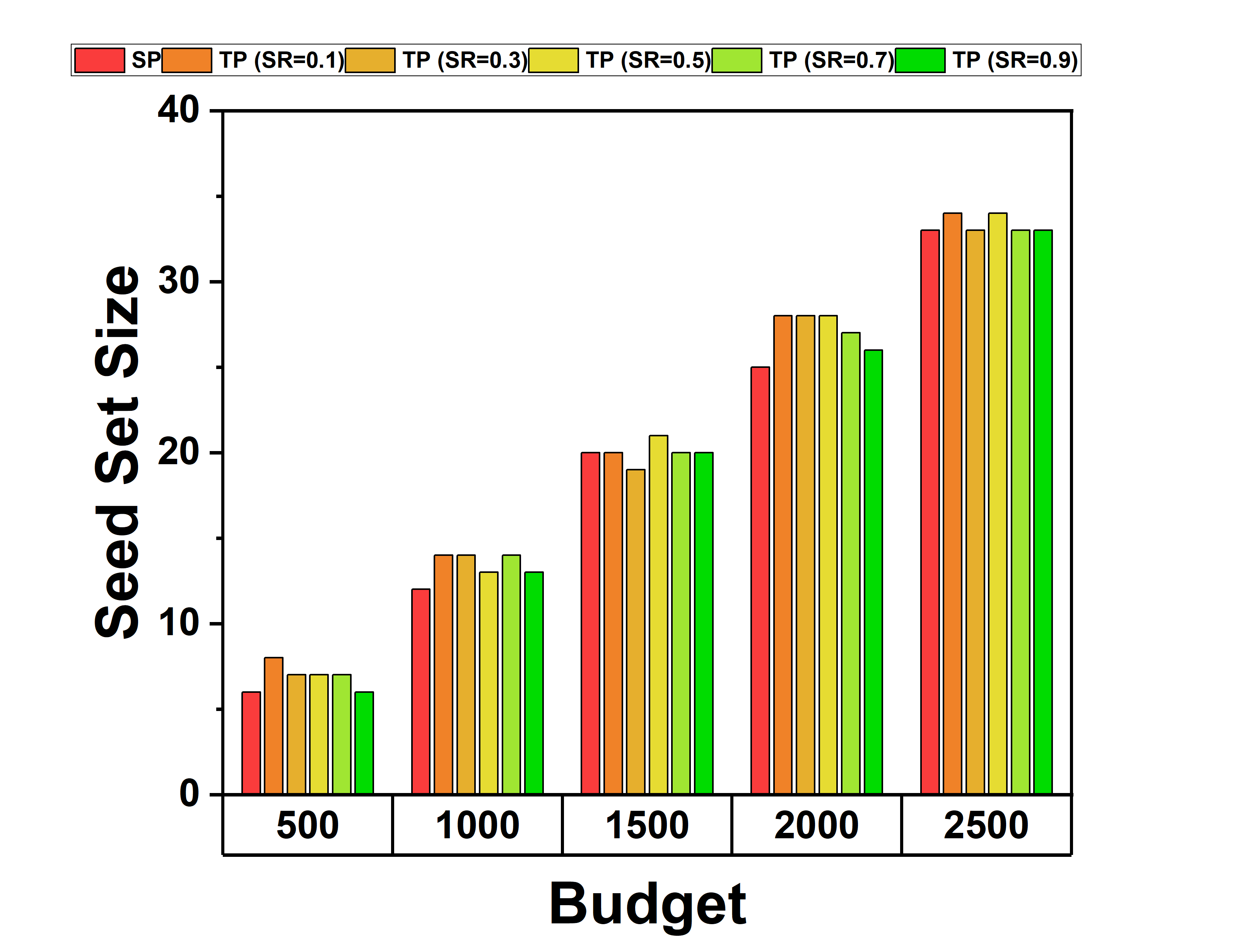}
        \caption{Timestep 6}
    \end{subfigure}
    \hfill
    \begin{subfigure}[t]{0.3\linewidth}
        \centering
        \includegraphics[width=\linewidth]{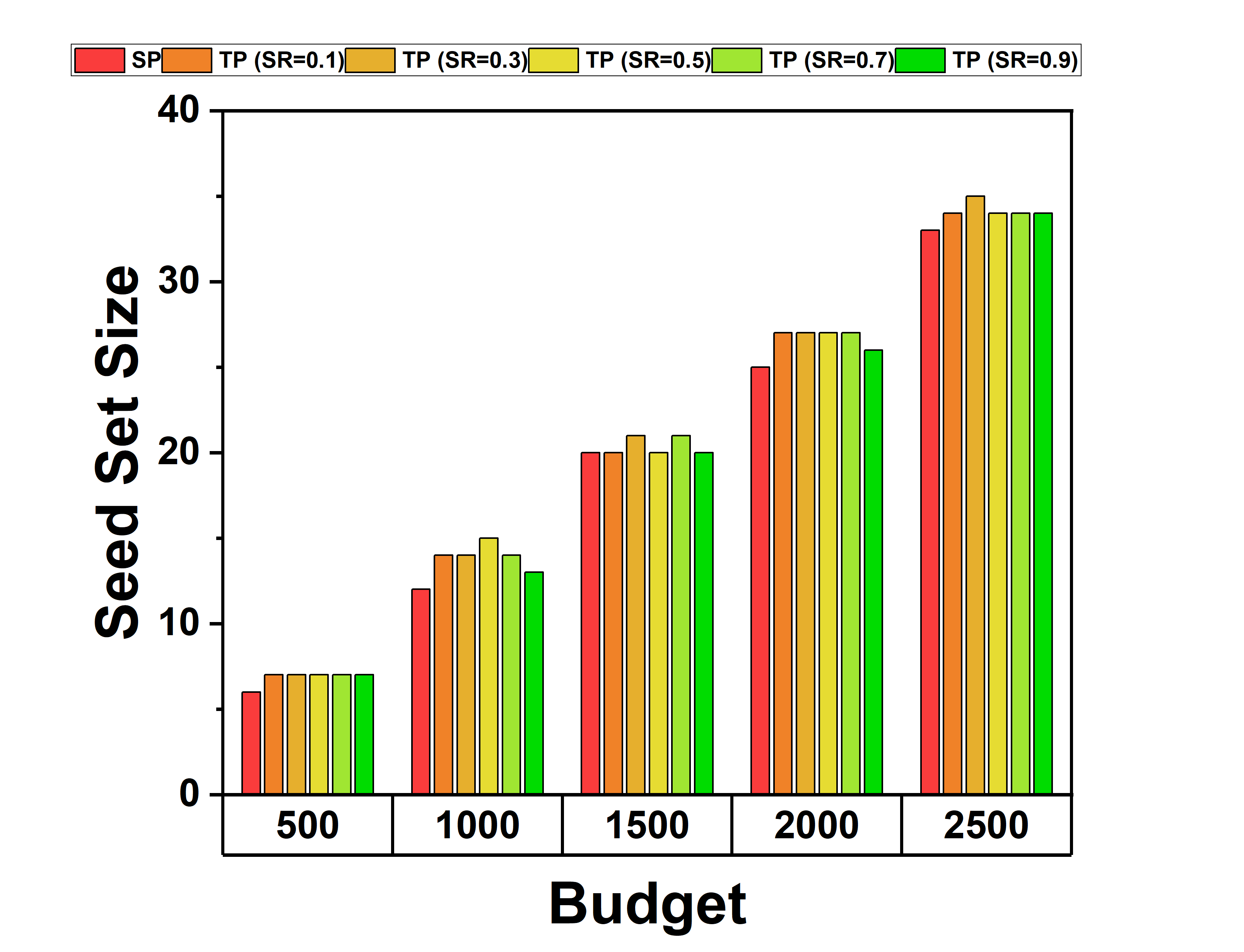}
        \caption{Timestep 8}
    \end{subfigure}
    \hfill
    \begin{subfigure}[t]{0.3\linewidth}
        \centering
        \includegraphics[width=\linewidth]{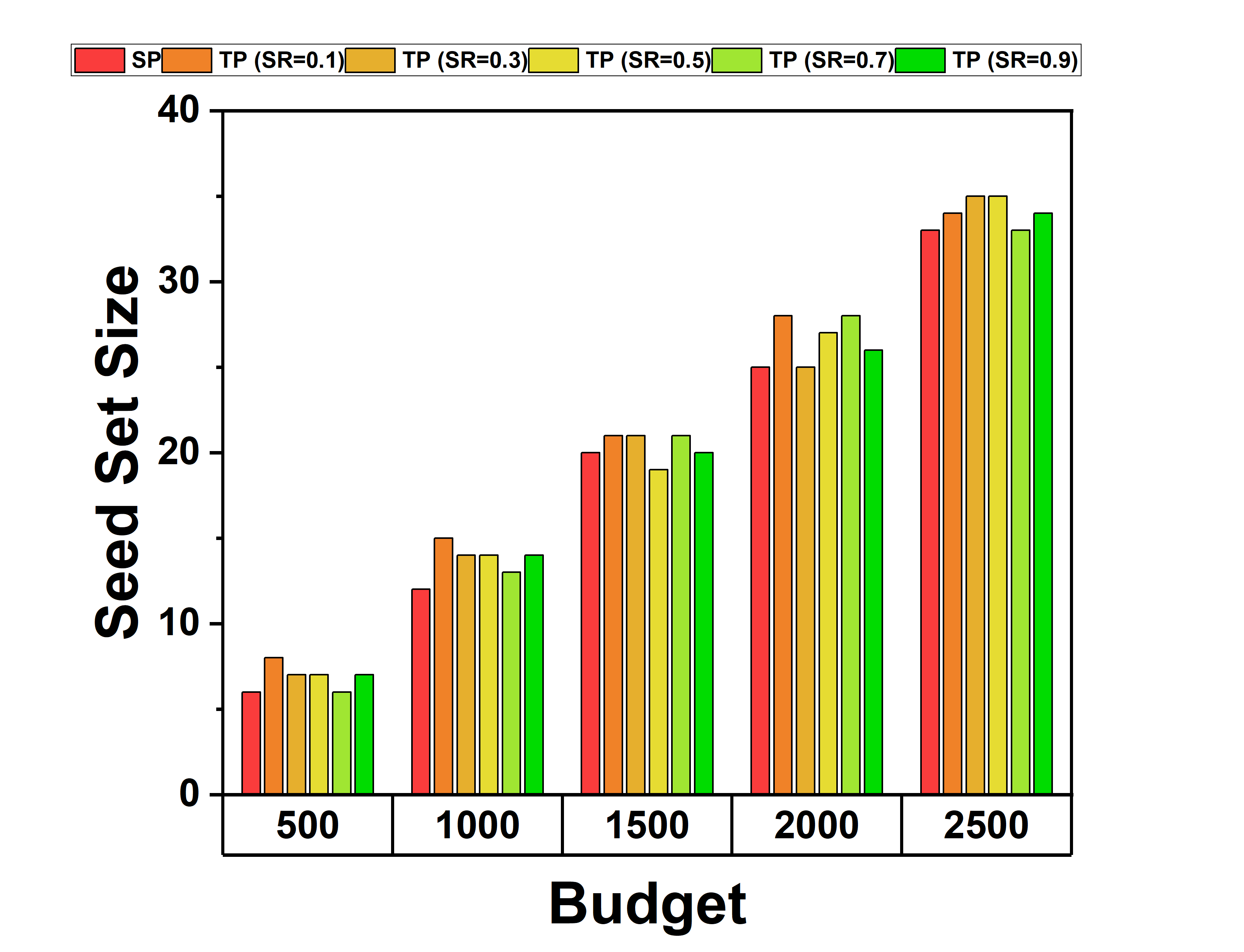}
        \caption{Timestep 10}
    \end{subfigure}

    \caption{Seed Set Size Distribution of Single Phase Vs. Two Phase (Random Algorithm, \textit{LM} Dataset, Probability Setting - Trivalency)}
    \label{RQ4LM_T1}
\end{figure}

\begin{figure}[htbp]
    \centering
    \begin{subfigure}[t]{0.3\linewidth}
        \centering
        \includegraphics[width=\linewidth]{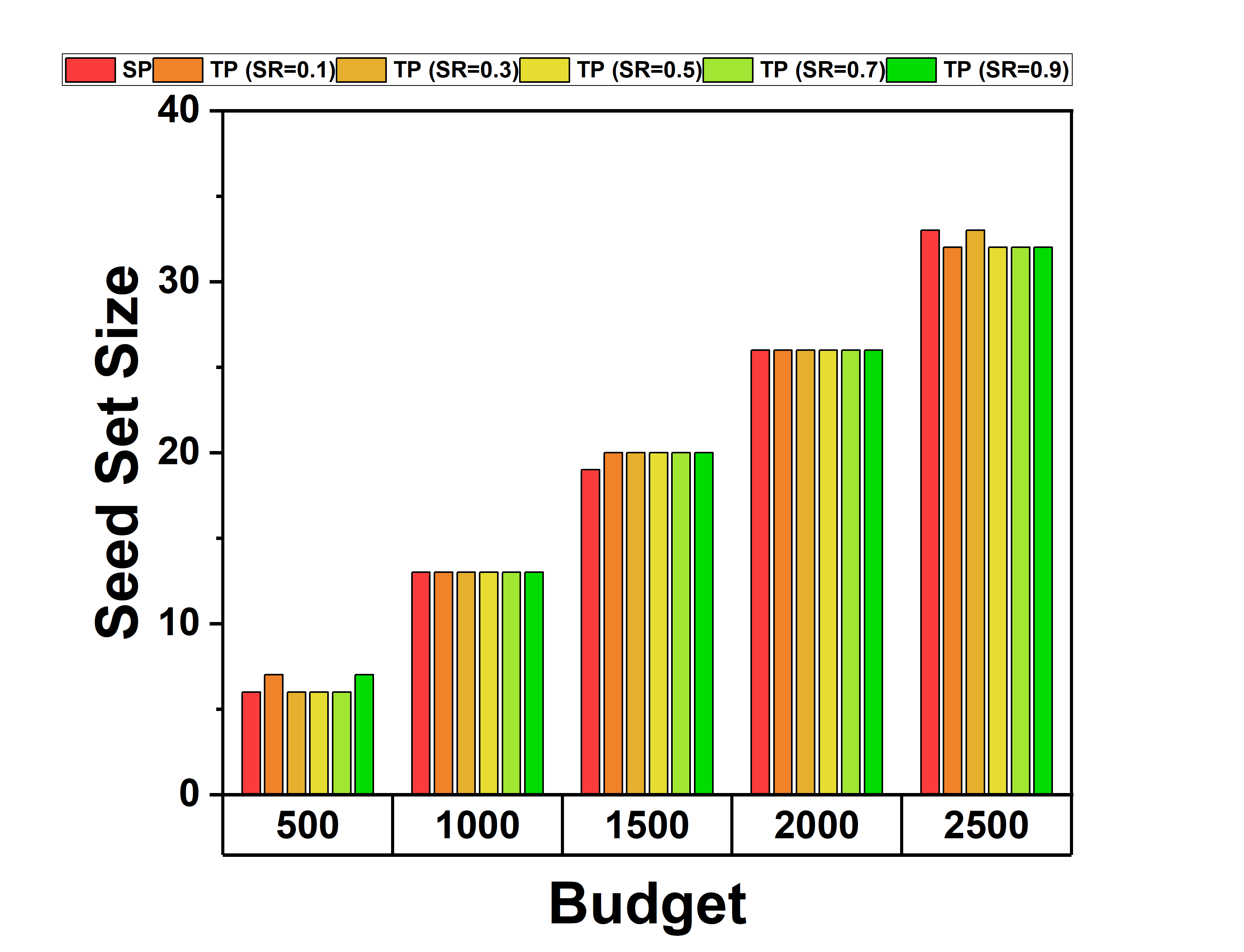}
        \caption{Timestep 2}
    \end{subfigure}
    \hspace{0.05\linewidth}
    \begin{subfigure}[t]{0.3\linewidth}
        \centering
        \includegraphics[width=\linewidth]{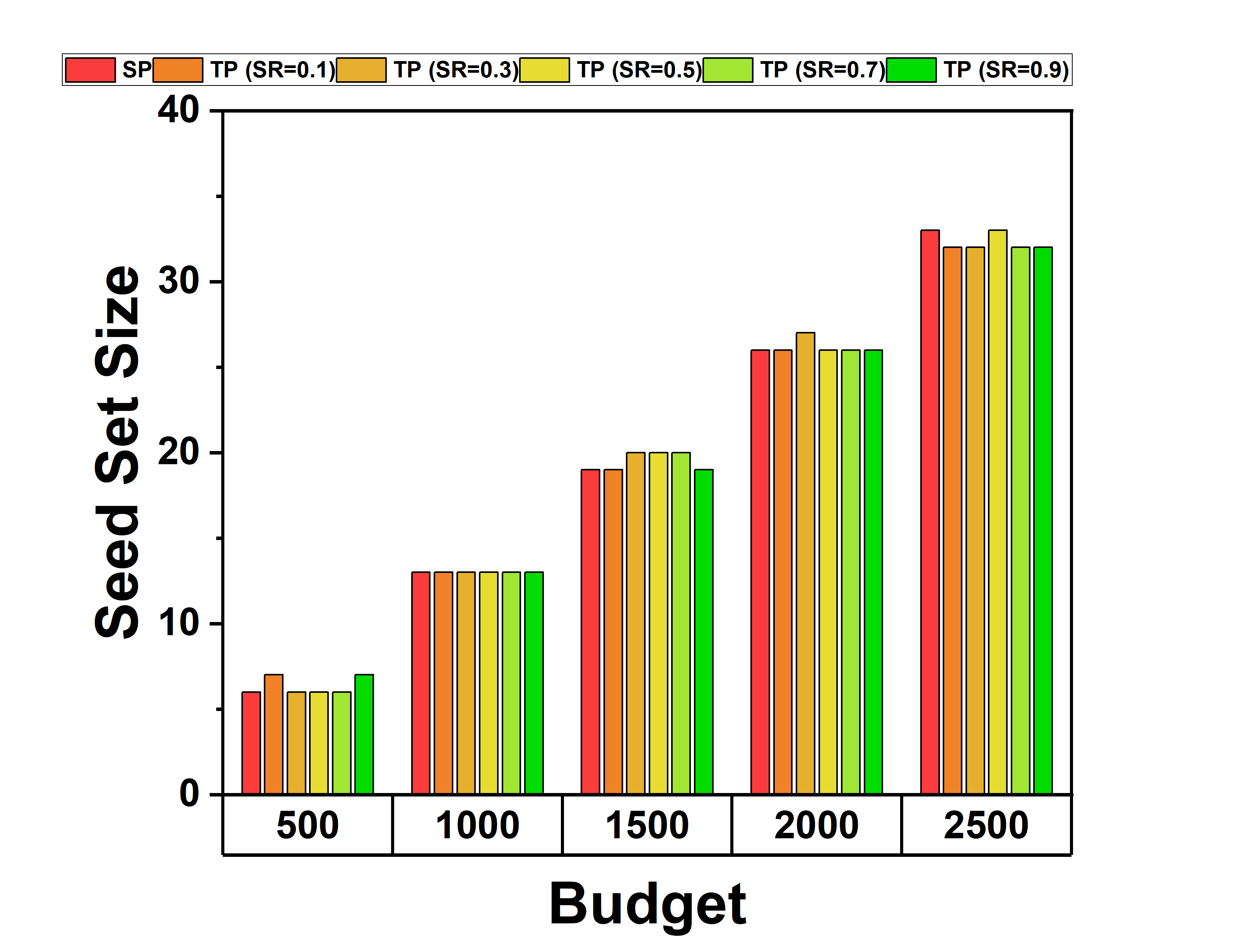}
        \caption{Timestep 4}
    \end{subfigure}

    \vspace{0.5cm}

    \begin{subfigure}[t]{0.3\linewidth}
        \centering
        \includegraphics[width=\linewidth]{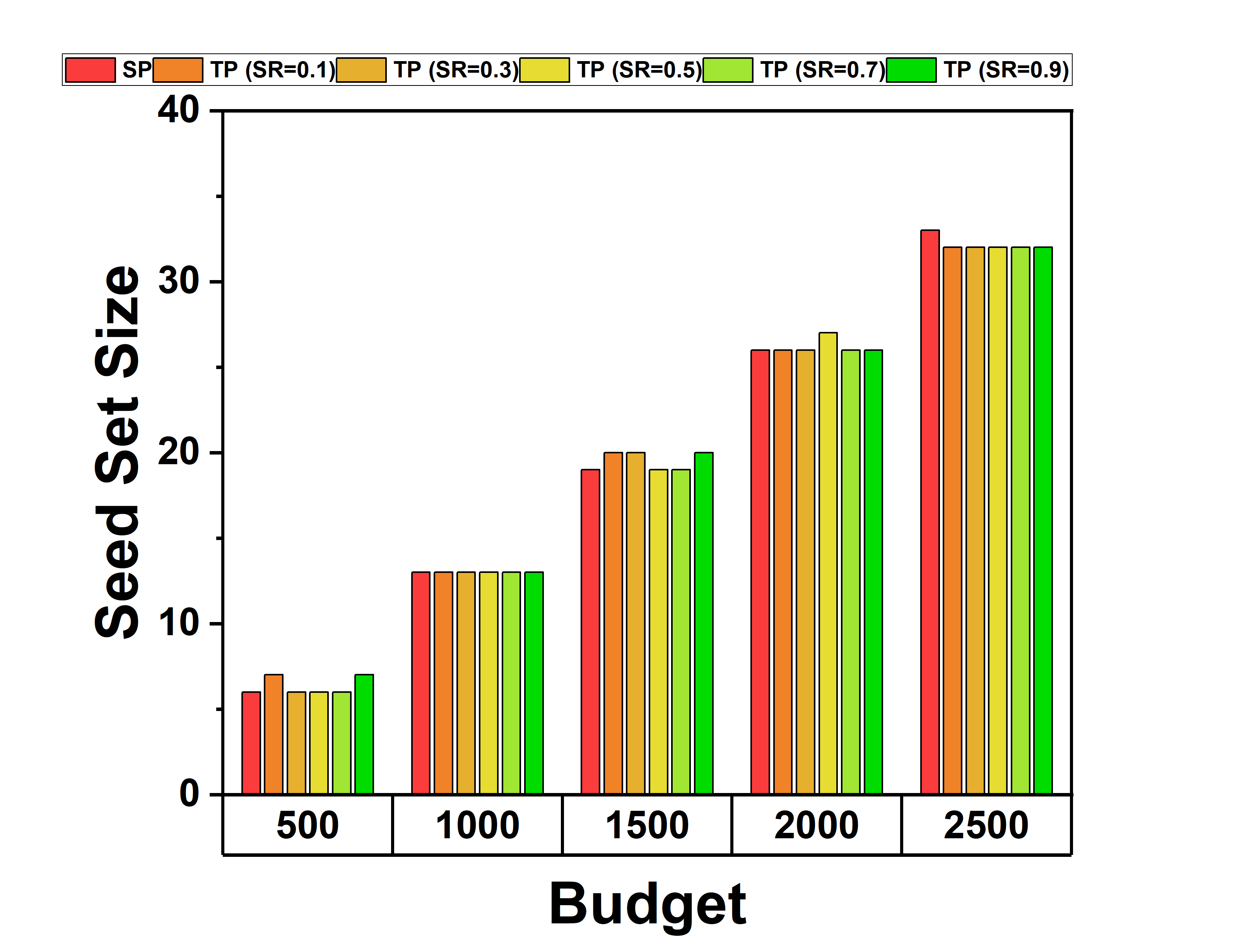}
        \caption{Timestep 6}
    \end{subfigure}
    \hfill
    \begin{subfigure}[t]{0.3\linewidth}
        \centering
        \includegraphics[width=\linewidth]{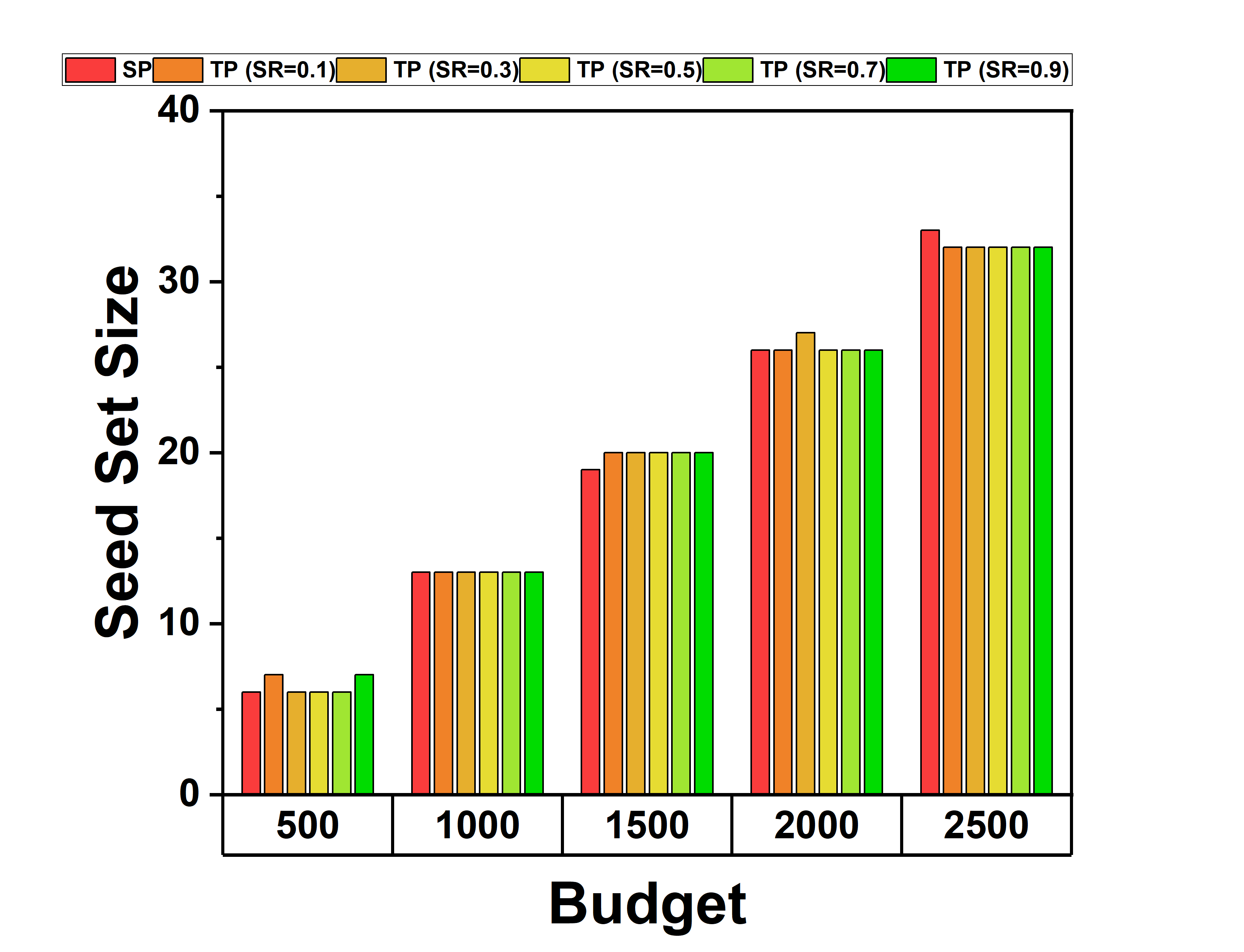}
        \caption{Timestep 8}
    \end{subfigure}
    \hfill
    \begin{subfigure}[t]{0.3\linewidth}
        \centering
        \includegraphics[width=\linewidth]{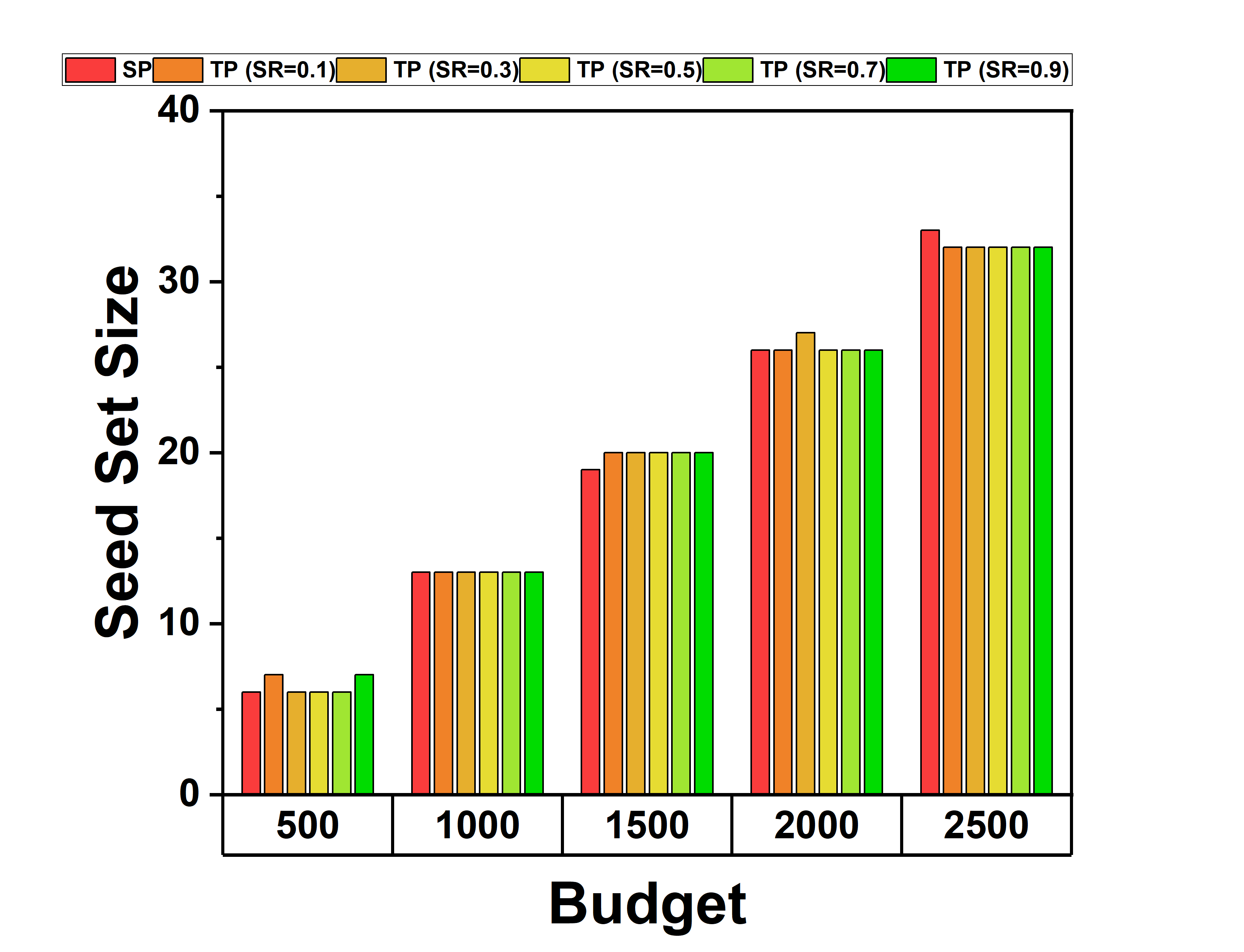}
        \caption{Timestep 10}
    \end{subfigure}

    \caption{Seed Set Size Distribution of Single Phase Vs. Two Phase (High Degree Algorithm, \textit{LM} Dataset, Probability Setting - Trivalency)}
    \label{RQ4LM_T2}
\end{figure}

\begin{figure}[htbp]
    \centering
    \begin{subfigure}[t]{0.3\linewidth}
        \centering
        \includegraphics[width=\linewidth]{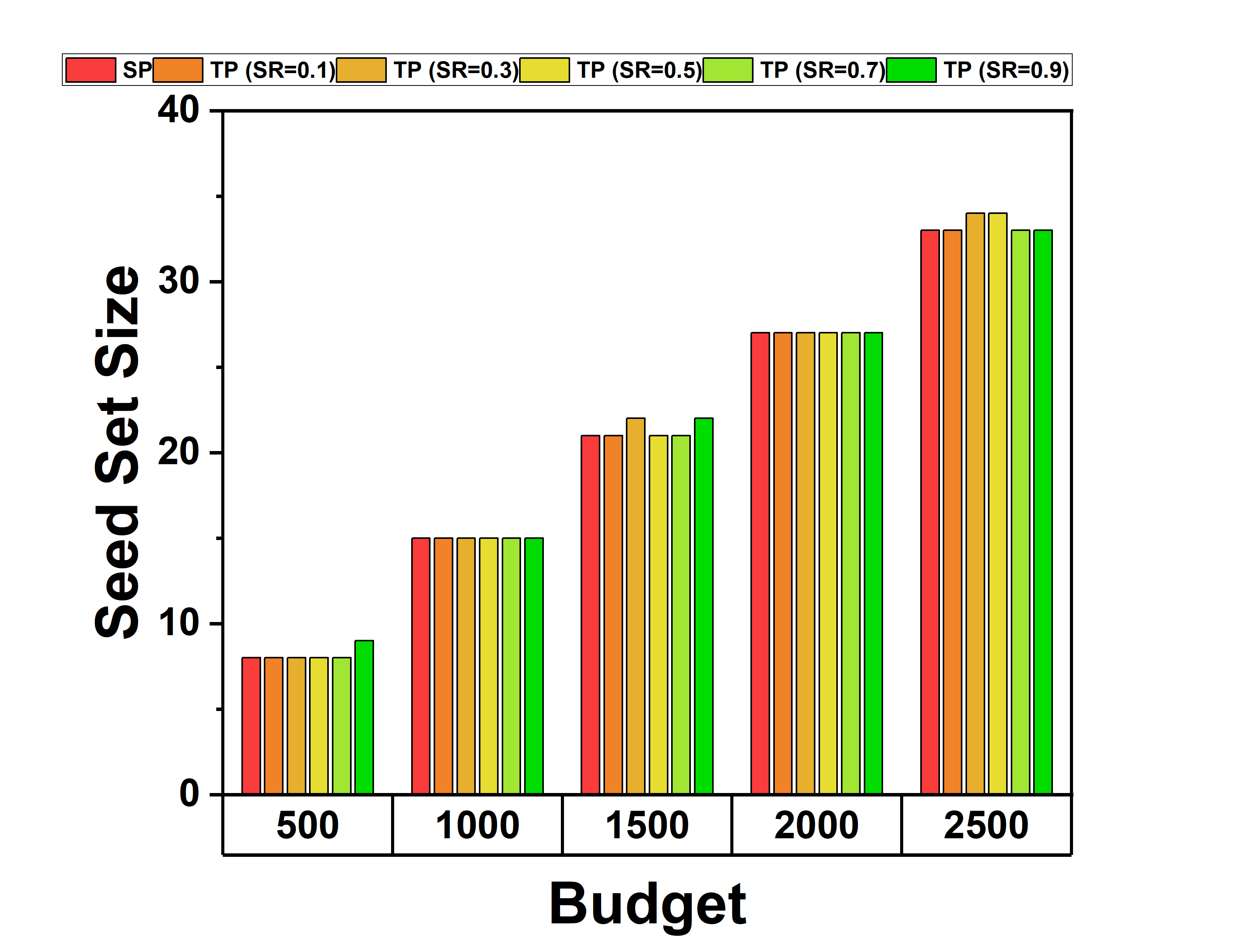}
        \caption{Timestep 2}
    \end{subfigure}
    \hspace{0.05\linewidth}
    \begin{subfigure}[t]{0.3\linewidth}
        \centering
        \includegraphics[width=\linewidth]{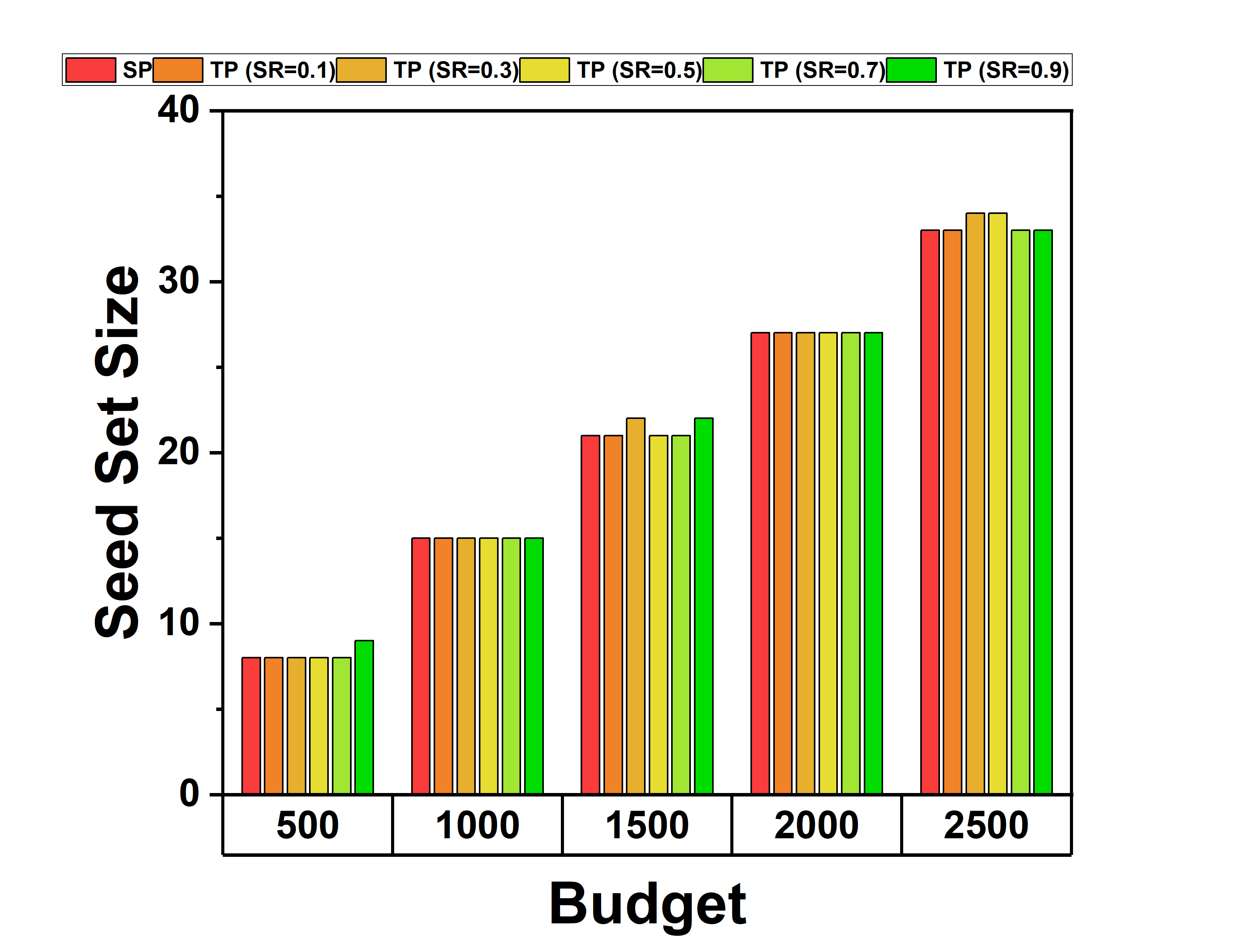}
        \caption{Timestep 4}
    \end{subfigure}

    \vspace{0.5cm}

    \begin{subfigure}[t]{0.3\linewidth}
        \centering
        \includegraphics[width=\linewidth]{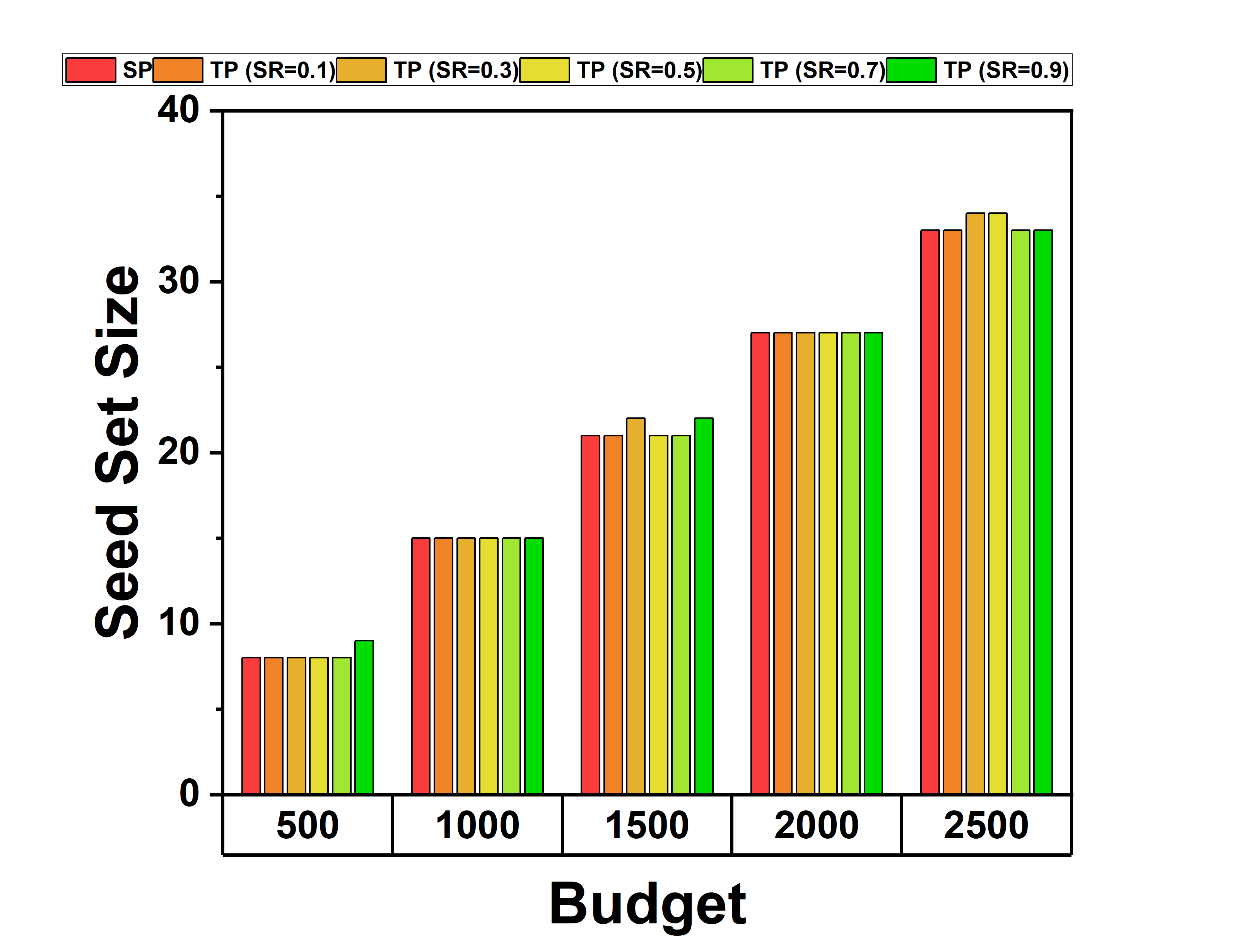}
        \caption{Timestep 6}
    \end{subfigure}
    \hfill
    \begin{subfigure}[t]{0.3\linewidth}
        \centering
        \includegraphics[width=\linewidth]{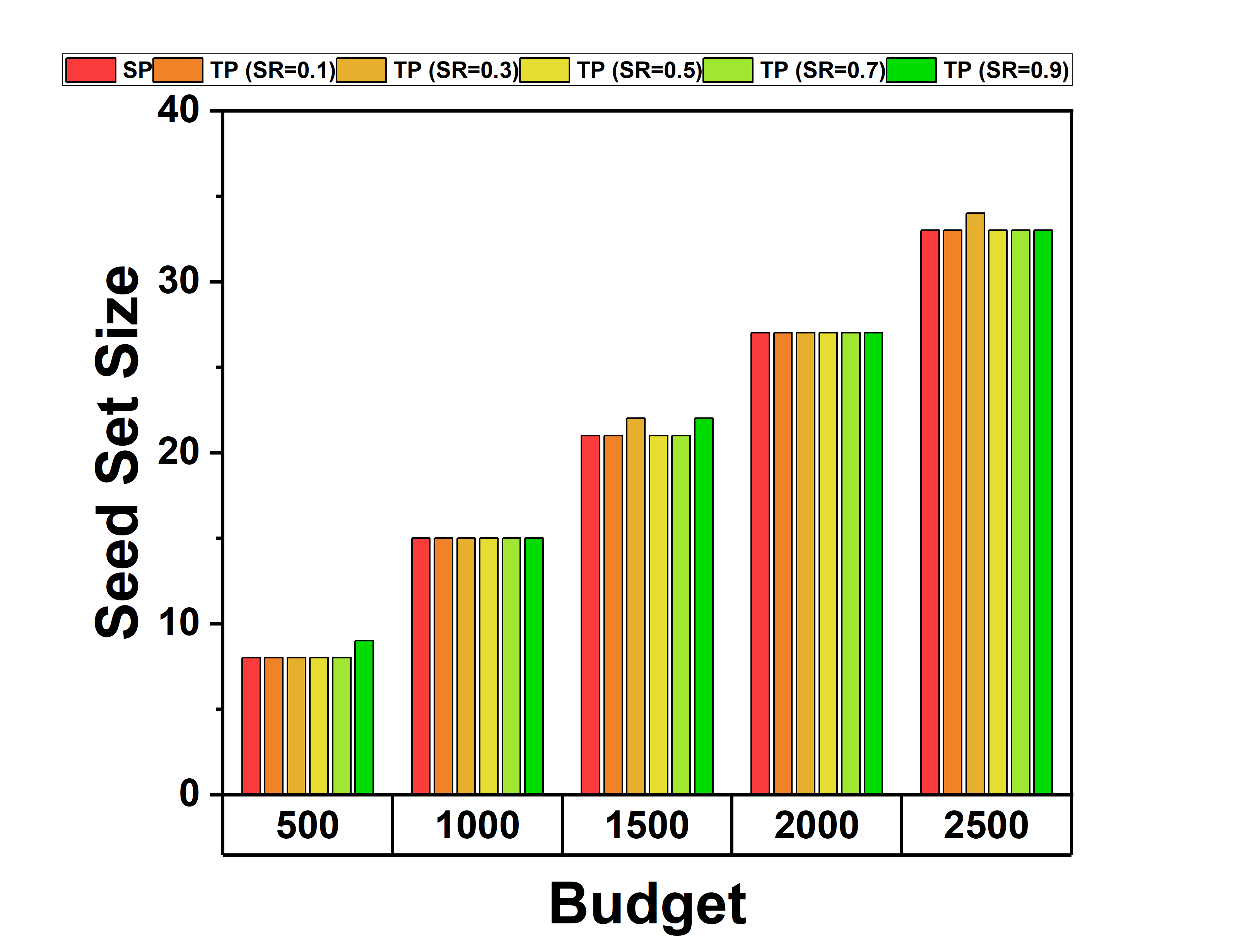}
        \caption{Timestep 8}
    \end{subfigure}
    \hfill
    \begin{subfigure}[t]{0.3\linewidth}
        \centering
        \includegraphics[width=\linewidth]{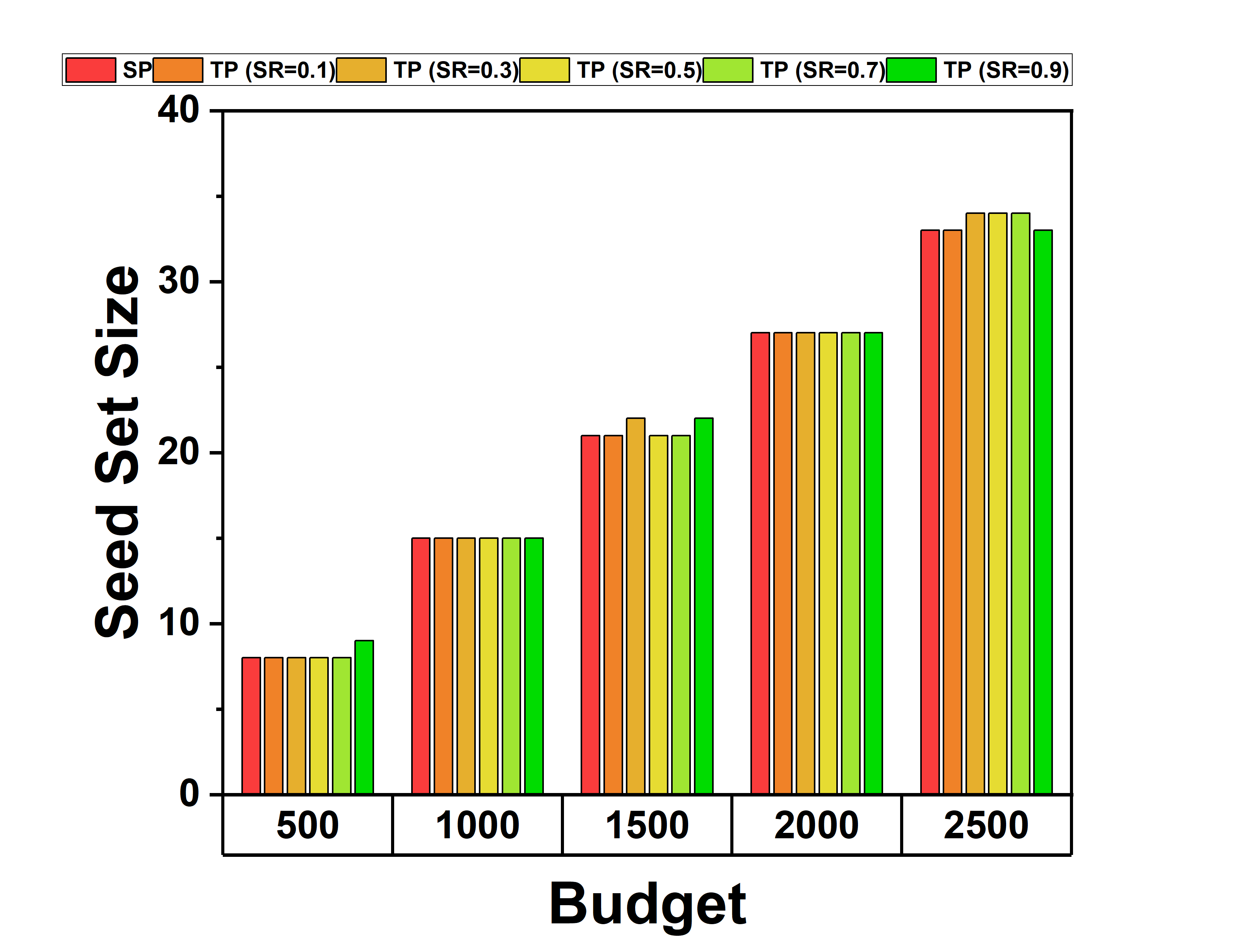}
        \caption{Timestep 10}
    \end{subfigure}

    \caption{Seed Set Size Distribution of Single Phase Vs. Two Phase (Clustering Coefficient Algorithm, \textit{LM} Dataset, Probability Setting - Trivalency)}
    \label{RQ4LM_T3}
\end{figure}

\begin{figure}[htbp]
    \centering
    \begin{subfigure}[t]{0.3\linewidth}
        \centering
        \includegraphics[width=\linewidth]{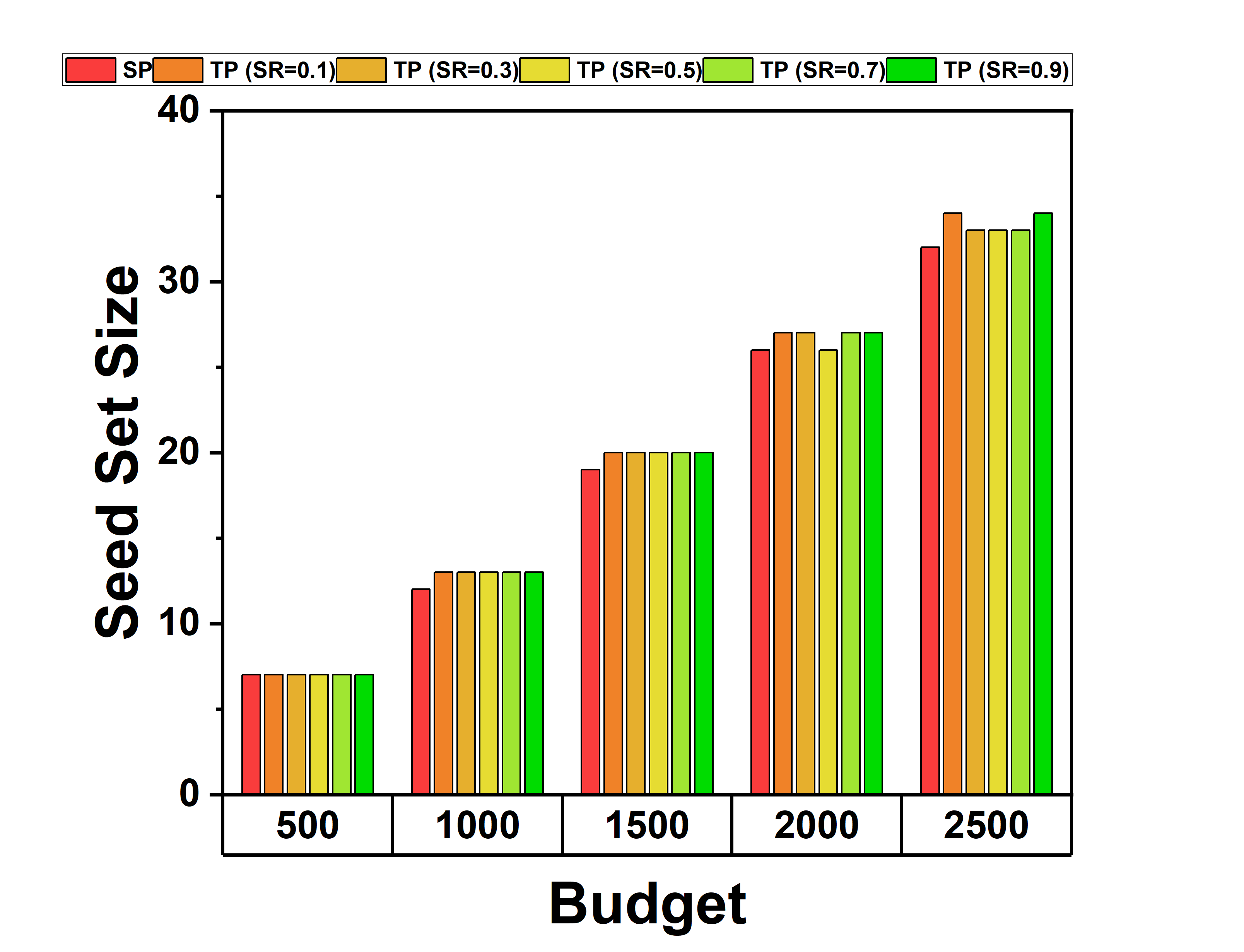}
        \caption{Timestep 2}
    \end{subfigure}
    \hspace{0.05\linewidth}
    \begin{subfigure}[t]{0.3\linewidth}
        \centering
        \includegraphics[width=\linewidth]{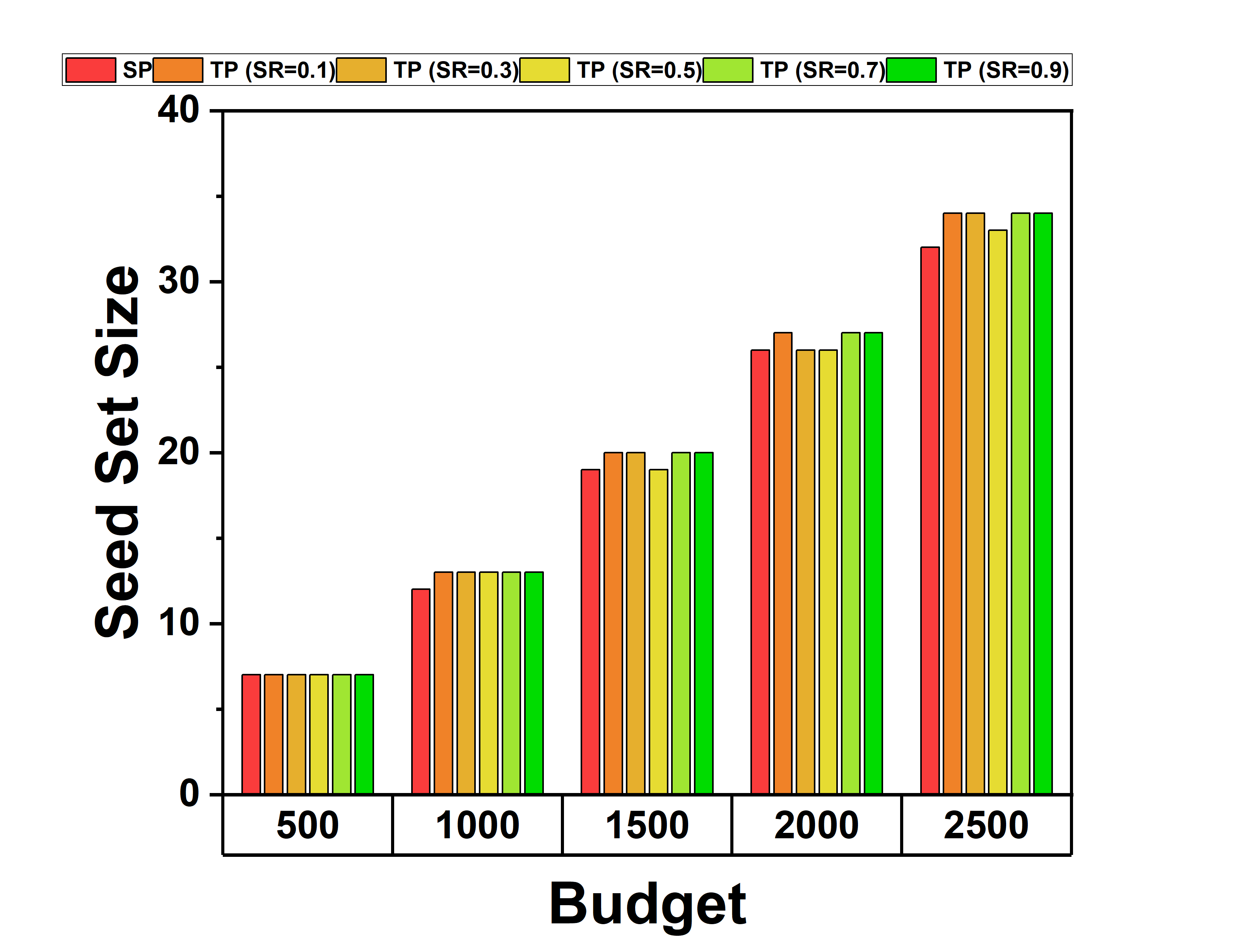}
        \caption{Timestep 4}
    \end{subfigure}

    \vspace{0.5cm}

    \begin{subfigure}[t]{0.3\linewidth}
        \centering
        \includegraphics[width=\linewidth]{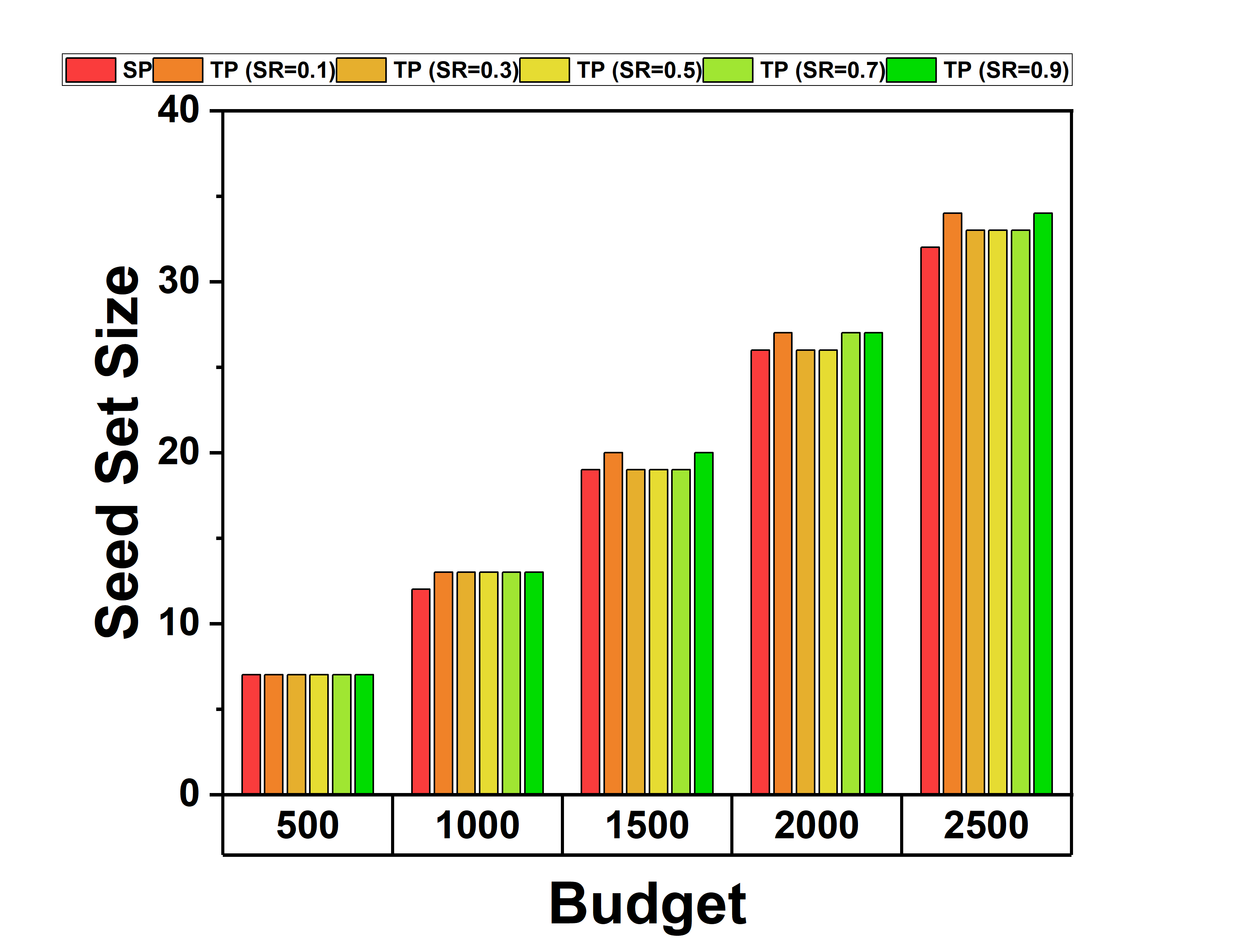}
        \caption{Timestep 6}
    \end{subfigure}
    \hfill
    \begin{subfigure}[t]{0.3\linewidth}
        \centering
        \includegraphics[width=\linewidth]{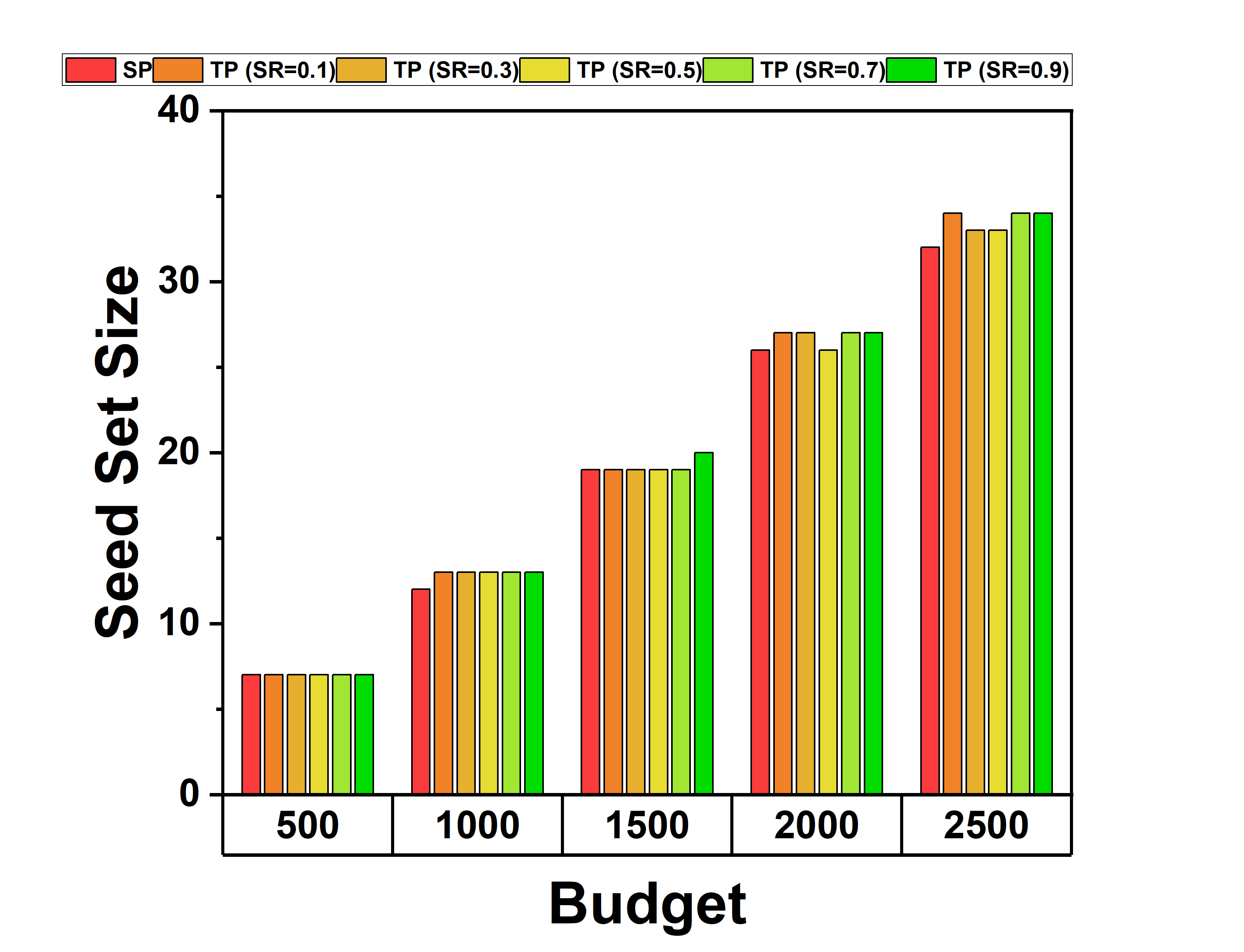}
        \caption{Timestep 8}
    \end{subfigure}
    \hfill
    \begin{subfigure}[t]{0.3\linewidth}
        \centering
        \includegraphics[width=\linewidth]{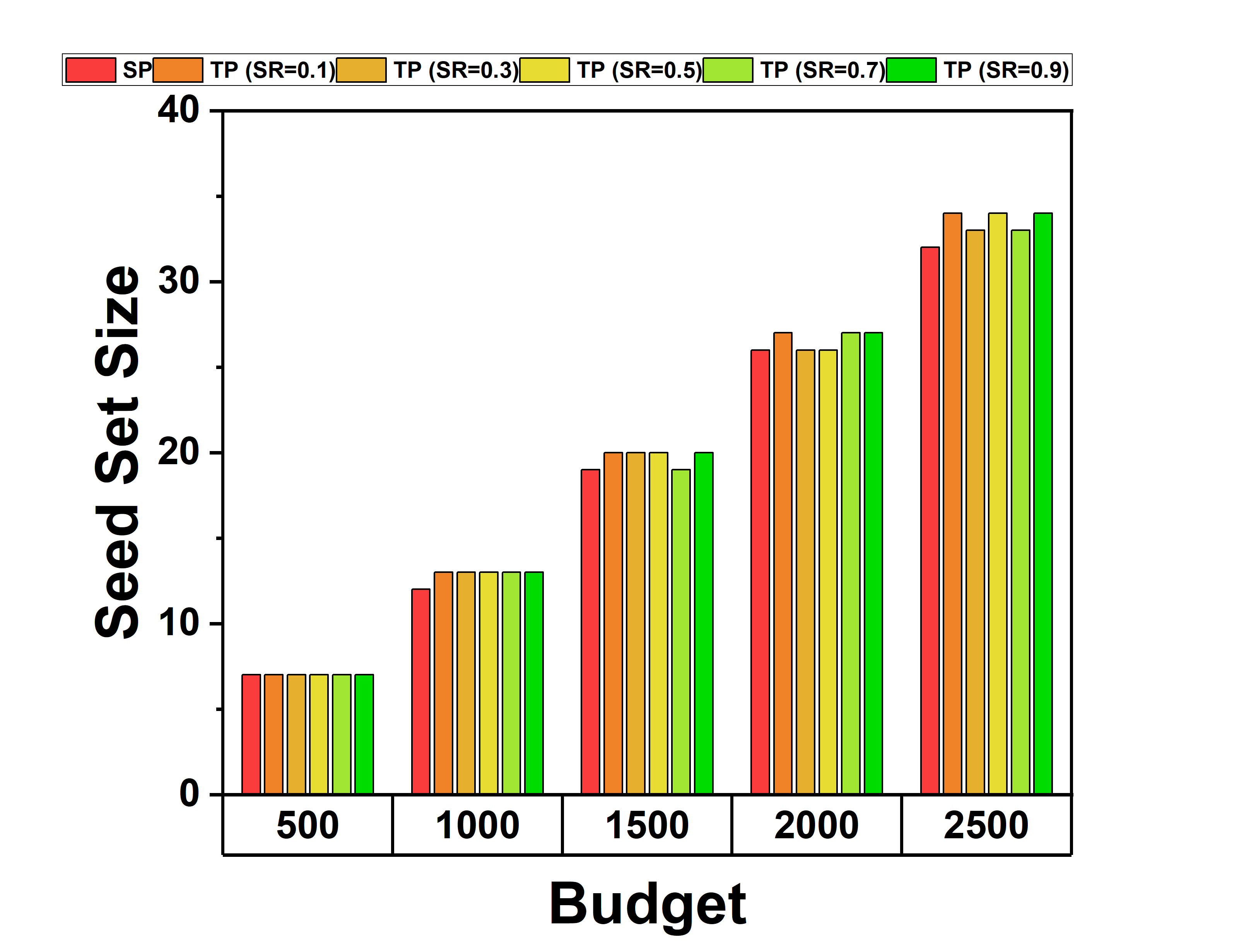}
        \caption{Timestep 10}
    \end{subfigure}

    \caption{Seed Set Size Distribution of Single Phase Vs. Two Phase (Degree Discount Algorithm, \textit{LM} Dataset, Probability Setting - Trivalency)}
    \label{RQ4LM_T4}
\end{figure}

\begin{figure}[htbp]
    \centering
    \begin{subfigure}[t]{0.3\linewidth}
        \centering
        \includegraphics[width=\linewidth]{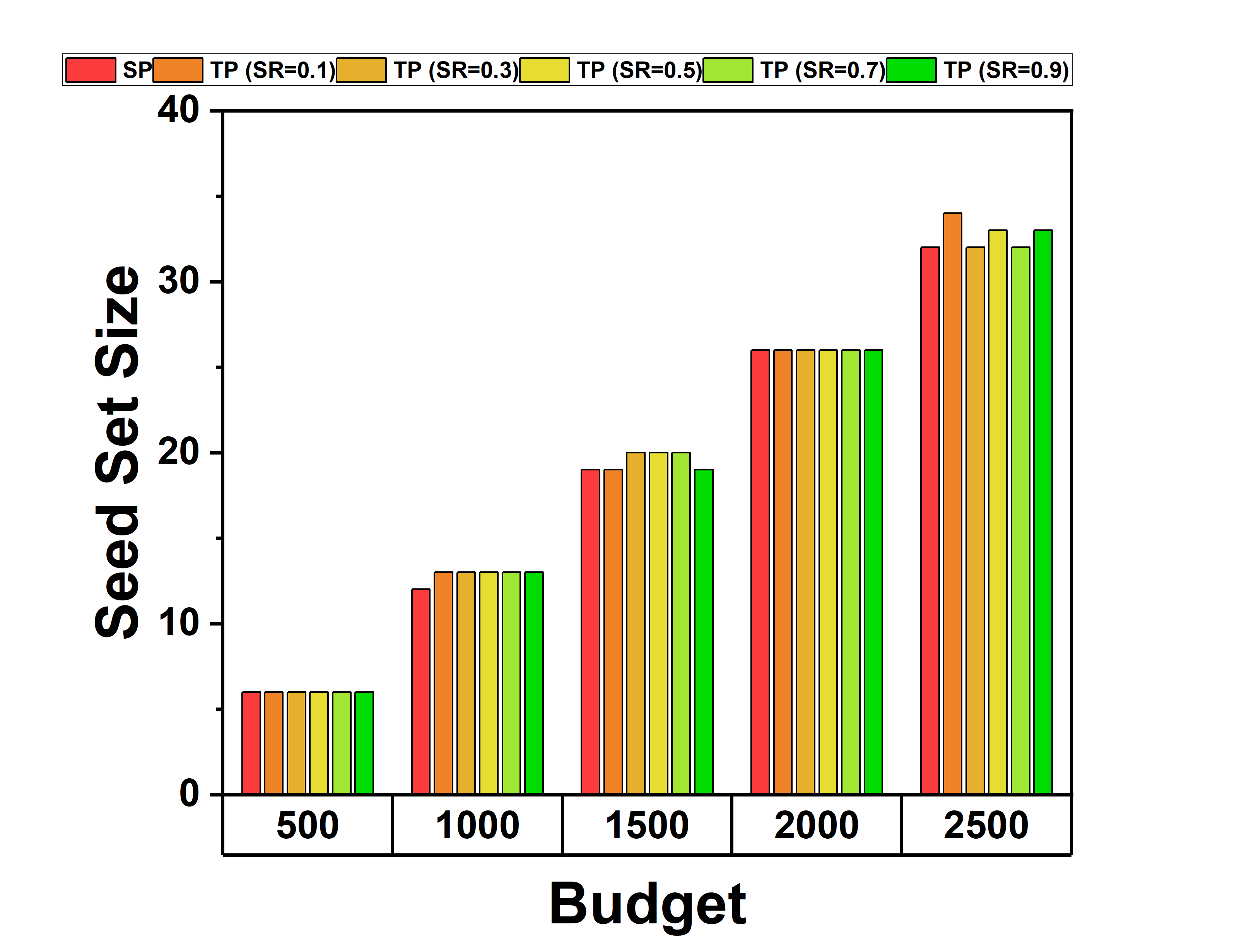}
        \caption{Timestep 2}
    \end{subfigure}
    \hspace{0.05\linewidth}
    \begin{subfigure}[t]{0.3\linewidth}
        \centering
        \includegraphics[width=\linewidth]{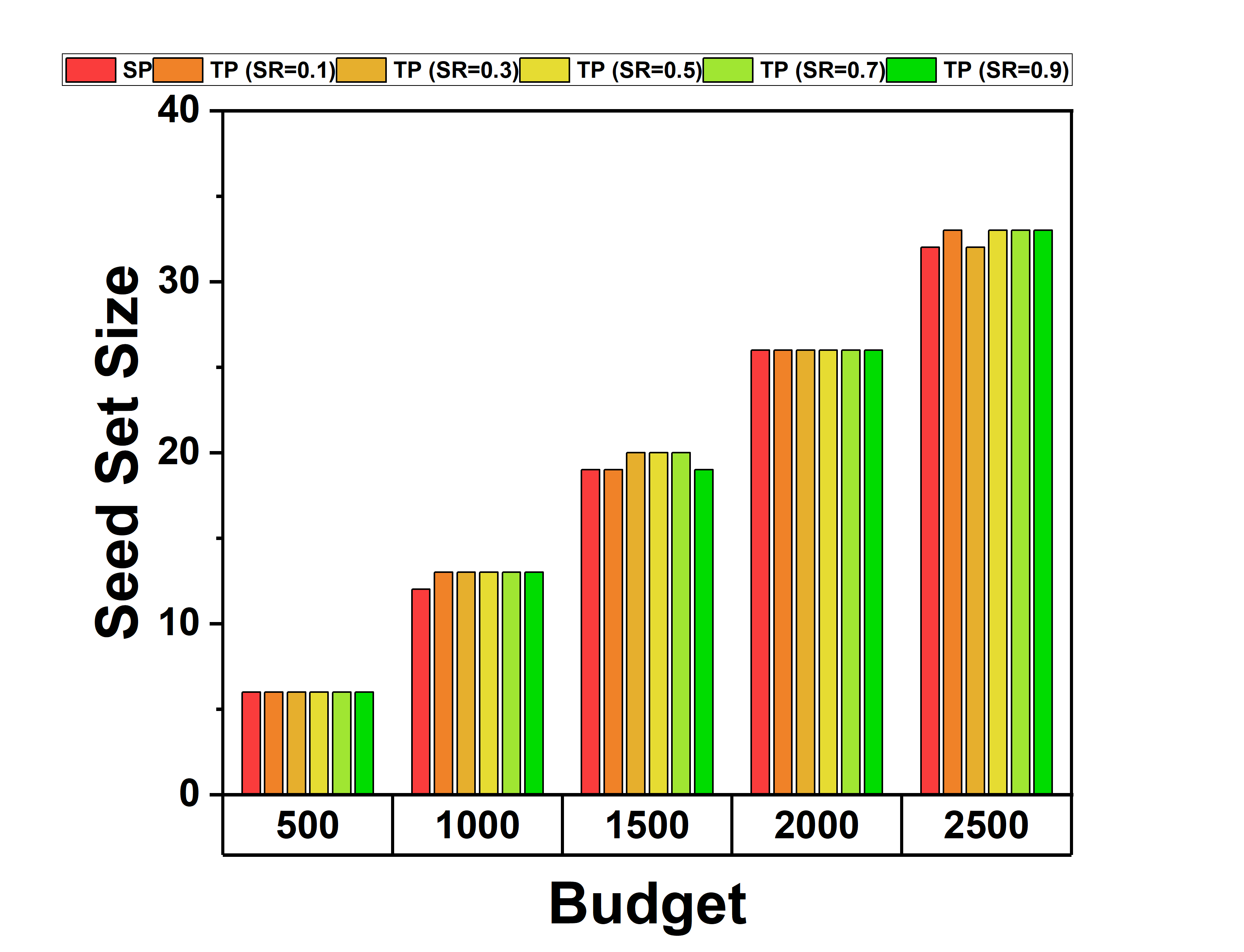}
        \caption{Timestep 4}
    \end{subfigure}

    \vspace{0.5cm}

    \begin{subfigure}[t]{0.3\linewidth}
        \centering
        \includegraphics[width=\linewidth]{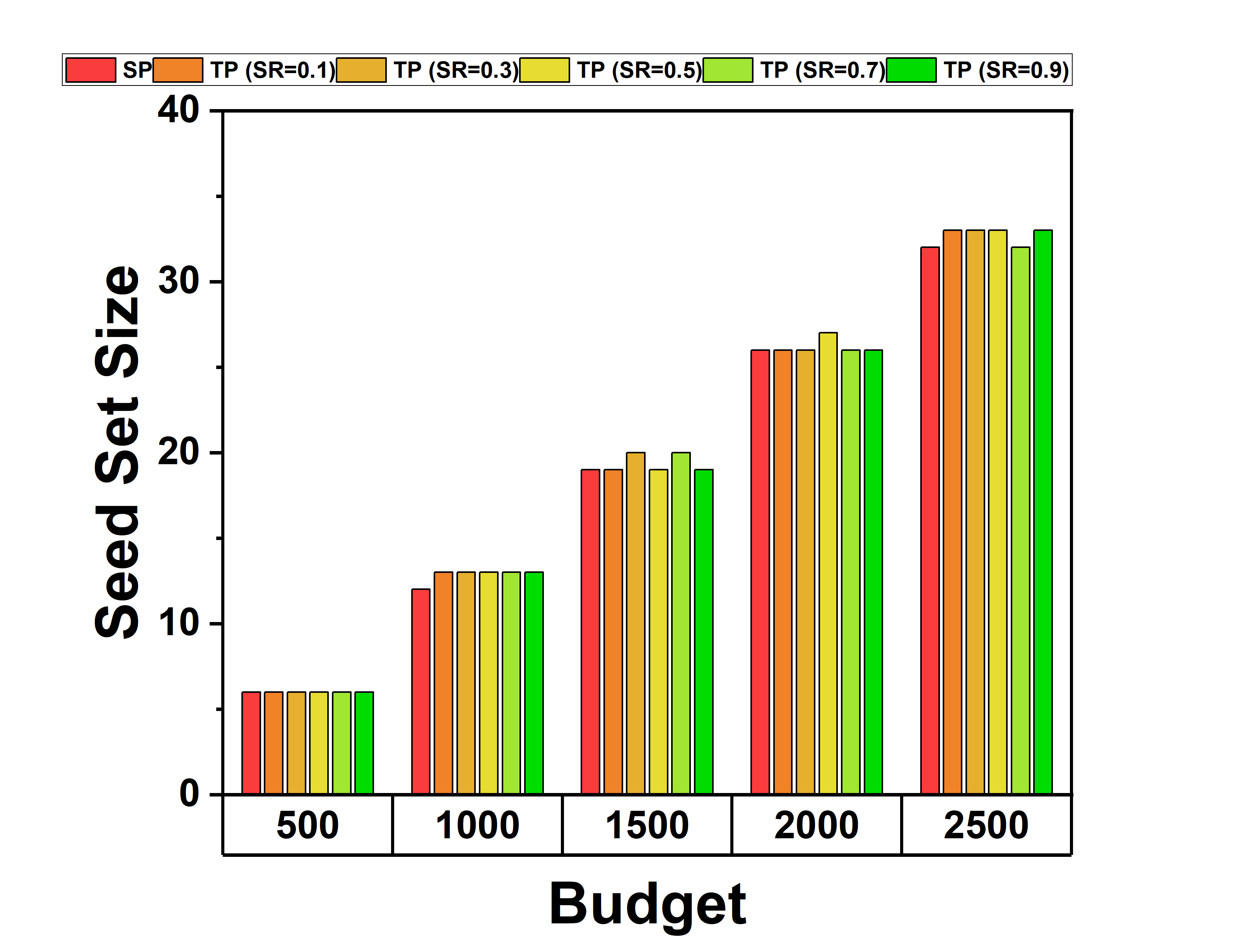}
        \caption{Timestep 6}
    \end{subfigure}
    \hfill
    \begin{subfigure}[t]{0.3\linewidth}
        \centering
        \includegraphics[width=\linewidth]{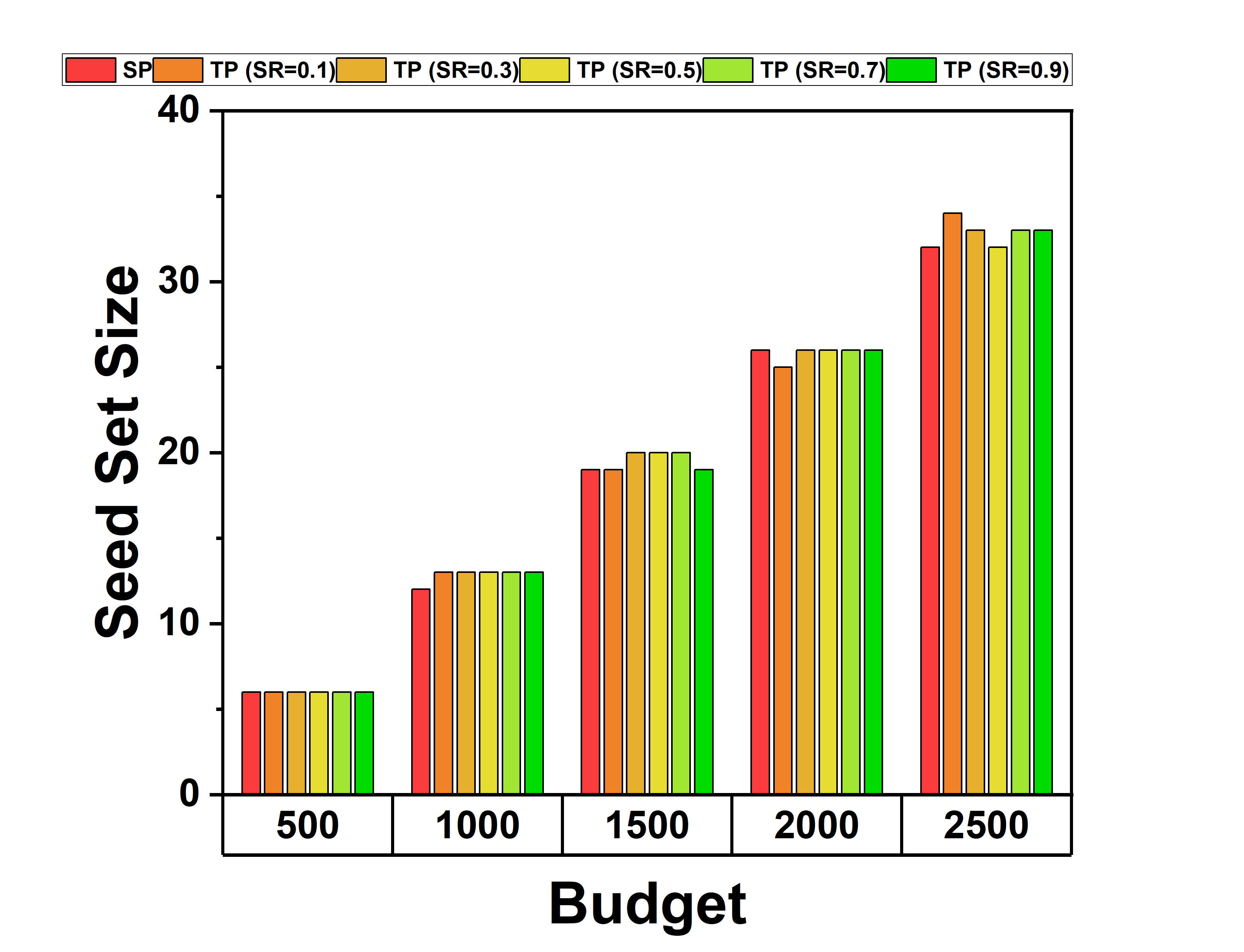}
        \caption{Timestep 8}
    \end{subfigure}
    \hfill
    \begin{subfigure}[t]{0.3\linewidth}
        \centering
        \includegraphics[width=\linewidth]{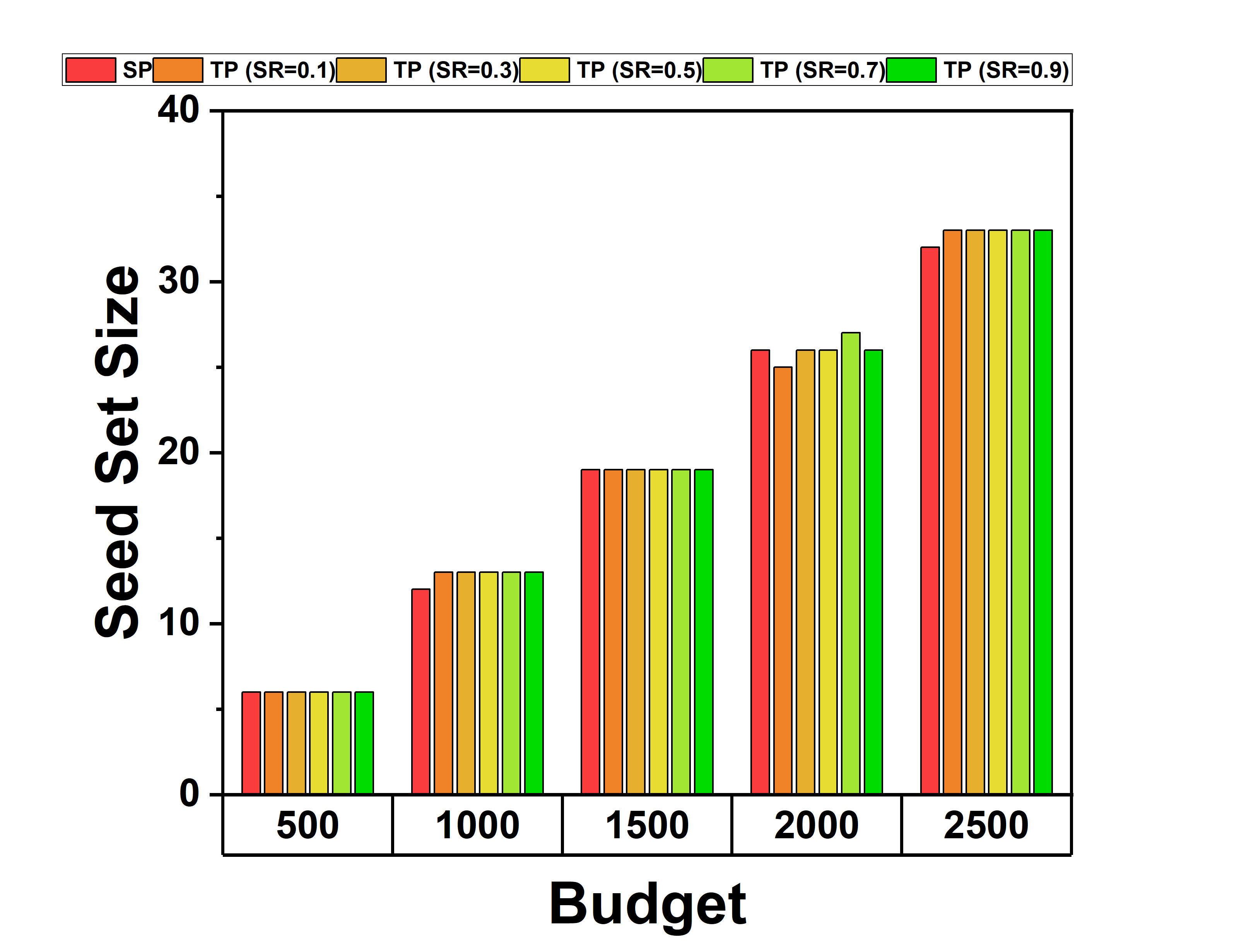}
        \caption{Timestep 10}
    \end{subfigure}

    \caption{Seed Set Size Distribution of Single Phase Vs. Two Phase (Single Discount Algorithm, \textit{LM} Dataset, Probability Setting - Trivalency)}
    \label{RQ4LM_T5}
\end{figure}

\begin{figure}[htbp]
    \centering
    \begin{subfigure}[t]{0.3\linewidth}
        \centering
        \includegraphics[width=\linewidth]{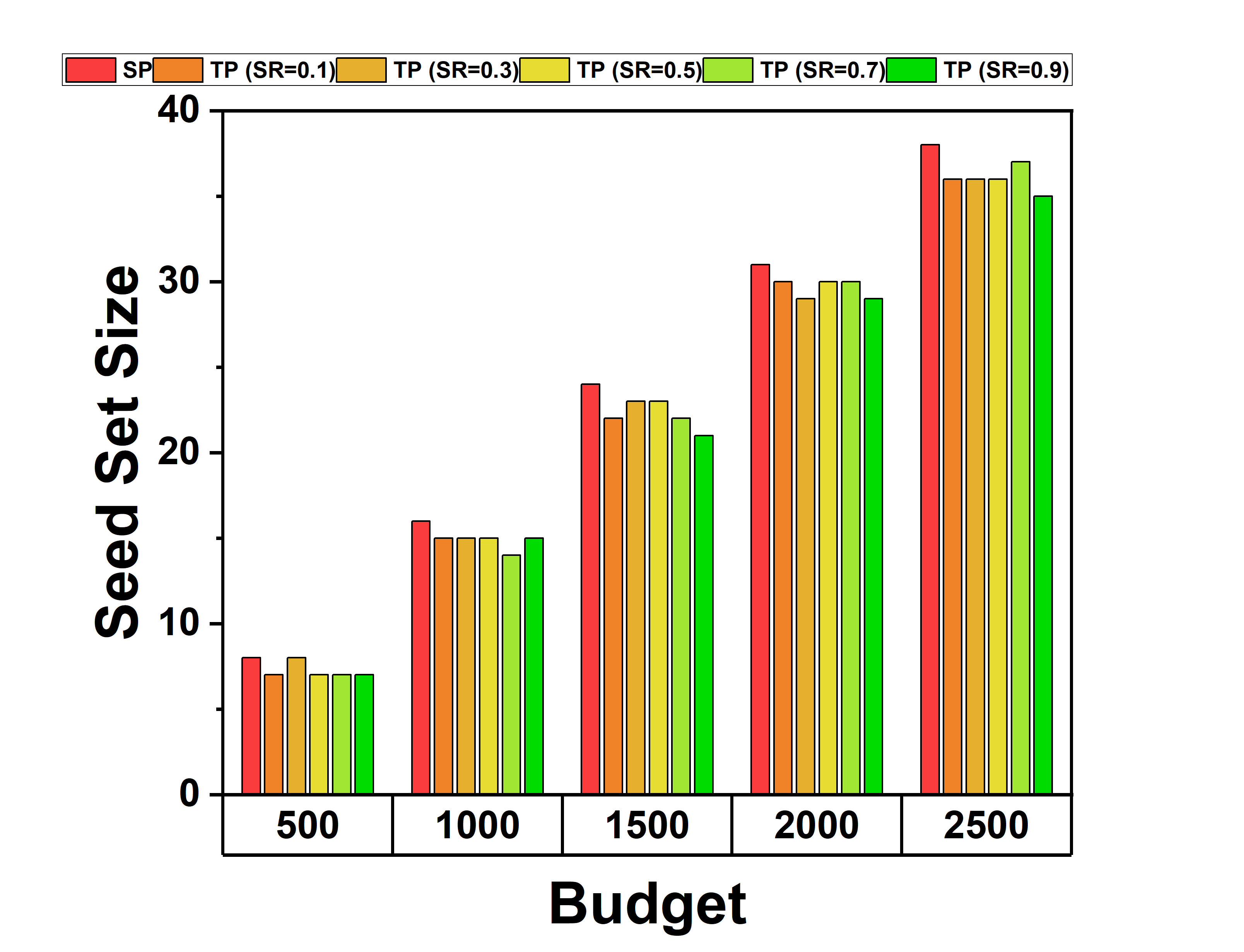}
        \caption{Timestep 2}
    \end{subfigure}
    \hspace{0.05\linewidth}
    \begin{subfigure}[t]{0.3\linewidth}
        \centering
        \includegraphics[width=\linewidth]{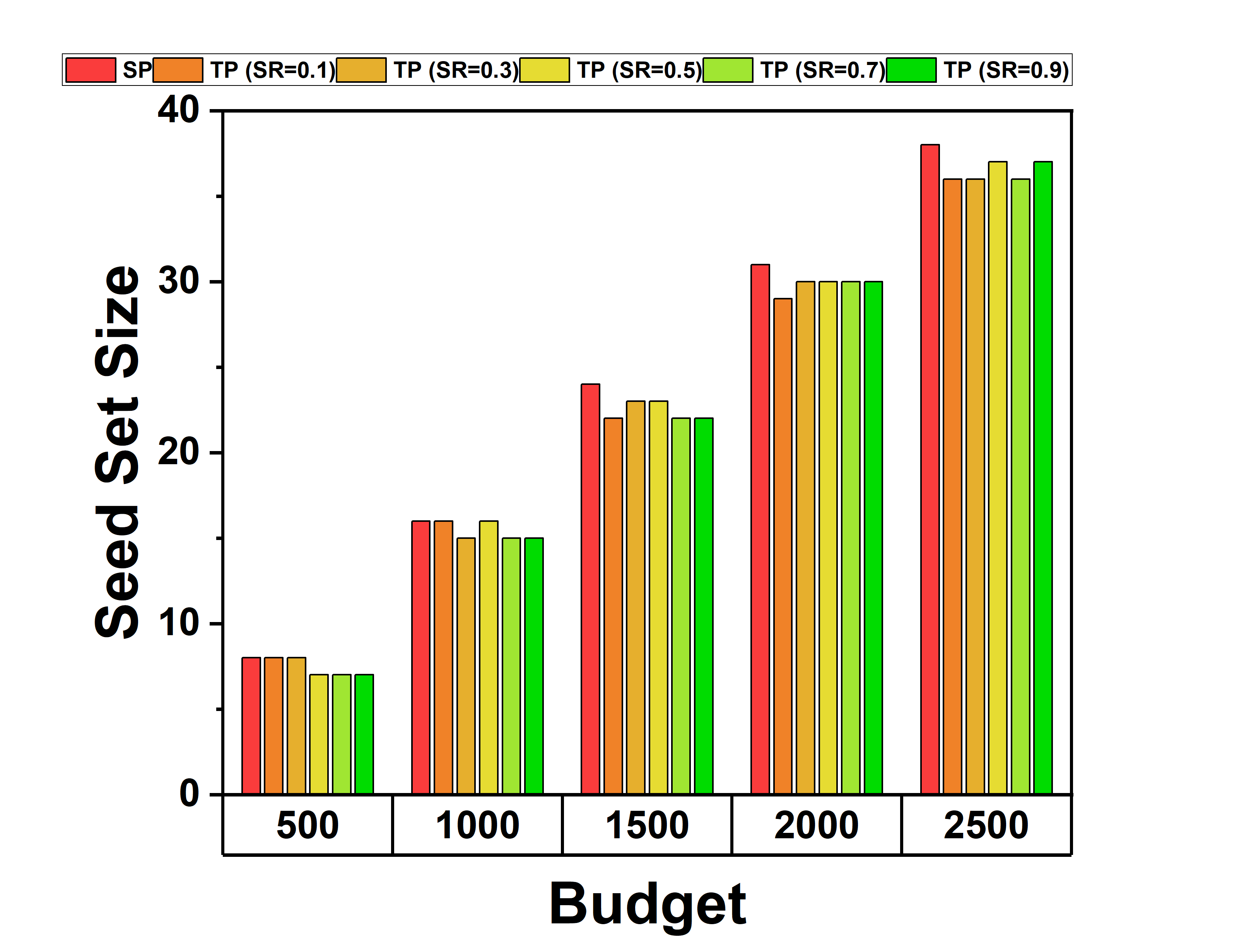}
        \caption{Timestep 4}
    \end{subfigure}

    \vspace{0.5cm}

    \begin{subfigure}[t]{0.3\linewidth}
        \centering
        \includegraphics[width=\linewidth]{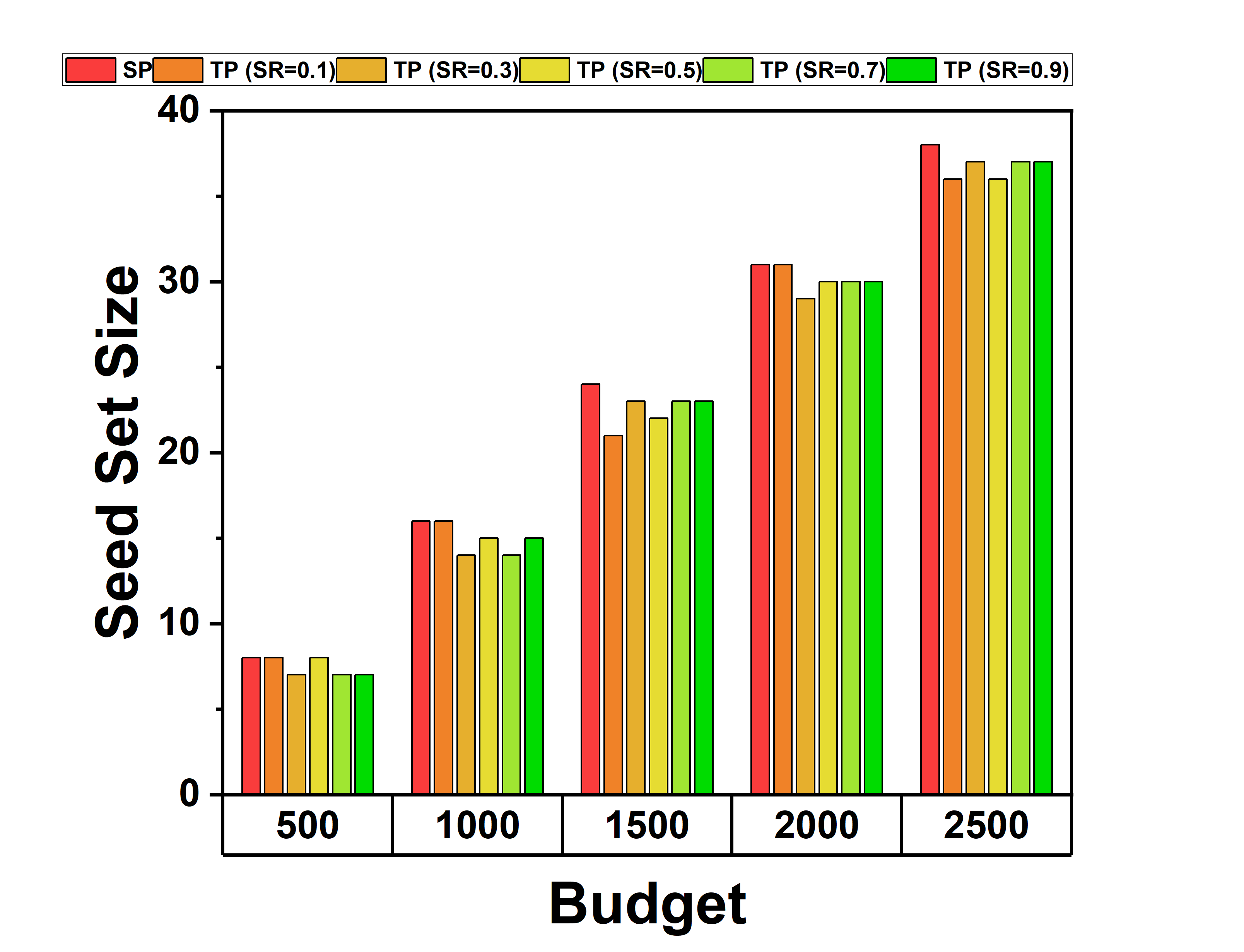}
        \caption{Timestep 6}
    \end{subfigure}
    \hfill
    \begin{subfigure}[t]{0.3\linewidth}
        \centering
        \includegraphics[width=\linewidth]{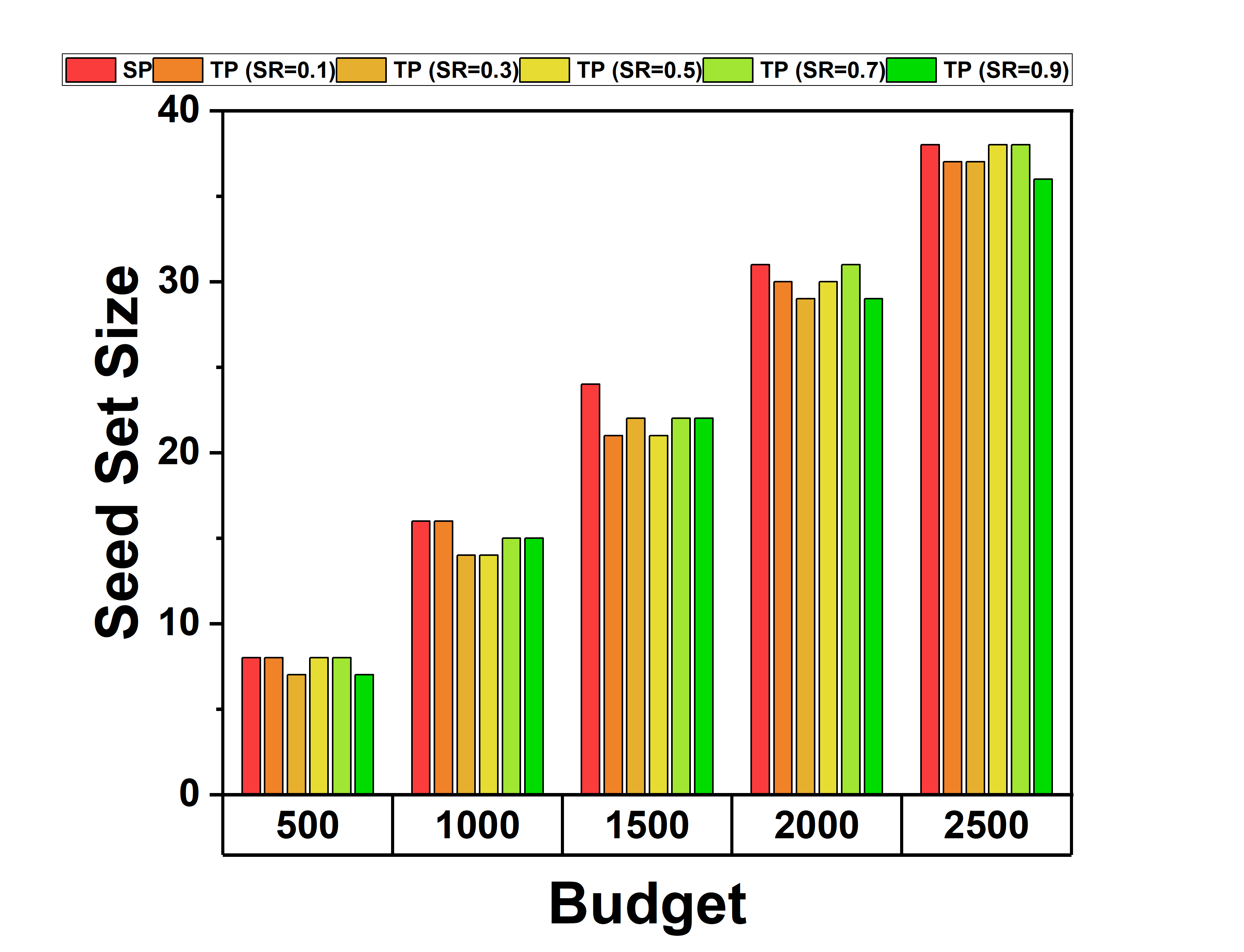}
        \caption{Timestep 8}
    \end{subfigure}
    \hfill
    \begin{subfigure}[t]{0.3\linewidth}
        \centering
        \includegraphics[width=\linewidth]{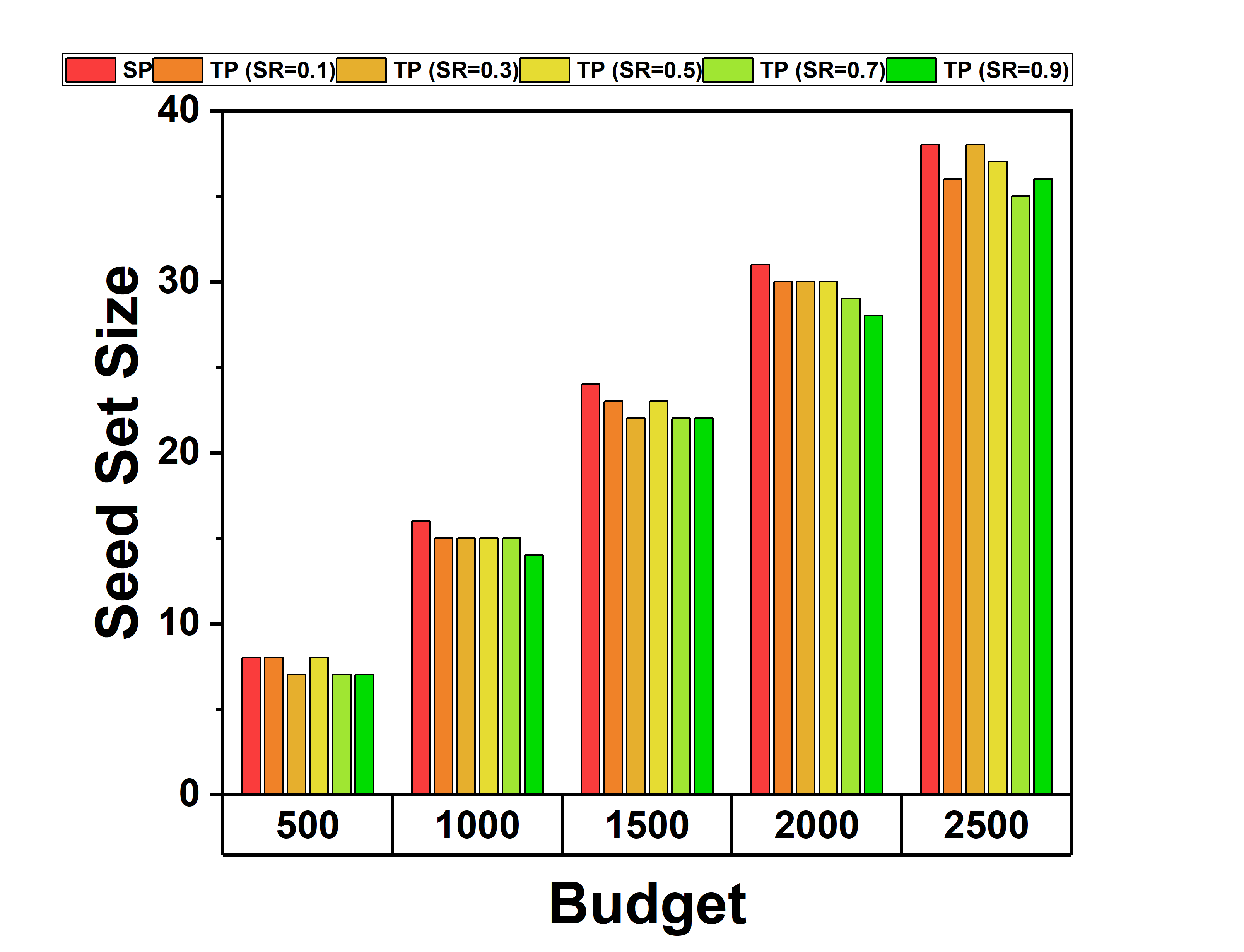}
        \caption{Timestep 10}
    \end{subfigure}

    \caption{Seed Set Size Distribution of Single Phase Vs. Two Phase (Simple Greedy Algorithm, \textit{LM} Dataset, Probability Setting - Trivalency)}
    \label{RQ4LM_T6}
\end{figure}

\begin{figure}[htbp]
    \centering
    \begin{subfigure}[t]{0.3\linewidth}
        \centering
        \includegraphics[width=\linewidth]{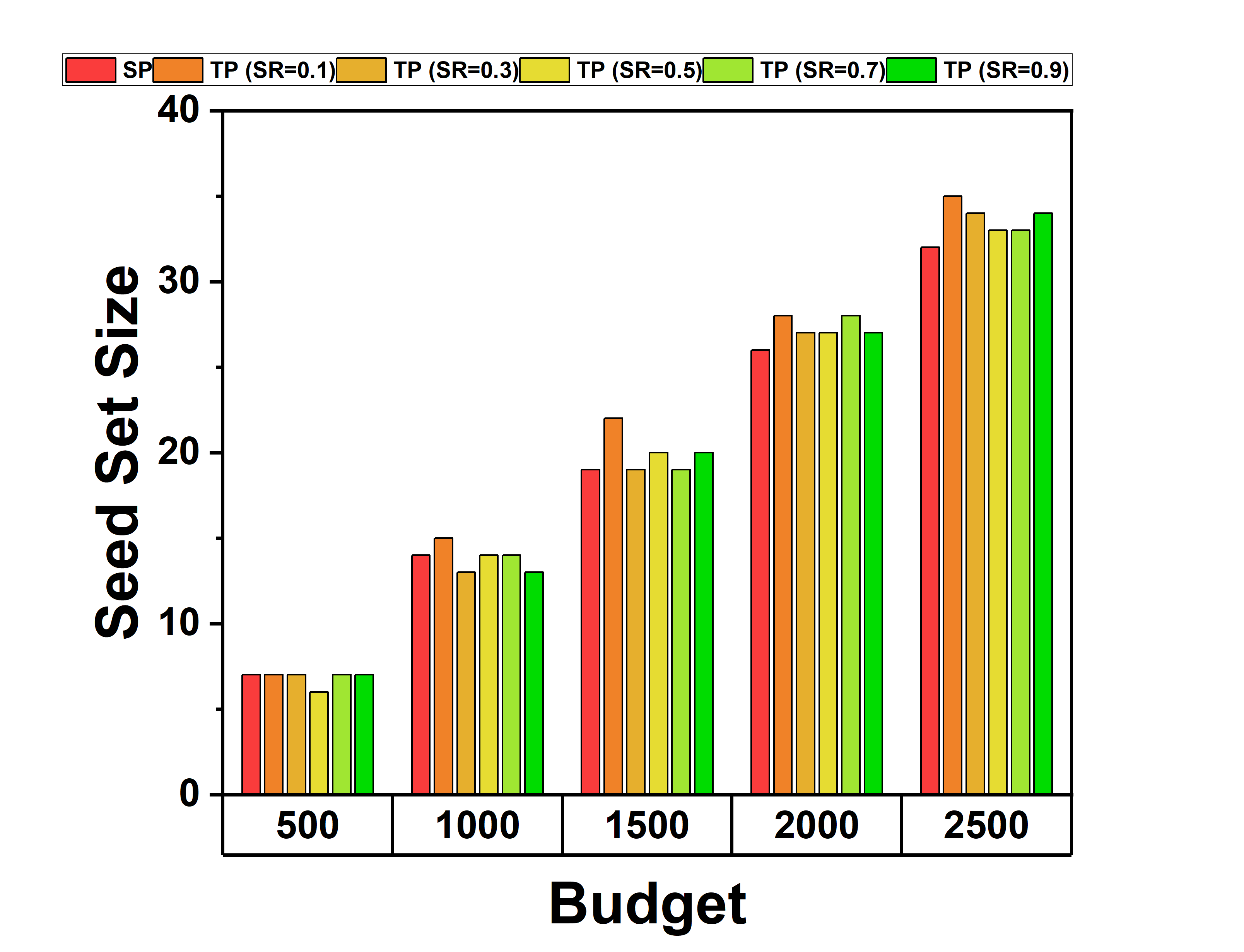}
        \caption{Timestep 2}
    \end{subfigure}
    \hspace{0.05\linewidth}
    \begin{subfigure}[t]{0.3\linewidth}
        \centering
        \includegraphics[width=\linewidth]{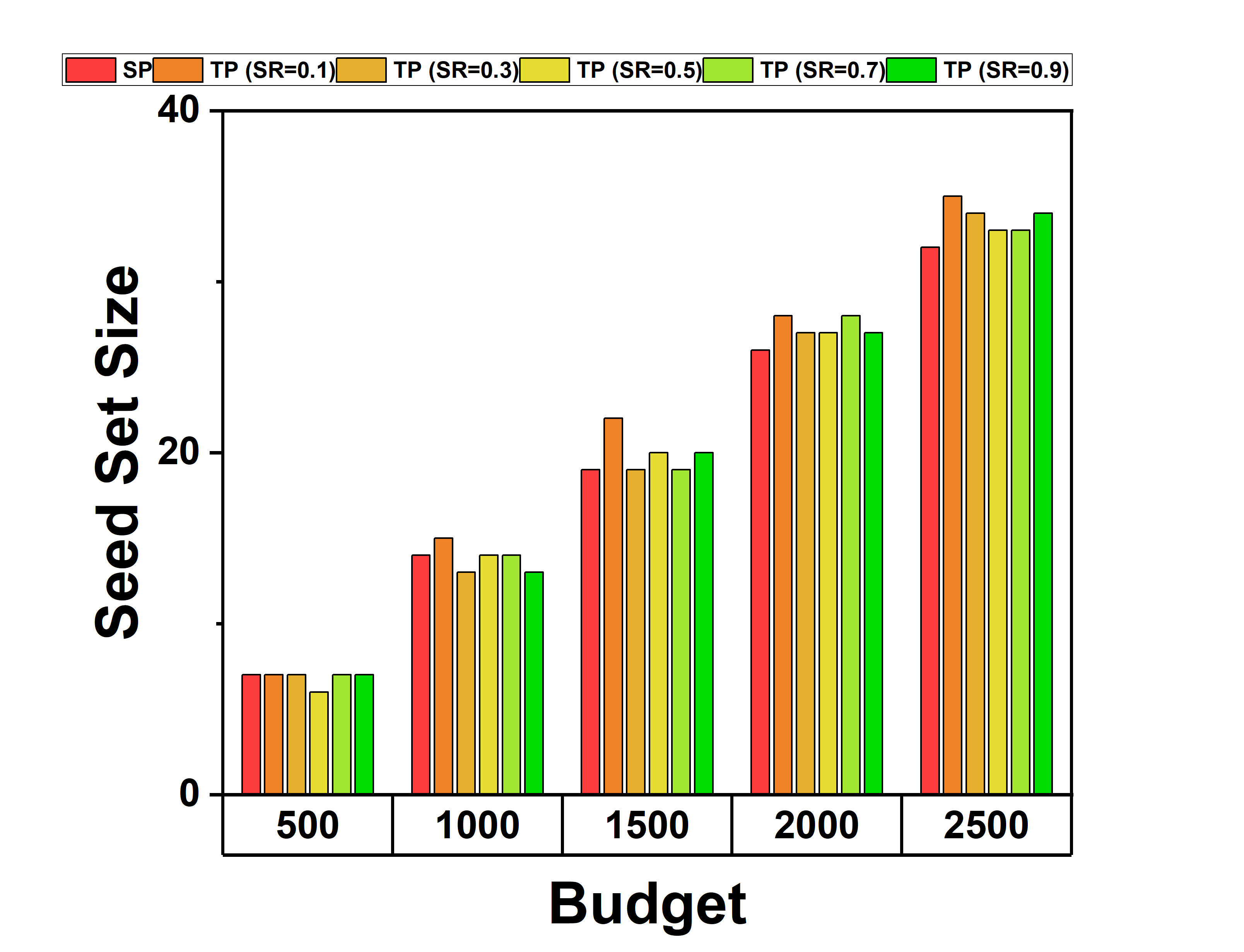}
        \caption{Timestep 4}
    \end{subfigure}

    \vspace{0.5cm}

    \begin{subfigure}[t]{0.3\linewidth}
        \centering
        \includegraphics[width=\linewidth]{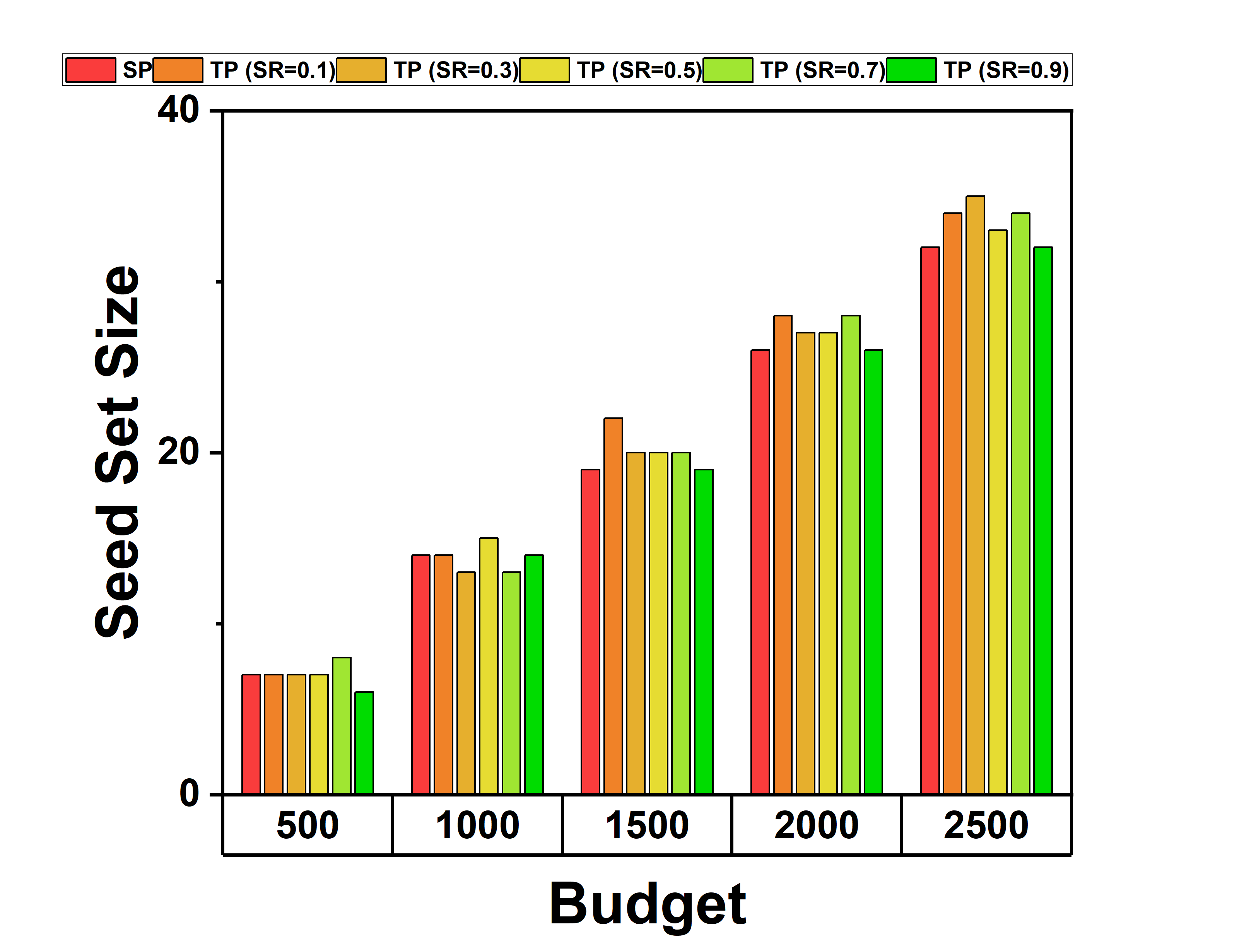}
        \caption{Timestep 6}
    \end{subfigure}
    \hfill
    \begin{subfigure}[t]{0.3\linewidth}
        \centering
        \includegraphics[width=\linewidth]{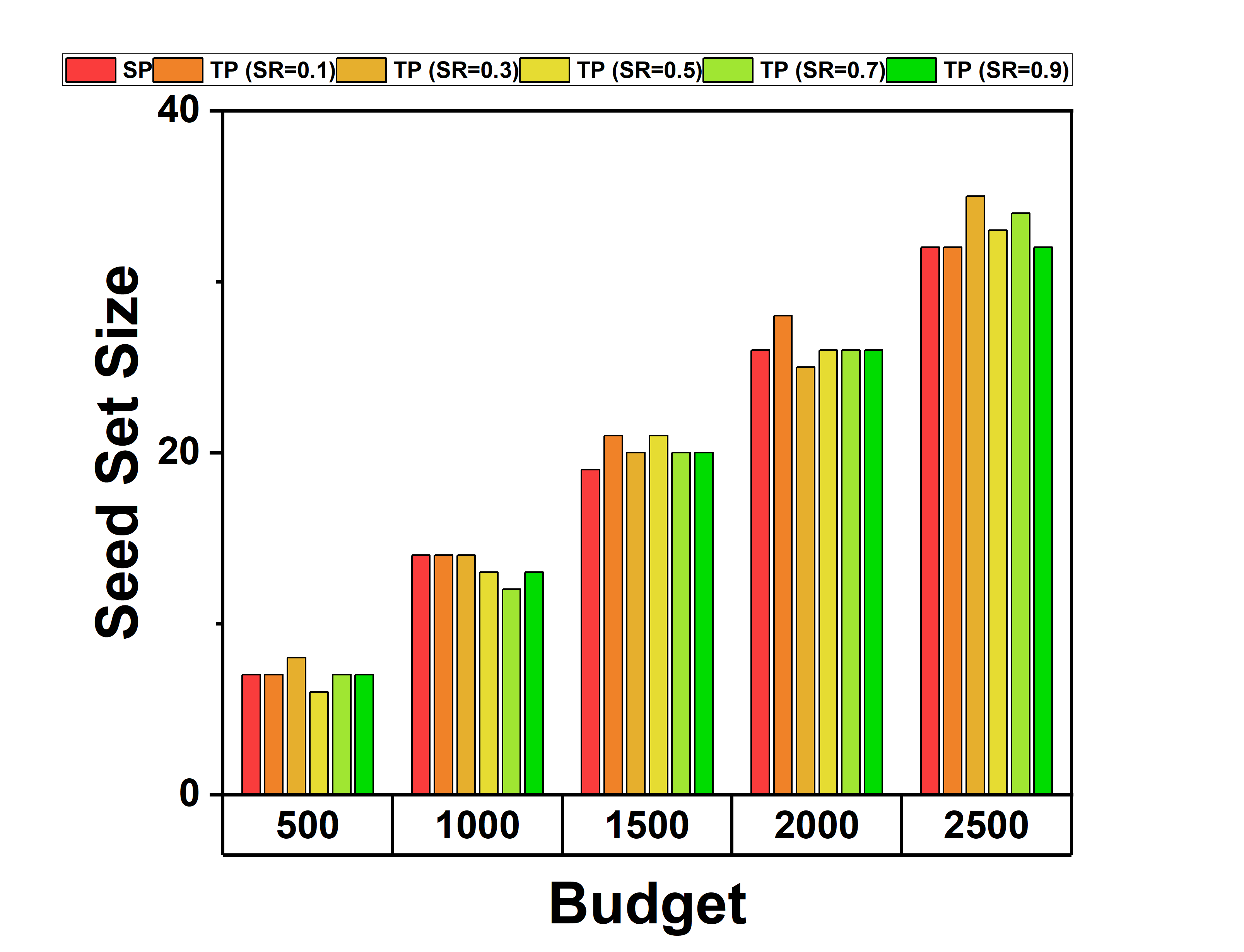}
        \caption{Timestep 8}
    \end{subfigure}
    \hfill
    \begin{subfigure}[t]{0.3\linewidth}
        \centering
        \includegraphics[width=\linewidth]{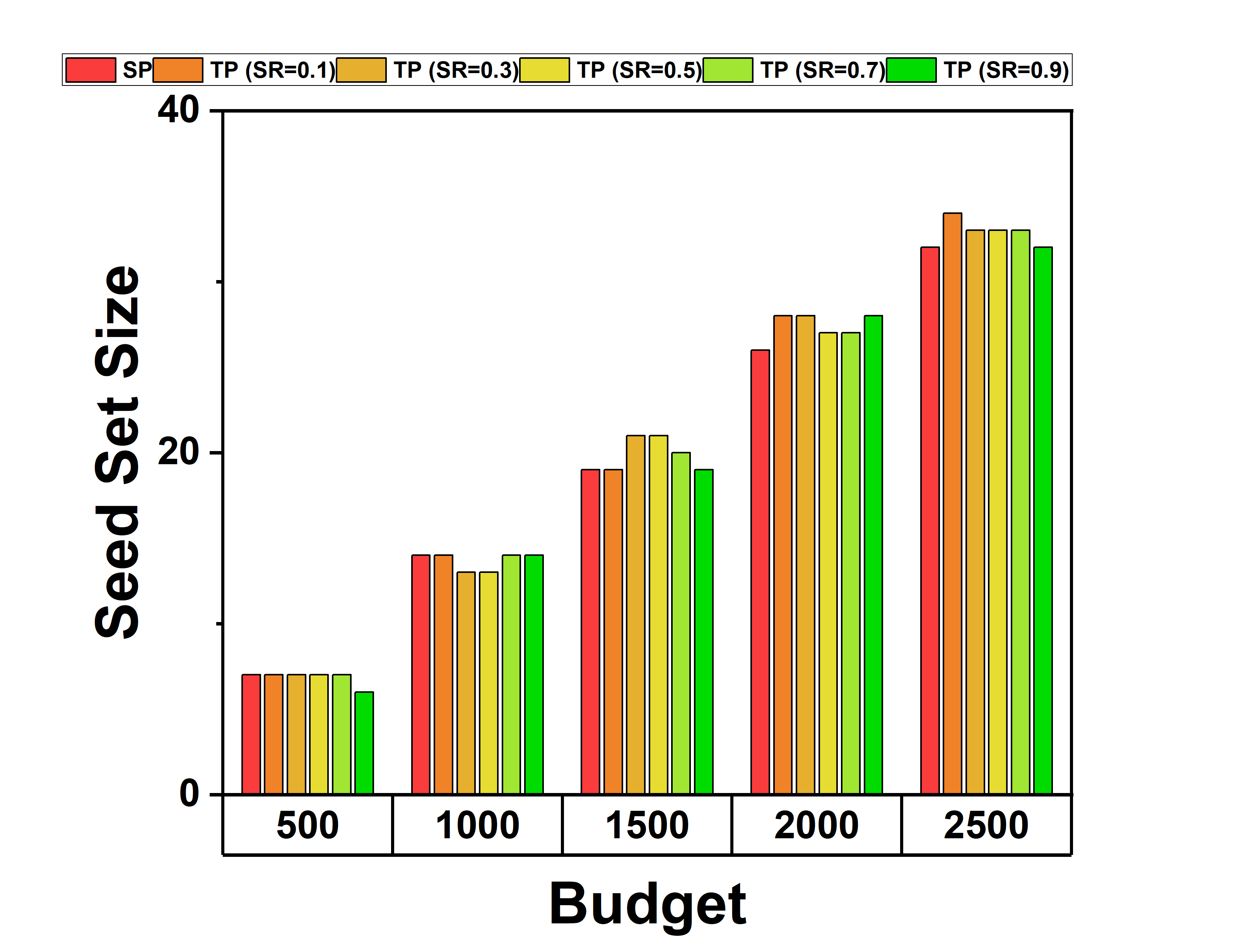}
        \caption{Timestep 10}
    \end{subfigure}

    \caption{Seed Set Size Distribution of Single Phase Vs. Two Phase (Double Greedy Algorithm, \textit{LM} Dataset, Probability Setting - Trivalency)}
    \label{RQ4LM_T7}
\end{figure}

\begin{figure}[htbp]
    \centering
    \begin{subfigure}[t]{0.3\linewidth}
        \centering
        \includegraphics[width=\linewidth]{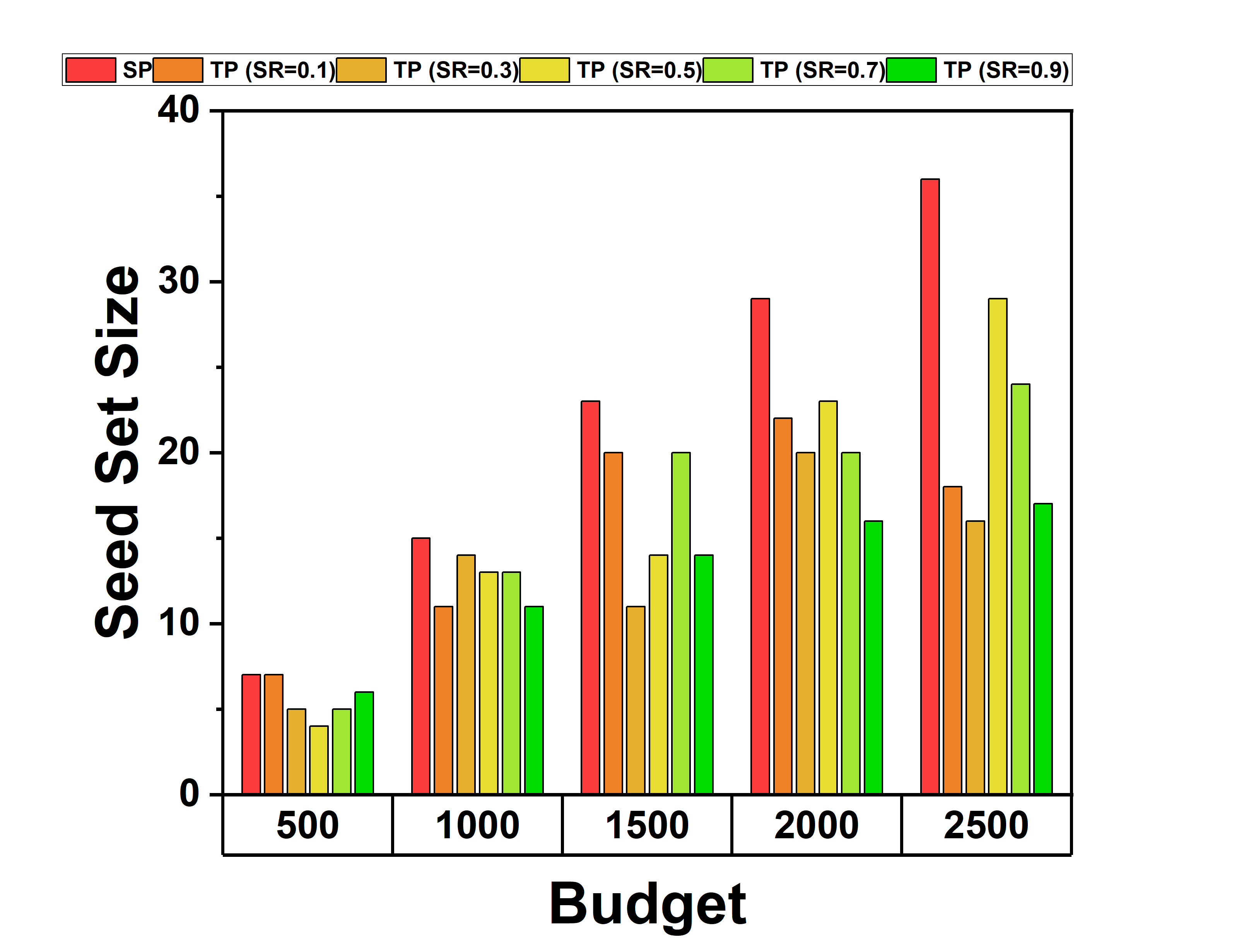}
        \caption{Timestep 2}
    \end{subfigure}
    \hspace{0.05\linewidth}
    \begin{subfigure}[t]{0.3\linewidth}
        \centering
        \includegraphics[width=\linewidth]{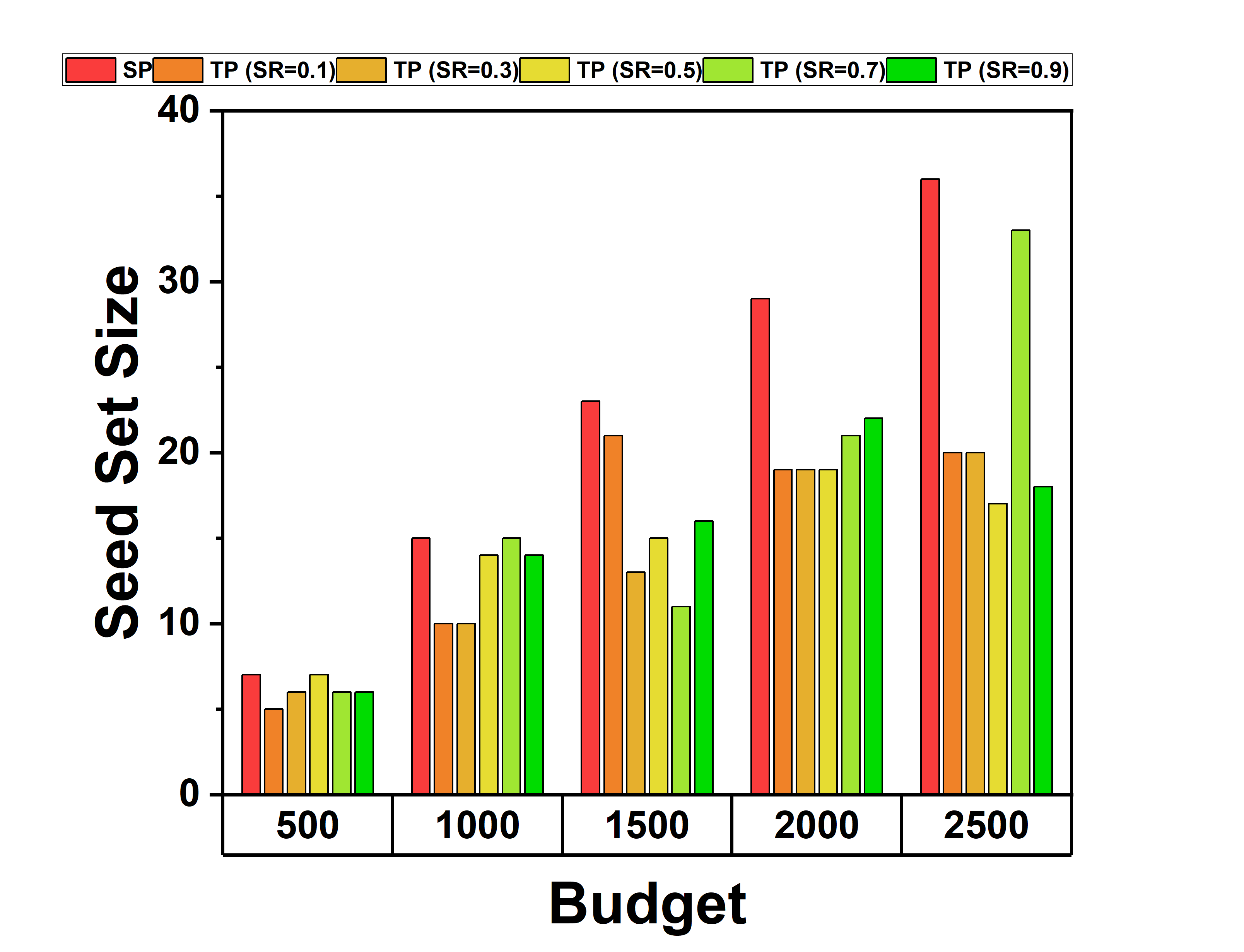}
        \caption{Timestep 4}
    \end{subfigure}

    \vspace{0.5cm}

    \begin{subfigure}[t]{0.3\linewidth}
        \centering
        \includegraphics[width=\linewidth]{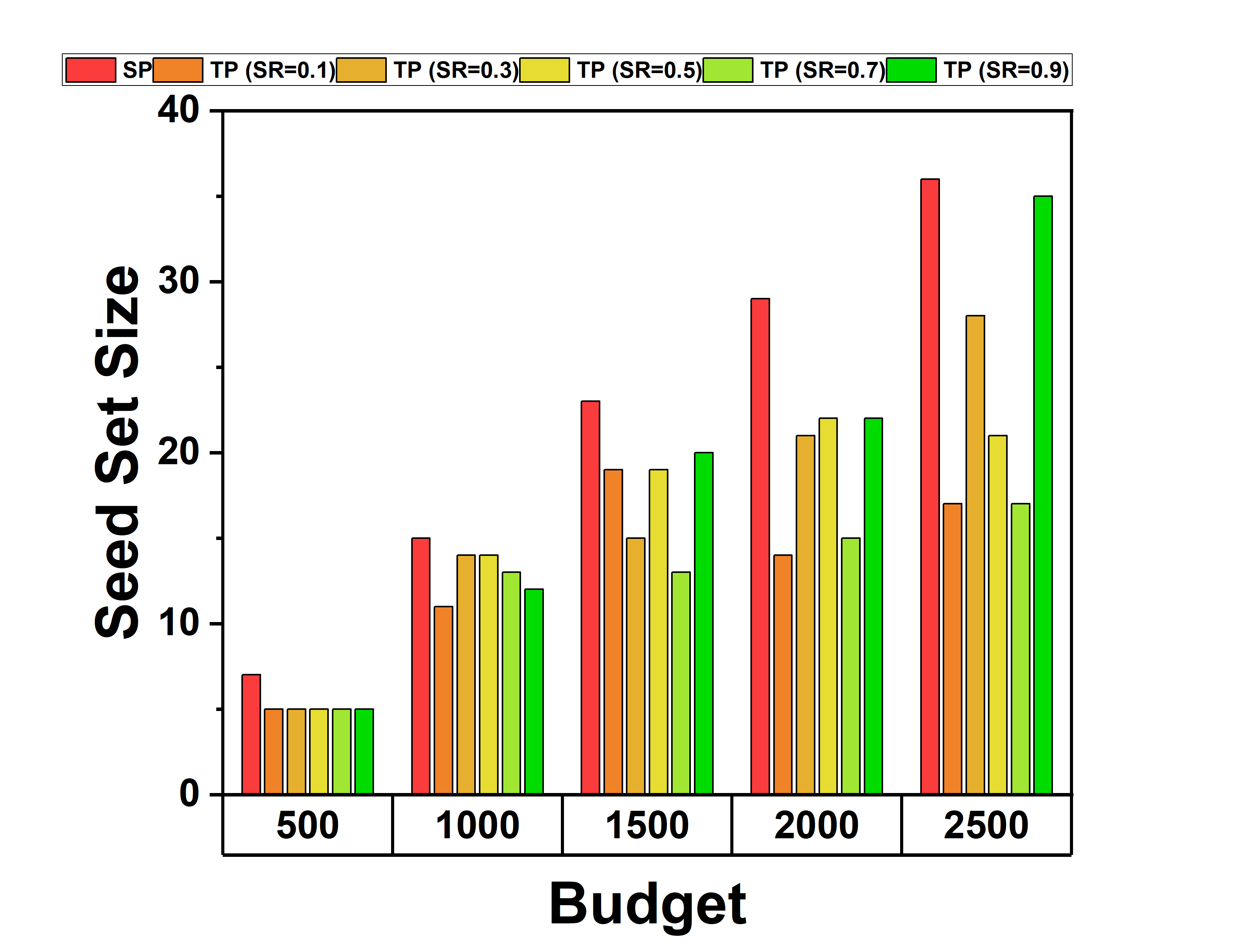}
        \caption{Timestep 6}
    \end{subfigure}
    \hfill
    \begin{subfigure}[t]{0.3\linewidth}
        \centering
        \includegraphics[width=\linewidth]{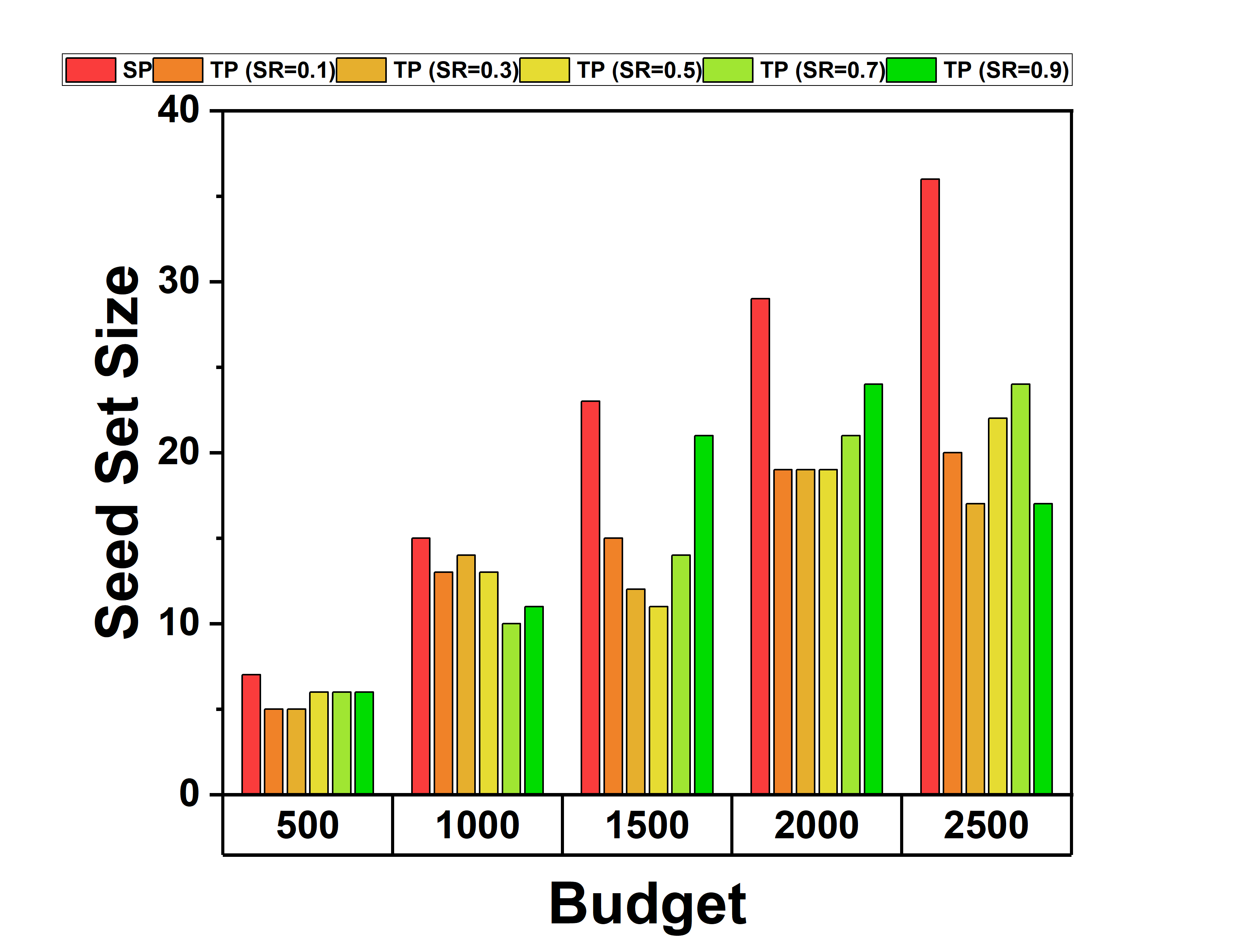}
        \caption{Timestep 8}
    \end{subfigure}
    \hfill
    \begin{subfigure}[t]{0.3\linewidth}
        \centering
        \includegraphics[width=\linewidth]{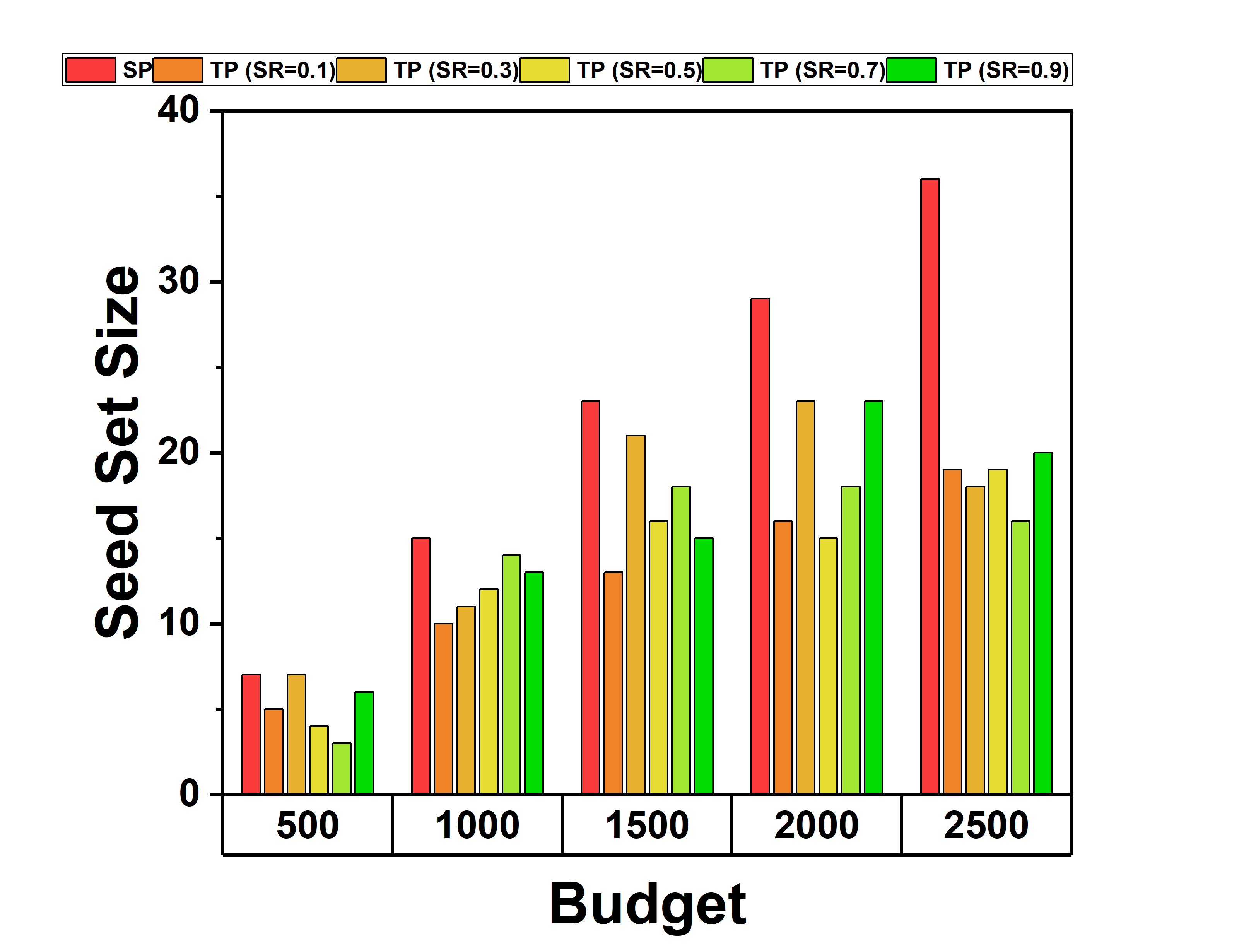}
        \caption{Timestep 10}
    \end{subfigure}

    \caption{Seed Set Size Distribution of Single Phase Vs. Two Phase (Stochastic Greedy Algorithm, \textit{LM} Dataset, Probability Setting - Trivalency)}
    \label{RQ4LM_T8}
\end{figure}


\begin{figure}[htbp]
    \centering

    \begin{subfigure}[t]{0.3\linewidth}
        \centering
        \includegraphics[width=\linewidth]{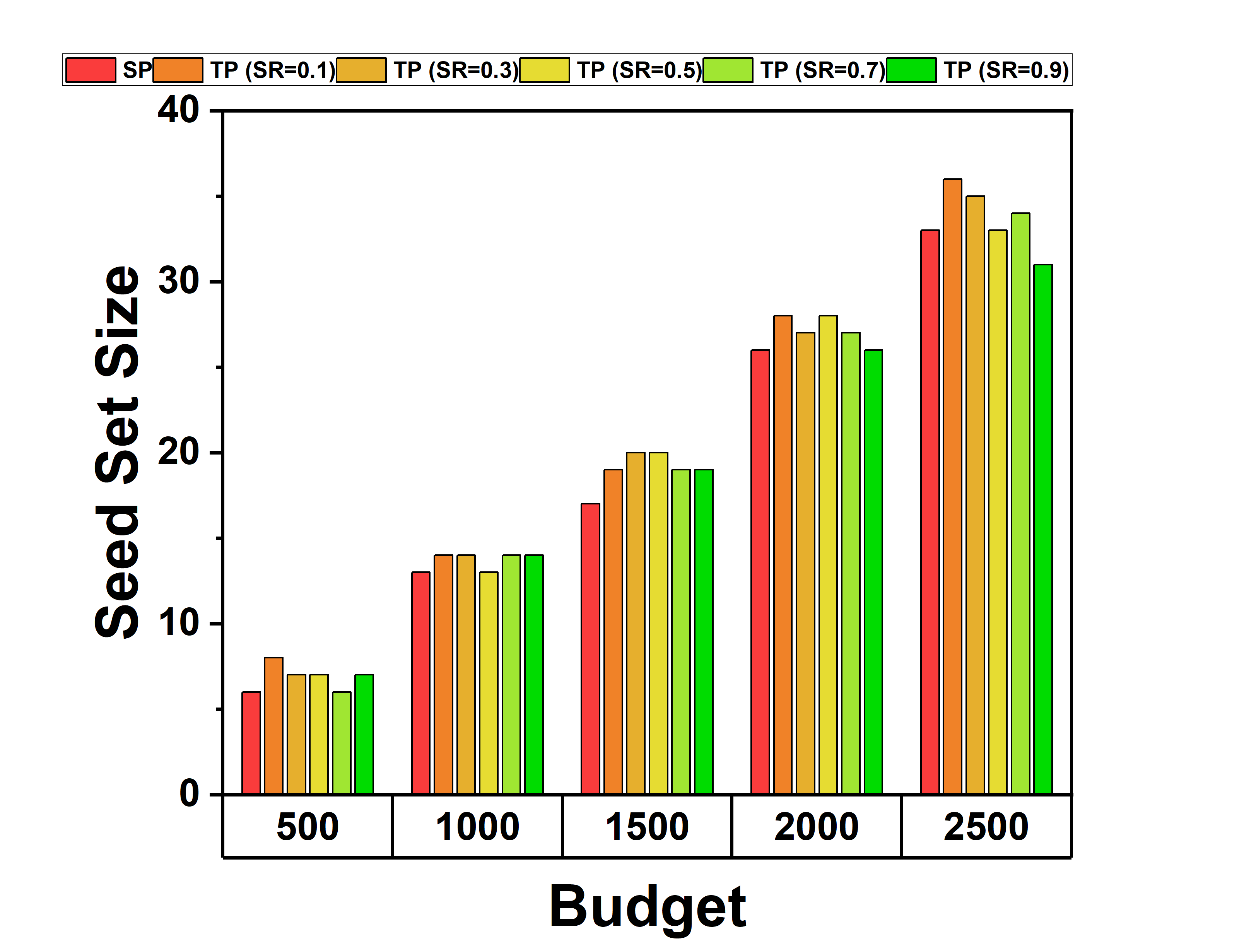}
        \caption{Timestep 2}
    \end{subfigure}
    \hspace{0.05\linewidth}
    \begin{subfigure}[t]{0.3\linewidth}
        \centering
        \includegraphics[width=\linewidth]{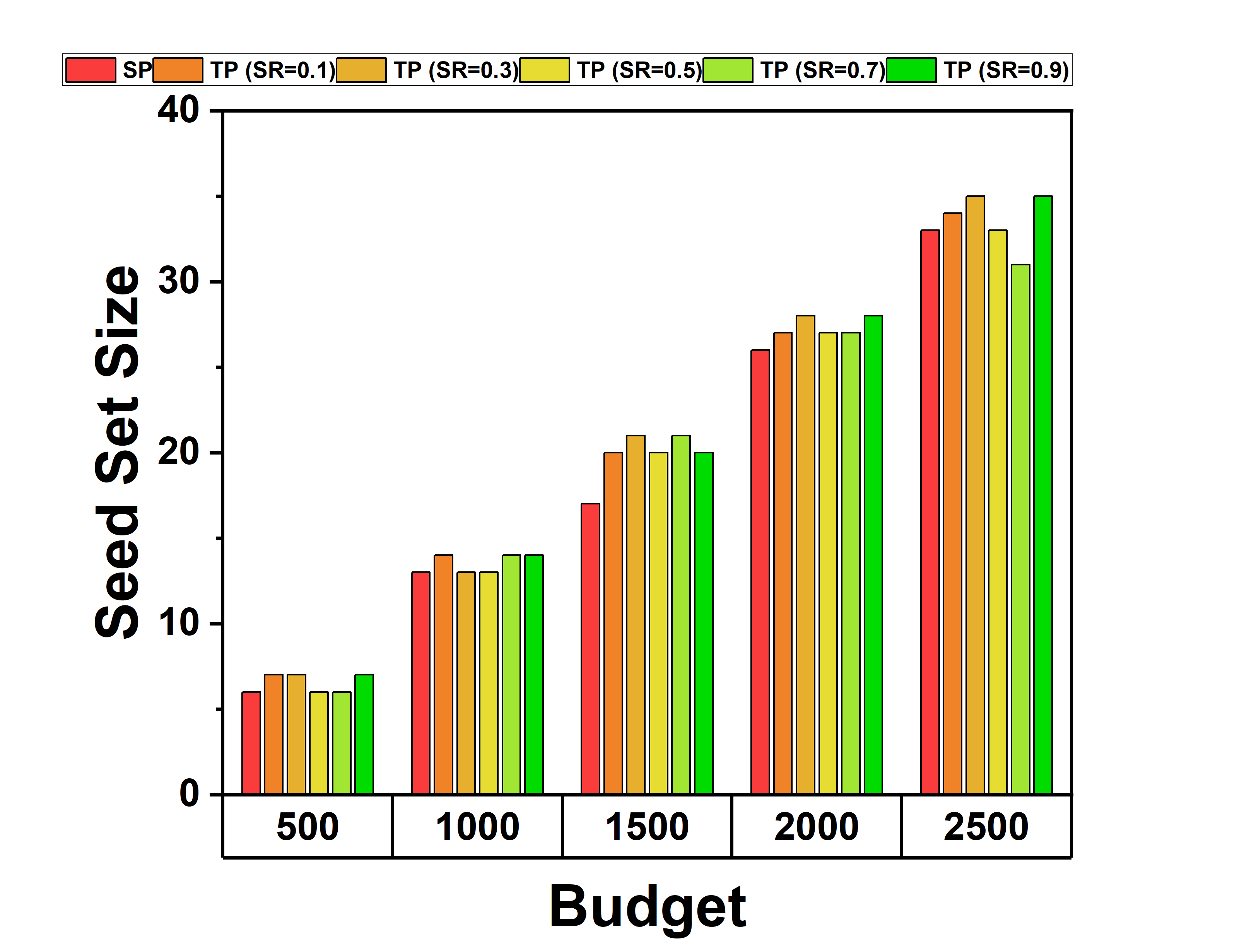}
        \caption{Timestep 4}
    \end{subfigure}

    \vspace{0.5cm}

    \begin{subfigure}[t]{0.3\linewidth}
        \centering
        \includegraphics[width=\linewidth]{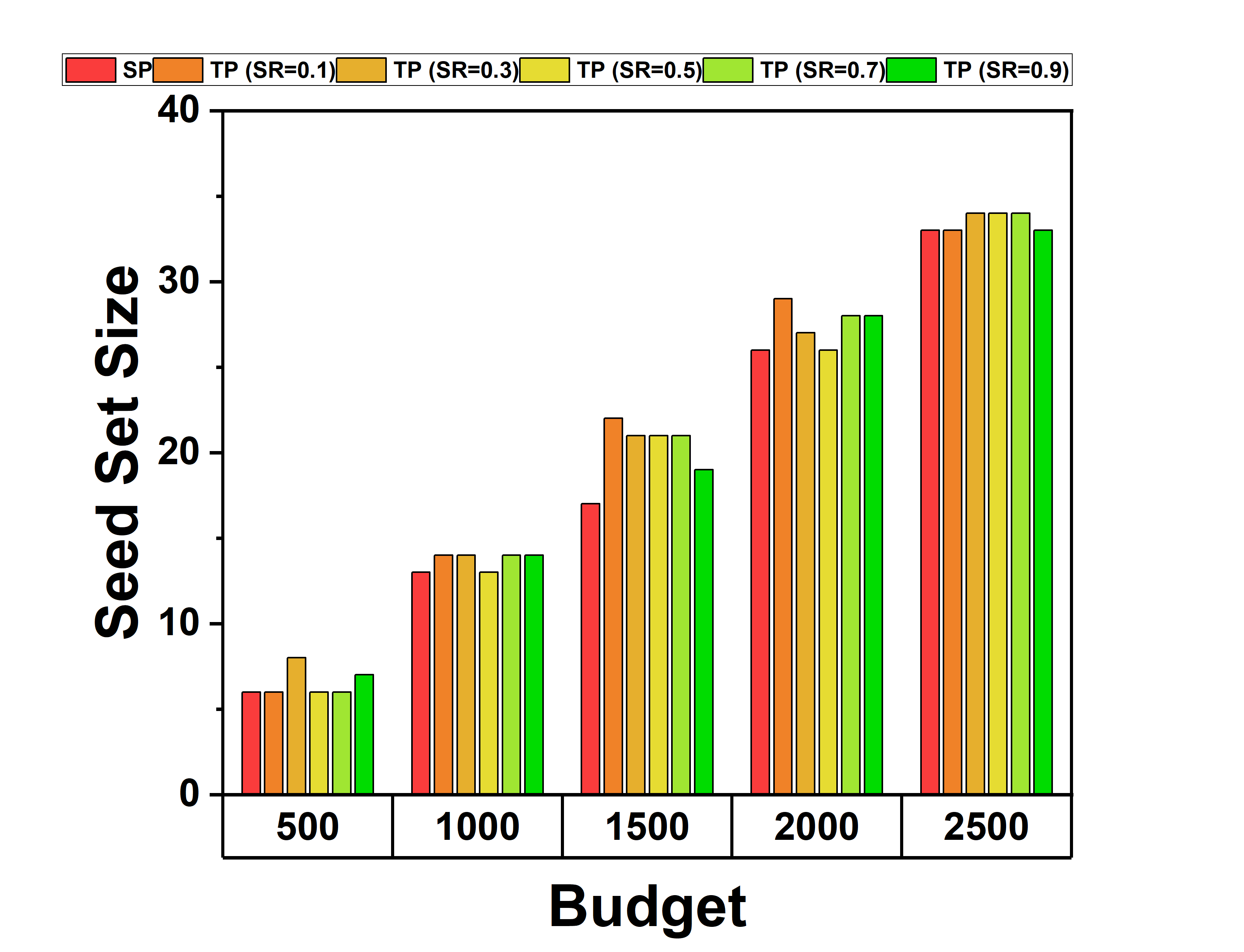}
        \caption{Timestep 6}
    \end{subfigure}
    \hfill
    \begin{subfigure}[t]{0.3\linewidth}
        \centering
        \includegraphics[width=\linewidth]{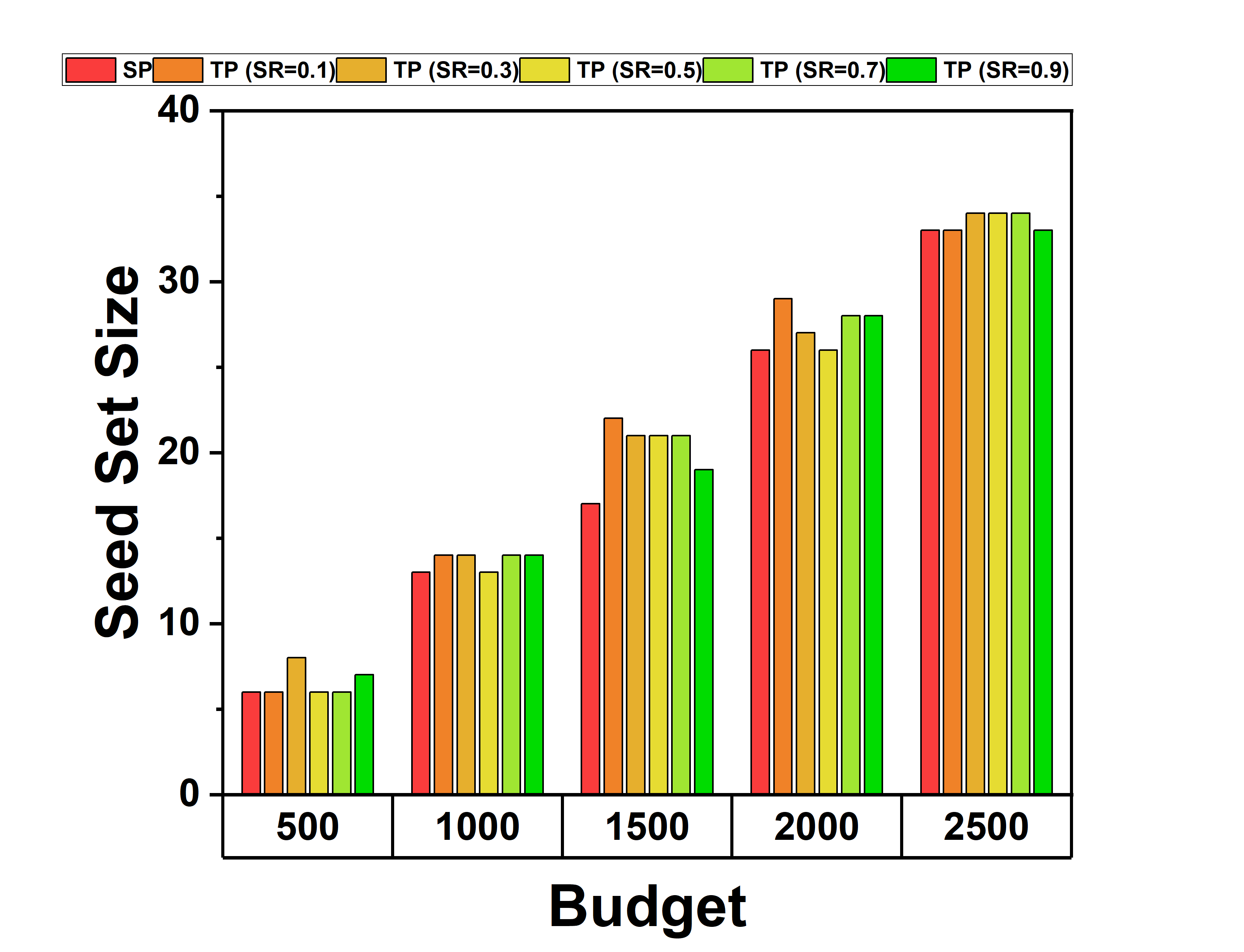}
        \caption{Timestep 8}
    \end{subfigure}
    \hfill
    \begin{subfigure}[t]{0.3\linewidth}
        \centering
        \includegraphics[width=\linewidth]{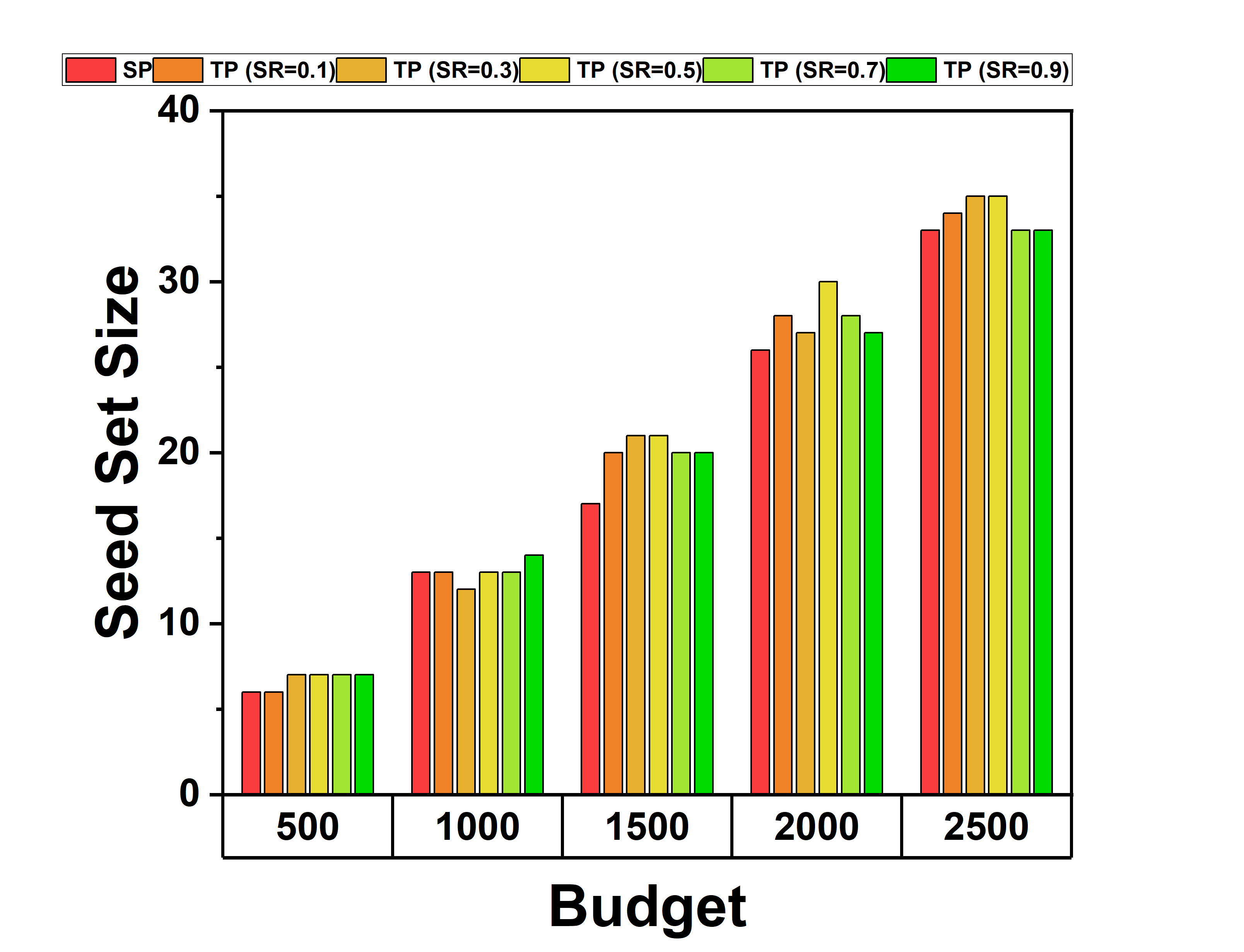}
        \caption{Timestep 10}
    \end{subfigure}

    \caption{Seed Set Size Distribution of Single Phase Vs. Two Phase (Random Algorithm, \textit{Email-Eu-Core} Dataset, Probability Setting - Trivalency)}
    \label{RQ4_T1}
\end{figure}

\begin{figure}[htbp]
    \centering

    \begin{subfigure}[t]{0.3\linewidth}
        \centering
        \includegraphics[width=\linewidth]{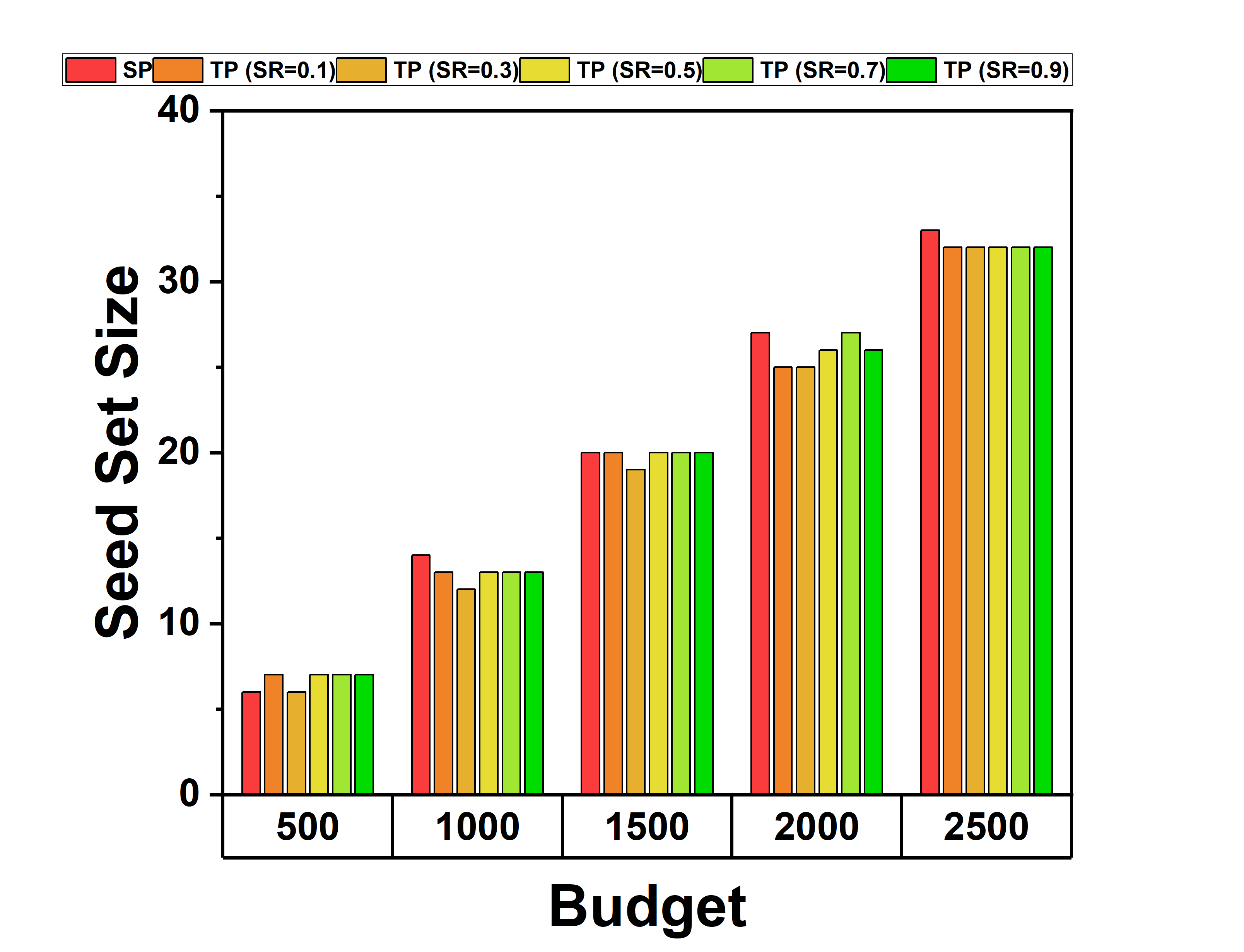}
        \caption{Timestep 2}
    \end{subfigure}
    \hspace{0.05\linewidth}
    \begin{subfigure}[t]{0.3\linewidth}
        \centering
        \includegraphics[width=\linewidth]{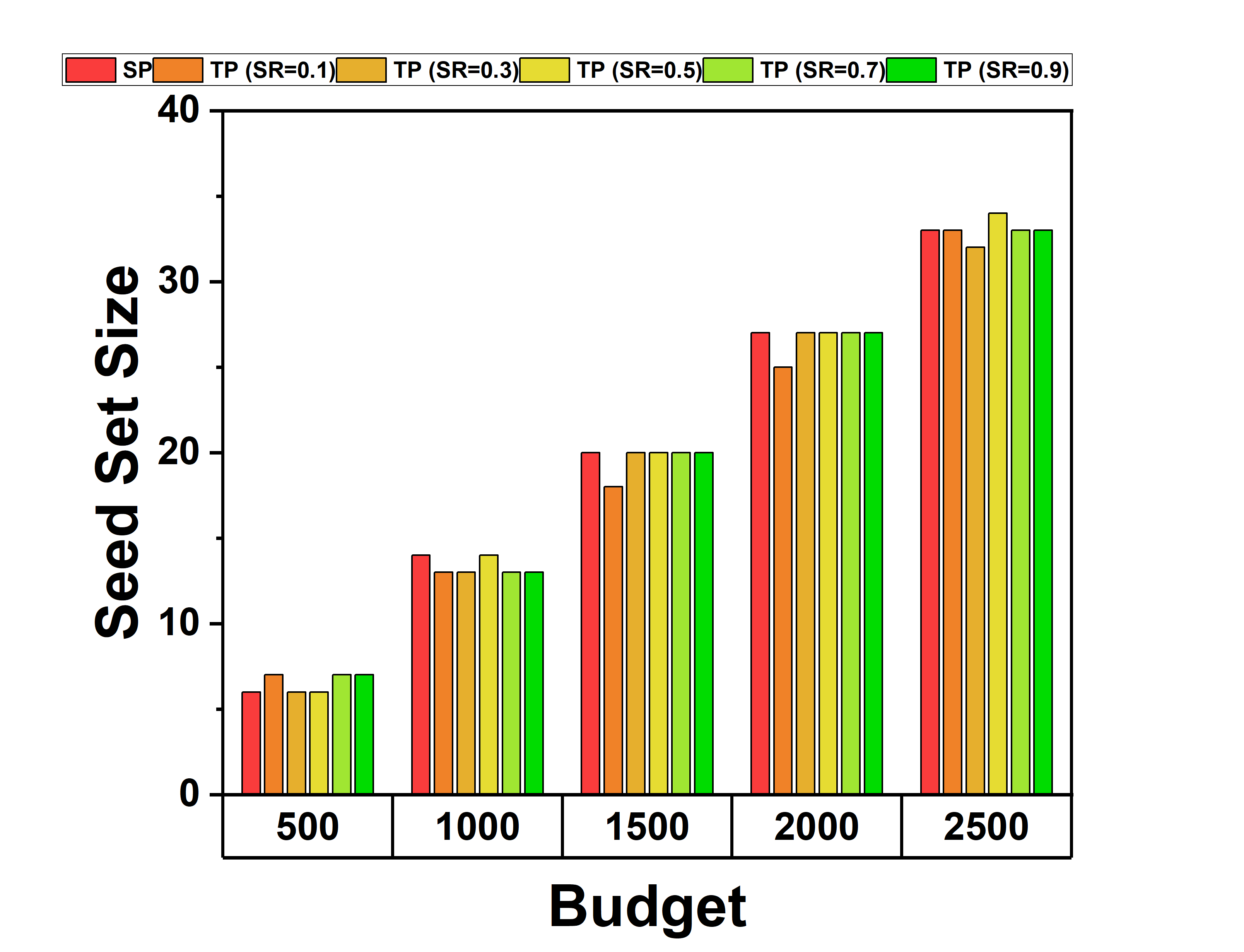}
        \caption{Timestep 4}
    \end{subfigure}

    \vspace{0.5cm}

    \begin{subfigure}[t]{0.3\linewidth}
        \centering
        \includegraphics[width=\linewidth]{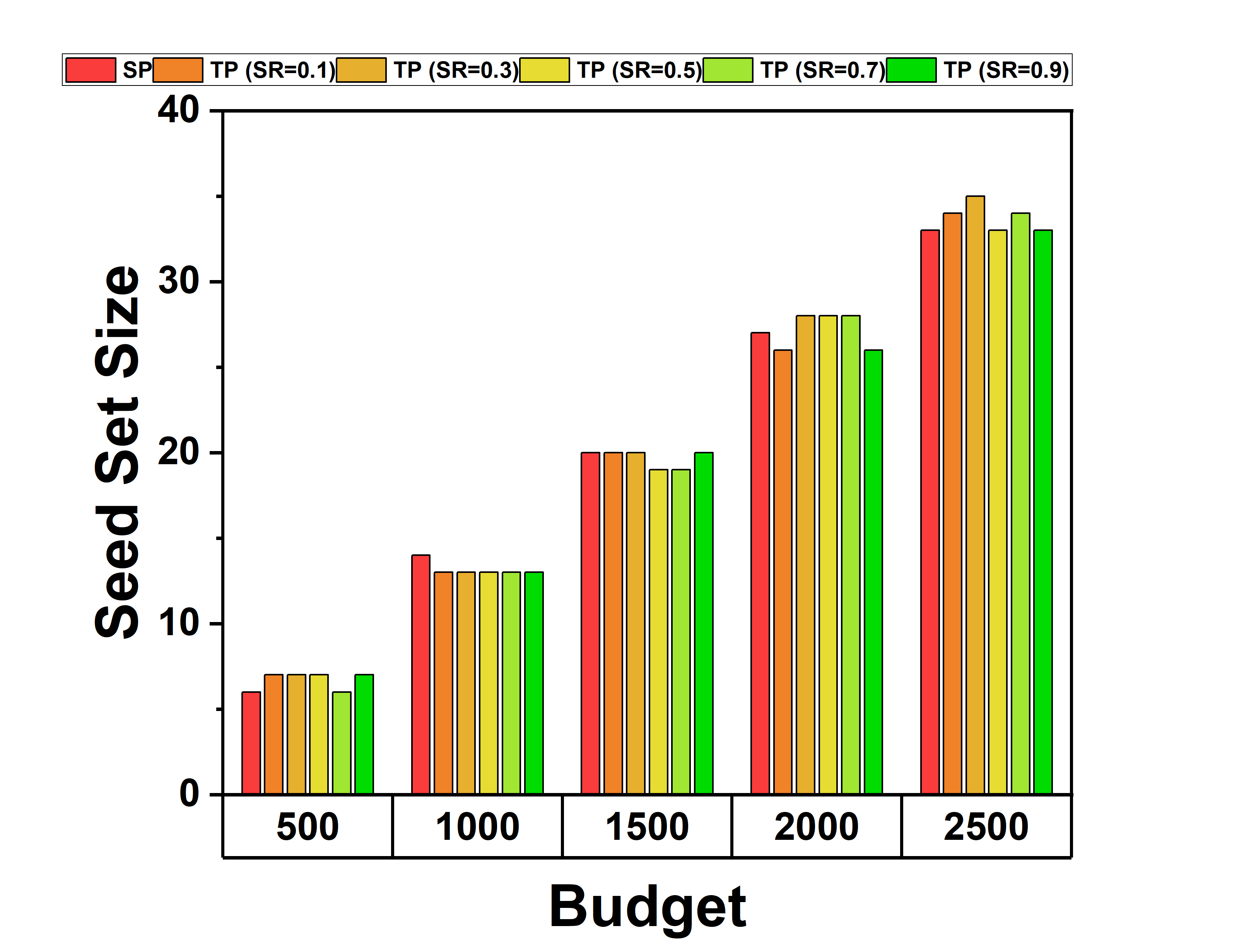}
        \caption{Timestep 6}
    \end{subfigure}
    \hfill
    \begin{subfigure}[t]{0.3\linewidth}
        \centering
        \includegraphics[width=\linewidth]{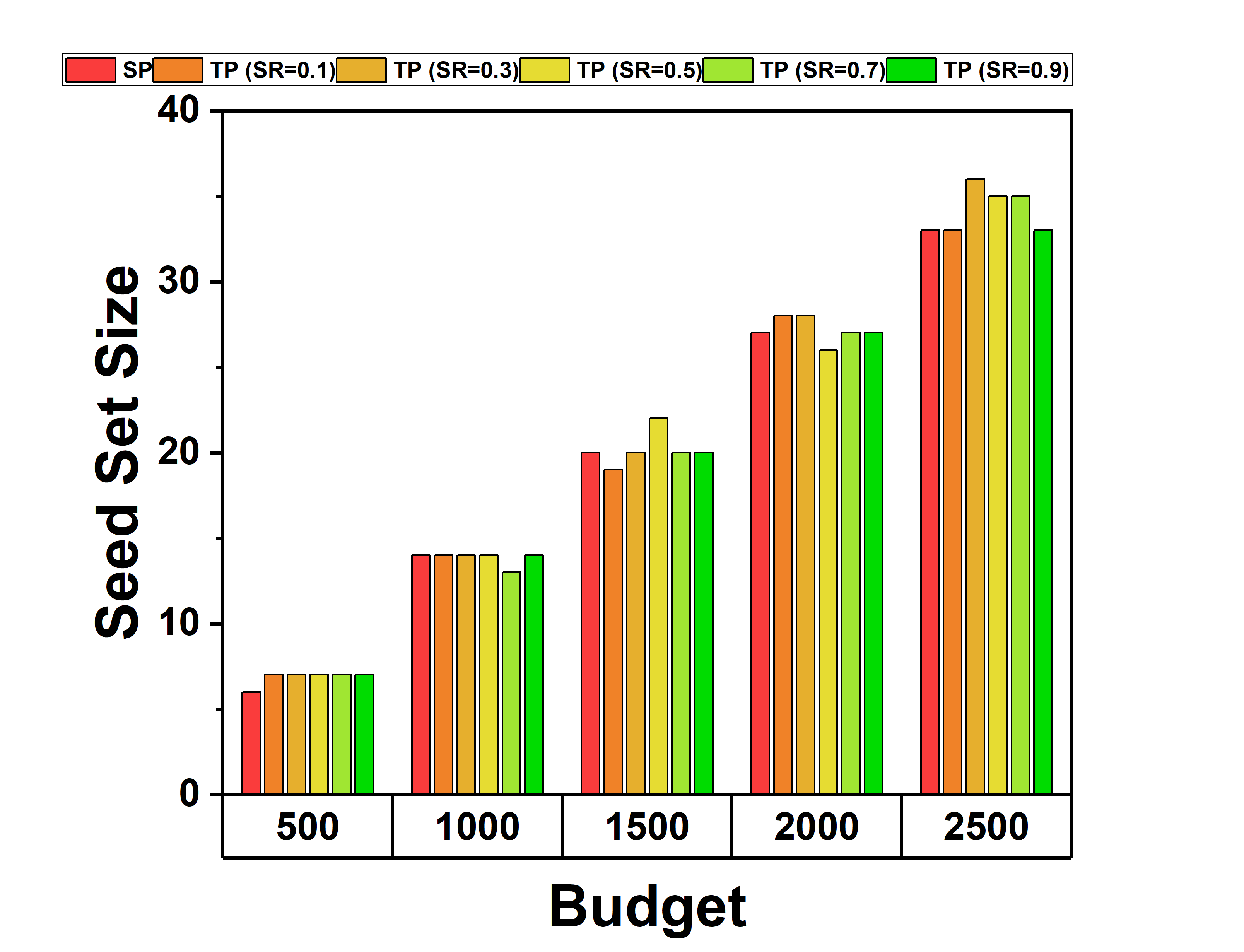}
        \caption{Timestep 8}
    \end{subfigure}
    \hfill
    \begin{subfigure}[t]{0.3\linewidth}
        \centering
        \includegraphics[width=\linewidth]{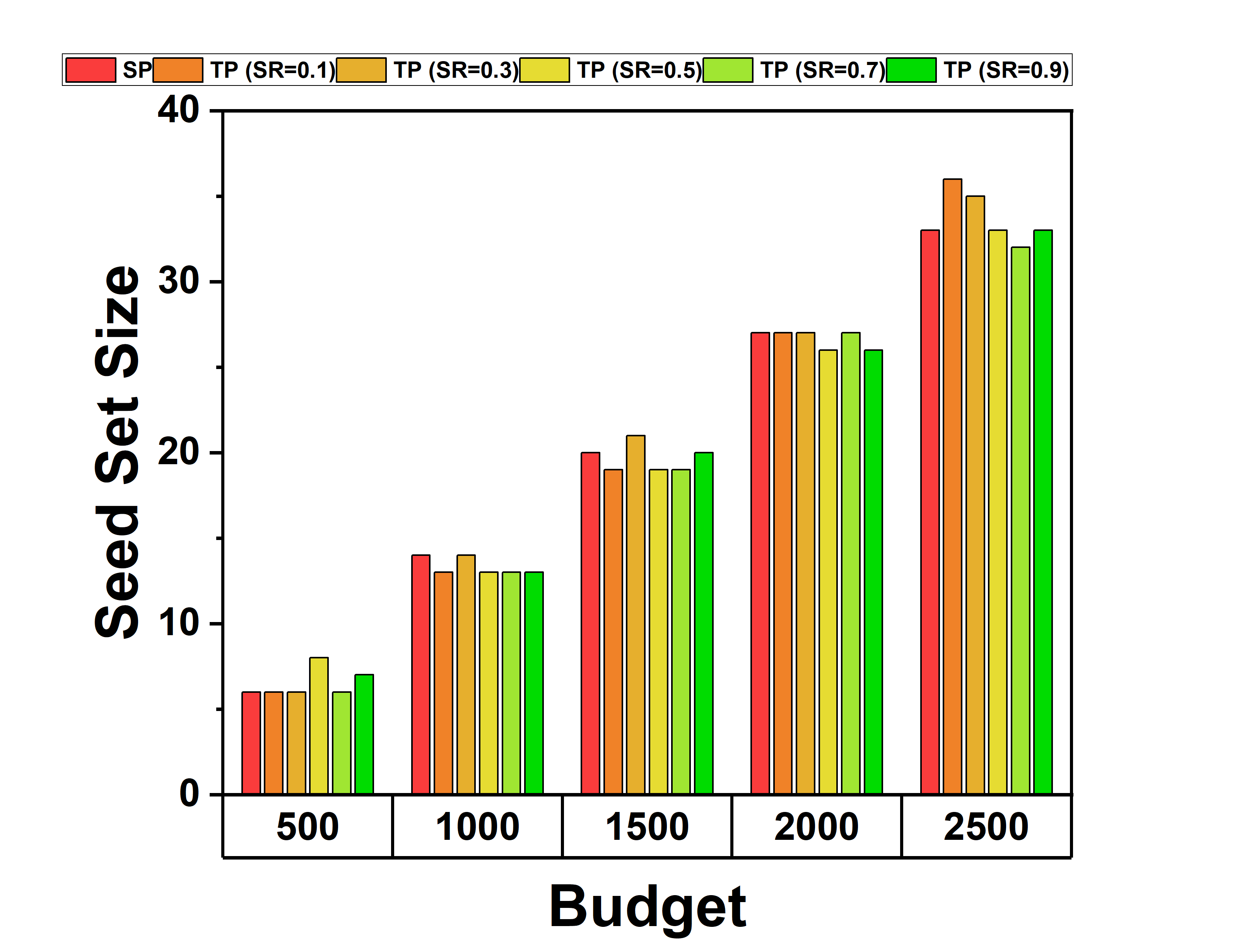}
        \caption{Timestep 10}
    \end{subfigure}

    \caption{Seed Set Size Distribution of Single Phase Vs. Two Phase (High Degree Algorithm, \textit{Email-Eu-Core} Dataset, Probability Setting - Trivalency)}
    \label{RQ4_T2}
\end{figure}

\begin{figure}[htbp]
    \centering

    \begin{subfigure}[t]{0.3\linewidth}
        \centering
        \includegraphics[width=\linewidth]{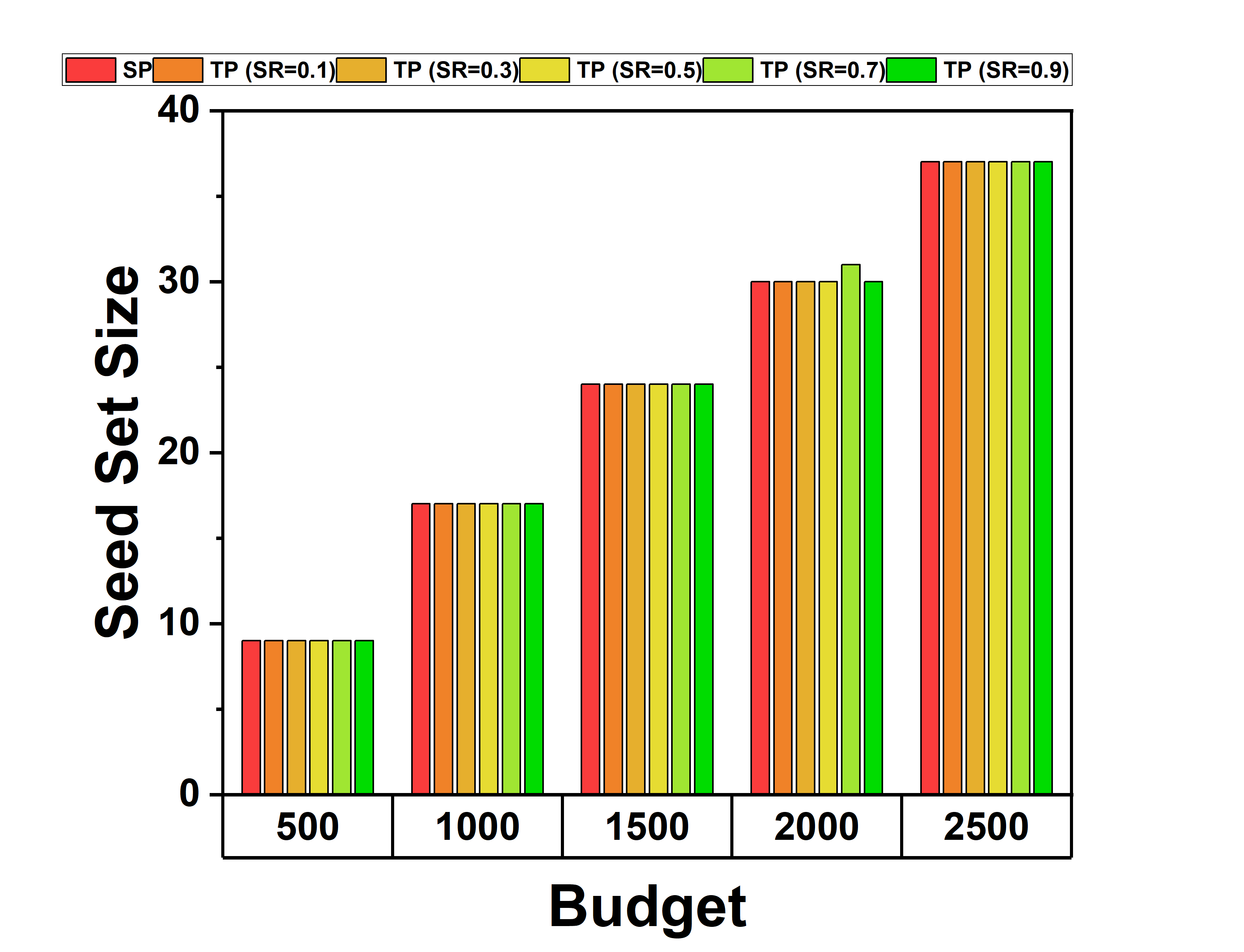}
        \caption{Timestep 2}
    \end{subfigure}
    \hspace{0.05\linewidth}
    \begin{subfigure}[t]{0.3\linewidth}
        \centering
        \includegraphics[width=\linewidth]{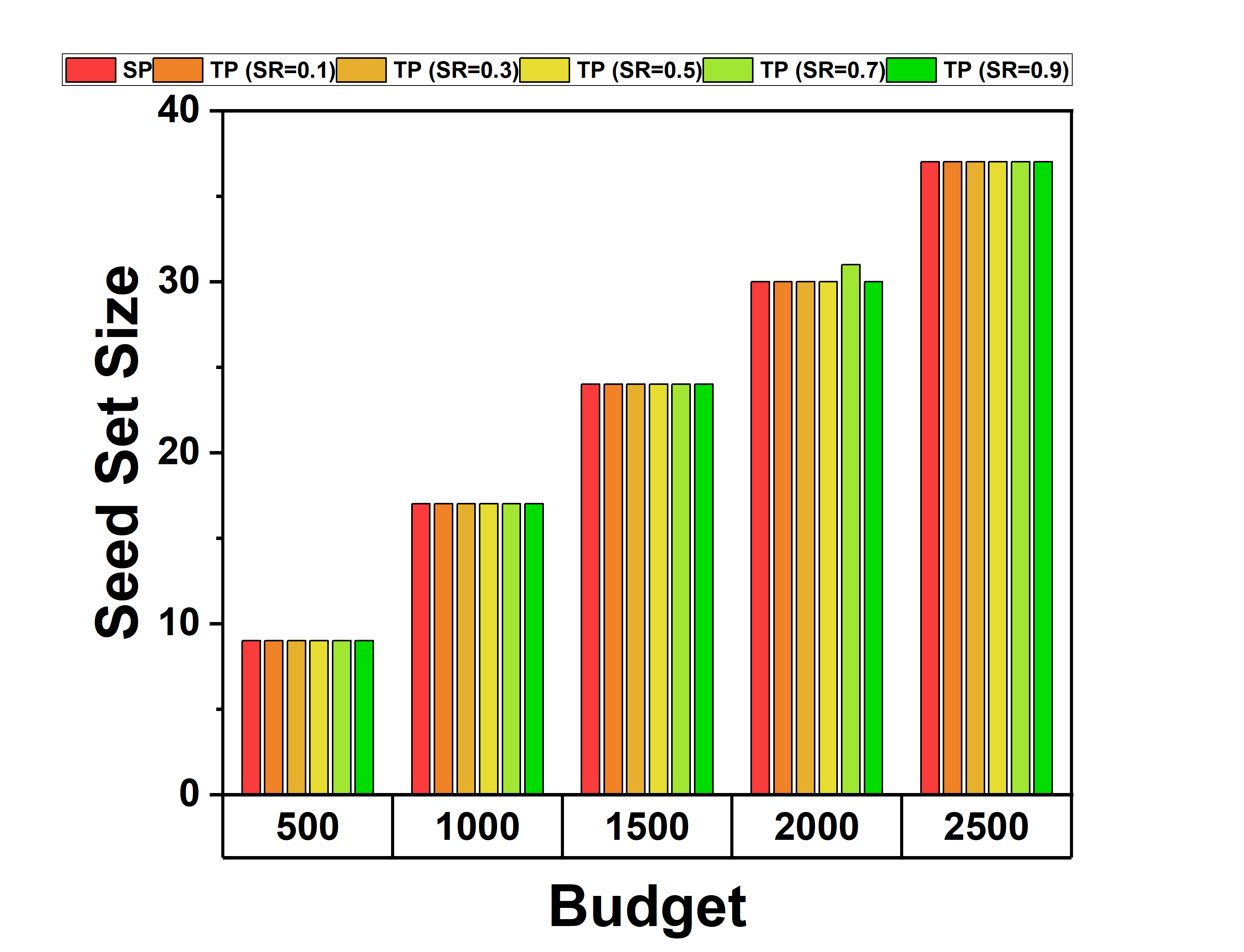}
        \caption{Timestep 4}
    \end{subfigure}

    \vspace{0.5cm}

    \begin{subfigure}[t]{0.3\linewidth}
        \centering
        \includegraphics[width=\linewidth]{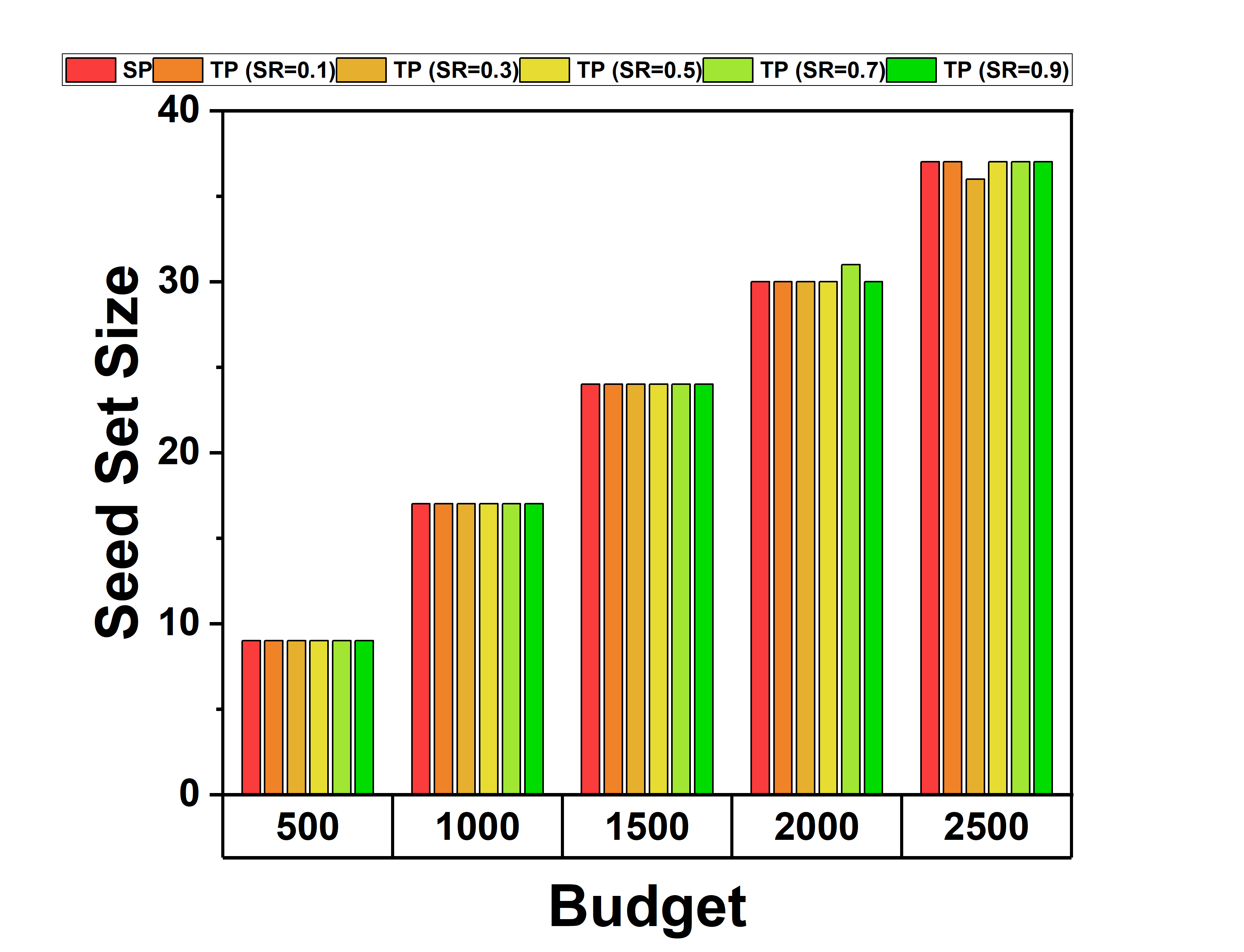}
        \caption{Timestep 6}
    \end{subfigure}
    \hfill
    \begin{subfigure}[t]{0.3\linewidth}
        \centering
        \includegraphics[width=\linewidth]{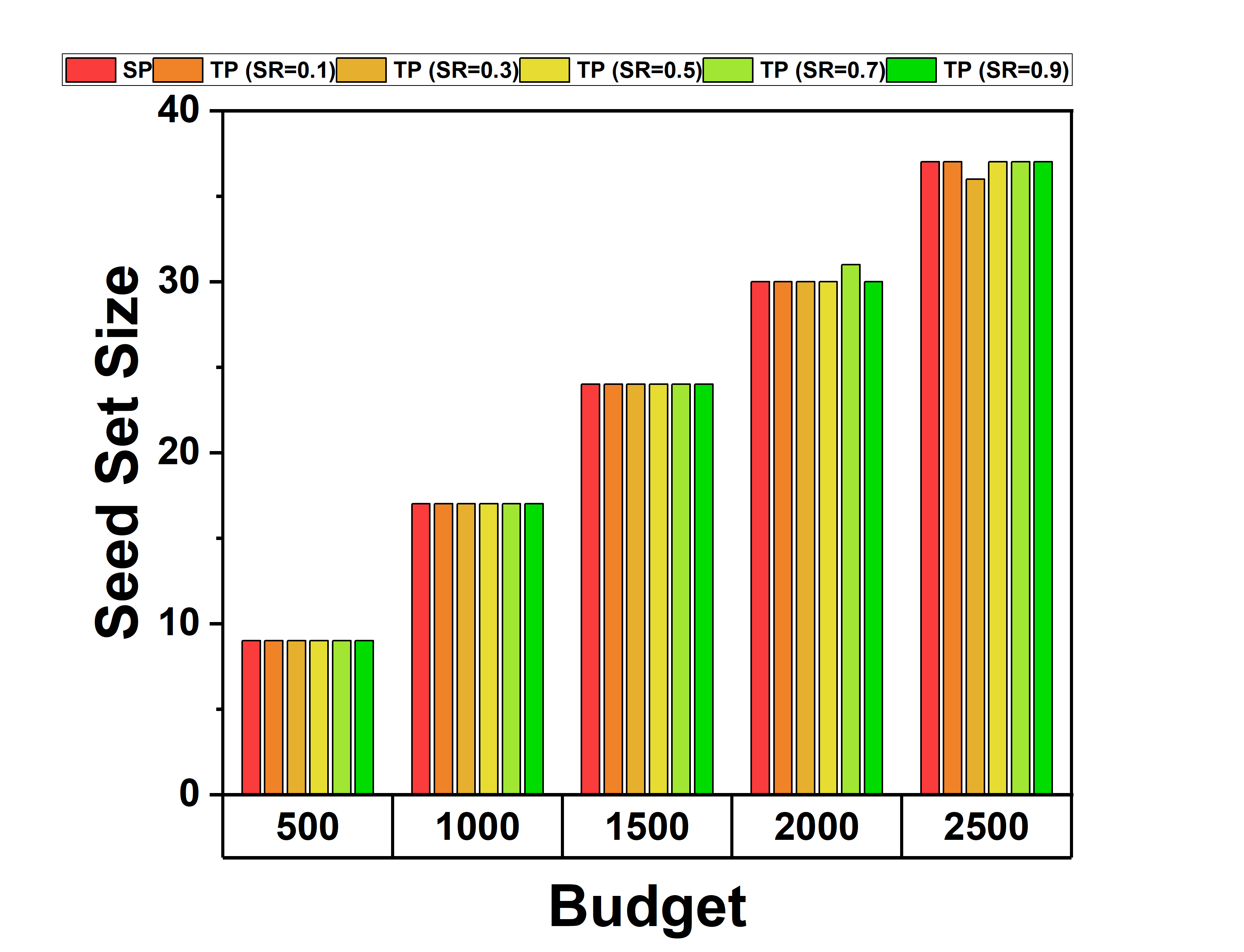}
        \caption{Timestep 8}
    \end{subfigure}
    \hfill
    \begin{subfigure}[t]{0.3\linewidth}
        \centering
        \includegraphics[width=\linewidth]{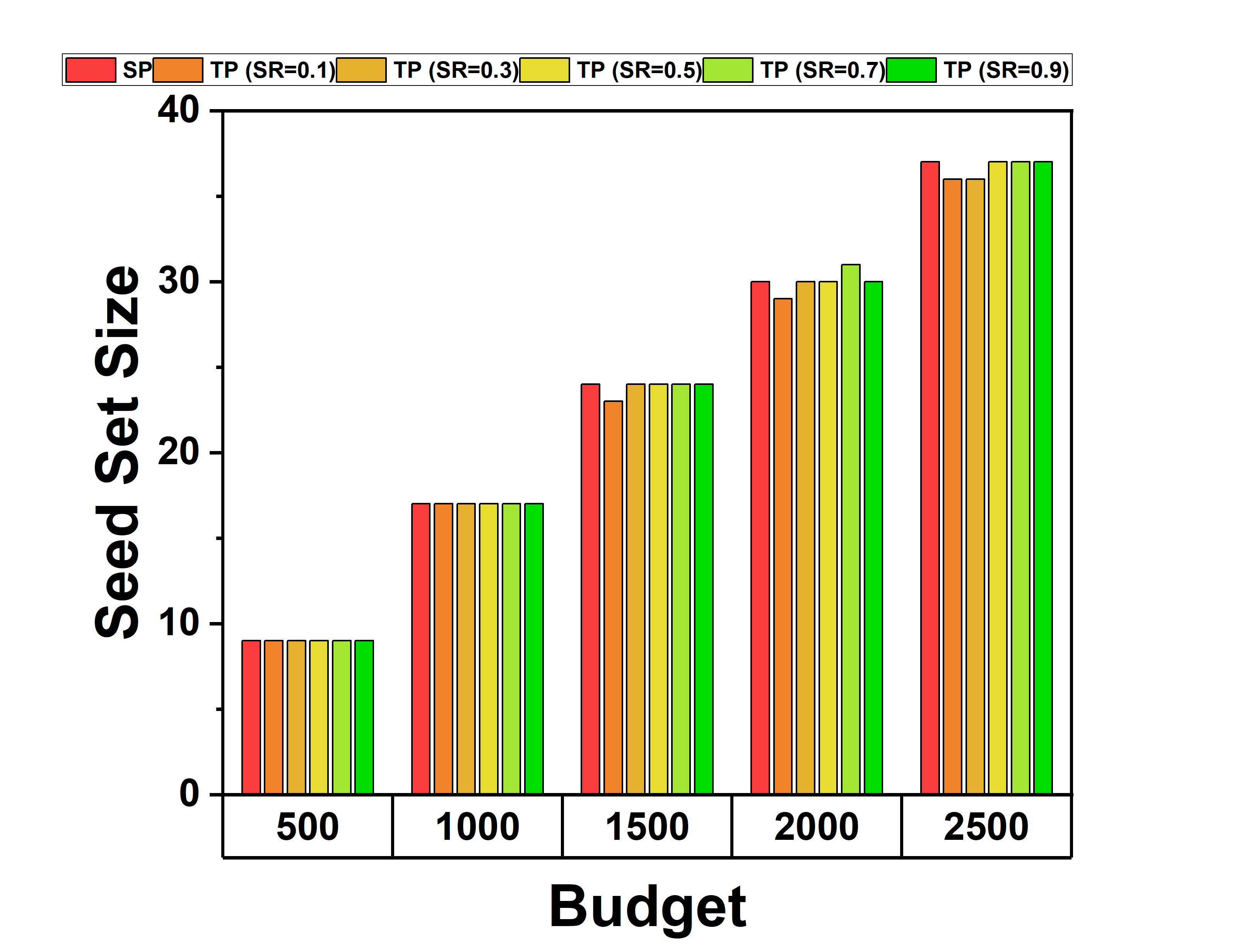}
        \caption{Timestep 10}
    \end{subfigure}

    \caption{Seed Set Size Distribution of Single Phase Vs. Two Phase (Clustering Coefficient  Algorithm, \textit{Email-Eu-Core} Dataset, Probability Setting - Trivalency)}
    \label{RQ4_T3}
\end{figure}

\begin{figure}[htbp]
    \centering

    \begin{subfigure}[t]{0.3\linewidth}
        \centering
        \includegraphics[width=\linewidth]{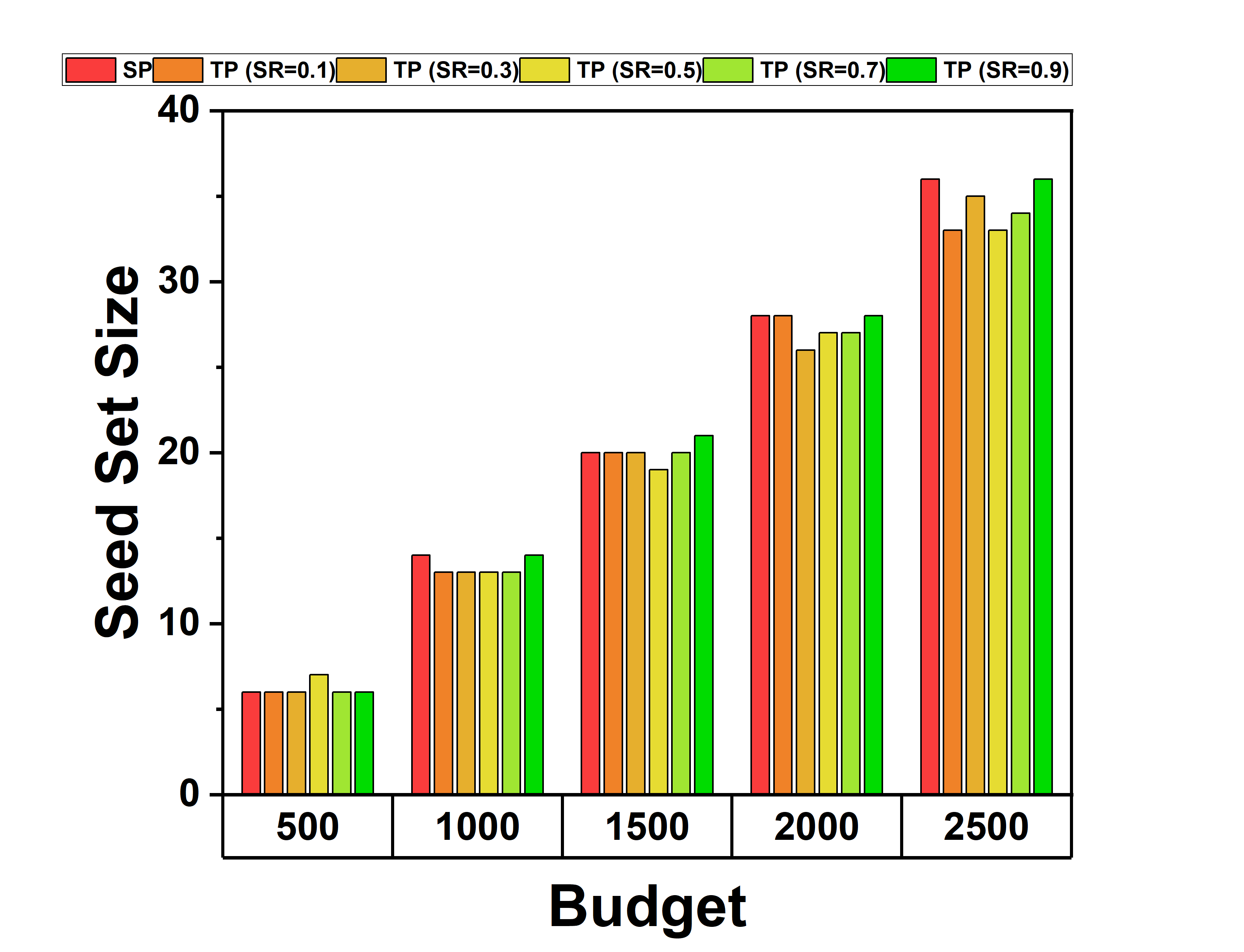}
        \caption{Timestep 2}
    \end{subfigure}
    \hspace{0.05\linewidth}
    \begin{subfigure}[t]{0.3\linewidth}
        \centering
        \includegraphics[width=\linewidth]{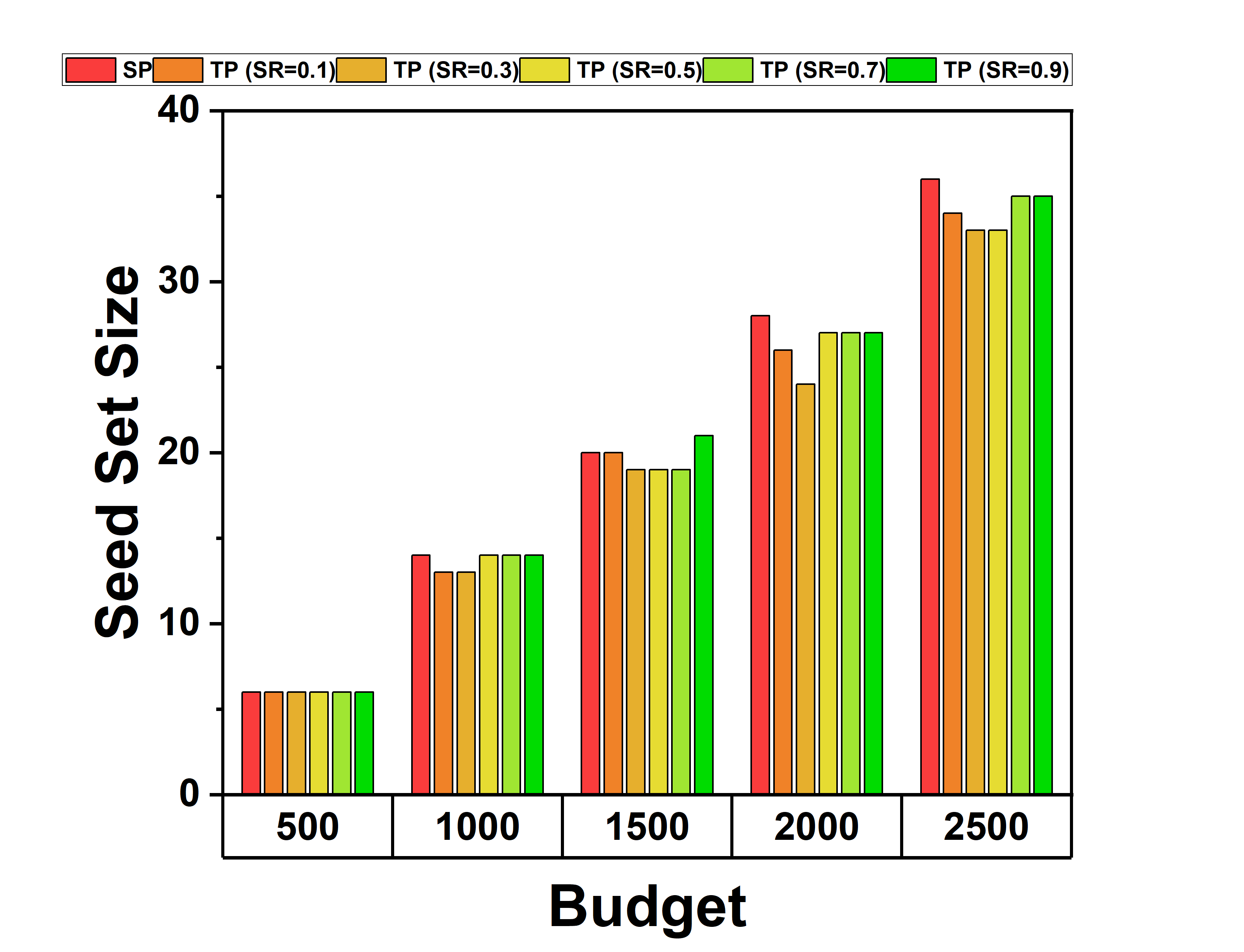}
        \caption{Timestep 4}
    \end{subfigure}

    \vspace{0.5cm}

    \begin{subfigure}[t]{0.3\linewidth}
        \centering
        \includegraphics[width=\linewidth]{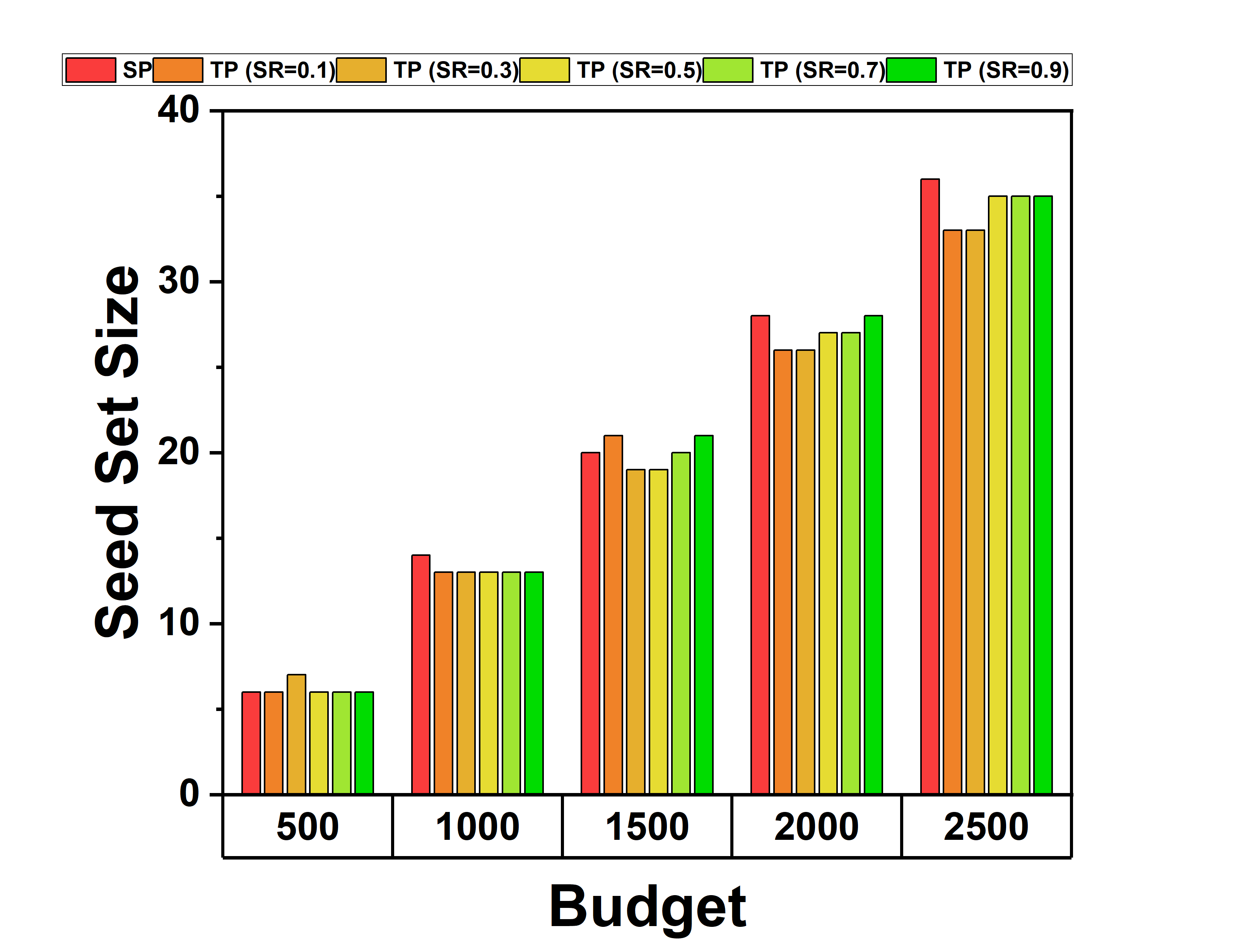}
        \caption{Timestep 6}
    \end{subfigure}
    \hfill
    \begin{subfigure}[t]{0.3\linewidth}
        \centering
        \includegraphics[width=\linewidth]{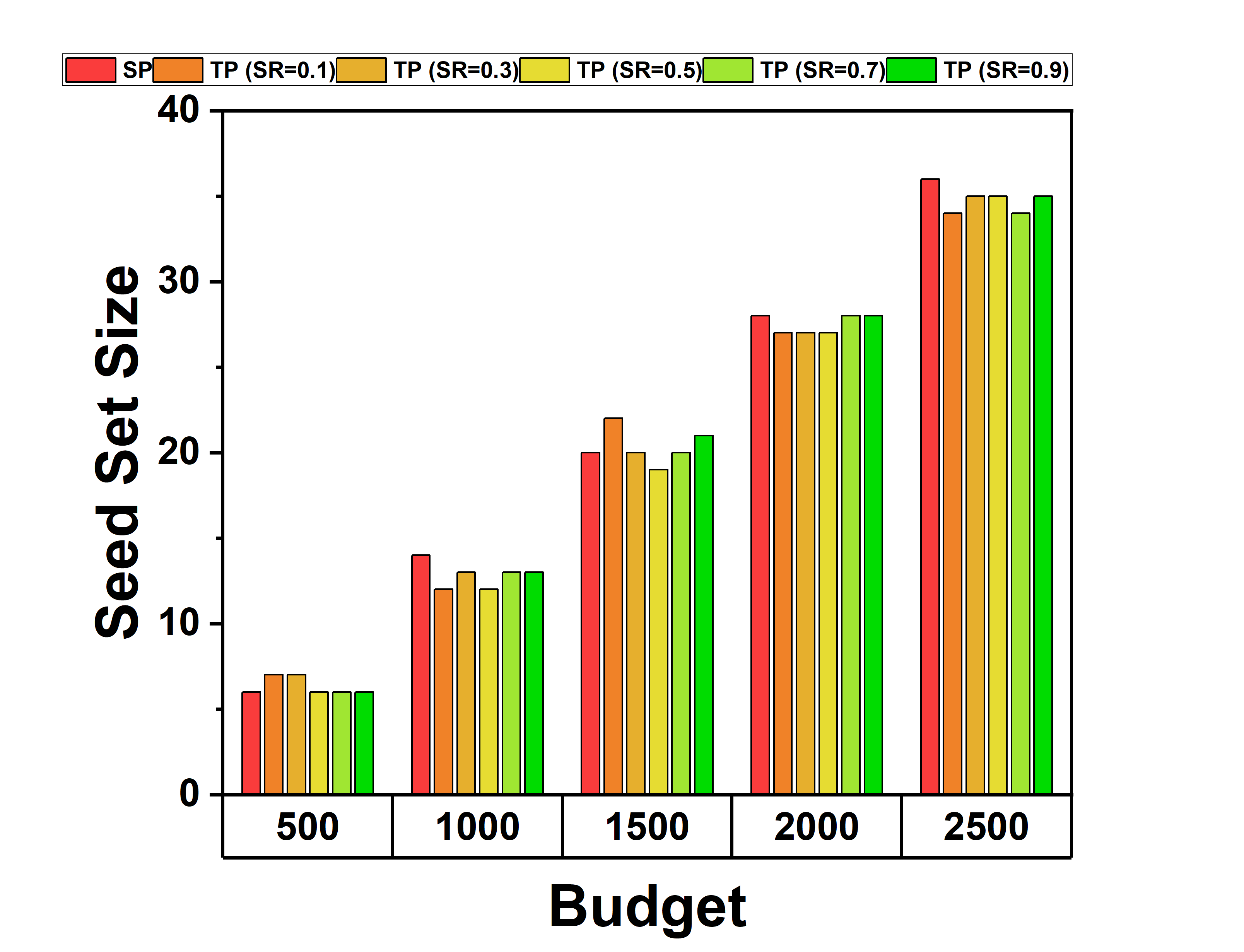}
        \caption{Timestep 8}
    \end{subfigure}
    \hfill
    \begin{subfigure}[t]{0.3\linewidth}
        \centering
        \includegraphics[width=\linewidth]{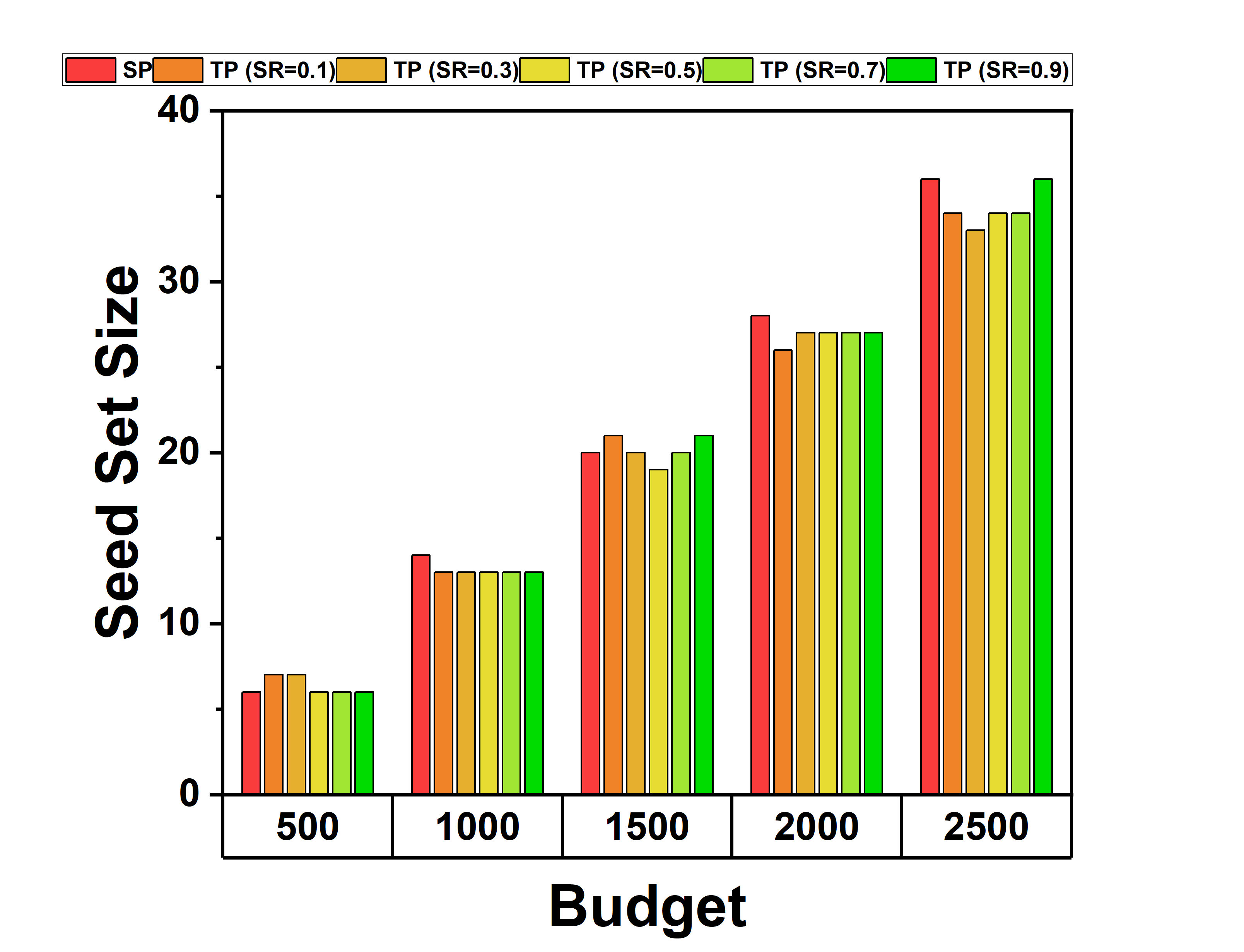}
        \caption{Timestep 10}
    \end{subfigure}

    \caption{Seed Set Size Distribution of Single Phase Vs. Two Phase (Degree Discount Algorithm, \textit{Email-Eu-Core} Dataset, Probability Setting - Trivalency)}
    \label{RQ4_T4}
\end{figure}

\begin{figure}[htbp]
    \centering

    \begin{subfigure}[t]{0.3\linewidth}
        \centering
        \includegraphics[width=\linewidth]{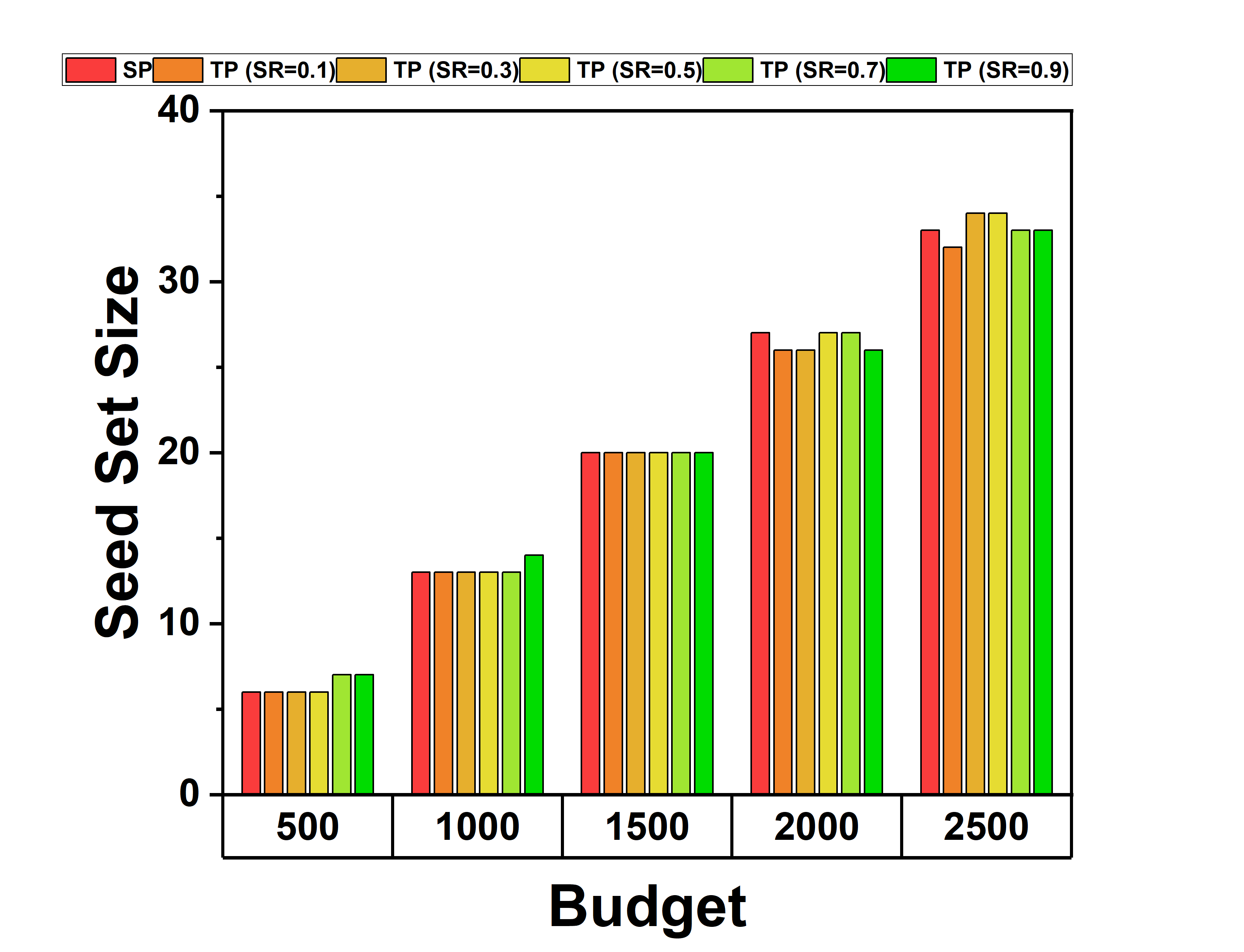}
        \caption{Timestep 2}
    \end{subfigure}
    \hspace{0.05\linewidth}
    \begin{subfigure}[t]{0.3\linewidth}
        \centering
        \includegraphics[width=\linewidth]{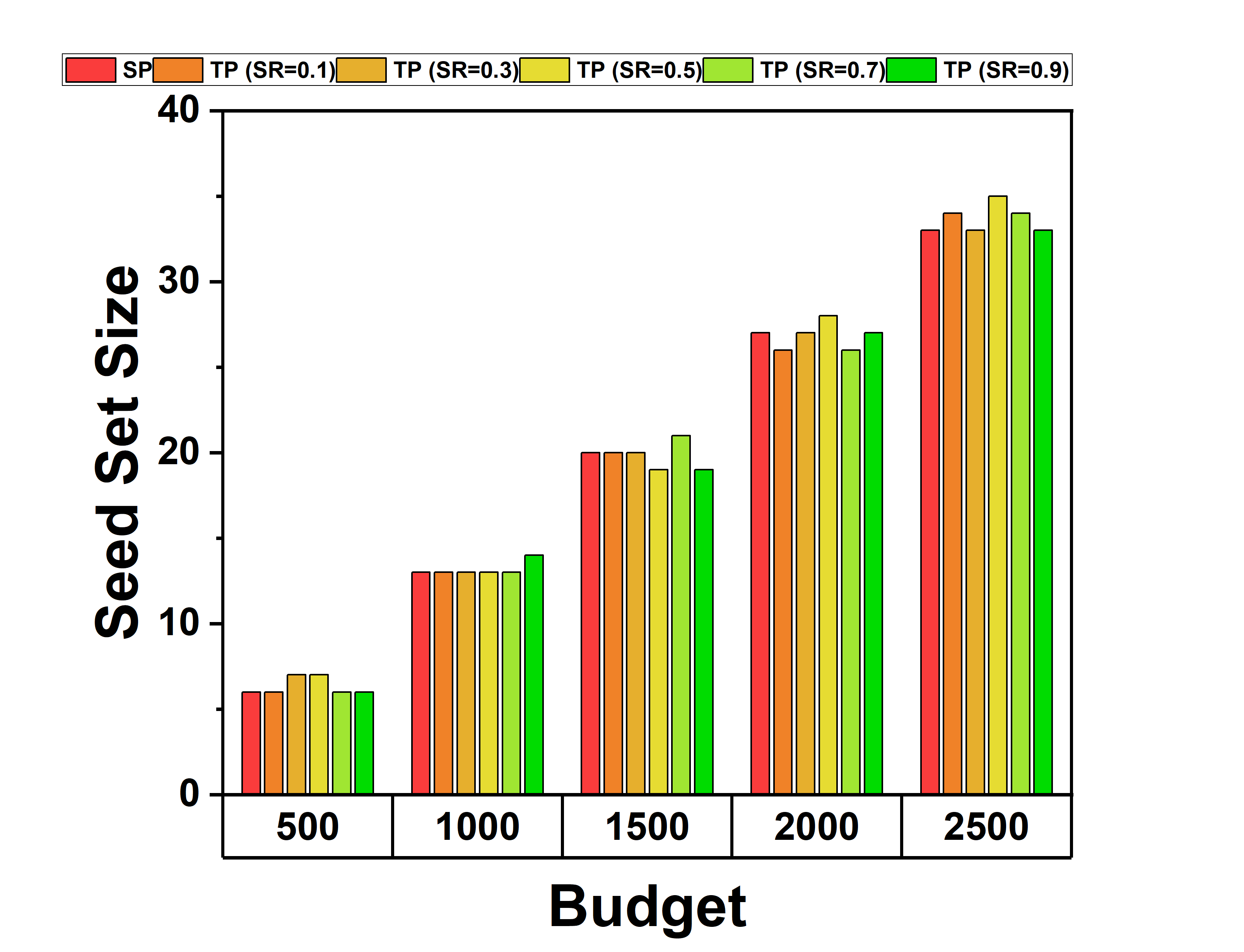}
        \caption{Timestep 4}
    \end{subfigure}

    \vspace{0.5cm}

    \begin{subfigure}[t]{0.3\linewidth}
        \centering
        \includegraphics[width=\linewidth]{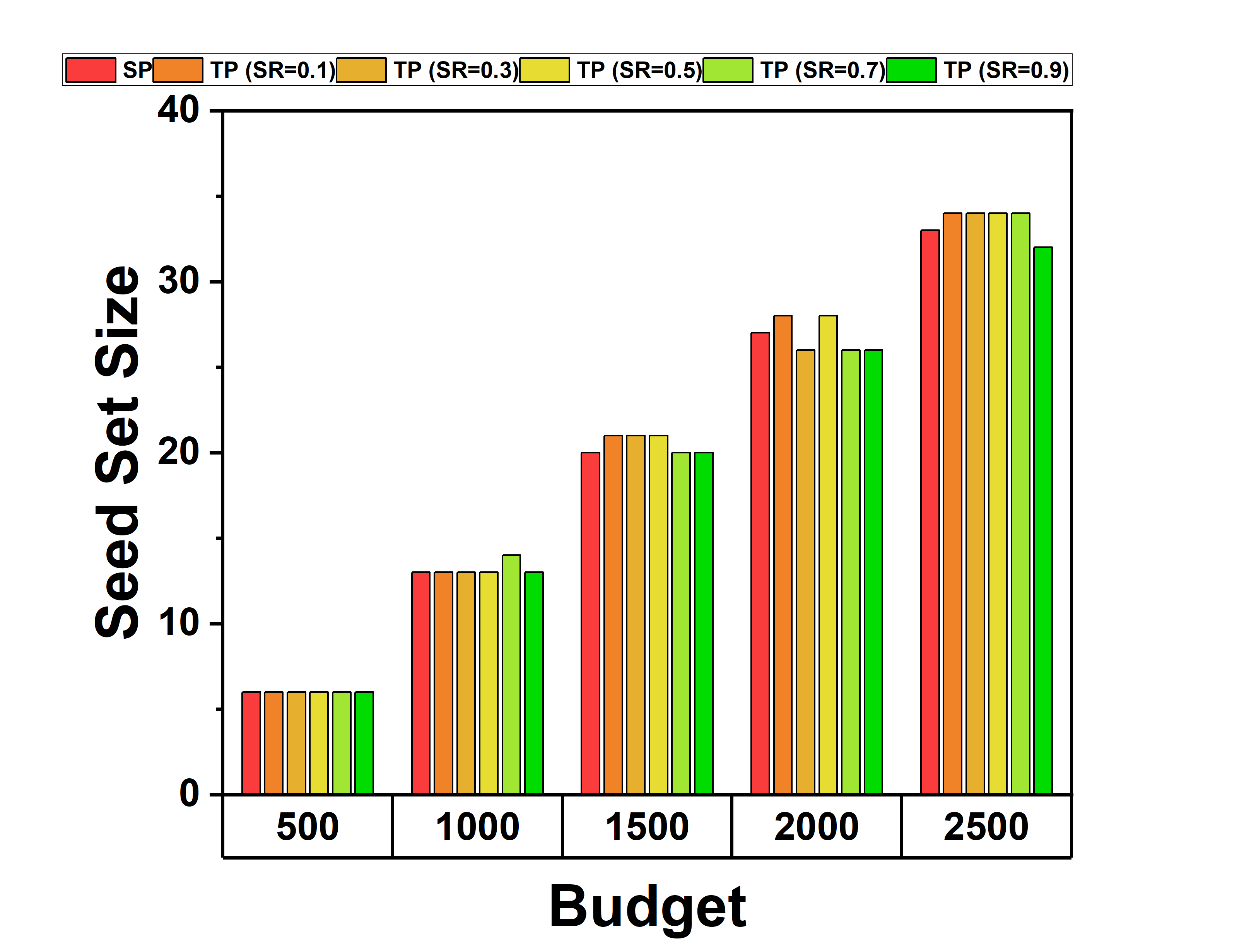}
        \caption{Timestep 6}
    \end{subfigure}
    \hfill
    \begin{subfigure}[t]{0.3\linewidth}
        \centering
        \includegraphics[width=\linewidth]{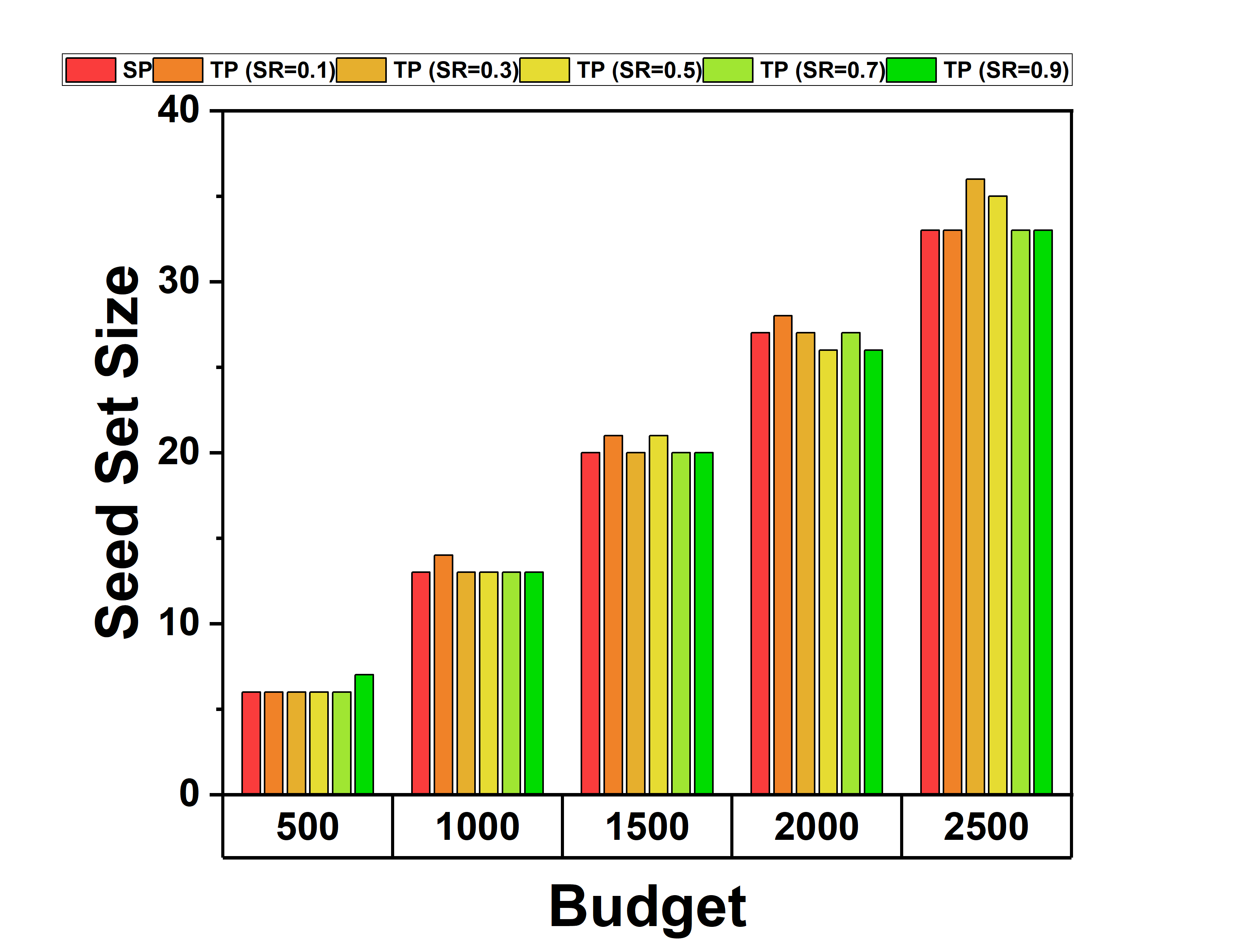}
        \caption{Timestep 8}
    \end{subfigure}
    \hfill
    \begin{subfigure}[t]{0.3\linewidth}
        \centering
        \includegraphics[width=\linewidth]{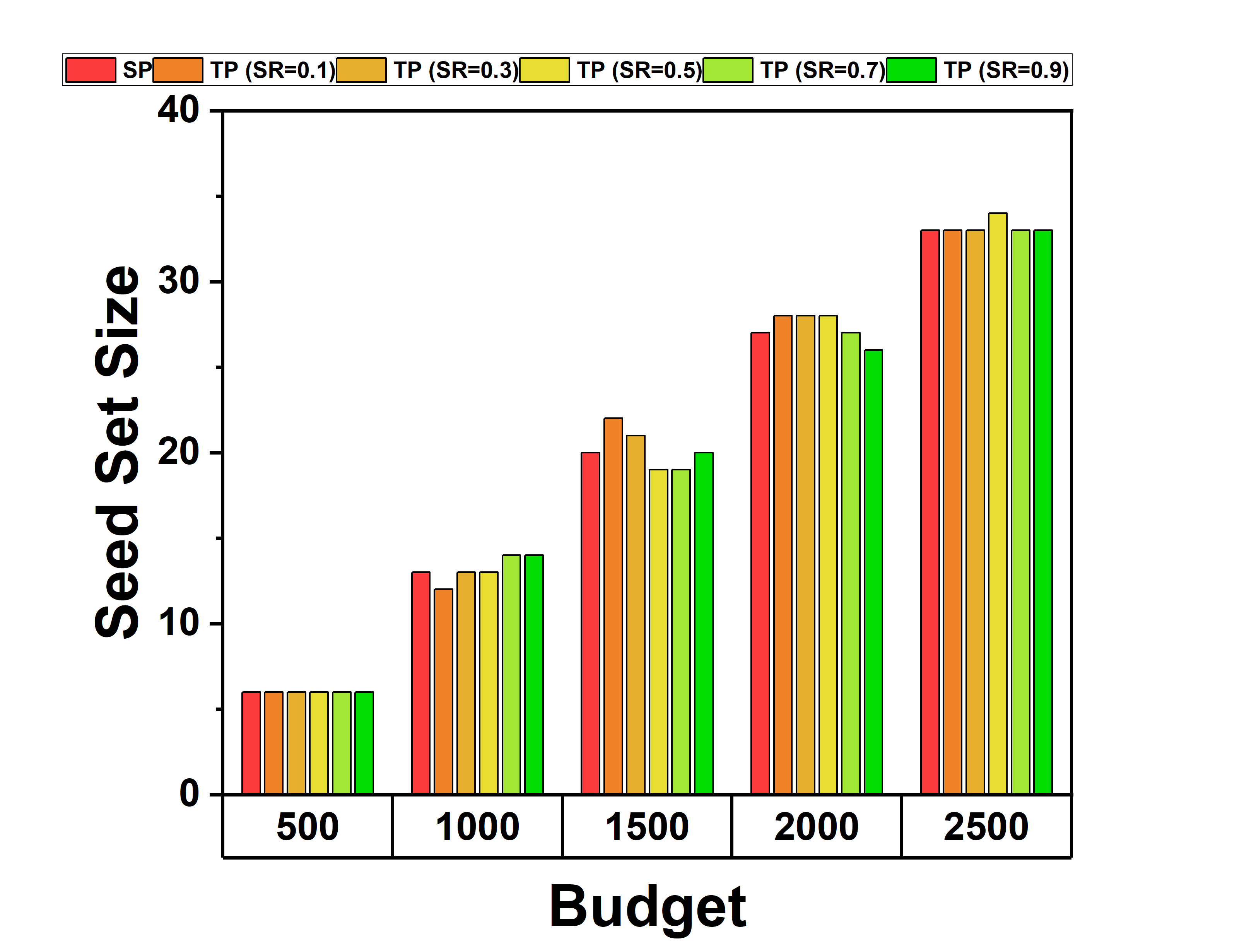}
        \caption{Timestep 10}
    \end{subfigure}

    \caption{Seed Set Size Distribution of Single Phase Vs. Two Phase (Single Discount Algorithm, \textit{Email-Eu-Core} Dataset, Probability Setting - Trivalency)}
    \label{RQ4_T5}
\end{figure}

\begin{figure}[htbp]
    \centering

    \begin{subfigure}[t]{0.3\linewidth}
        \centering
        \includegraphics[width=\linewidth]{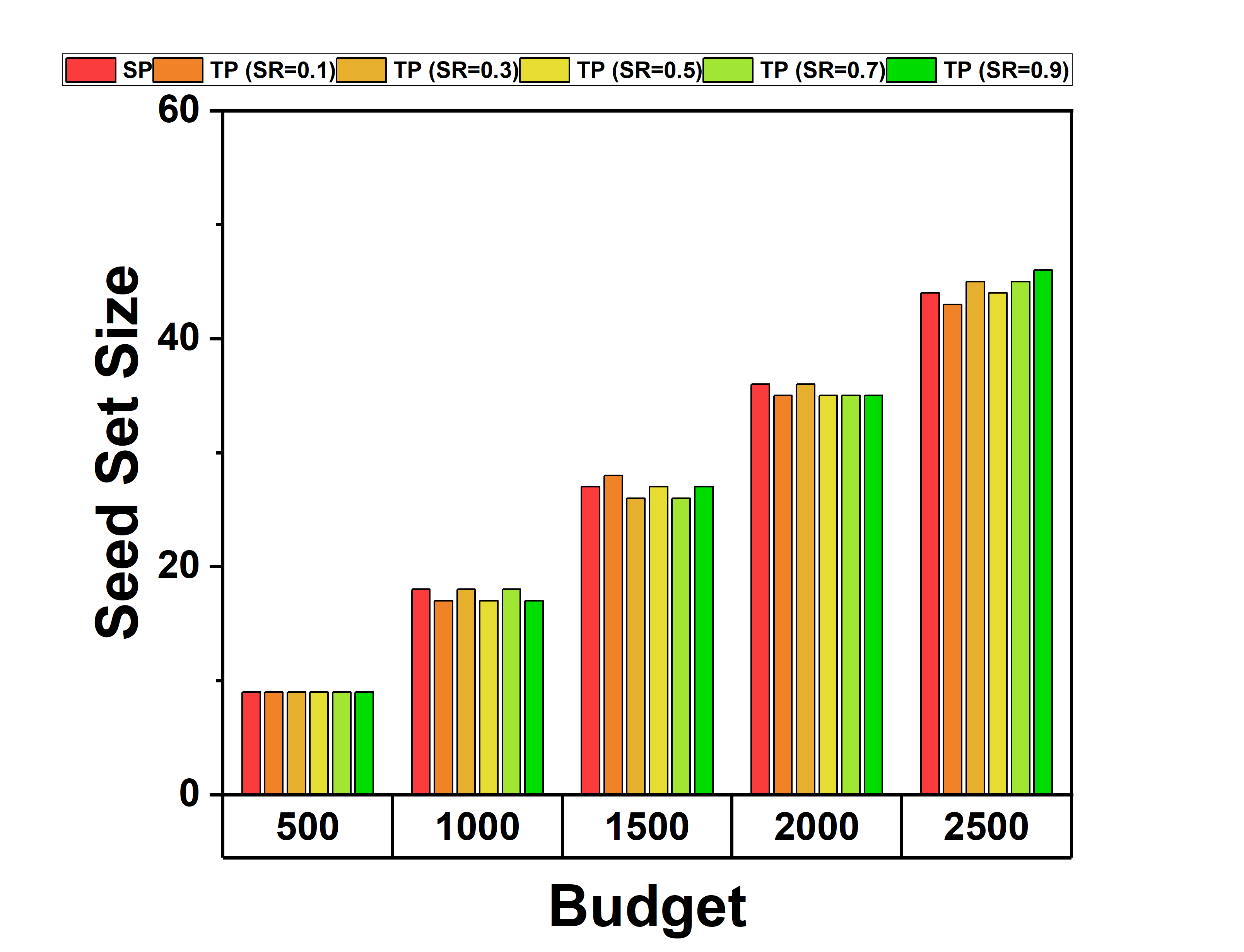}
        \caption{Timestep 2}
    \end{subfigure}
    \hspace{0.05\linewidth}
    \begin{subfigure}[t]{0.3\linewidth}
        \centering
        \includegraphics[width=\linewidth]{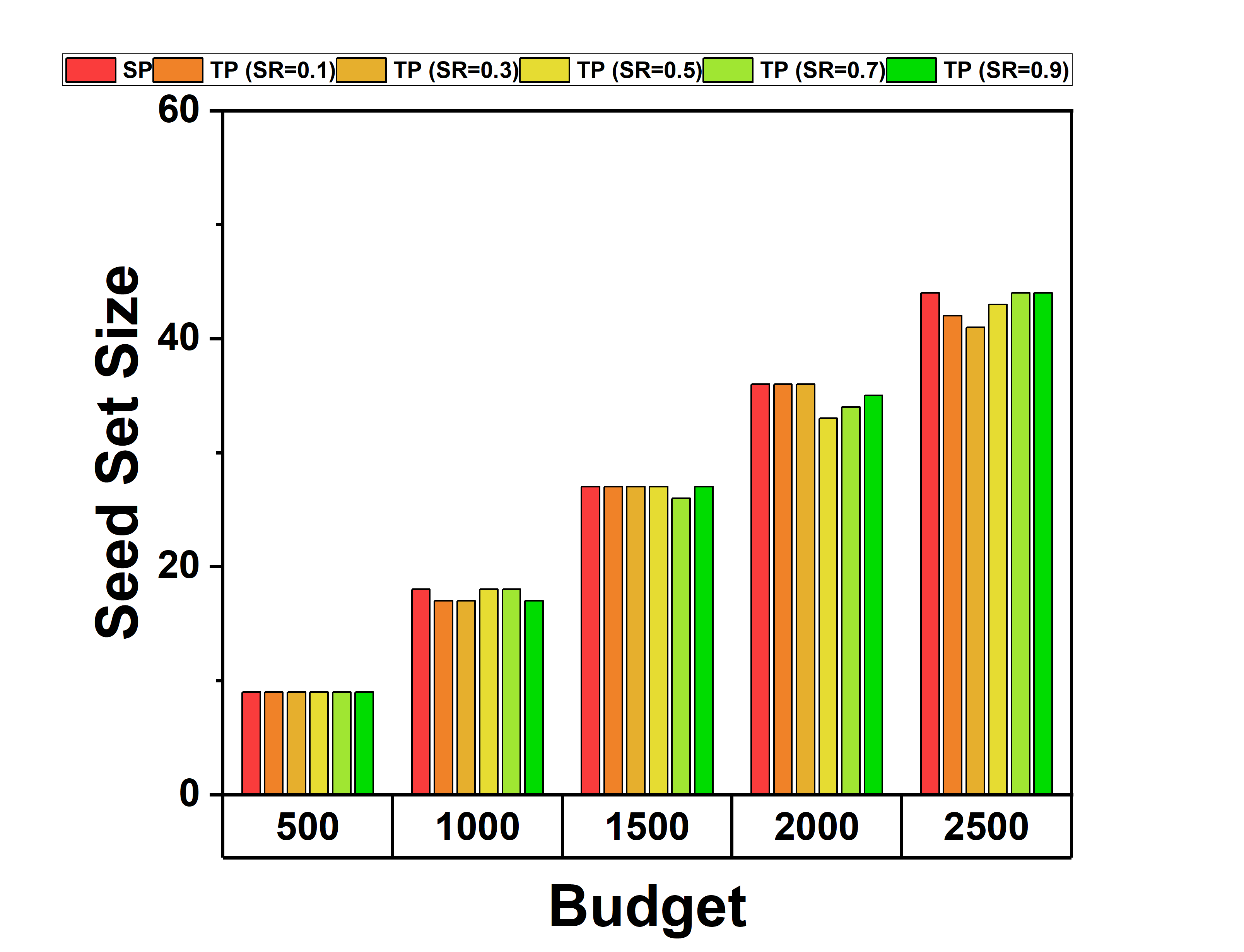}
        \caption{Timestep 4}
    \end{subfigure}

    \vspace{0.5cm}

    \begin{subfigure}[t]{0.3\linewidth}
        \centering
        \includegraphics[width=\linewidth]{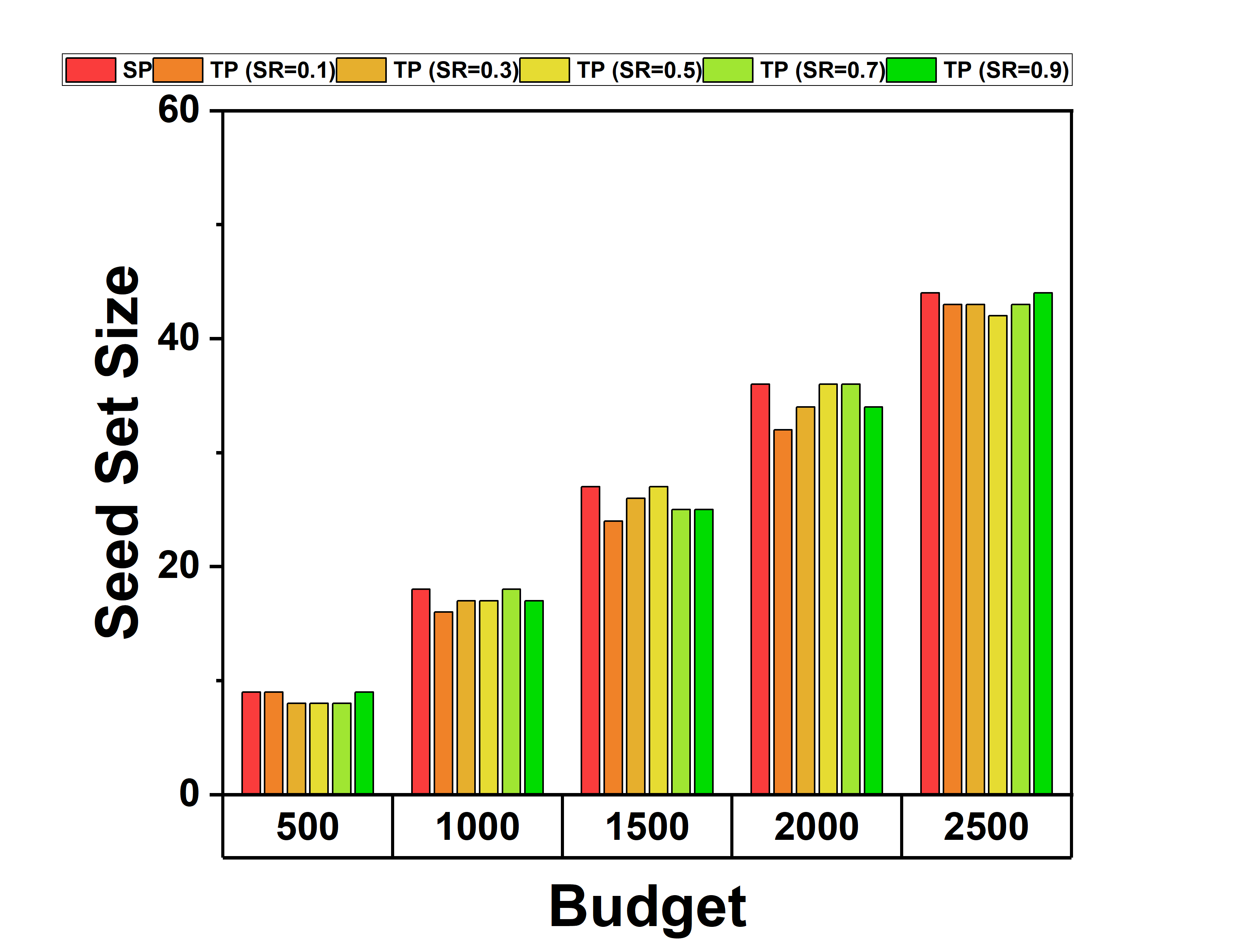}
        \caption{Timestep 6}
    \end{subfigure}
    \hfill
    \begin{subfigure}[t]{0.3\linewidth}
        \centering
        \includegraphics[width=\linewidth]{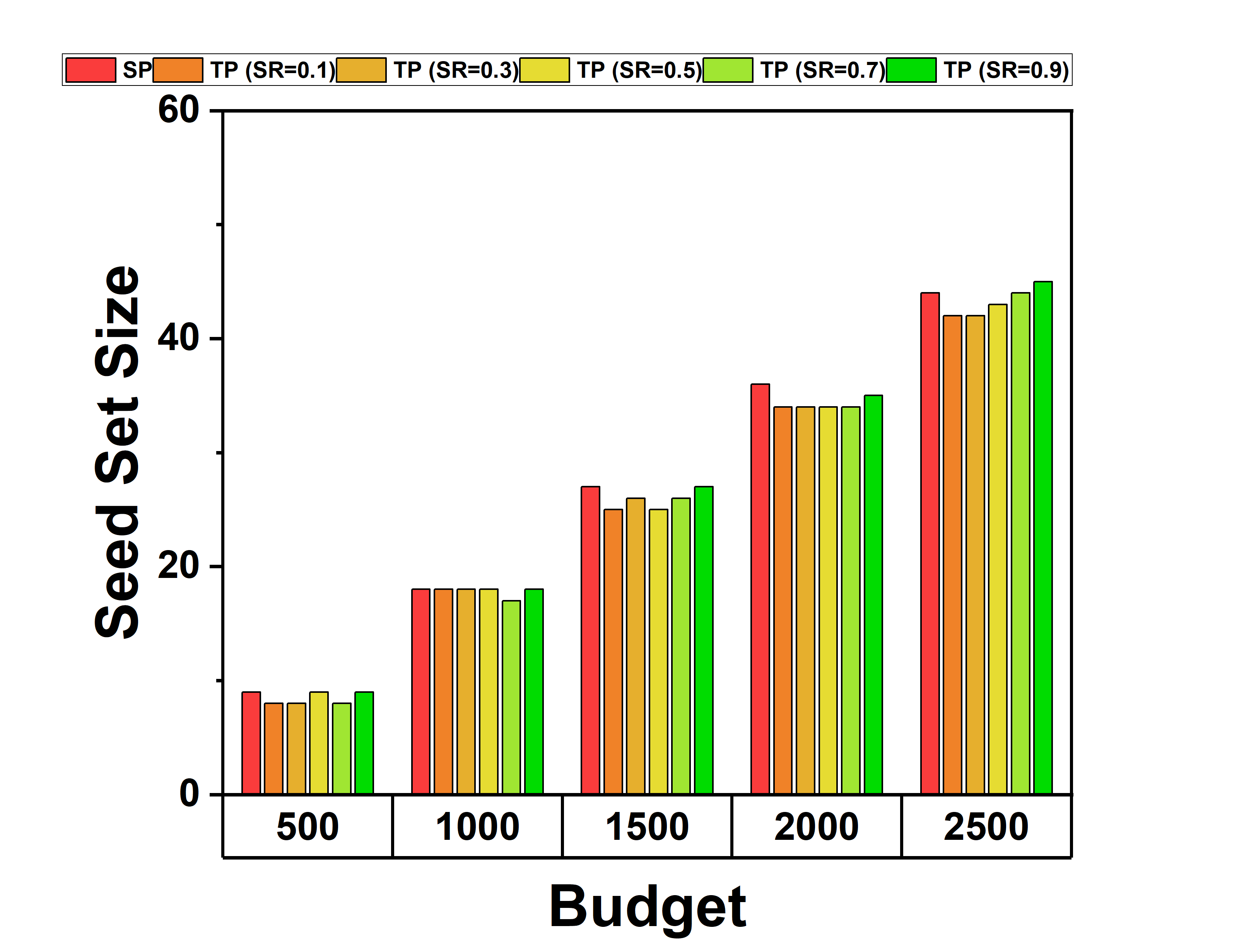}
        \caption{Timestep 8}
    \end{subfigure}
    \hfill
    \begin{subfigure}[t]{0.3\linewidth}
        \centering
        \includegraphics[width=\linewidth]{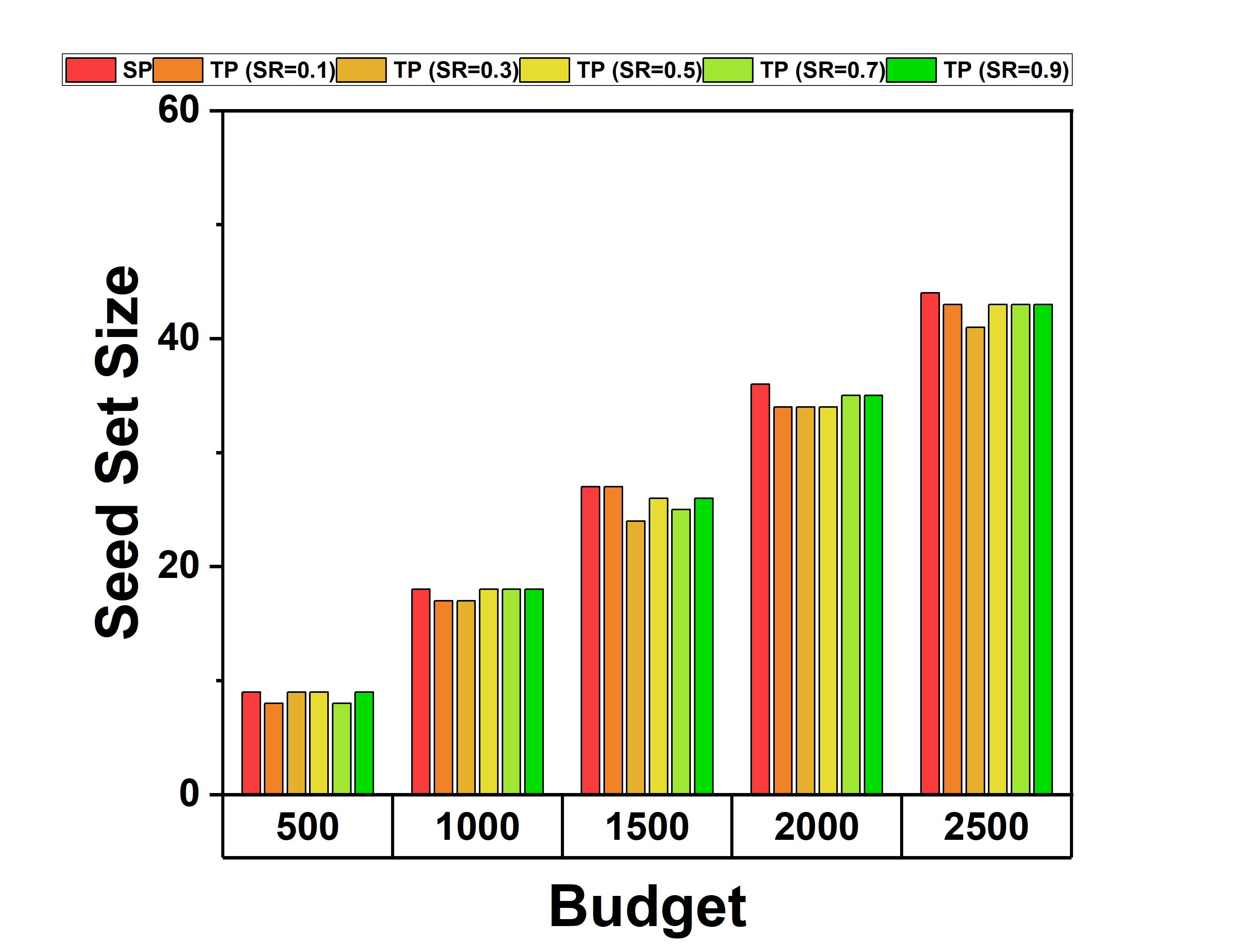}
        \caption{Timestep 10}
    \end{subfigure}

    \caption{Seed Set Size Distribution of Single Phase Vs. Two Phase (Simple Greedy Algorithm, \textit{Email-Eu-Core} Dataset, Probability Setting - Trivalency)}
    \label{RQ4_T6}
\end{figure}

\begin{figure}[htbp]
    \centering

    \begin{subfigure}[t]{0.3\linewidth}
        \centering
        \includegraphics[width=\linewidth]{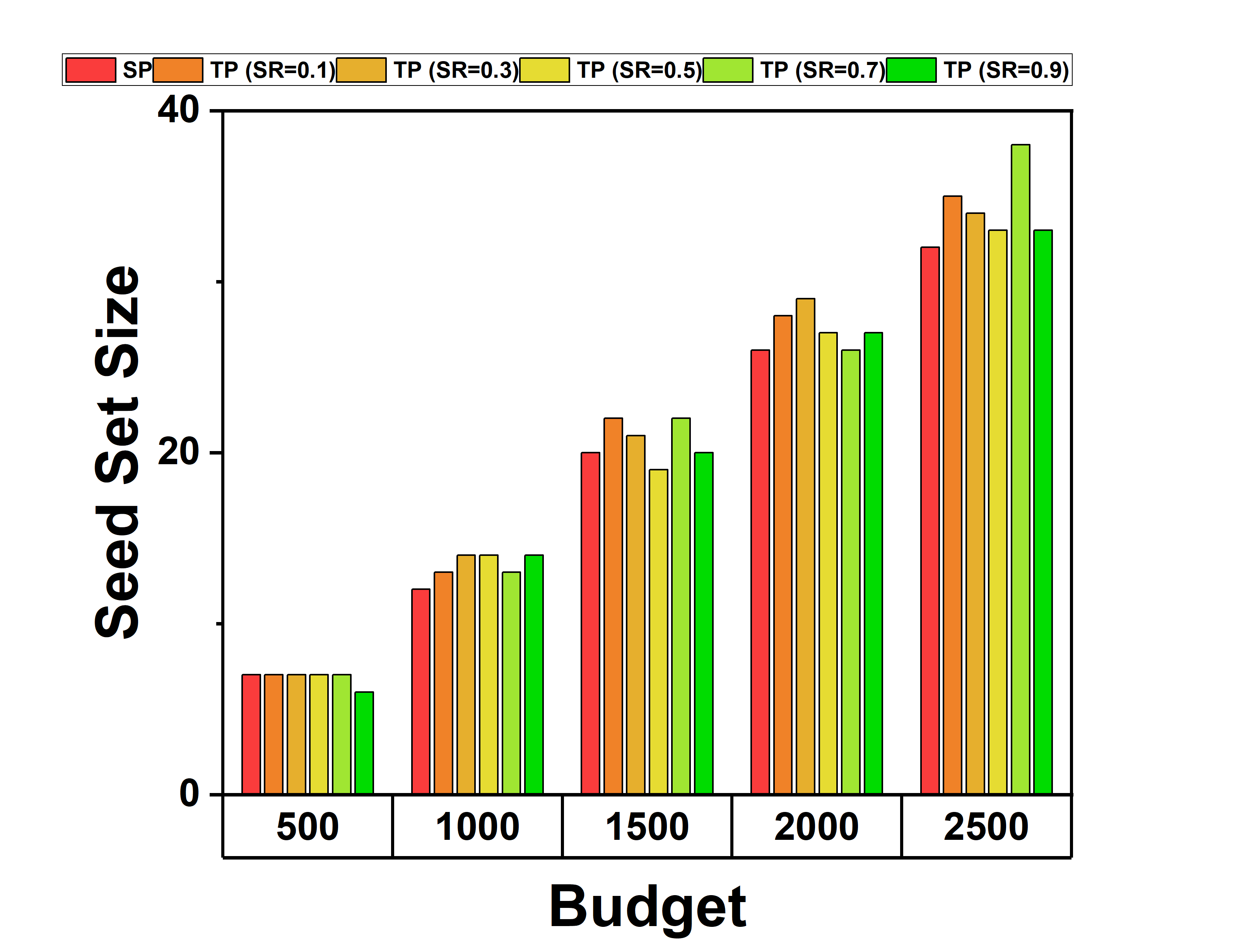}
        \caption{Timestep 2}
    \end{subfigure}
    \hspace{0.05\linewidth}
    \begin{subfigure}[t]{0.3\linewidth}
        \centering
        \includegraphics[width=\linewidth]{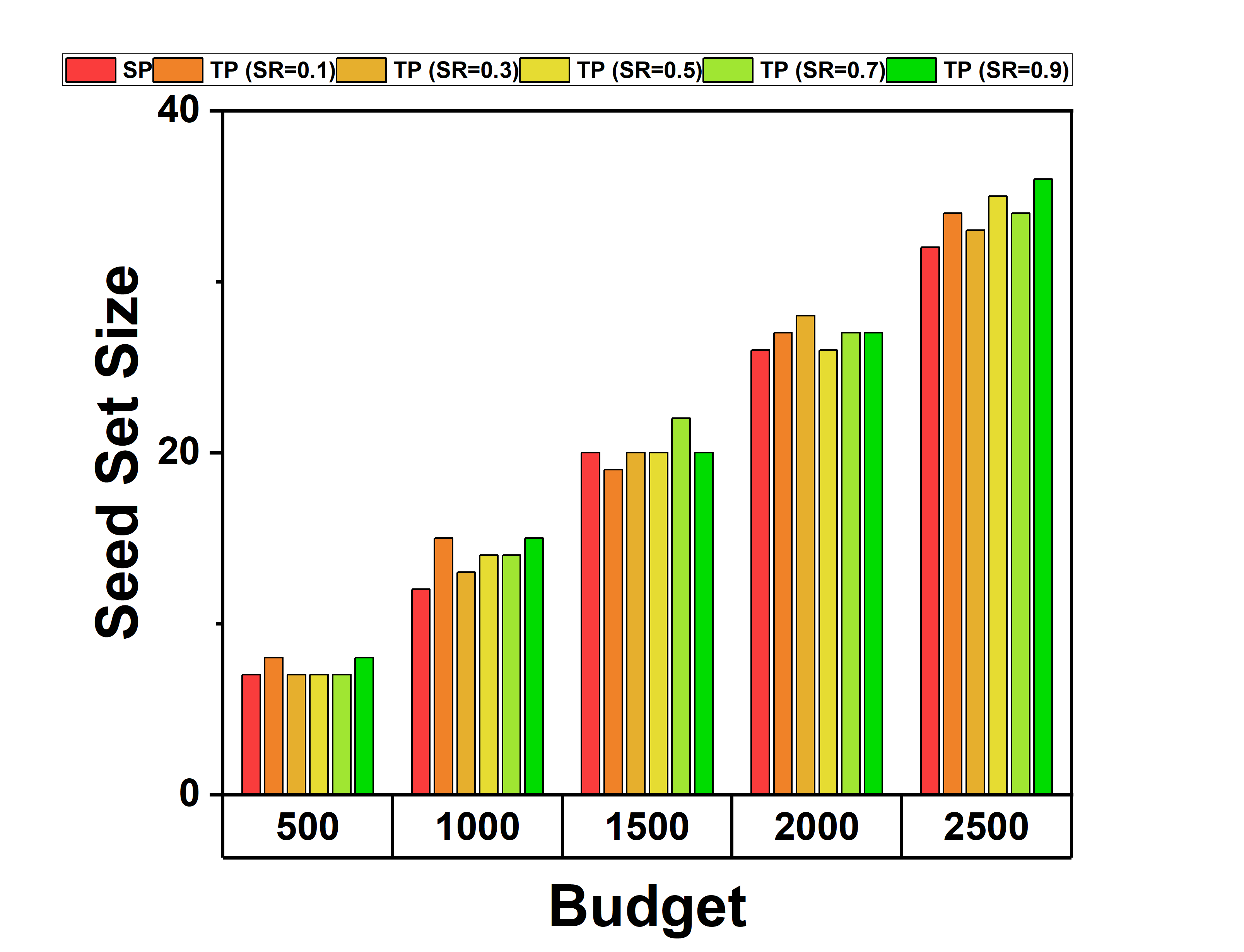}
        \caption{Timestep 4}
    \end{subfigure}

    \vspace{0.5cm}

    \begin{subfigure}[t]{0.3\linewidth}
        \centering
        \includegraphics[width=\linewidth]{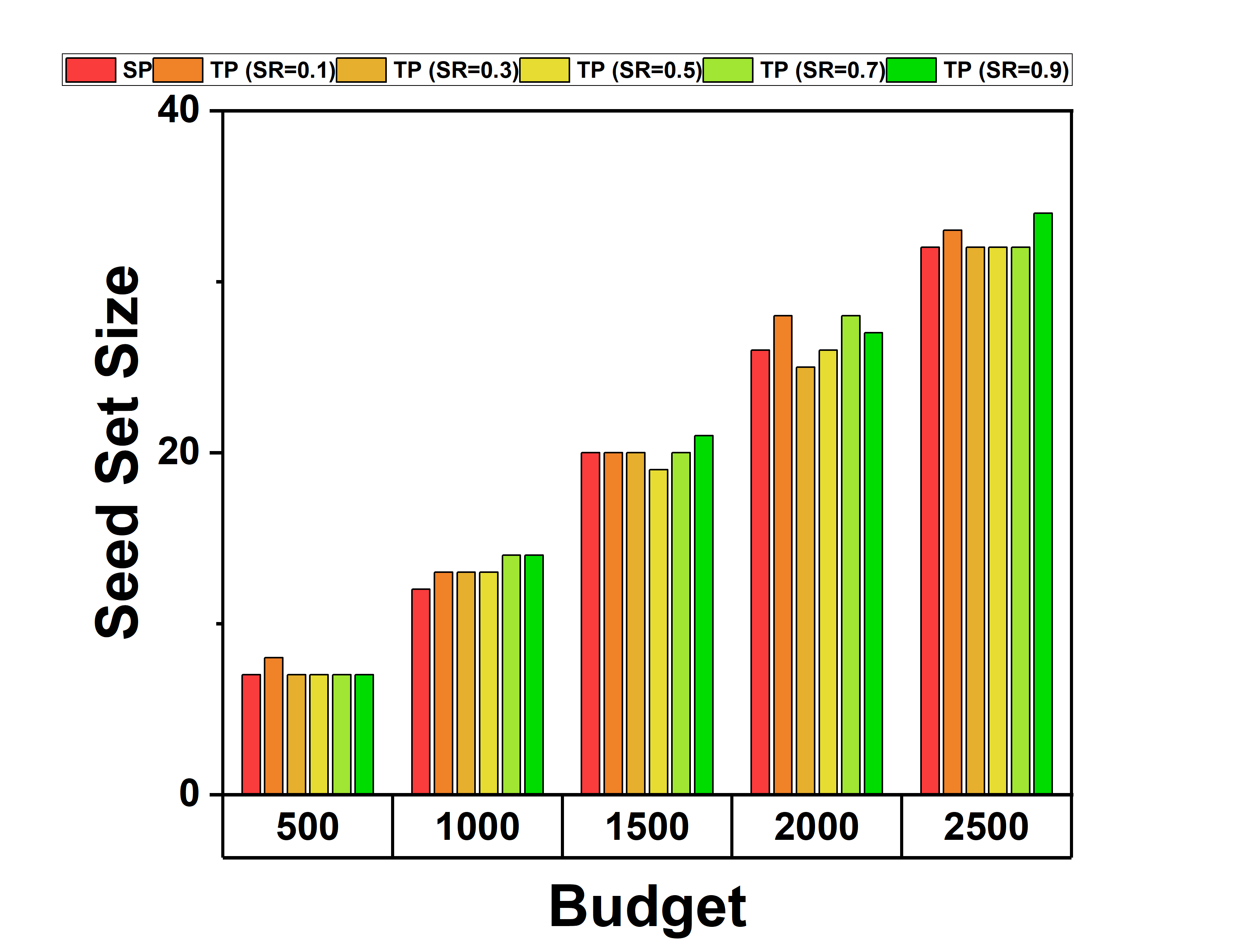}
        \caption{Timestep 6}
    \end{subfigure}
    \hfill
    \begin{subfigure}[t]{0.3\linewidth}
        \centering
        \includegraphics[width=\linewidth]{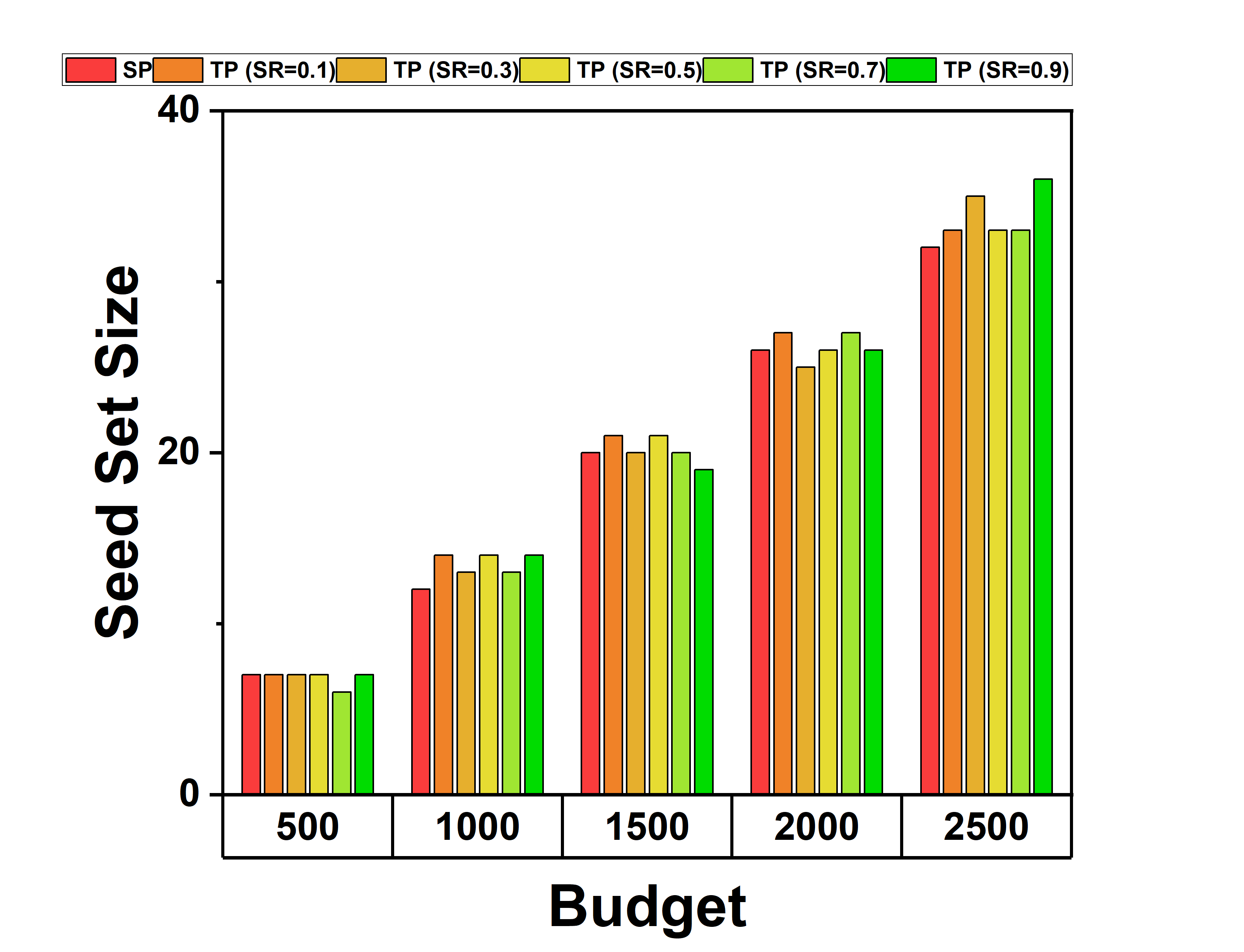}
        \caption{Timestep 8}
    \end{subfigure}
    \hfill
    \begin{subfigure}[t]{0.3\linewidth}
        \centering
        \includegraphics[width=\linewidth]{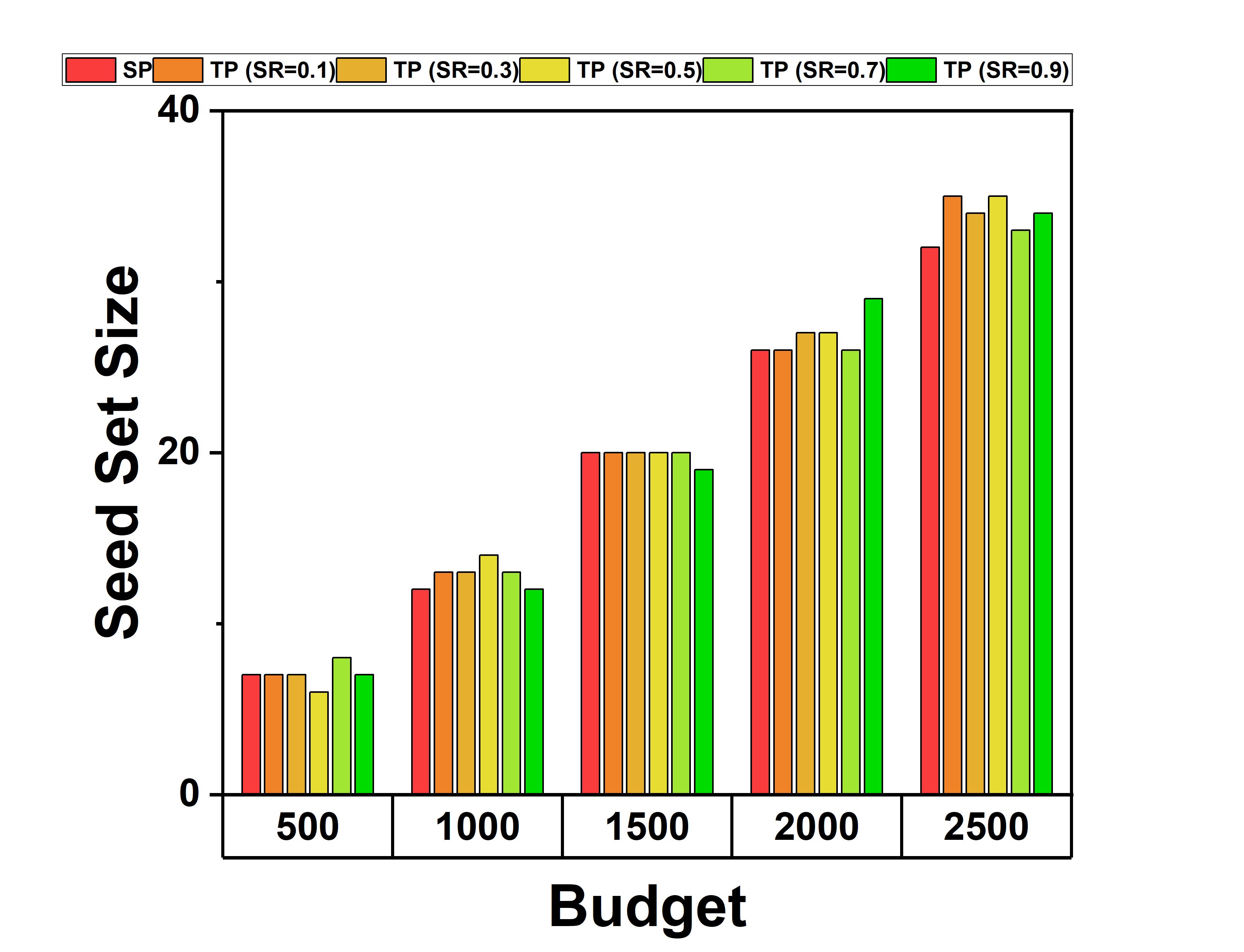}
        \caption{Timestep 10}
    \end{subfigure}

    \caption{Seed Set Size Distribution of Single Phase Vs. Two Phase (Double Greedy Algorithm, \textit{Email-Eu-Core} Dataset, Probability Setting - Trivalency)}
    \label{RQ4_T7}
\end{figure}

\begin{figure}[htbp]
    \centering

    \begin{subfigure}[t]{0.3\linewidth}
        \centering
        \includegraphics[width=\linewidth]{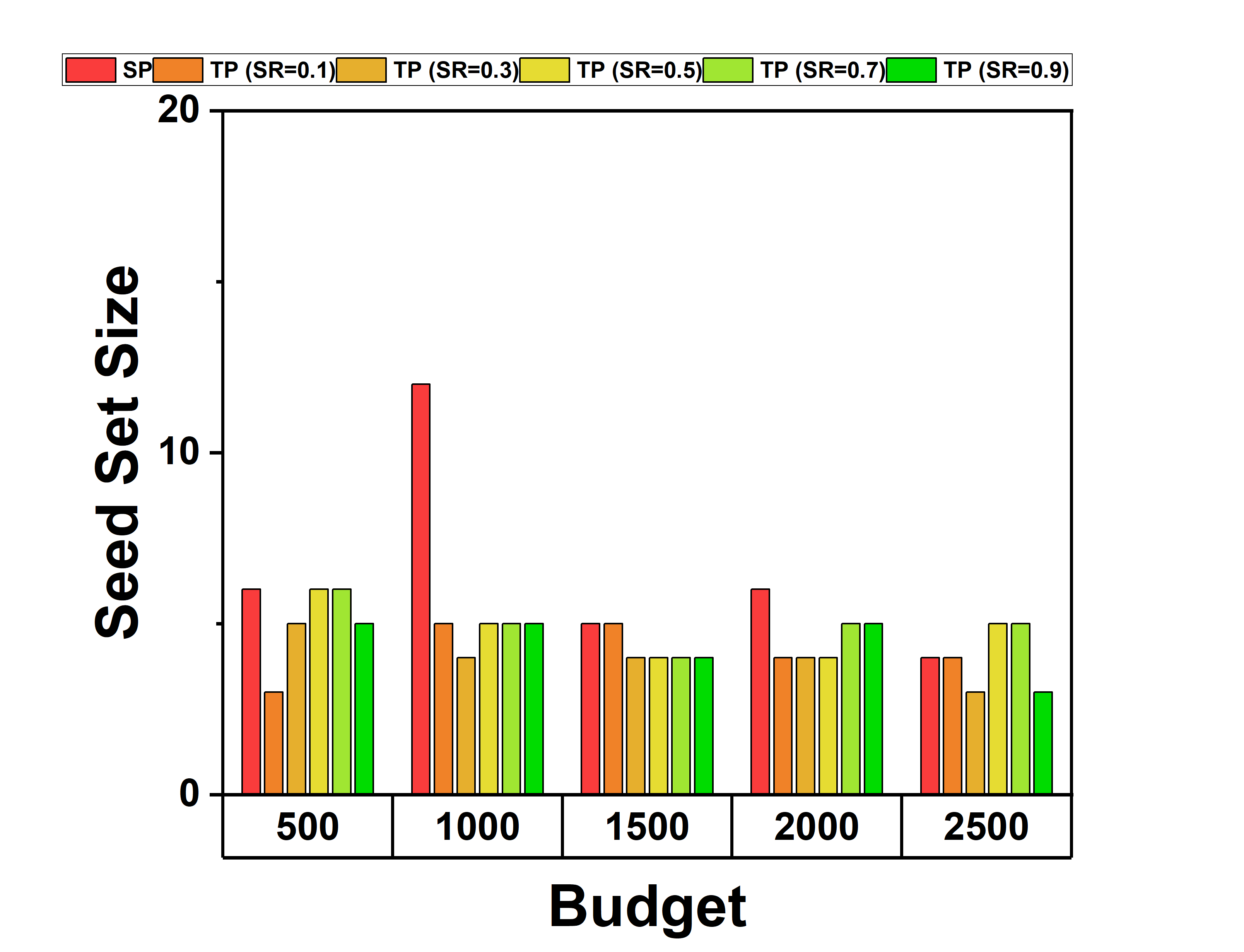}
        \caption{Timestep 2}
    \end{subfigure}
    \hspace{0.05\linewidth}
    \begin{subfigure}[t]{0.3\linewidth}
        \centering
        \includegraphics[width=\linewidth]{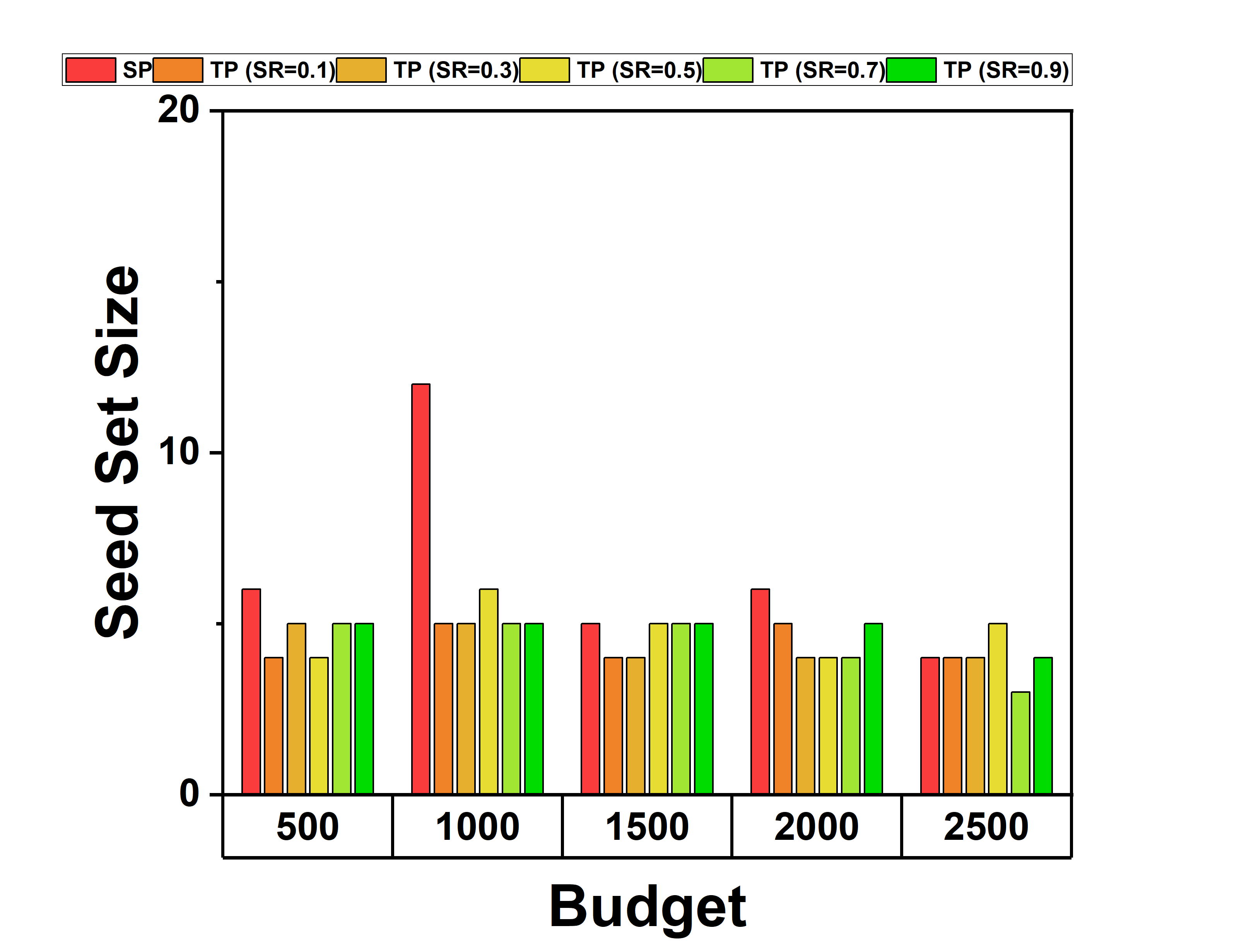}
        \caption{Timestep 4}
    \end{subfigure}

    \vspace{0.5cm}

    \begin{subfigure}[t]{0.3\linewidth}
        \centering
        \includegraphics[width=\linewidth]{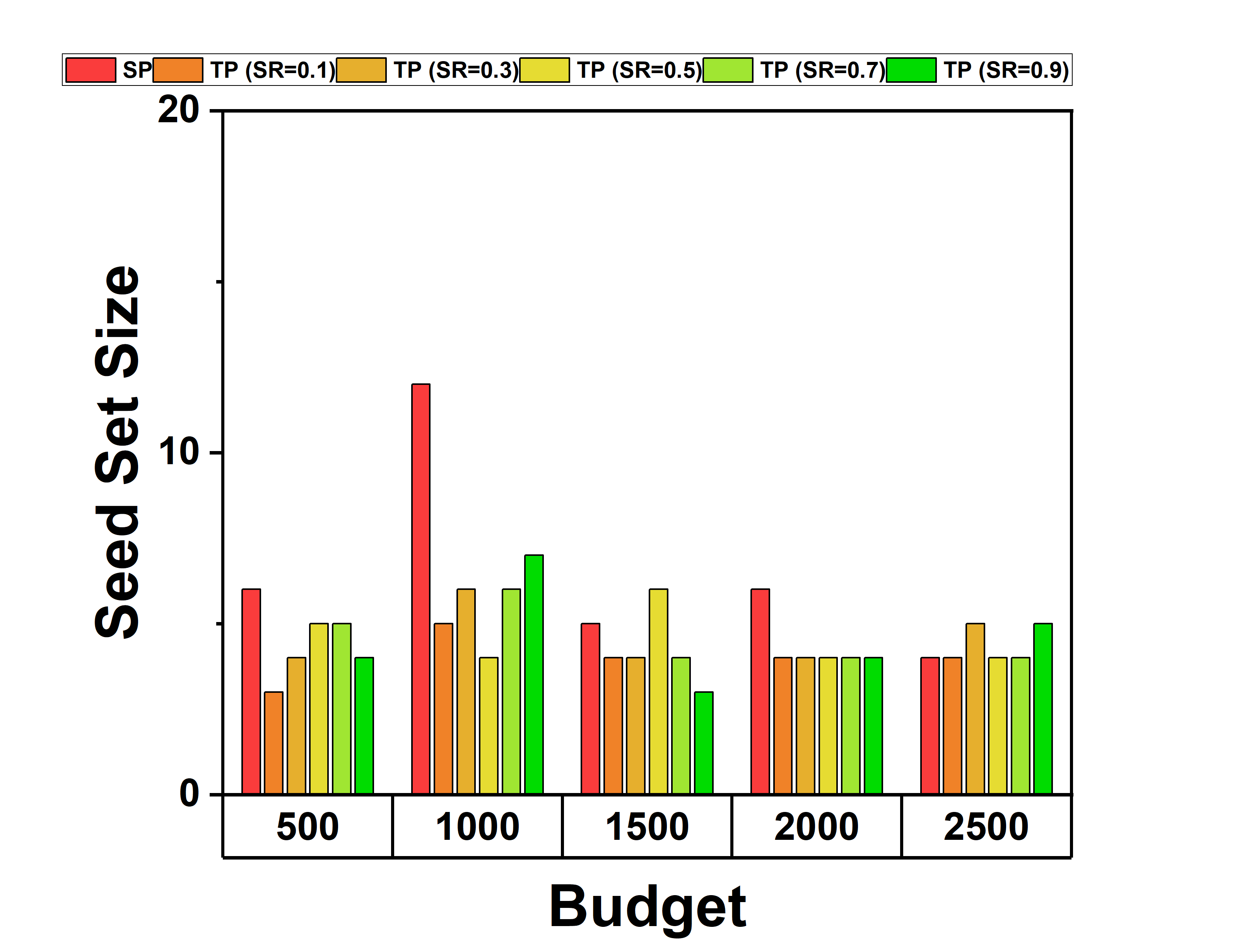}
        \caption{Timestep 6}
    \end{subfigure}
    \hfill
    \begin{subfigure}[t]{0.3\linewidth}
        \centering
        \includegraphics[width=\linewidth]{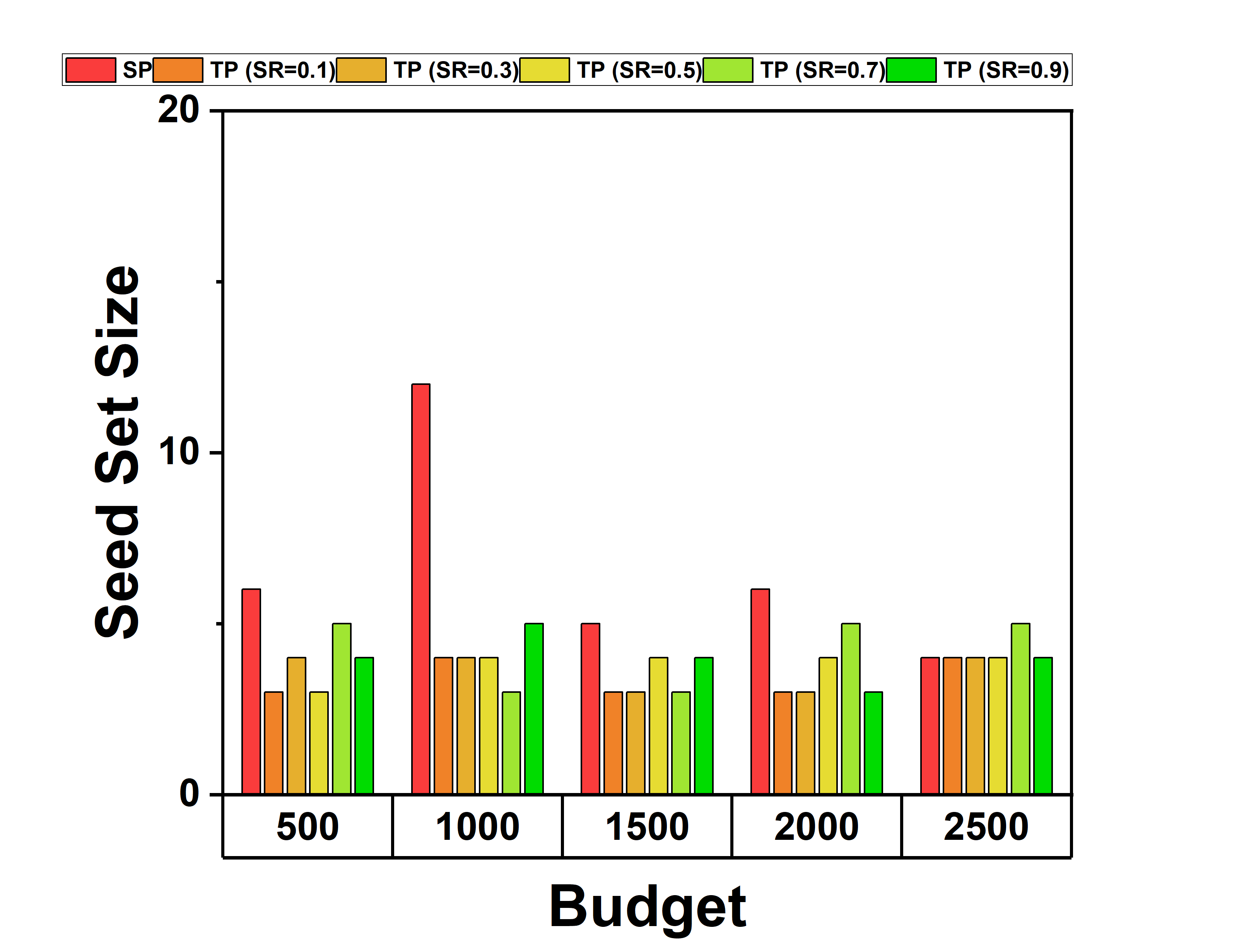}
        \caption{Timestep 8}
    \end{subfigure}
    \hfill
    \begin{subfigure}[t]{0.3\linewidth}
        \centering
        \includegraphics[width=\linewidth]{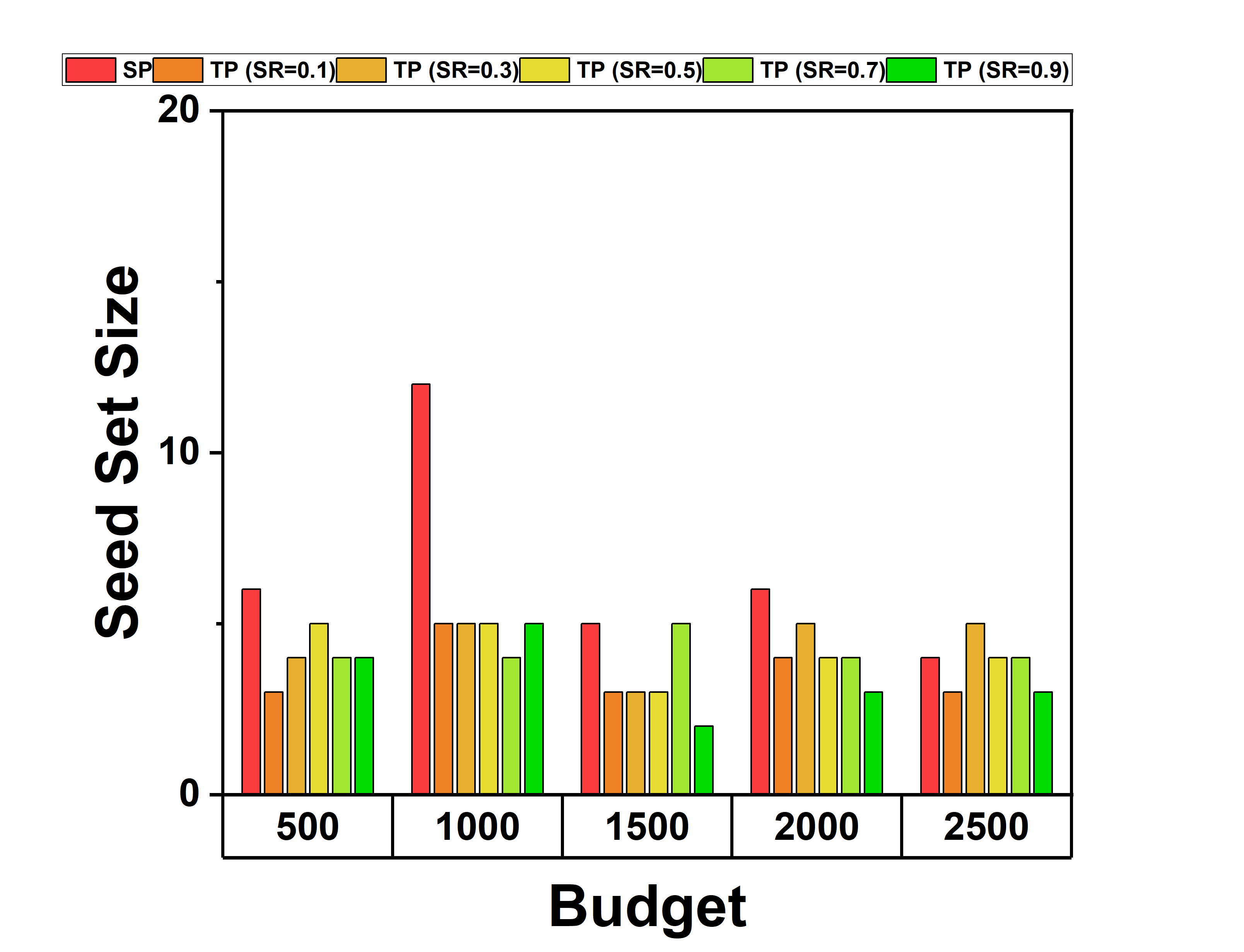}
        \caption{Timestep 10}
    \end{subfigure}

    \caption{Seed Set Size Distribution of Single Phase Vs. Two Phase (Stochastic Greedy Algorithm, \textit{Email-Eu-Core} Dataset, Probability Setting - Trivalency)}
    \label{RQ4_T8}
\end{figure}

\subsubsection{Analysis of Diffusion Rounds in Single vs. Two phase Influence Propagation}
The analysis for the \textit{LM} dataset under the trivalency probability setting explores how well different algorithms convert additional diffusion rounds into profit. There’s a noticeable difference between baseline and proposed methods in terms of round-wise efficiency. Among the baselines, both \textbf{Random} and \textbf{HD} show good returns for extra rounds. For example, at budget $500$, split ratio $0.1$, and timestep $2$, \textbf{Random} gained $2295.73$ from $2$ extra rounds (Figure~\ref{RQ5LM_T1}(a)), while \textbf{HD} gained $1228.26$ in the same setting (Figure~\ref{RQ5LM_T2}(a)). This shows that allowing more rounds can be valuable for these algorithms. On the other hand, some baselines performed poorly. \textbf{HighCC} gave only $23.78$ extra profit from $1$ additional round in the same configuration (Figure~\ref{RQ5LM_T3}(a)). \textbf{DD} achieved a gain of $152.53$ from $9$ extra rounds at budget $500$, split ratio $0.3$, and timestep $10$ (Figure~\ref{RQ5LM_T4}(e)), which is quite low. \textbf{SD} showed a modest gain of $216.74$ for $1$ extra round at budget $500$, split ratio $0.1$, and timestep $2$ (Figure~\ref{RQ5LM_T5}(a)). In contrast, our proposed algorithms converted extra rounds into significantly higher profit. \textbf{DG} was particularly efficient. In one case (budget $1500$, split ratio $0.3$, timestep $8$), \textbf{DG} gained $8203.02$ from $8$ extra rounds (Figure~\ref{RQ5LM_T7}(d)). Another example at budget $2000$, split ratio $0.9$, and timestep $4$ shows a profit increase of $7688.63$ from just $3$ additional rounds (Figure~\ref{RQ5LM_T7}(b)). \textbf{SG} also performed well. At budget $1000$, split ratio $0.9$, and timestep $10$, \textbf{SG} gained $4544.14$ from $9$ extra rounds (Figure~\ref{RQ5LM_T6}(e)). In another case (budget $500$, split ratio $0.9$, timestep $4$), $3$ additional rounds led to $4175.68$ in profit (Figure~\ref{RQ5LM_T6}(b)). \textbf{StG0.1} followed a similar pattern. At budget $1000$, split ratio $0.7$, and timestep $4$, it gained $4402.66$ from $3$ additional rounds (Figure~\ref{RQ5LM_T8}(b)). In conclusion, the efficiency of extra diffusion rounds depends greatly on the algorithm used. While some baseline methods (like \textbf{Random} and \textbf{HD}) show decent returns, others (like \textbf{HighCC} and \textbf{DD}) perform poorly. On the other hand, our proposed methods—\textbf{DG}, \textbf{SG}, and \textbf{StG0.1}—consistently turn additional rounds into significant profit, proving the value of allowing deeper propagation.

\begin{figure}[htbp]
    \centering
    \begin{subfigure}[t]{0.3\linewidth}
        \centering
        \includegraphics[width=\linewidth]{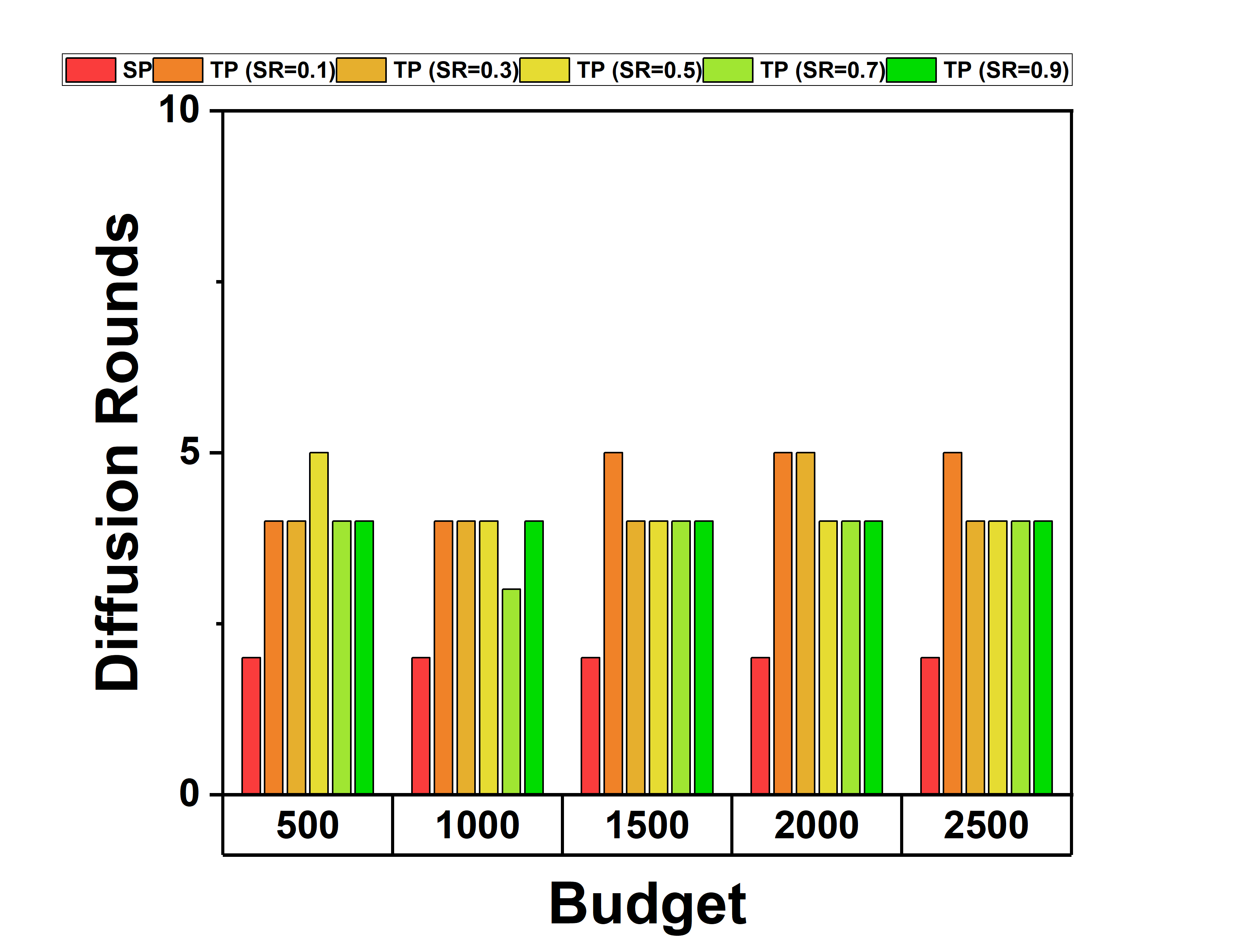}
        \caption{Timestep 2}
    \end{subfigure}
    \hspace{0.05\linewidth}
    \begin{subfigure}[t]{0.3\linewidth}
        \centering
        \includegraphics[width=\linewidth]{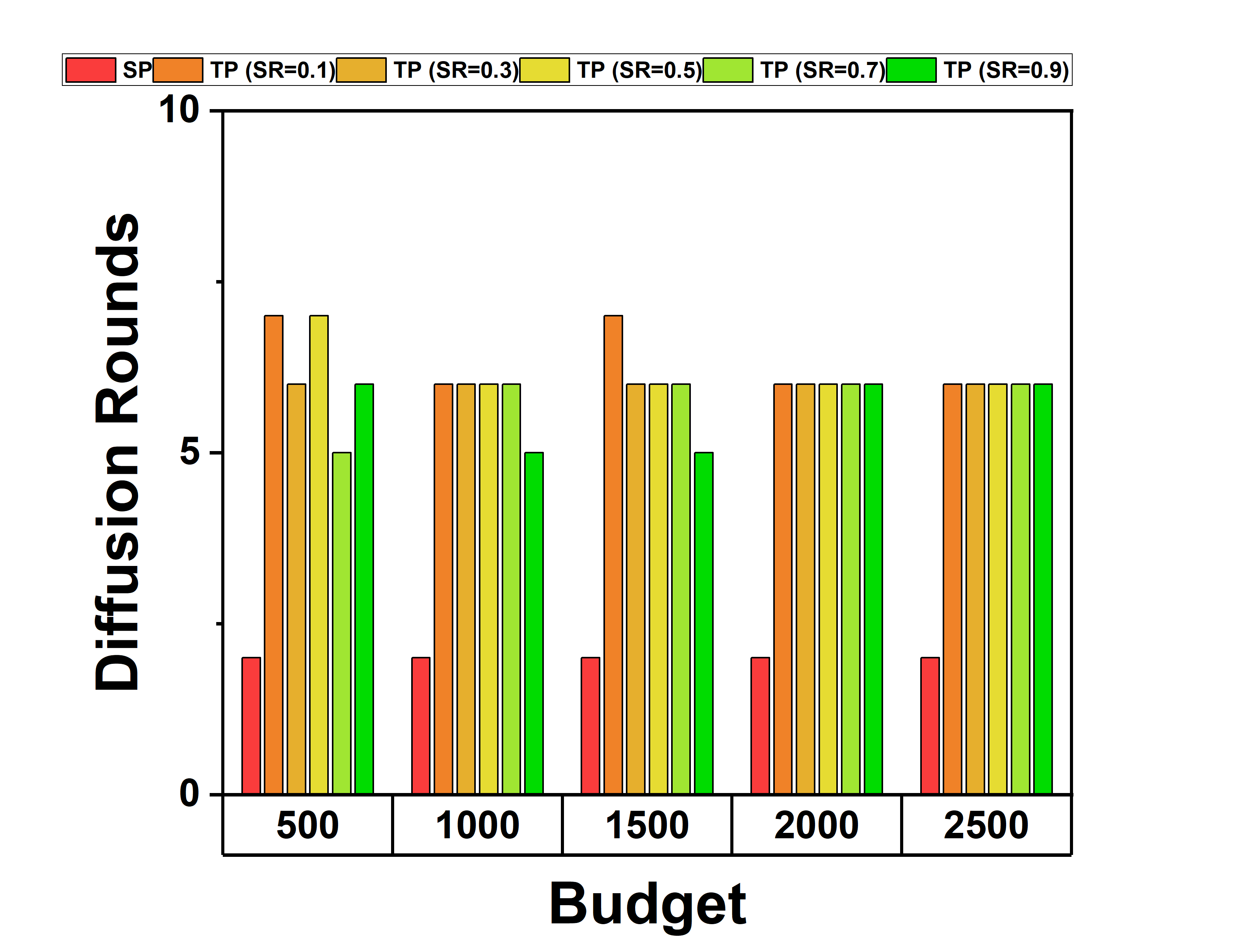}
        \caption{Timestep 4}
    \end{subfigure}

    \vspace{0.5cm}

    \begin{subfigure}[t]{0.3\linewidth}
        \centering
        \includegraphics[width=\linewidth]{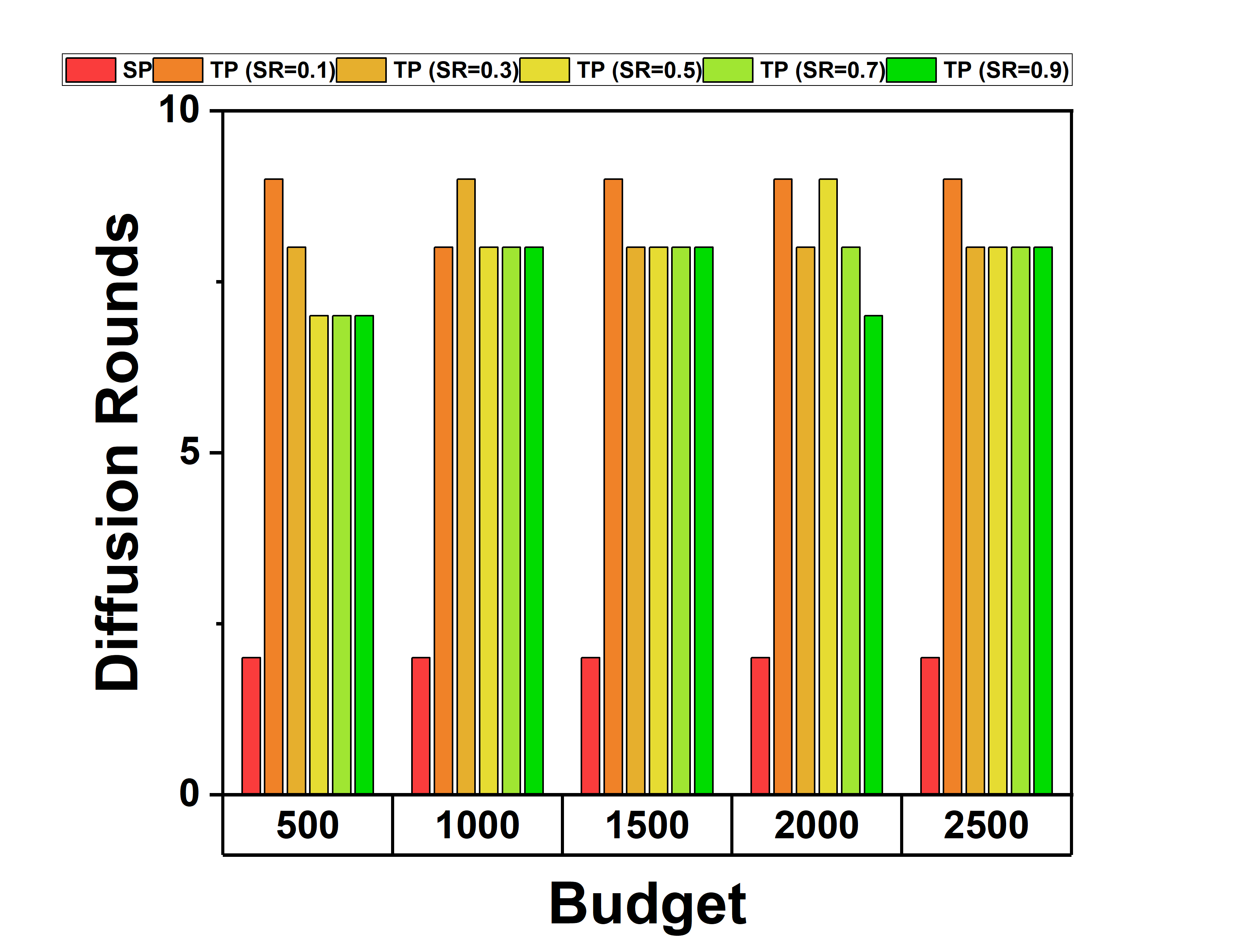}
        \caption{Timestep 6}
    \end{subfigure}
    \hfill
    \begin{subfigure}[t]{0.3\linewidth}
        \centering
        \includegraphics[width=\linewidth]{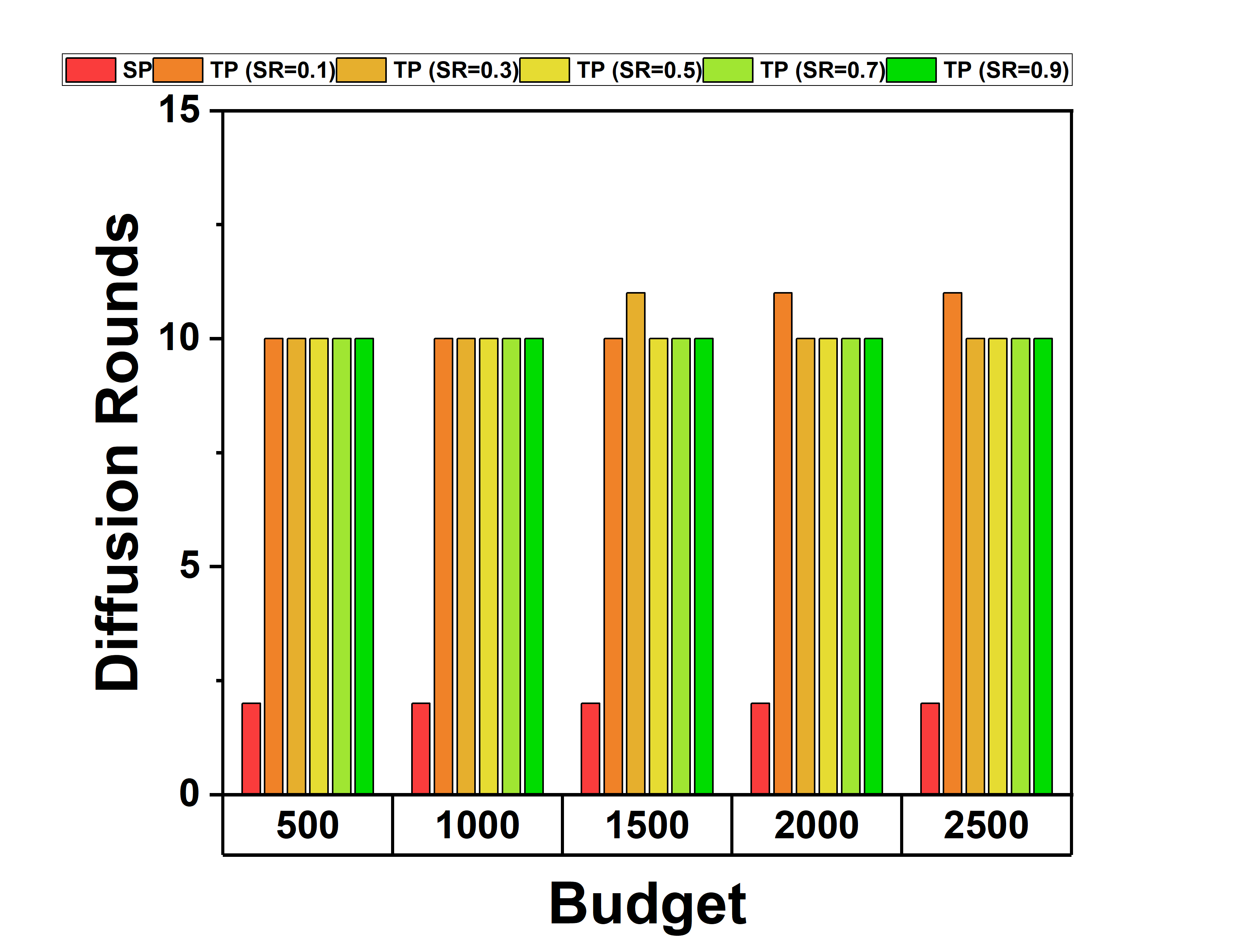}
        \caption{Timestep 8}
    \end{subfigure}
    \hfill
    \begin{subfigure}[t]{0.3\linewidth}
        \centering
        \includegraphics[width=\linewidth]{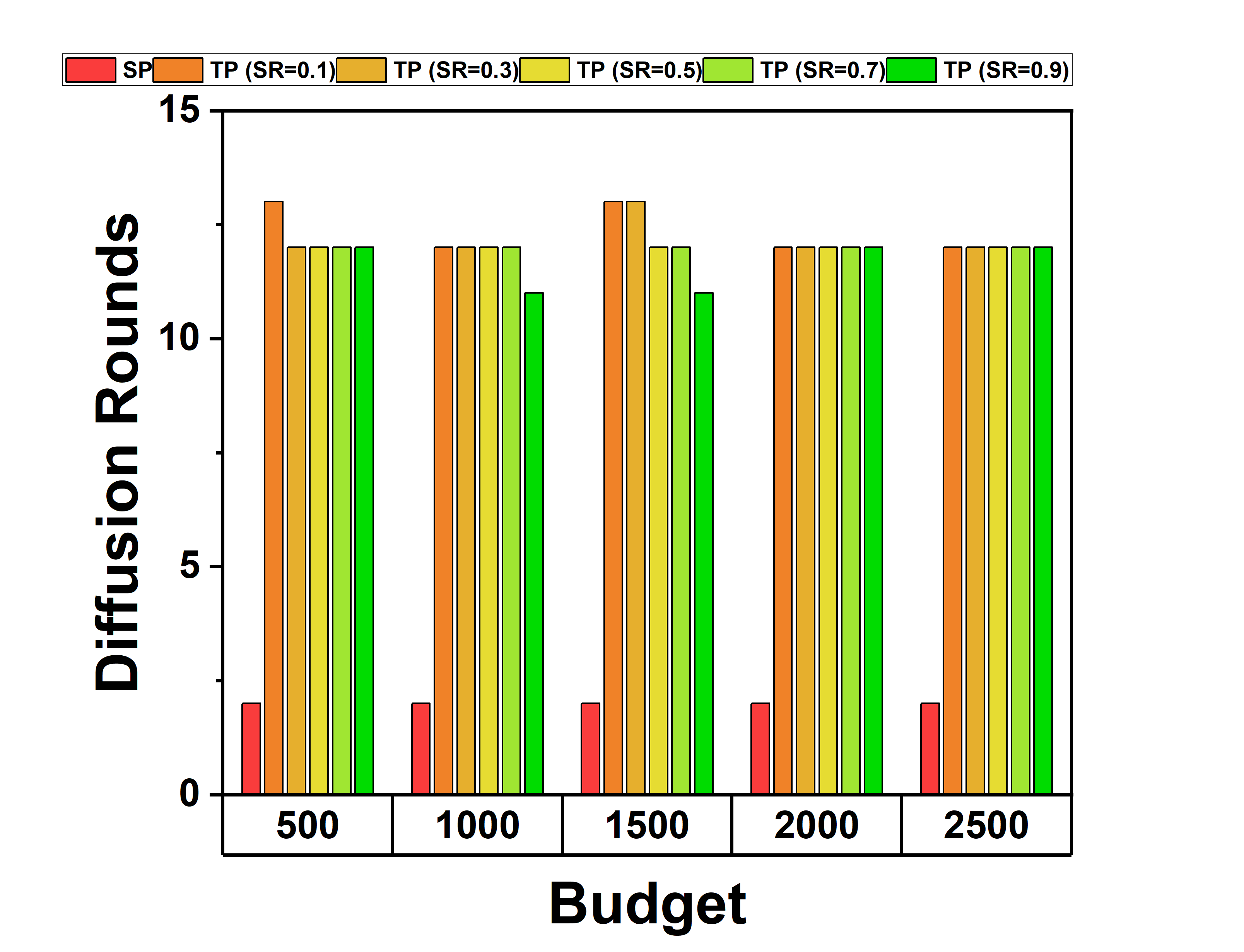}
        \caption{Timestep 10}
    \end{subfigure}

    \caption{Diffusion Rounds in Single Phase Vs. Two Phase (Random Algorithm, \textit{LM} Dataset, Probability Setting - Trivalency)}
    \label{RQ5LM_T1}
\end{figure}

\begin{figure}[htbp]
    \centering
    \begin{subfigure}[t]{0.3\linewidth}
        \centering
        \includegraphics[width=\linewidth]{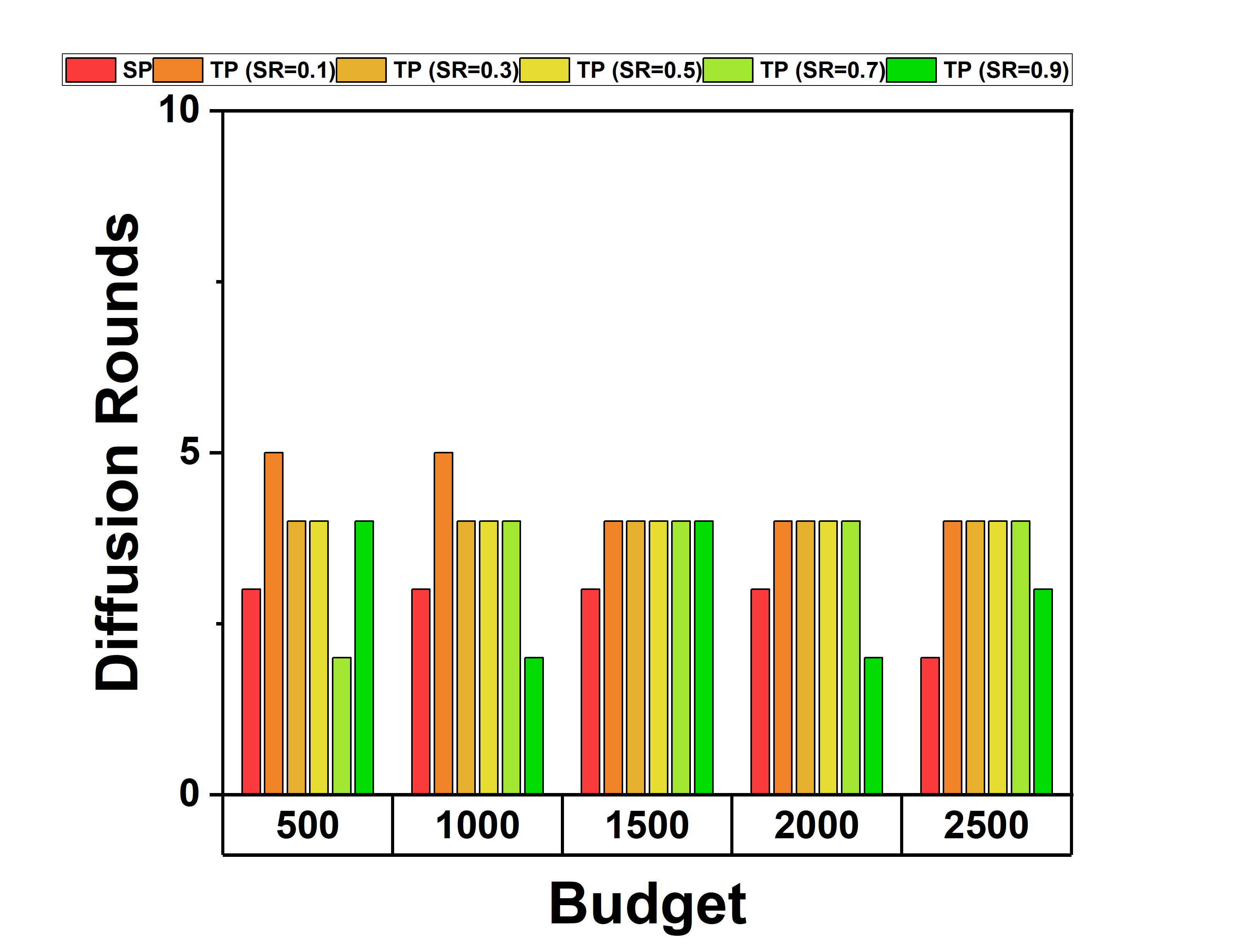}
        \caption{Timestep 2}
    \end{subfigure}
    \hspace{0.05\linewidth}
    \begin{subfigure}[t]{0.3\linewidth}
        \centering
        \includegraphics[width=\linewidth]{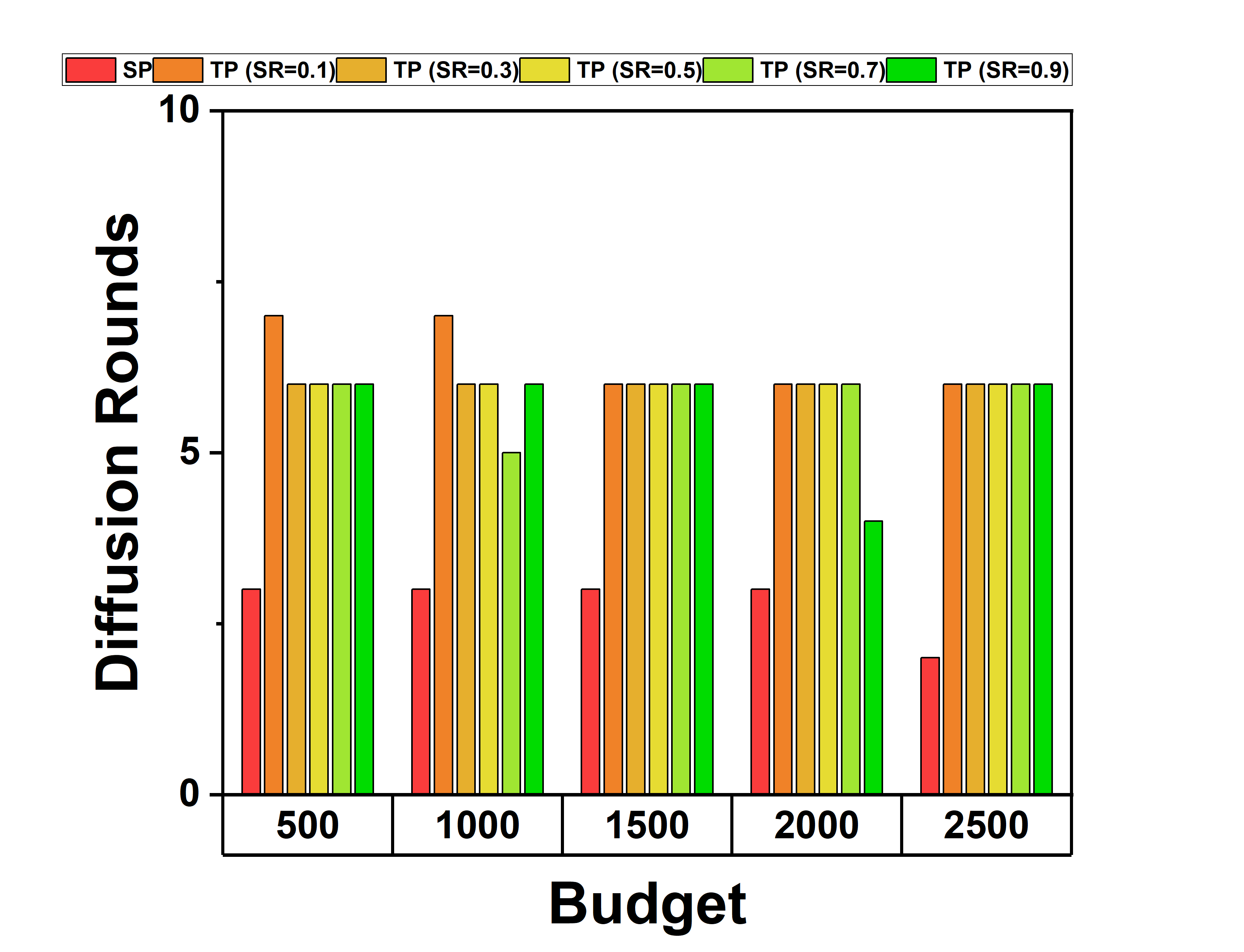}
        \caption{Timestep 4}
    \end{subfigure}

    \vspace{0.5cm}

    \begin{subfigure}[t]{0.3\linewidth}
        \centering
        \includegraphics[width=\linewidth]{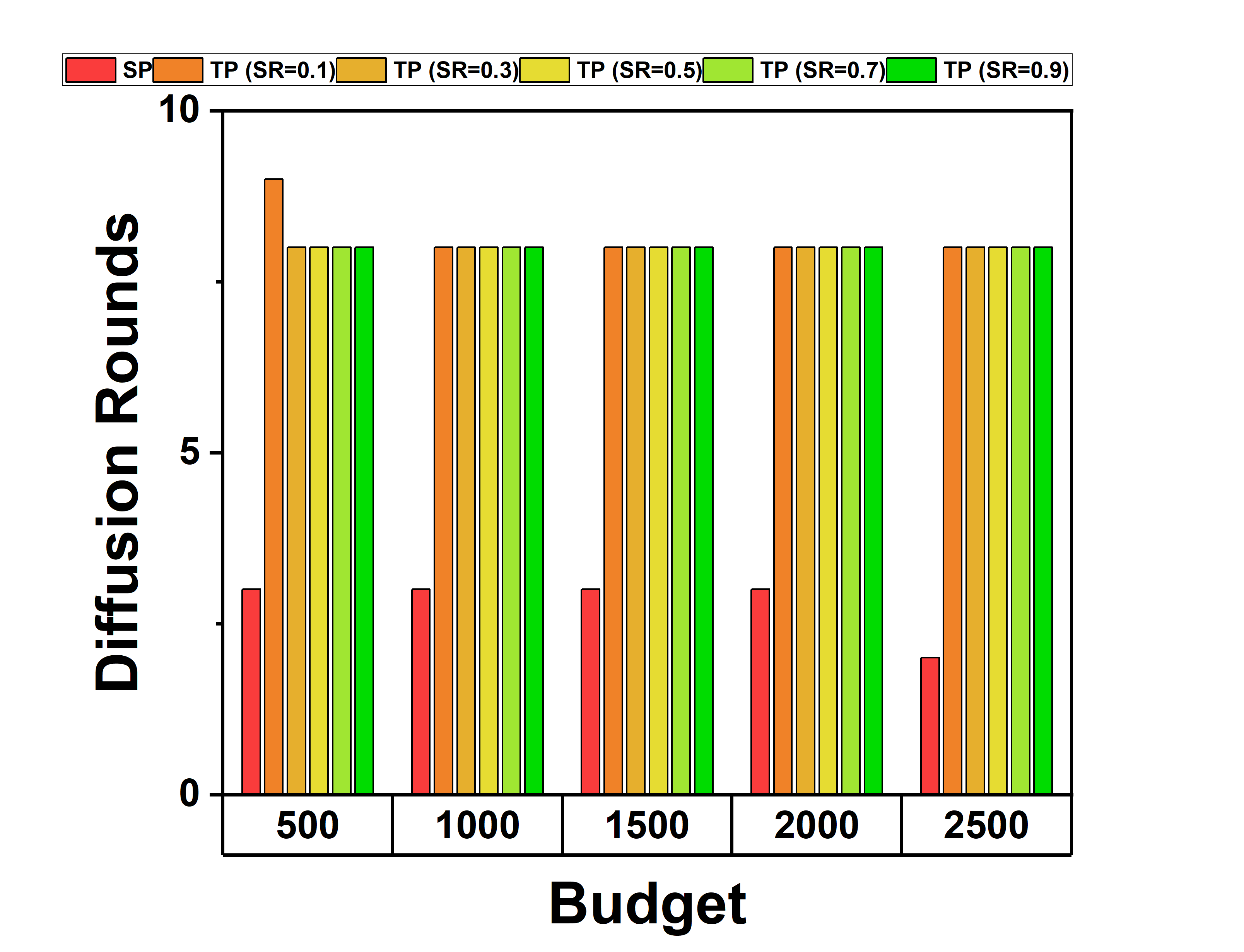}
        \caption{Timestep 6}
    \end{subfigure}
    \hfill
    \begin{subfigure}[t]{0.3\linewidth}
        \centering
        \includegraphics[width=\linewidth]{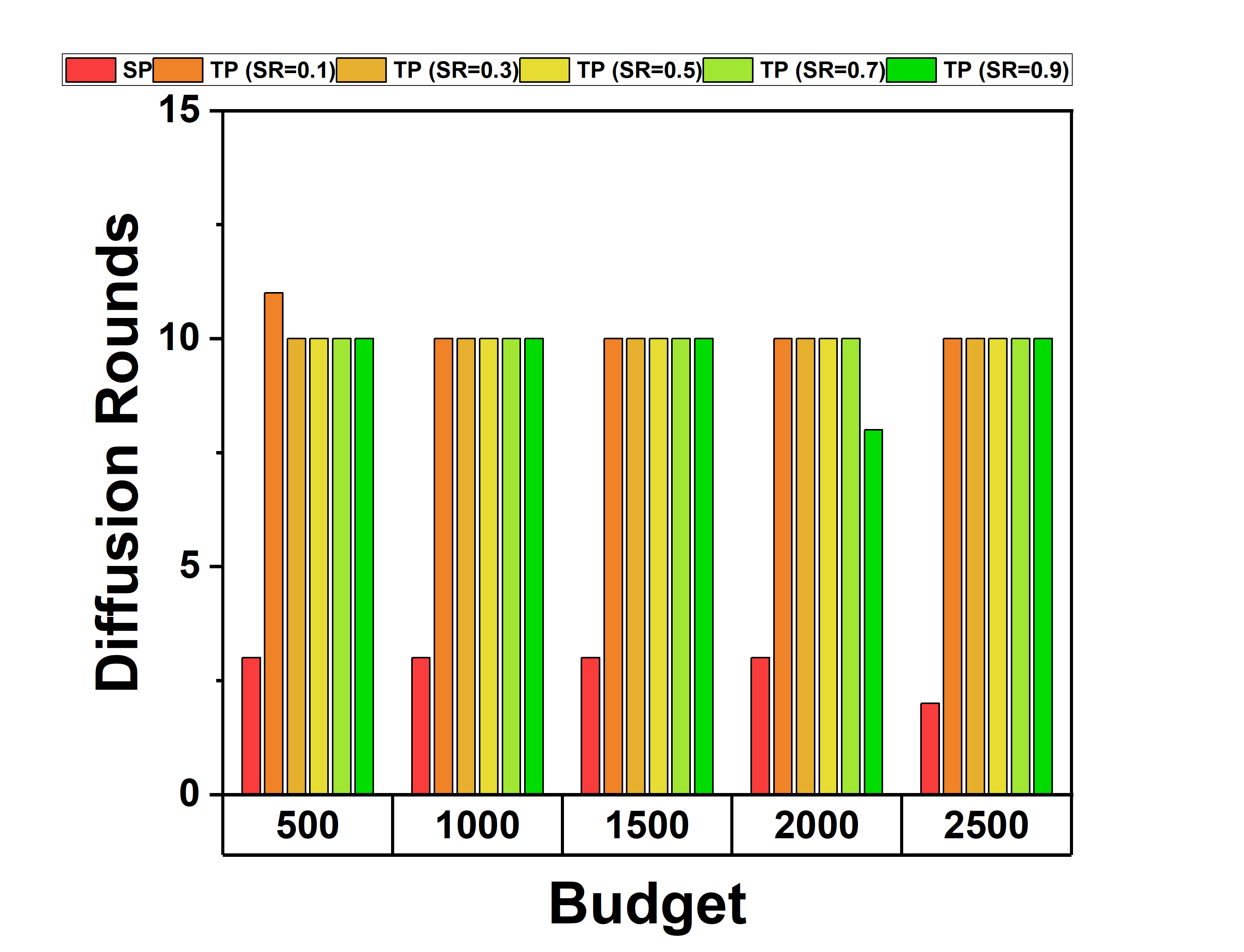}
        \caption{Timestep 8}
    \end{subfigure}
    \hfill
    \begin{subfigure}[t]{0.3\linewidth}
        \centering
        \includegraphics[width=\linewidth]{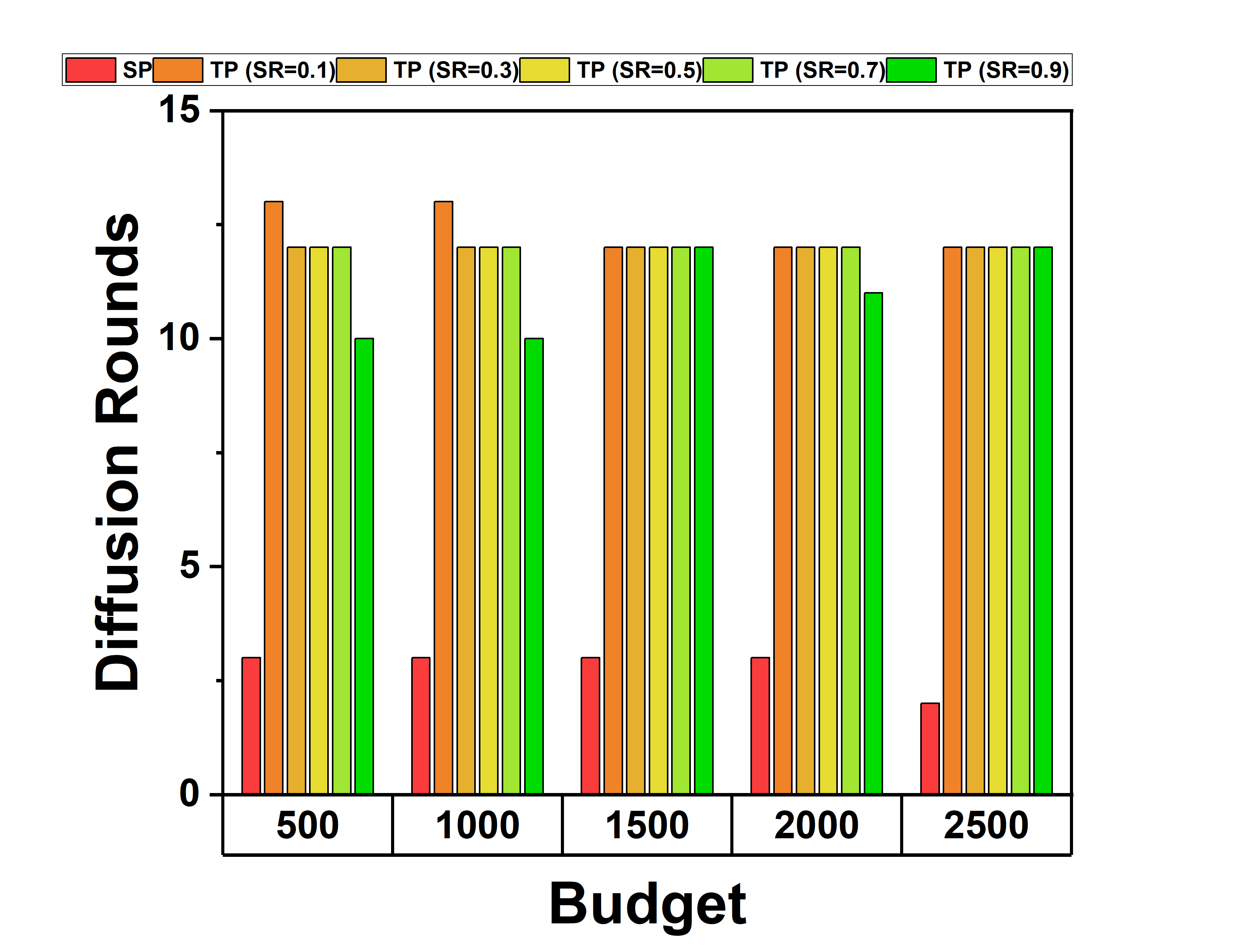}
        \caption{Timestep 10}
    \end{subfigure}

    \caption{Diffusion Rounds in Single Phase Vs. Two Phase (High Degree Algorithm, \textit{LM} Dataset, Probability Setting - Trivalency}
    \label{RQ5LM_T2}
\end{figure}

\begin{figure}[htbp]
    \centering
    \begin{subfigure}[t]{0.3\linewidth}
        \centering
        \includegraphics[width=\linewidth]{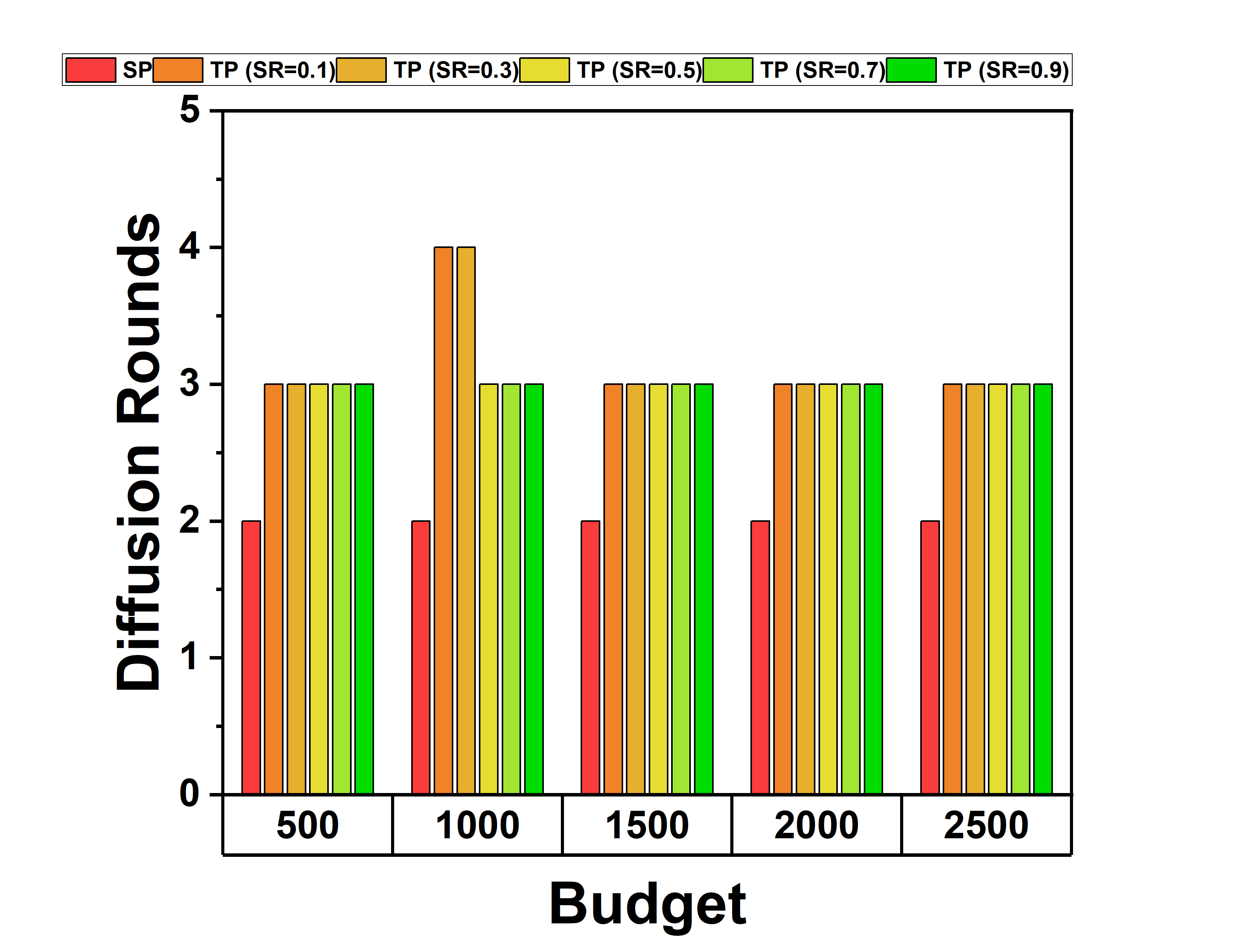}
        \caption{Timestep 2}
    \end{subfigure}
    \hspace{0.05\linewidth}
    \begin{subfigure}[t]{0.3\linewidth}
        \centering
        \includegraphics[width=\linewidth]{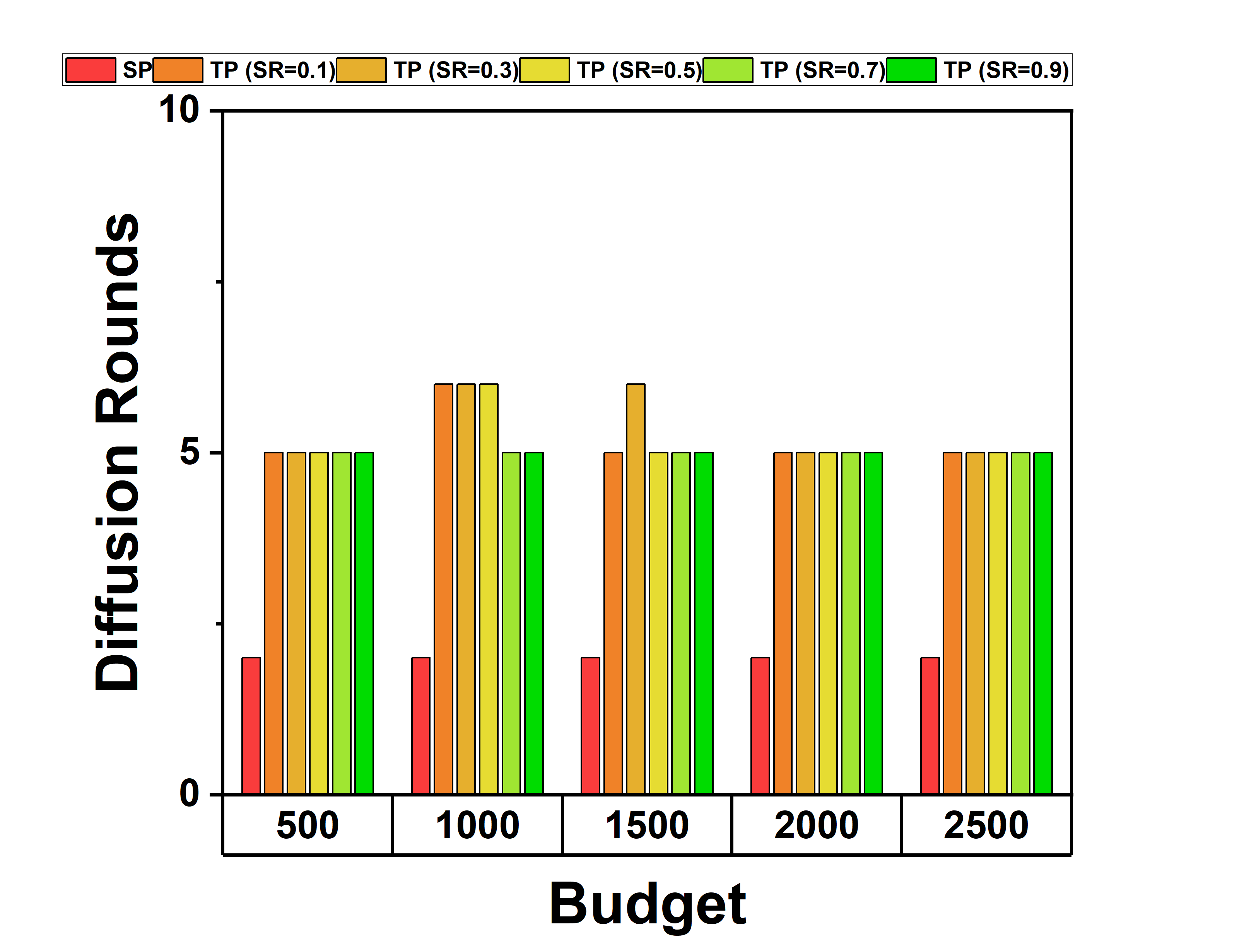}
        \caption{Timestep 4}
    \end{subfigure}

    \vspace{0.5cm}

    \begin{subfigure}[t]{0.3\linewidth}
        \centering
        \includegraphics[width=\linewidth]{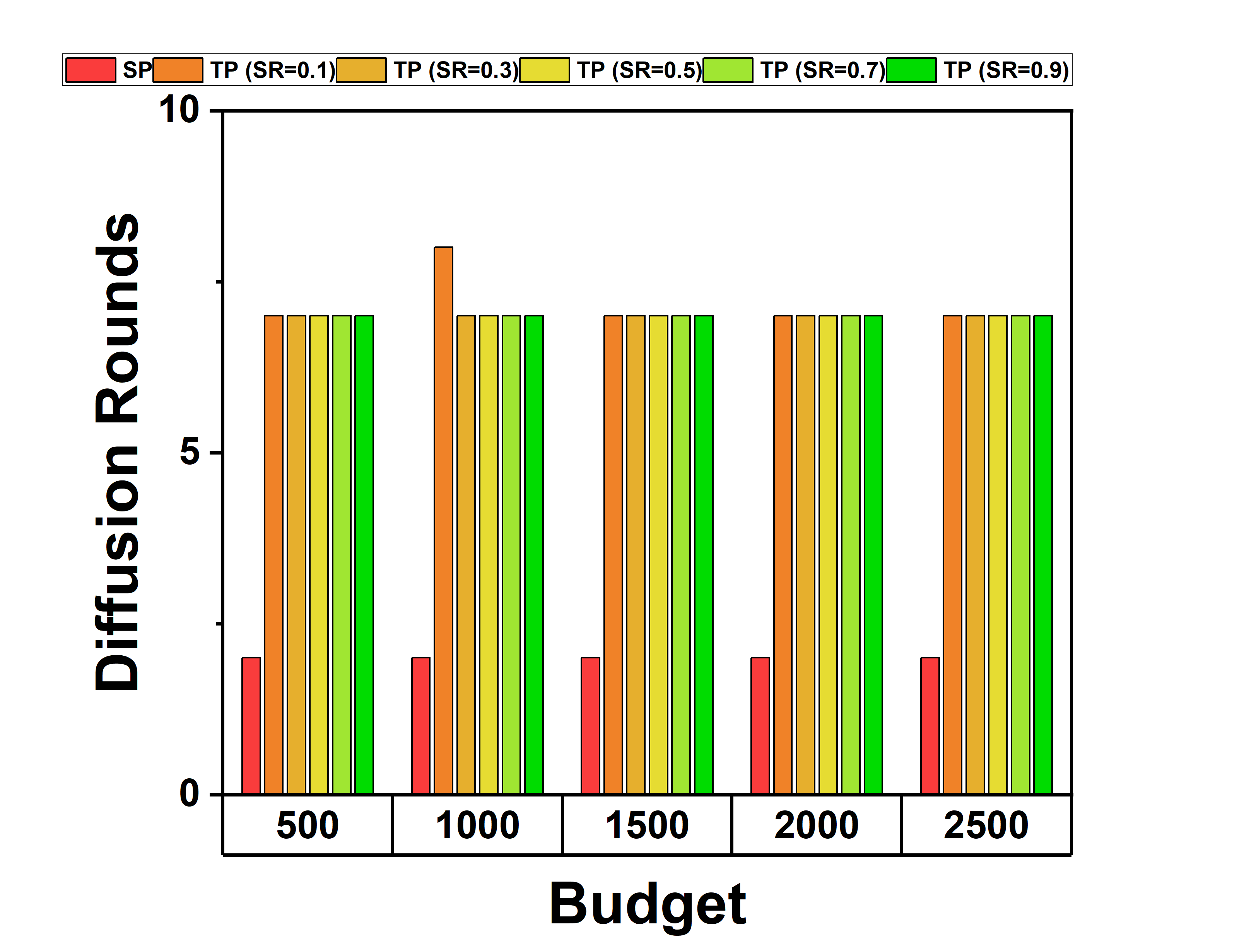}
        \caption{Timestep 6}
    \end{subfigure}
    \hfill
    \begin{subfigure}[t]{0.3\linewidth}
        \centering
        \includegraphics[width=\linewidth]{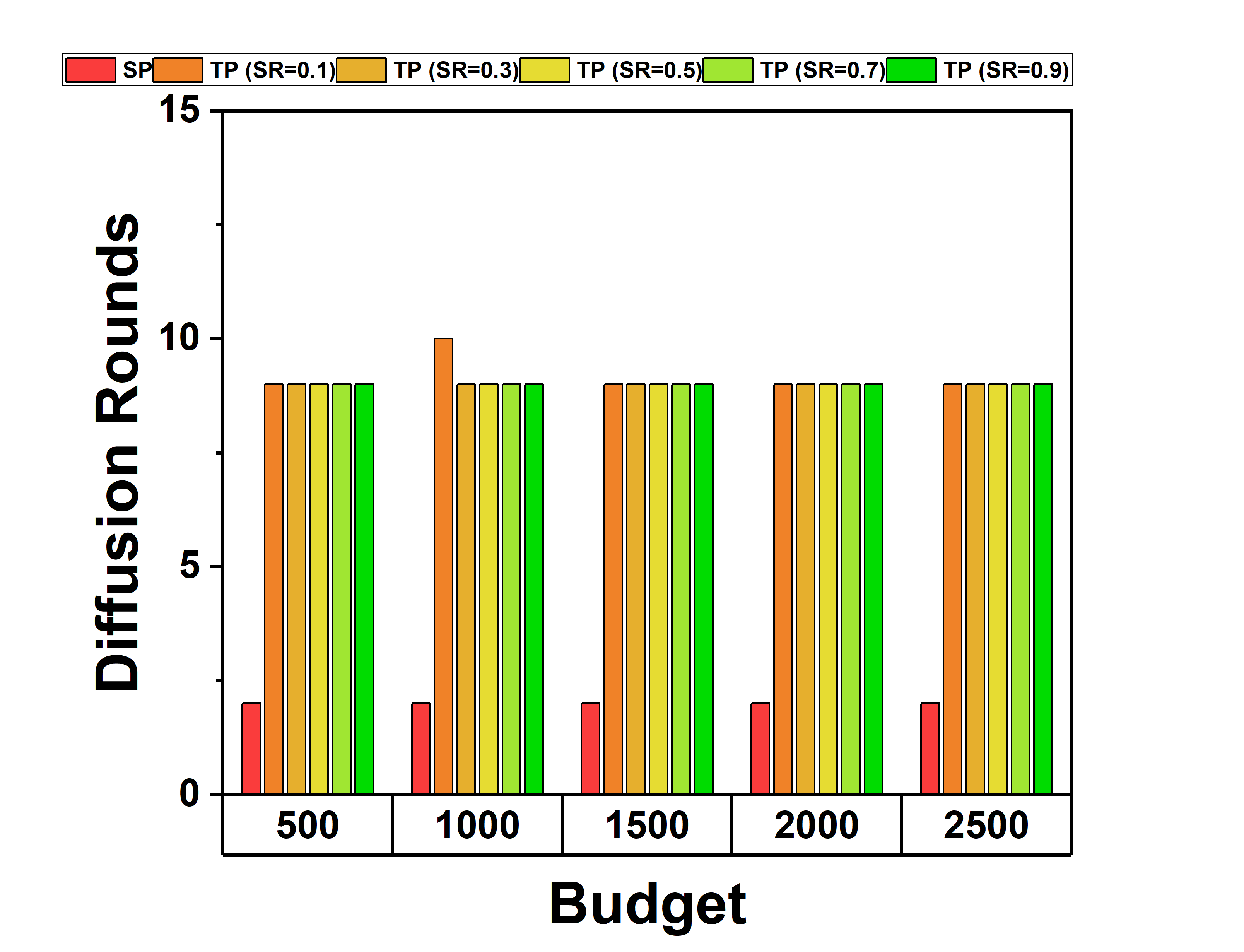}
        \caption{Timestep 8}
    \end{subfigure}
    \hfill
    \begin{subfigure}[t]{0.3\linewidth}
        \centering
        \includegraphics[width=\linewidth]{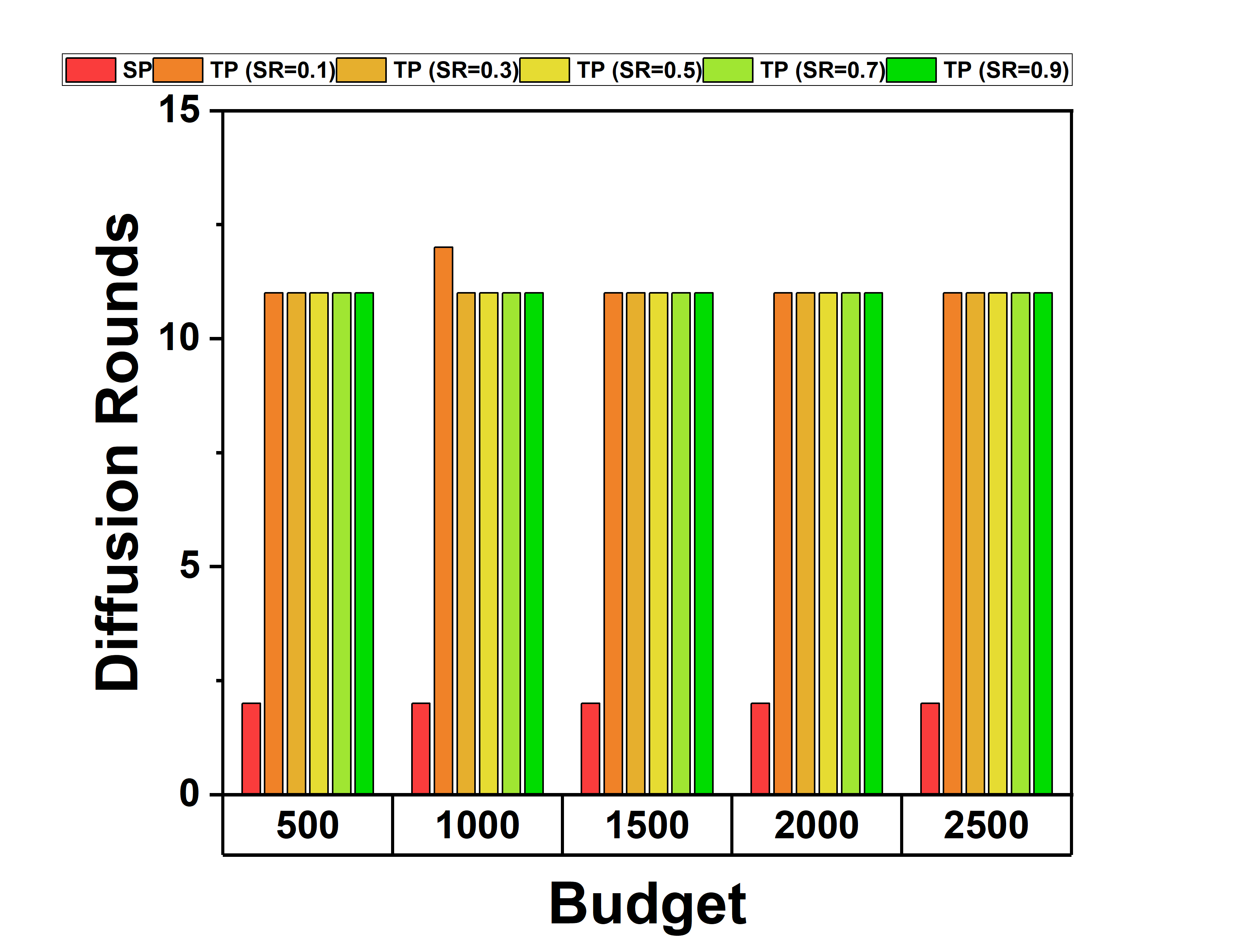}
        \caption{Timestep 10}
    \end{subfigure}

    \caption{Diffusion Rounds in Single Phase Vs. Two Phase (Clustering Coefficient Algorithm, \textit{LM} Dataset, Probability Setting - Trivalency)}
    \label{RQ5LM_T3}
\end{figure}

\begin{figure}[htbp]
    \centering
    \begin{subfigure}[t]{0.3\linewidth}
        \centering 
        \includegraphics[width=\linewidth]{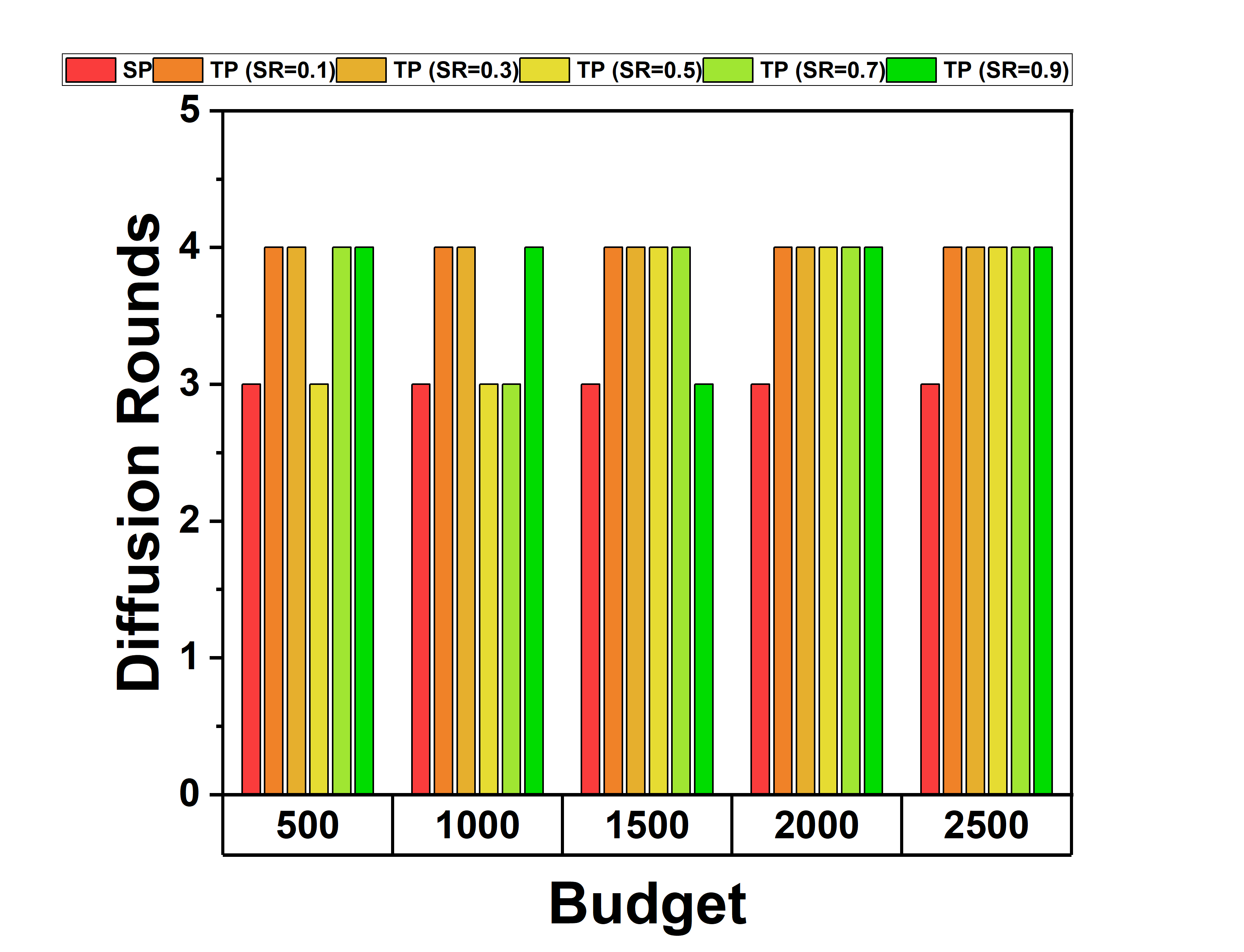}
        \caption{Timestep 2}
    \end{subfigure}
    \hspace{0.05\linewidth}
    \begin{subfigure}[t]{0.3\linewidth}
        \centering
        \includegraphics[width=\linewidth]{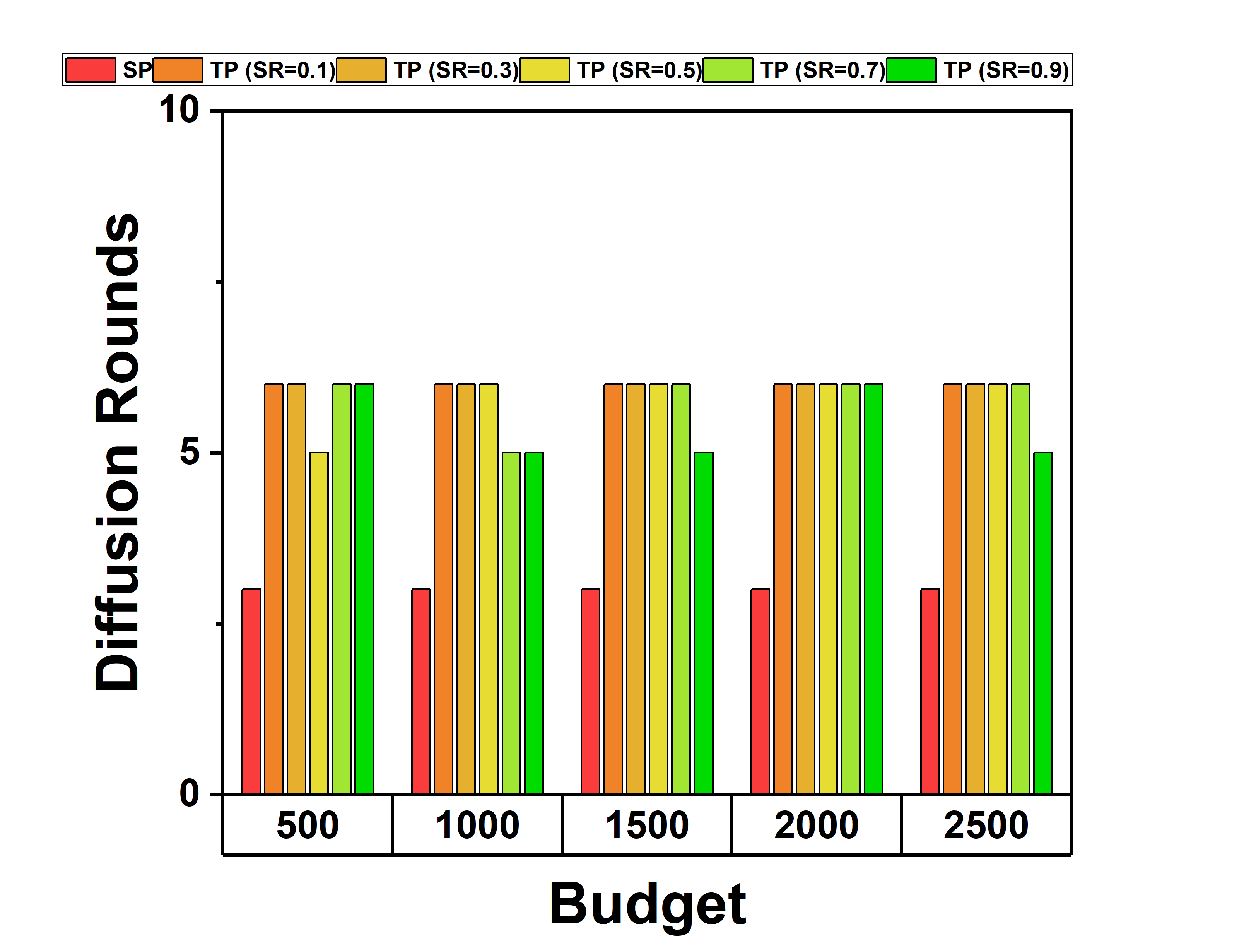}
        \caption{Timestep 4}
    \end{subfigure}

    \vspace{0.5cm}

    \begin{subfigure}[t]{0.3\linewidth}
        \centering
        \includegraphics[width=\linewidth]{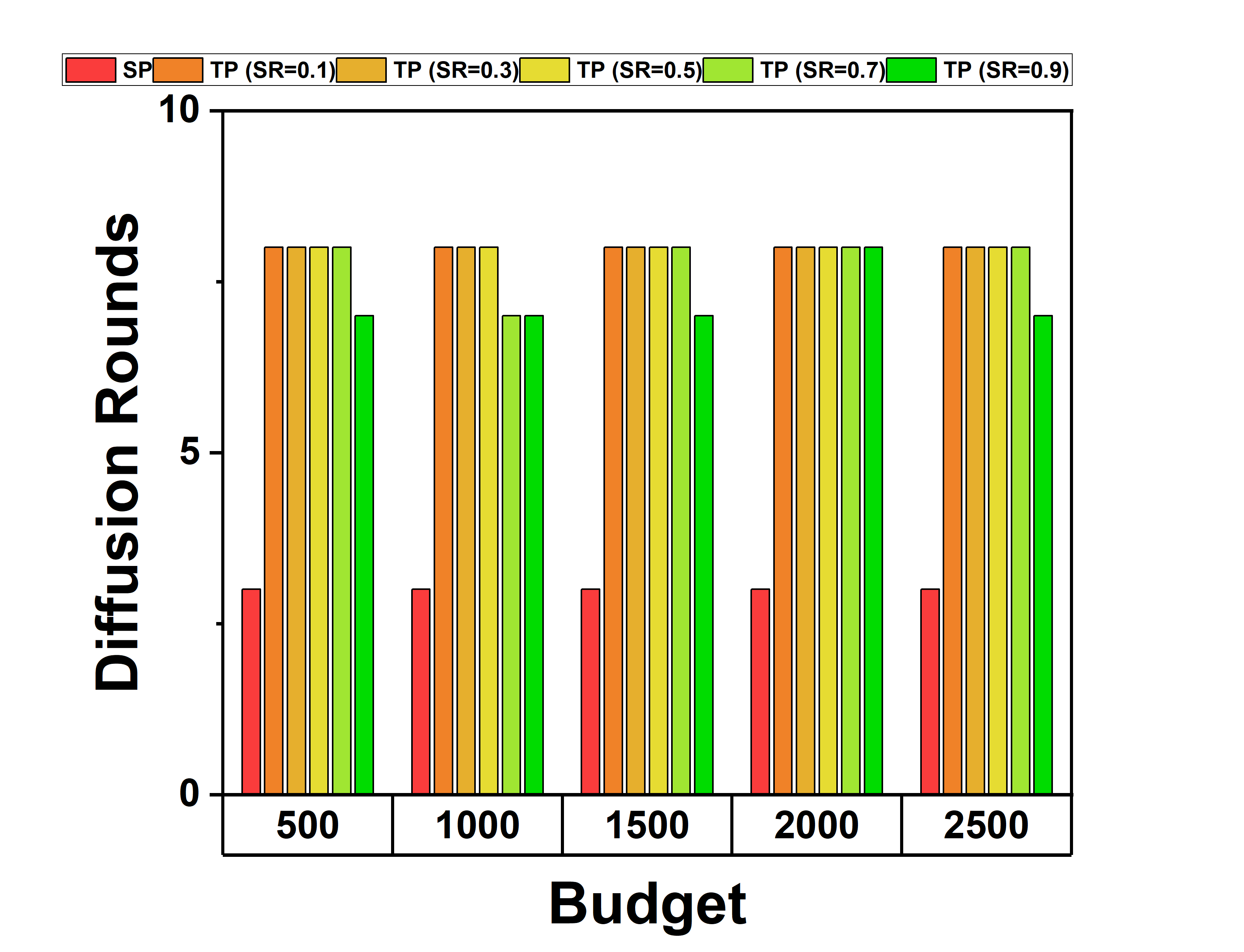}
        \caption{Timestep 6}
    \end{subfigure}
    \hfill
    \begin{subfigure}[t]{0.3\linewidth}
        \centering
        \includegraphics[width=\linewidth]{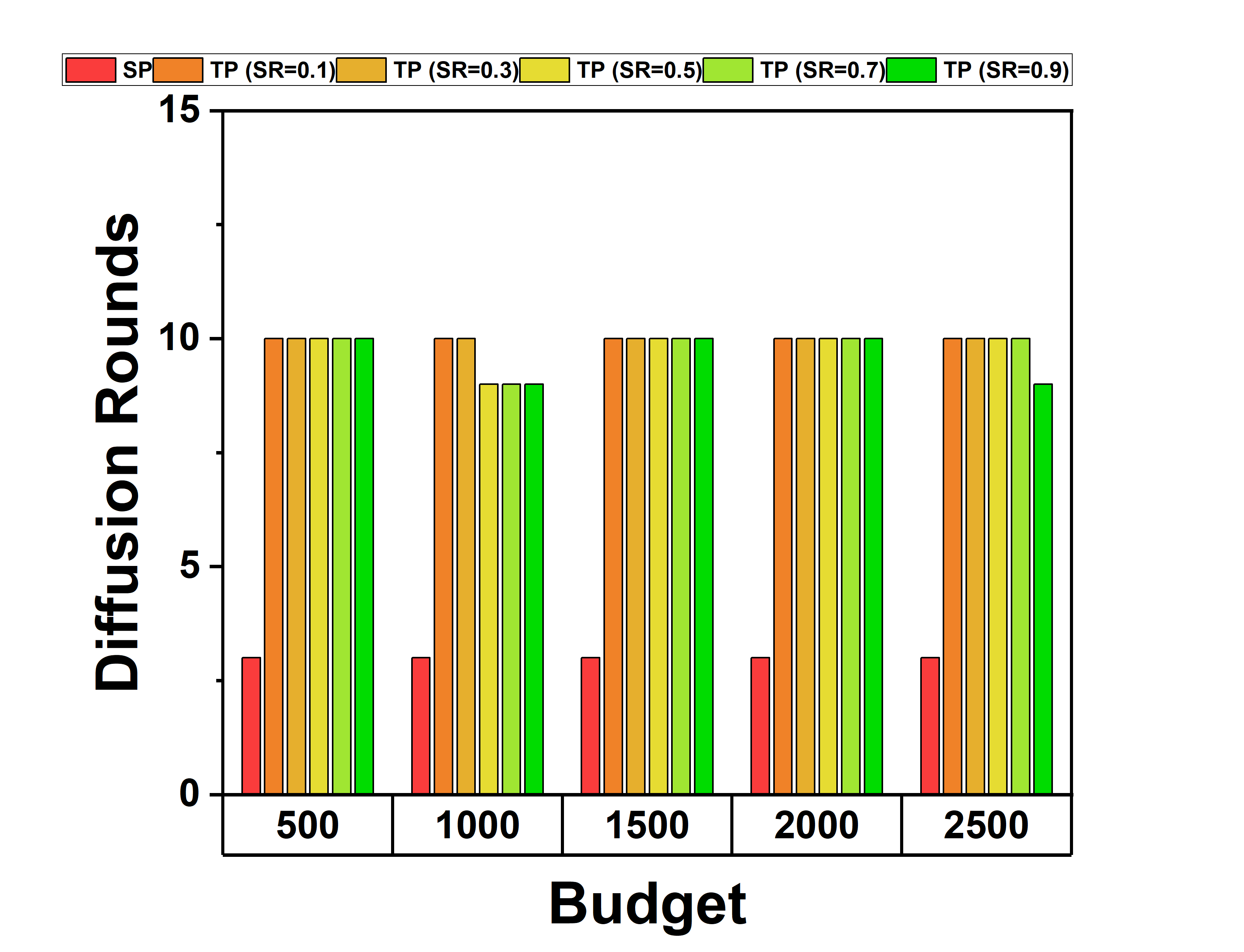}
        \caption{Timestep 8}
    \end{subfigure}
    \hfill
    \begin{subfigure}[t]{0.3\linewidth}
        \centering
        \includegraphics[width=\linewidth]{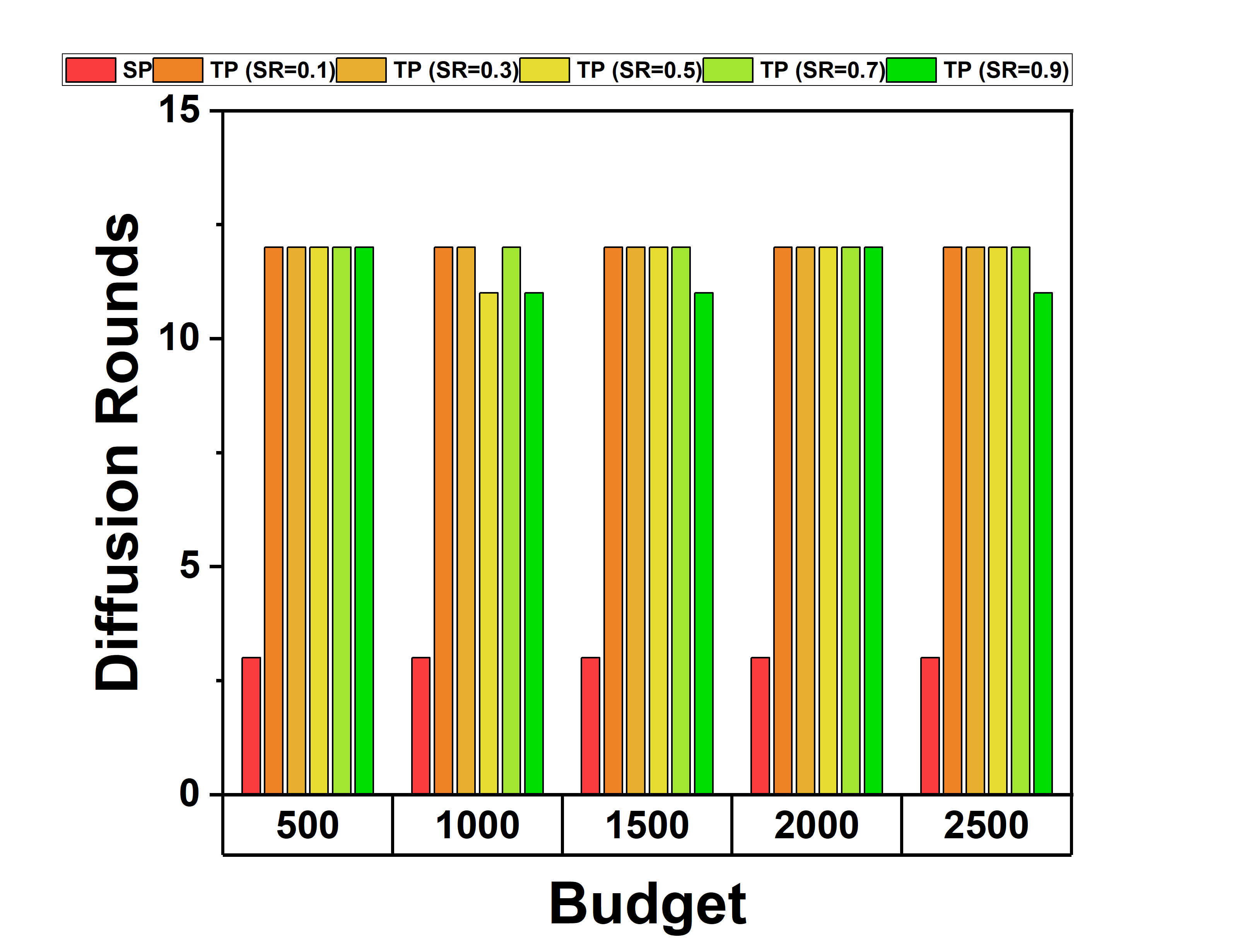}
        \caption{Timestep 10}
    \end{subfigure}
    \caption{Diffusion Rounds in Single Phase Vs. Two Phase (Degree Discount Algorithm, \textit{LM} Dataset, Probability Setting - Trivalency)}
    \label{RQ5LM_T4}
\end{figure}

\begin{figure}[htbp]
    \centering
    \begin{subfigure}[t]{0.3\linewidth}
        \centering
        \includegraphics[width=\linewidth]{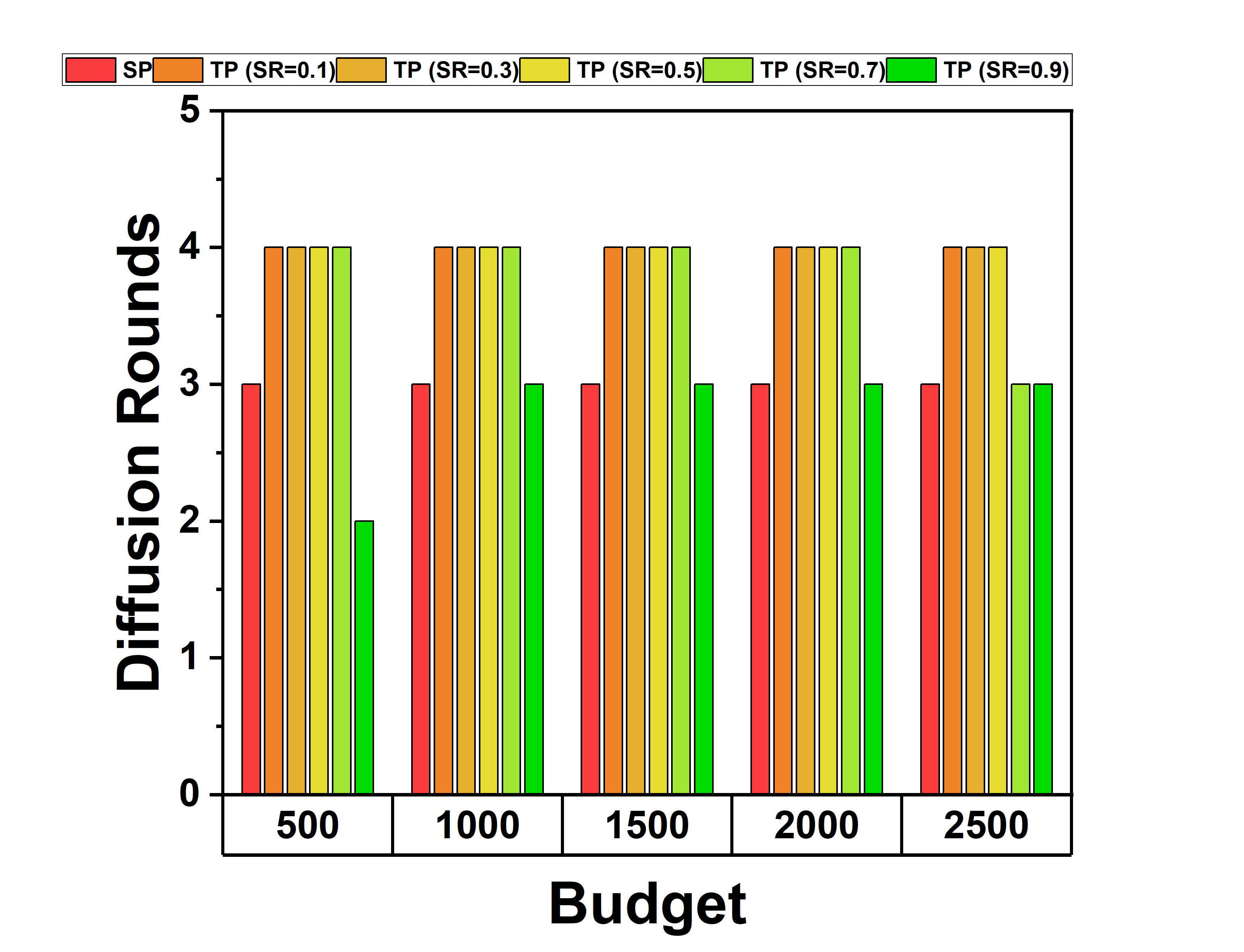}
        \caption{Timestep 2}
    \end{subfigure}
    \hspace{0.05\linewidth}
    \begin{subfigure}[t]{0.3\linewidth}
        \centering
        \includegraphics[width=\linewidth]{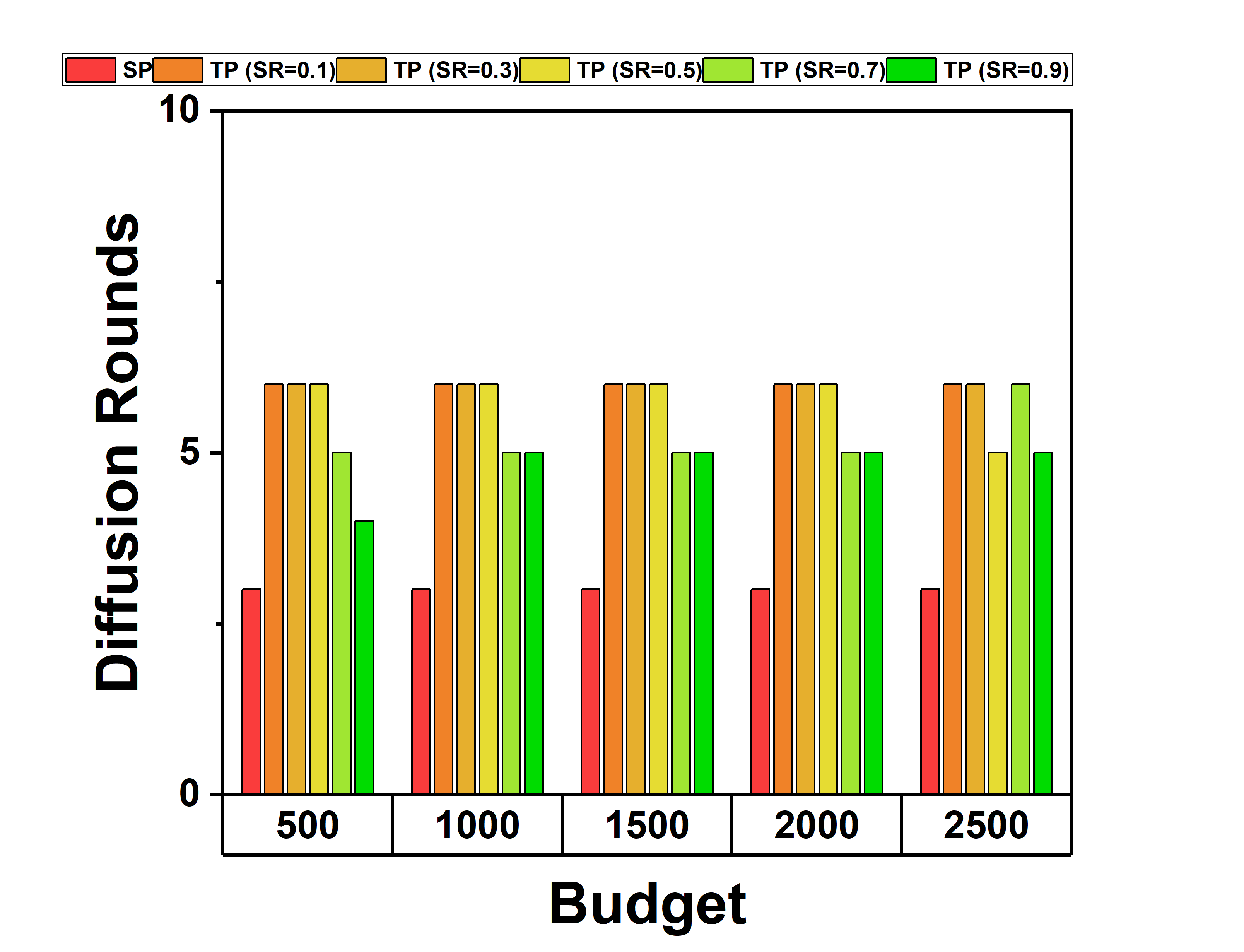}
        \caption{Timestep 4}
    \end{subfigure}

    \vspace{0.5cm}

    \begin{subfigure}[t]{0.3\linewidth}
        \centering
        \includegraphics[width=\linewidth]{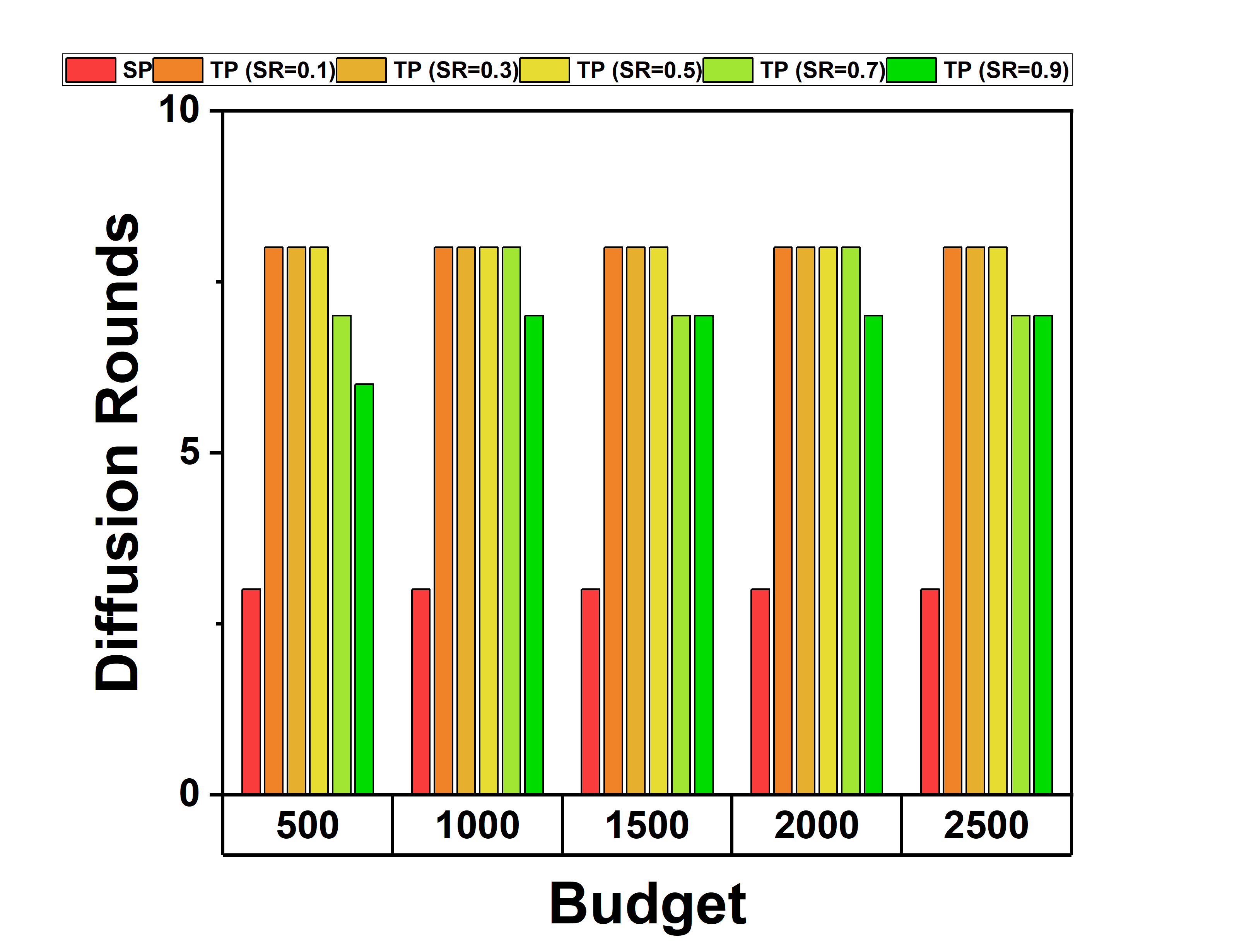}
        \caption{Timestep 6}
    \end{subfigure}
    \hfill
    \begin{subfigure}[t]{0.3\linewidth}
        \centering
        \includegraphics[width=\linewidth]{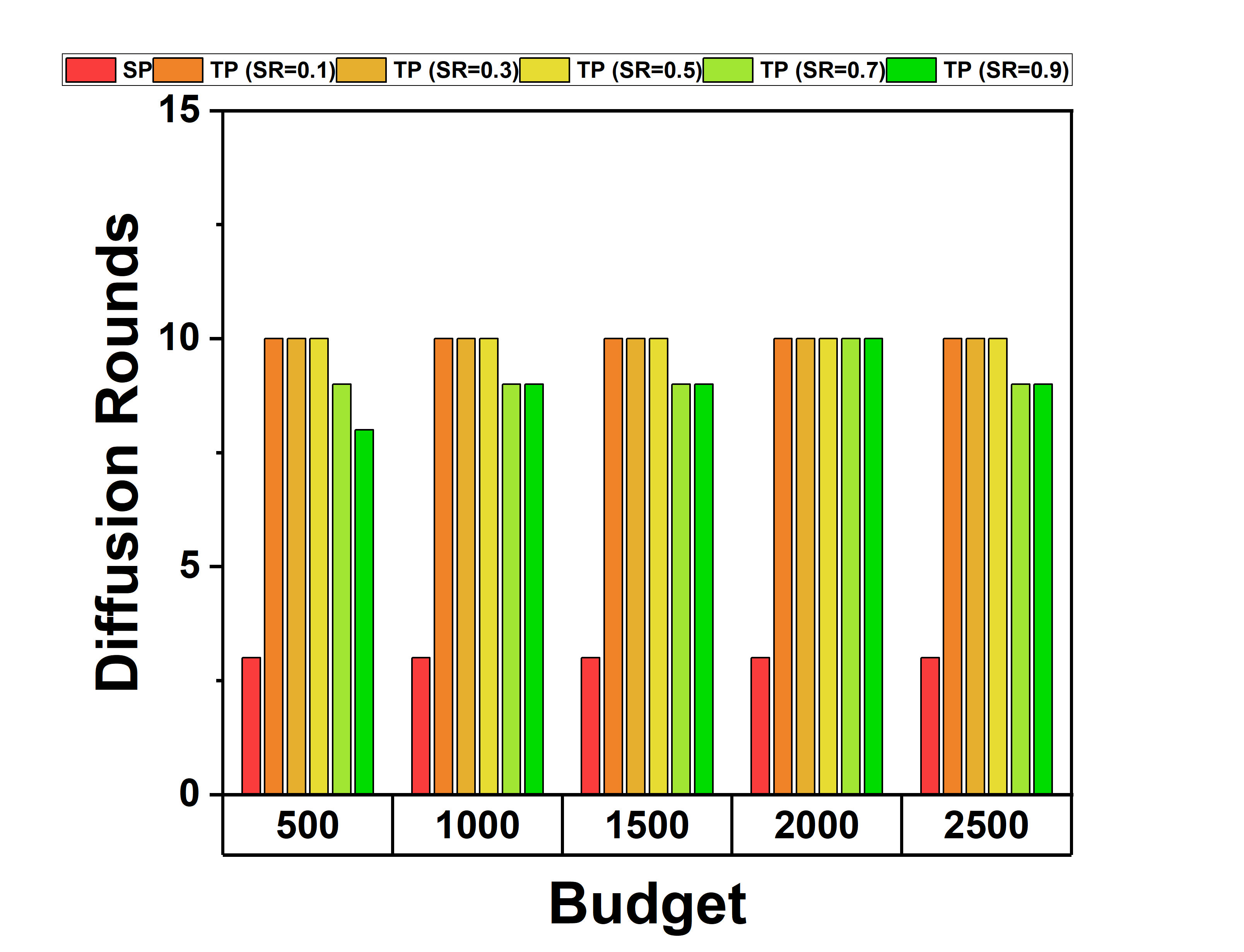}
        \caption{Timestep 8}
    \end{subfigure}
    \hfill
    \begin{subfigure}[t]{0.3\linewidth}
        \centering
        \includegraphics[width=\linewidth]{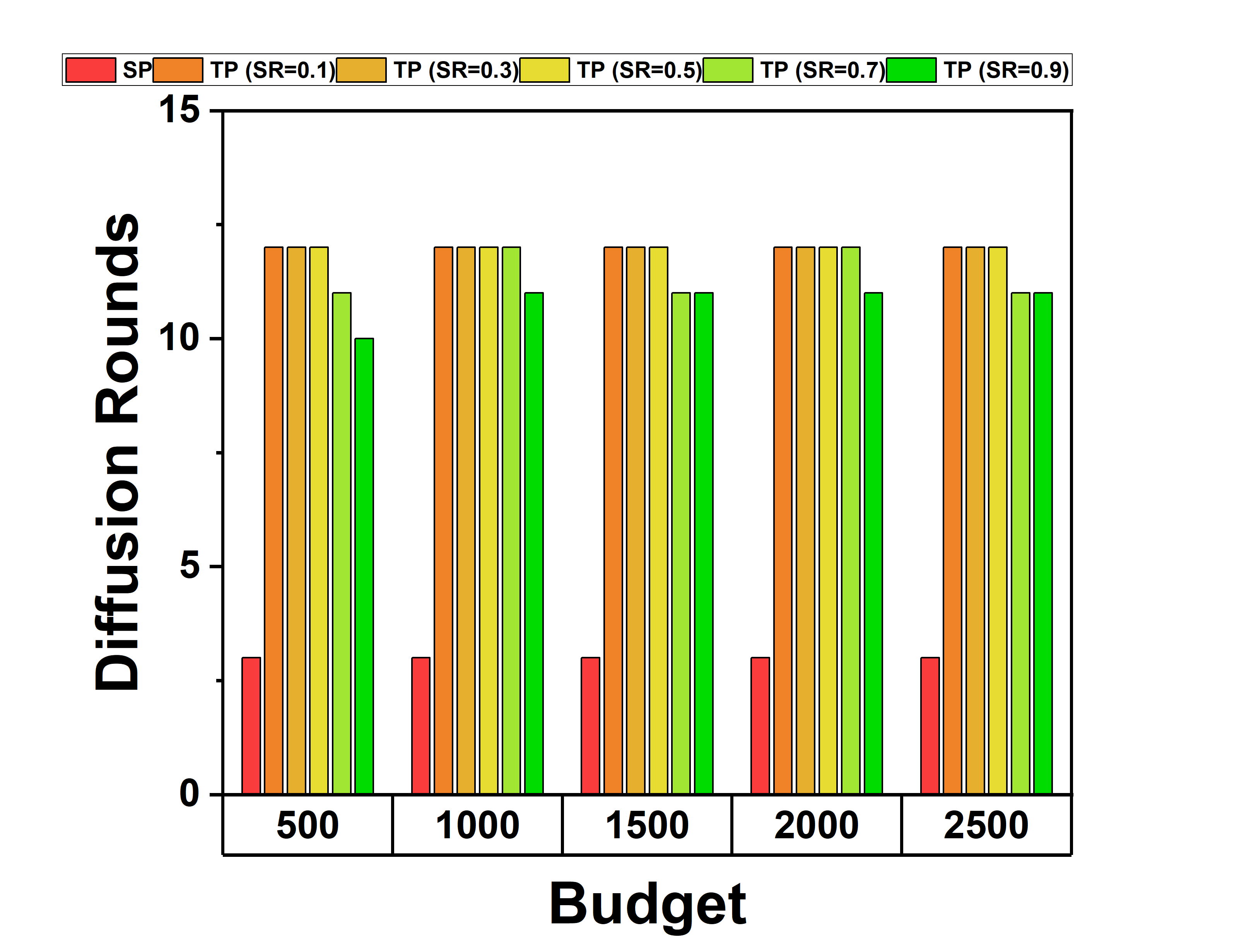}
        \caption{Timestep 10}
    \end{subfigure}

    \caption{Diffusion Rounds in Single Phase Vs. Two Phase (Single Discount Algorithm, \textit{LM} Dataset, Probability Setting - Trivalency)}
    \label{RQ5LM_T5}
\end{figure}

\begin{figure}[htbp]
    \centering
    \begin{subfigure}[t]{0.3\linewidth}
        \centering
        \includegraphics[width=\linewidth]{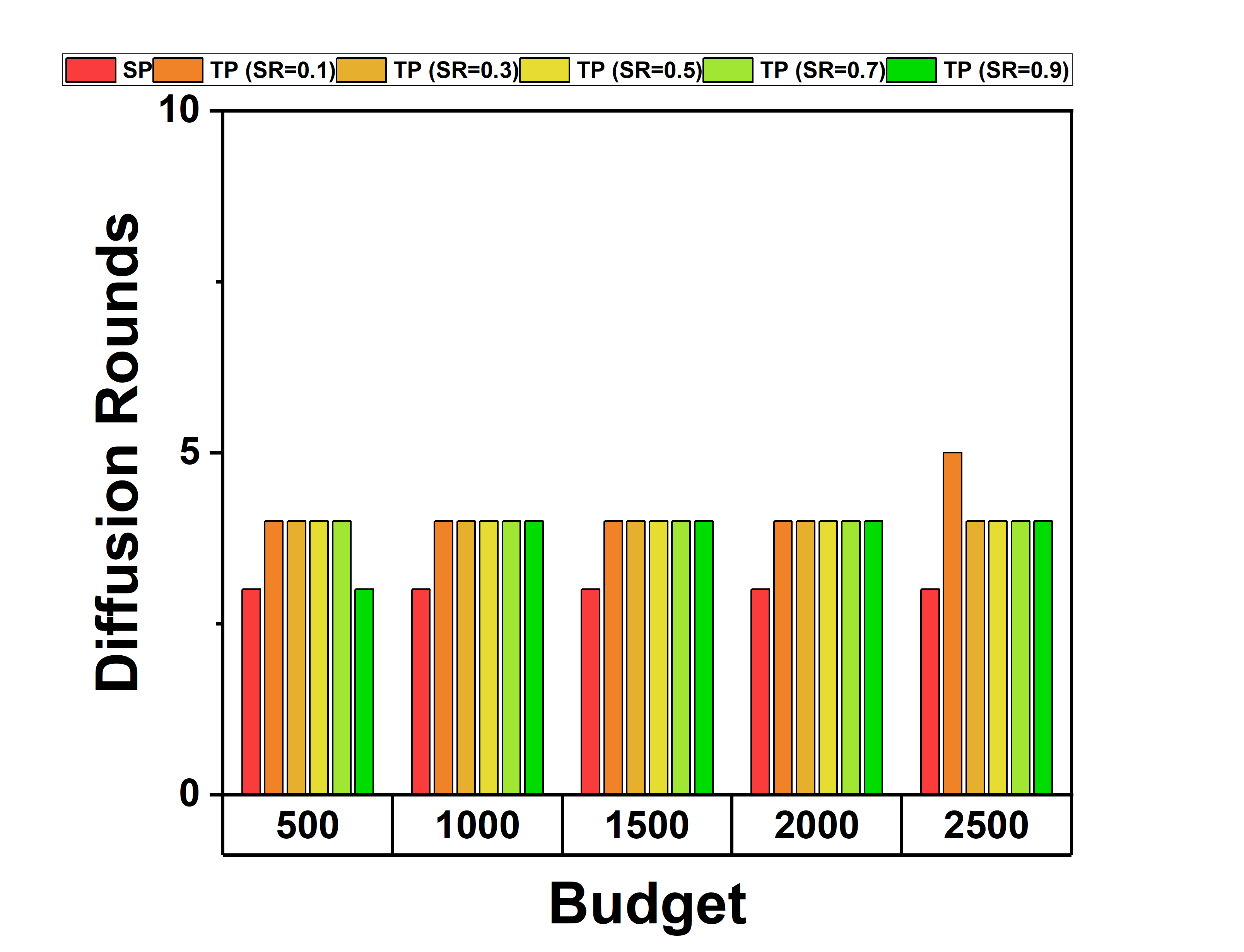}
        \caption{Timestep 2}
    \end{subfigure}
    \hspace{0.05\linewidth}
    \begin{subfigure}[t]{0.3\linewidth}
        \centering
        \includegraphics[width=\linewidth]{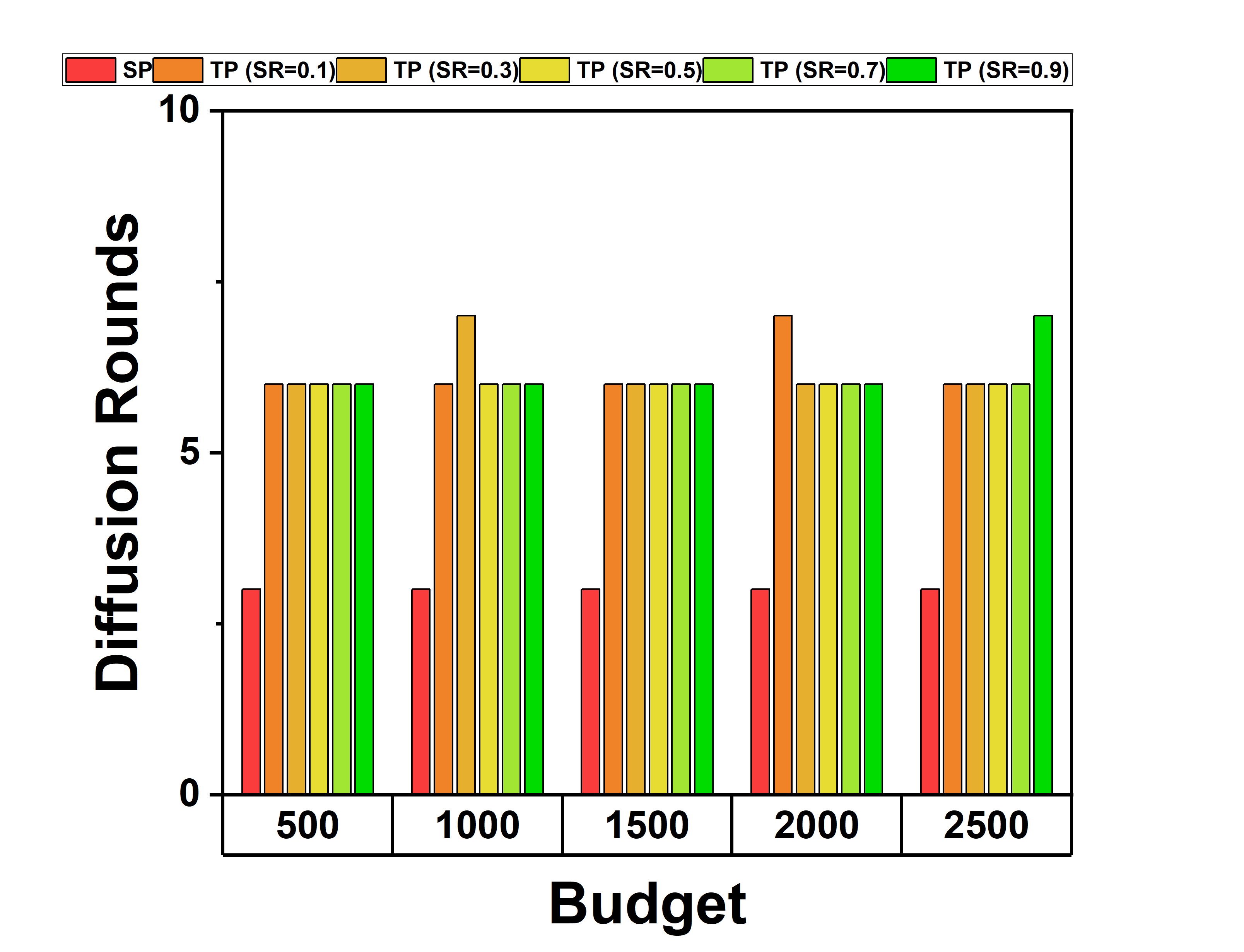}
        \caption{Timestep 4}
    \end{subfigure}

    \vspace{0.5cm}

    \begin{subfigure}[t]{0.3\linewidth}
        \centering
        \includegraphics[width=\linewidth]{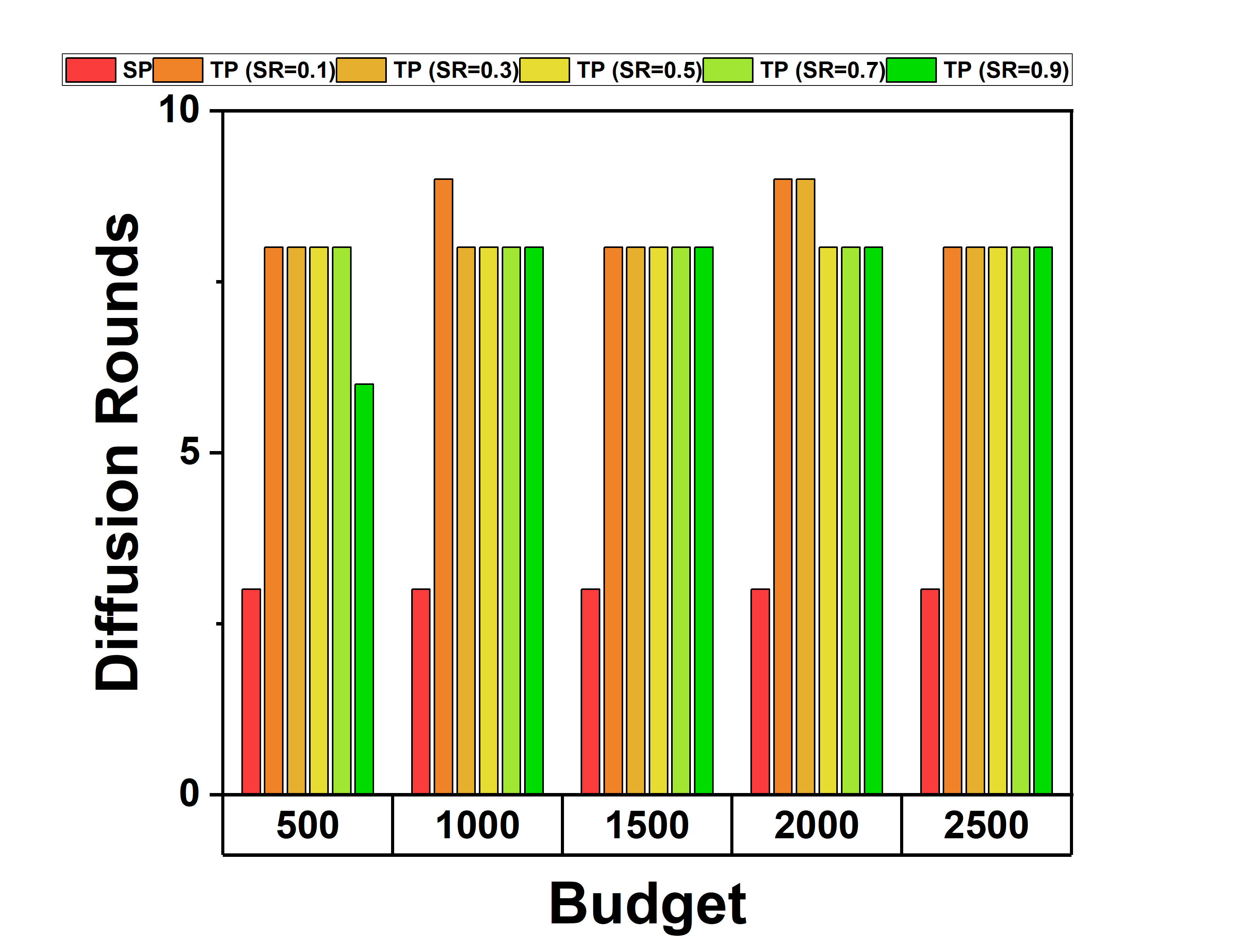}
        \caption{Timestep 6}
    \end{subfigure}
    \hfill
    \begin{subfigure}[t]{0.3\linewidth}
        \centering
        \includegraphics[width=\linewidth]{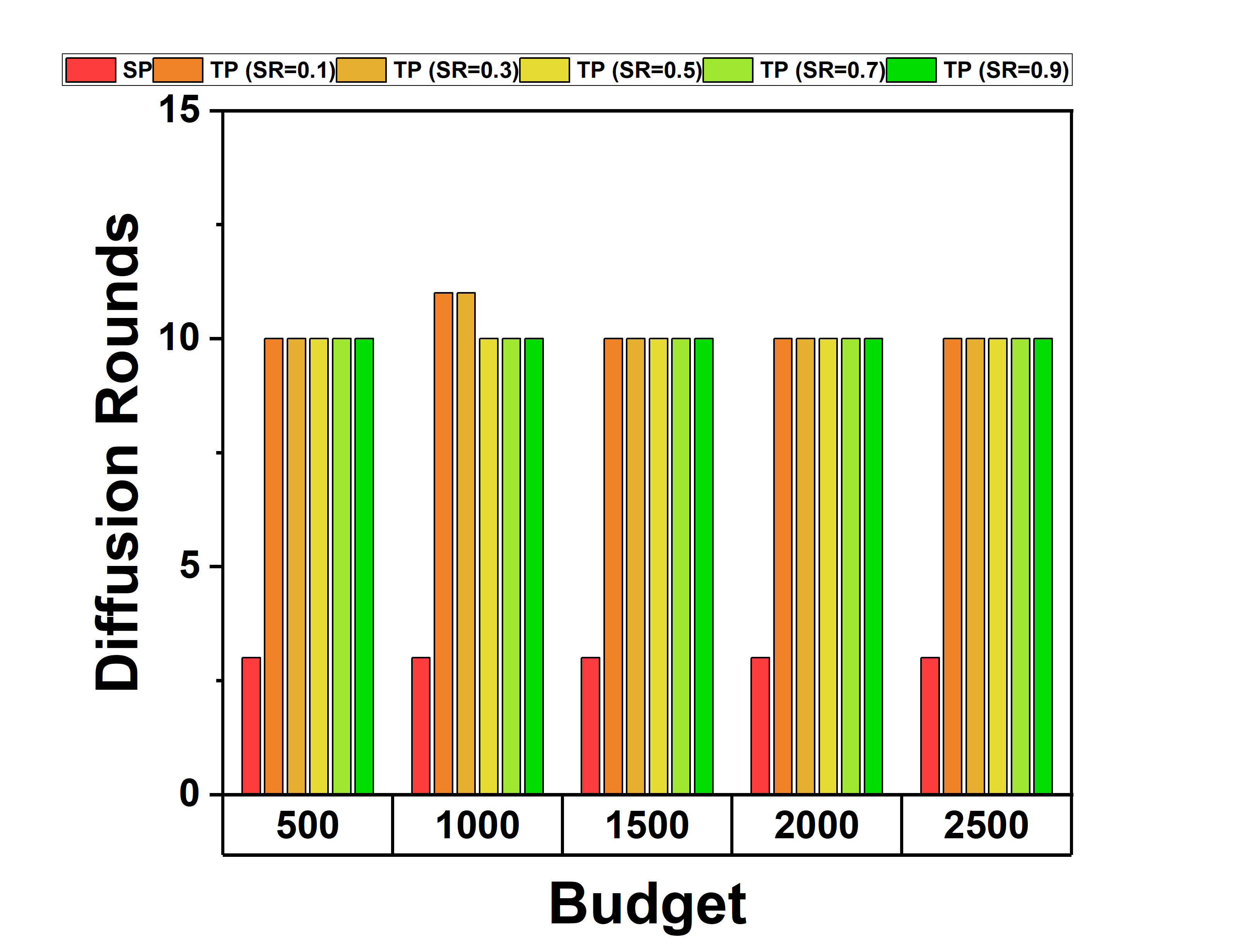}
        \caption{Timestep 8}
    \end{subfigure}
    \hfill
    \begin{subfigure}[t]{0.3\linewidth}
        \centering
        \includegraphics[width=\linewidth]{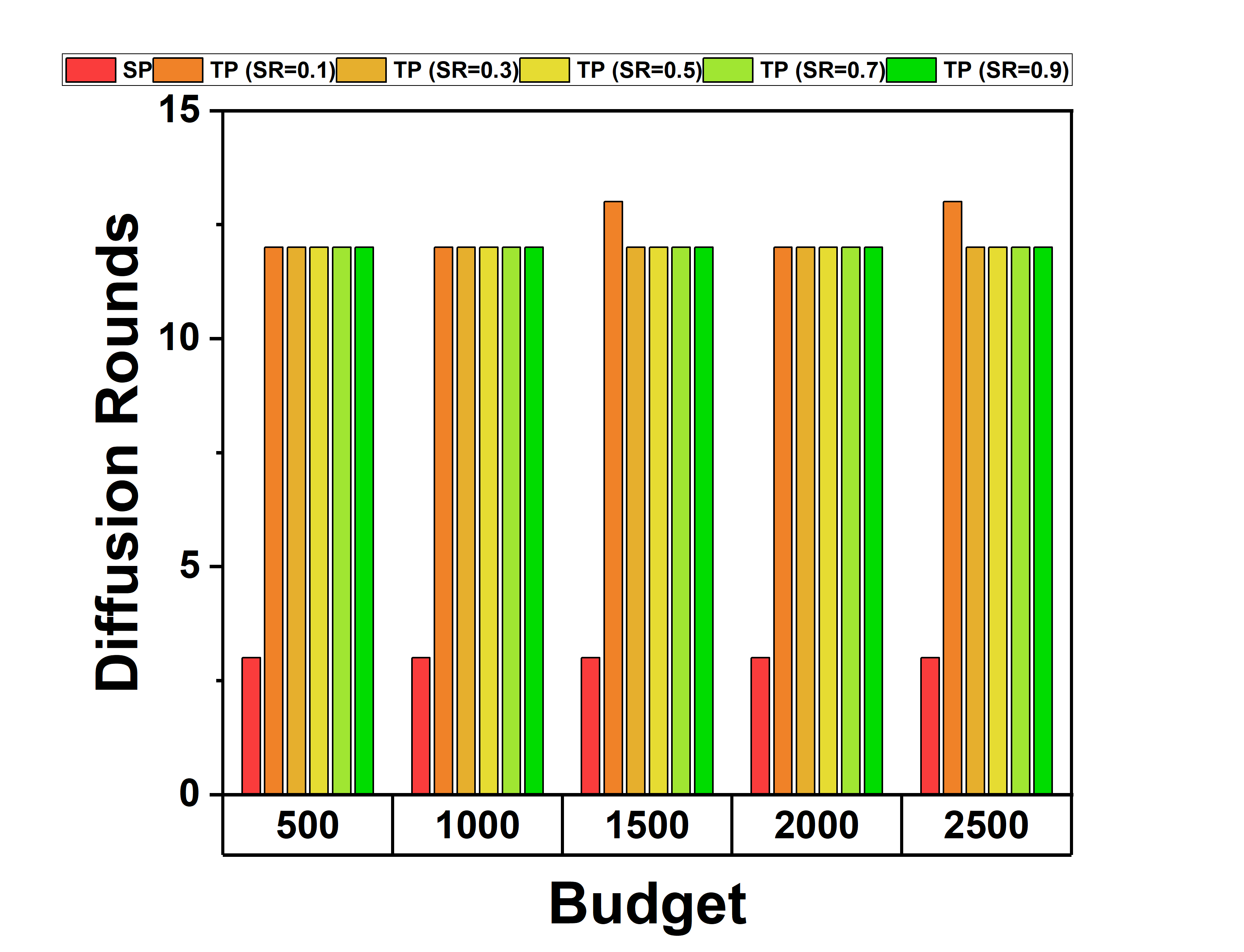}
        \caption{Timestep 10}
    \end{subfigure}

    \caption{Diffusion Rounds in Single Phase Vs. Two Phase (Simple Greedy Algorithm, \textit{LM} Dataset, Probability Setting - Trivalency)}
    \label{RQ5LM_T6}
\end{figure}

\begin{figure}[htbp]
    \centering
    \begin{subfigure}[t]{0.3\linewidth}
        \centering
        \includegraphics[width=\linewidth]{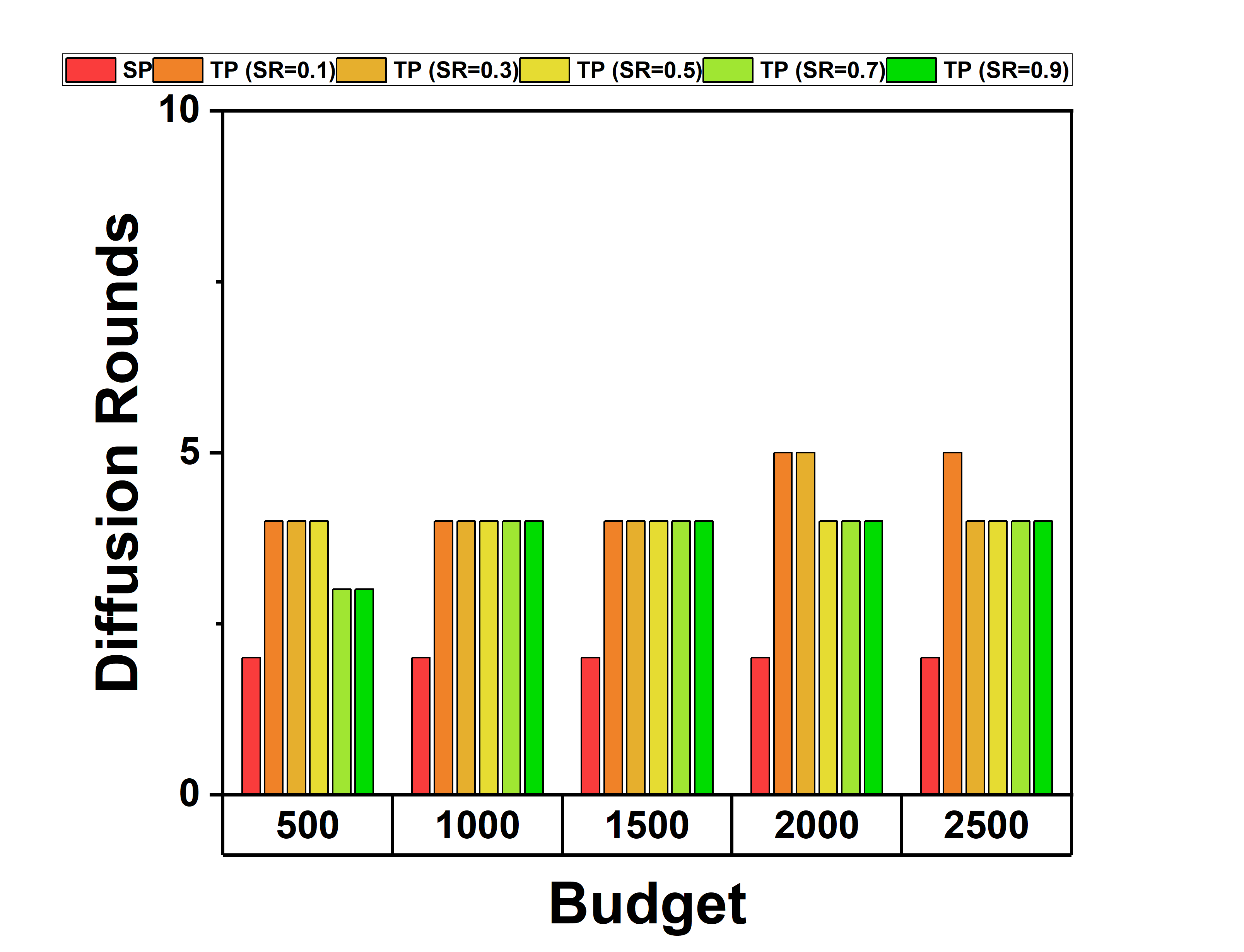}
        \caption{Timestep 2}
    \end{subfigure}
    \hspace{0.05\linewidth}
    \begin{subfigure}[t]{0.3\linewidth}
        \centering
        \includegraphics[width=\linewidth]{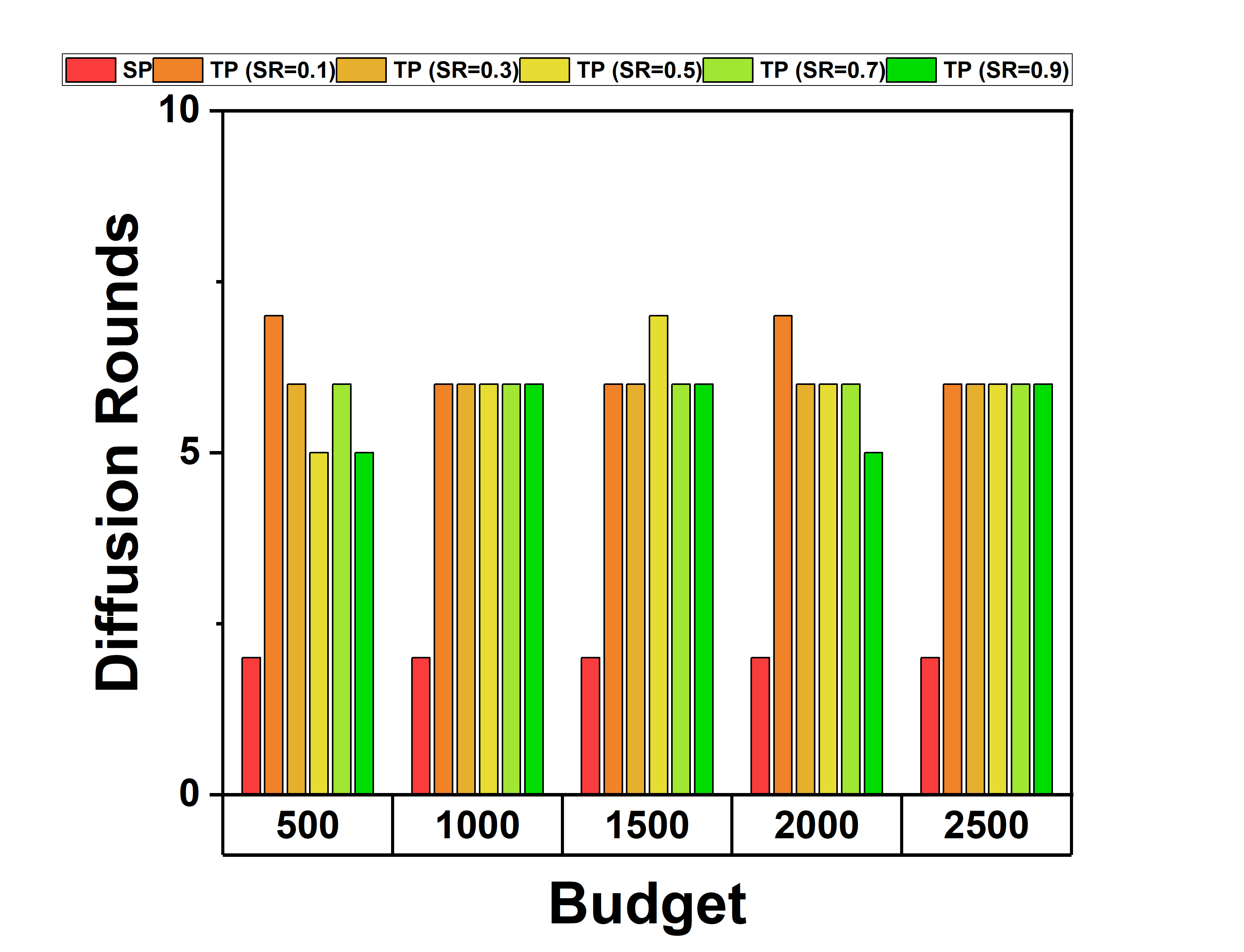}
        \caption{Timestep 4}
    \end{subfigure}

    \vspace{0.5cm}

    \begin{subfigure}[t]{0.3\linewidth}
        \centering
        \includegraphics[width=\linewidth]{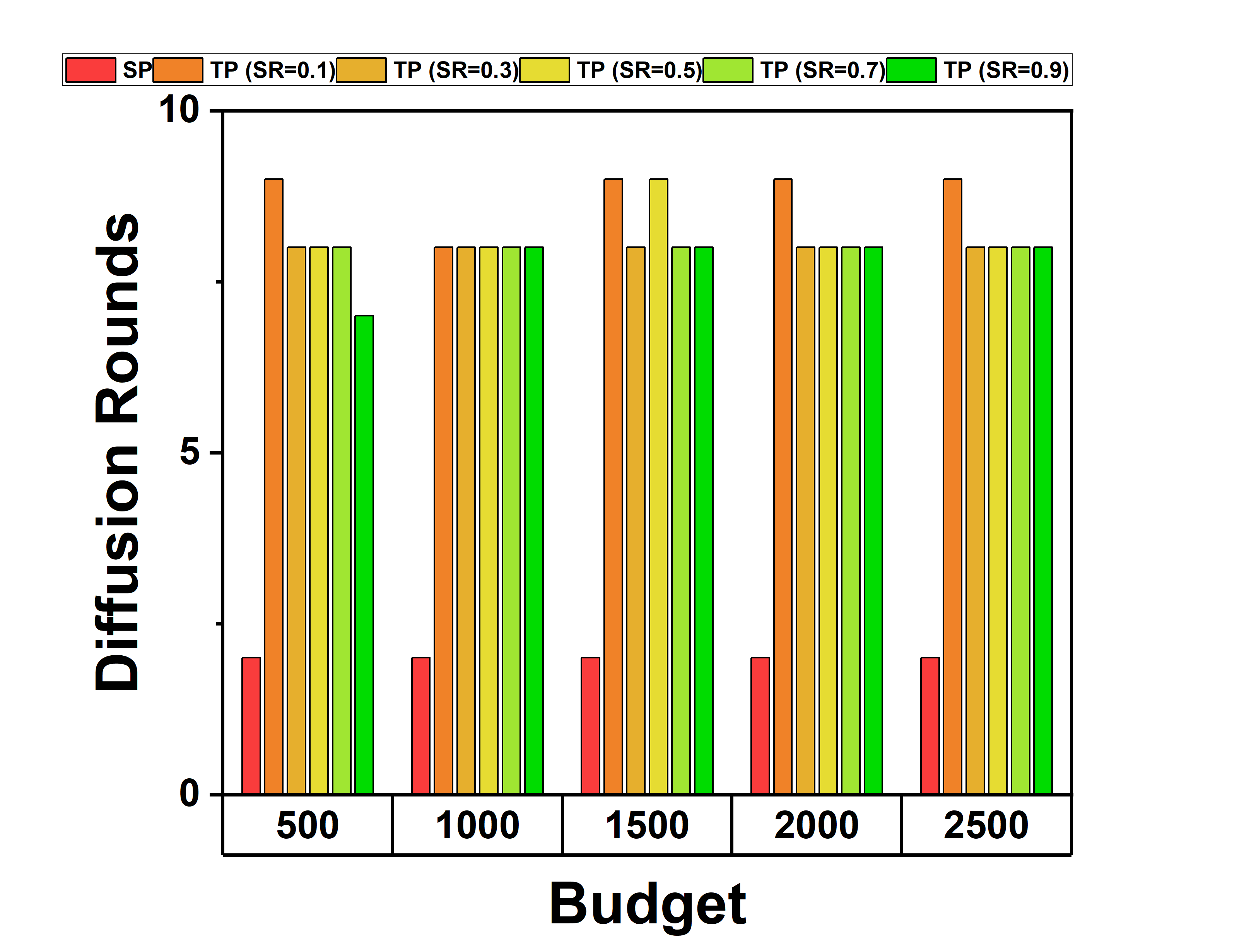}
        \caption{Timestep 6}
    \end{subfigure}
    \hfill
    \begin{subfigure}[t]{0.3\linewidth}
        \centering
        \includegraphics[width=\linewidth]{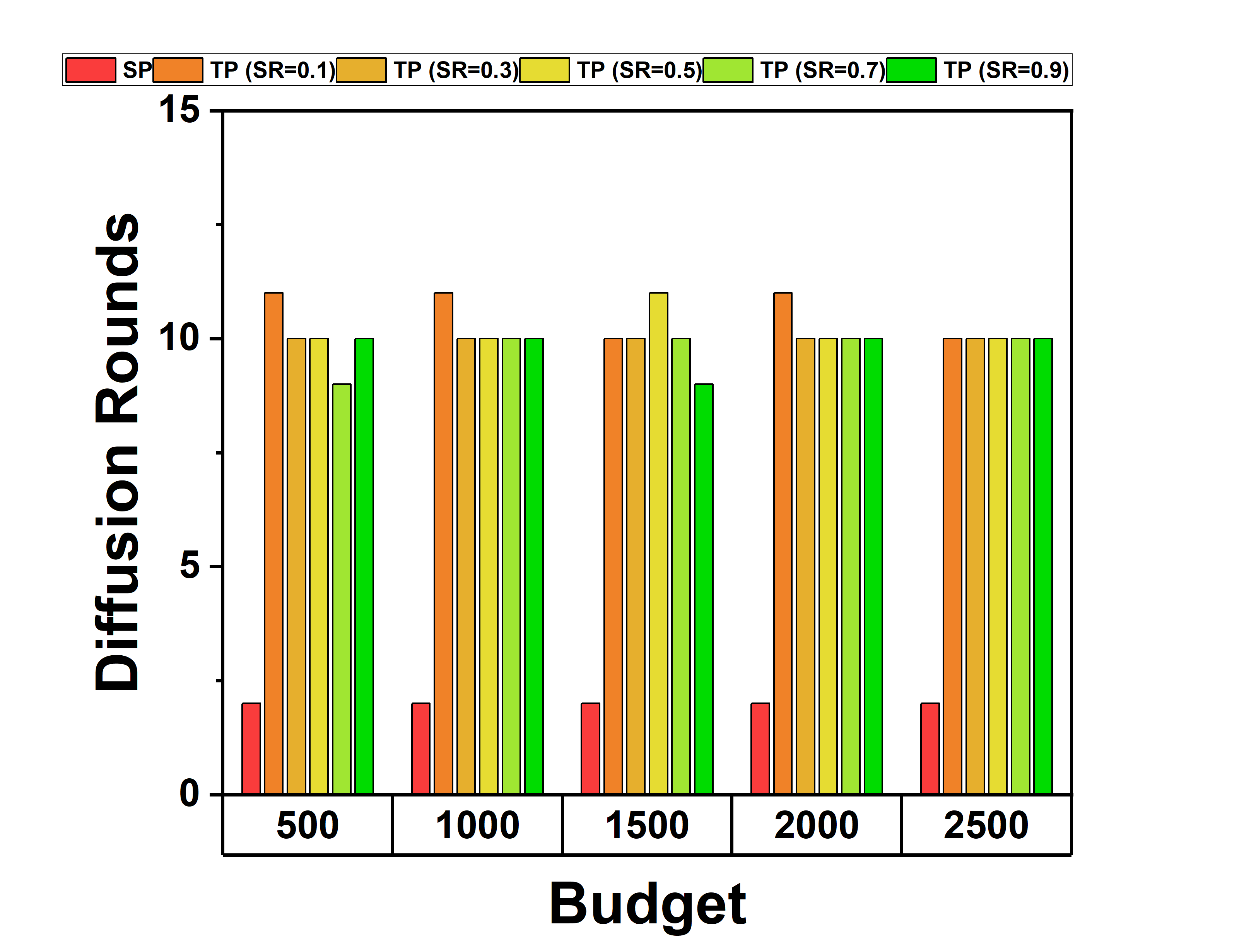}
        \caption{Timestep 8}
    \end{subfigure}
    \hfill
    \begin{subfigure}[t]{0.3\linewidth}
        \centering
        \includegraphics[width=\linewidth]{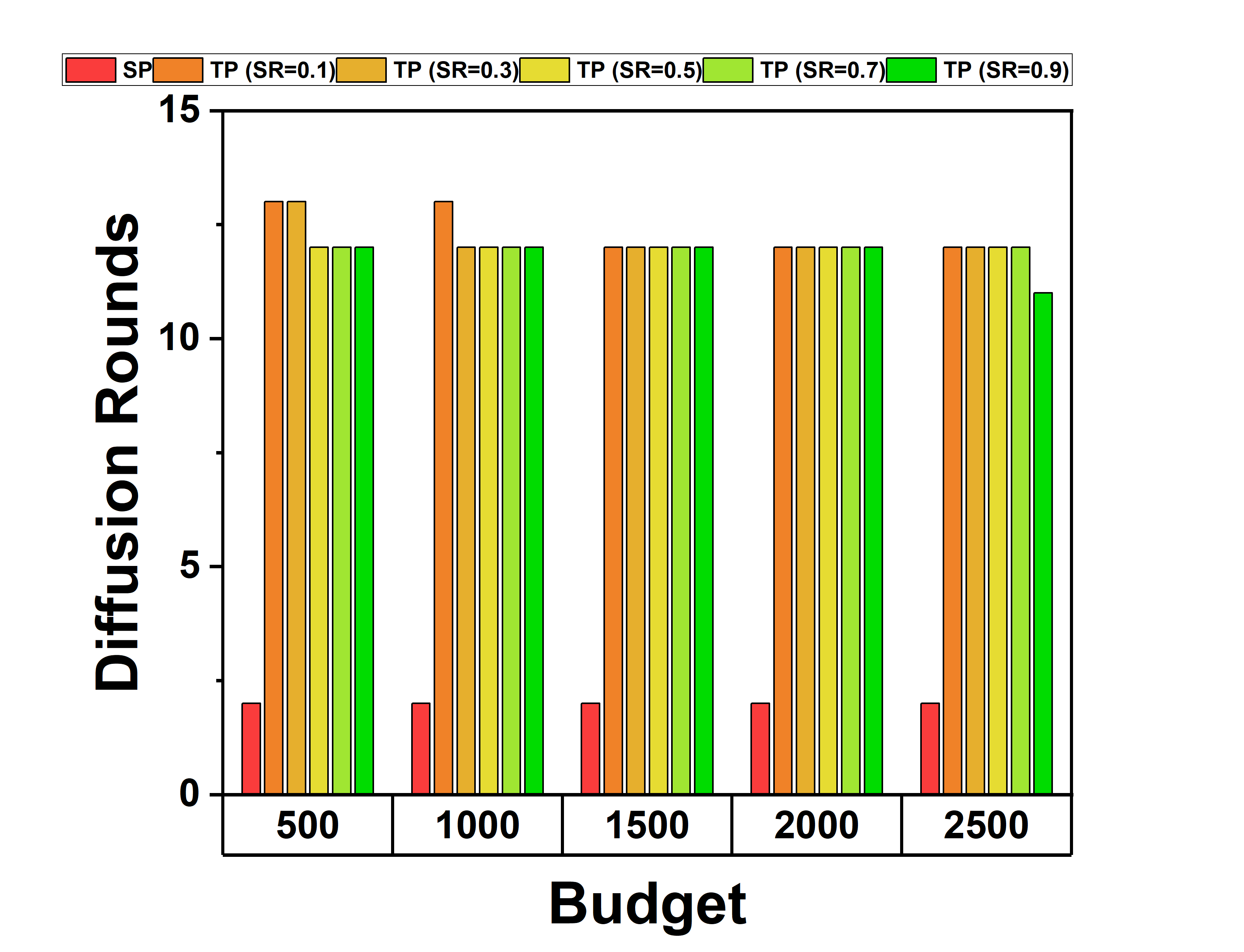}
        \caption{Timestep 10}
    \end{subfigure}

    \caption{Diffusion Rounds in Single Phase Vs. Two Phase (Double Greedy Algorithm, \textit{LM} Dataset, Probability Setting - Trivalency)}
    \label{RQ5LM_T7}
\end{figure}

\begin{figure}[htbp]
    \centering
    \begin{subfigure}[t]{0.3\linewidth}
        \centering
        \includegraphics[width=\linewidth]{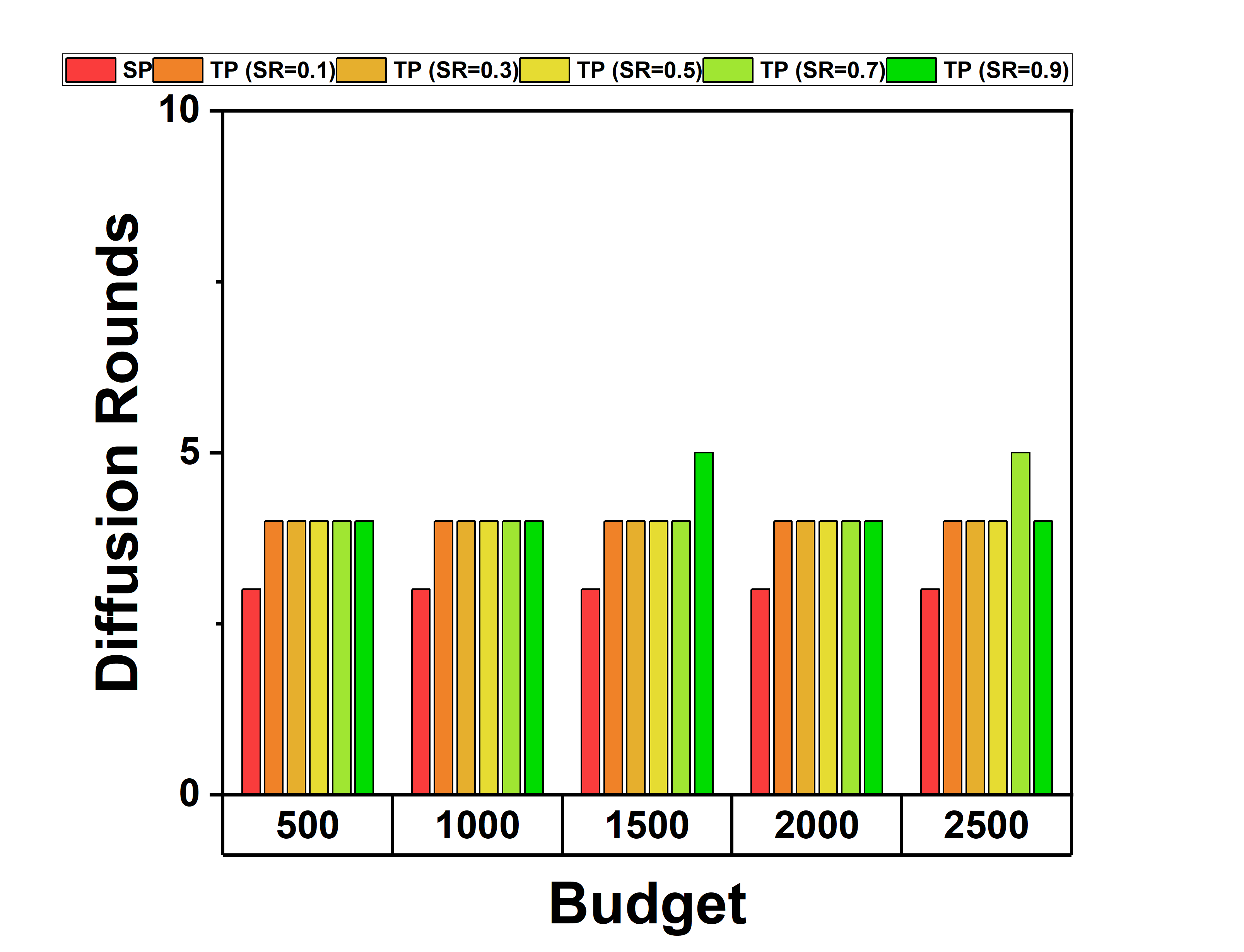}
        \caption{Timestep 2}
    \end{subfigure}
    \hspace{0.05\linewidth}
    \begin{subfigure}[t]{0.3\linewidth}
        \centering
        \includegraphics[width=\linewidth]{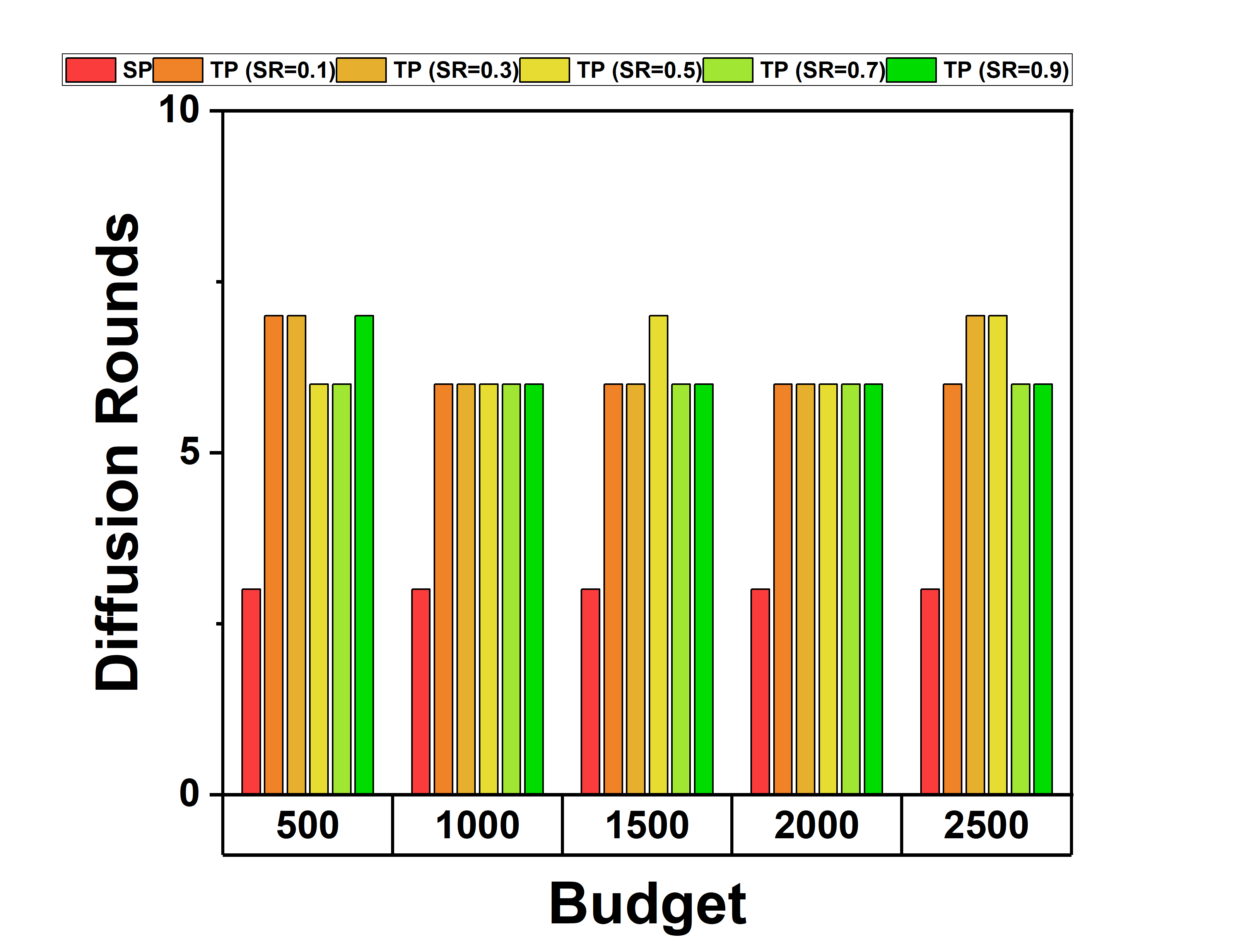}
        \caption{Timestep 4}
    \end{subfigure}

    \vspace{0.5cm}

    \begin{subfigure}[t]{0.3\linewidth}
        \centering
        \includegraphics[width=\linewidth]{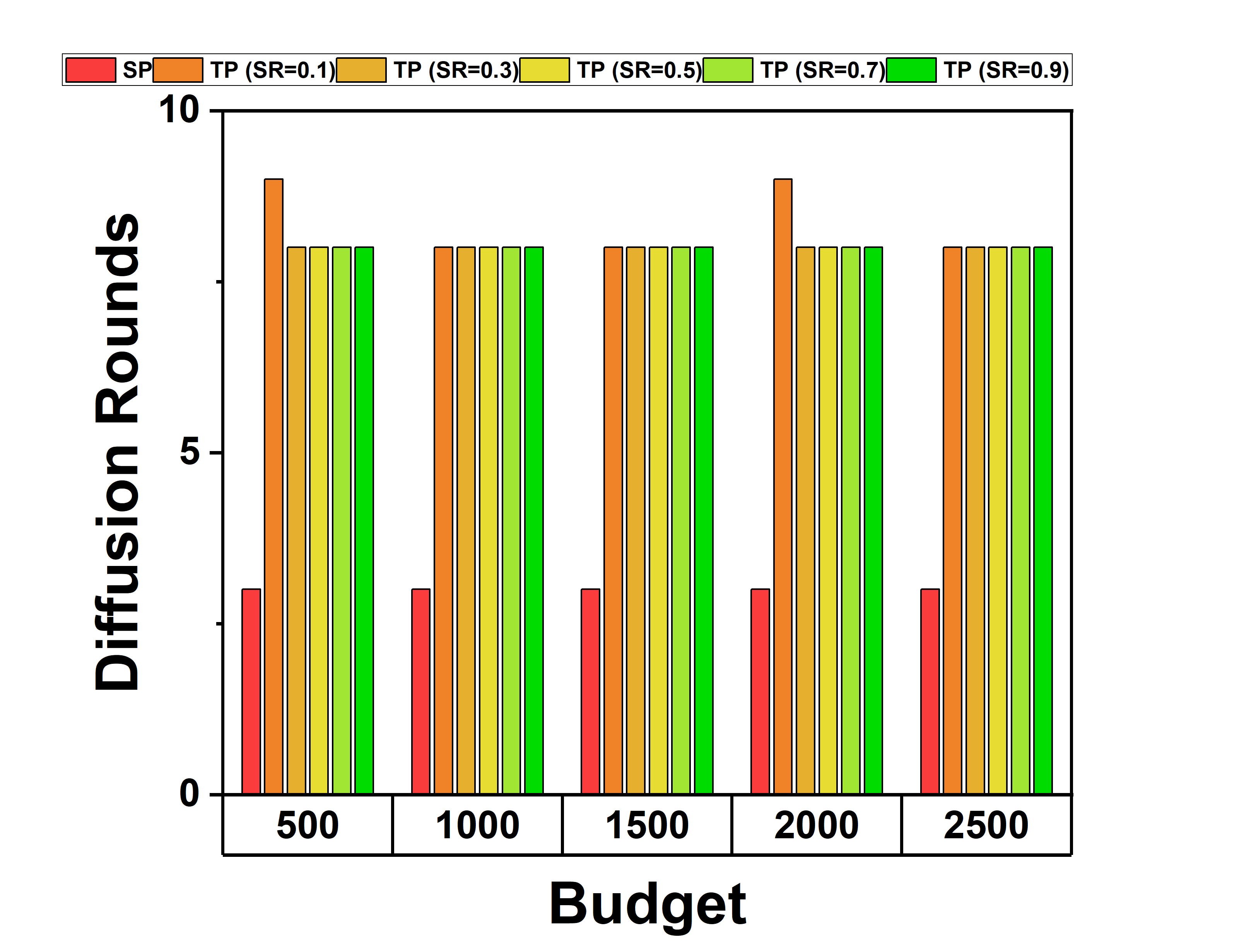}
        \caption{Timestep 6}
    \end{subfigure}
    \hfill
    \begin{subfigure}[t]{0.3\linewidth}
        \centering
        \includegraphics[width=\linewidth]{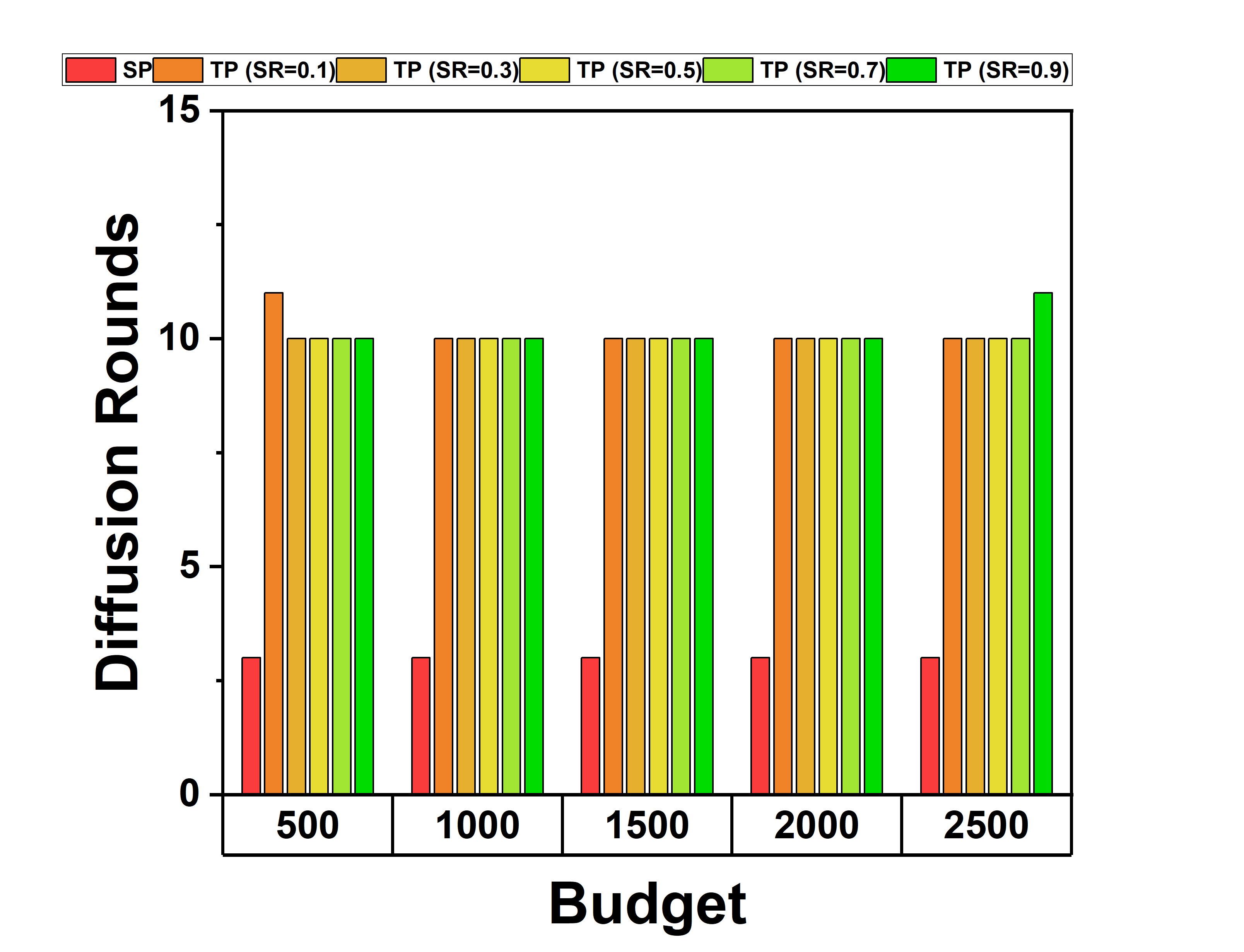}
        \caption{Timestep 8}
    \end{subfigure}
    \hfill
    \begin{subfigure}[t]{0.3\linewidth}
        \centering
        \includegraphics[width=\linewidth]{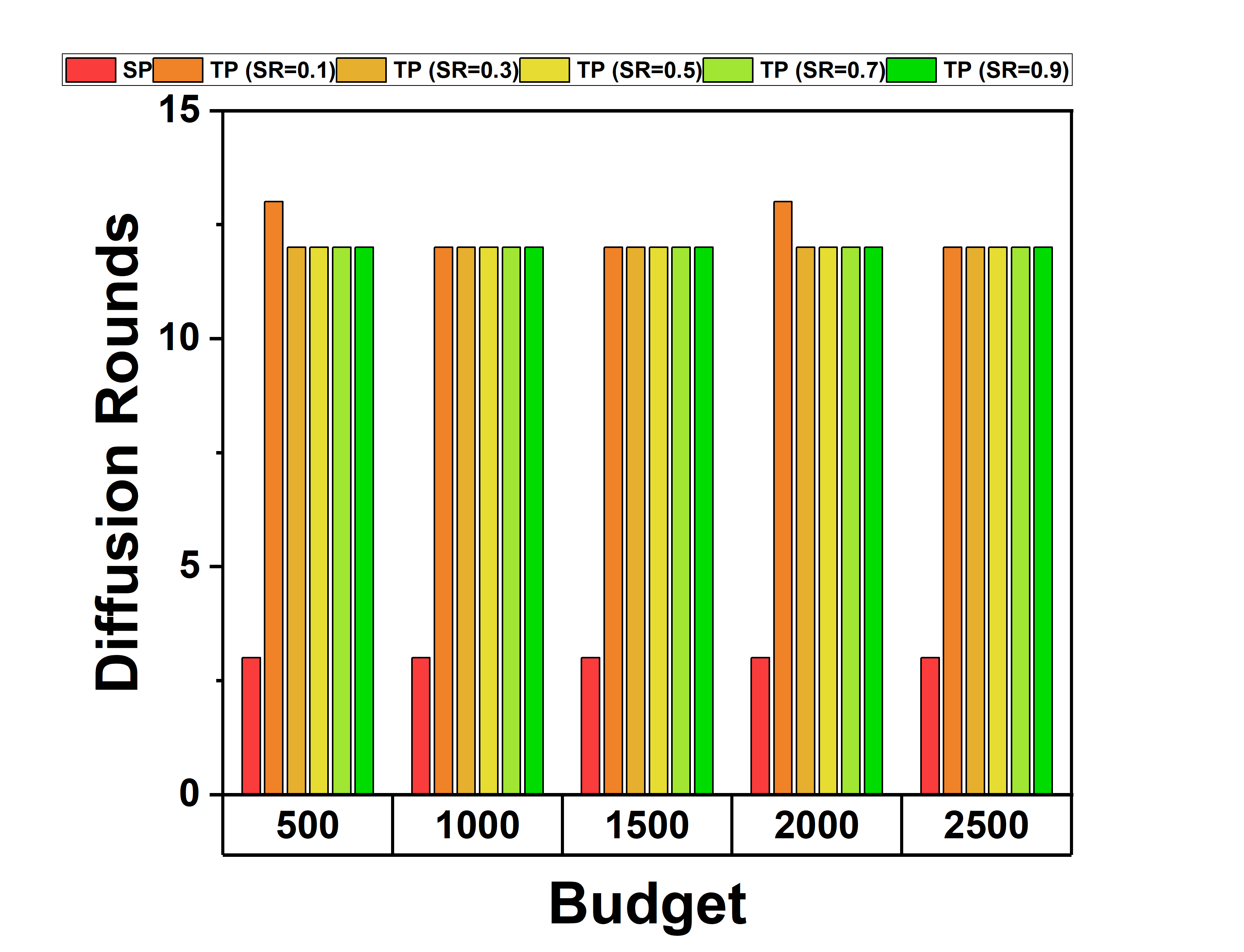}
        \caption{Timestep 10}
    \end{subfigure}

    \caption{Diffusion Rounds in Single Phase Vs. Two Phase (Stochastic Greedy Algorithm, \textit{LM} Dataset, Probability Setting - Trivalency)}
    \label{RQ5LM_T8}
\end{figure}


\begin{figure}[htbp]
    \centering

    \begin{subfigure}[t]{0.3\linewidth}
        \centering
        \includegraphics[width=\linewidth]{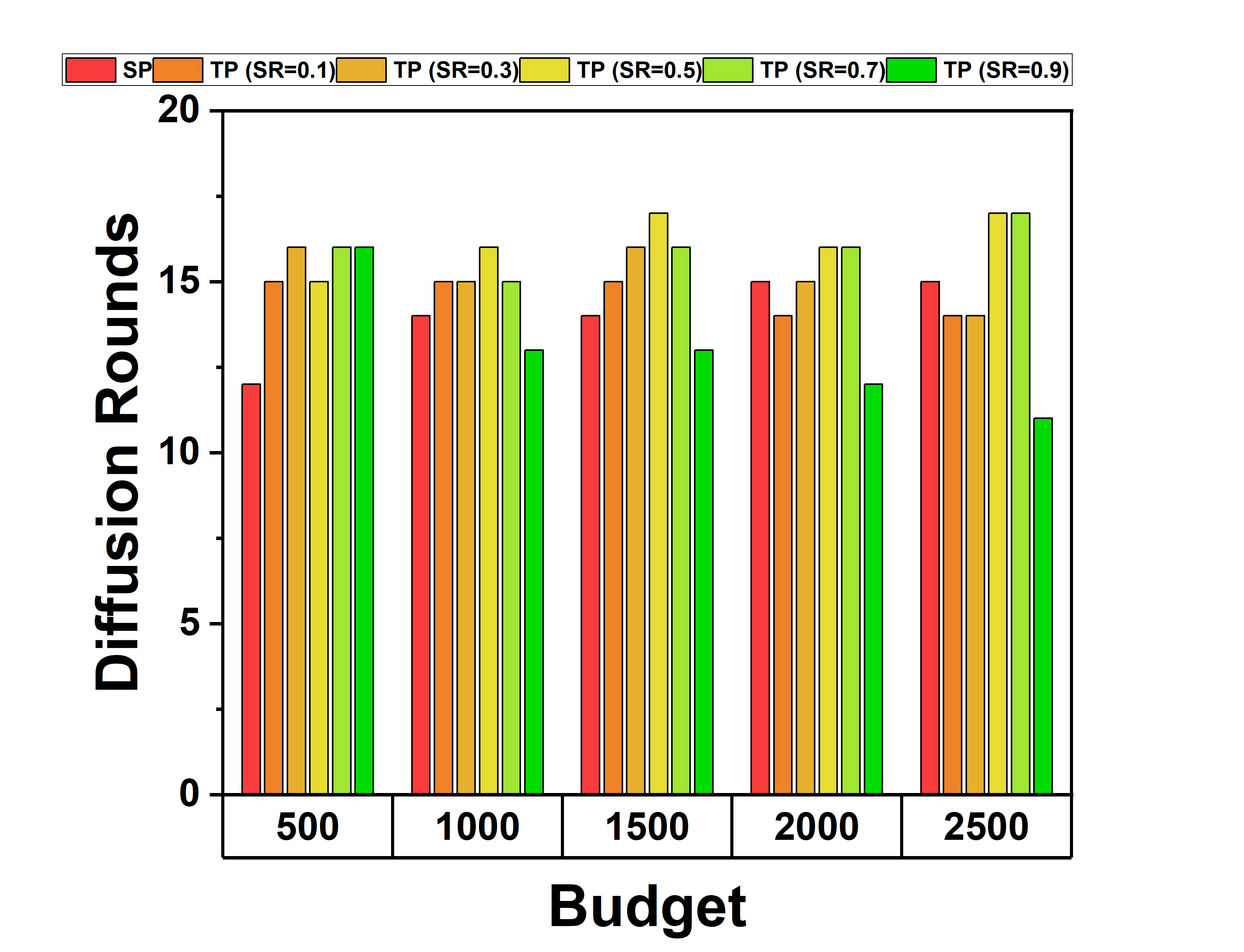}
        \caption{Timestep 2}
    \end{subfigure}
    \hspace{0.05\linewidth}
    \begin{subfigure}[t]{0.3\linewidth}
        \centering
        \includegraphics[width=\linewidth]{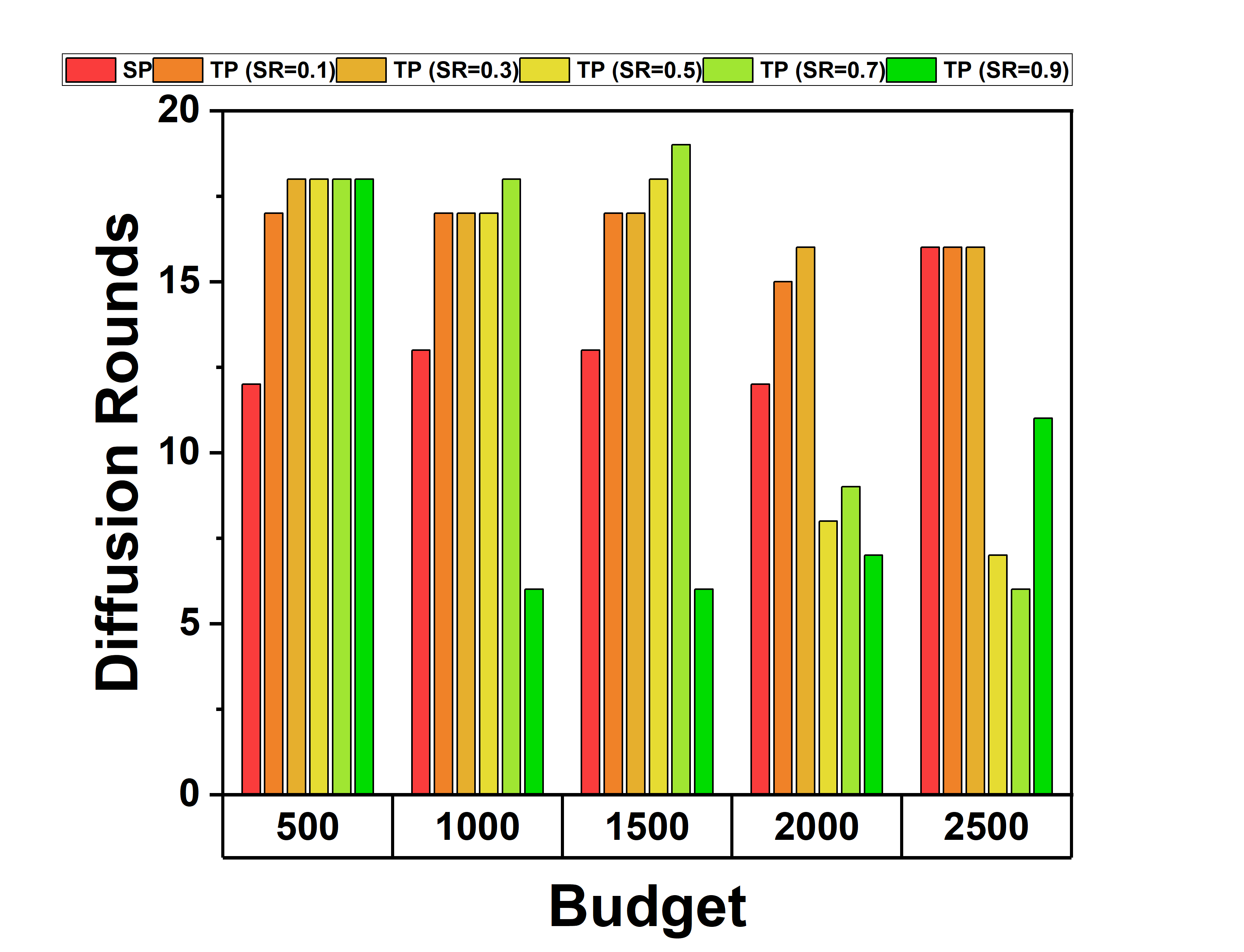}
        \caption{Timestep 4}
    \end{subfigure}

    \vspace{0.5cm}

    \begin{subfigure}[t]{0.3\linewidth}
        \centering
        \includegraphics[width=\linewidth]{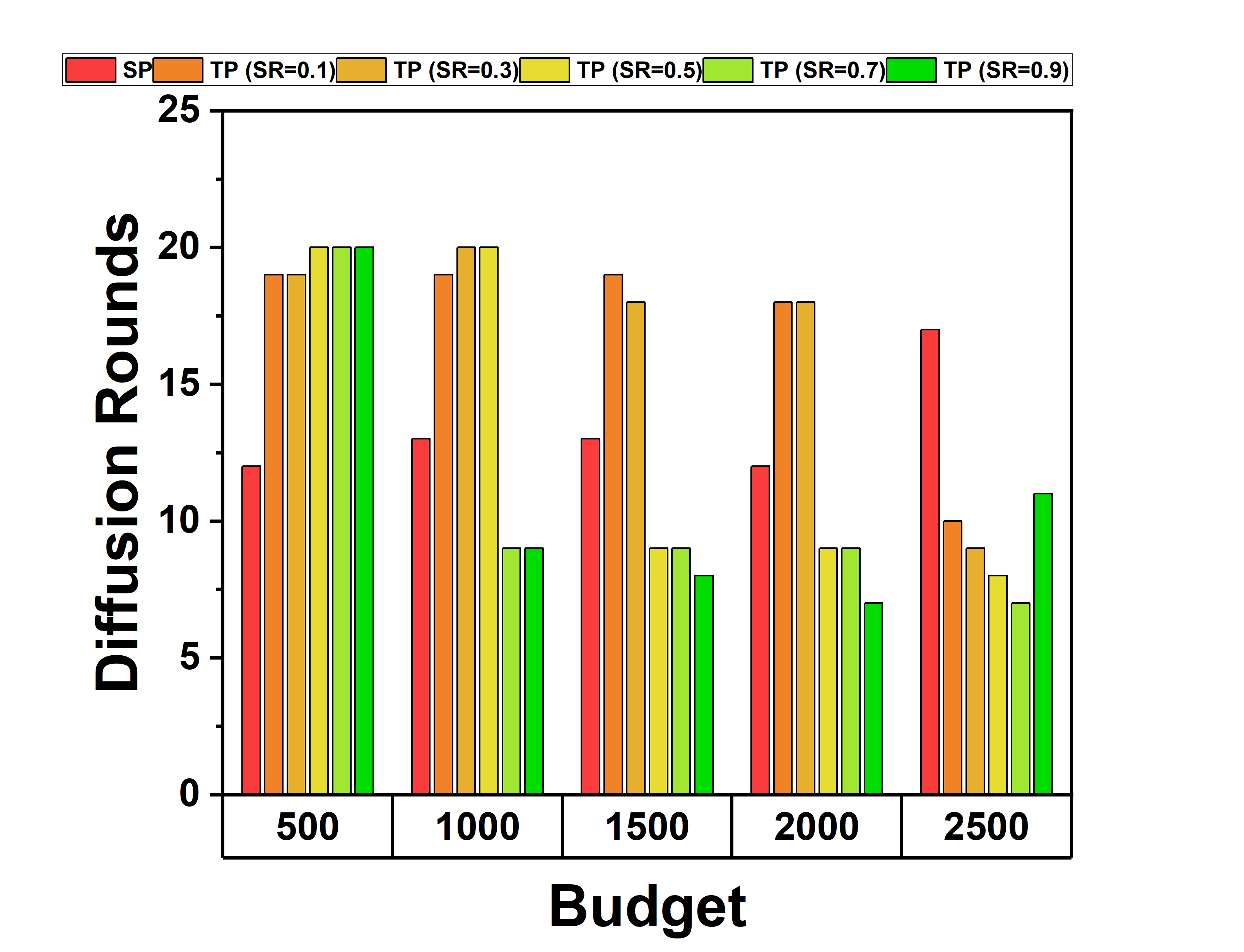}
        \caption{Timestep 6}
    \end{subfigure}
    \hfill
    \begin{subfigure}[t]{0.3\linewidth}
        \centering
        \includegraphics[width=\linewidth]{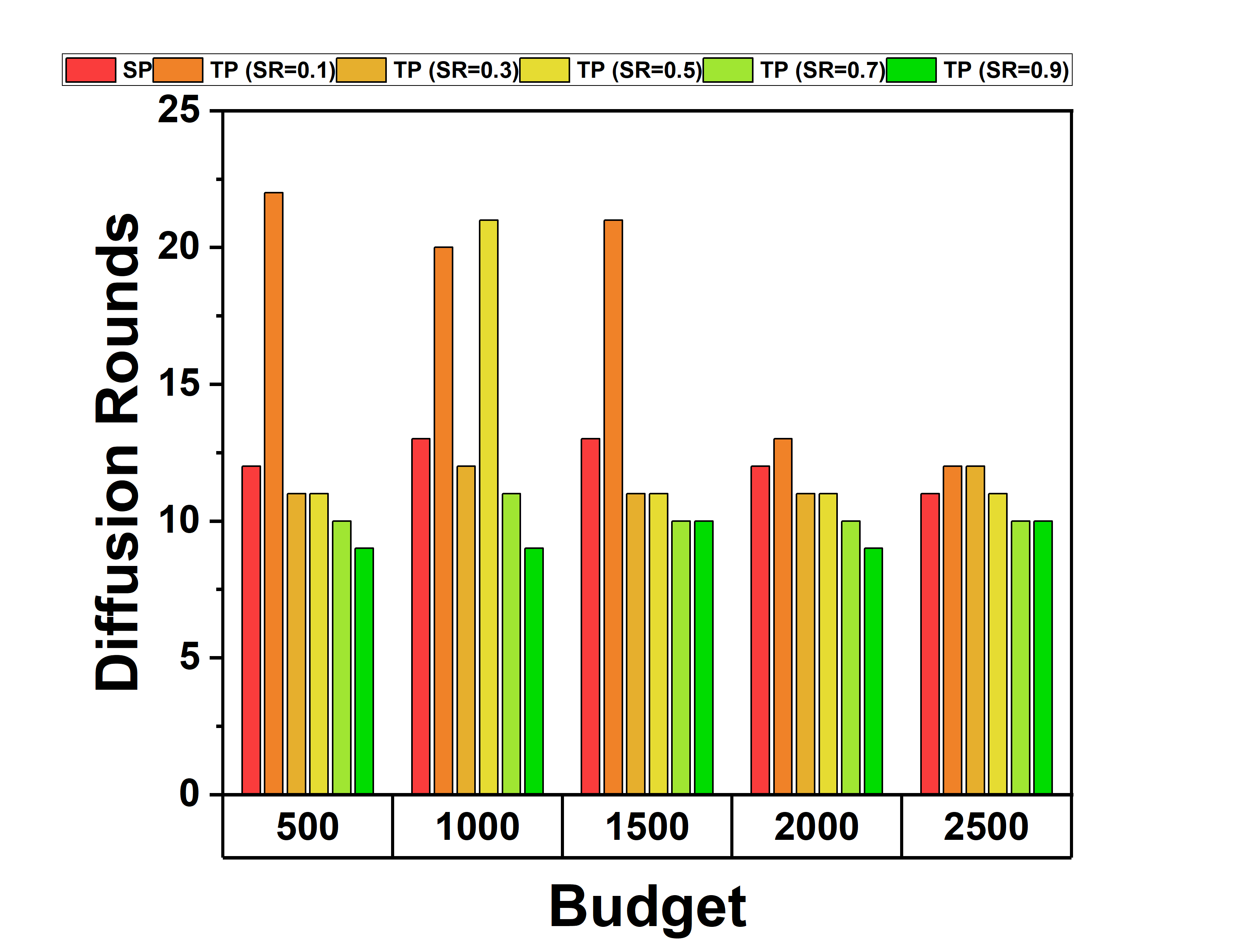}
        \caption{Timestep 8}
    \end{subfigure}
    \hfill
    \begin{subfigure}[t]{0.3\linewidth}
        \centering
        \includegraphics[width=\linewidth]{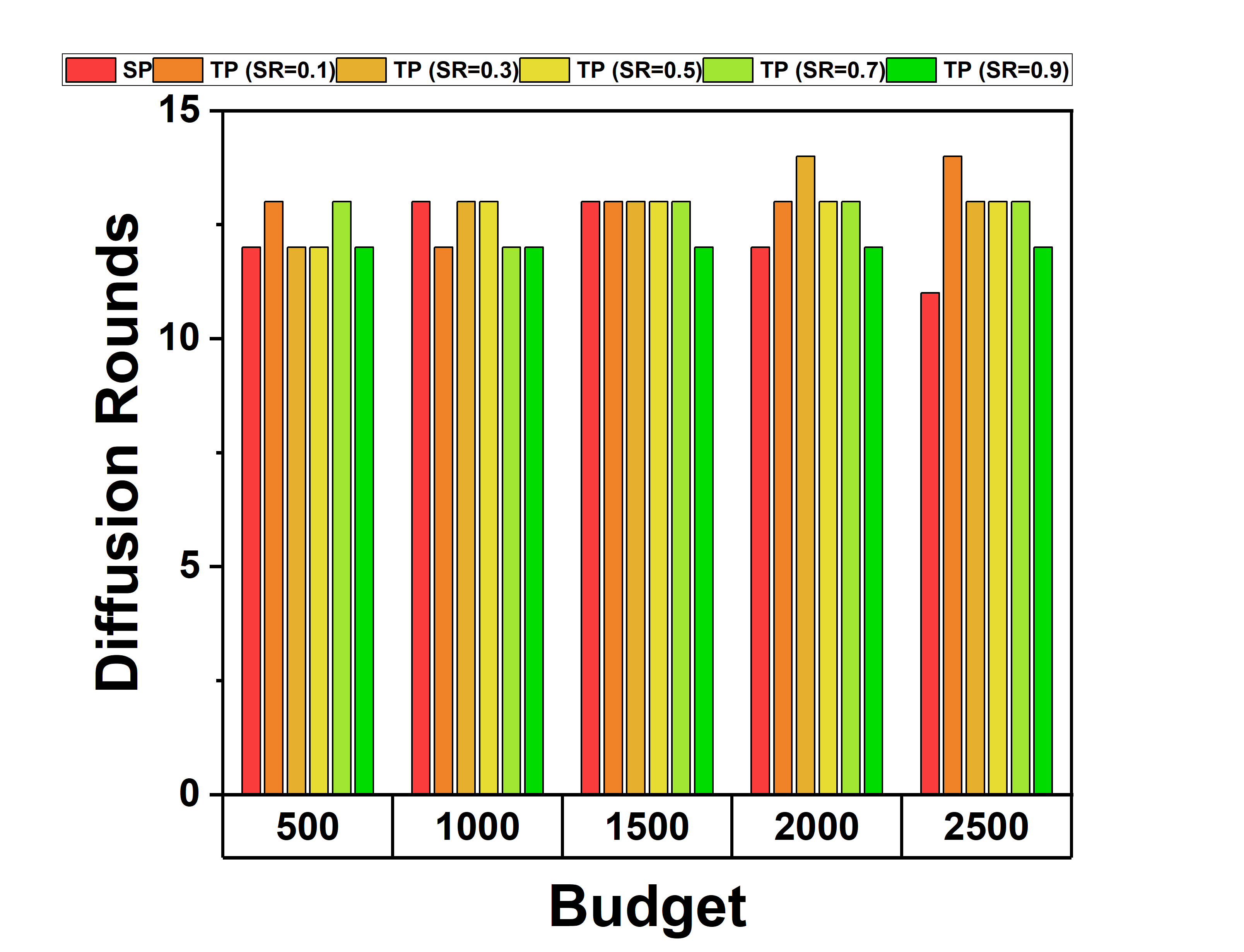}
        \caption{Timestep 10}
    \end{subfigure}

    \caption{Diffusion Rounds in Single Phase Vs. Two Phase (Random Algorithm, \textit{Email-Eu-Core} Dataset, Probability Setting - Trivalency)}
    \label{RQ5_T1}
\end{figure}

\begin{figure}[htbp]
    \centering

    \begin{subfigure}[t]{0.3\linewidth}
        \centering
        \includegraphics[width=\linewidth]{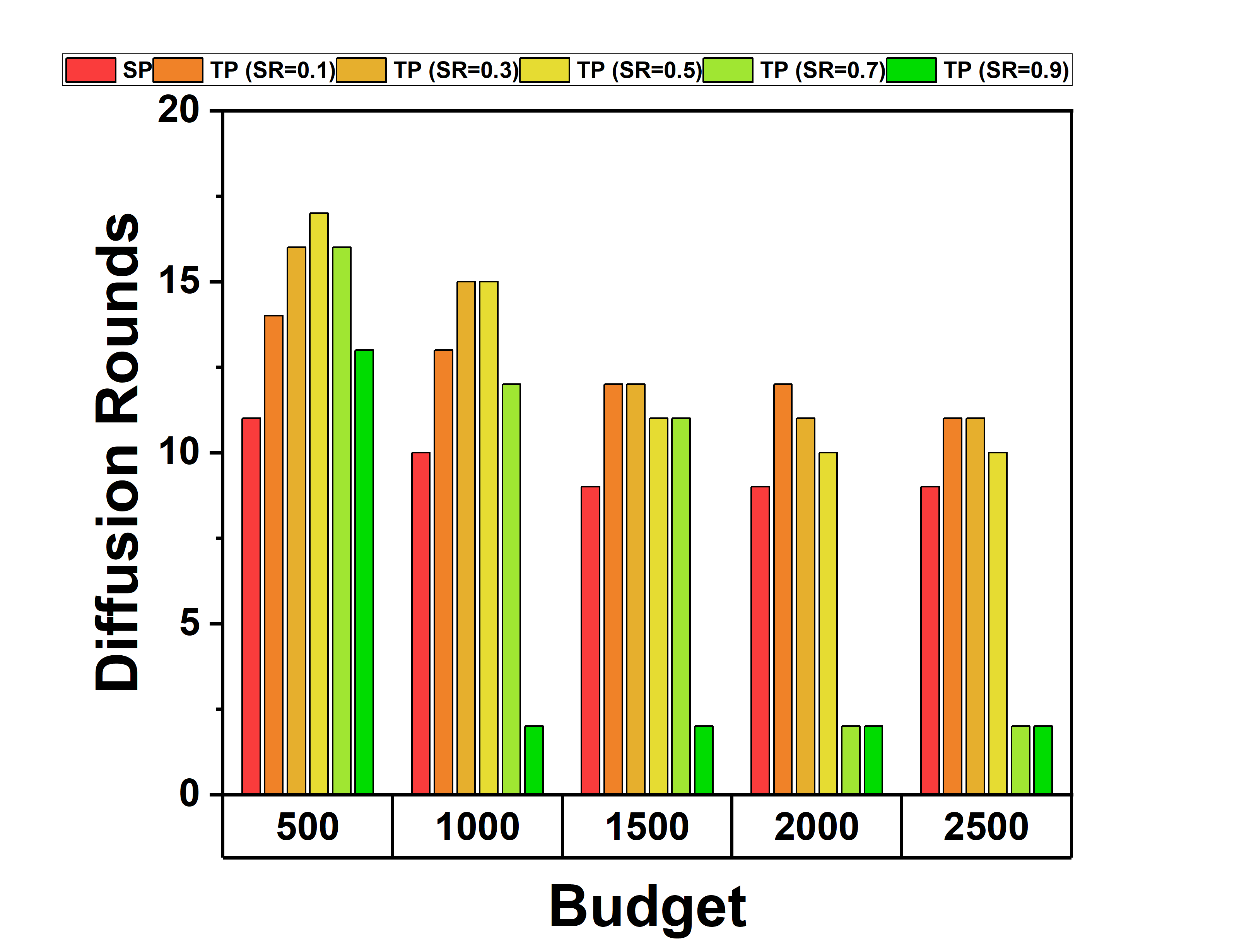}
        \caption{Timestep 2}
    \end{subfigure}
    \hspace{0.05\linewidth}
    \begin{subfigure}[t]{0.3\linewidth}
        \centering
        \includegraphics[width=\linewidth]{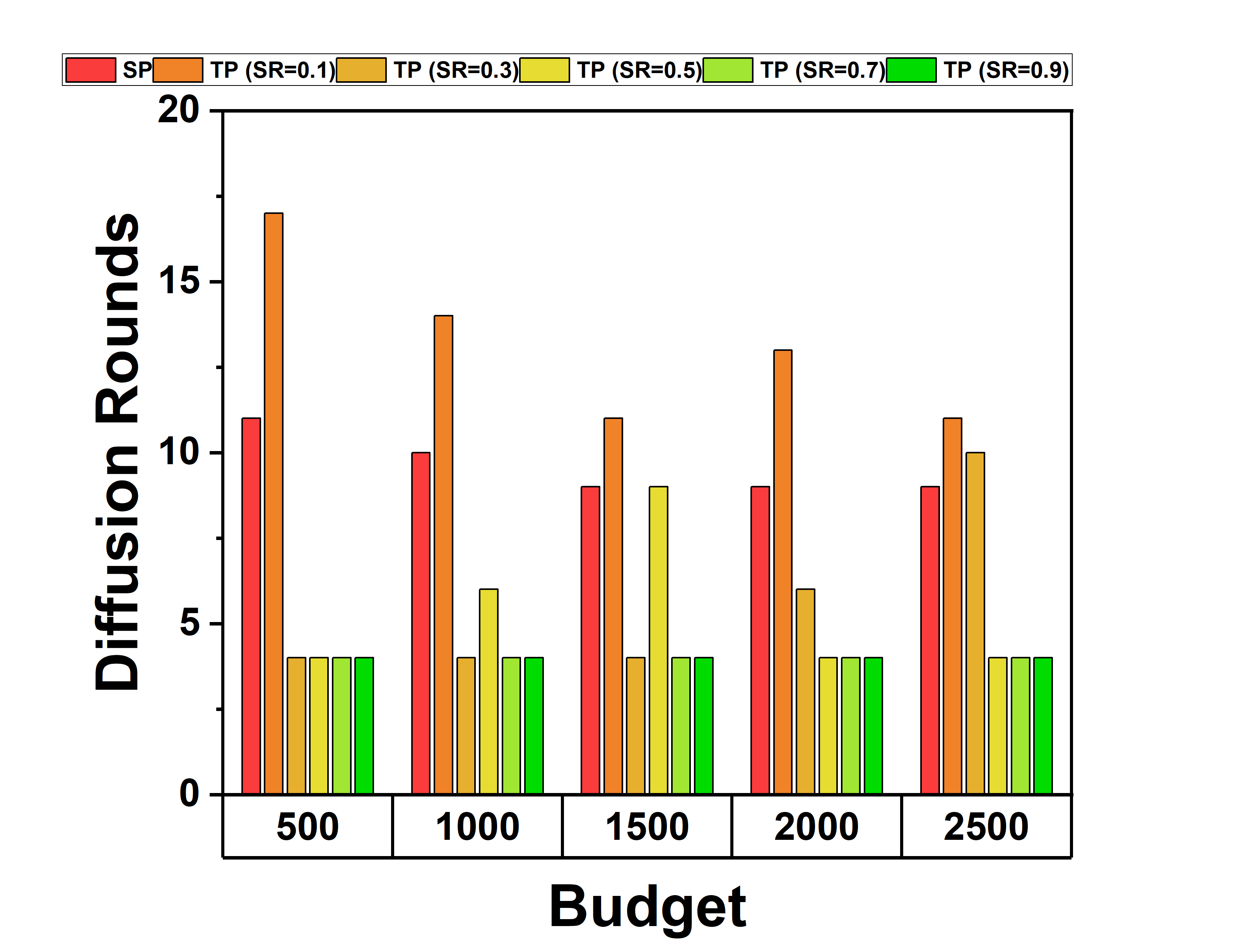}
        \caption{Timestep 4}
    \end{subfigure}

    \vspace{0.5cm}

    \begin{subfigure}[t]{0.3\linewidth}
        \centering
        \includegraphics[width=\linewidth]{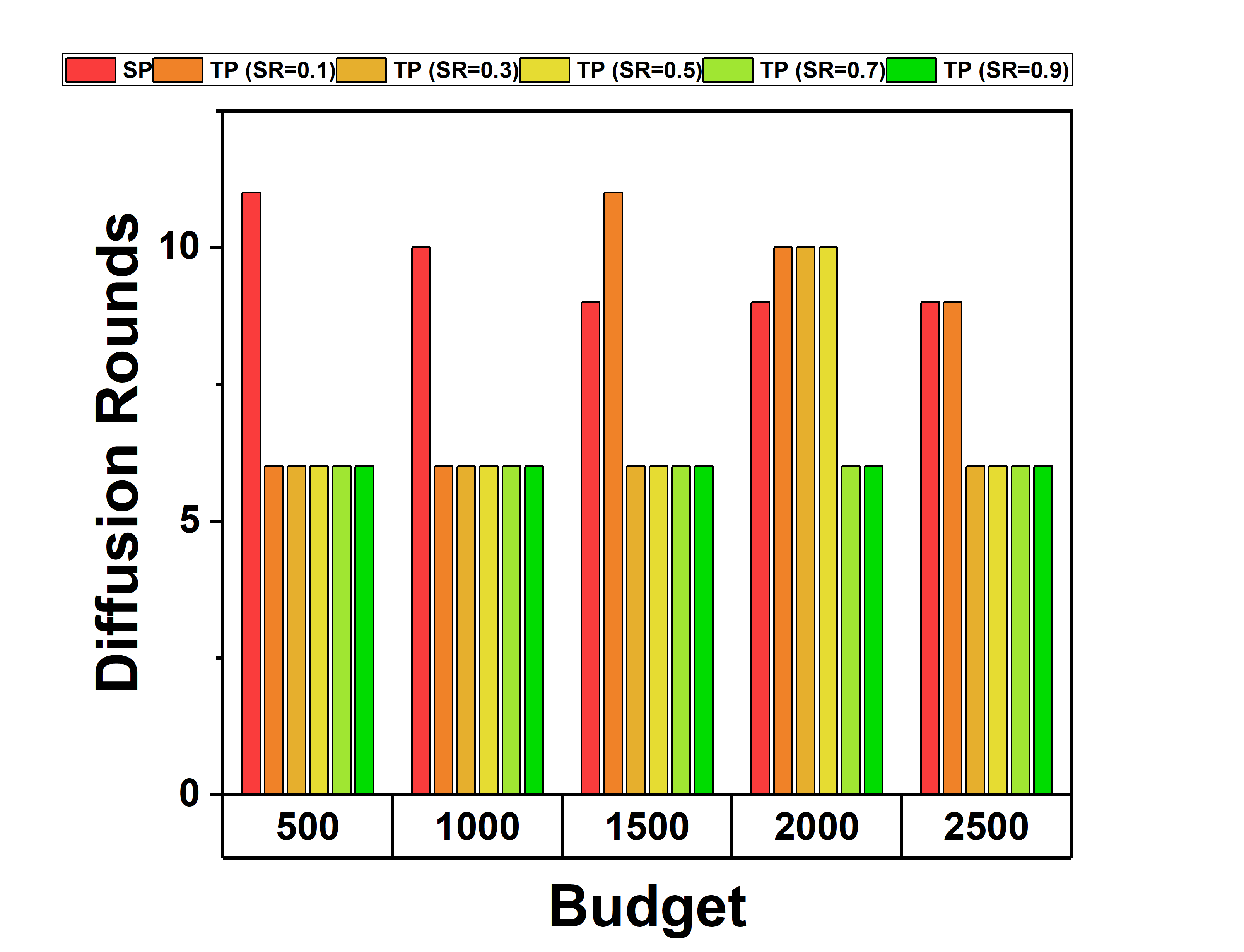}
        \caption{Timestep 6}
    \end{subfigure}
    \hfill
    \begin{subfigure}[t]{0.3\linewidth}
        \centering
        \includegraphics[width=\linewidth]{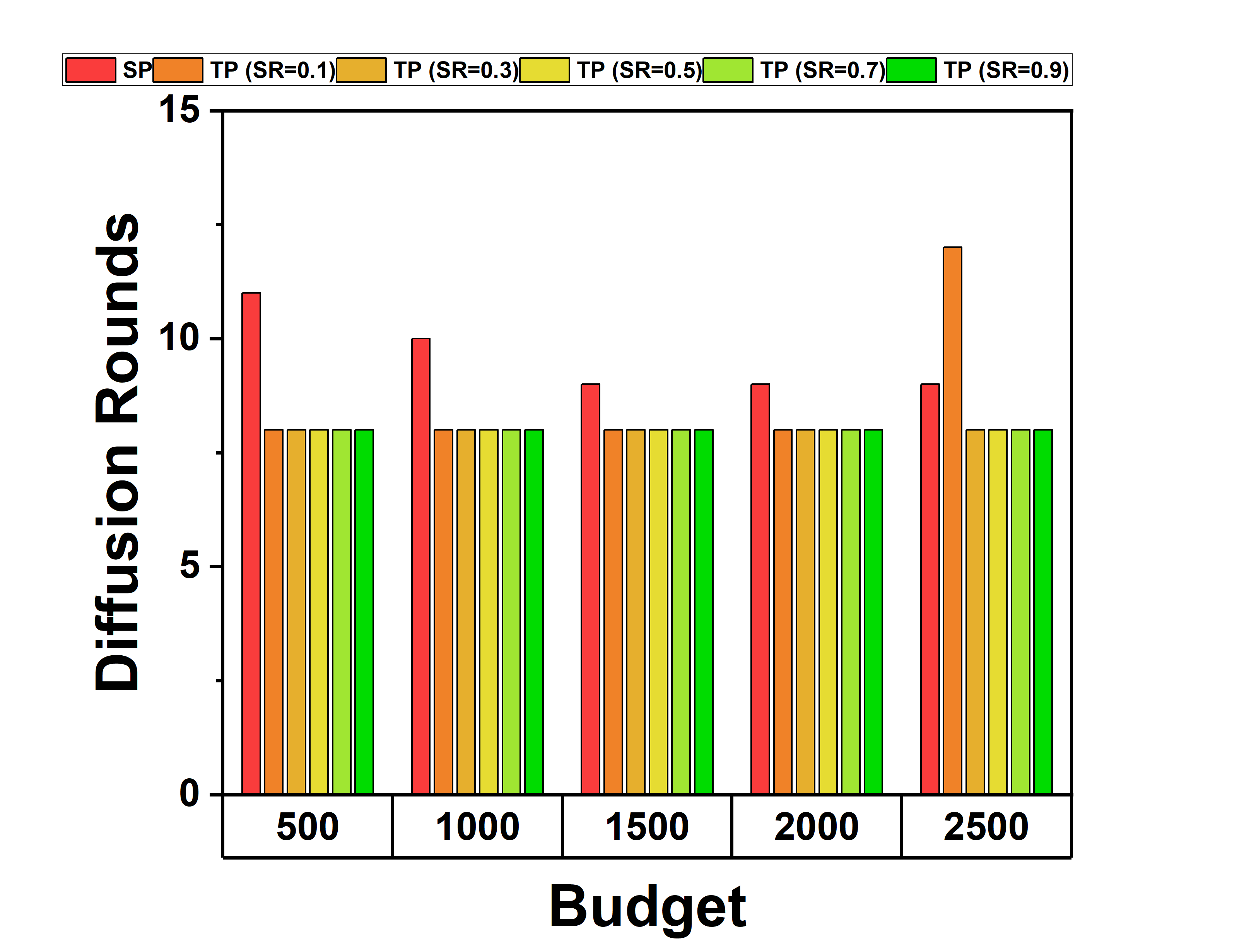}
        \caption{Timestep 8}
    \end{subfigure}
    \hfill
    \begin{subfigure}[t]{0.3\linewidth}
        \centering
        \includegraphics[width=\linewidth]{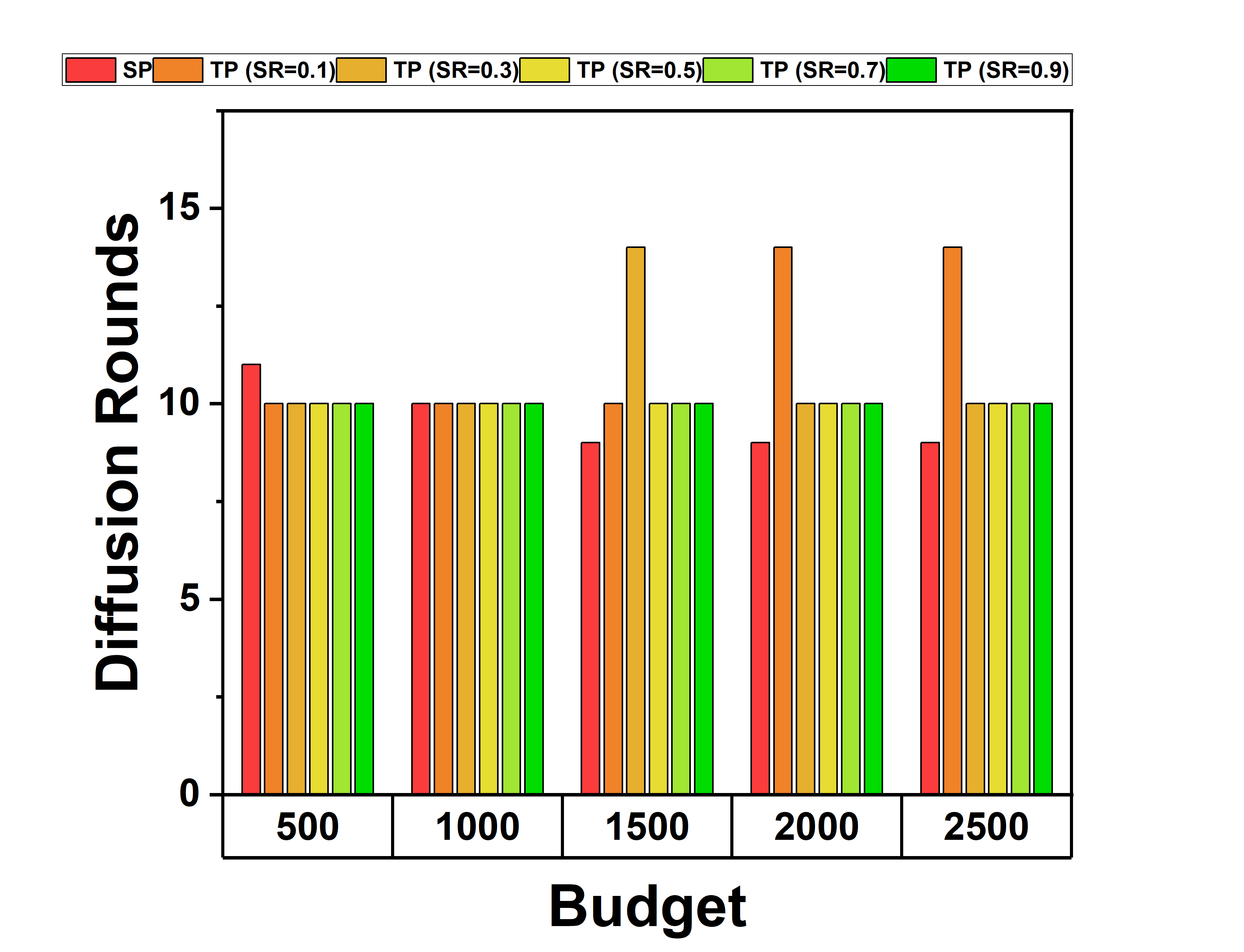}
        \caption{Timestep 10}
    \end{subfigure}

    \caption{Diffusion Rounds in Single Phase Vs. Two Phase (High Degree Algorithm, \textit{Email-Eu-Core} Dataset, Probability Setting - Trivalency)}
    \label{RQ5_T2}
\end{figure}

\begin{figure}[htbp]
    \centering

    \begin{subfigure}[t]{0.3\linewidth}
        \centering
        \includegraphics[width=\linewidth]{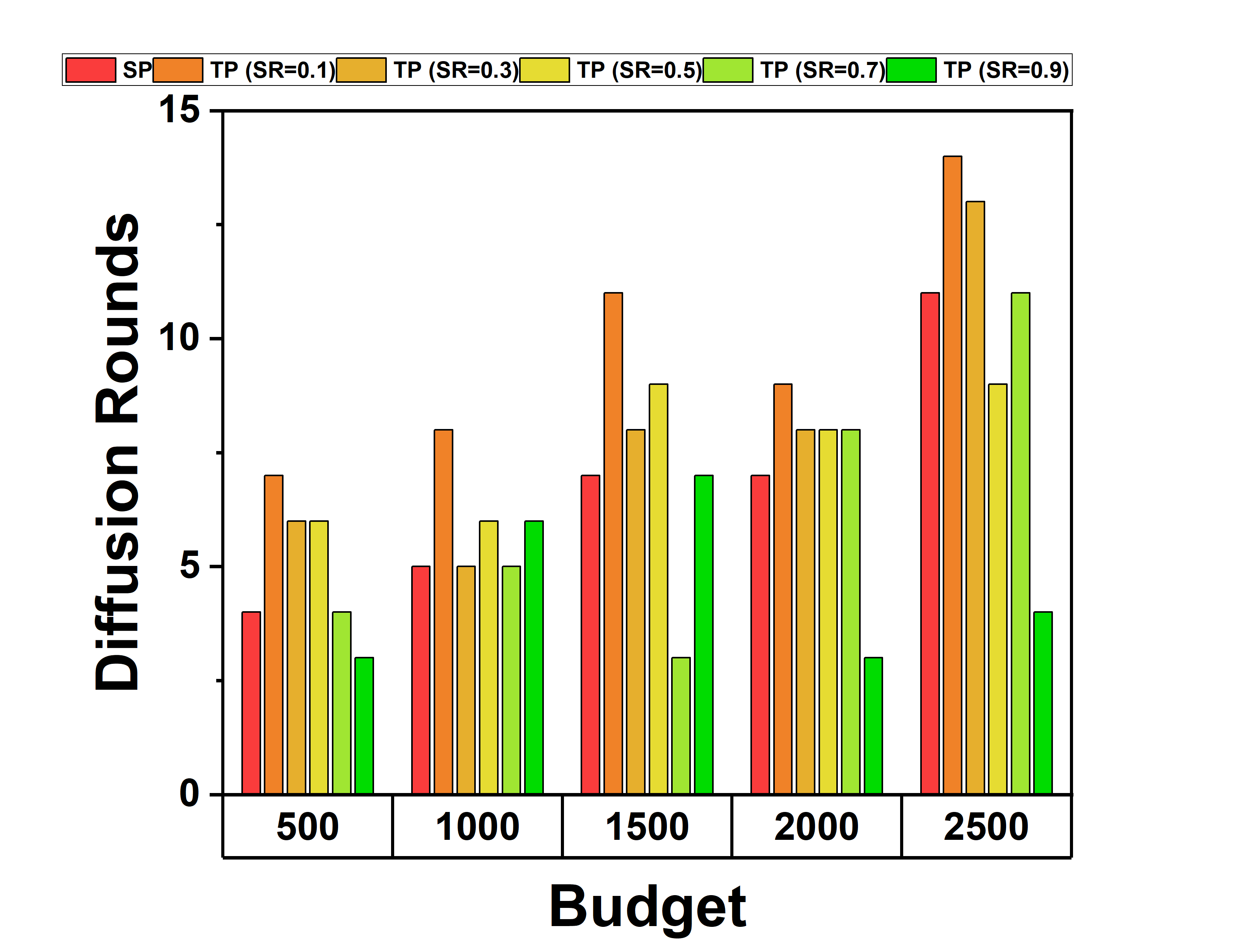}
        \caption{Timestep 2}
    \end{subfigure}
    \hspace{0.05\linewidth}
    \begin{subfigure}[t]{0.3\linewidth}
        \centering
        \includegraphics[width=\linewidth]{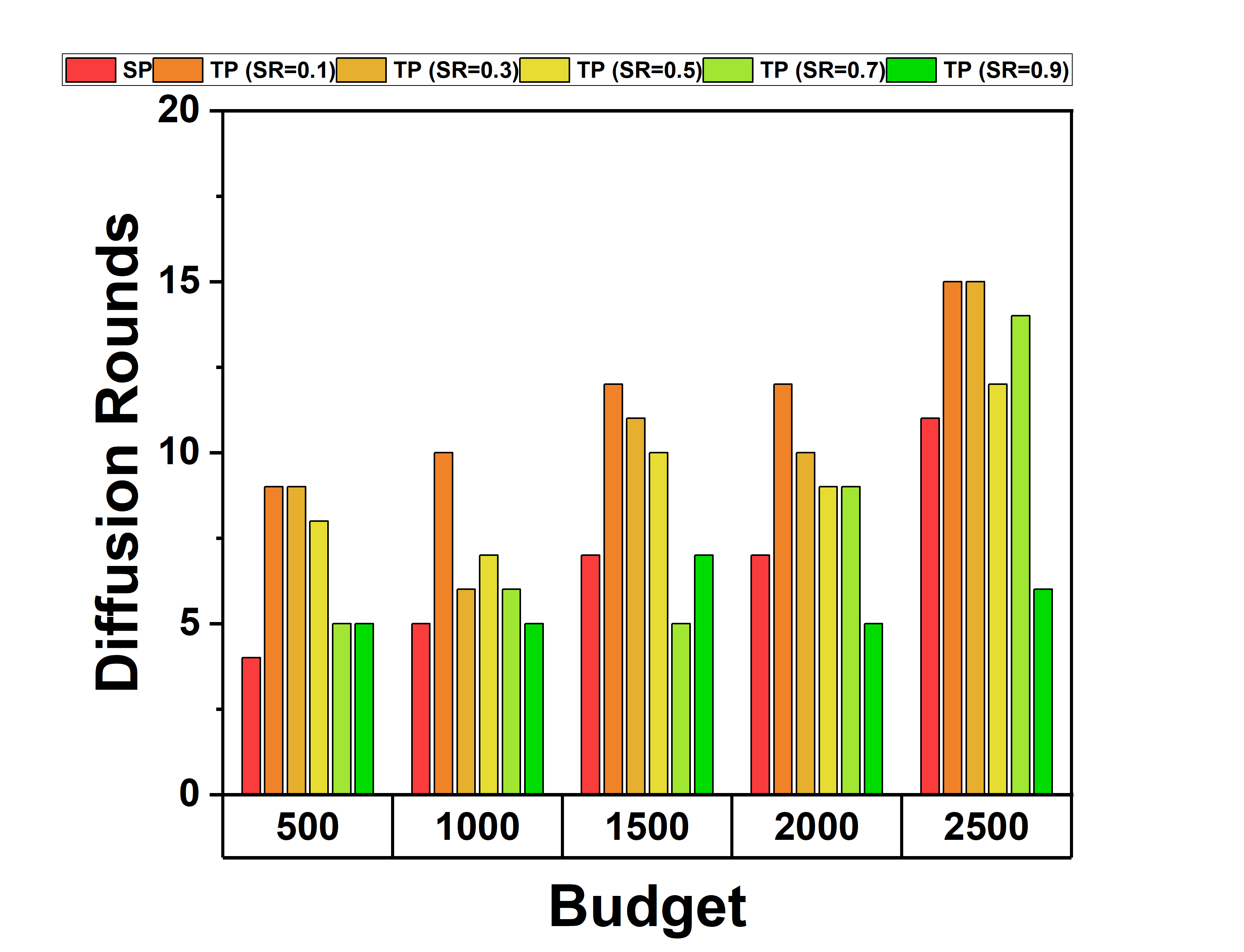}
        \caption{Timestep 4}
    \end{subfigure}

    \vspace{0.5cm}

    \begin{subfigure}[t]{0.3\linewidth}
        \centering
        \includegraphics[width=\linewidth]{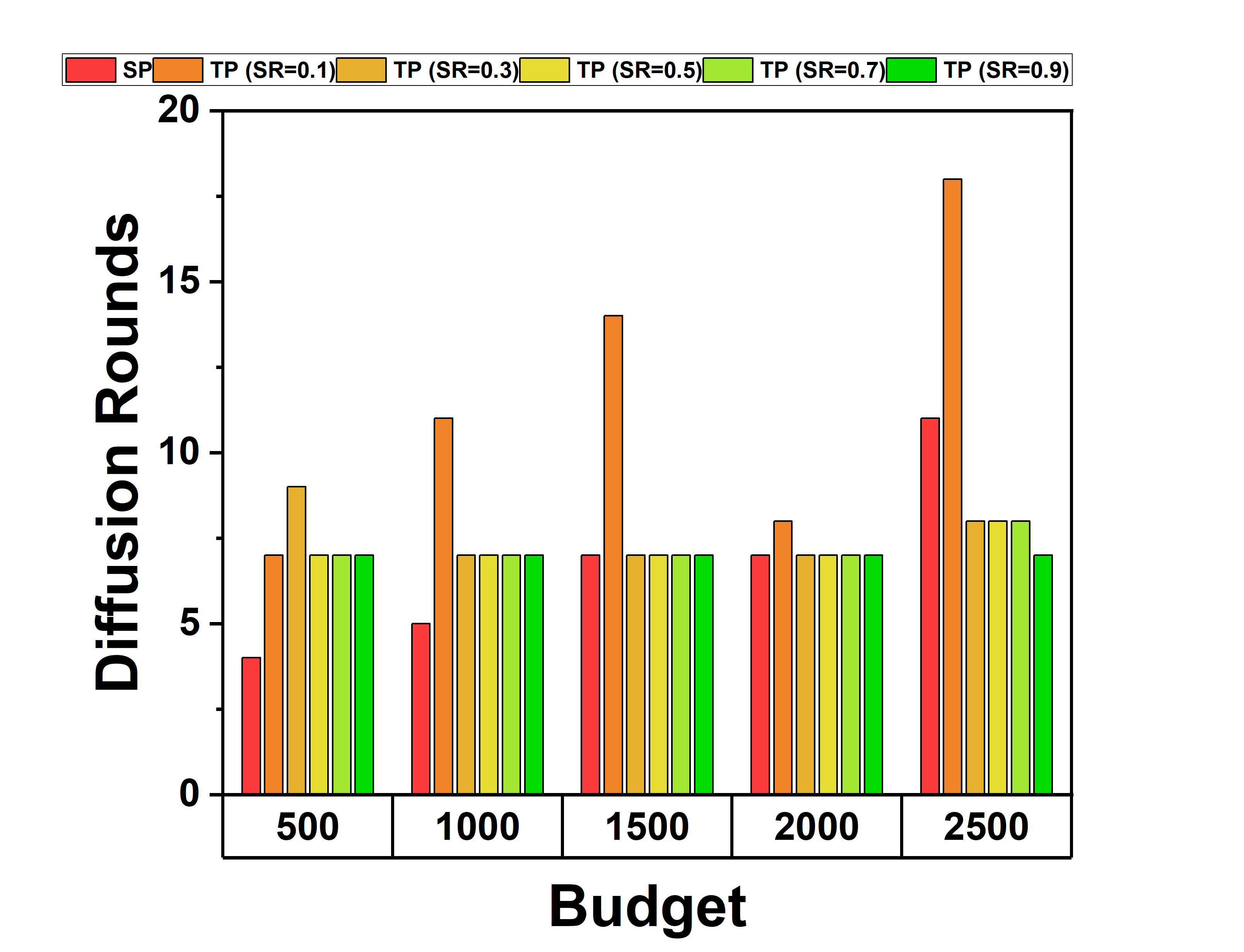}
        \caption{Timestep 6}
    \end{subfigure}
    \hfill
    \begin{subfigure}[t]{0.3\linewidth}
        \centering
        \includegraphics[width=\linewidth]{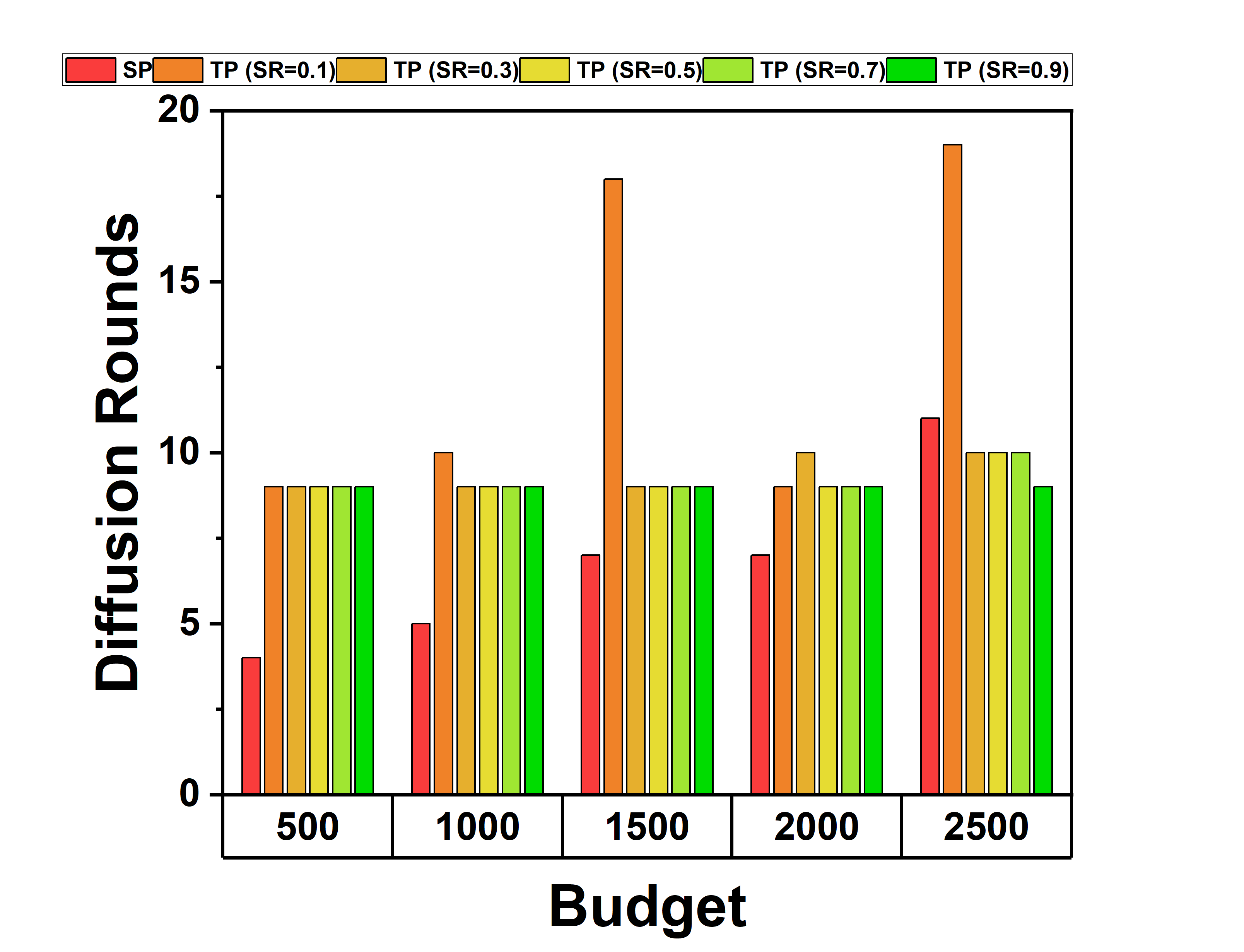}
        \caption{Timestep 8}
    \end{subfigure}
    \hfill
    \begin{subfigure}[t]{0.3\linewidth}
        \centering
        \includegraphics[width=\linewidth]{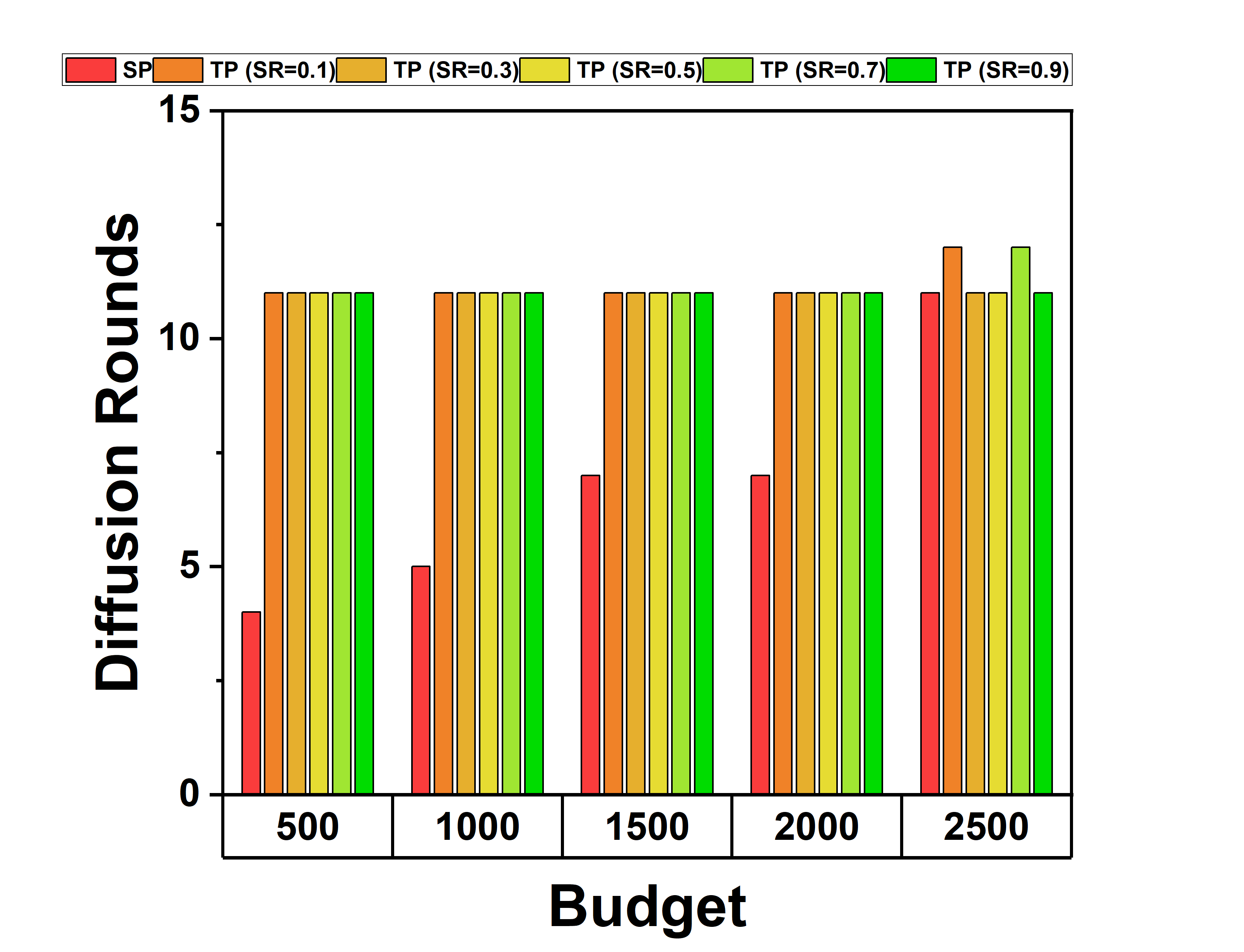}
        \caption{Timestep 10}
    \end{subfigure}

    \caption{Diffusion Rounds in Single Phase Vs. Two Phase (Clustering Coefficient Algorithm, \textit{Email-Eu-Core} Dataset, Probability Setting - Trivalency)}
    \label{RQ5_T3}
\end{figure}

\begin{figure}[htbp]
    \centering

    \begin{subfigure}[t]{0.3\linewidth}
        \centering
        \includegraphics[width=\linewidth]{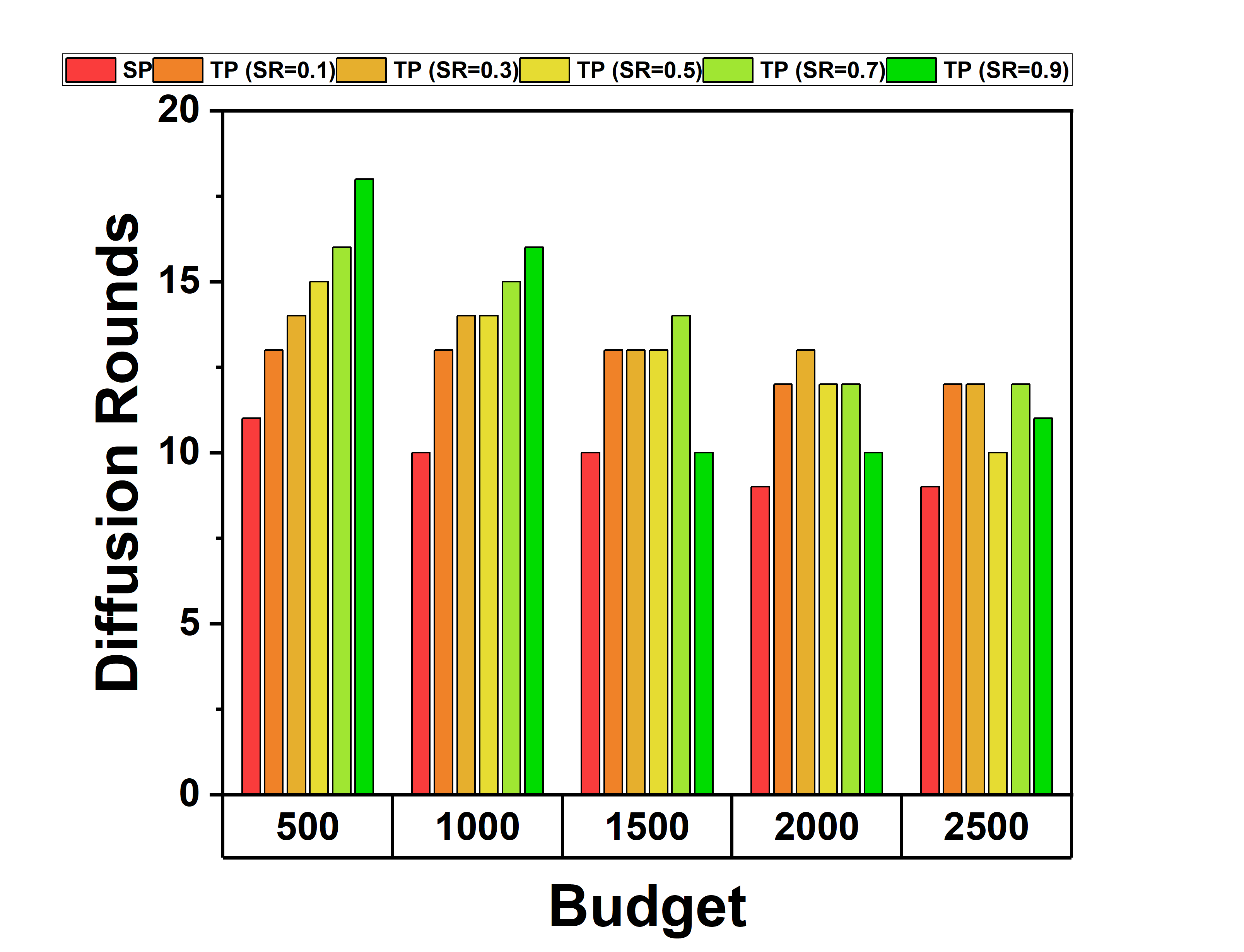}
        \caption{Timestep 2}
    \end{subfigure}
    \hspace{0.05\linewidth}
    \begin{subfigure}[t]{0.3\linewidth}
        \centering
        \includegraphics[width=\linewidth]{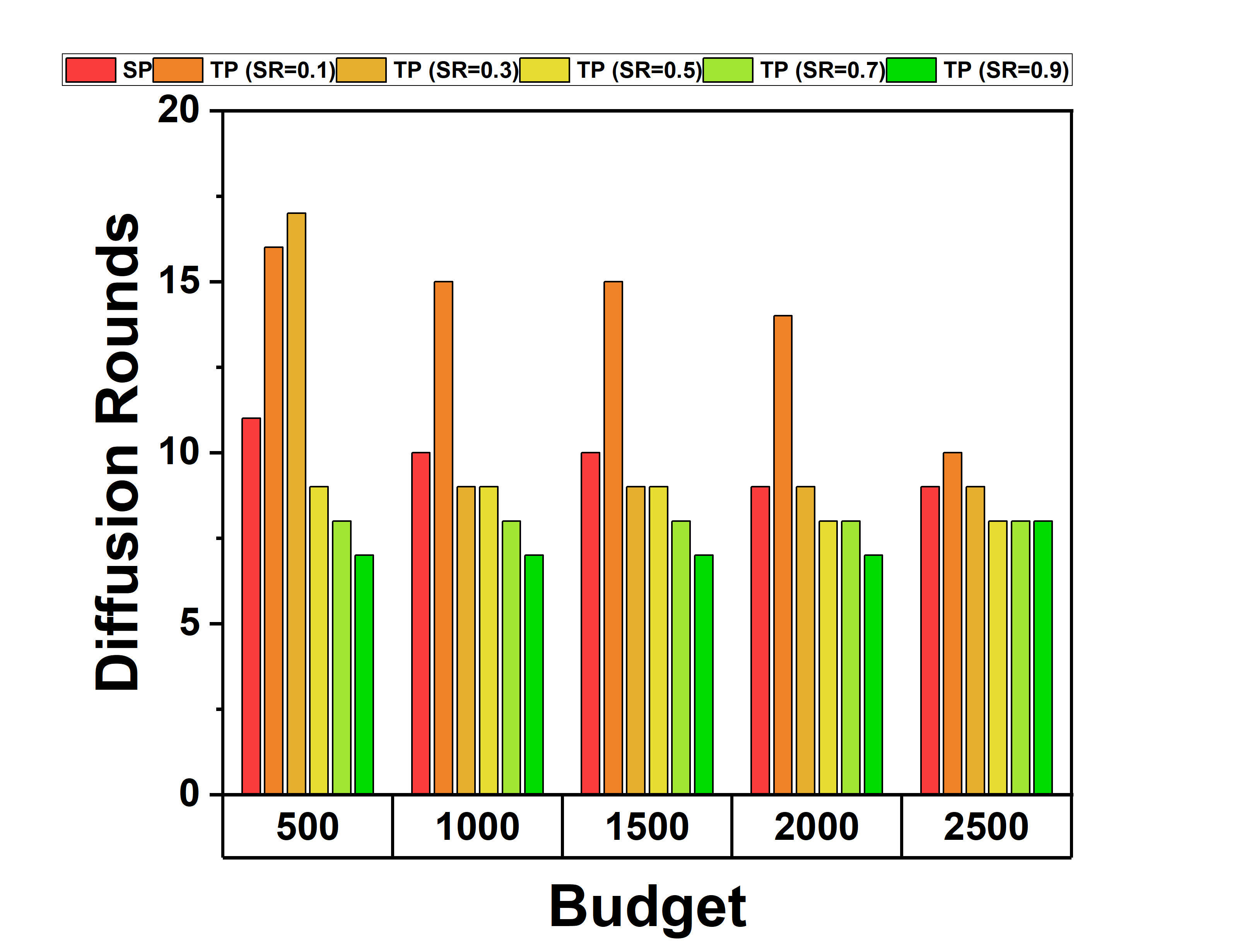}
        \caption{Timestep 4}
    \end{subfigure}

    \vspace{0.5cm}

    \begin{subfigure}[t]{0.3\linewidth}
        \centering
        \includegraphics[width=\linewidth]{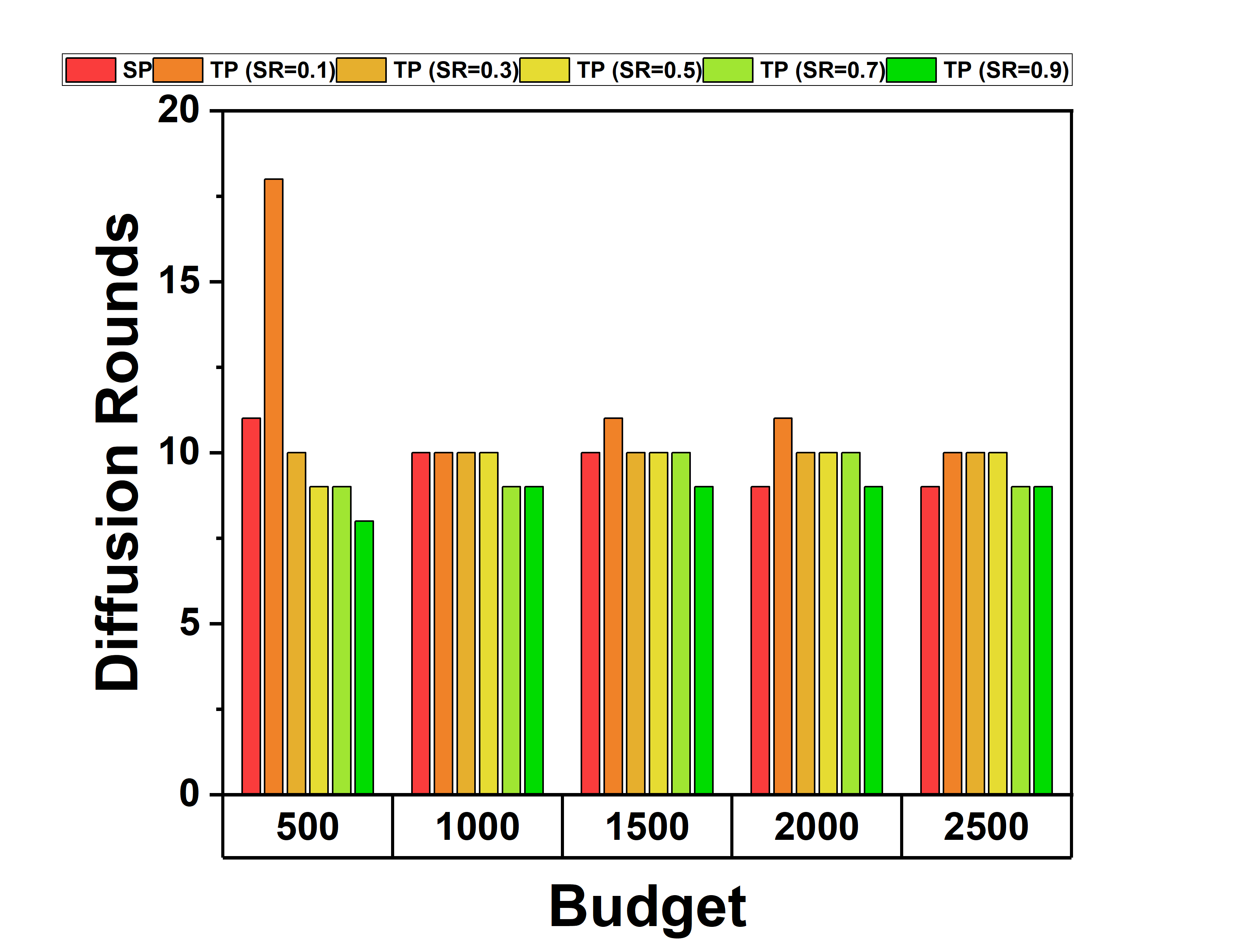}
        \caption{Timestep 6}
    \end{subfigure}
    \hfill
    \begin{subfigure}[t]{0.3\linewidth}
        \centering
        \includegraphics[width=\linewidth]{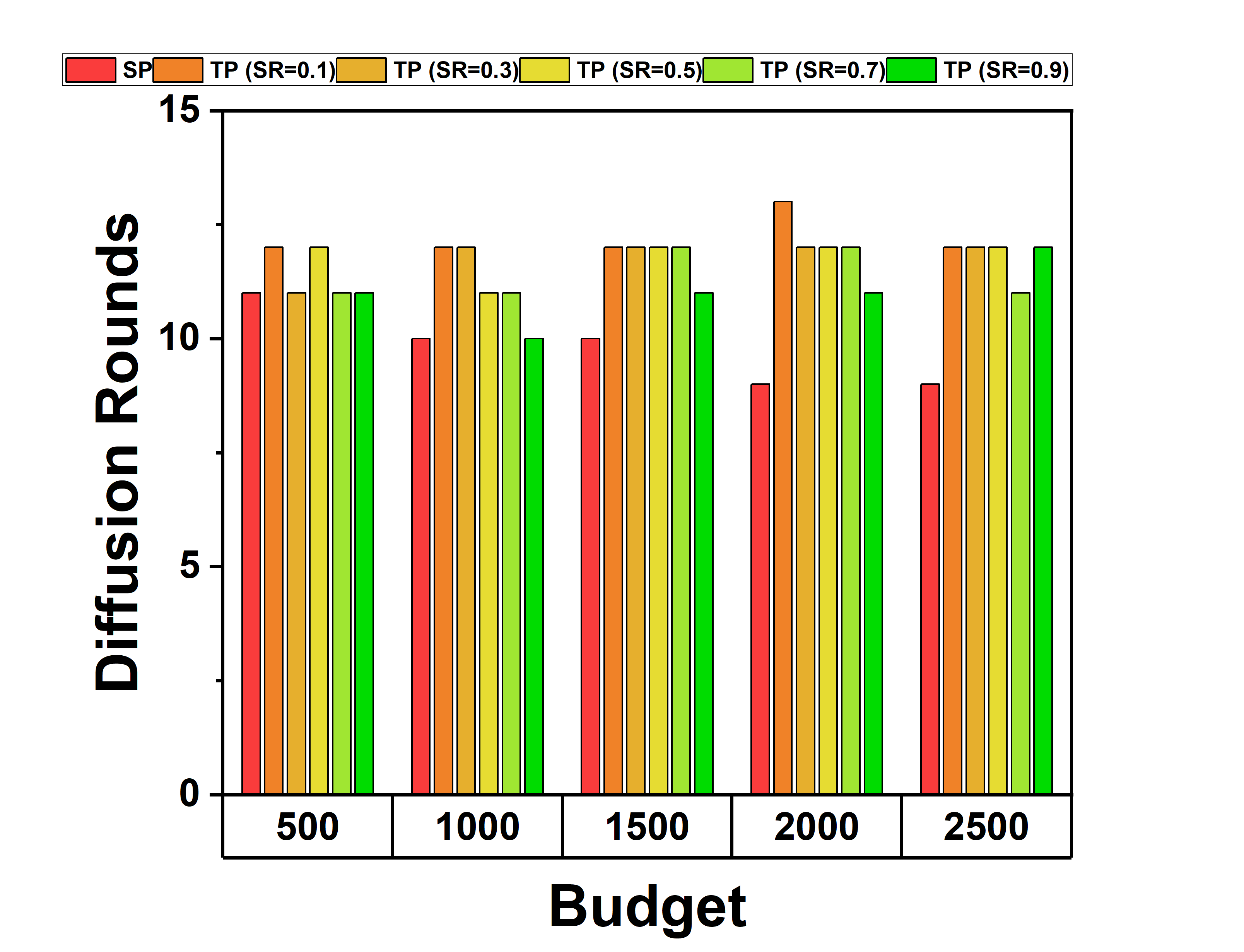}
        \caption{Timestep 8}
    \end{subfigure}
    \hfill
    \begin{subfigure}[t]{0.3\linewidth}
        \centering
        \includegraphics[width=\linewidth]{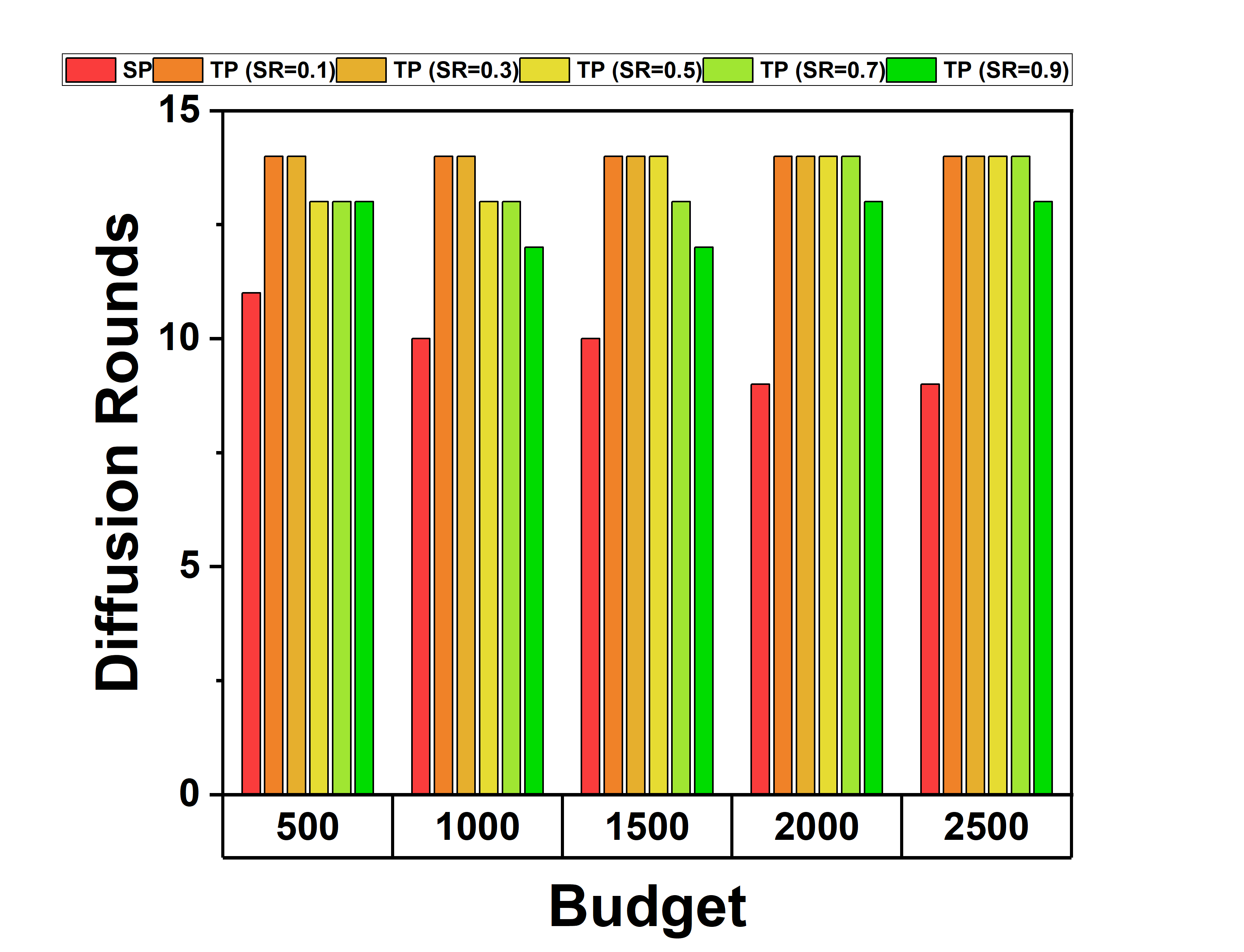}
        \caption{Timestep 10}
    \end{subfigure}

    \caption{Diffusion Rounds in Single Phase Vs. Two Phase (Degree Discount Algorithm, \textit{Email-Eu-Core} Dataset, Probability Setting - Trivalency)}
    \label{RQ5_T4}
\end{figure}

\begin{figure}[htbp]
    \centering

    \begin{subfigure}[t]{0.3\linewidth}
        \centering
        \includegraphics[width=\linewidth]{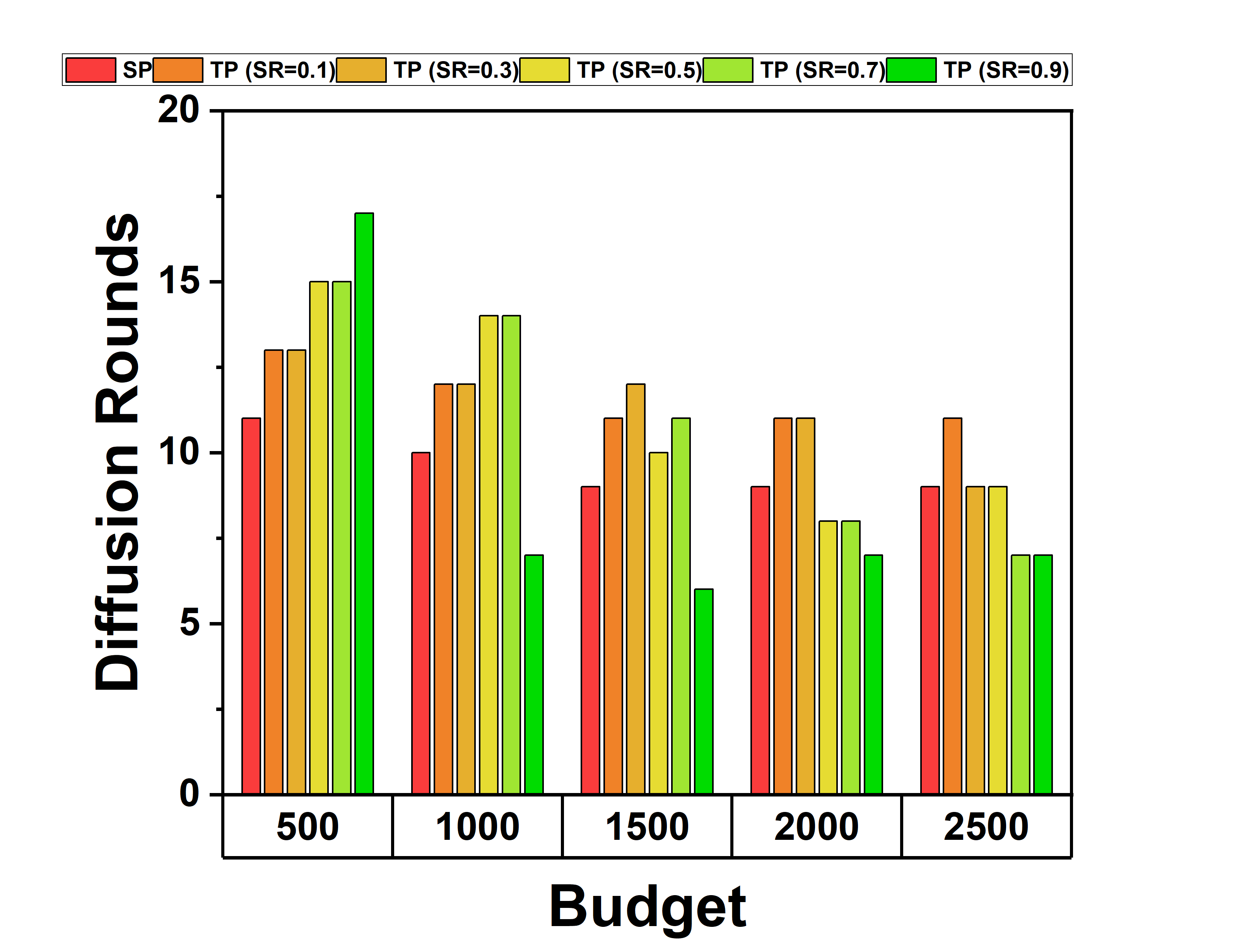}
        \caption{Timestep 2}
    \end{subfigure}
    \hspace{0.05\linewidth}
    \begin{subfigure}[t]{0.3\linewidth}
        \centering
        \includegraphics[width=\linewidth]{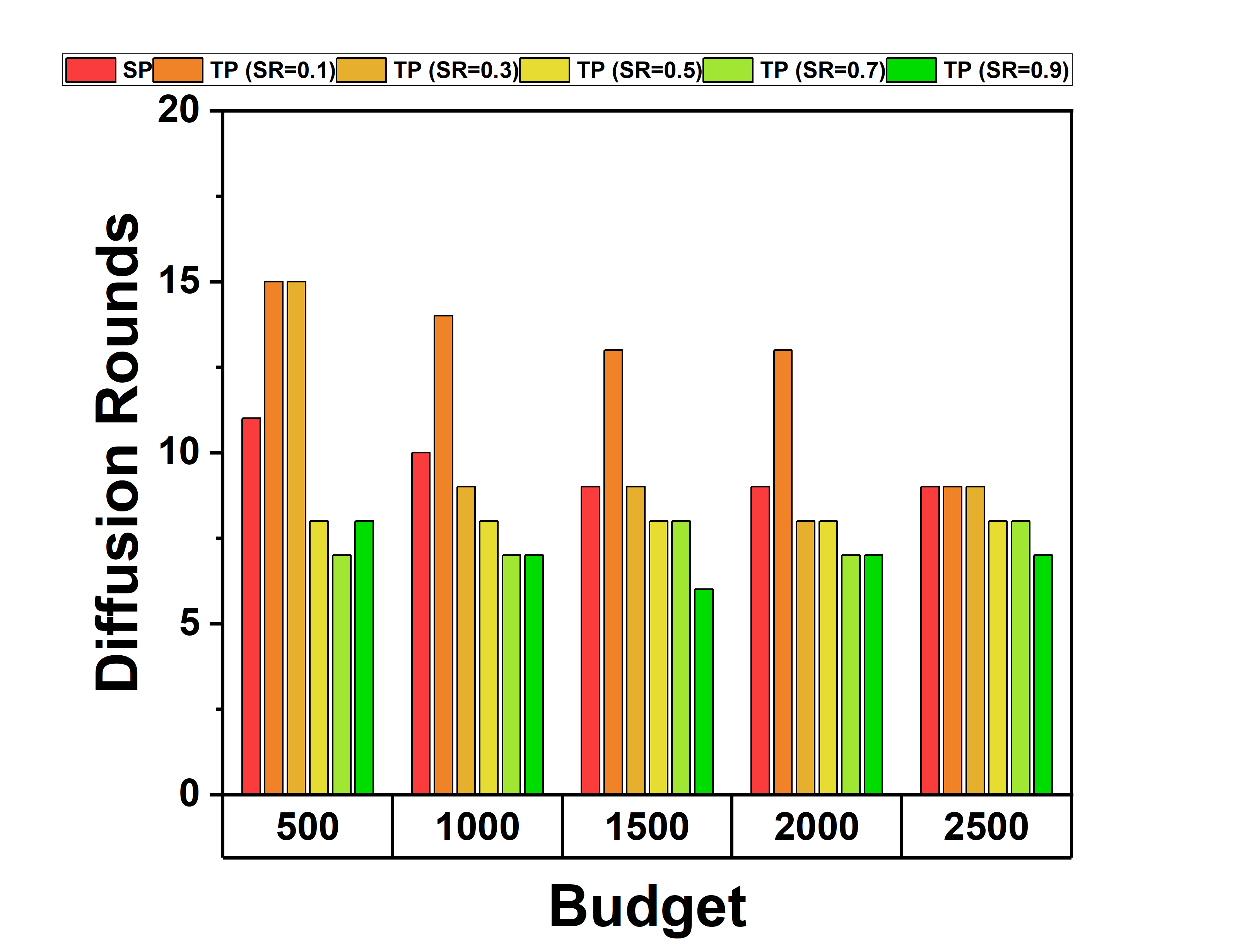}
        \caption{Timestep 4}
    \end{subfigure}

    \vspace{0.5cm}

    \begin{subfigure}[t]{0.3\linewidth}
        \centering
        \includegraphics[width=\linewidth]{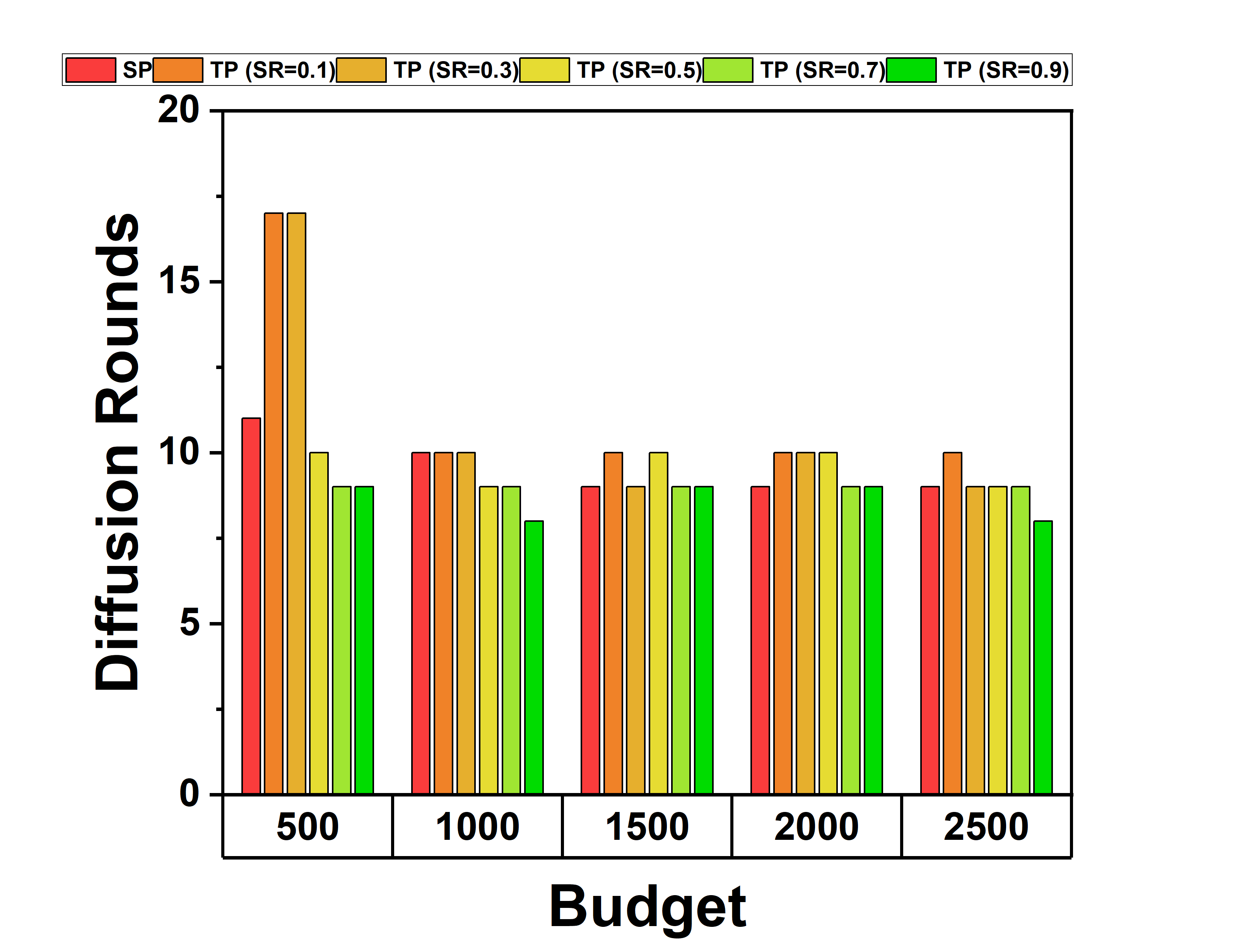}
        \caption{Timestep 6}
    \end{subfigure}
    \hfill
    \begin{subfigure}[t]{0.3\linewidth}
        \centering
        \includegraphics[width=\linewidth]{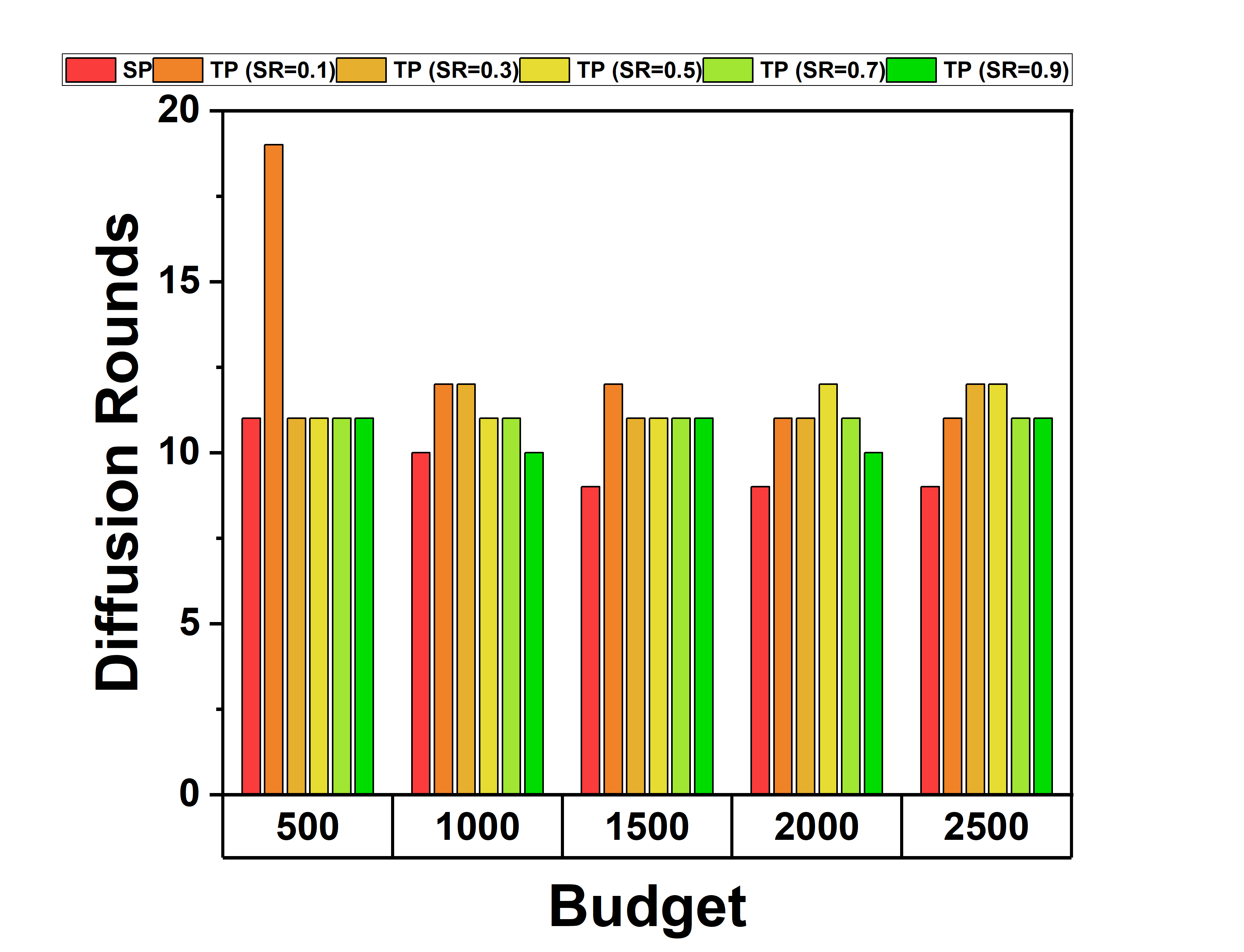}
        \caption{Timestep 8}
    \end{subfigure}
    \hfill
    \begin{subfigure}[t]{0.3\linewidth}
        \centering
        \includegraphics[width=\linewidth]{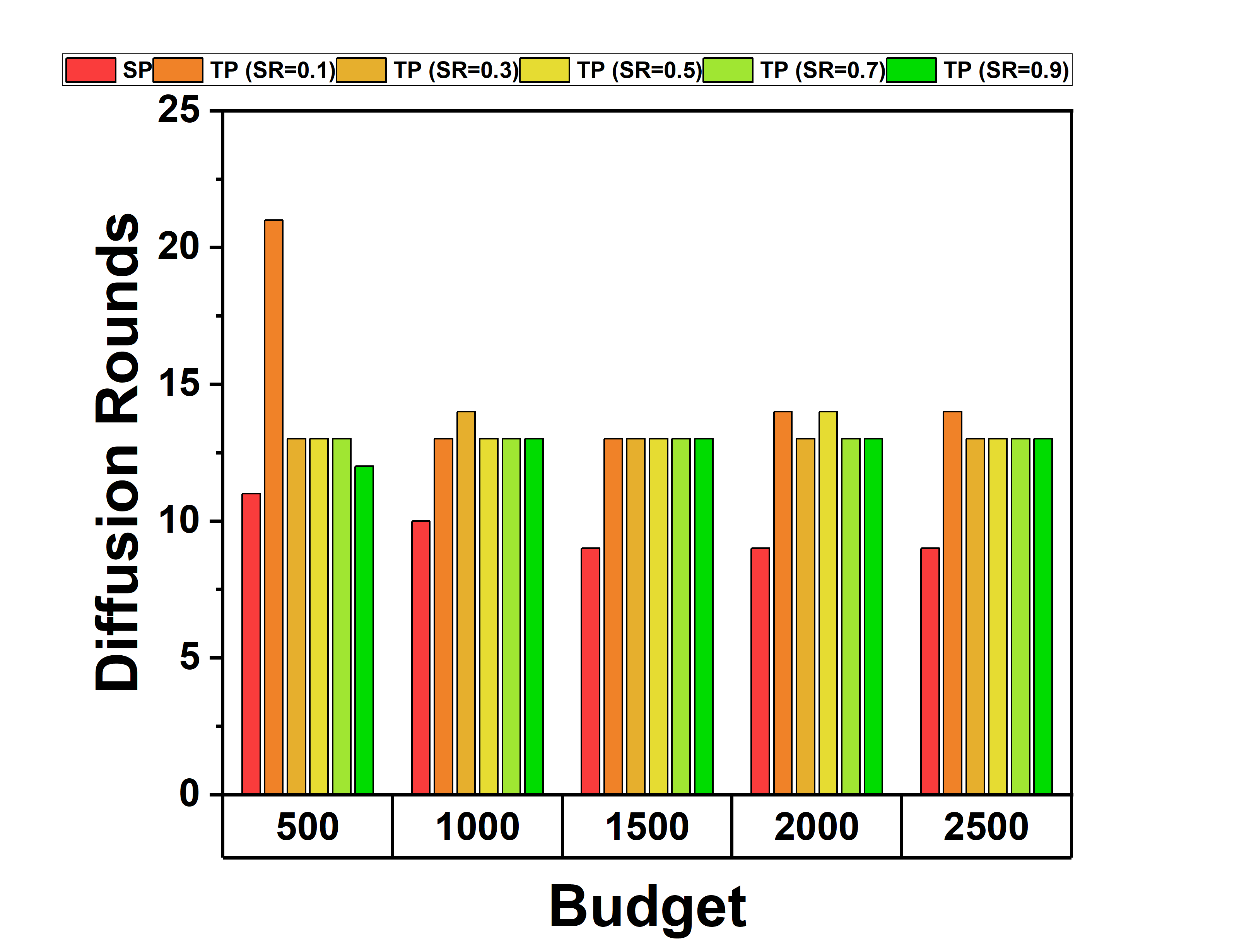}
        \caption{Timestep 10}
    \end{subfigure}

    \caption{Diffusion Rounds in Single Phase Vs. Two Phase (Single Discount Algorithm, \textit{Email-Eu-Core} Dataset, Probability Setting - Trivalency)}
    \label{RQ5_T5}
\end{figure}

\begin{figure}[htbp]
    \centering

    \begin{subfigure}[t]{0.3\linewidth}
        \centering
        \includegraphics[width=\linewidth]{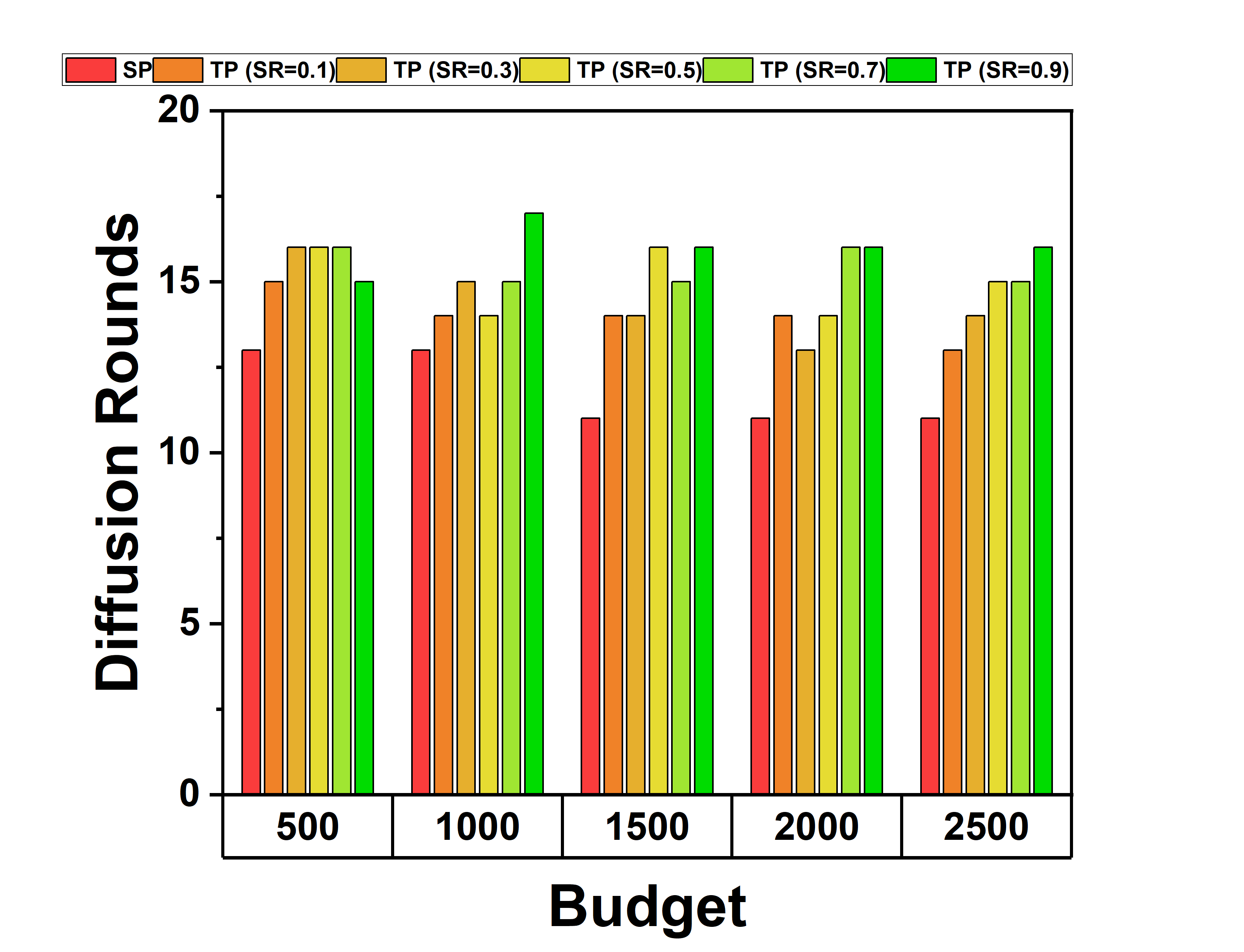}
        \caption{Timestep 2}
    \end{subfigure}
    \hspace{0.05\linewidth}
    \begin{subfigure}[t]{0.3\linewidth}
        \centering
        \includegraphics[width=\linewidth]{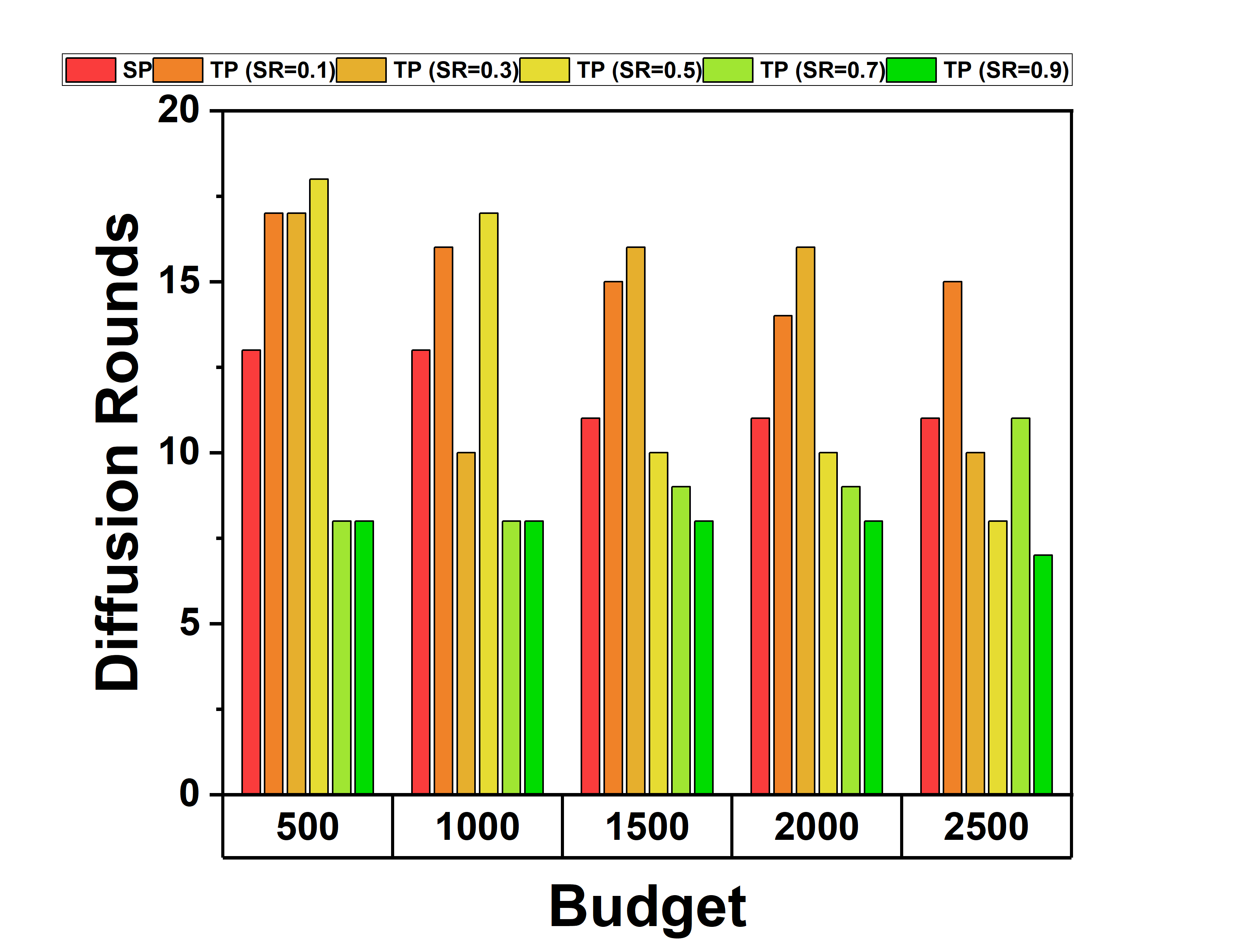}
        \caption{Timestep 4}
    \end{subfigure}

    \vspace{0.5cm}

    \begin{subfigure}[t]{0.3\linewidth}
        \centering
        \includegraphics[width=\linewidth]{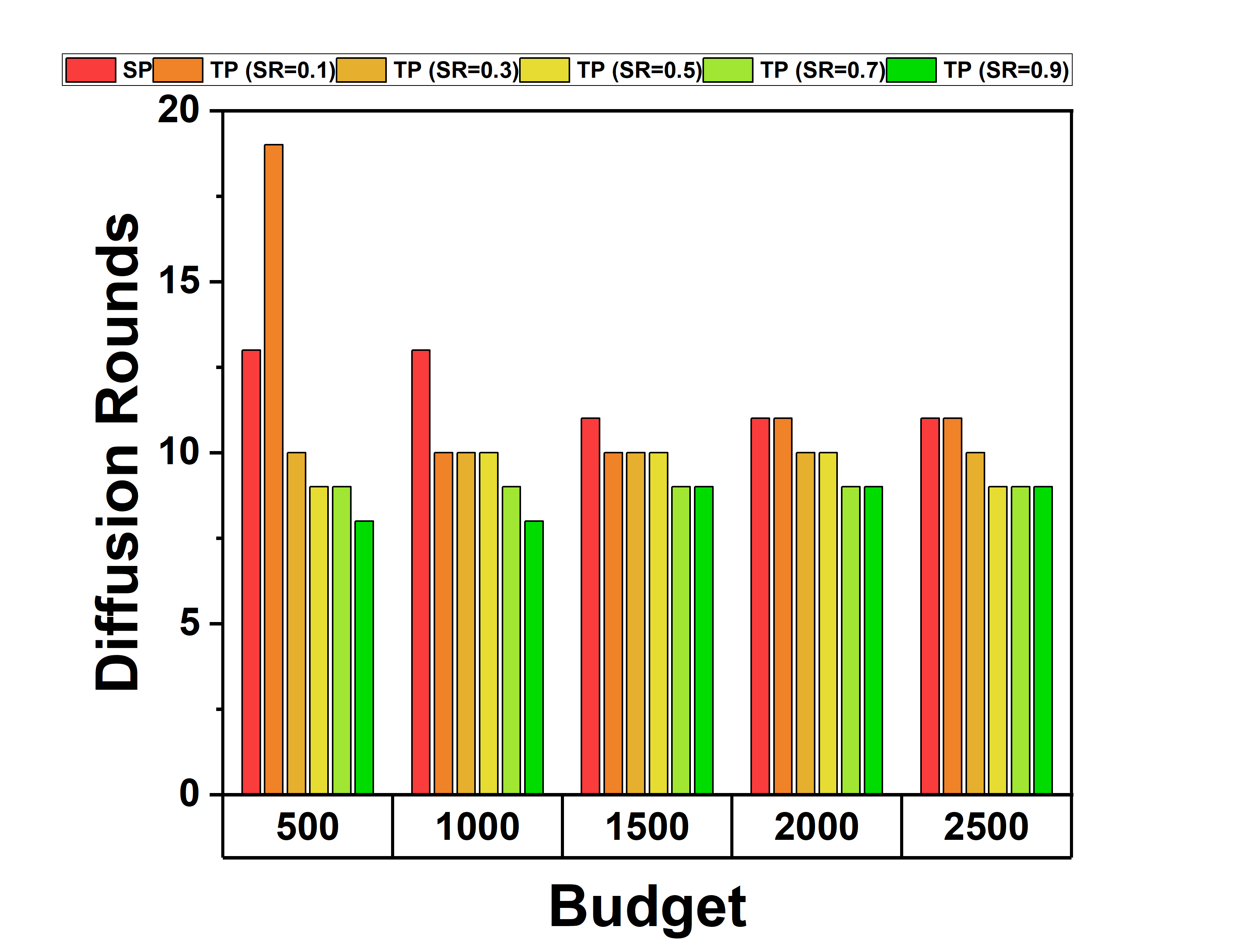}
        \caption{Timestep 6}
    \end{subfigure}
    \hfill
    \begin{subfigure}[t]{0.3\linewidth}
        \centering
        \includegraphics[width=\linewidth]{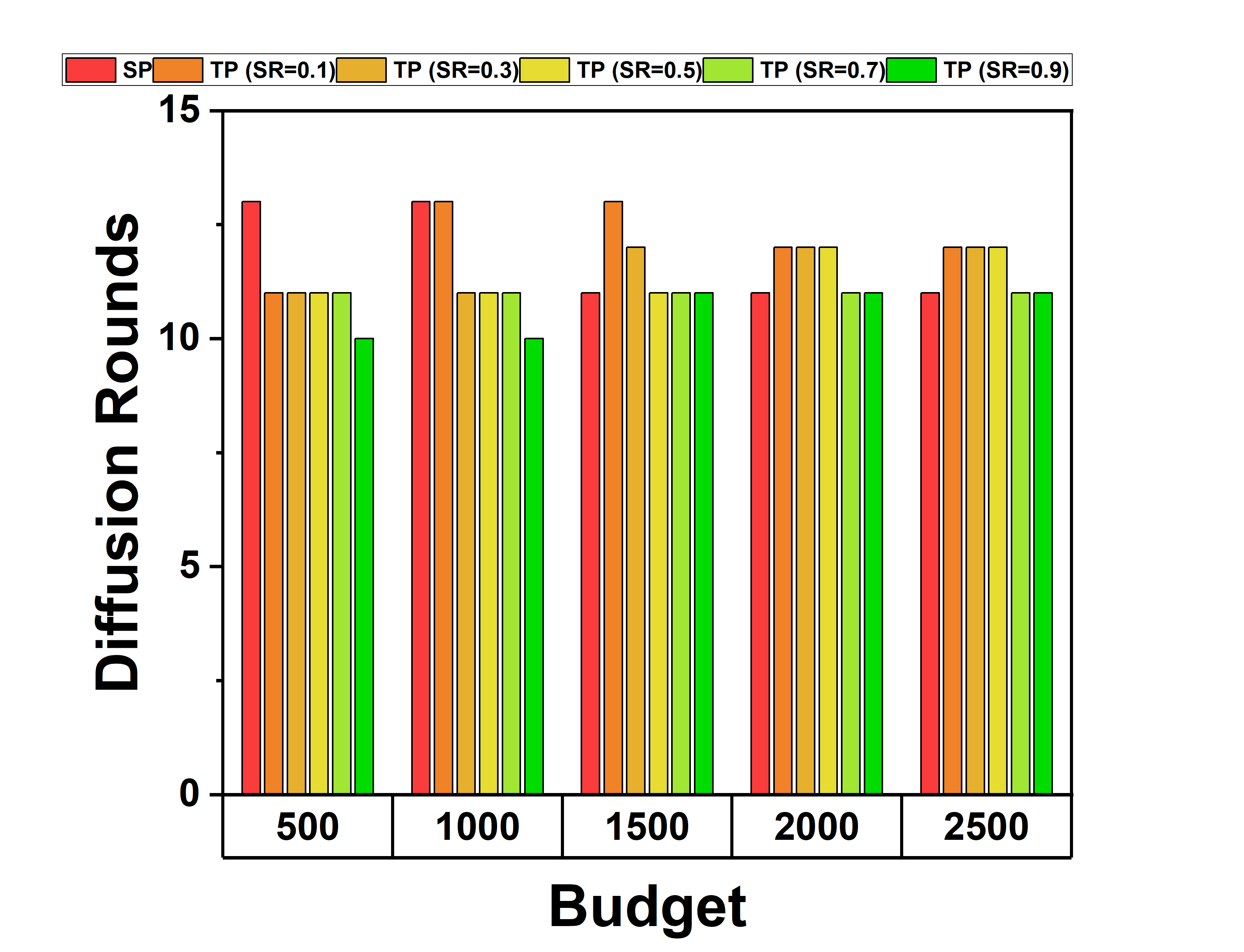}
        \caption{Timestep 8}
    \end{subfigure}
    \hfill
    \begin{subfigure}[t]{0.3\linewidth}
        \centering
        \includegraphics[width=\linewidth]{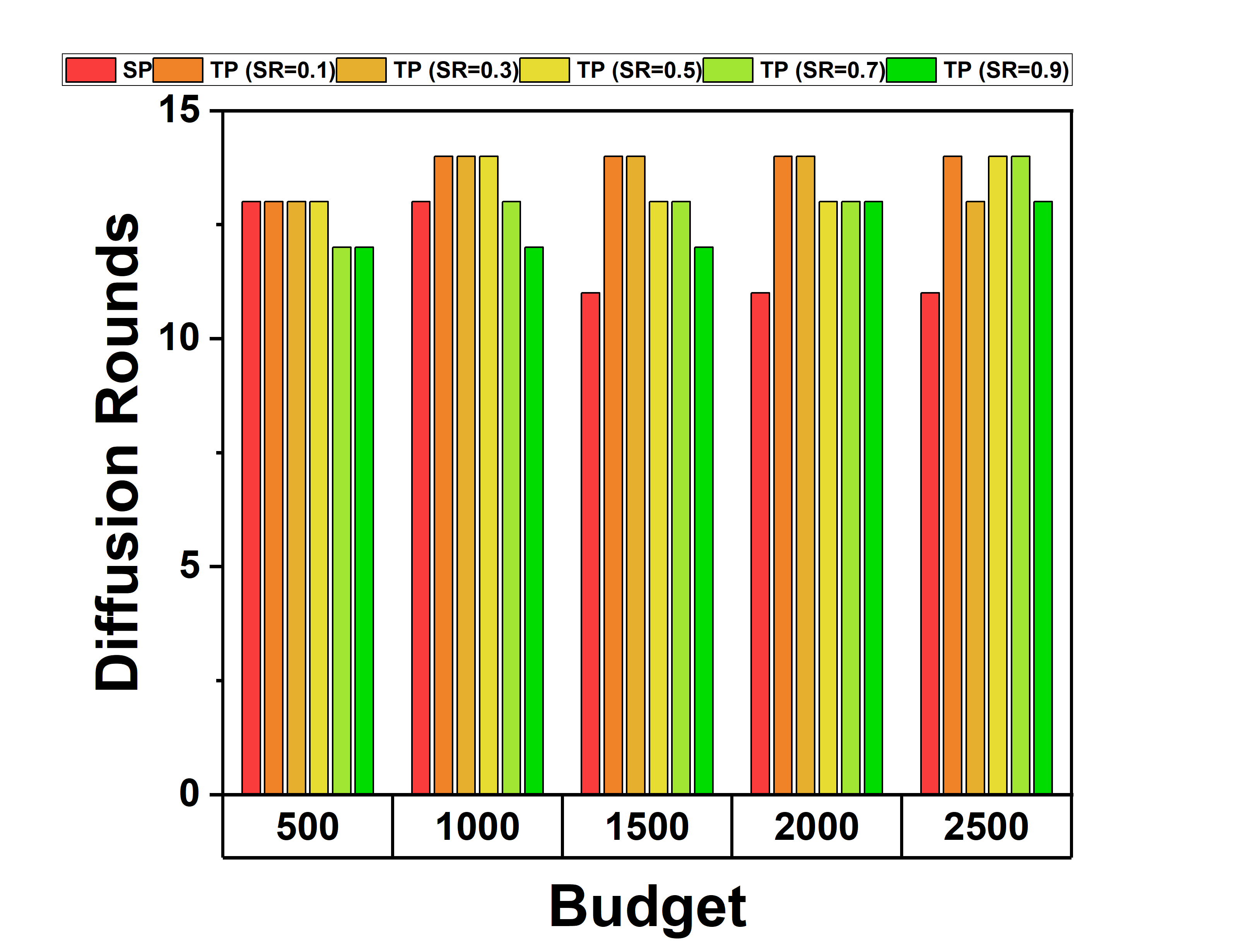}
        \caption{Timestep 10}
    \end{subfigure}

    \caption{Diffusion Rounds in Single Phase Vs. Two Phase (Simple Greedy Algorithm, \textit{Email-Eu-Core} Dataset, Probability Setting - Trivalency)}
    \label{RQ5_T6}
\end{figure}

\begin{figure}[htbp]
    \centering

    \begin{subfigure}[t]{0.3\linewidth}
        \centering
        \includegraphics[width=\linewidth]{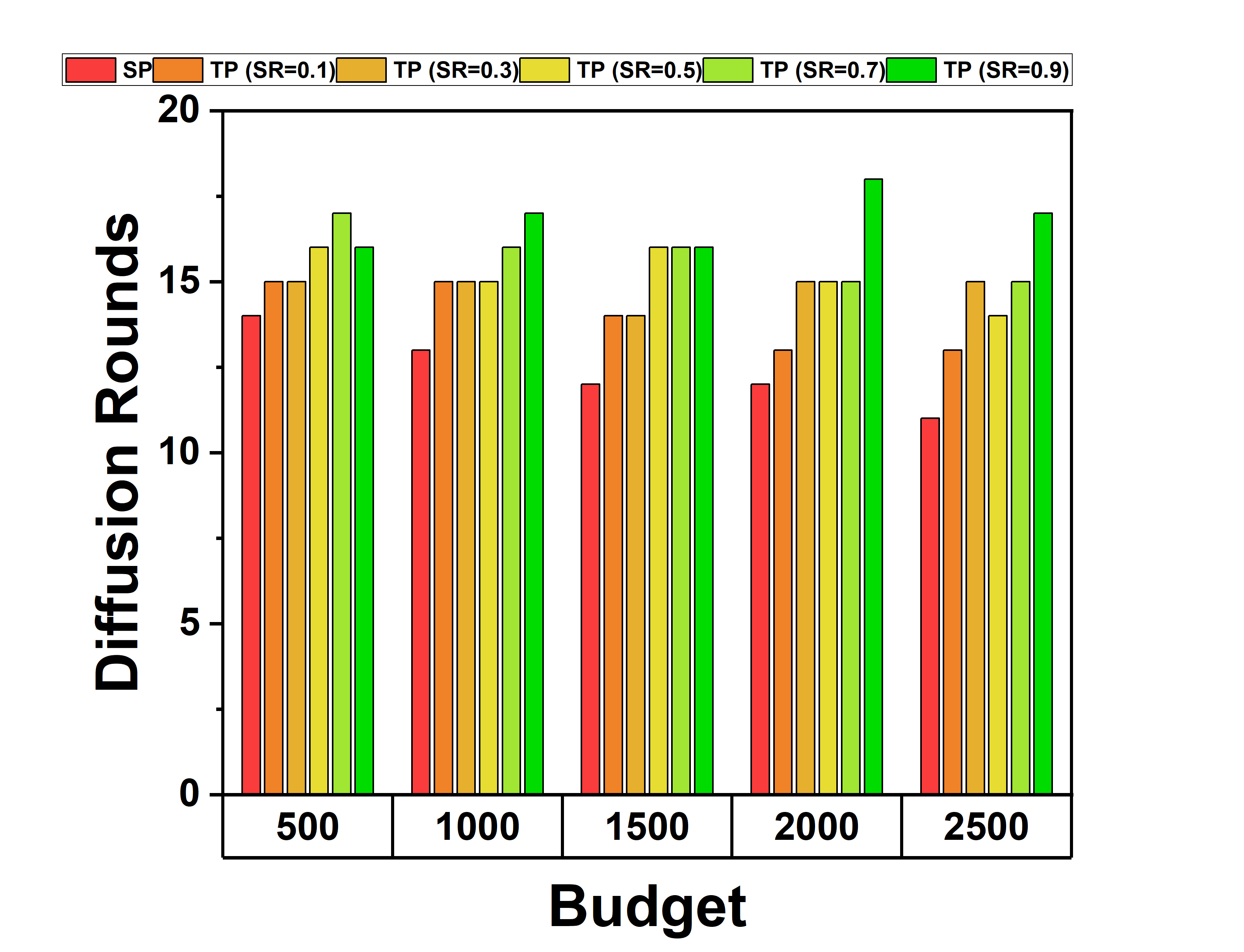}
        \caption{Timestep 2}
    \end{subfigure}
    \hspace{0.05\linewidth}
    \begin{subfigure}[t]{0.3\linewidth}
        \centering
        \includegraphics[width=\linewidth]{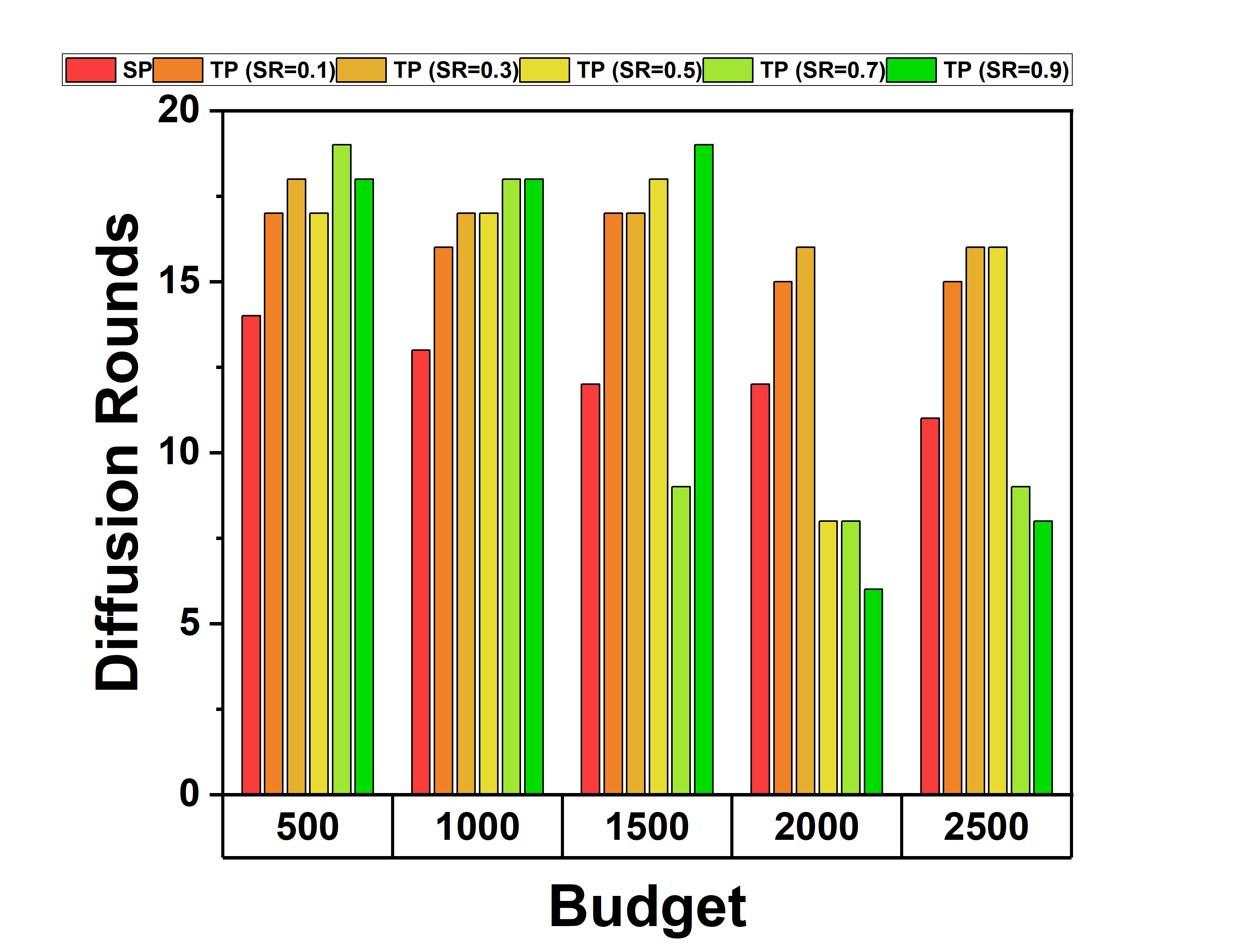}
        \caption{Timestep 4}
    \end{subfigure}

    \vspace{0.5cm}

    \begin{subfigure}[t]{0.3\linewidth}
        \centering
        \includegraphics[width=\linewidth]{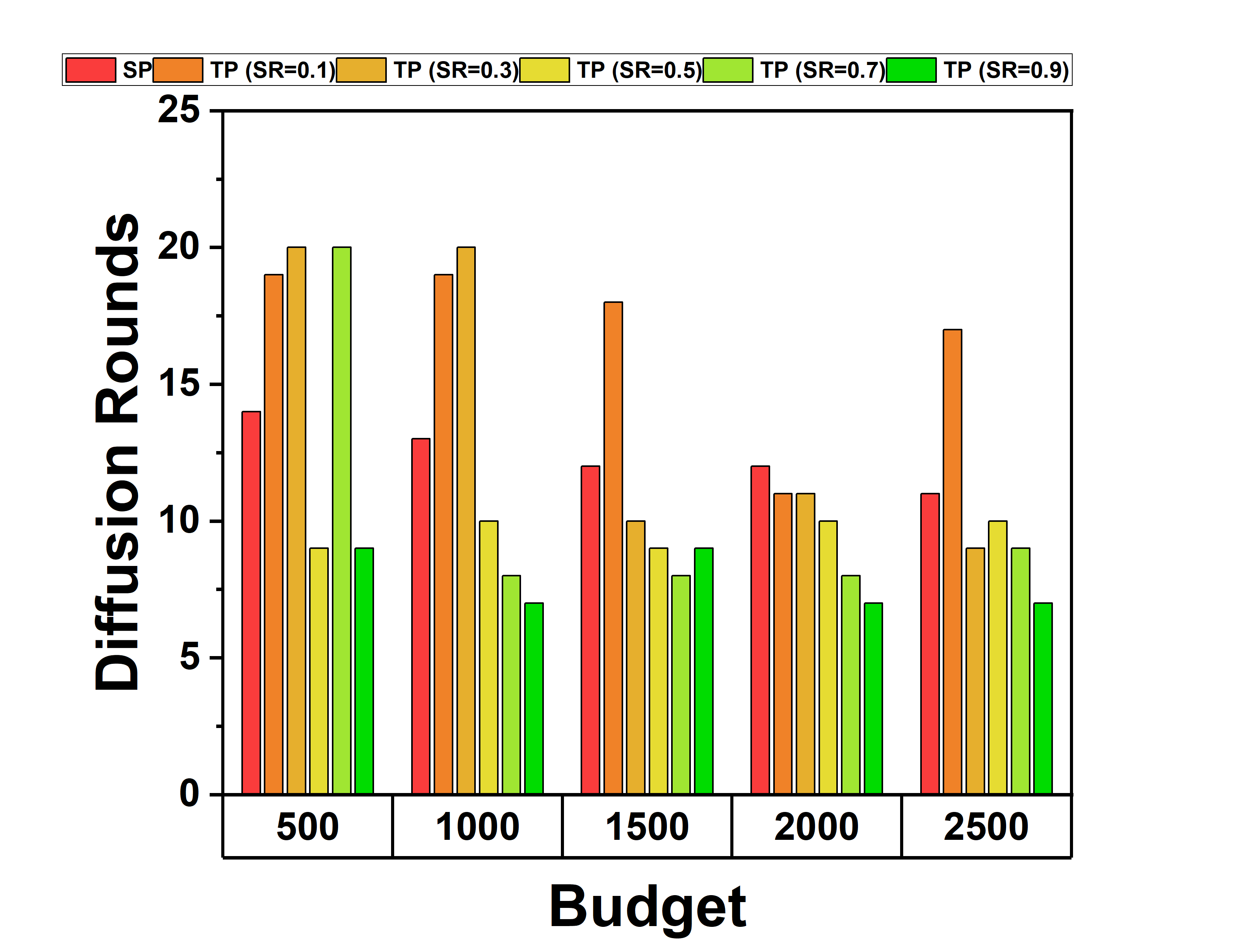}
        \caption{Timestep 6}
    \end{subfigure}
    \hfill
    \begin{subfigure}[t]{0.3\linewidth}
        \centering
        \includegraphics[width=\linewidth]{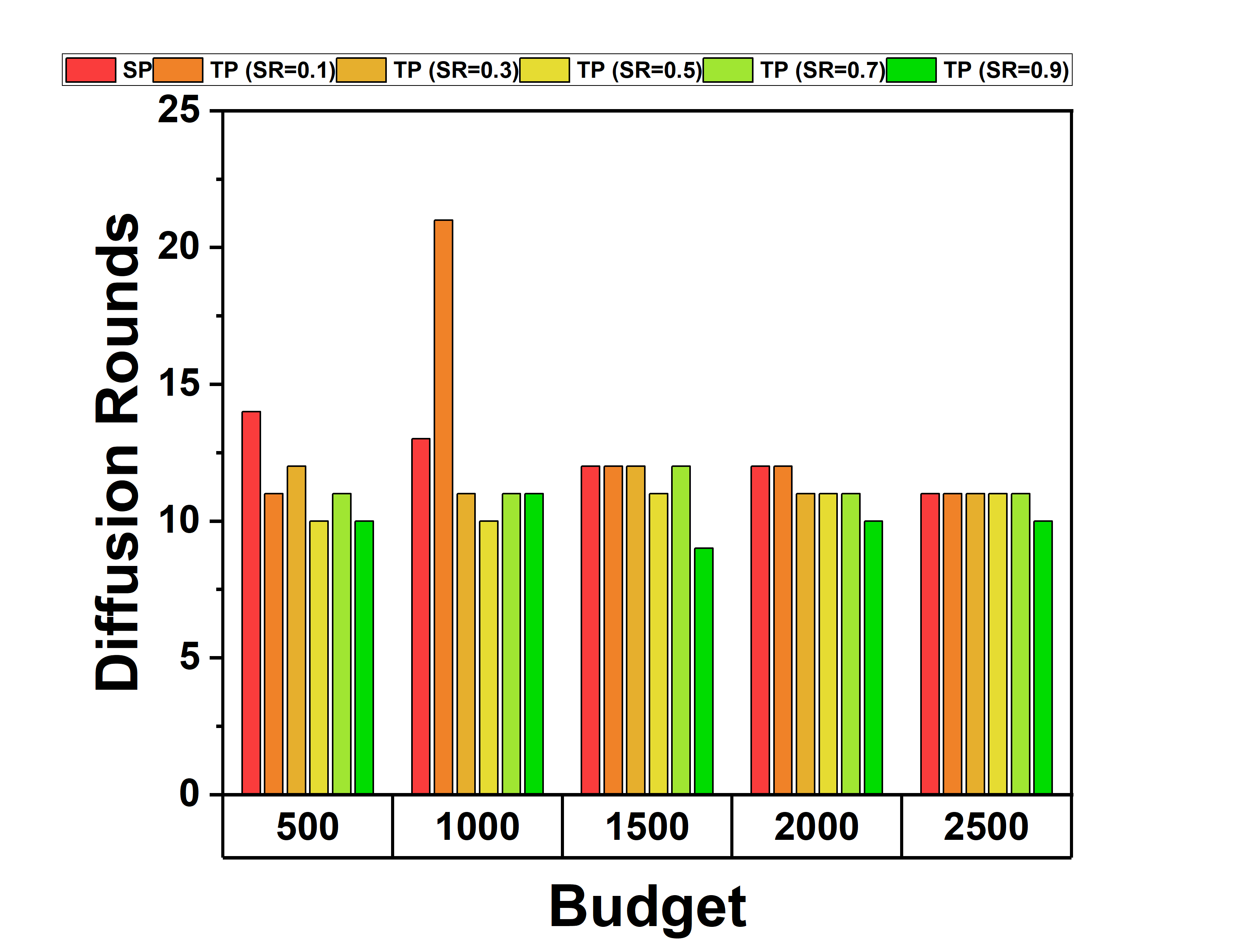}
        \caption{Timestep 8}
    \end{subfigure}
    \hfill
    \begin{subfigure}[t]{0.3\linewidth}
        \centering
        \includegraphics[width=\linewidth]{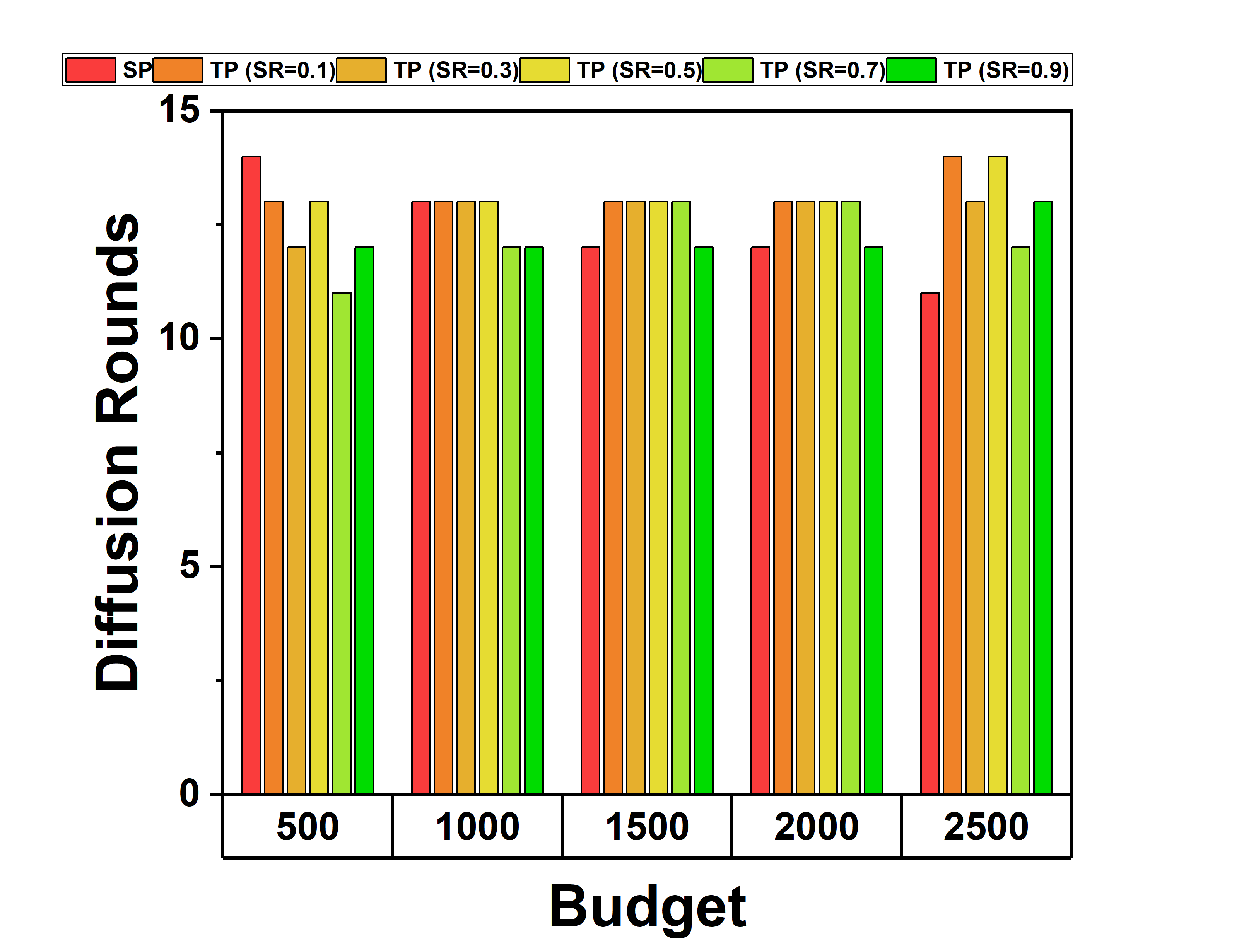}
        \caption{Timestep 10}
    \end{subfigure}

    \caption{Diffusion Rounds in Single Phase Vs. Two Phase (Double Greedy Algorithm, \textit{Email-Eu-Core} Dataset, Probability Setting - Trivalency)}
    \label{RQ5_T7}
\end{figure}

\begin{figure}[htbp]
    \centering

    \begin{subfigure}[t]{0.3\linewidth}
        \centering
        \includegraphics[width=\linewidth]{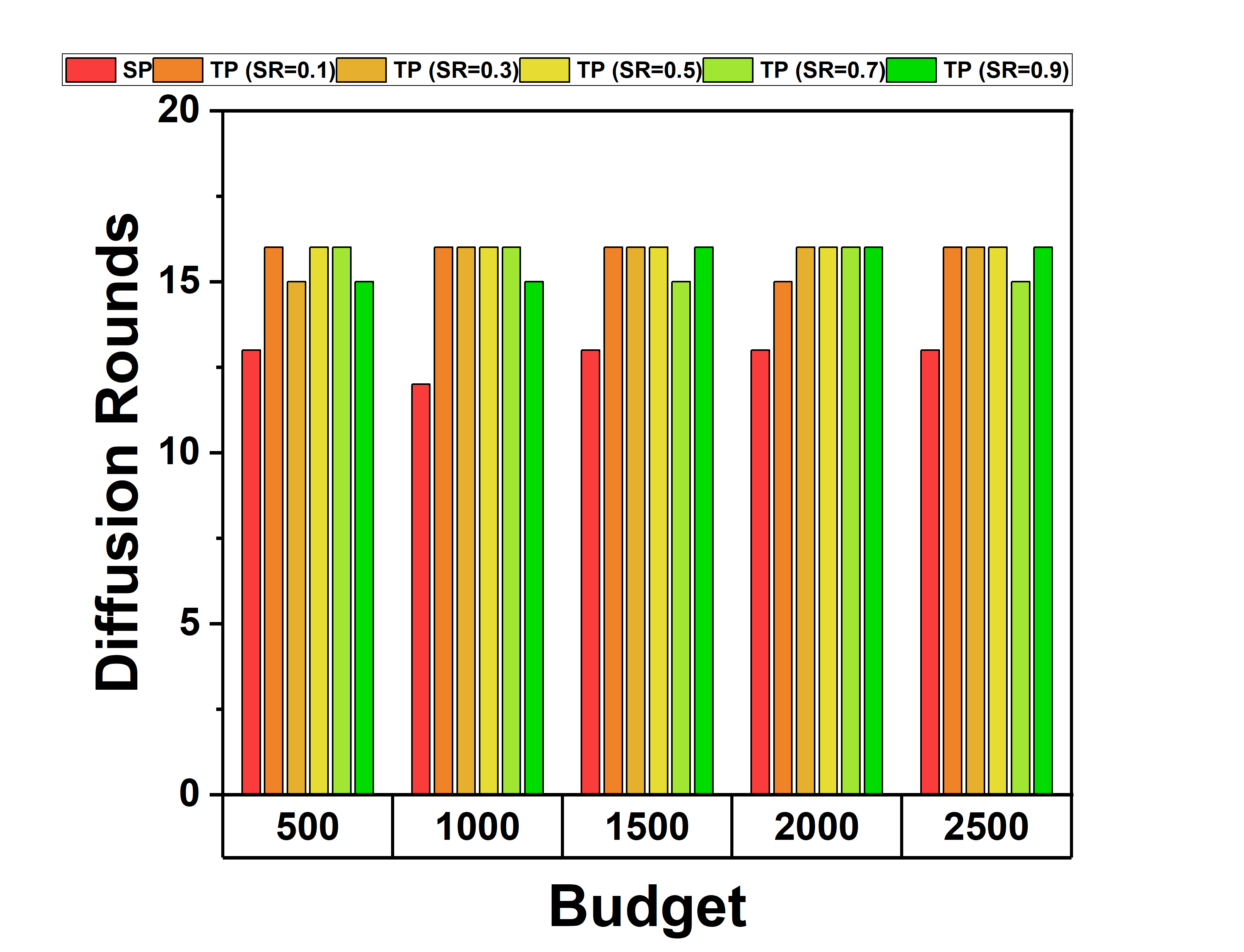}
        \caption{Timestep 2}
    \end{subfigure}
    \hspace{0.05\linewidth}
    \begin{subfigure}[t]{0.3\linewidth}
        \centering
        \includegraphics[width=\linewidth]{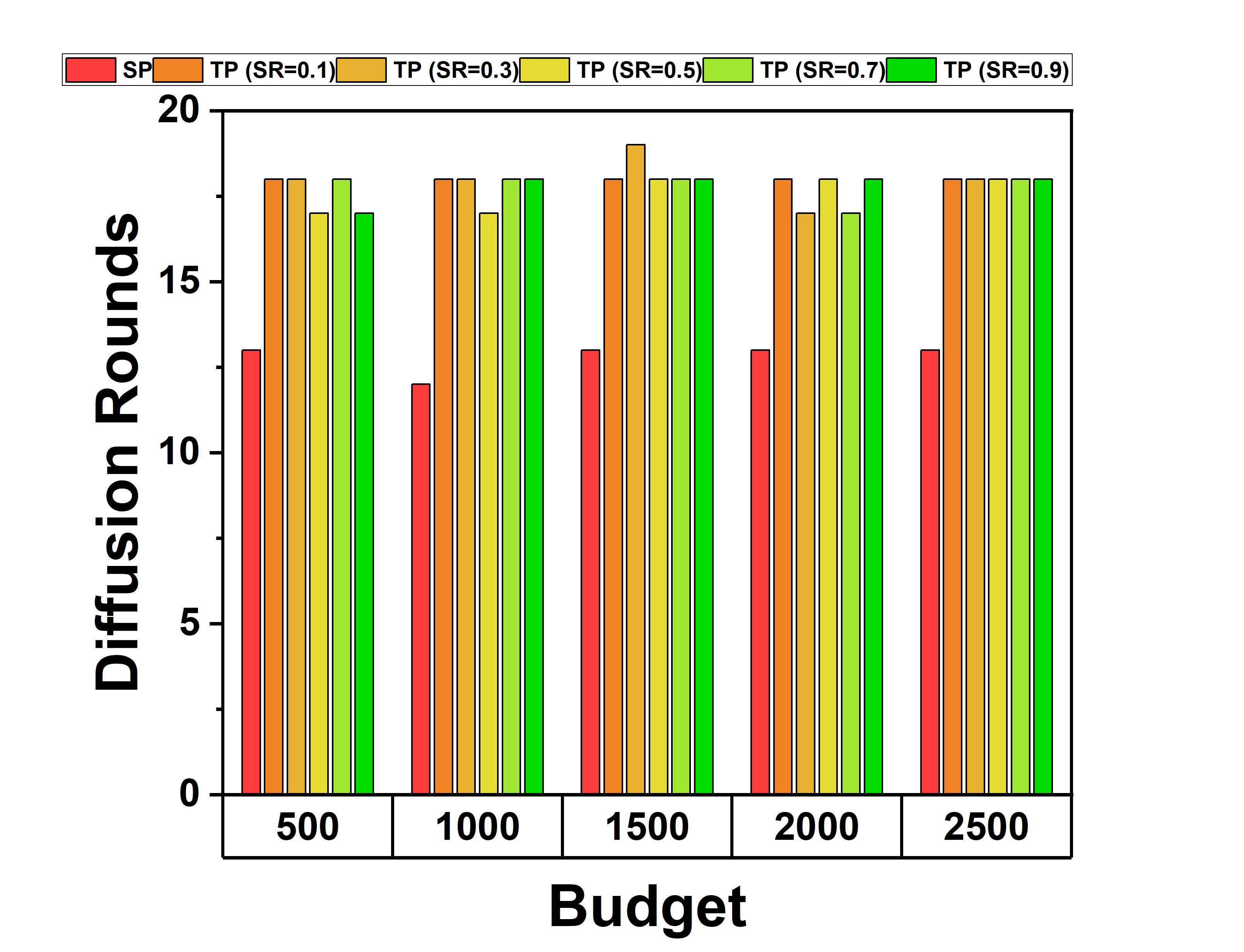}
        \caption{Timestep 4}
    \end{subfigure}

    \vspace{0.5cm}

    \begin{subfigure}[t]{0.3\linewidth}
        \centering
        \includegraphics[width=\linewidth]{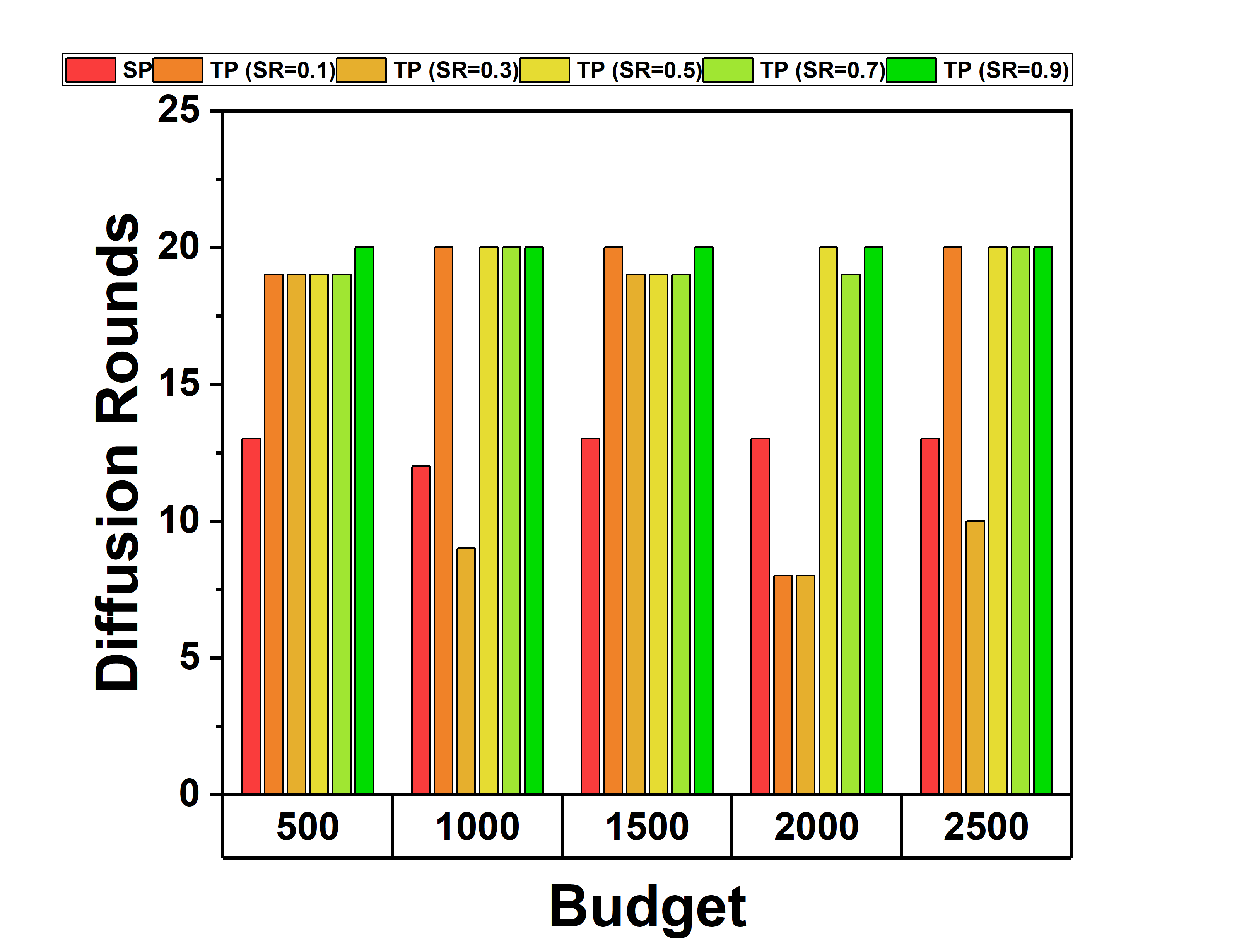}
        \caption{Timestep 6}
    \end{subfigure}
    \hfill
    \begin{subfigure}[t]{0.3\linewidth}
        \centering
        \includegraphics[width=\linewidth]{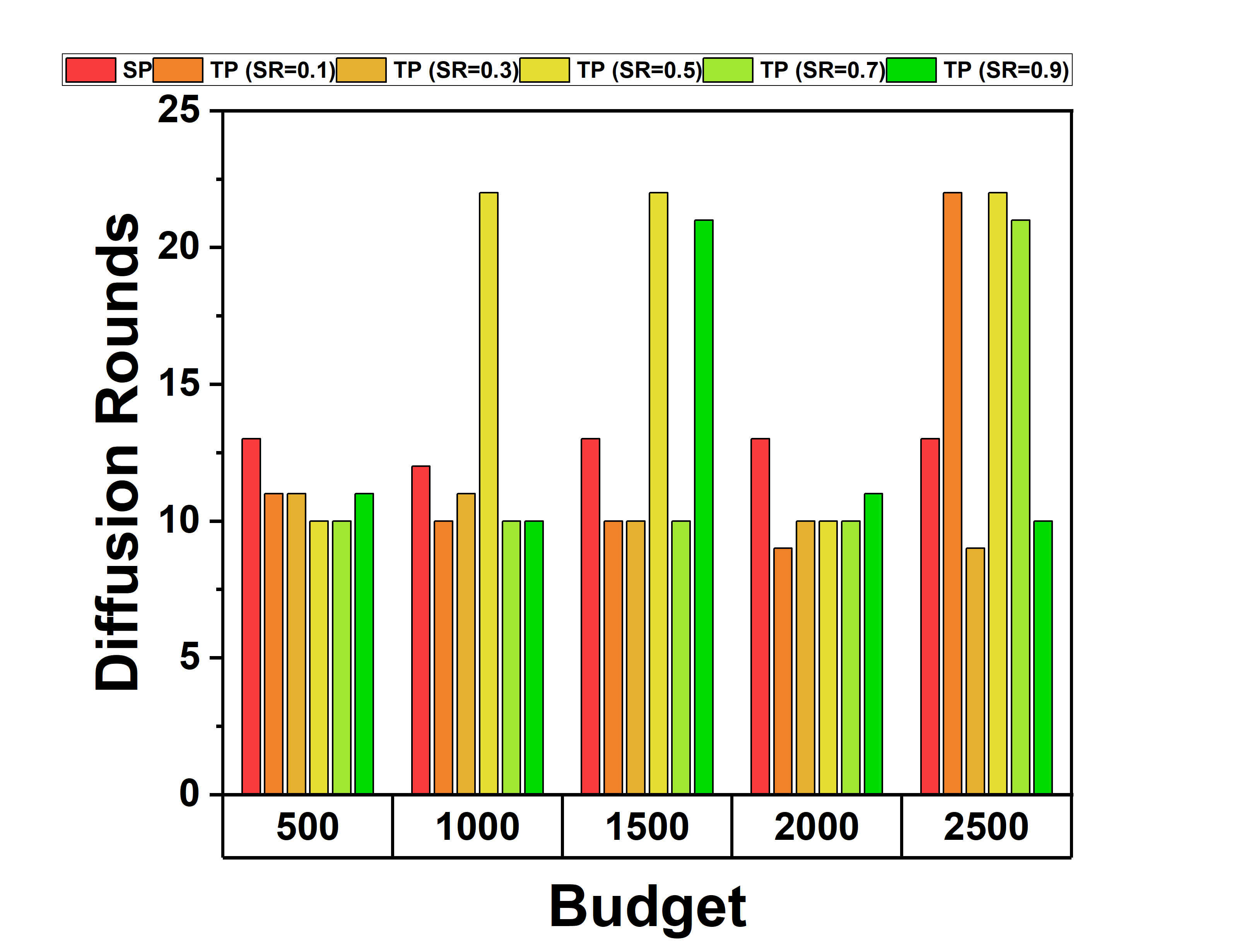}
        \caption{Timestep 8}
    \end{subfigure}
    \hfill
    \begin{subfigure}[t]{0.3\linewidth}
        \centering
        \includegraphics[width=\linewidth]{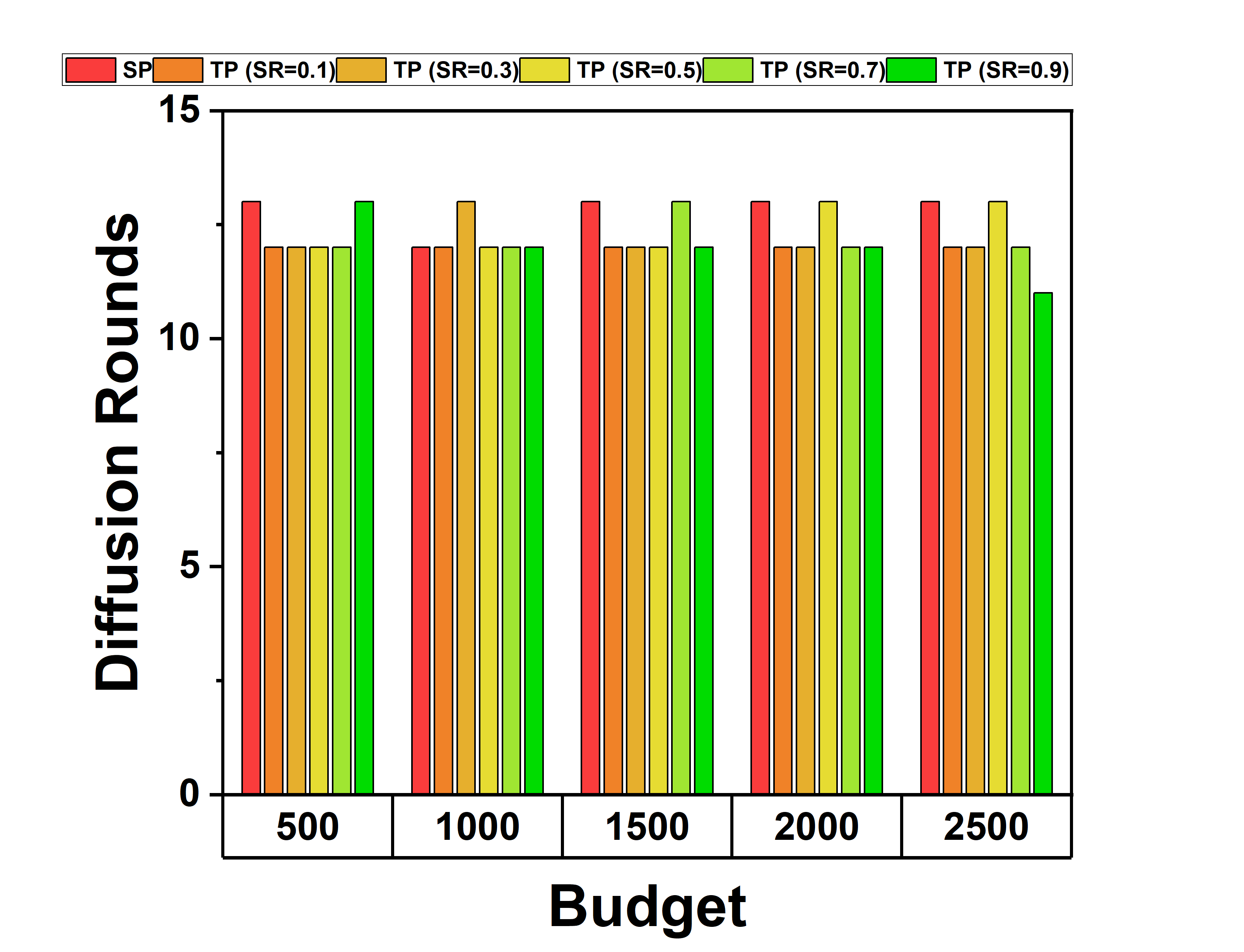}
        \caption{Timestep 10}
    \end{subfigure}

    \caption{Diffusion Rounds in Single Phase Vs. Two Phase (Stochastic Greedy Algorithm, \textit{Email-Eu-Core} Dataset, Probability Setting - Trivalency)}
    \label{RQ5_T8}
\end{figure}

Figures~\ref{RQ5_T1} to \ref{RQ5_T8} show how the number of diffusion rounds differs between single-phase and two phase settings in the \textit{Email-Eu-Core} dataset under trivalency probability setting. For the \textbf{Random} algorithm, diffusion rounds in the single-phase increase with budget—from $12$ rounds at budget $500$ to $20$ rounds at budget $2500$. In the two phase setting, this rises further. At budget $2500$, diffusion rounds ranged from $22$ to $24$, with $24$ rounds recorded at split ratio $0.5$ and timestep $6$ (Figure~\ref{RQ5_T1}(c)), which is higher than the single-phase value of $20$. \textbf{HD} shows a varied but generally increased number of diffusion rounds in the two phase setup. At budget $500$, two phase rounds range from $6$ to $14$. One instance had $14$ rounds at split ratio $0.1$ and timestep $2$ (Figure~\ref{RQ5_T2}(a)), compared to $11$ rounds in single-phase. However, another setting with split ratio $0.7$ and timestep $8$ had only $8$ rounds, showing some variation. For \textbf{HighCC}, two phase diffusion rounds are consistently higher. At budget $500$, they range from $6$ to $12$. For example, at split ratio $0.1$ and timestep $2$, \textbf{HighCC} recorded $7$ rounds—higher than the $4$ rounds in single-phase. At split ratio $0.9$ and timestep $10$, it reached $12$ rounds (Figure~\ref{RQ5_T3}(e)). \textbf{DD} also shows higher or equal diffusion rounds in the two phase setup. At budget $500$, rounds range from $11$ to $18$. One configuration shows $13$ rounds at split ratio $0.1$ and timestep $2$ (Figure~\ref{RQ5_T4}(a)), up from $11$ in single-phase. \textbf{SD} shows a similar pattern, with rounds ranging from $11$ to $19$ in the two phase setting. In one example (Figure~\ref{RQ5_T5}(a)), \textbf{SD} used $13$ rounds at split ratio $0.1$ and timestep $2$, compared to $11$ in single-phase. \textbf{SG} gives mixed results but usually shows equal or slightly more rounds. At budget $500$, two phase diffusion rounds range from $11$ to $19$. One setting shows $15$ rounds at split ratio $0.1$ and timestep $2$ (Figure~\ref{RQ5_T6}(a)), which is higher than $13$ in single-phase. At budget $2500$, rounds range from $20$ to $24$. For example, $20$ rounds were recorded at split ratio $0.1$ and timestep $2$, matching single-phase, and $24$ rounds were seen at split ratio $0.9$ and timestep $10$ (Figure~\ref{RQ5_T6}(e)). \textbf{DG} also performs well in this aspect. At budget $500$, rounds in two phase range from $11$ to $19$. One example (Figure~\ref{RQ5_T7}(a)) shows $15$ rounds at split ratio $0.1$ and timestep $2$, compared to $14$ in single-phase. For \textbf{StG0.1}, diffusion rounds at budget $500$ range from $11$ to $19$. One case in Figure~\ref{RQ5_T8}(a) shows $16$ rounds at split ratio $0.1$ and timestep $2$, which is higher than $13$ in single-phase. Overall, these findings suggest that the two phase setting helps extend the influence process by allowing more rounds of diffusion. This deeper spread generally leads to a broader reach and higher potential impact in the network.

\subsubsection{Comparative Analysis of Computational Time Across Algorithms}
The computational analysis for the \textit{LM} dataset (under the trivalency probability setting) focuses on configurations where the timestep is fixed at $2$, reflecting early intervention in the diffusion process. The results in Figure~\ref{RQ6LM_T1}(a) show that most algorithms strike a reasonable balance between profit gain and the extra computation time required in the two phase framework. For example, the \textbf{Random} algorithm at budget $500$ and split ratio $0.1$ achieved a profit improvement of $39.43\%$ while requiring only $7.06$ \textit{seconds} of additional time. Similarly, in the same configuration, the \textbf{HD} algorithm gained $15.38\%$ and used $7.20$ \textit{seconds} more, as shown in Figure~\ref{RQ6LM_T1}(a). In that same figure, the \textbf{HighCC} algorithm showed a profit gain of just $0.34\%$ for an extra $7.31$ \textit{seconds} of time. The \textbf{DD} algorithm, under a split ratio of $0.3$, improved profit by $0.25\%$ with an added time of $7.65$ \textit{seconds}. \textbf{SD} delivered a profit increase of $2.68\%$ for a time cost of $9.12$ \textit{seconds} (Figure~\ref{RQ6LM_T1}(a)), while the more computationally demanding \textbf{SG} showed a $1.38\%$ gain for $9.49$ \textit{seconds} of extra time. In contrast, the \textbf{DG} algorithm provided a strong trade-off, achieving a $9.18\%$ profit improvement with just $8.29$ \textit{seconds} of additional computation time, as seen in Figure~\ref{RQ6LM_T1}(a). Finally, the \textbf{StG0.1} algorithm, even with higher diffusion and computation, remained efficient. At split ratio $0.3$, it delivered a $15.75\%$ gain in $8.01$ \textit{seconds} (Figure~\ref{RQ6LM_T1}(a)). In summary, for a fixed timestep of $2$, most algorithms manage to justify the additional computational effort by showing a positive profit gain, especially our proposed methods—\textbf{DG}, \textbf{SG}, and \textbf{StG0.1} which maintain good performance even under constrained timing.

\begin{figure}[htbp]
    \centering
    \begin{subfigure}[t]{0.3\linewidth}
        \centering
        \includegraphics[width=\linewidth]{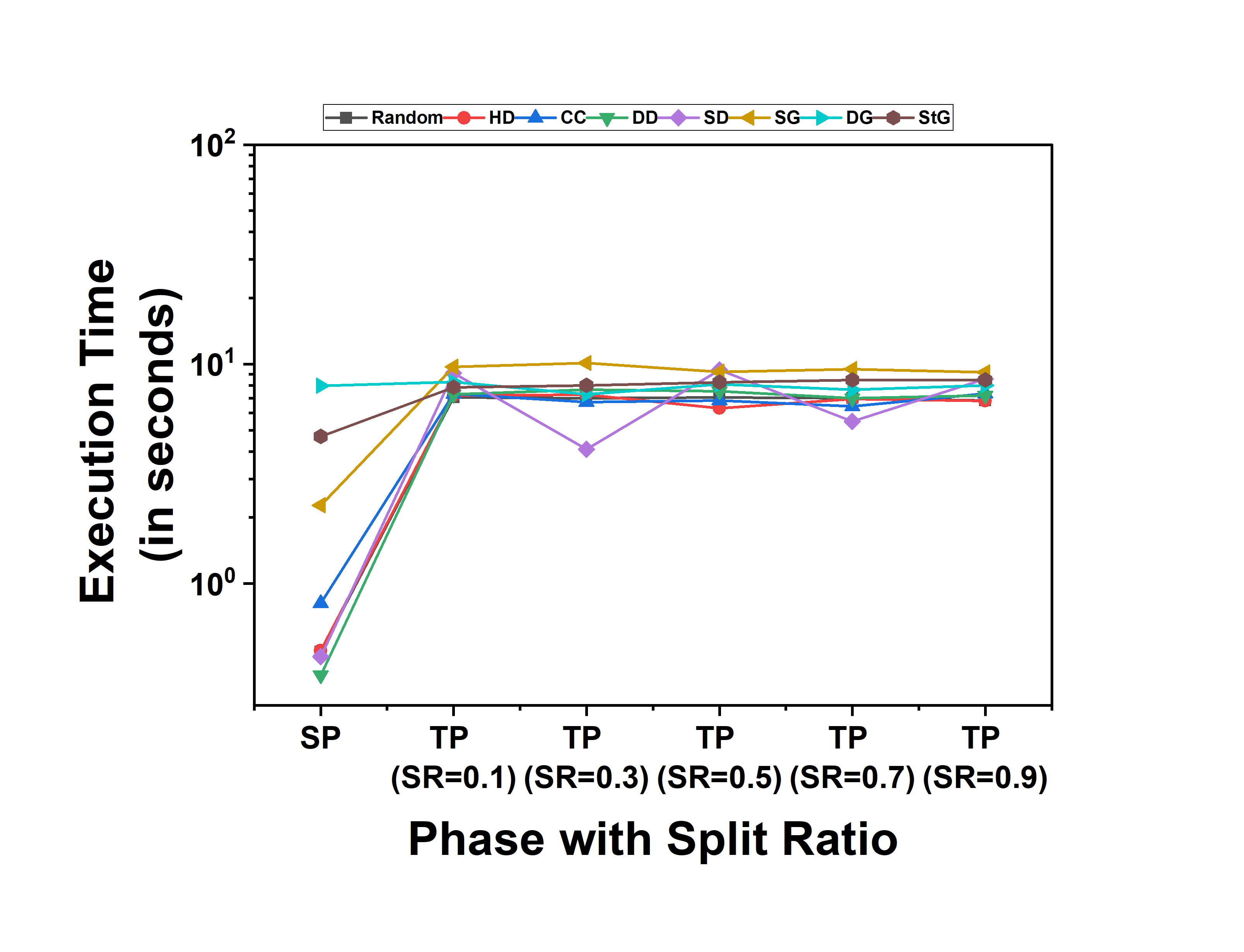}
        \caption{Budget 500}
    \end{subfigure}
    \begin{subfigure}[t]{0.3\linewidth}
        \centering
        \includegraphics[width=\linewidth]{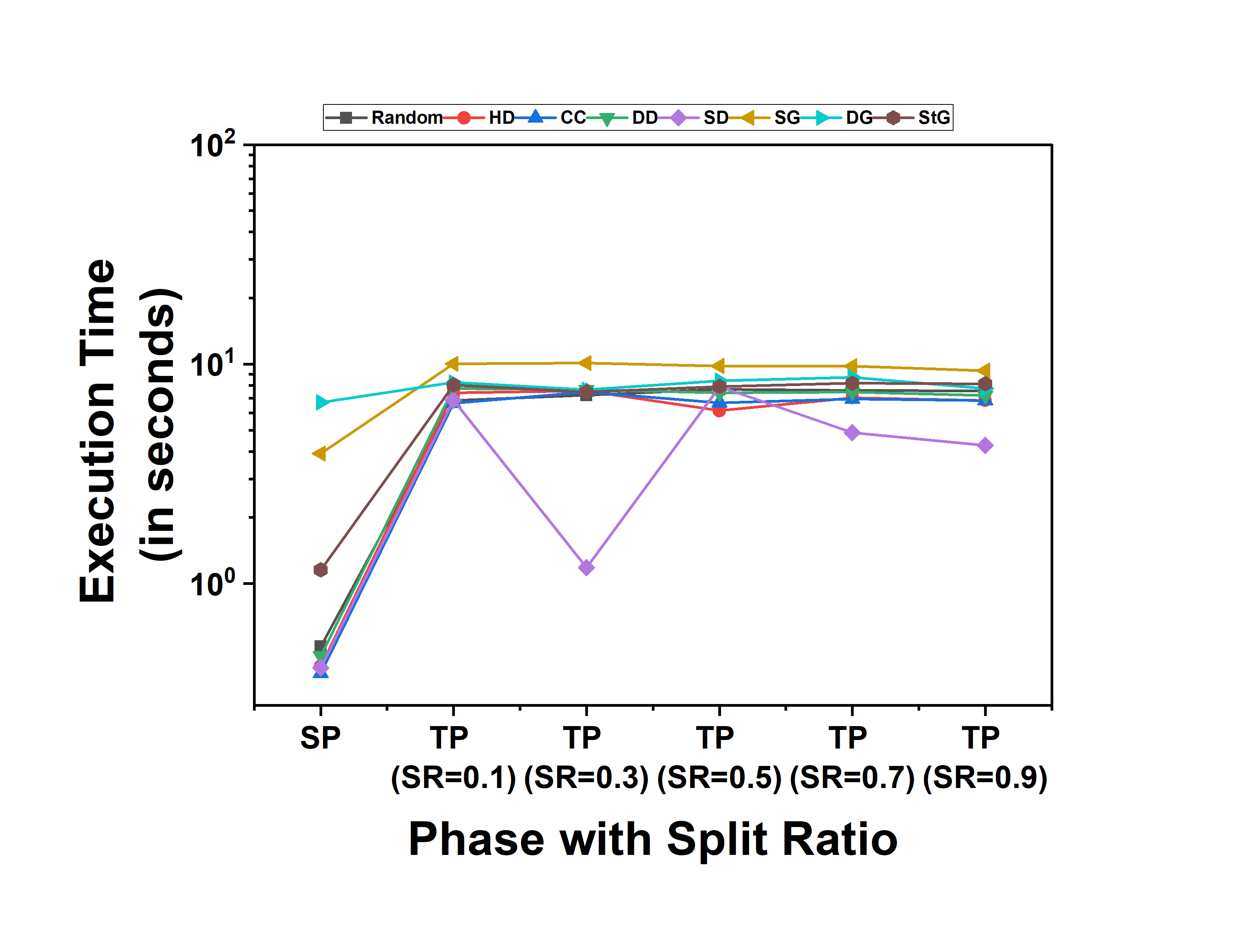}
        \caption{Budget 1000}
    \end{subfigure}

    \vspace{0.5cm}

    \begin{subfigure}[t]{0.3\linewidth}
        \centering
        \includegraphics[width=\linewidth]{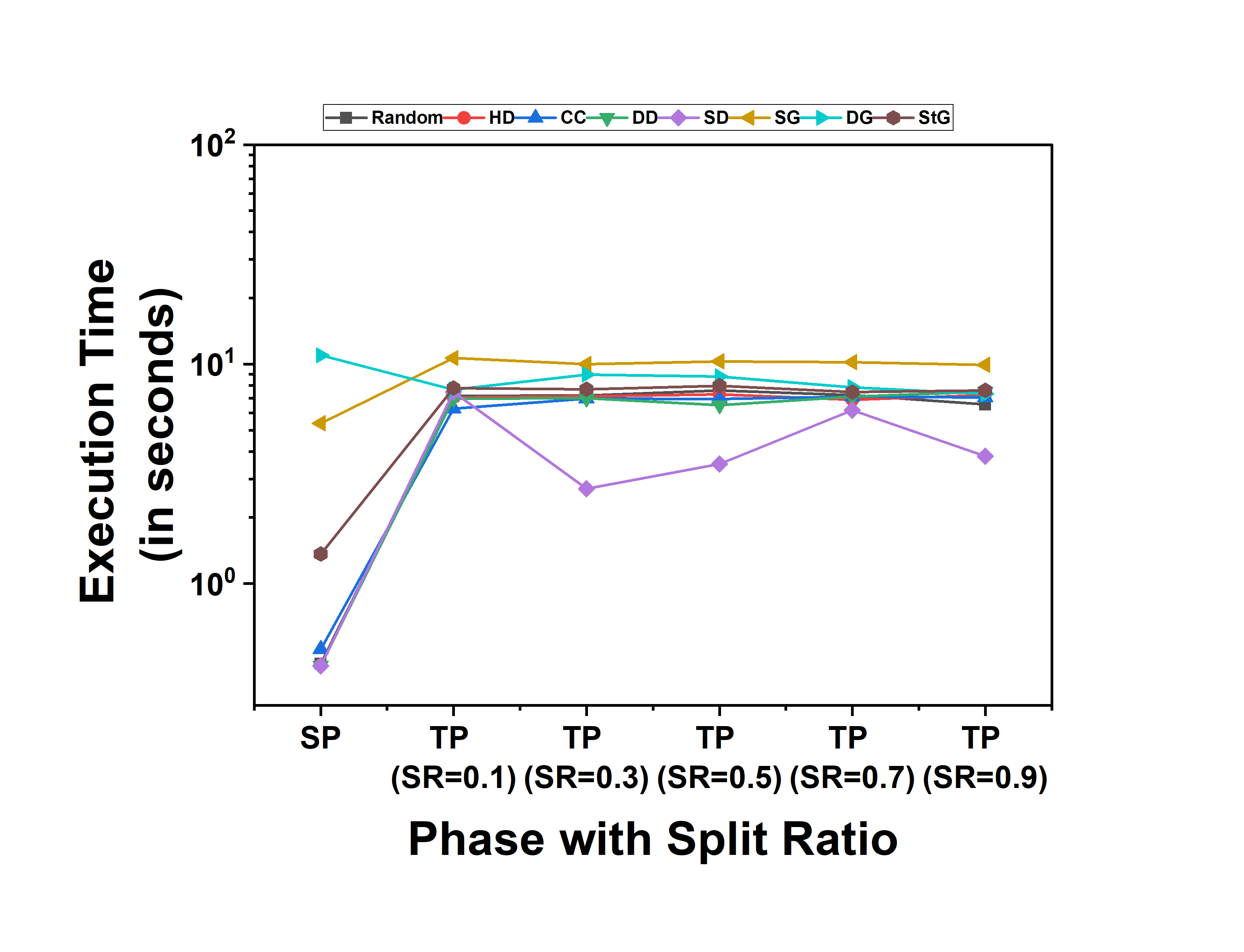}
        \caption{Budget 1500}
    \end{subfigure}
    \hfill
    \begin{subfigure}[t]{0.3\linewidth}
        \centering
        \includegraphics[width=\linewidth]{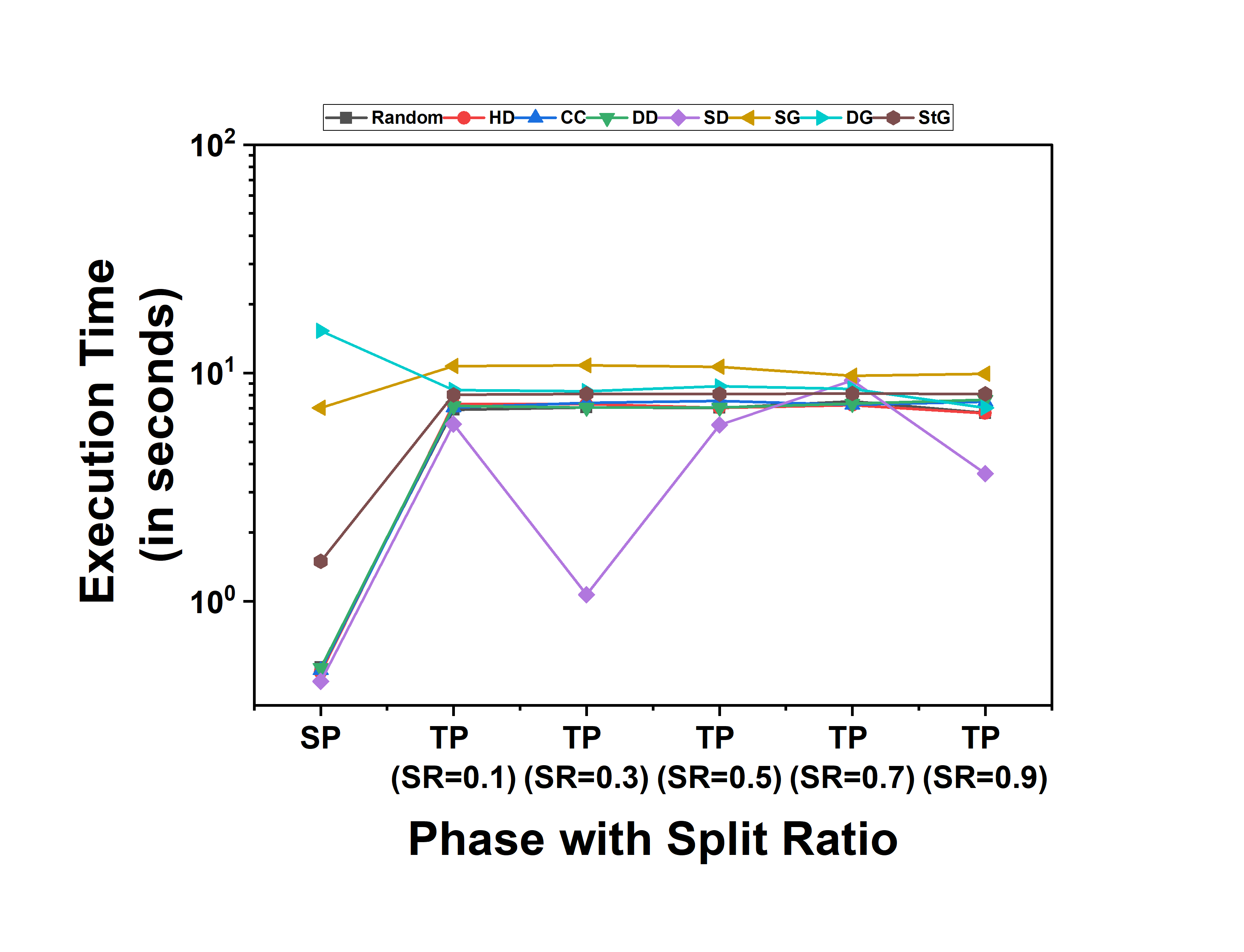}
        \caption{Budget 2000}
    \end{subfigure}
    \hfill
    \begin{subfigure}[t]{0.3\linewidth}
        \centering
        \includegraphics[width=\linewidth]{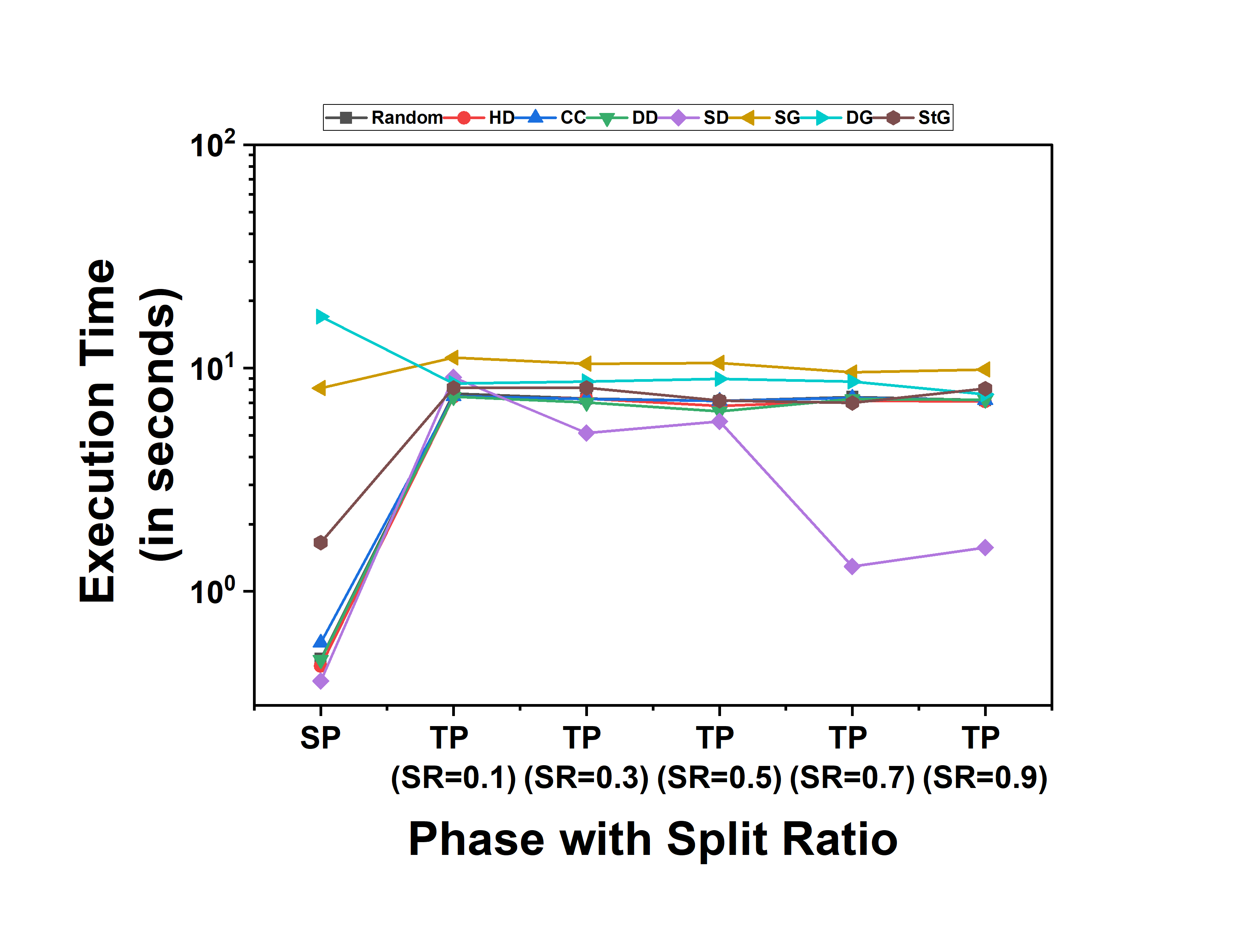}
        \caption{Budget 2500}
    \end{subfigure}

    \caption{Execution Time for Single Phase Vs.\ Two Phase across all Algorithms for Timestep 2, \textit{LM} Dataset, Probability Setting - Trivalency}
    \label{RQ6LM_T1}
\end{figure}

Figure~\ref{RQ6_T1} shows the computational time required by each algorithm for all budget values in both single-phase and two phase settings for the \textit{Email-Eu-Core} dataset under trivalency probability setting. For the \textbf{Random} algorithm, the time taken in the two phase setting is consistently and significantly higher than in the single-phase setting. For example, at a budget of $500$, the single-phase time was $1.10$ \textit{seconds}, while the two phase times ranged between $18.20$ \textit{seconds} and $18.32$ \textit{seconds}. One such instance with a split ratio of $0.1$ and timestep $2$ resulted in a time of $18.32$ \textit{seconds} (Figure~\ref{RQ6_T1}(a)). \textbf{HD} shows a similar trend. As seen in Figure~\ref{RQ6_T1}(a), for a budget of $500$, the single-phase time was $1.28$ \textit{seconds}, and the two phase times ranged from $18.66$ \textit{seconds} to $19.10$ \textit{seconds}. For the configuration with a split ratio of $0.1$ and timestep $2$, the time reached $19.10$ \textit{seconds}. In the case of \textbf{HighCC}, the difference is also notable. For a budget of $500$, the single-phase time was $1.18$ \textit{seconds}. In contrast, two phase times ranged from $18.69$ \textit{seconds} to $19.34$ \textit{seconds}, with one configuration at split ratio $0.1$ and timestep $2$ taking $18.91$ \textit{seconds} (Figure~\ref{RQ6_T1}(a)). \textbf{DD} also demonstrates much higher computation times in the two phase setup. For a budget of $500$, the single-phase time was $1.14$ \textit{seconds}. In one two phase instance with a split ratio of $0.1$ and timestep $2$, the time rose to $20.01$s, within the overall range of $19.07$ \textit{seconds} to $20.04$ \textit{seconds}. As for \textbf{SD}, the two phase times show greater variation. At a budget of $500$, \textbf{SD} took $1.23$ \textit{seconds} in the single-phase setting. Two phase times ranged from $1.19$ \textit{seconds} to $16.14$ \textit{seconds}, with $16.14$ \textit{seconds} recorded at split ratio $0.1$ and timestep $2$. The \textbf{SG} algorithm shows a dramatic increase in two phase time. While the single-phase execution took $16.59$ \textit{seconds}, the two phase setting required between $102.97$ \textit{seconds} and $131.08$ \textit{seconds}. In Figure~\ref{RQ6_T1}(a), one configuration with split ratio $0.1$ and timestep $2$ took $131.08$ \textit{seconds}. \textbf{DG} also exhibits substantially longer two phase times. At a budget of $500$, the single-phase execution lasted $7.80$ \textit{seconds}, while two phase times ranged from $20.93$ \textit{seconds} to $24.89$ \textit{seconds}. For split ratio $0.1$ and timestep $2$, \textbf{DG} took $24.89$ \textit{seconds} (Figure~\ref{RQ6_T1}(a)). At a higher budget of $2500$, \textbf{DG}’s single-phase time was $34.19$ \textit{seconds}, whereas two phase values ranged from $39.46$ \textit{seconds} to $49.33$ \textit{seconds} occurring at split ratio $0.9$ and timestep $2$ (Figure~\ref{RQ6_T1}(e)). In the case of \textbf{StG0.1}, the computational time was also higher in two phase. As seen in Figure~\ref{RQ6_T1}(a), the single-phase setting took $4.76$ \textit{seconds}. Two phase values ranged from $18.76$ \textit{seconds} to $20.18$ \textit{seconds}, with the highest value recorded at split ratio $0.1$ and timestep $2$. Across all algorithms, the two phase method demands significantly more computation time than the single-phase version. For algorithms with low single-phase times like \textbf{Random, HD, CC,} and \textbf{DD} the two phase execution often takes ten times longer. In the case of already heavy algorithms like \textbf{SG} and \textbf{DG}, two phase execution further increases the time cost, which might limit their use in large-scale or real-time scenarios. Interestingly, \textbf{SD} sometimes shows comparable times in both settings, which indicates that its computational cost is sensitive to parameter selection. Lastly, \textbf{StG0.1} while slower in two phase is still relatively efficient compared to \textbf{SG} and \textbf{DG}.

\begin{figure}[htbp]
    \centering

    \begin{subfigure}[t]{0.3\linewidth}
        \centering
        \includegraphics[width=\linewidth]{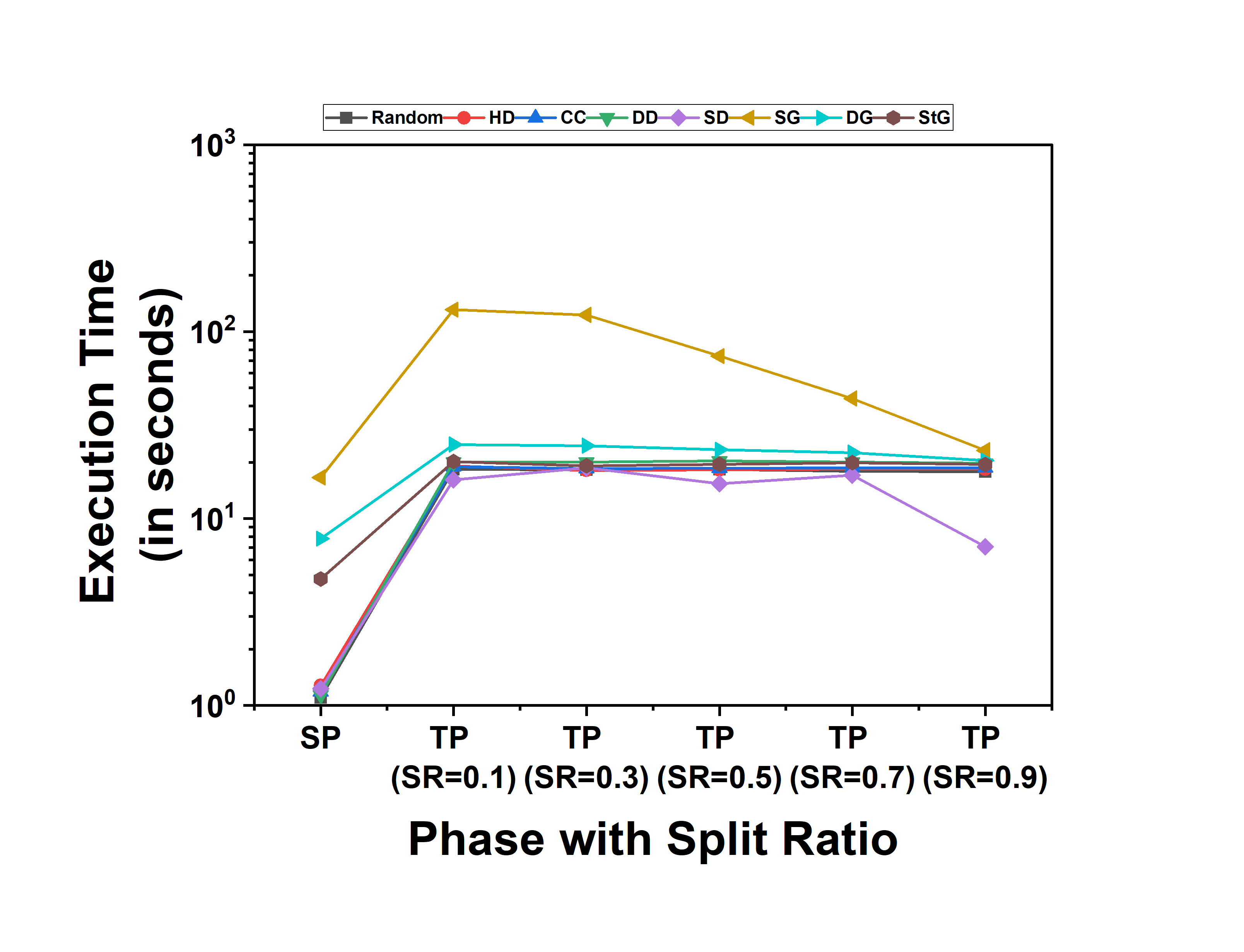}
        \caption{Budget 500}
    \end{subfigure}
    \hspace{0.05\linewidth}
    \begin{subfigure}[t]{0.3\linewidth}
        \centering
        \includegraphics[width=\linewidth]{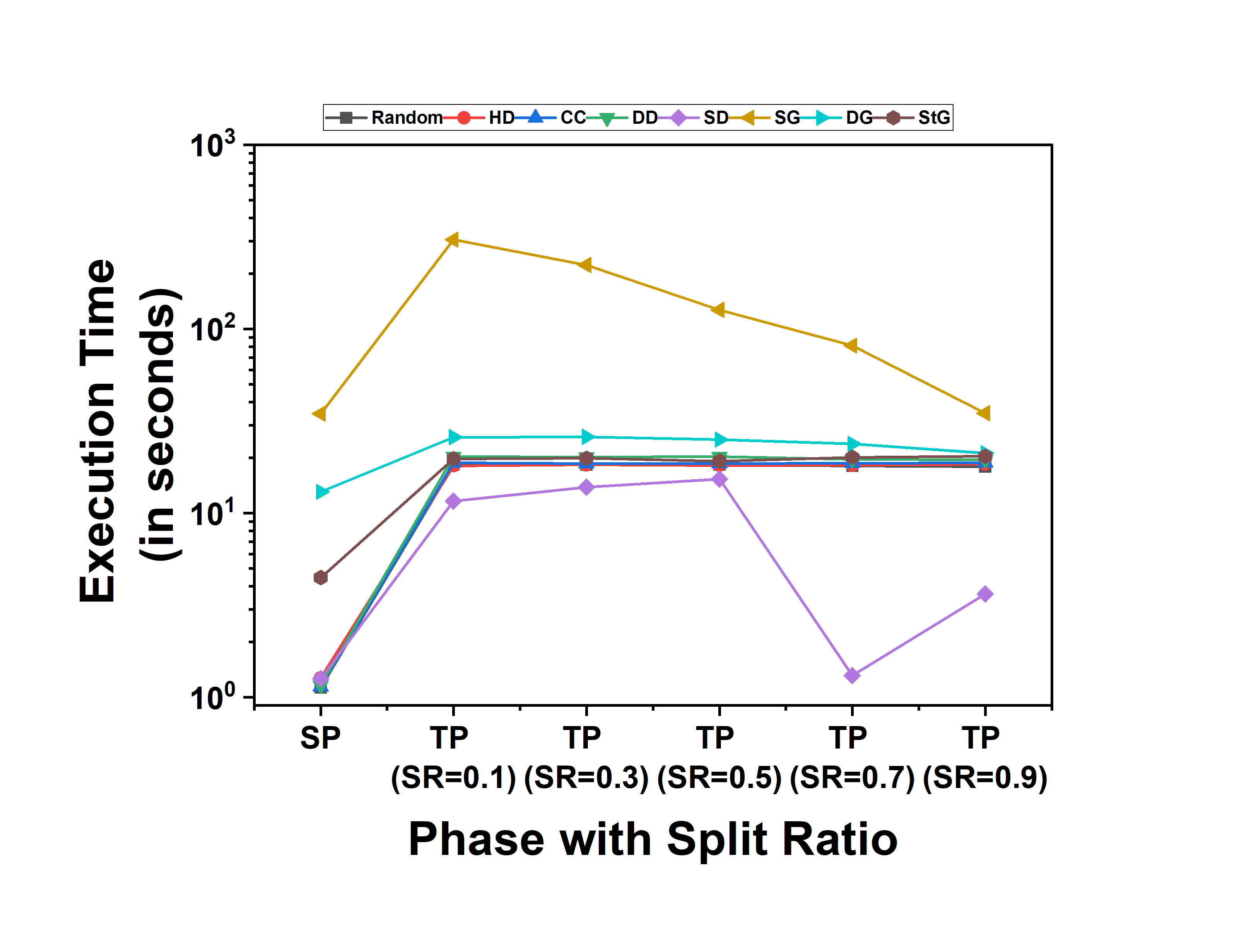}
        \caption{Budget 1000}
    \end{subfigure}

    \vspace{0.5cm}

    \begin{subfigure}[t]{0.3\linewidth}
        \centering
        \includegraphics[width=\linewidth]{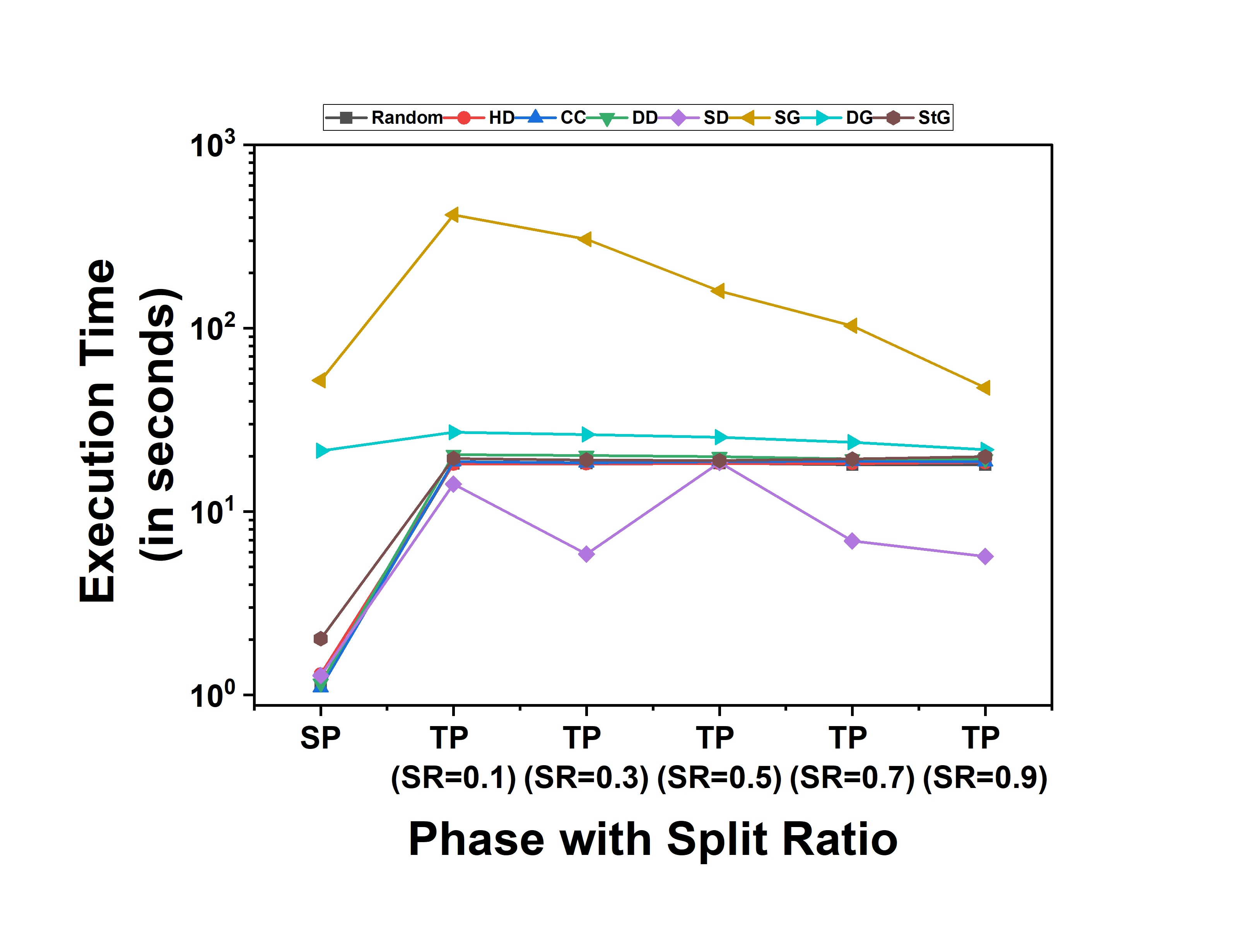}
        \caption{Budget 1500}
    \end{subfigure}
    \hfill
    \begin{subfigure}[t]{0.3\linewidth}
        \centering
        \includegraphics[width=\linewidth]{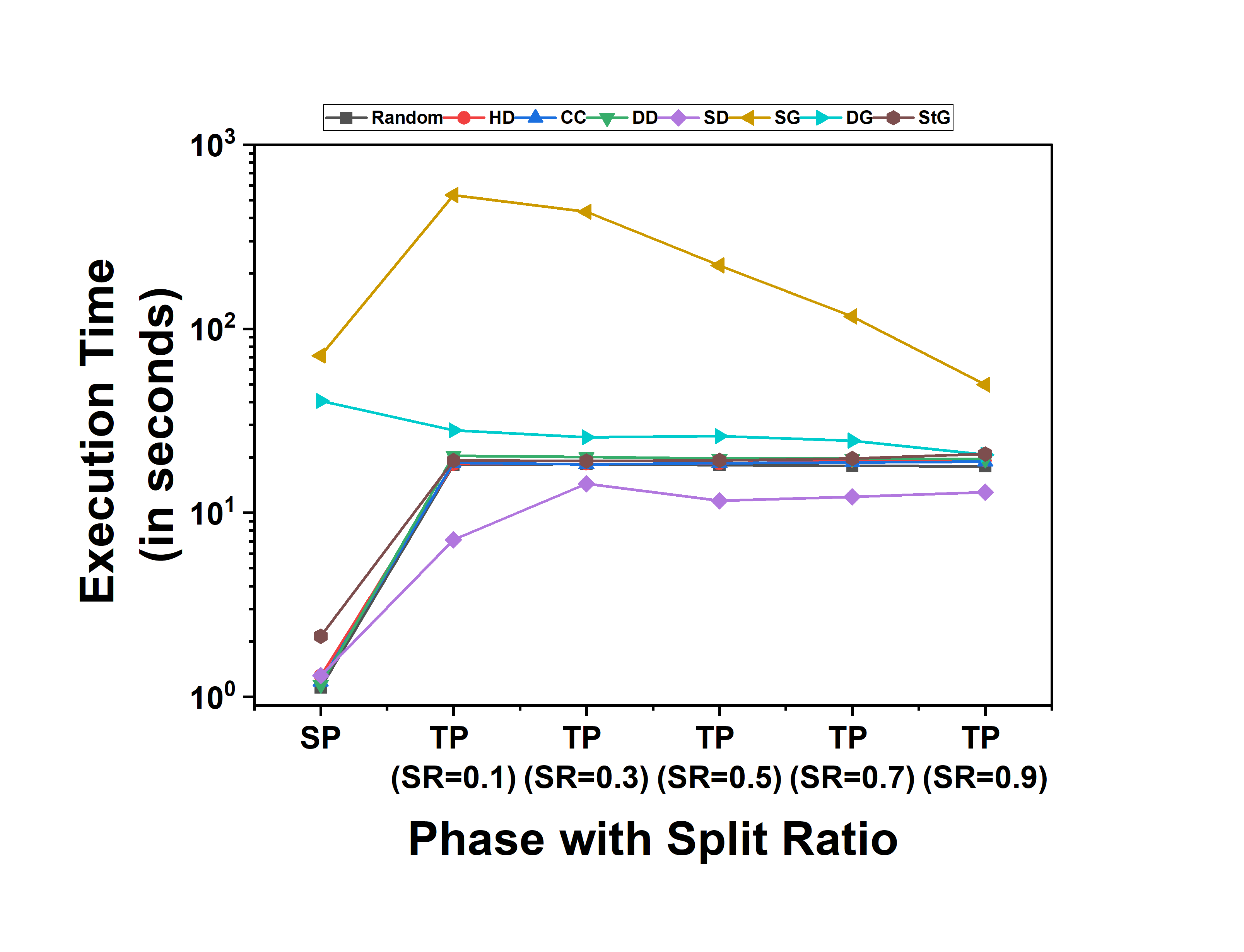}
        \caption{Budget 2000}
    \end{subfigure}
    \hfill
    \begin{subfigure}[t]{0.3\linewidth}
        \centering
        \includegraphics[width=\linewidth]{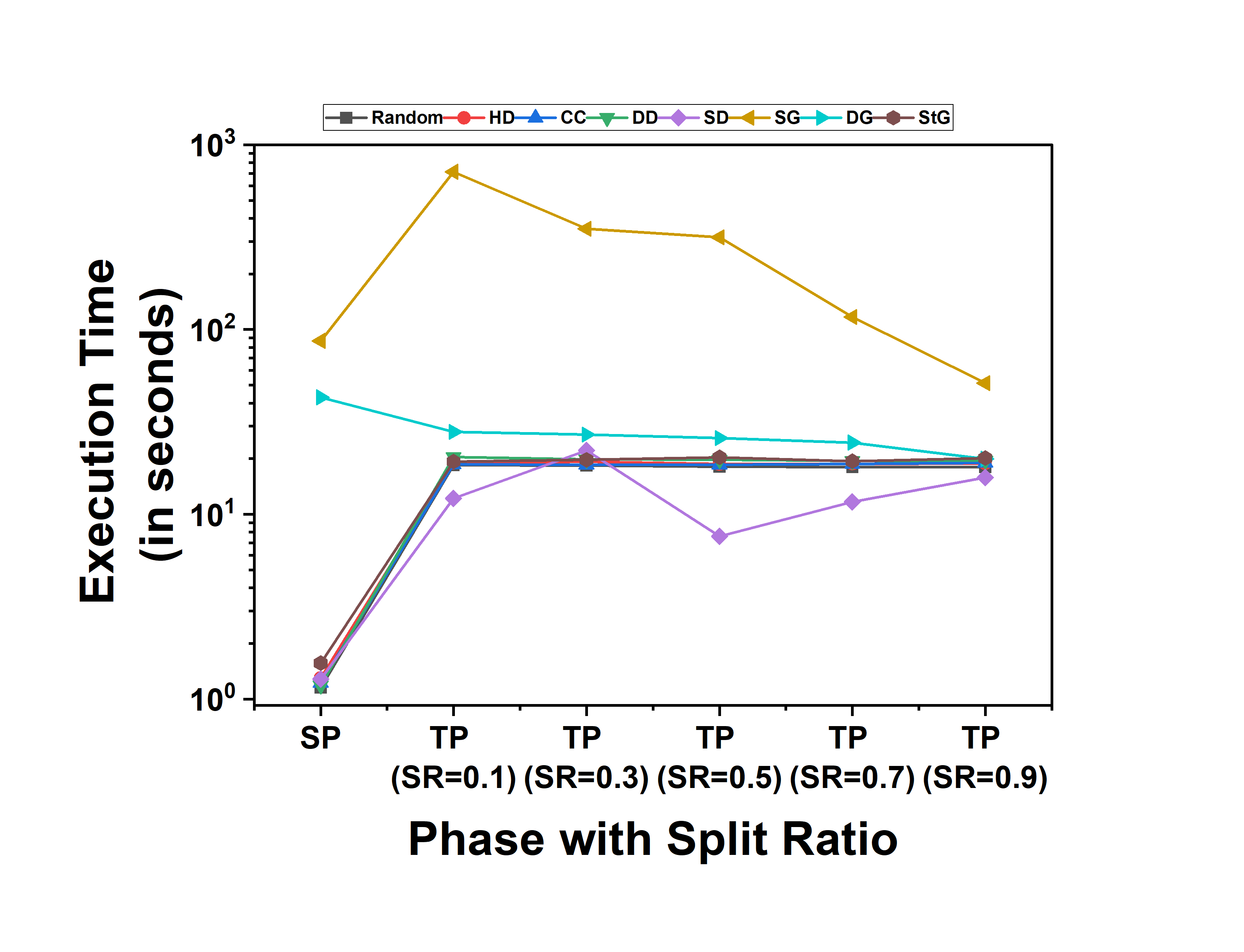}
        \caption{Budget 2500}
    \end{subfigure}

    \caption{Execution Time for Single Phase Vs. Two Phase across all Algorithms for Timestep 2, \textit{Email-Eu-Core} Dataset, Probability Setting - Trivalency}
    \label{RQ6_T1}
\end{figure}

\subsubsection{Impact of $\epsilon$ on Profit of Stochastic Single-Phase and Two phase Settings}

To understand the effect of the parameter $\epsilon$ on profit in the stochastic greedy algorithm for the \textit{LM} dataset, we compare the exact results of \textbf{StG0.01}, \textbf{StG0.1}, \textbf{StG0.3}, and \textbf{StG0.6} in both single-phase and two phase settings at a fixed budget of $500$ in Figure~\ref{Fig:RQ7LM_T1}. In the single-phase setting, the best result is observed for $\epsilon = 0.01$ with a profit of $9526.36$. This clearly outperforms $\epsilon = 0.1$ ($8320.35$), $\epsilon = 0.3$ ($8820.46$), and $\epsilon = 0.6$ ($7378.80$), confirming that lower $\epsilon$ values are more effective in the single-phase scenario. In the two phase setting, we observe variation across split ratios. For instance, at timestep $2$ and split ratio $0.7$, $\epsilon = 0.01$ achieves the highest profit of $11553.10$ in Figure~\ref{Fig:RQ7LM_T1}(m), outperforming all other variants. At the same configuration, $\epsilon = 0.6$ gives $10677.08$, $\epsilon = 0.1$ gives $10631.02$, and $\epsilon = 0.3$ performs the worst with $7301.4$. At timestep $4$ and split ratio $0.3$, $\epsilon = 0.01$ once again gives the best result of $10171.89$ (Figure~\ref{Fig:RQ7LM_T1}(e)), while $\epsilon = 0.6$ trails at $9969.78$, $\epsilon = 0.1$ gives $8326.98$, and $\epsilon = 0.3$ is lowest with $8109.62$. At timestep $8$, the trend continues. For example, with split ratio $0.9$, $\epsilon = 0.01$ achieves a peak profit of $11709.75$, while $\epsilon = 0.6$ also does well at $7435.43$. However, $\epsilon = 0.1$ and $\epsilon = 0.3$ remain behind at $10343.97$ and $9362.31$ respectively. These instance-based results confirm that $\epsilon = 0.01$ consistently leads to higher profits across split ratios and timesteps in the two phase setting. However, since lower $\epsilon$ leads to higher computational cost due to greedy behavior and repeated simulations, it may not be the best choice for large networks. In such cases, $\epsilon = 0.6$ appears as a reasonable trade-off as it gives competitive profits while offering better runtime efficiency.
\begin{figure}[htbp]
\centering
\captionsetup[sub]{font=footnotesize}
\begin{tabular}{cccc}
    \begin{subfigure}[t]{0.22\textwidth}
        \includegraphics[width=\linewidth]{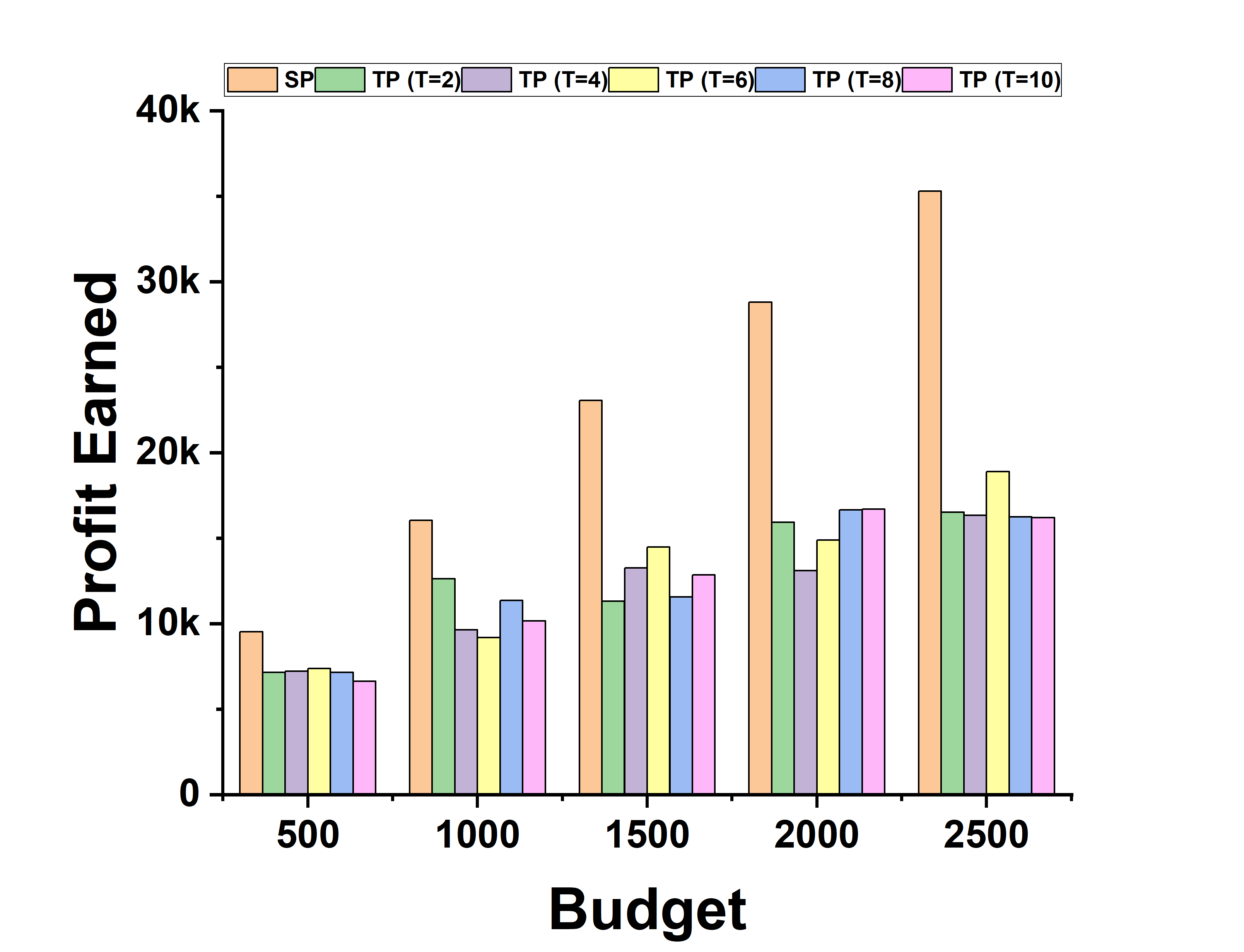}
        \caption{Epsilon=0.01, Split Ratio=10\%}
    \end{subfigure} &
    \begin{subfigure}[t]{0.22\textwidth}
        \includegraphics[width=\linewidth]{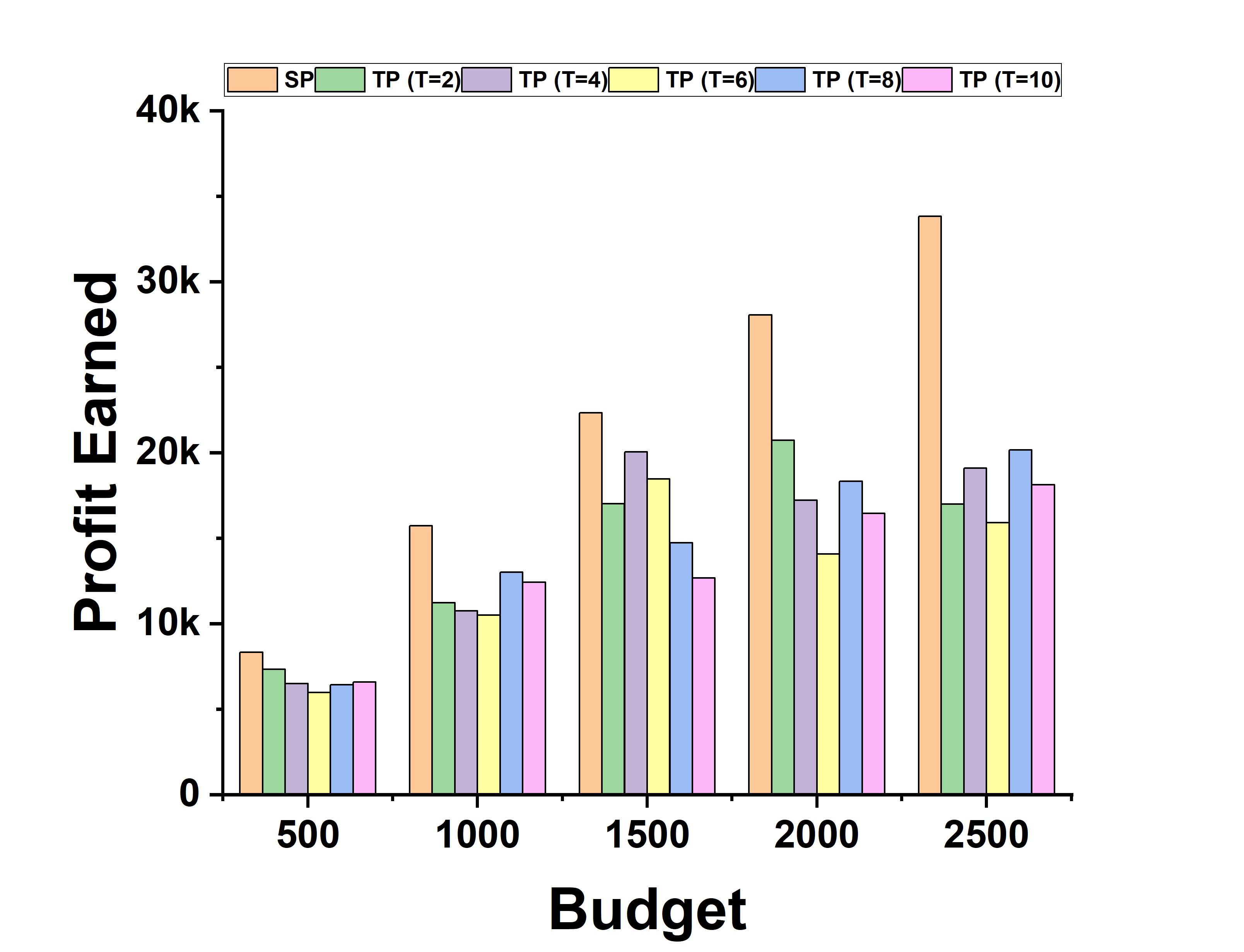}
        \caption{Epsilon=0.1, Split Ratio=10\%}
    \end{subfigure} &
    \begin{subfigure}[t]{0.22\textwidth}
        \includegraphics[width=\linewidth]{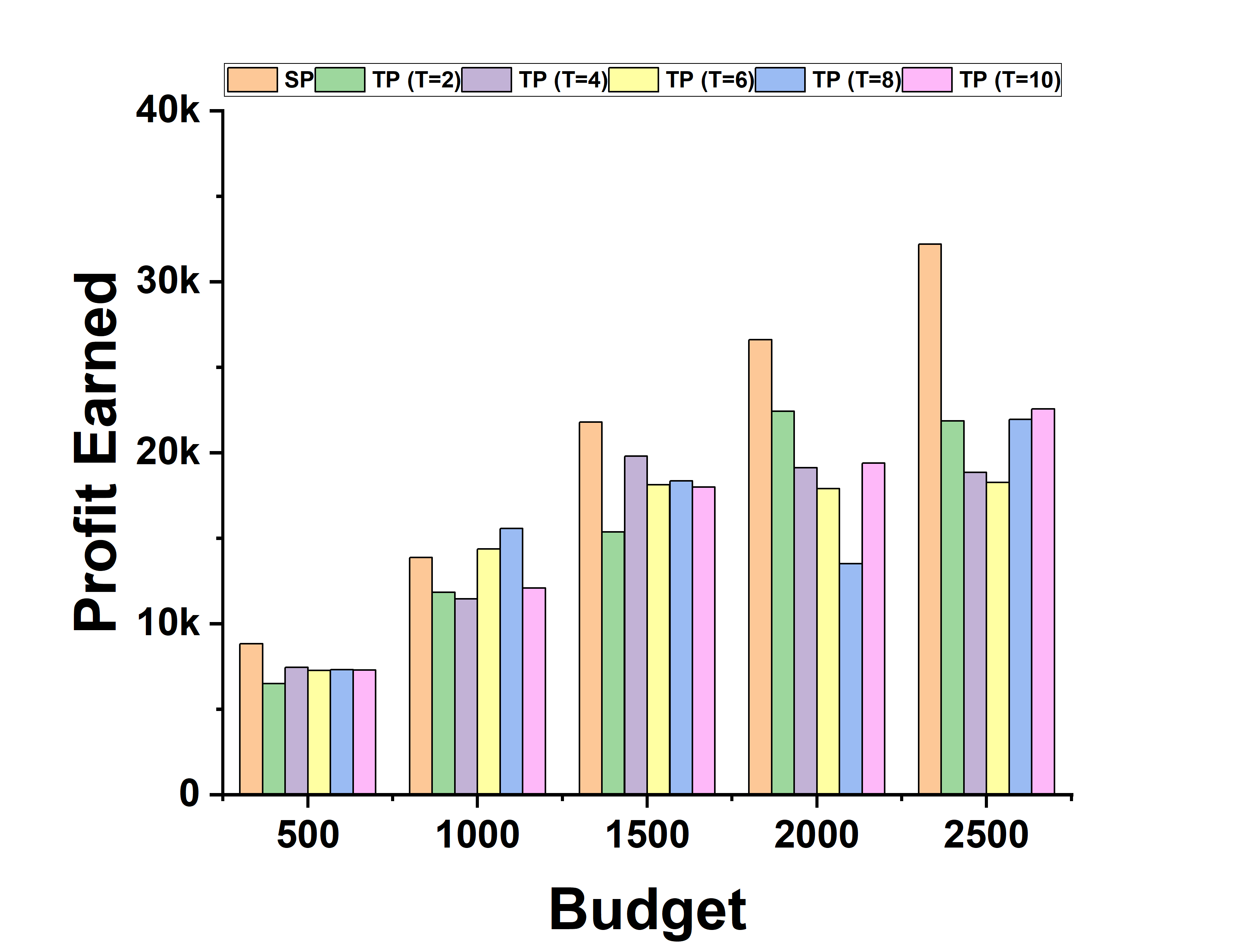}
        \caption{Epsilon=0.3, Split Ratio=10\%}
    \end{subfigure} &
    \begin{subfigure}[t]{0.22\textwidth}
        \includegraphics[width=\linewidth]{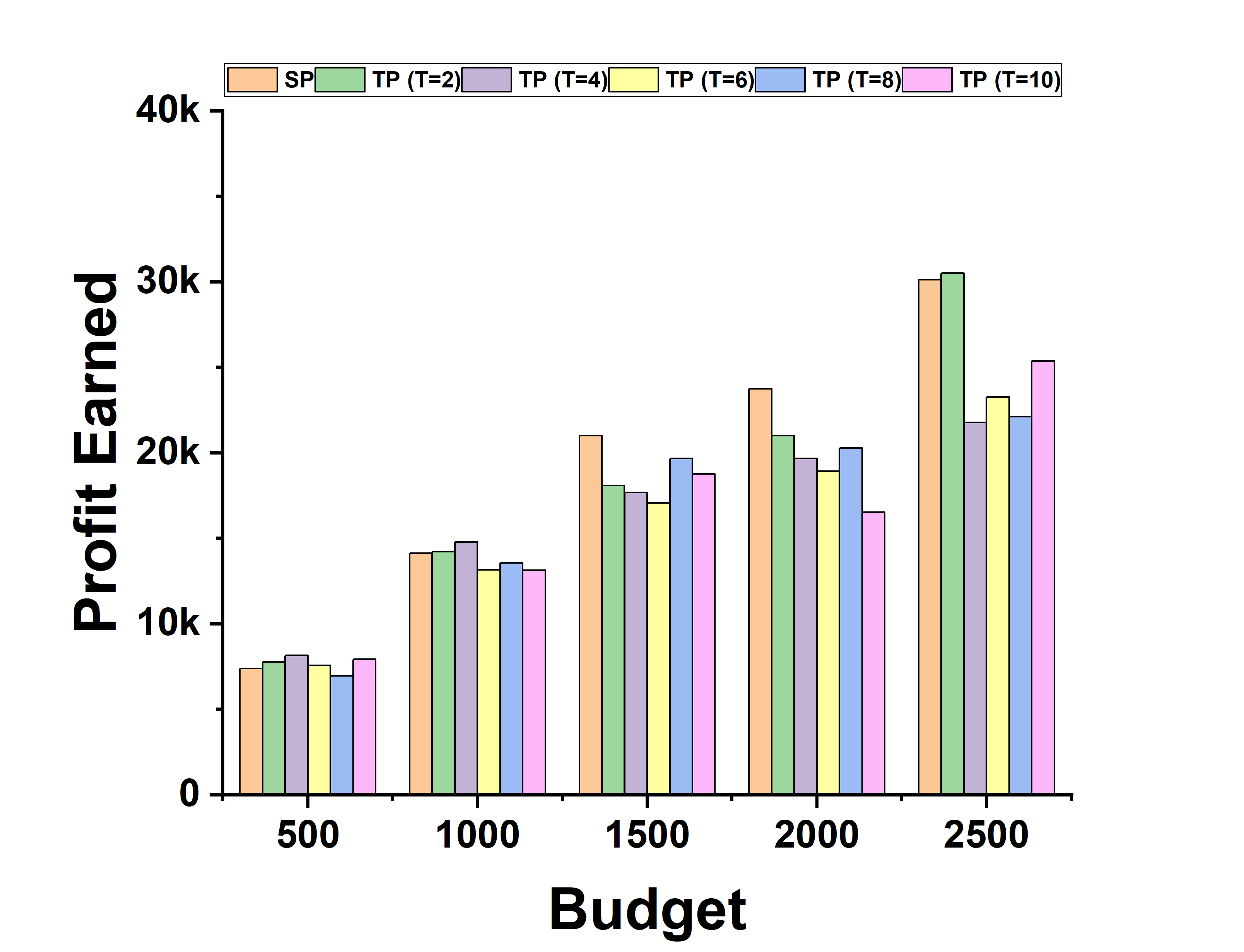}
        \caption{Epsilon=0.6, Split Ratio=10\%}
    \end{subfigure} \\[6pt]

    \begin{subfigure}[t]{0.22\textwidth}
        \includegraphics[width=\linewidth]{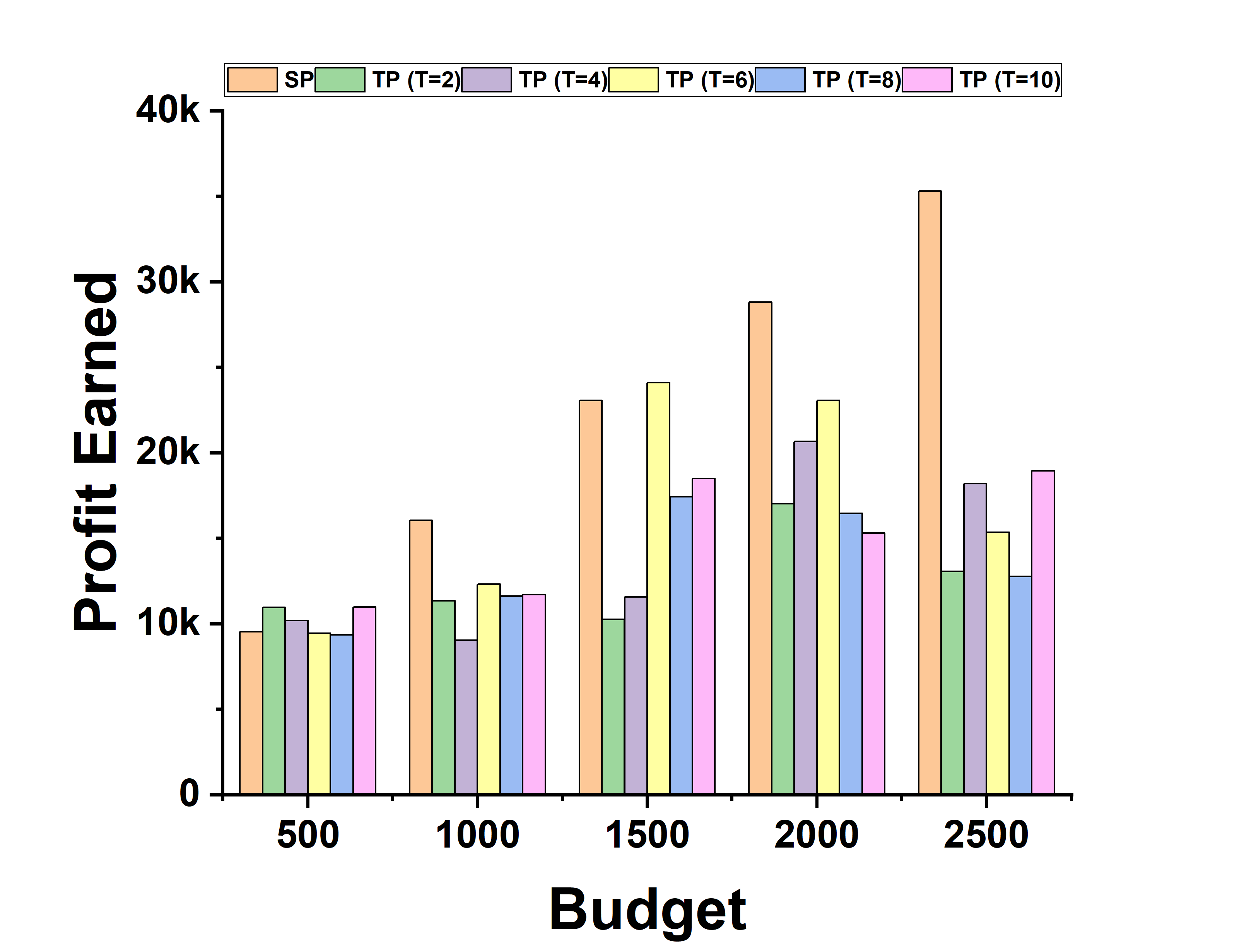}
        \caption{Epsilon=0.01, Split Ratio=30\%}
    \end{subfigure} &
    \begin{subfigure}[t]{0.22\textwidth}
        \includegraphics[width=\linewidth]{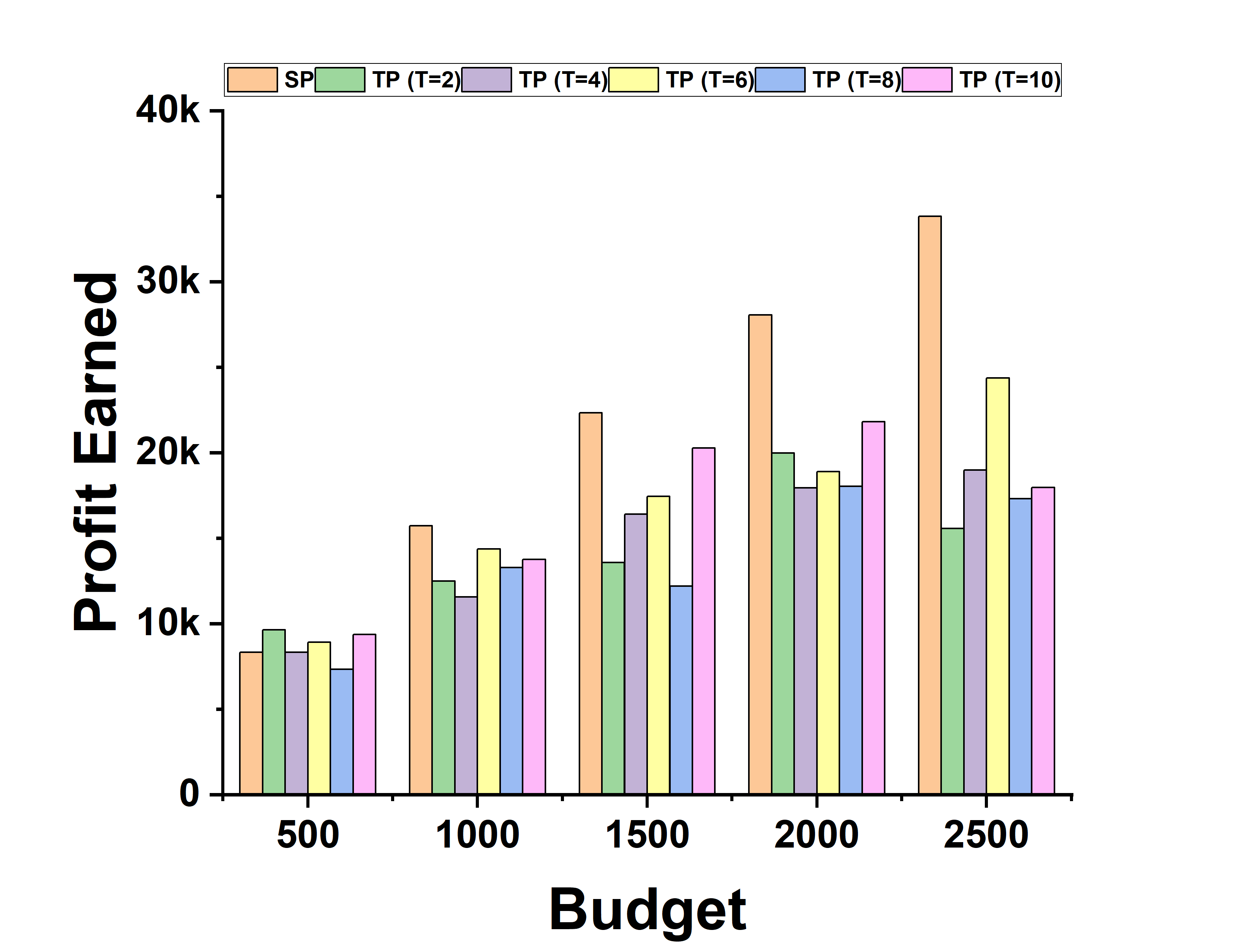}
        \caption{Epsilon=0.1, Split Ratio=30\%}
    \end{subfigure} &
    \begin{subfigure}[t]{0.22\textwidth}
        \includegraphics[width=\linewidth]{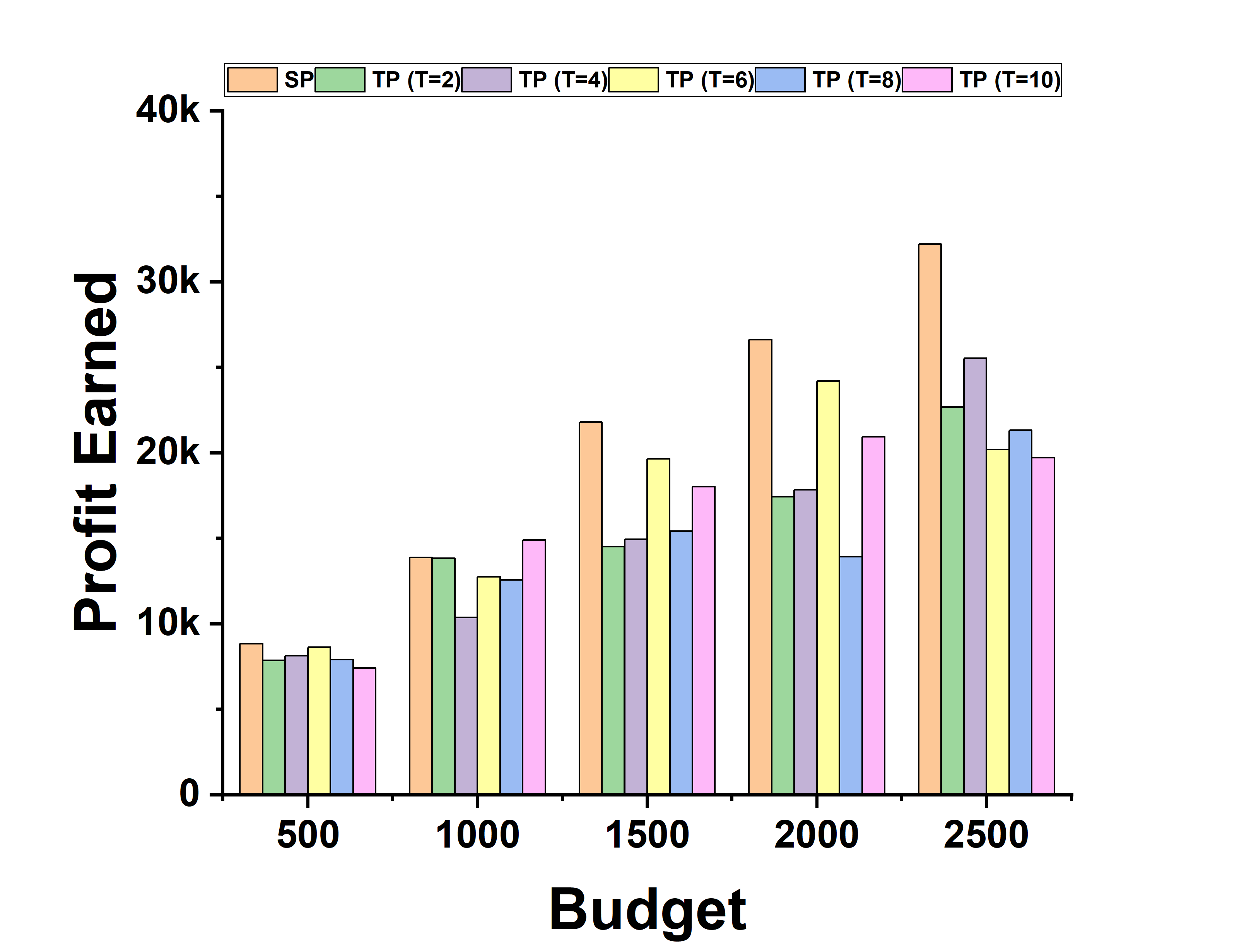}
        \caption{Epsilon=0.3, Split Ratio=30\%}
    \end{subfigure} &
    \begin{subfigure}[t]{0.22\textwidth}
        \includegraphics[width=\linewidth]{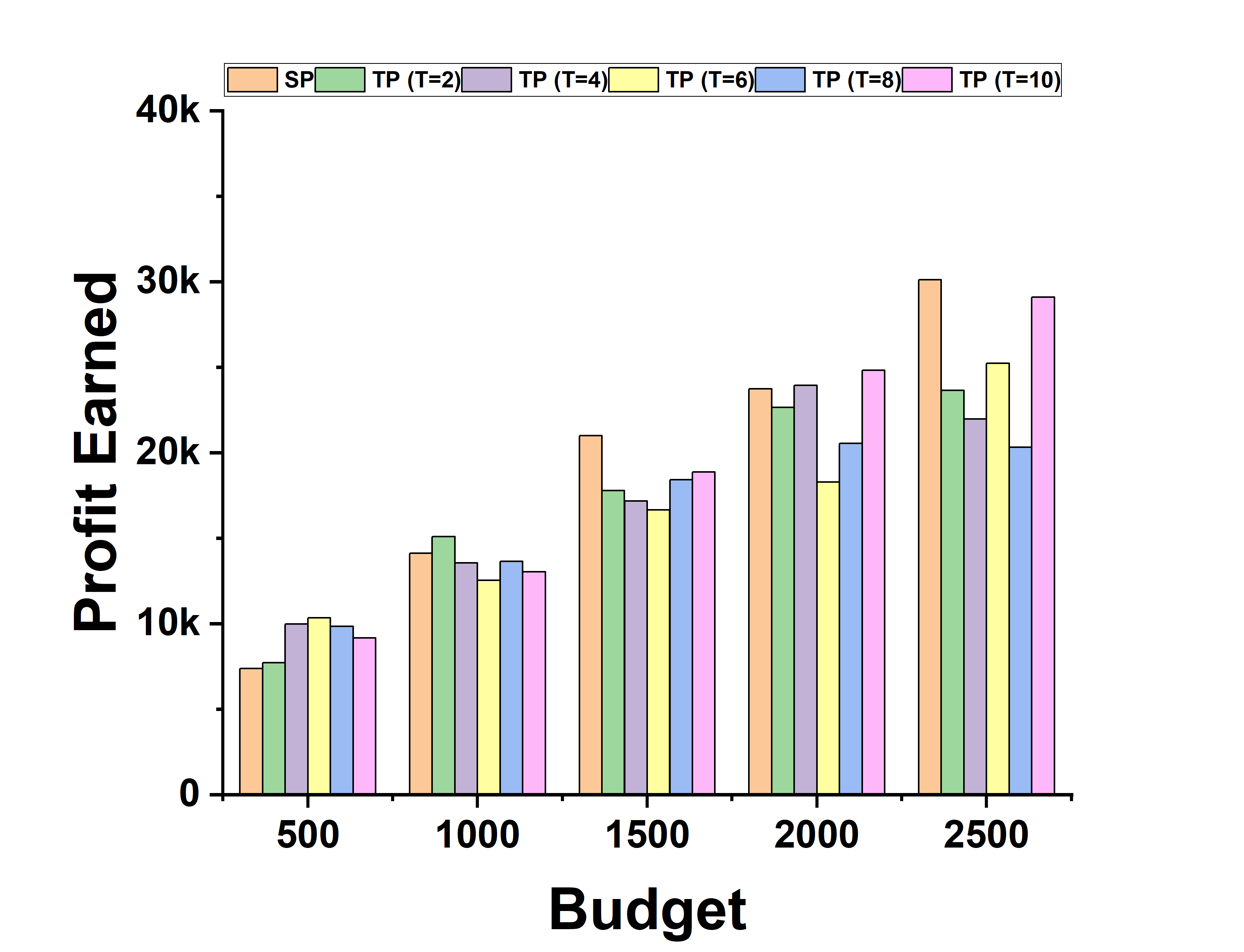}
        \caption{Epsilon=0.6, Split Ratio=30\%}
    \end{subfigure} \\[6pt]

    \begin{subfigure}[t]{0.22\textwidth}
        \includegraphics[width=\linewidth]{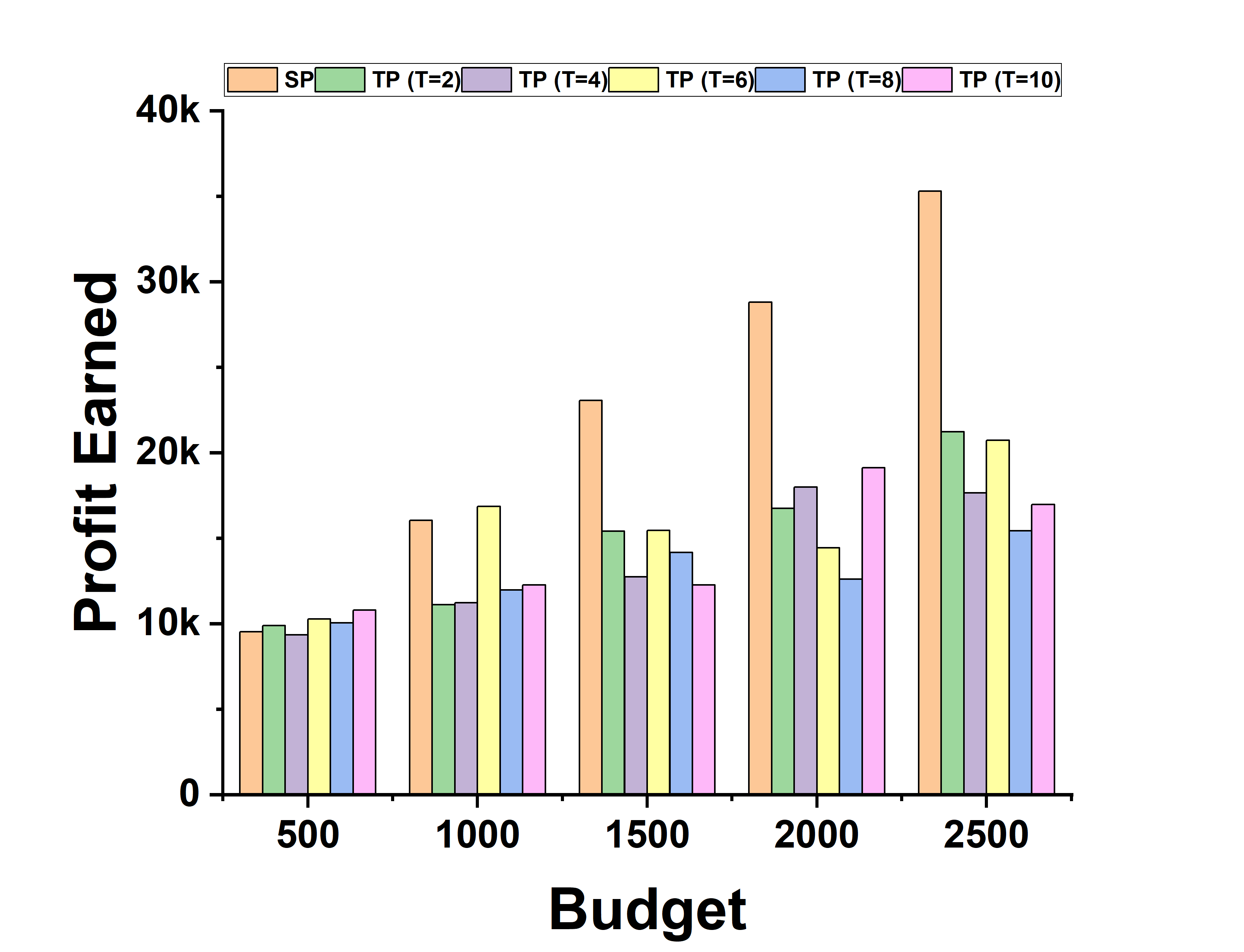}
        \caption{Epsilon=0.01, Split Ratio=50\%}
    \end{subfigure} &
    \begin{subfigure}[t]{0.22\textwidth}
        \includegraphics[width=\linewidth]{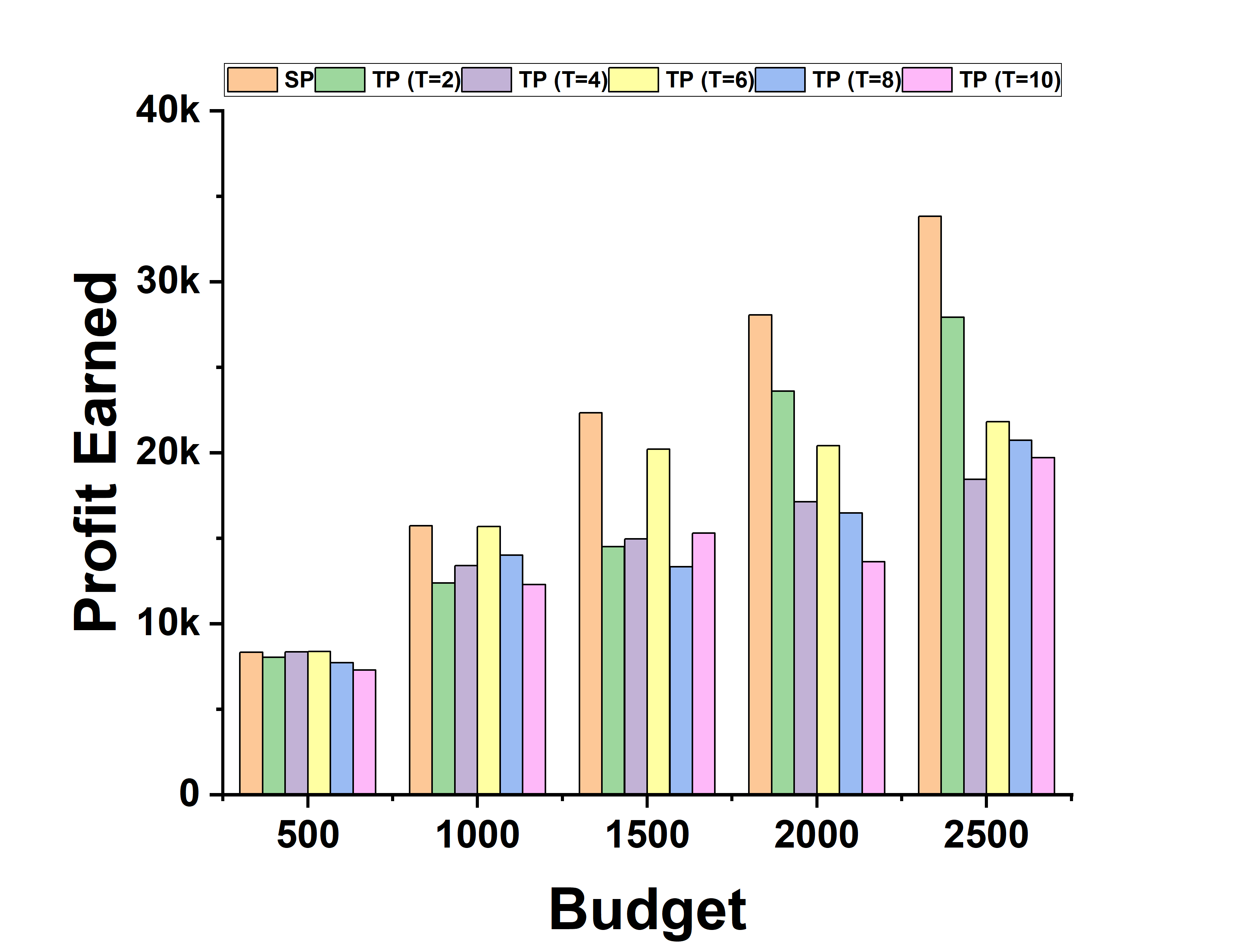}
        \caption{Epsilon=0.1, Split Ratio=50\%}
    \end{subfigure} &
    \begin{subfigure}[t]{0.22\textwidth}
        \includegraphics[width=\linewidth]{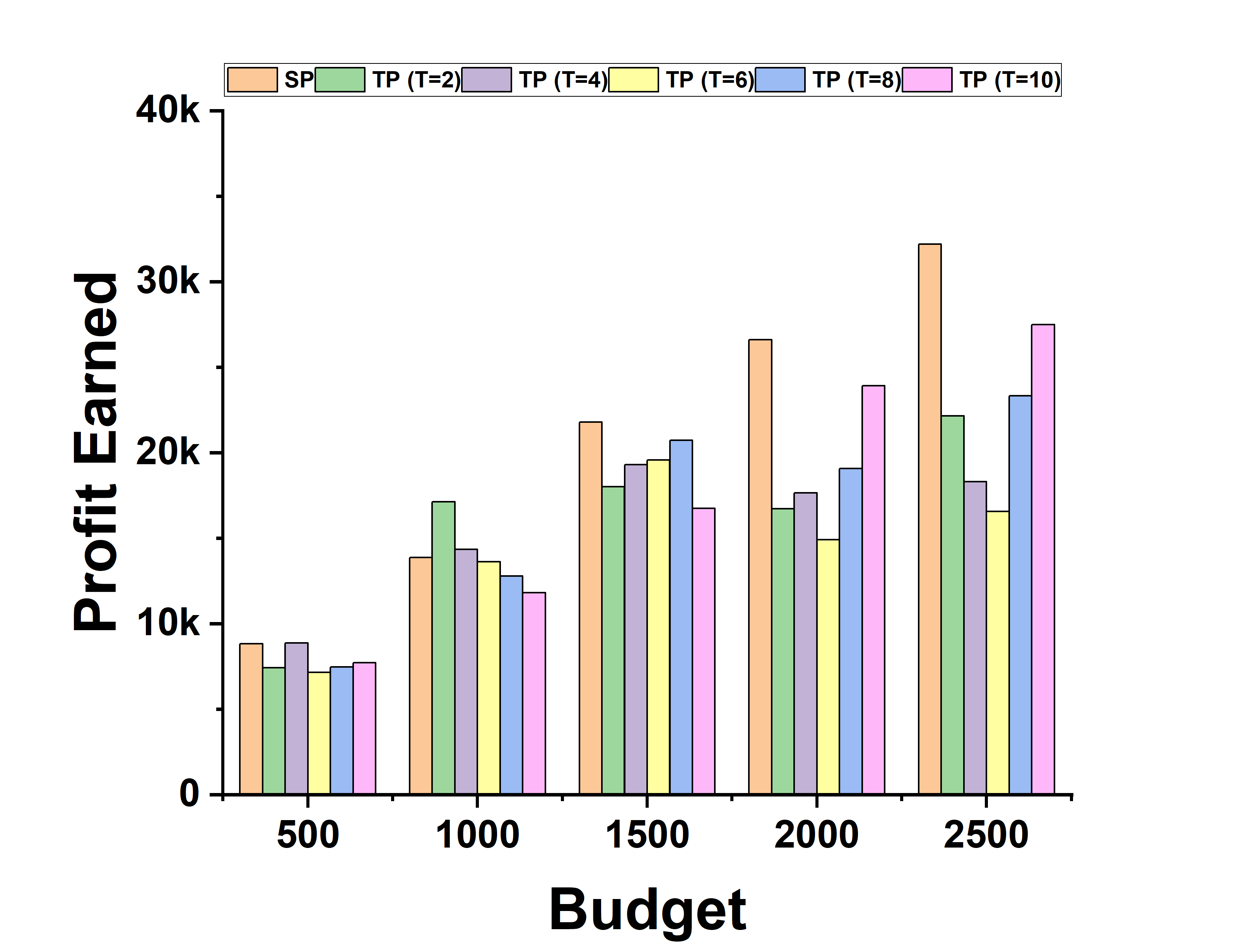}
        \caption{Epsilon=0.3, Split Ratio=50\%}
    \end{subfigure} &
    \begin{subfigure}[t]{0.22\textwidth}
        \includegraphics[width=\linewidth]{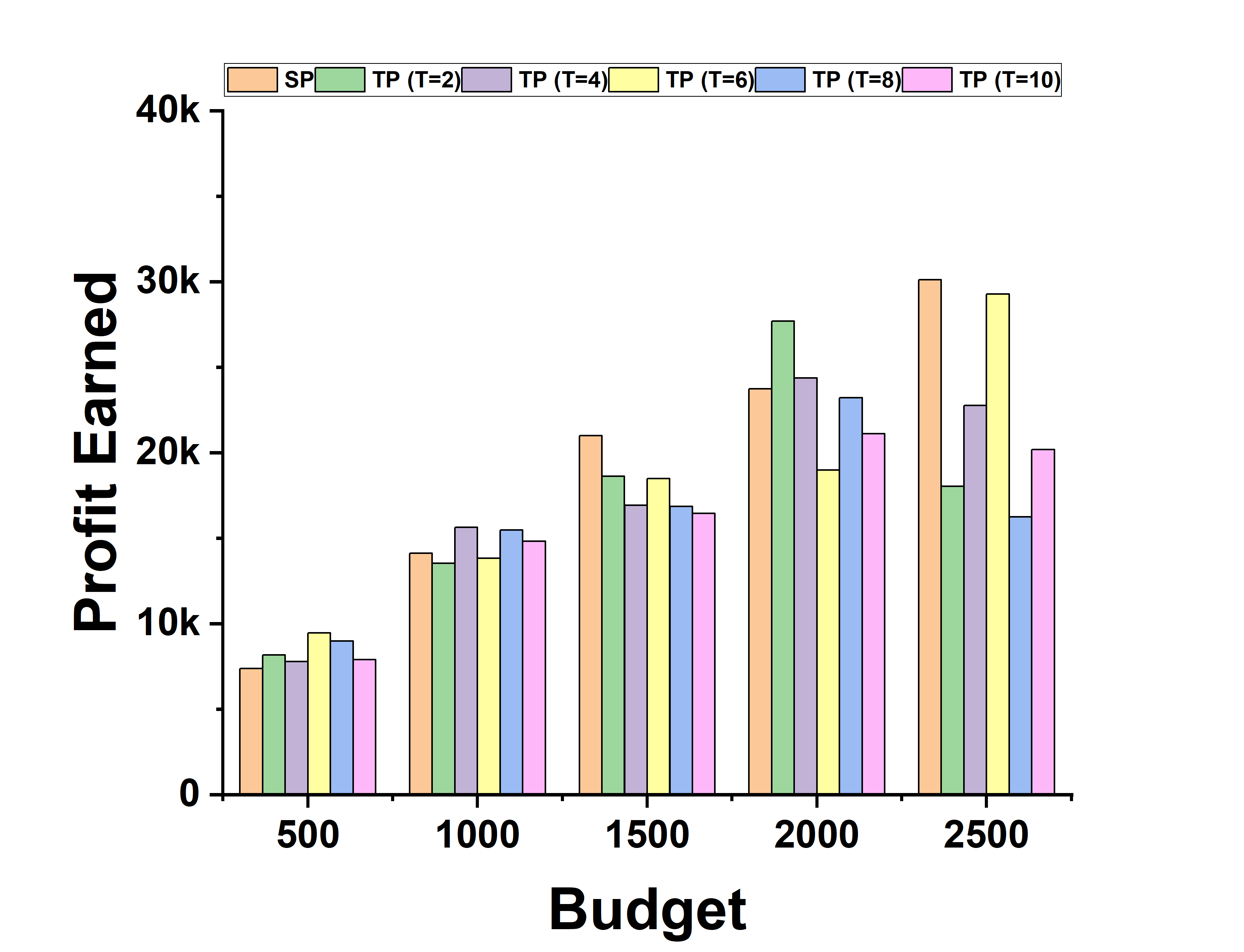}
        \caption{Epsilon=0.6, Split Ratio=50\%}
    \end{subfigure} \\[6pt]

    \begin{subfigure}[t]{0.22\textwidth}
        \includegraphics[width=\linewidth]{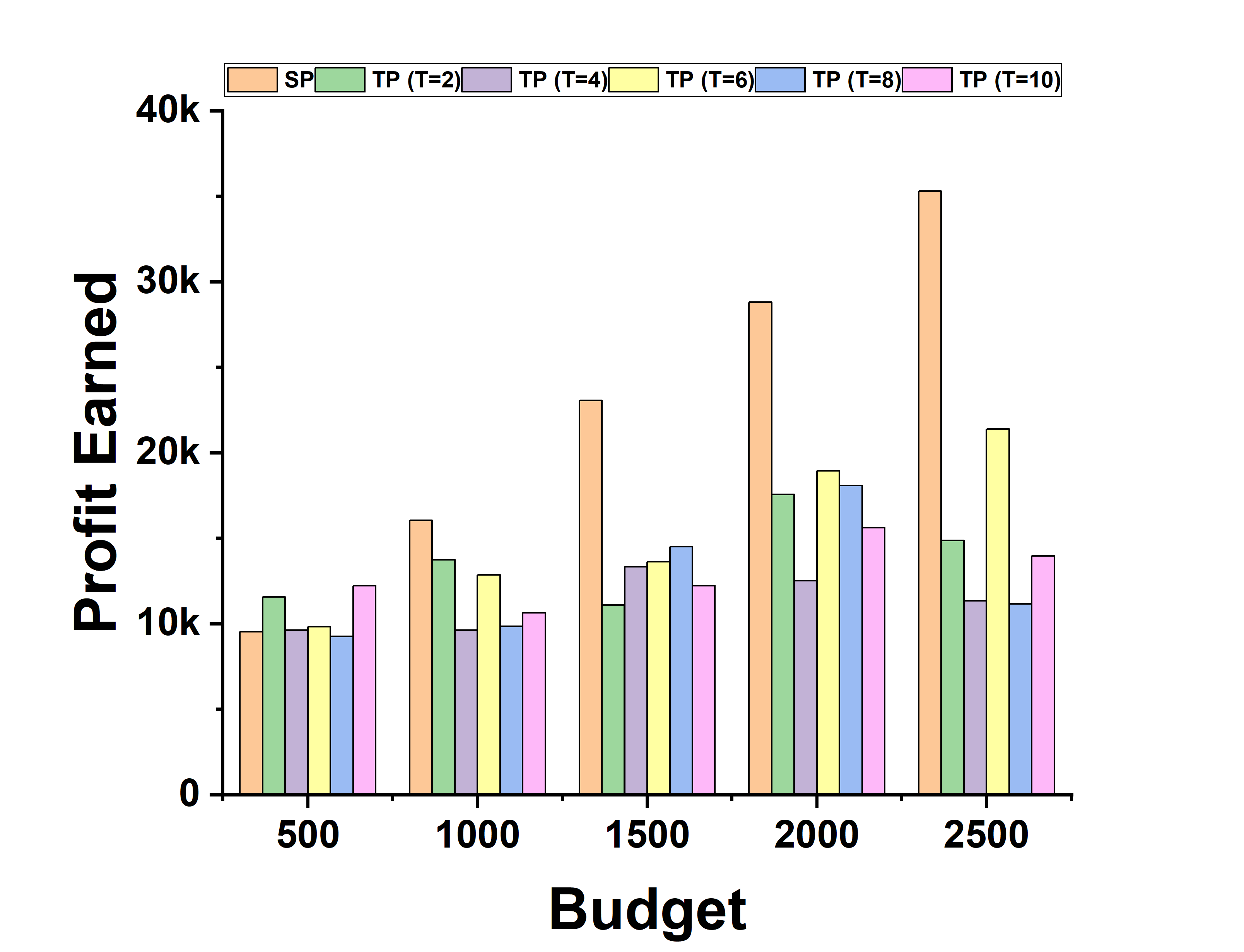}
        \caption{Epsilon=0.01, Split Ratio=70\%}
    \end{subfigure} &
    \begin{subfigure}[t]{0.22\textwidth}
        \includegraphics[width=\linewidth]{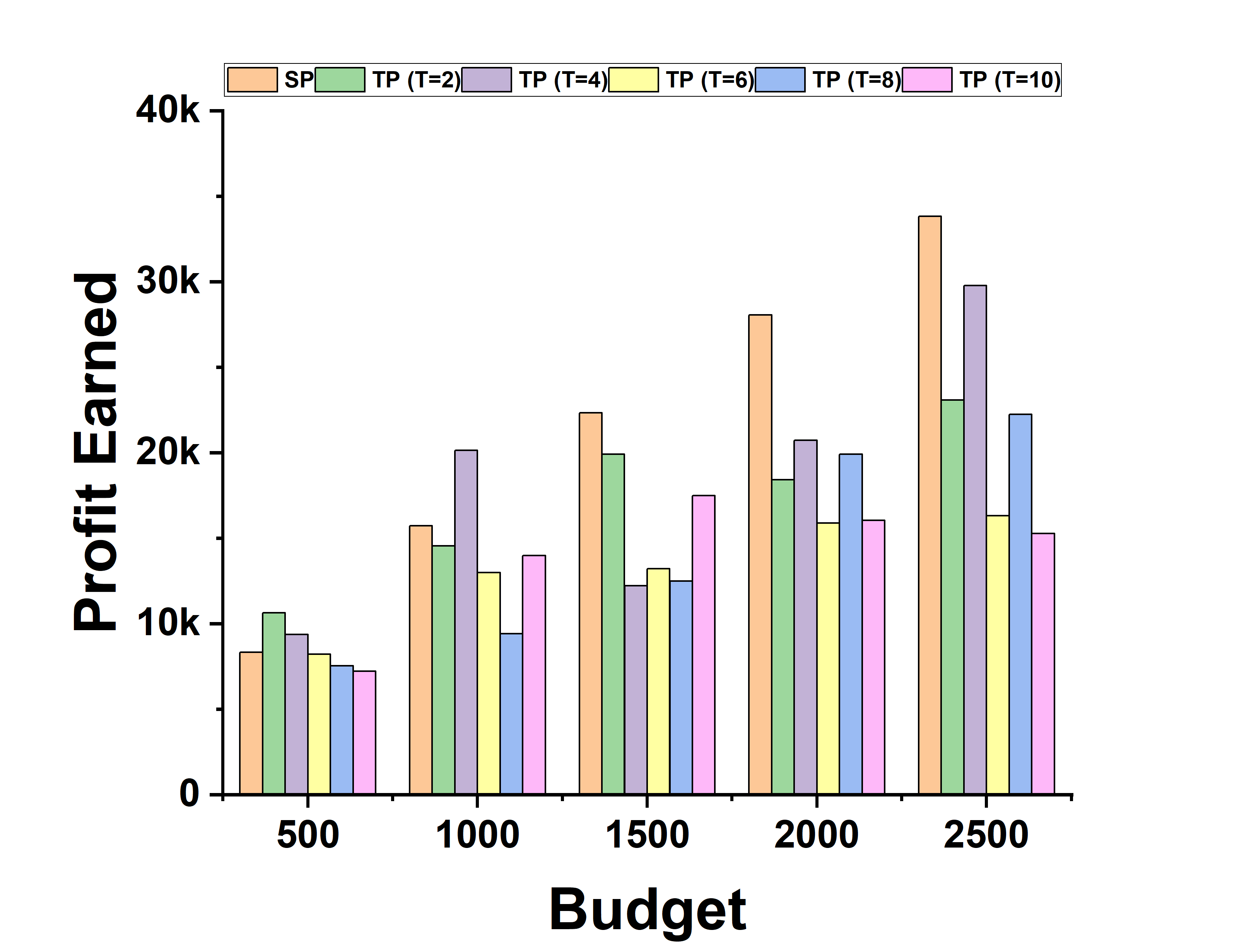}
        \caption{Epsilon=0.1, Split Ratio=70\%}
    \end{subfigure} &
    \begin{subfigure}[t]{0.22\textwidth}
        \includegraphics[width=\linewidth]{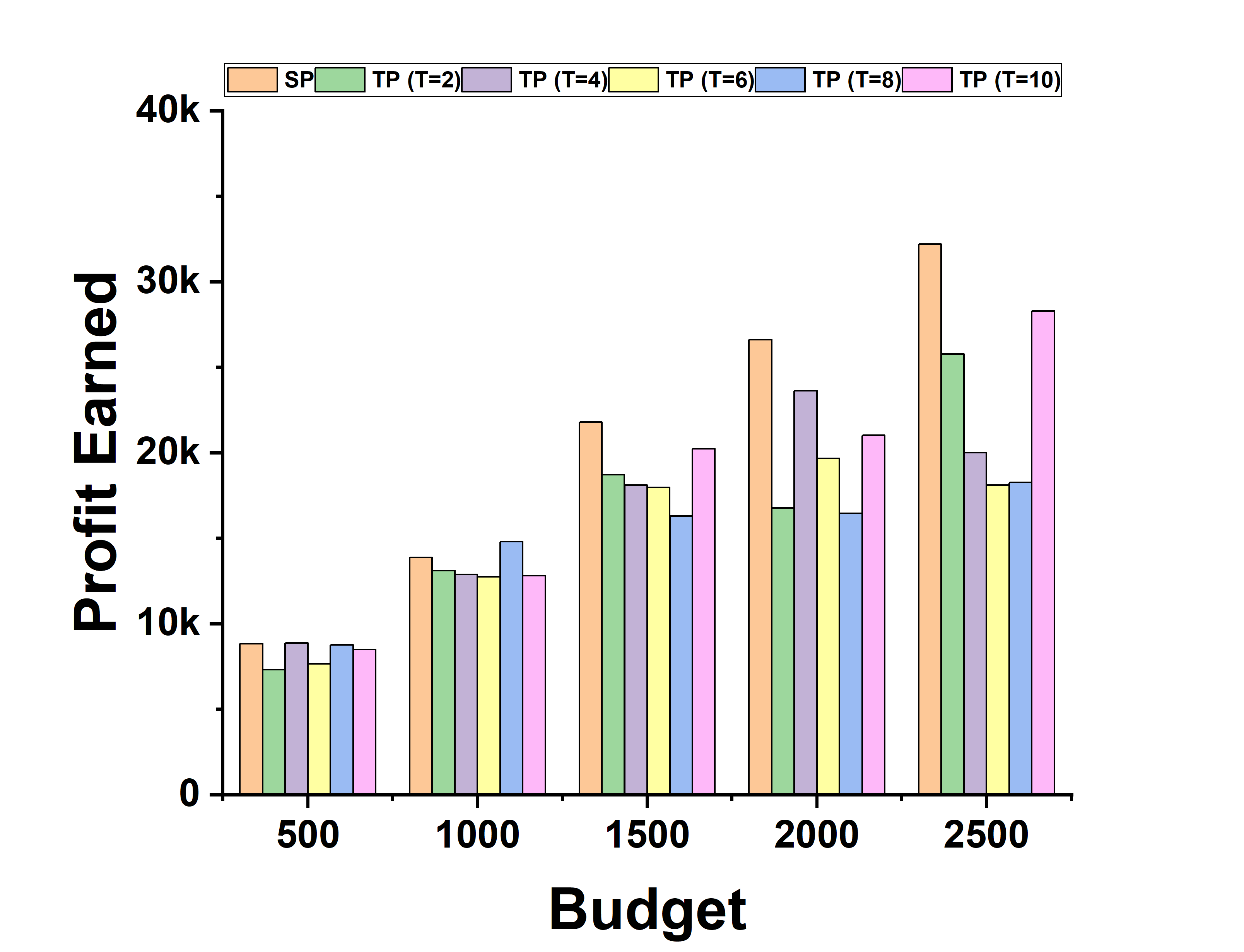}
        \caption{Epsilon=0.3, Split Ratio=70\%}
    \end{subfigure} &
    \begin{subfigure}[t]{0.22\textwidth}
        \includegraphics[width=\linewidth]{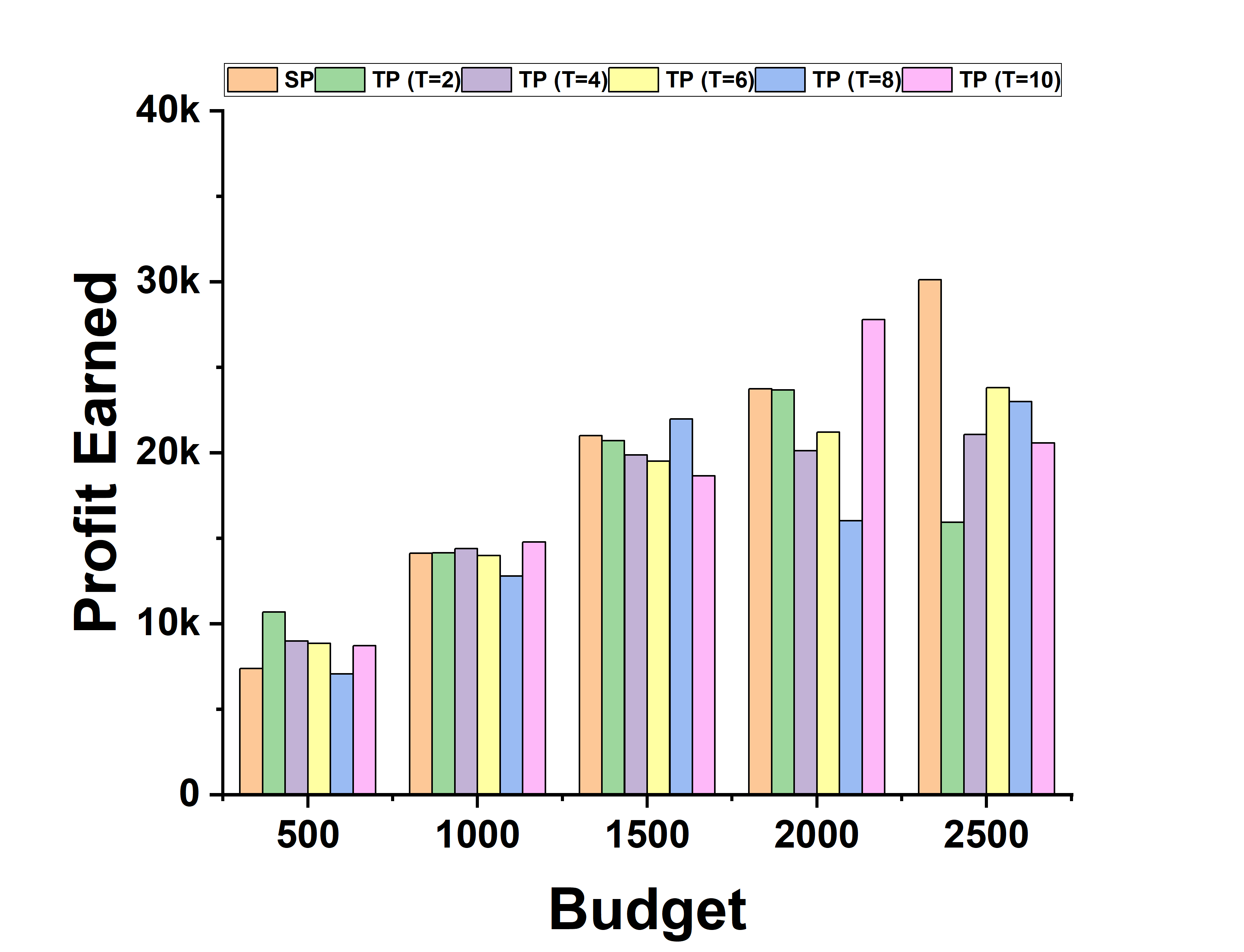}
        \caption{Epsilon=0.6, Split Ratio=70\%}
    \end{subfigure} \\[6pt]

    \begin{subfigure}[t]{0.22\textwidth}
        \includegraphics[width=\linewidth]{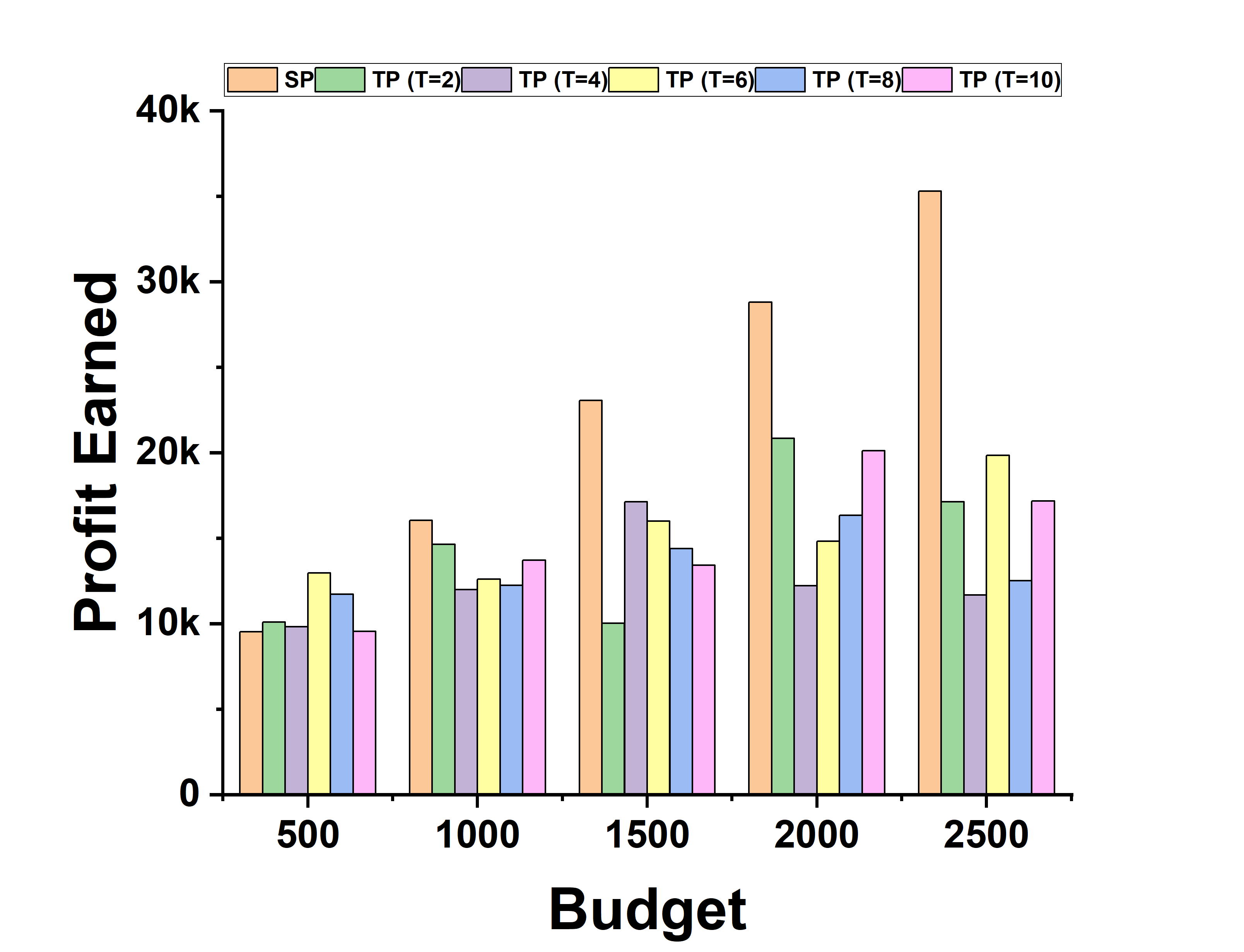}
        \caption{Epsilon=0.01, Split Ratio=90\%}
    \end{subfigure} &
    \begin{subfigure}[t]{0.22\textwidth}
        \includegraphics[width=\linewidth]{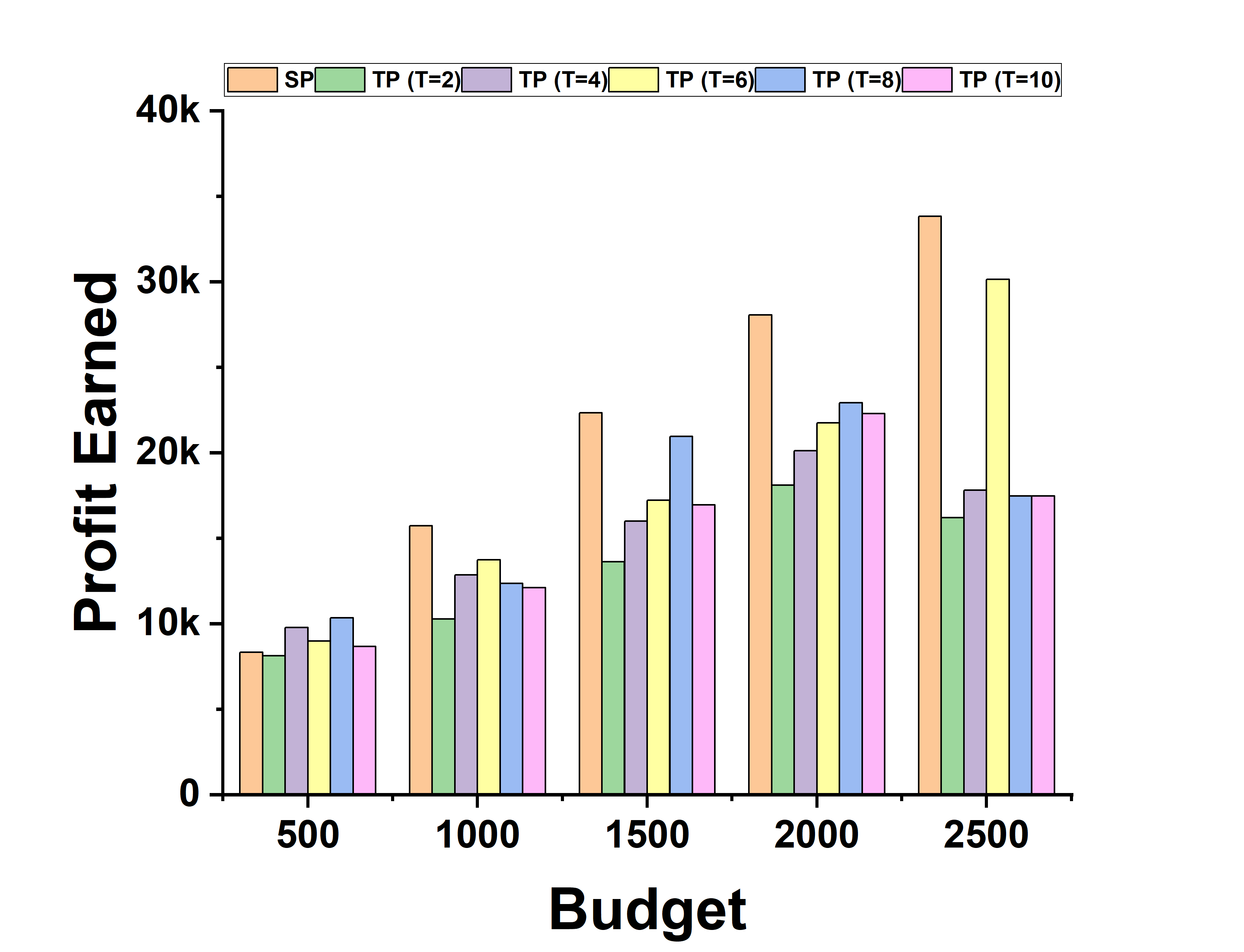}
        \caption{Epsilon=0.1, Split Ratio=90\%}
    \end{subfigure} &
    \begin{subfigure}[t]{0.22\textwidth}
        \includegraphics[width=\linewidth]{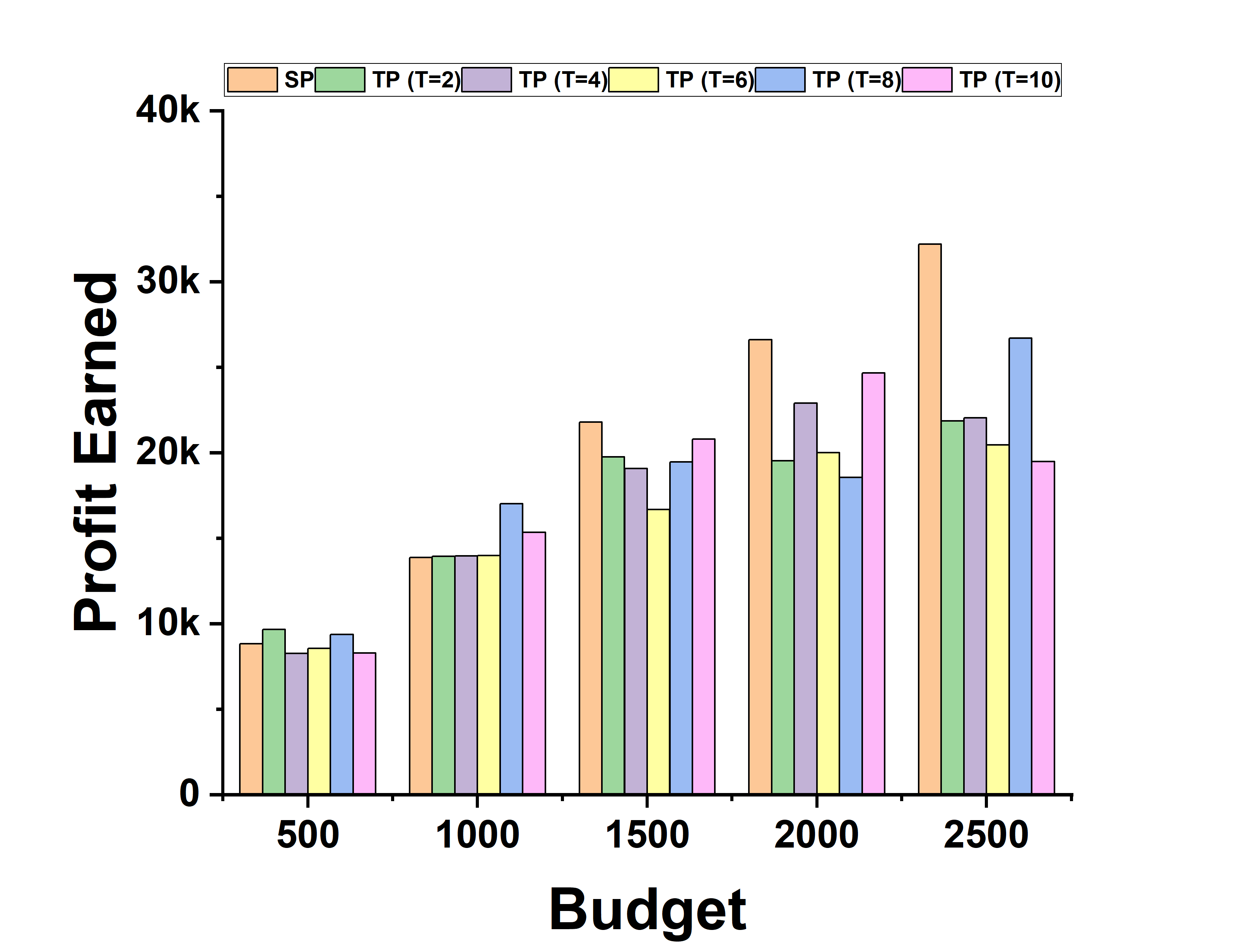}
        \caption{Epsilon=0.3, Split Ratio=90\%}
    \end{subfigure} &
    \begin{subfigure}[t]{0.22\textwidth}
        \includegraphics[width=\linewidth]{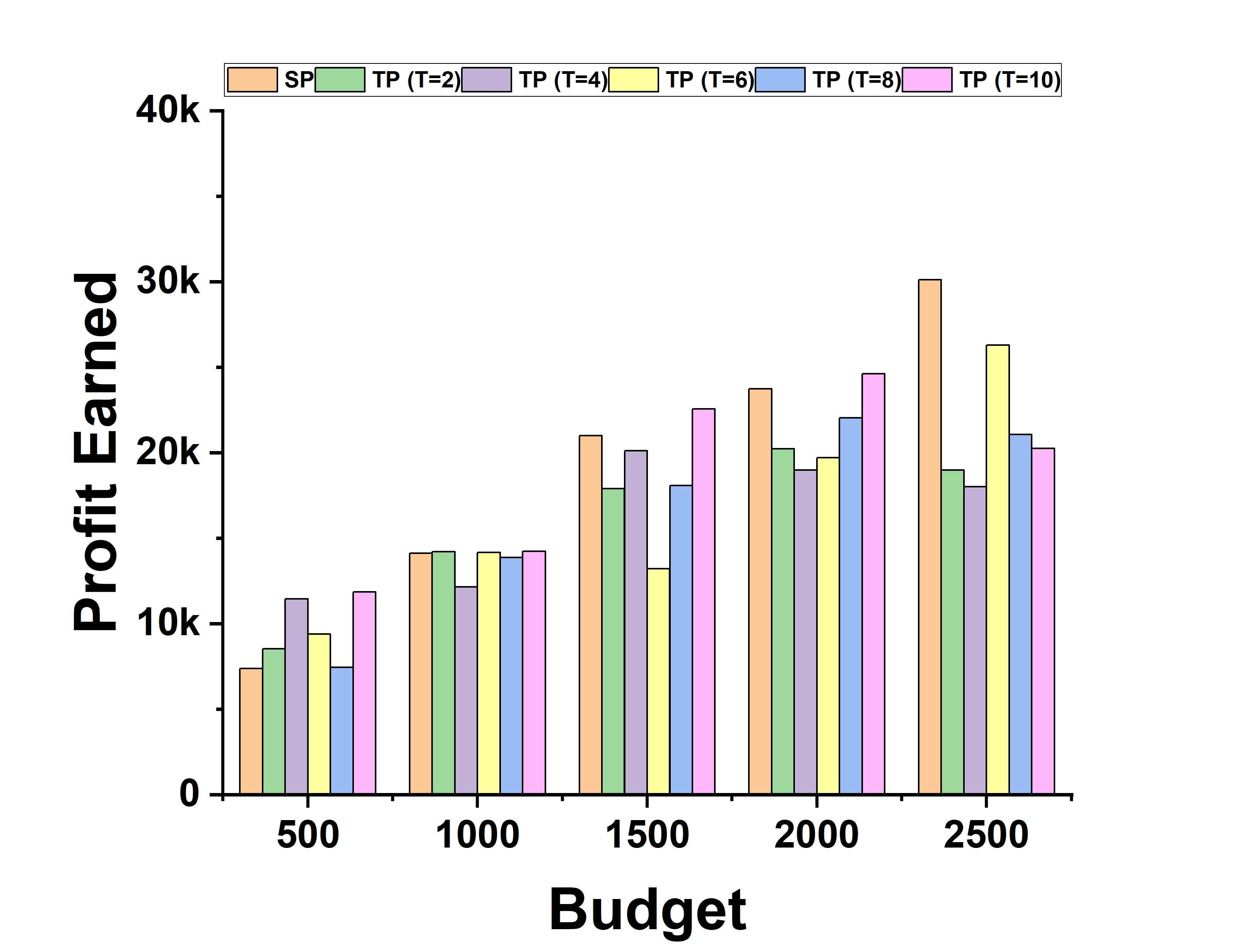}
        \caption{Epsilon=0.6, Split Ratio=90\%}
    \end{subfigure}
\end{tabular}
\caption{Profit Earned in Single Phase Vs.\ Two Phase for different Epsilon values for Stochastic Greedy, Probability Setting - Trivalency, \textit{LM} Dataset}
\label{Fig:RQ7LM_T1}
\end{figure}

To analyze the effect of the $\epsilon$ parameter on the profit earned by stochastic greedy algorithms, we examine the exact profits of \textbf{StG0.01}, \textbf{StG0.1}, \textbf{StG0.3}, and \textbf{StG0.6} on the \textit{Email-Eu-Core} dataset in both single-phase and two phase settings for a fixed budget $500$. In the single-phase, all variants perform similarly, with $\epsilon = 0.6$ achieving the highest profit of $324362.13$ shown in Figure~\ref{Fig:RQ7_T1}(d), followed by $\epsilon = 0.01$ ($323968.21$ in Figure~\ref{Fig:RQ7_T1}(a)), $\epsilon = 0.1$ ($323467.17$ in Figure~\ref{Fig:RQ7_T1}(b)), and 
$\epsilon = 0.3$ ($321682.97$ in Figure~\ref{Fig:RQ7_T1}(c)). This suggests low sensitivity to $\epsilon$ in single-phase. In contrast, the two phase setting shows a strong influence of $\epsilon$. At timestep $2$ and split ratio $0.3$, the highest profit is obtained by $\epsilon = 0.1$  seen in Figure~\ref{Fig:RQ7_T1}(f) with $324900$, while $\epsilon = 0.01$ yields $324163.42$ in Figure~\ref{Fig:RQ7_T1}(e). Meanwhile, $\epsilon = 0.3$ and $\epsilon = 0.6$ fall behind with $311771.94$ (Figure~\ref{Fig:RQ7_T1}(g)) and $323547.97$ (Figure~\ref{Fig:RQ7_T1}(h)), respectively. At timestep $6$ and split ratio $0.3$, $\epsilon = 0.1$ again leads with $325421.49$, followed closely by $\epsilon = 0.01$ ($324935.12$) and $\epsilon = 0.3$ ($321109.68$), whereas $\epsilon = 0.6$ drops to $311174.89$.
A significant pattern is observed at deeper timesteps. For example, at timestep $10$ and split ratio $0.1$, the best result is achieved by $\epsilon = 0.3$ with a profit of $377713.42$, followed by $\epsilon = 0.1$ ($366271.11$), $\epsilon = 0.6$ ($336840.62$), 
and $\epsilon = 0.01$ ($317006.24$). Similarly, at split ratio $0.5$, $\epsilon = 0.3$ again outperforms others with $368856.84$ in Figure~\ref{Fig:RQ7_T1}(k), while $\epsilon = 0.01$ yields only $328783.34$ (Figure~\ref{Fig:RQ7_T1}(j)). These results show that while $\epsilon = 0.01$ performs well at early to mid timesteps (e.g., timestep $6$), moderate values like $\epsilon = 0.3$ deliver the highest profits in later stages. Moreover, $\epsilon = 0.6$ gives competitive results with significantly less computation.


\begin{figure}[t]
\centering
\captionsetup[sub]{font=footnotesize}  
\begin{tabular}{cccc}
    \begin{subfigure}[t]{0.22\textwidth}
        \includegraphics[width=\linewidth]{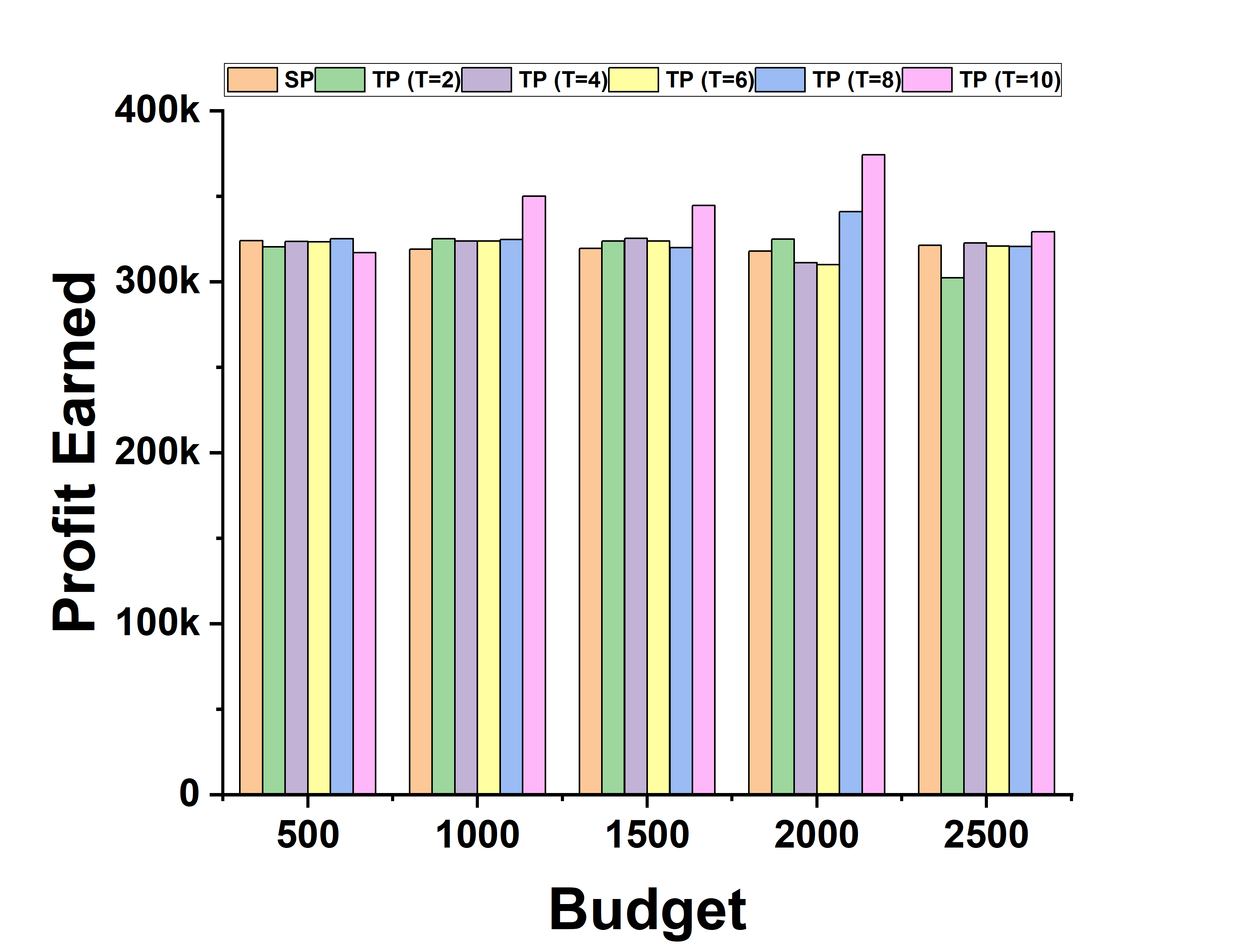}
        \caption{Epsilon=0.01, Split Ratio=10\%}
    \end{subfigure} &
    \begin{subfigure}[t]{0.22\textwidth}
        \includegraphics[width=\linewidth]{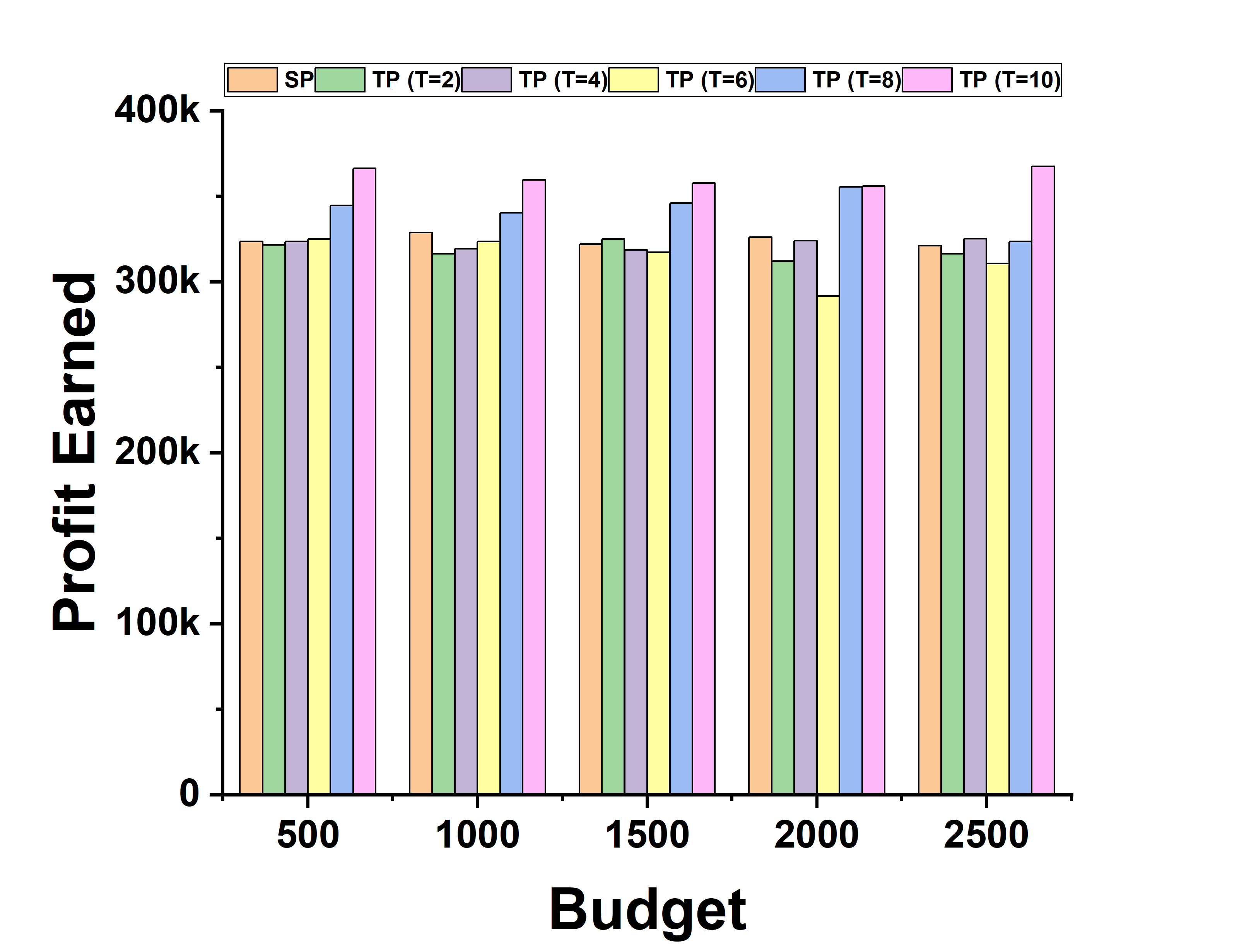}
        \caption{Epsilon=0.1, Split Ratio=10\%}
    \end{subfigure} &
    \begin{subfigure}[t]{0.22\textwidth}
        \includegraphics[width=\linewidth]{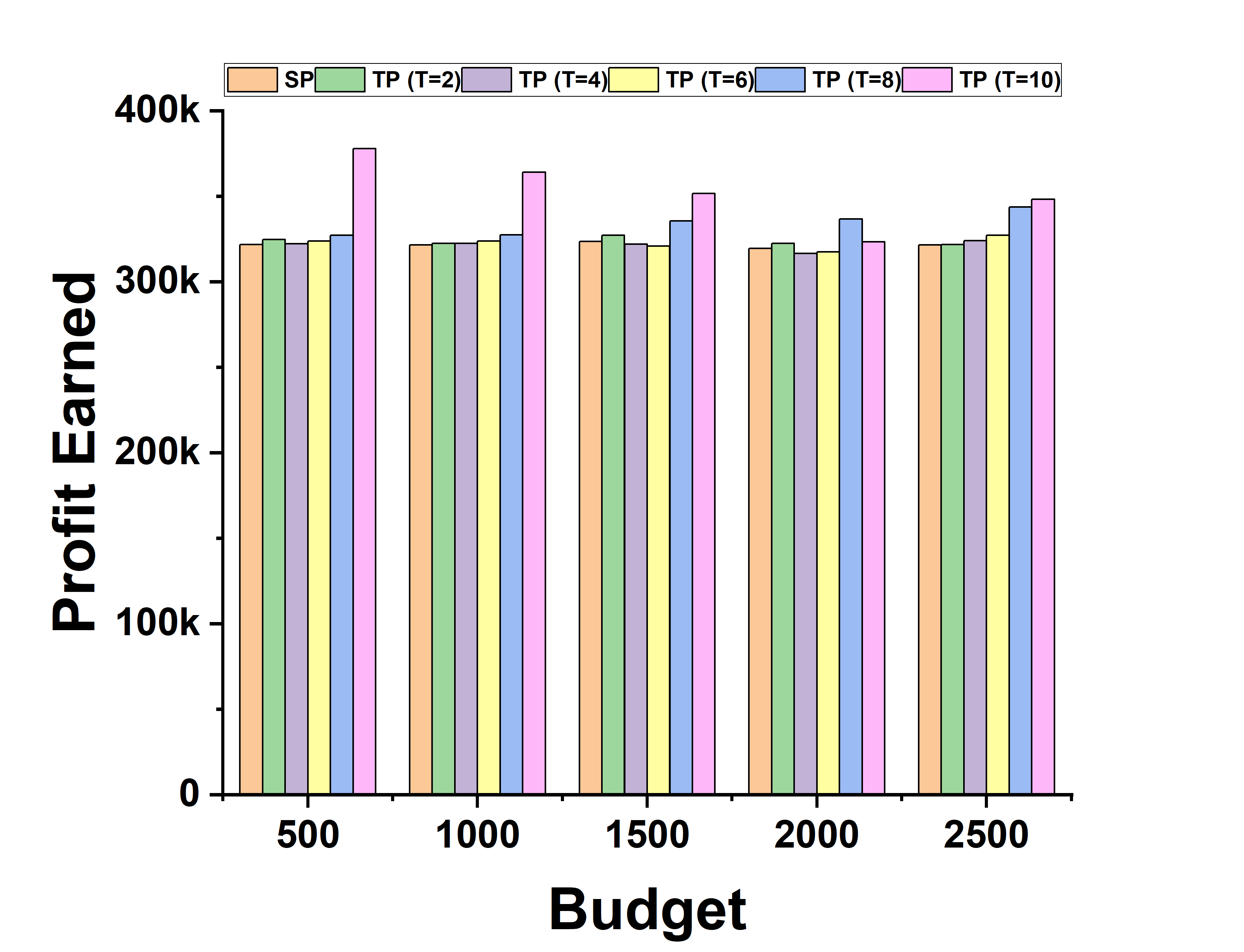}
        \caption{Epsilon=0.3, Split Ratio=10\%}
    \end{subfigure} &
    \begin{subfigure}[t]{0.22\textwidth}
        \includegraphics[width=\linewidth]{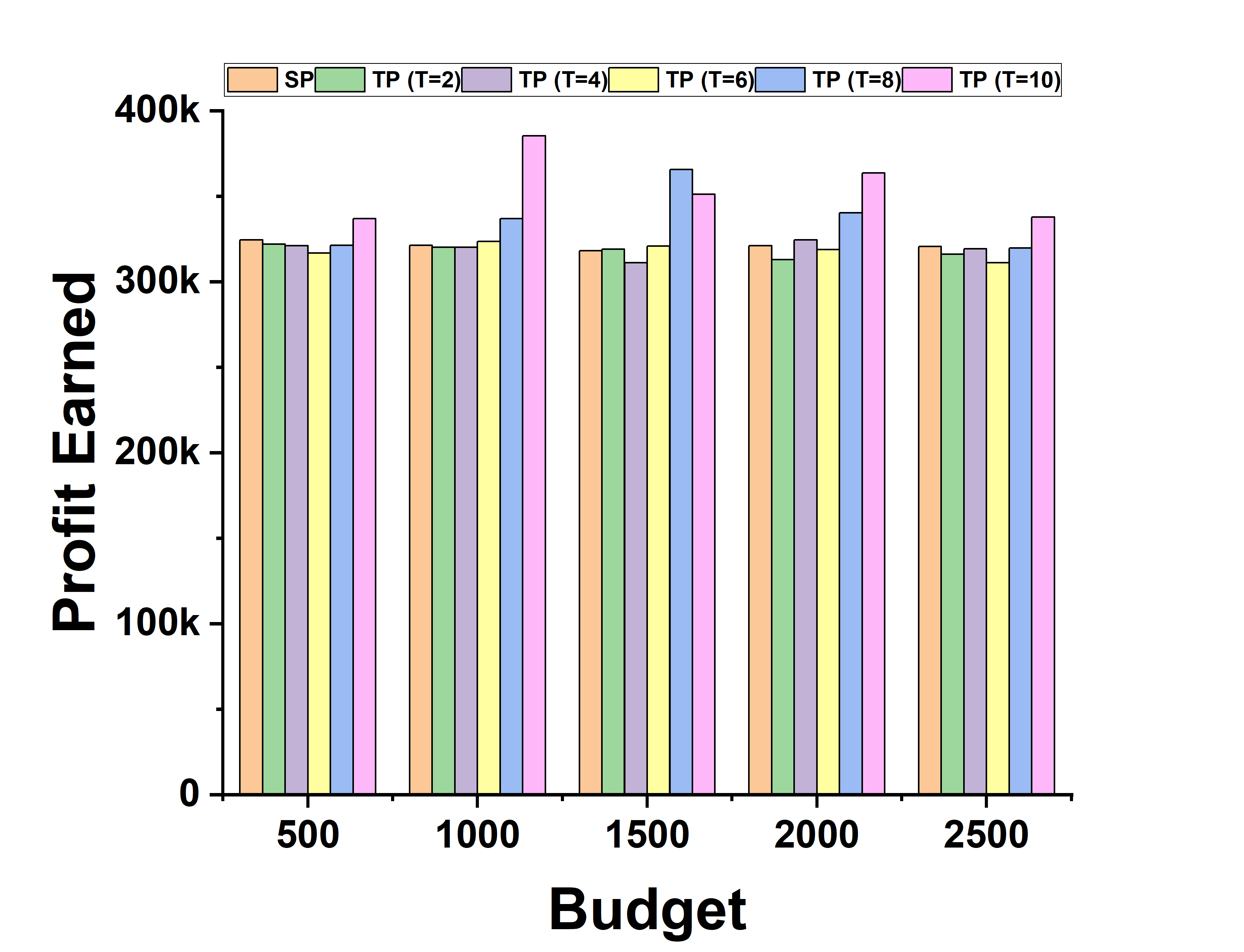}
        \caption{Epsilon=0.6, Split Ratio=10\%}
    \end{subfigure} \\[6pt]

    \begin{subfigure}[t]{0.22\textwidth}
        \includegraphics[width=\linewidth]{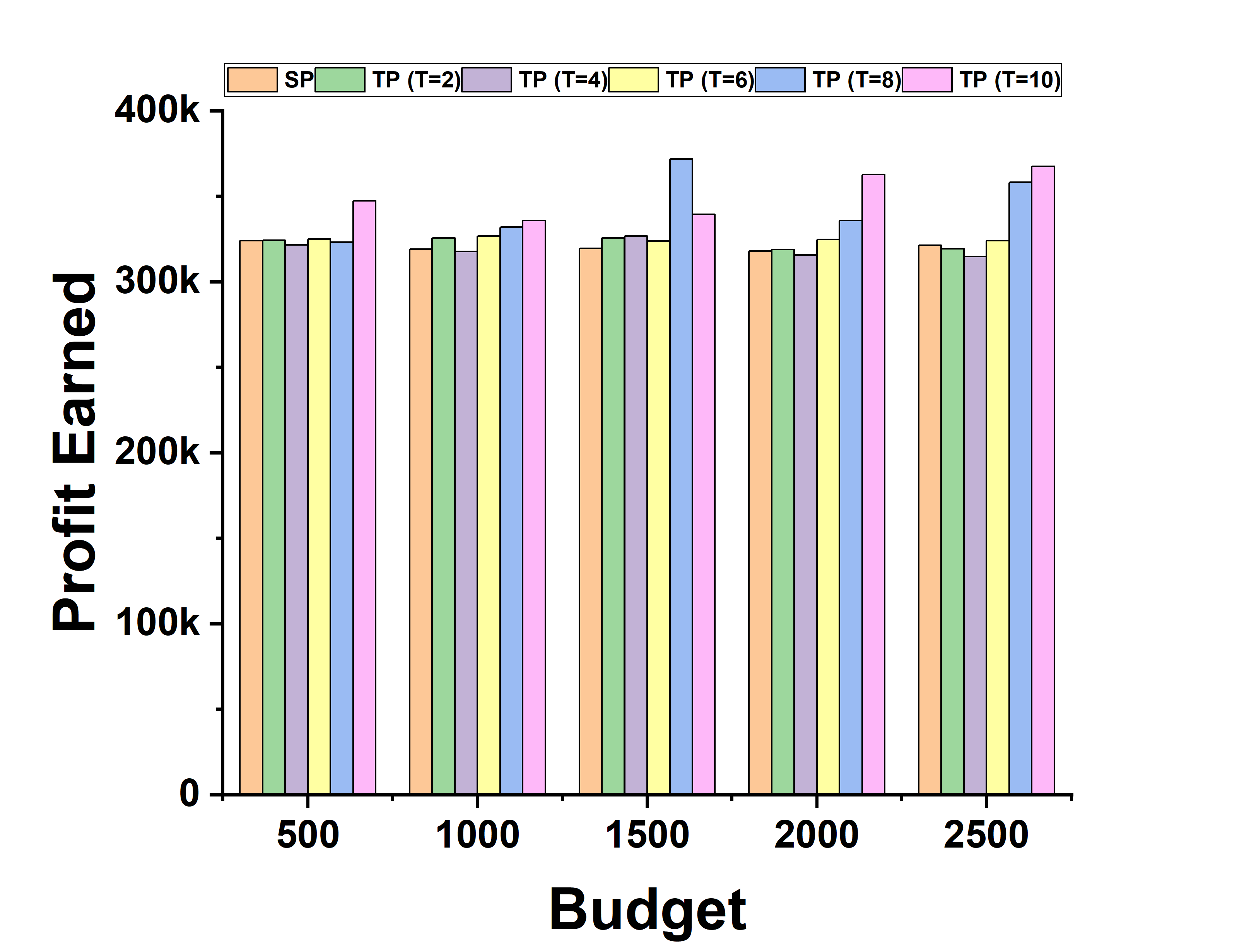}
        \caption{Epsilon=0.01, Split Ratio=30\%}
    \end{subfigure} &
    \begin{subfigure}[t]{0.22\textwidth}
        \includegraphics[width=\linewidth]{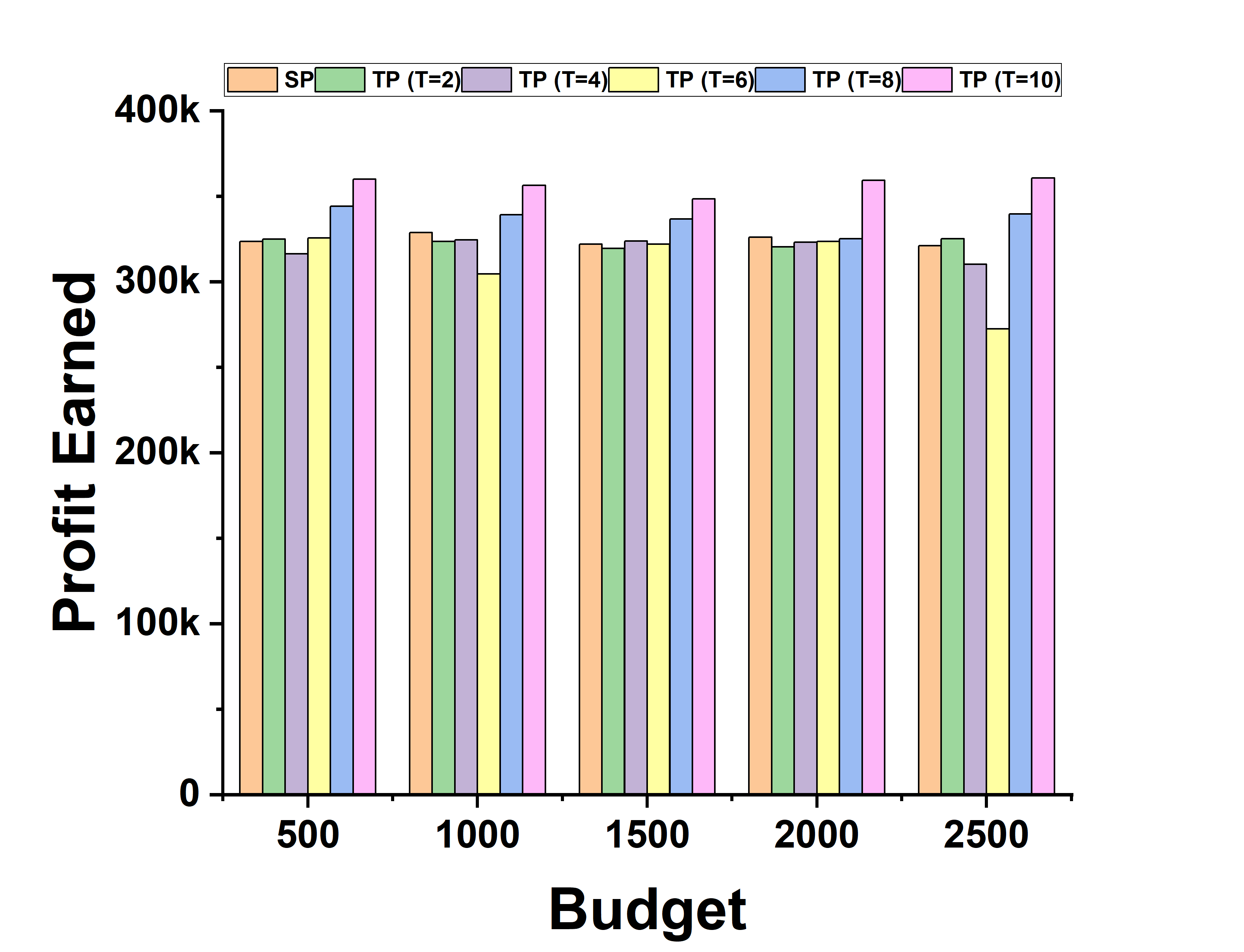}
        \caption{Epsilon=0.1, Split Ratio=30\%}
    \end{subfigure} &
    \begin{subfigure}[t]{0.22\textwidth}
        \includegraphics[width=\linewidth]{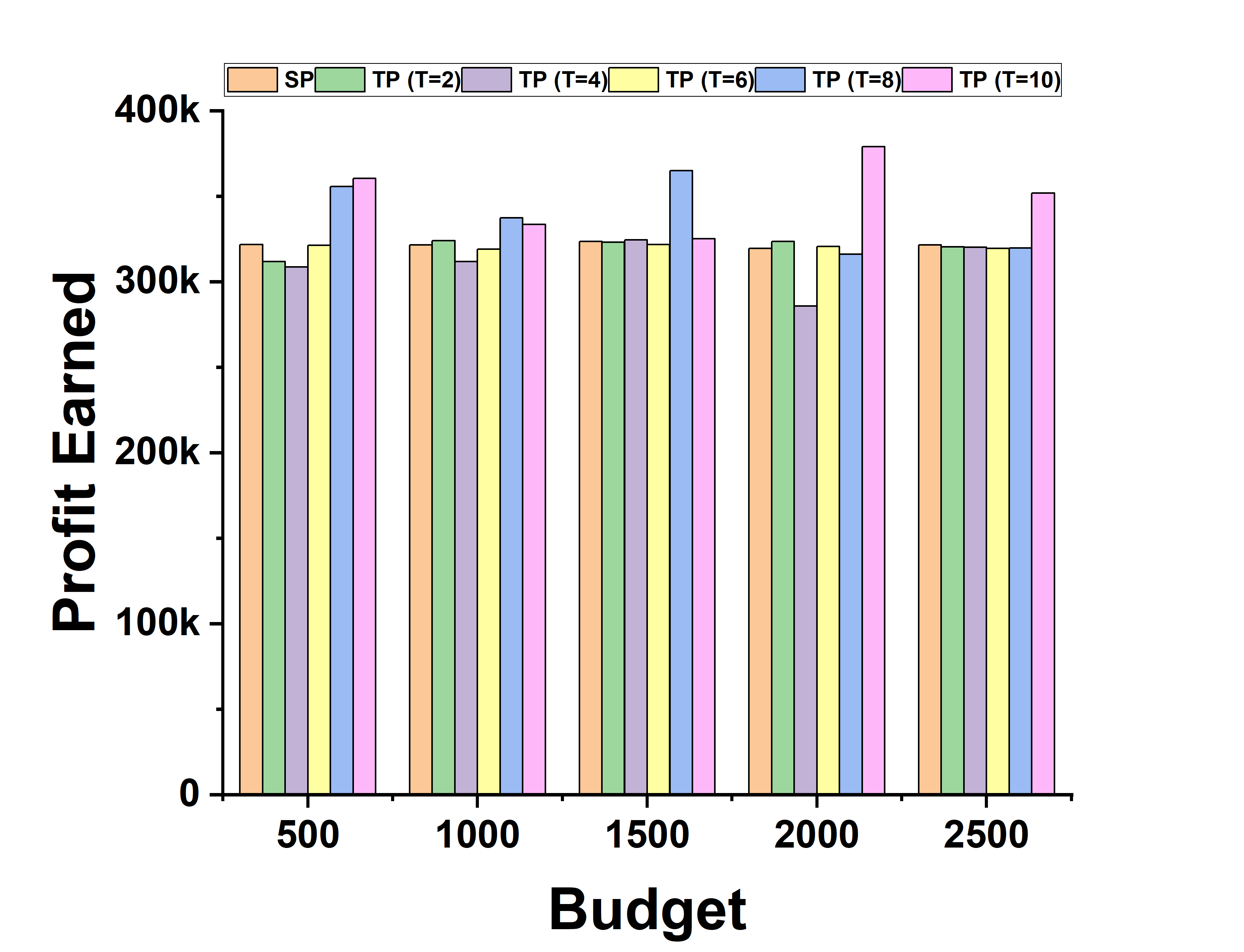}
        \caption{Epsilon=0.3, Split Ratio=30\%}
    \end{subfigure} &
    \begin{subfigure}[t]{0.22\textwidth}
        \includegraphics[width=\linewidth]{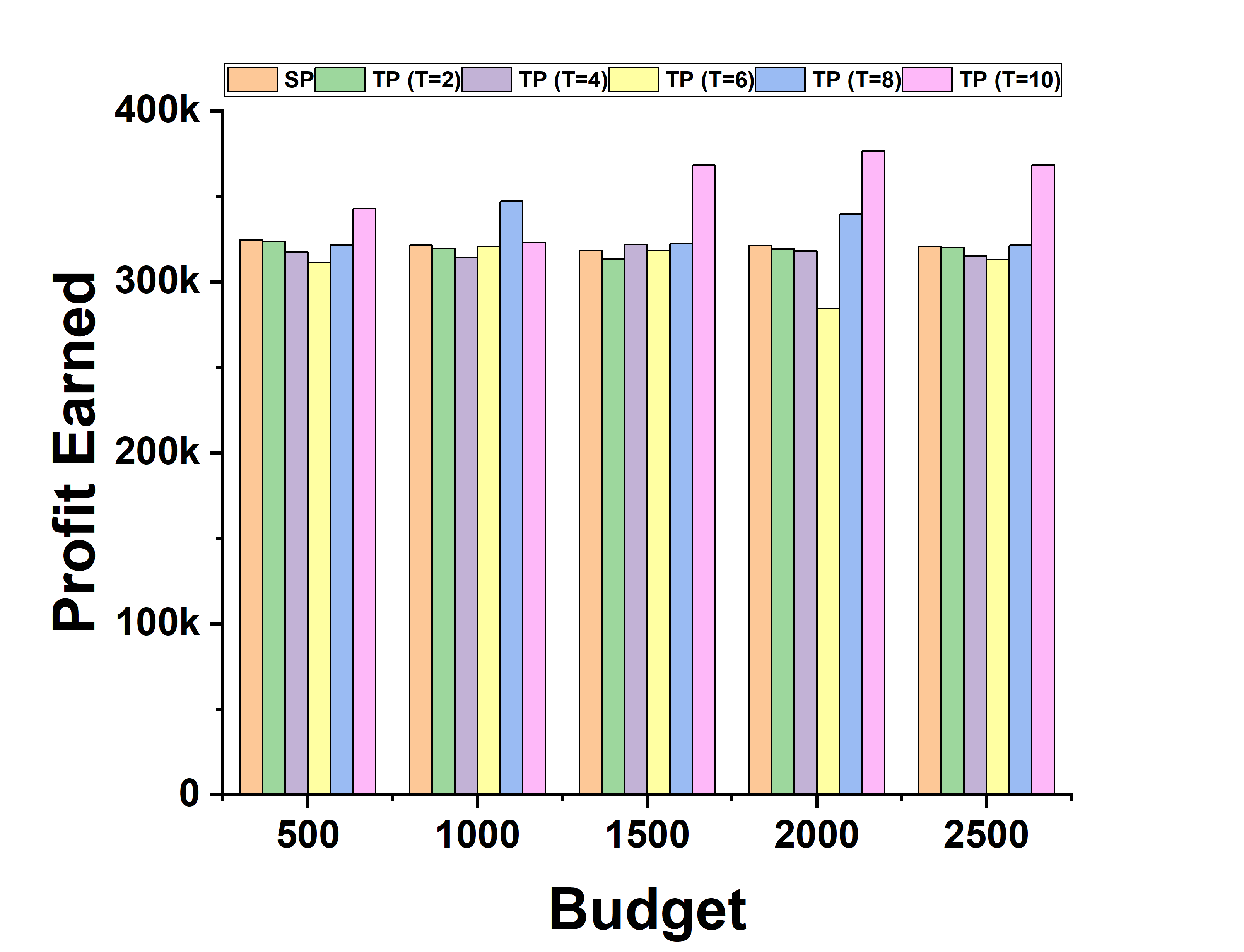}
        \caption{Epsilon=0.6, Split Ratio=30\%}
    \end{subfigure} \\[6pt]

	    \begin{subfigure}[t]{0.22\textwidth}
        \includegraphics[width=\linewidth]{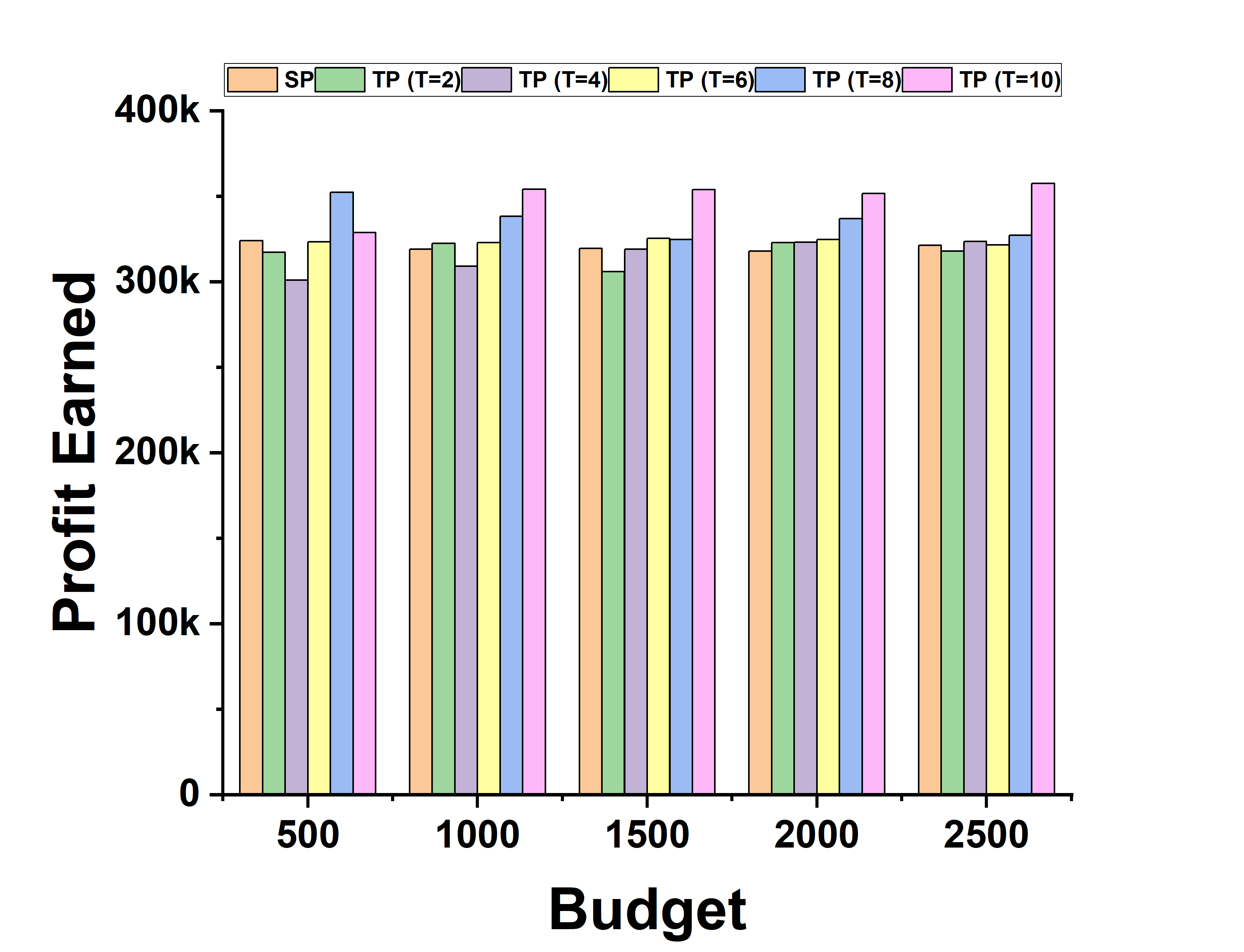}
        \caption{Epsilon=0.01, Split Ratio=50\%}
    \end{subfigure} &
    \begin{subfigure}[t]{0.22\textwidth}
        \includegraphics[width=\linewidth]{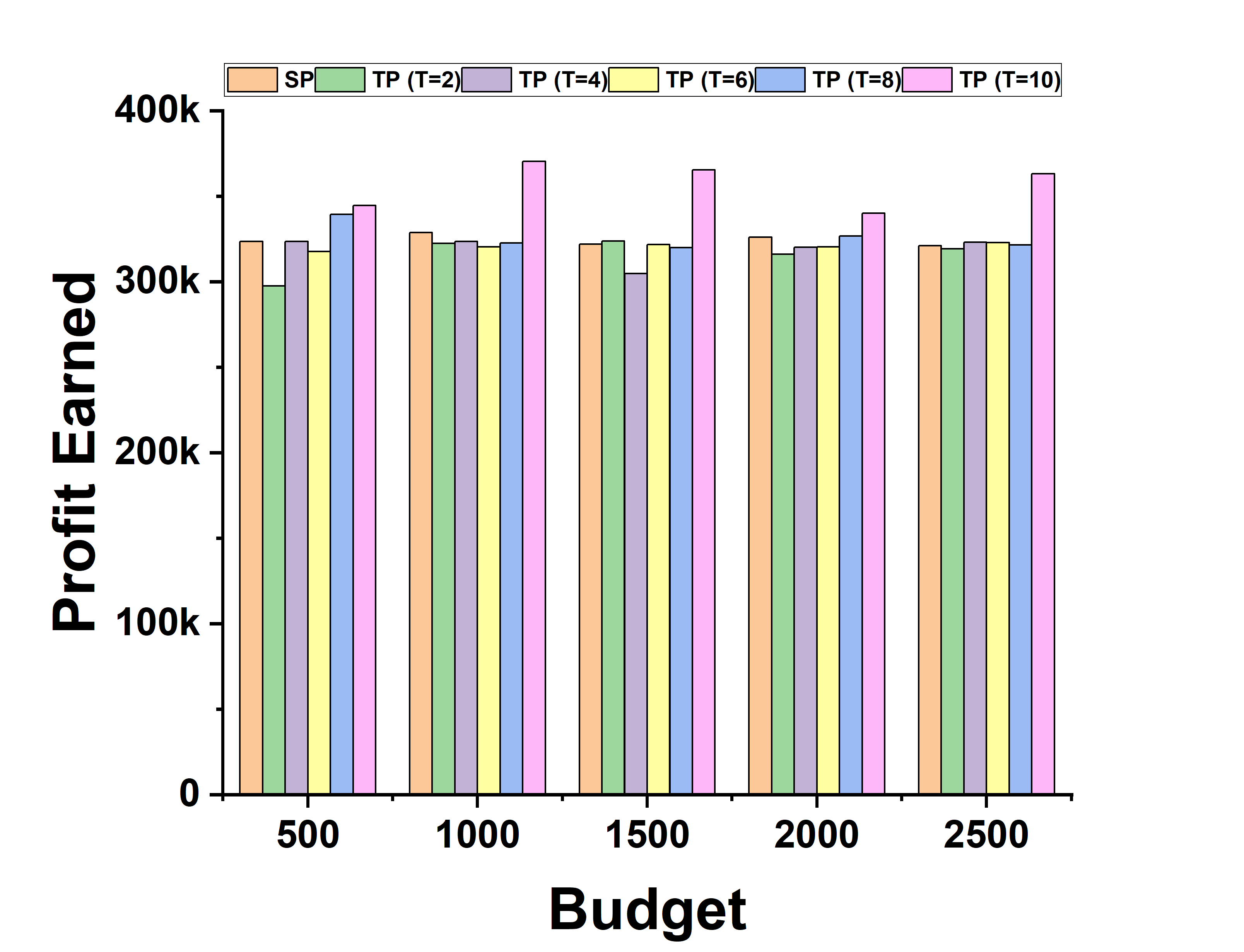}
        \caption{Epsilon=0.1, Split Ratio=50\%}
    \end{subfigure} &
    \begin{subfigure}[t]{0.22\textwidth}
        \includegraphics[width=\linewidth]{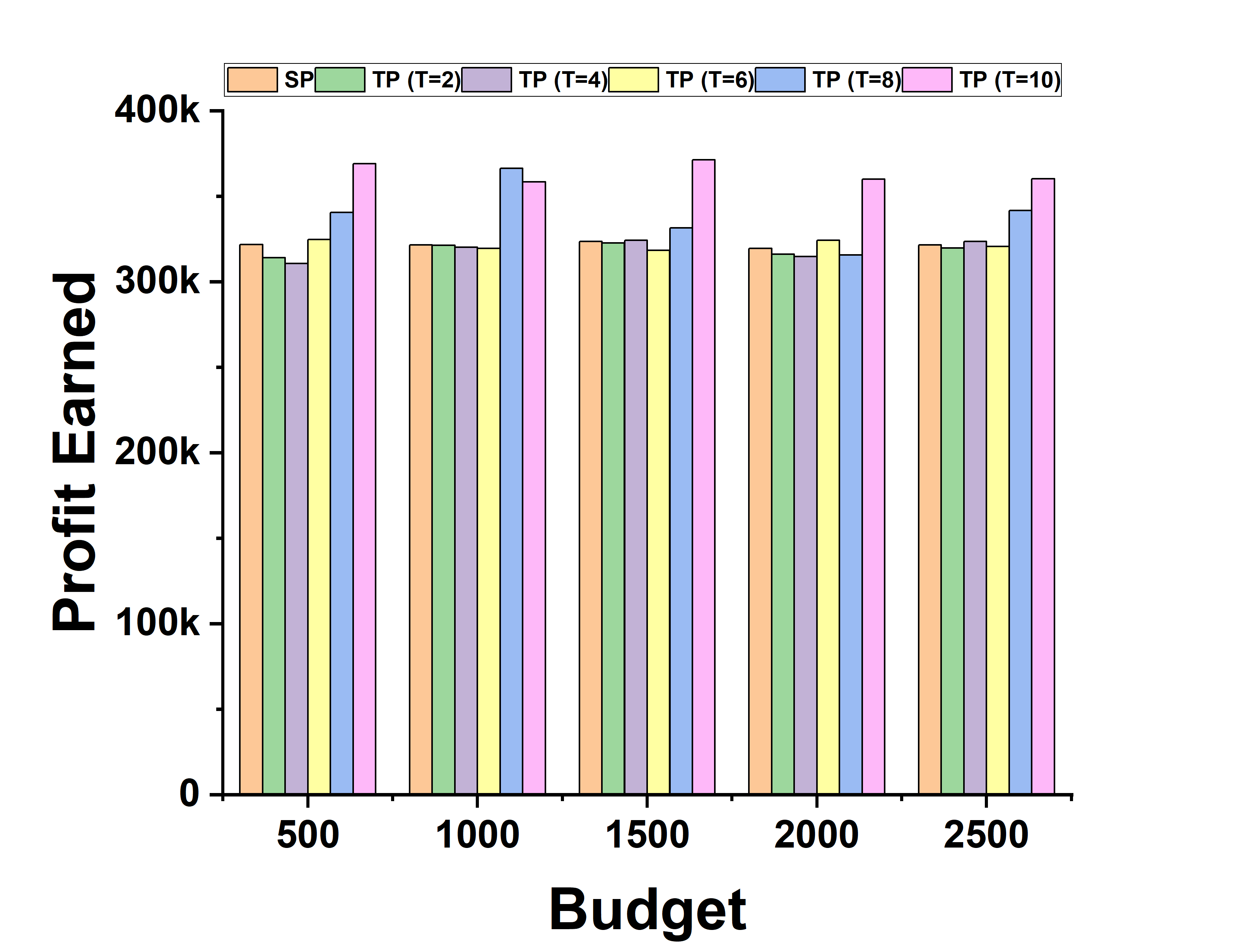}
        \caption{Epsilon=0.3, Split Ratio=50\%}
    \end{subfigure} &
    \begin{subfigure}[t]{0.22\textwidth}
        \includegraphics[width=\linewidth]{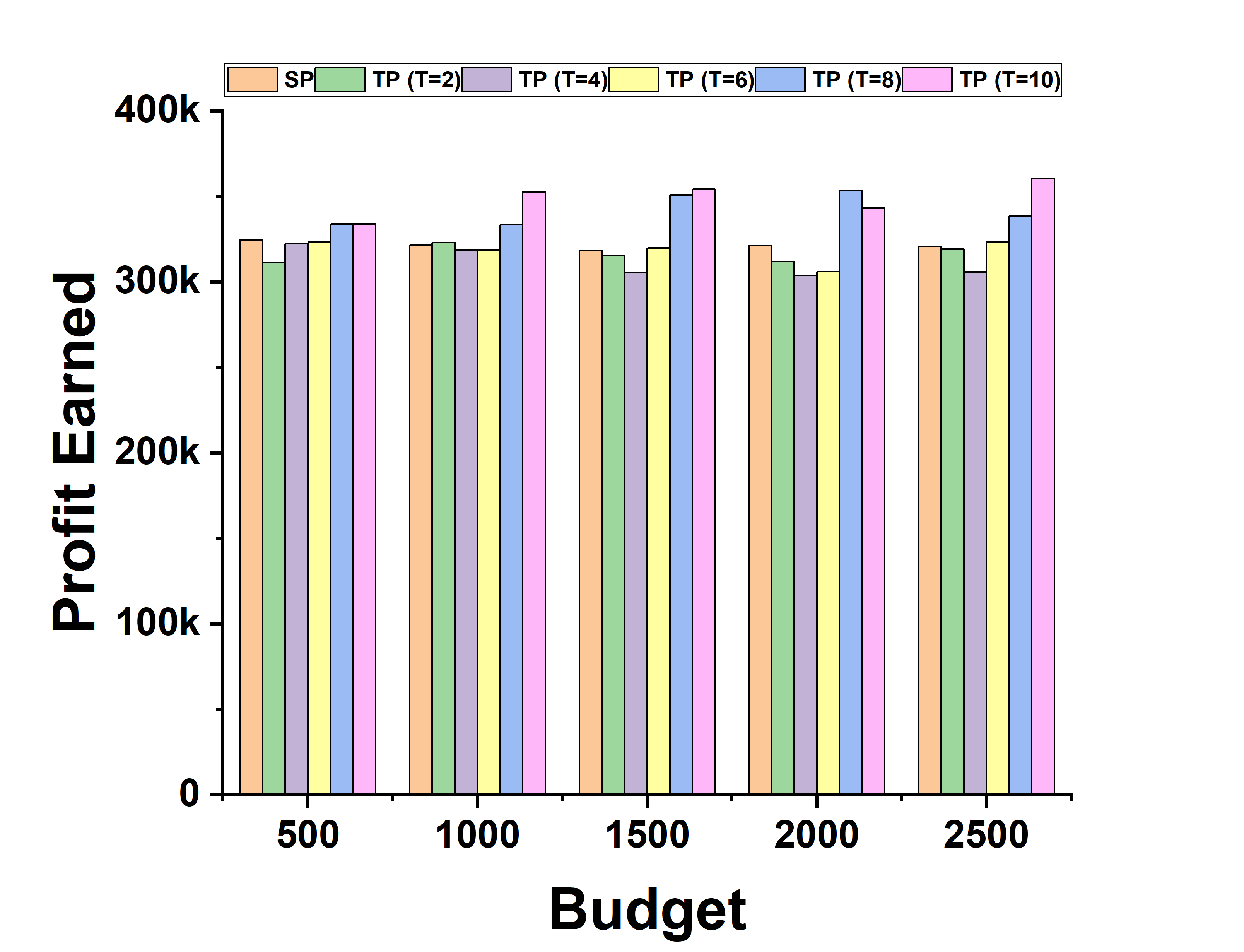}
        \caption{Epsilon=0.6, Split Ratio=50\%}
    \end{subfigure} \\[6pt]
    
        \begin{subfigure}[t]{0.22\textwidth}
        \includegraphics[width=\linewidth]{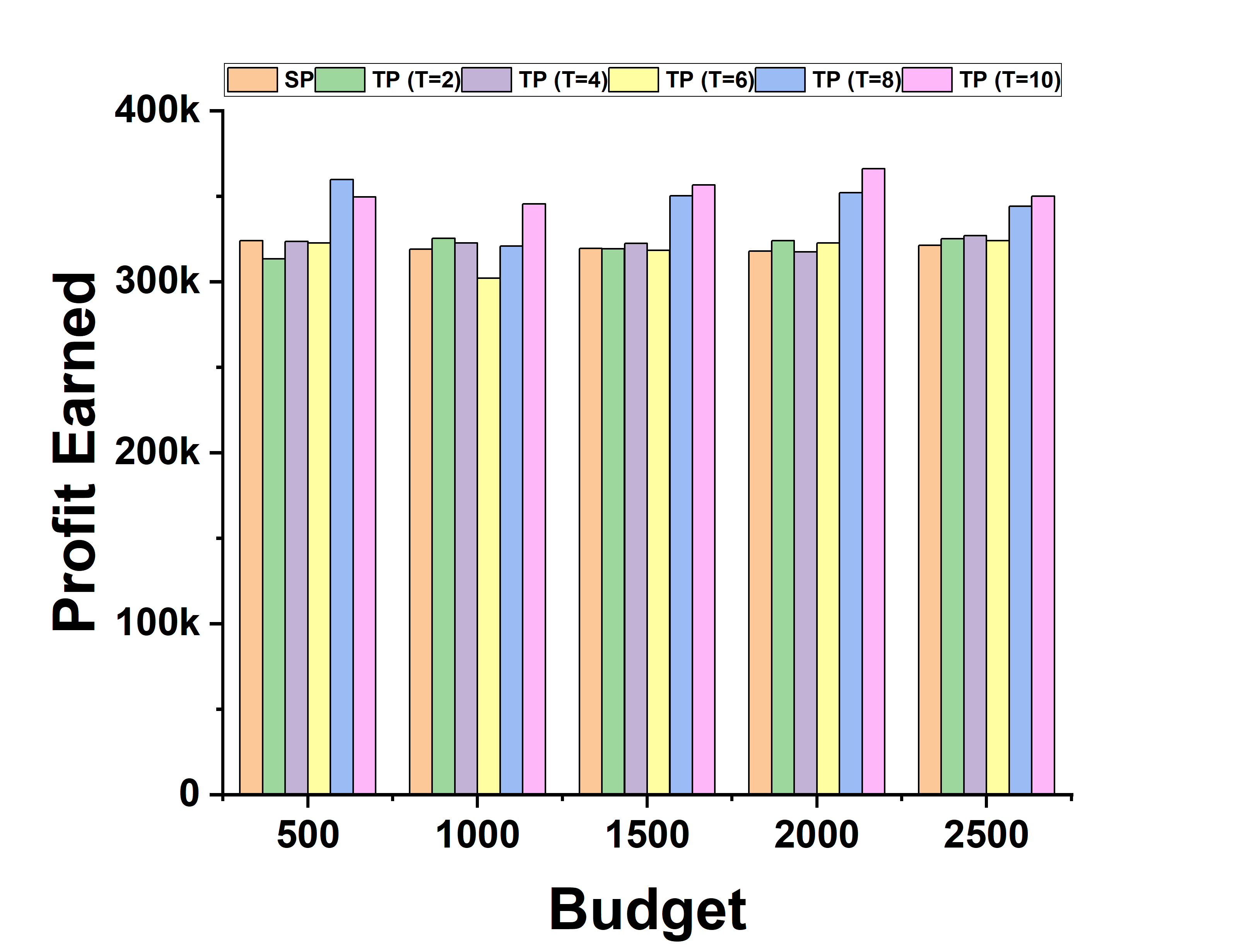}
        \caption{Epsilon=0.01, Split Ratio=70\%}
    \end{subfigure} &
    \begin{subfigure}[t]{0.22\textwidth}
        \includegraphics[width=\linewidth]{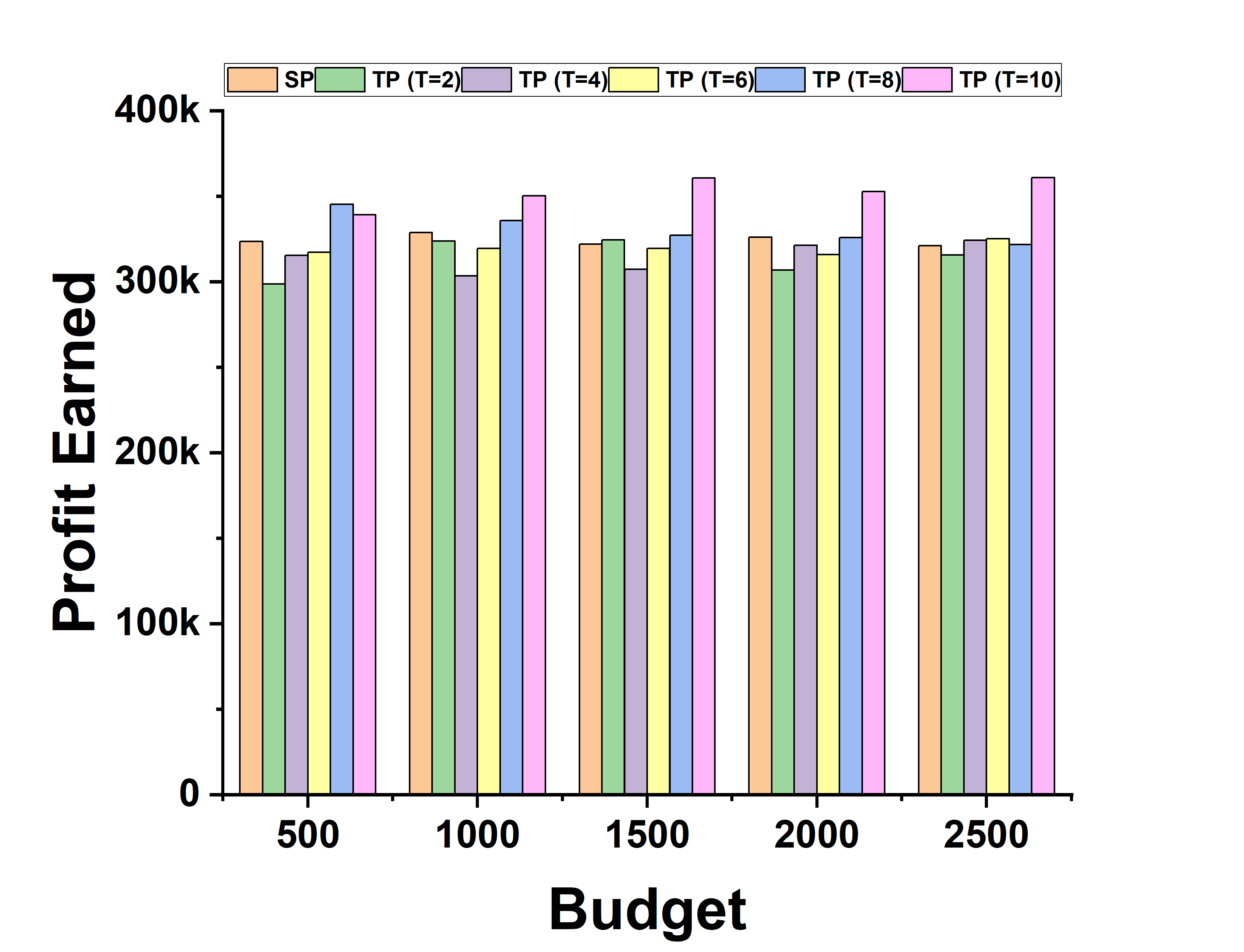}
        \caption{Epsilon=0.1, Split Ratio=70\%}
    \end{subfigure} &
    \begin{subfigure}[t]{0.22\textwidth}
        \includegraphics[width=\linewidth]{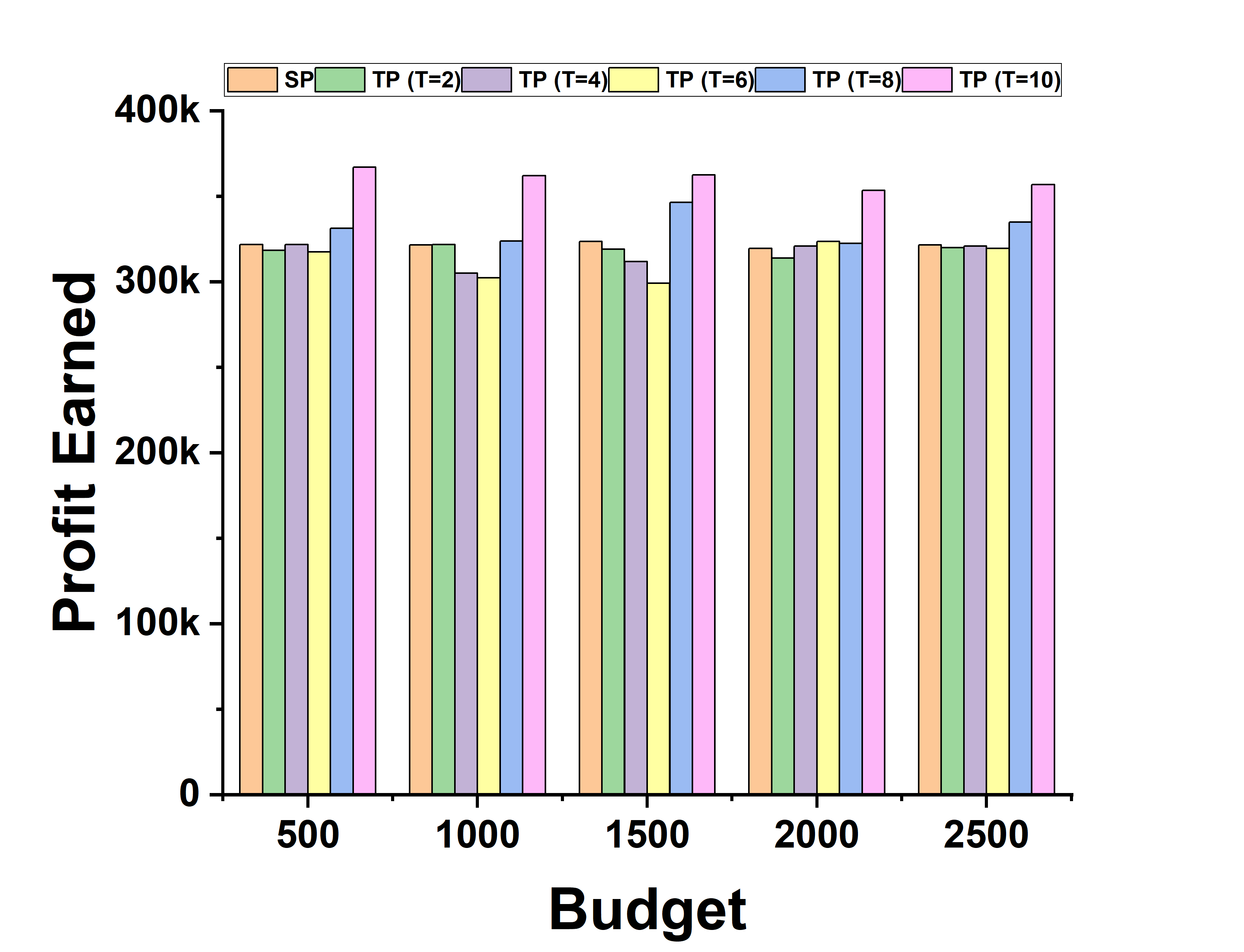}
        \caption{Epsilon=0.3, Split Ratio=70\%}
    \end{subfigure} &
    \begin{subfigure}[t]{0.22\textwidth}
        \includegraphics[width=\linewidth]{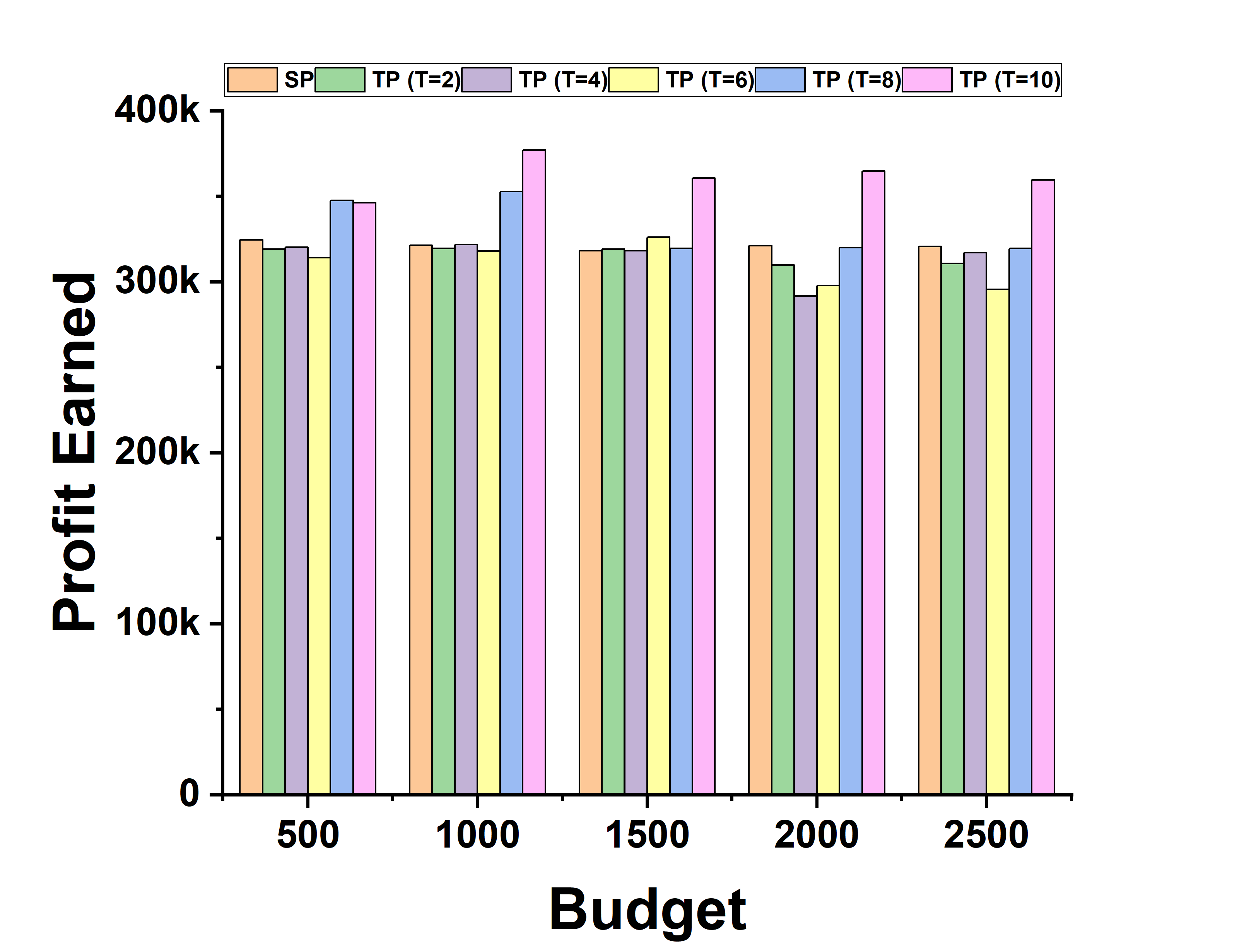}
        \caption{Epsilon=0.6, Split Ratio=70\%}
    \end{subfigure} \\[6pt]
    
        \begin{subfigure}[t]{0.22\textwidth}
        \includegraphics[width=\linewidth]{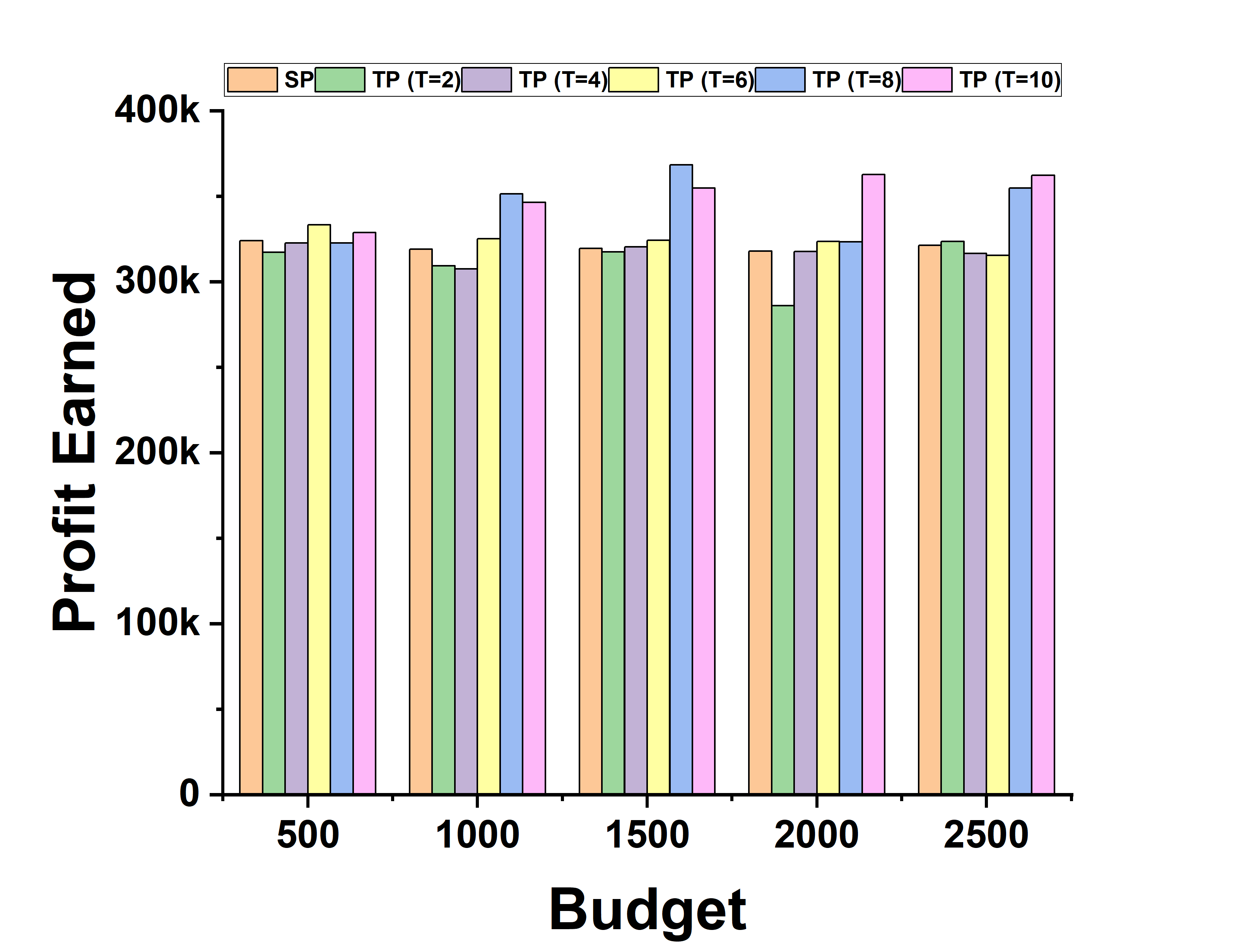}
        \caption{Epsilon=0.01, Split Ratio=90\%}
    \end{subfigure} &
    \begin{subfigure}[t]{0.22\textwidth}
        \includegraphics[width=\linewidth]{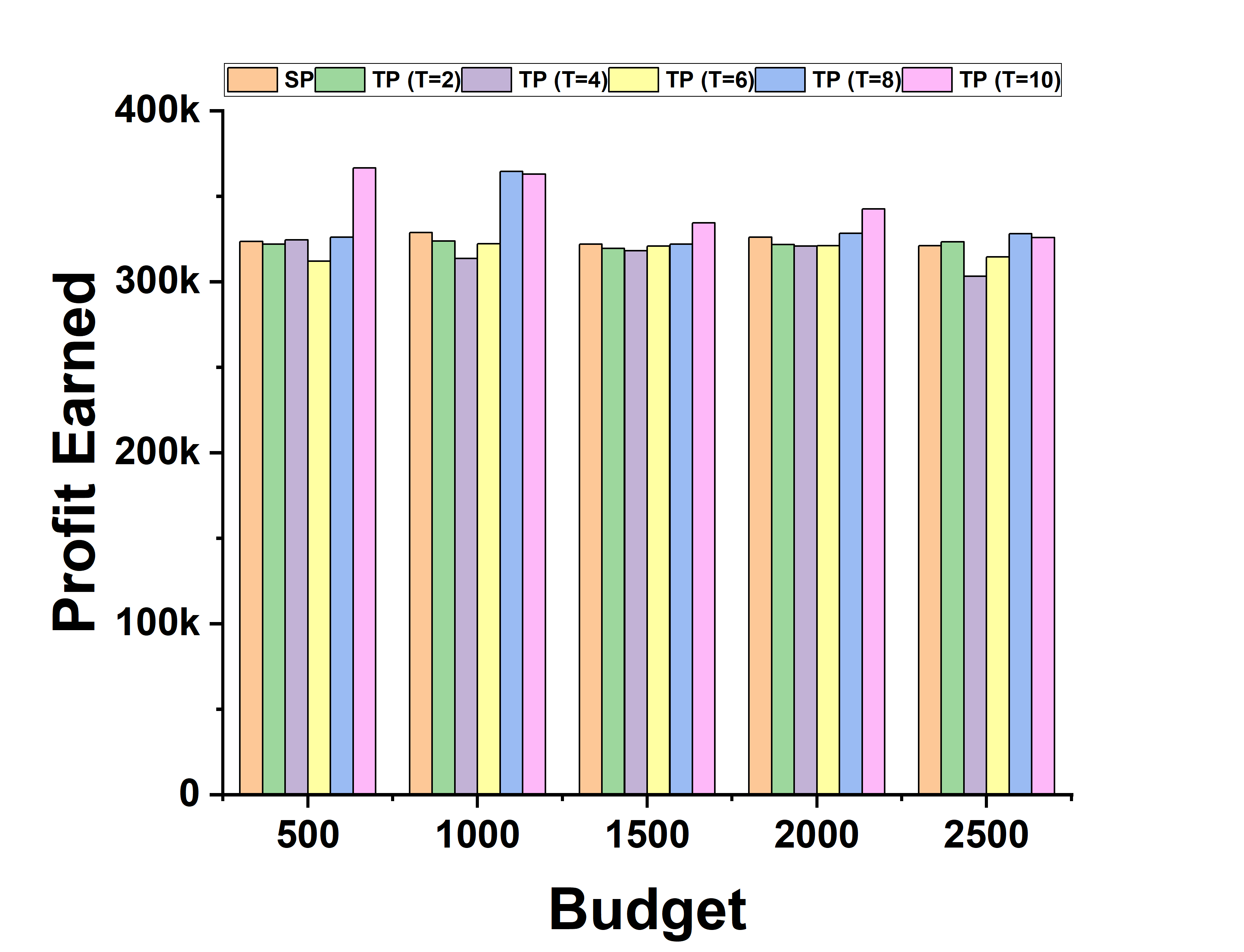}
        \caption{Epsilon=0.1, Split Ratio=90\%}
    \end{subfigure} &
    \begin{subfigure}[t]{0.22\textwidth}
        \includegraphics[width=\linewidth]{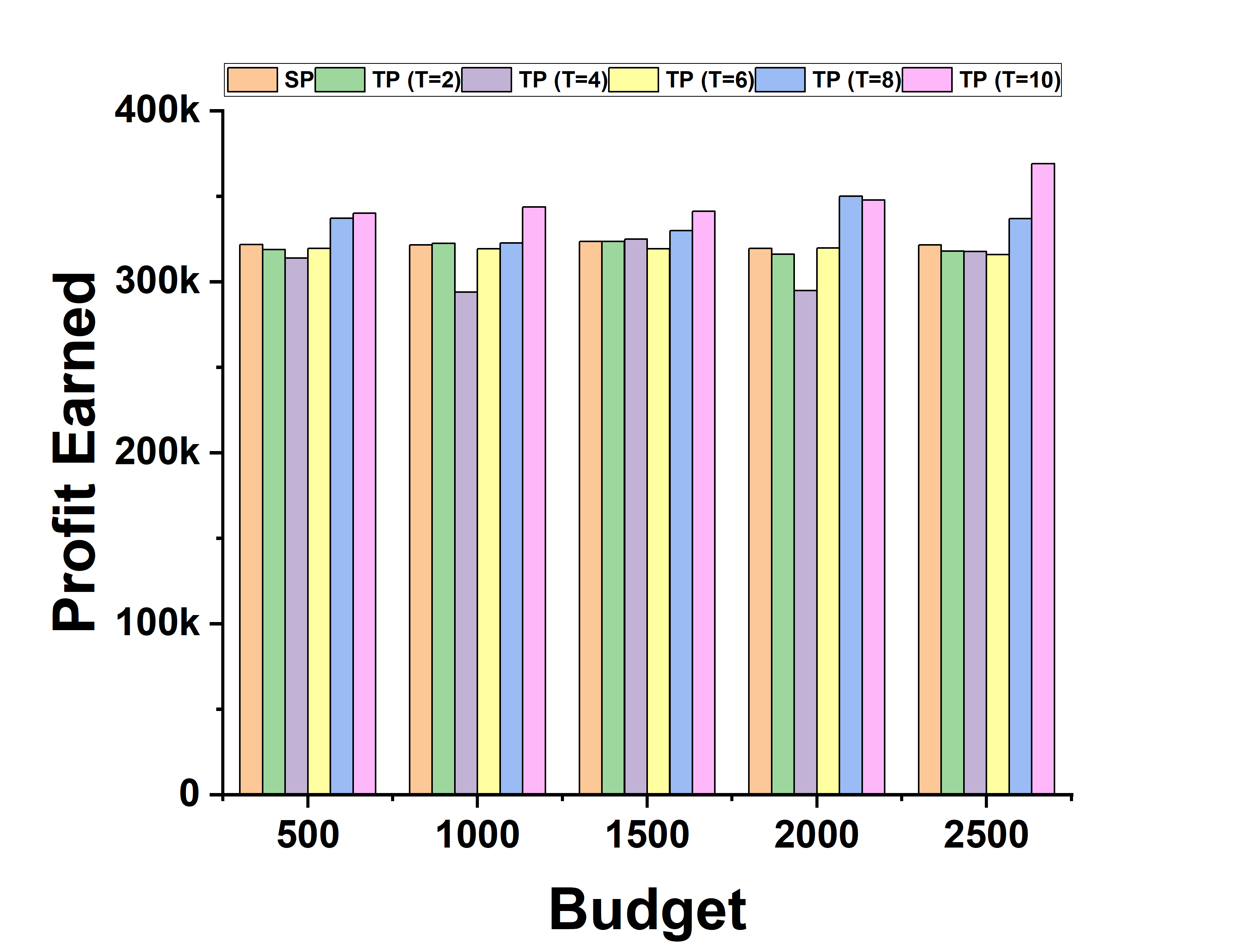}
        \caption{Epsilon=0.3, Split Ratio=90\%}
    \end{subfigure} &
    \begin{subfigure}[t]{0.22\textwidth}
        \includegraphics[width=\linewidth]{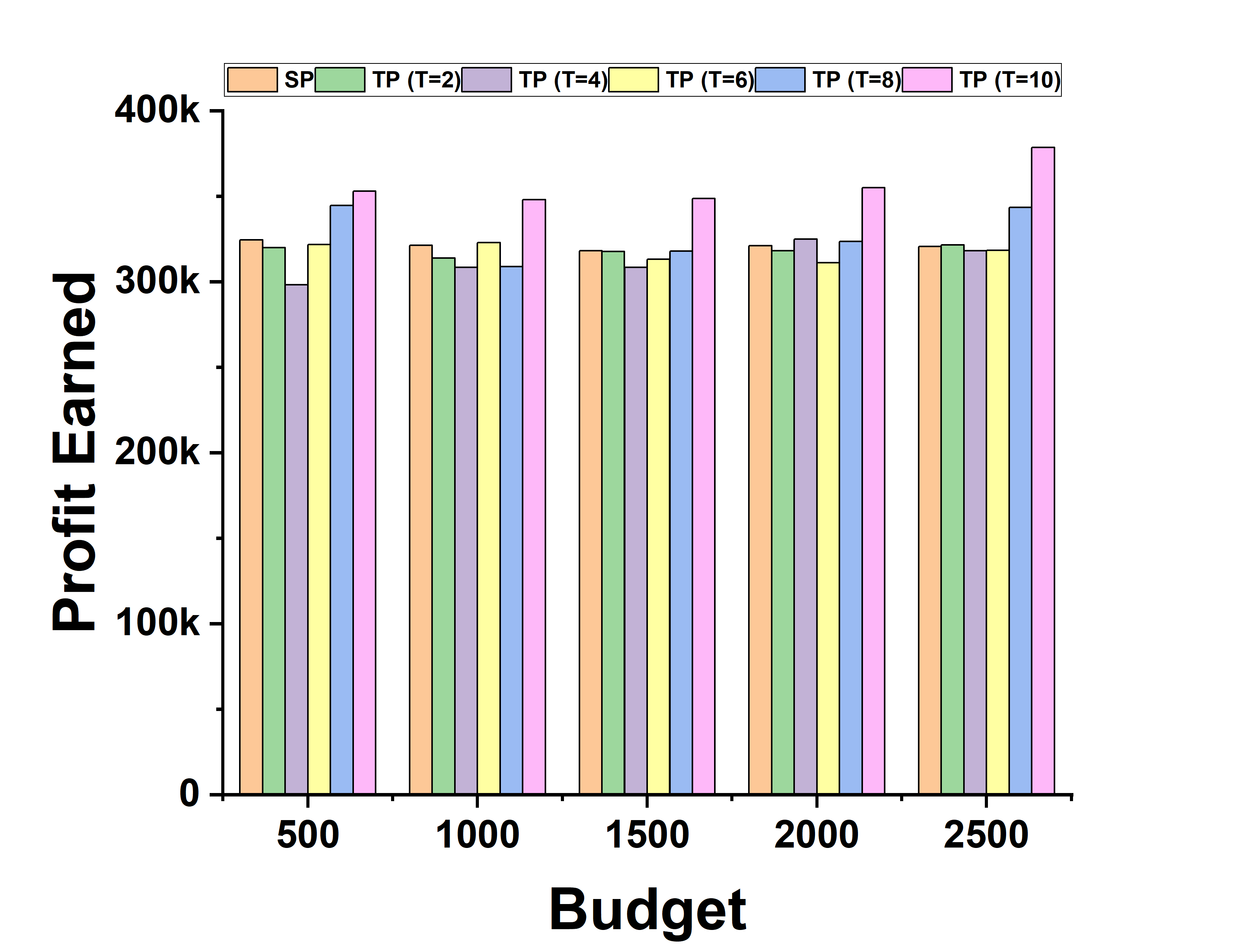}
        \caption{Epsilon=0.6, Split Ratio=90\%}
    \end{subfigure}    
\end{tabular}
\caption{Profit Earned in Single Phase Vs. Two Phase for different Epsilon values for Stochastic Greedy, Probability Setting - Trivalency, \textit{Email-Eu-Core} Dataset}
\label{Fig:RQ7_T1}
\end{figure}

\textbf{Slashdot Dataset}


\begin{figure}[]
    \centering
    \begin{subfigure}[t]{0.45\linewidth}
        \centering
        \includegraphics[width=\linewidth]{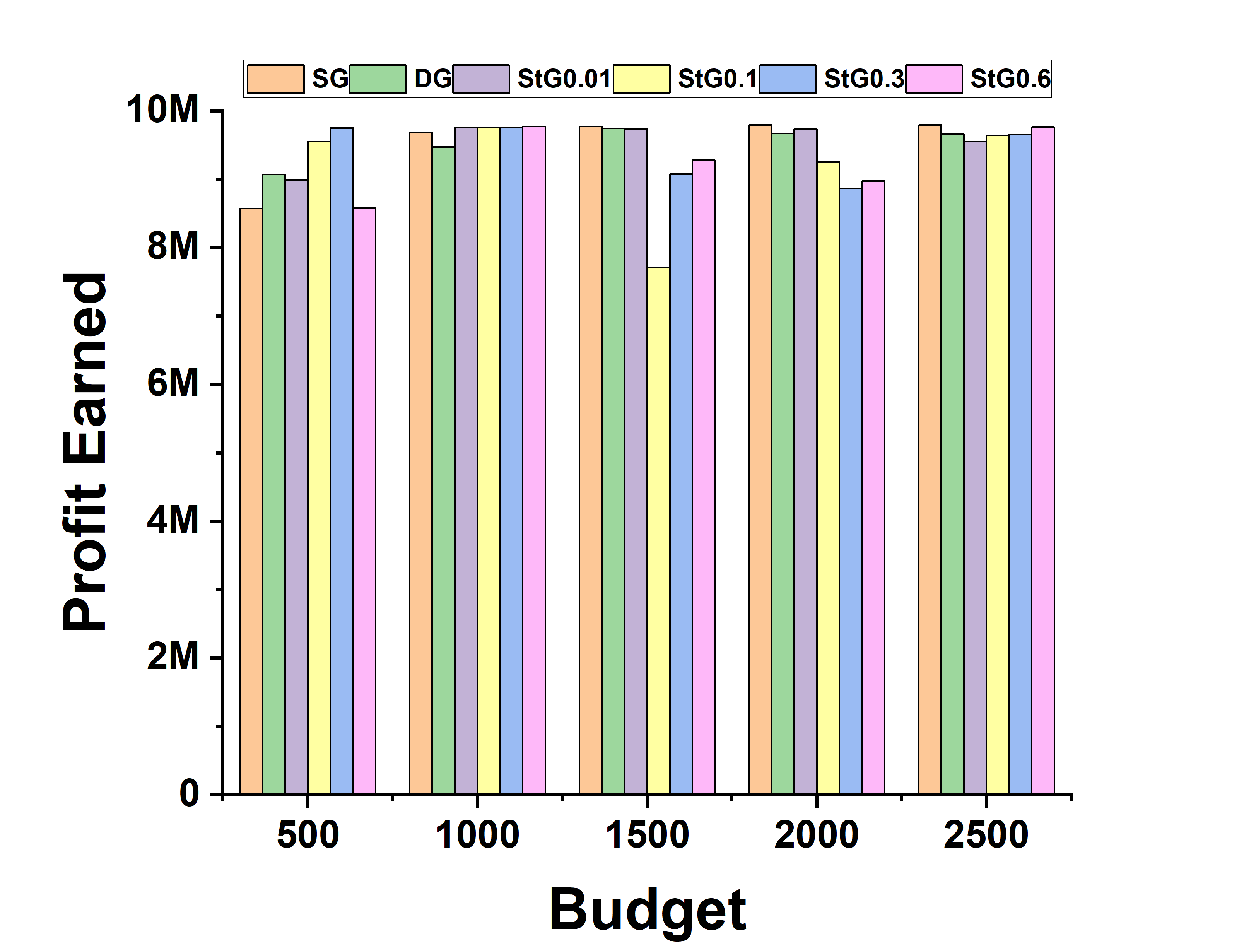}
        \caption{Budget Vs. Profit Earned}
        \label{fig:image1}
    \end{subfigure}
    \hspace{0.05\linewidth}
    \begin{subfigure}[t]{0.45\linewidth}
        \centering
        \includegraphics[width=\linewidth]{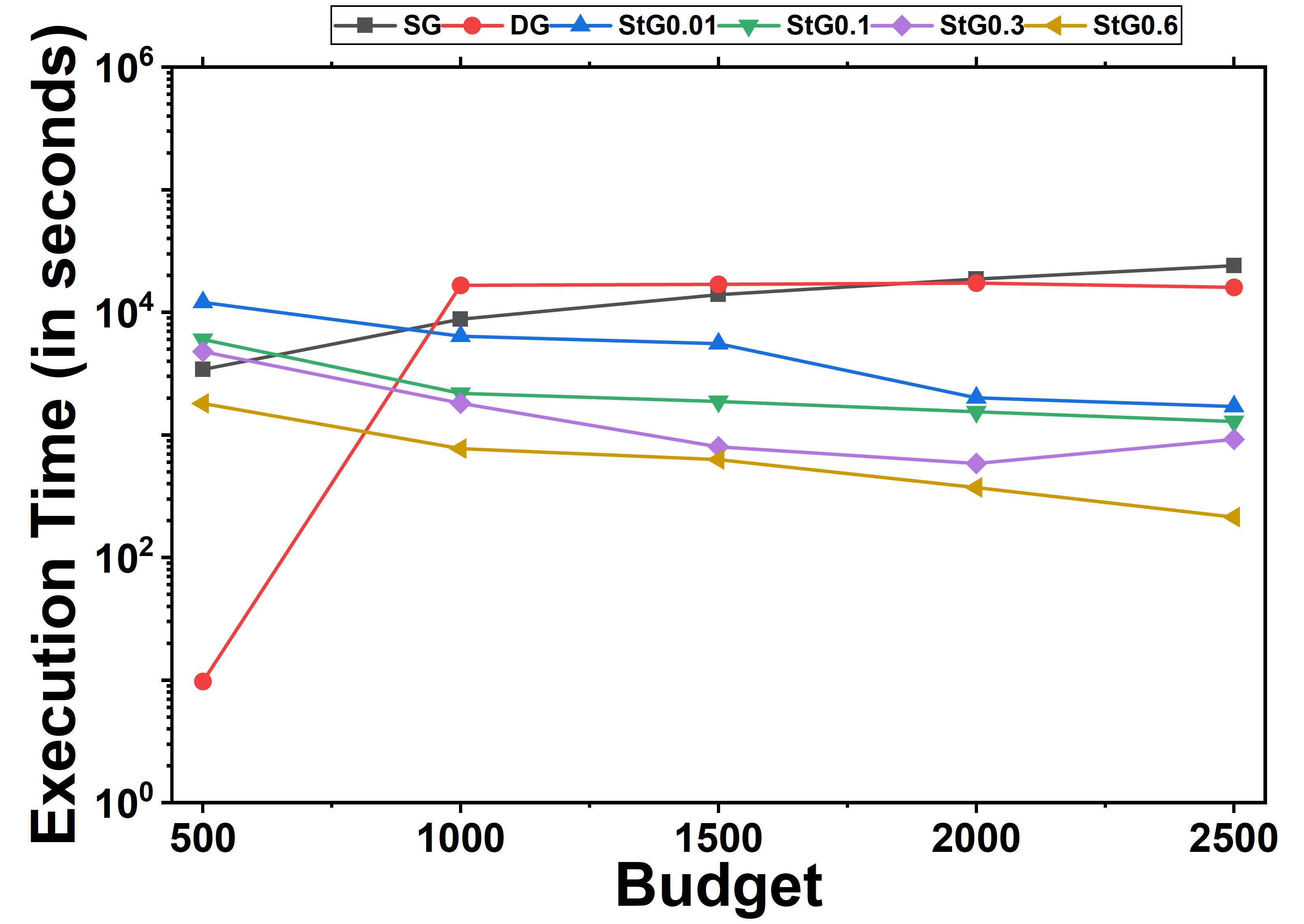}
        \caption{Budget Vs. Execution Time (in seconds)}
        \label{fig:image2}
    \end{subfigure}

    \caption{Single-Phase setting Plots for \textit{Slashdot} Dataset under Trivalency Probability Setting}
    \label{Fig:SlashdotGraphs}
\end{figure}

To evaluate the impact of the parameter $\epsilon$ on the profit earned by stochastic greedy algorithms, we analyze the single-phase results of \textbf{StG0.01}, \textbf{StG0.1}, \textbf{StG0.3}, and \textbf{StG0.6} on the \textit{Slashdot} dataset in Figure~\ref{Fig:SlashdotGraphs}. These algorithms, which differ by the amount of randomness introduced during seed selection, show distinct behaviors based on the value of $\epsilon$. As $\epsilon$ decreases, the algorithm becomes greedier, generally improving profit but at the cost of increased computational time. This trend is visible in the results at a budget of 2500. The most greedy variant, \textbf{StG0.01}, achieves a profit of $9549657.01$ using only $2$ seed nodes but requires $1706.89$ seconds. \textbf{StG0.1} and \textbf{StG0.3} deliver slightly lower profits of $9642142.53$ and $9649247.19$, requiring $1283.12$ and $921.29$ seconds, respectively. Surprisingly, \textbf{StG0.6} achieves a strong profit of $9755685.06$ the highest among all stochastic variants with a remarkably low runtime of just $213.09$ seconds, showing that even high $\epsilon$ variants can produce competitive results efficiently. When compared with other well-known algorithms, the \textbf{SG} algorithm secures the highest profit overall at $9791626.15$ but at a much higher computational cost of $24023.84$ seconds and a seed set size of $47$. The \textbf{DG} approach also performs well, earning $9655314.67$ in $15993.44$ seconds using $11$ seeds. In contrast, the stochastic greedy algorithms especially \textbf{StG0.3} and \textbf{StG0.6} achieve nearly comparable profits with significantly fewer seed nodes ($2$ to $4$) and reduced time. These observations highlight the practical trade-off while low $\epsilon$ values like $0.01$ offer better profits due to more deterministic behavior, their high time cost makes them less ideal for large-scale datasets. This is especially relevant for the Slashdot dataset, which is large and would require substantial computation in a two phase setup. In such cases, moderate to high $\epsilon$ values (such as $0.3$ or $0.6$) present a good balance, delivering strong profits with much lower computational effort, making them more suitable for scalable two phase deployment.

\section{Conclusion and Future Research Directions} \label{Sec:Conclusions}
In this paper, we have studied the Profit Maximization Problem in Social Networks in Two phase Setting, where the goal is to split the budget into two halves, and the respective halves will be used for their respective phases. We develop a mathematical model that expresses the expected influence at the end of the second phase. We show that even if we consider that the optimal seed set will be chosen in the second phase selecting the optimal seed set in the first phase for the profit maximization problem is an NP-hard and also hard to approximate beyond $(1-\frac{1}{e})$ factor. To solve this problem in the two phase setting, we have proposed three solution methodologies. These methodologies have been analyzed to understand their time and space requirements. We perform an extensive set of experiments with real-world social network datasets and compare the performance with many existing solution approaches. By analyzing our experimental results, we explain several research questions in and around this problem. Lastly, we believe that this work will give better exposure to upcoming researchers in this area.

\bibliography{sn-bibliography}

\end{document}